\numberwithin{equation}{section}
\theoremstyle{plain}
\newtheorem{theorem}{Theorem}[section]
\newtheorem{lemma}[theorem]{Lemma}
\newtheorem{proposition}[theorem]{Proposition}
\newtheorem{corollary}[theorem]{Corollary}
\theoremstyle{definition}
\newtheorem{definition}[theorem]{Definition}
\newtheorem{example}[theorem]{Example}
\theoremstyle{remark}
\newtheorem{remark}[theorem]{Remark}
\newtheorem{prob}[theorem]{Problem}
\newcommand{\derive}{\Longrightarrow}
\newcommand{\multiednce}{\textbf{MultiEdNCE}}
\newcommand{\ednce}{\textbf{edNCE}}
\tikzstyle{braceedge}=[decorate,decoration={brace,amplitude=10pt}]
\tikzstyle{square box}=[rectangle,fill=white,draw=black,minimum height=6mm,minimum width=6mm,yshift=0.7mm]
\tikzstyle{wire label}=[font=\footnotesize, auto,swap]
\tikzstyle{none}=[inner sep=0pt]
\tikzstyle{gn}=[circle,fill=Lime,draw=Black,line width=0.8 pt]
\tikzstyle{rn}=[circle,fill=Red,draw=Black, line width=0.8 pt]
\tikzstyle{H}=[rectangle,fill=Yellow,draw=Black]
\tikzstyle{line}=[scalar,fill=White,draw=Black]
\tikzstyle{io}=[rectangle,fill=White,draw=Black]
\tikzstyle{block}=[rectangle,fill=Orange,draw=Black]
\tikzstyle{graph}=[circle,fill=White,draw=Black]
\tikzstyle{empty}=[rectangle,fill=none,draw=none]
\tikzstyle{box}=[rectangle,fill=White,draw=Black]
\tikzstyle{dot}=[circle,fill=Black,draw=Black,inner sep=0pt,minimum size=1pt]
\tikzstyle{small dot}=[circle,fill=Black,draw=Black,inner sep=0pt,minimum size=1pt]
\tikzstyle{Dot}=[circle,fill=Black,draw=Black,inner sep=0pt,minimum size=3pt]
\tikzstyle{diam}=[rectangle,fill=Black,draw,yscale=1.2,rotate=45]
\tikzstyle{gangle}=[rectangle,fill=Lime,draw=Black]
\tikzstyle{rangle}=[rectangle,fill=Red,draw=Black]
\tikzstyle{circ}=[circle,fill=none,draw=Black,scale=1.3]
\tikzstyle{ellip}=[ellipse,fill=none,draw=Black,scale=1.3,minimum width =1.3cm]
\tikzstyle{ellip2}=[ellipse,fill=White,draw=Black,scale=1.3,minimum width =3cm]
\tikzstyle{bbox}=[rectangle,fill=Blue,draw=Blue,scale=0.6]
\tikzstyle{gg}=[shape=rectangle,fill=White,draw=Black,dashed]
\tikzstyle{nodev}=[circle,fill=none,draw=Black,scale=1]
\tikzstyle{greynode}=[circle,fill=Grey,draw=Black,scale=1]
\tikzstyle{blacknode}=[circle,fill=Black,draw=Black,scale=1]
\tikzstyle{wirev}=[circle,fill=Black,draw=Black,inner sep=0pt,minimum size=3pt]
\tikzstyle{wirevred}=[circle,fill=Red,draw=Black,inner sep=0pt,minimum size=3pt]
\tikzstyle{simple}=[-,draw=Black]
\tikzstyle{directed}=[->,draw=Black]
\tikzstyle{bdirected}=[<->,draw=Black]
\tikzstyle{bothdirs}=[bdirected,draw=Black]
\tikzstyle{bothdirsred}=[bdirected,draw=Red]
\tikzstyle{blue}=[-,draw=Blue]
\tikzstyle{redd}=[directed,draw=Red]
\tikzstyle{redu}=[-,draw=Red]
\tikzstyle{blued}=[directed,draw=Blue]
\tikzstyle{dash}=[dashed,draw=Black]
\tikzstyle{dashedred}=[dashed,draw=Red]
\tikzstyle{dotpic}=[scale=0.5]
\tikzstyle{every picture}=[baseline=-0.25em]
\newcommand{
\InputIfFileExists{}{}{\input{./tikz/}}
}[1]{
\InputIfFileExists{#1}{}{\input{./tikz/#1}}
}
\newcommand{\InputIfFileExists{}{}{\input{./tikz/}}}[1]{\InputIfFileExists{#1}{}{\input{./tikz/#1}}}
\newcommand{\stikz}[2][1]{\scalebox{#1}{
\InputIfFileExists{#2}{}{\input{./tikz/#2}}
}}
\newcommand{\cstikz}[2][1]{\begin{center}\stikz[#1]{#2}\end{center}}
\title{Rewriting Context-free Families of String Diagrams}
\author{Vladimir Nikolaev Zamdzhiev}
\begin{document}

\baselineskip=18pt plus1pt

\setcounter{secnumdepth}{3}
\setcounter{tocdepth}{3}

\maketitle                  

\begin{acknowledgementslong}
  I would like to thank my supervisors Prof. Bob Coecke and Prof. Samson Abramsky
for giving me the opportunity to do a DPhil and also for the invaluable advice
they have provided me over the course of my DPhil studies. I also want to thank
Dr. Aleks Kissinger who was my primary supervisor during my DPhil. He spent
a lot of his time on my supervision and has always been very helpful. I cannot
imagine doing my DPhil without his help.

I also want to thank my family for their support during my studies. My
mother, Simka Zamdzhieva, my father, Nikolay Zamdzhiev, and my sister, Desislava
Zamdzhieva, have always been there to provide emotional support and advice.
For this, I will always be very grateful.

I also gratefully acknowledge financial support from the Scatcherd European
Scholarship and the EPSRC. I would not have been able to undertake this
study without their financial support.

Thanks also go to many of my friends that I have met over the years. There
are, of course, too many people to mention here, but special thanks go to
Ventsislav Chonev, Nikola Vlahov, Alasdair Campbell, Joelle Grogan, 
Claire Deligny and Lucy Auton with whom I have shared many of my joys and
worries during my DPhil studies. I also very much enjoyed my three year
volunteer role as bar manager at Oriel College MCR, where I have met many
friends. I would like to thank the Oriel MCR community for this excellent
experience.
\end{acknowledgementslong}

\begin{abstract}
  String diagrams provide a convenient graphical framework which may be used for
equational reasoning about morphisms of monoidal categories. However, unlike
term rewriting, which is the standard way of reasoning about the morphisms of
monoidal categories, rewriting string diagrams results in shorter equational
proofs, because the string diagrammatic representation allows us to formally
establish equalities modulo any rewrite steps which follow from the monoidal
structure.

Manipulating string diagrams by hand is a time-consuming and error-prone
process, especially for large string diagrams. This can be ameliorated by using
software proof assistants, such as Quantomatic.

However, reasoning about concrete string diagrams may be limiting and in some
scenarios it is necessary to reason about entire (infinite) families of string
diagrams. When doing so, we face the same problems as for manipulating concrete
string diagrams, but in addition, we risk making further mistakes if we are not
precise enough about the way we represent (infinite) families of string
diagrams.

The primary goal of this thesis is to design a mathematical framework for
equational reasoning about infinite families of string diagrams which is
amenable to computer automation. We will be working with context-free families
of string diagrams and we will represent them using context-free graph
grammars. We will model equations between infinite families of diagrams using
rewrite rules between context-free grammars. Our framework represents
equational reasoning about concrete string diagrams and context-free families
of string diagrams using double-pushout rewriting on graphs and context-free
graph grammars respectively. We will prove that our representation is sound by
showing that it respects the concrete semantics of string diagrammatic
reasoning and we will show that our framework is appropriate for software
implementation by proving important decidability properties. 
\end{abstract}

\begin{romanpages}          
\tableofcontents            
\end{romanpages}            

\pagestyle{plain}

\chapter{Introduction}

A \emph{monoidal category} is a category $\mathbf{C}$ which is equipped with a
special object $I \in \mathbf{C}$, called the monoidal unit, and also a
bifunctor $\otimes : \mathbf{C} \times \mathbf{C} \to \mathbf{C}$, which is
associative and unital, called the \emph{tensor product}. Like any other
category, two morphisms $f$ and $g$ may be composed in the usual way $f \circ
g$, provided they are compatible, but they may also be composed in another,
orthogonal way, using the tensor product: $f \otimes g$.

Monoidal categories are defined at a high level of abstraction and they have
found applications in many different fields. For example, in linguistics they
can be used to provide the compositional semantics for
sentences~\cite{LambekvsLambek}, in concurrency theory, monoidal categories
describe the structure of Petri nets with boundary~\cite{petri_shit}, in
control theory they can be used to study signal-flow
graphs~\cite{Bonchi2015,Bonchi2014,Baez2014a} and in categorical quantum
mechanics~\cite{cqm}, monoidal categories provide the basic framework for all
developments in the field and have been used to design diagrammatic calculi
for quantum computation, such as the ZX-calculus \cite{bigmainzx, zx_small}.

In any monoidal category, we can prove:
\begin{equation}\label{eq:shit-compose}
(f \otimes id) \circ (id \otimes g)= (id \otimes g) \circ (f \otimes id)
\end{equation}
and the proof involves a few simple steps:
\begin{align*}
(f \otimes id) \circ (id \otimes g) &= (f \circ id) \otimes (id \circ g)
&\text{(interchange law)}\\
&= (id \circ f) \otimes (g \circ id) &\text{(identity)}\\
&= (id \otimes g) \circ (f \otimes id) &\text{(interchange law)}
\end{align*}
where we have used the interchange law, which is not an axiom of monoidal
categories, but a theorem which follows from its axioms
\added{(assuming the axiomatisation given in \cite{mclean}, also cf. proof of
Equation~\eqref{eq:interchange-shit}).}
However, instead of
using terms like we did above, the same fact can be established in an
intuitively simpler way if we use \emph{string diagrams}:
\begin{equation}\label{eq:glorious-compose}
\stikz{intro-laino.tikz}
\end{equation}

String diagrams \cite{joyal_street} are two-dimensional graph-like structures
which can be used to formally reason about monoidal categories in a sound and
complete way -- an equation between two morphisms in a monoidal category
follows from the monoidal data iff the two string diagrams which represent
these morphisms are isotropic, that is, one can be continuously deformed into
the other. In the example above, we can see that by sliding the boxes up or
down the wires, we can obtain one diagram from the other.

String diagrams may be used as an alternative to the standard term-based
approach for equational reasoning in monoidal categories. One of the advantages
provided by doing so is that equational reasoning may be done \emph{modulo} any
equality which follows from the monoidal axioms. As a result, by using string
diagrams, equational proofs are shorter compared to proofs using
term rewriting, because rewrite steps which follow from the monoidal structure
are absorbed into the diagrammatic formalism. This also allows us to focus on
the \emph{additional} structure provided by the model category in which we are
working in (that is, any structure on top of the monoidal one).

A particular example of a string diagrammatic theory is the ZX-calculus.
It is a diagrammatic calculus which may be used to reason about quantum
computation and information. Unlike the standard language for quantum
computing which is based on the Hilbert space formalism, the ZX-calculus
is described via string diagrams. The underlying monoidal category is
\textbf{FdHilb}, the category of finite dimensional Hilbert spaces
and linear maps.
The standard presentation of the syntax of the
ZX-calculus~\cite{bigmainzx} includes generators of the form:
\begin{equation}\label{eq:zx-syntax}
\stikz{zx-spiders.tikz}
\end{equation}
where $\alpha \in [0,2\pi).$
\added{These generators, when interpreted in \textbf{FdHilb},
define the following linear maps:}

\[
\left \llbracket \stikz{green-spider-int.tikz} \right \rrbracket
=
\begin{cases}
\ket{0^m} \mapsto \ket{0^n}\\
\ket{1^m} \mapsto e^{i\alpha} \ket{1^n}\\
\text{others } \mapsto 0
\end{cases}
\mbox{}\quad
\left \llbracket \stikz{red-spider-int.tikz} \right \rrbracket
=
\begin{cases}
\ket{+^m} \mapsto \ket{+^n}\\
\ket{-^m} \mapsto e^{i\alpha} \ket{-^n}\\
\text{others } \mapsto 0
\end{cases}
\]

\added{where the green generators are defined over the
$\{\ket 0, \ket 1 \}^m$
basis and the red generators are defined over the
$\{\ket +, \ket - \}^m$
basis. The equational rules of the ZX-calculus then describe how these two
families of linear maps interact. However, unlike the usual way of reasoning
about quantum computing, the relationships are described in a purely
diagrammatic manner, so reasoning in the ZX-calculus can be done by diagram
rewriting instead of using linear algebra.}

If the labels $\alpha$ of the generators are
allowed to range over $[0,2\pi),$ then the ZX-calculus is \emph{universal} for
quantum computation, meaning it can exactly represent any morphism in
\textbf{FdHilb}. However, in that case it is also
\emph{incomplete}~\cite{zx-incomplete}, meaning that using its equational rules
we cannot prove all true equalities in \textbf{FdHilb}. However, if we restrict
the labels $\alpha$ to the integer multiples $k \pi/4$ in $[0,2\pi),$ then
the calculus is approximately universal and complete for an important segment
of quantum computing, called Stabilizer Quantum Mechanics~\cite{zx_complete} and
has many applications for Fault-Tolerant Quantum
Computation~\cite{fault-tolerant}. We will use this version of the ZX-calculus
as motivation for some of the constructions in this thesis and we will often
provide examples using it in order to illustrate more abstract concepts that we
develop.

Equational reasoning in the ZX-calculus is done
by manipulating diagrams, instead of using linear algebra like in the
traditional language for quantum computing.
For example, a well-known fact in quantum computing
is that the composition of two CNOT gates is the identity. The standard
way of proving this in the traditional language is by evaluating the matrix
representation of the composition, which in this case consists of multiplying
two $4 \times 4$ matrices:
\[
\begin{bmatrix}
    1 & 0 & 0 & 0 \\
    0 & 1 & 0 & 0 \\
    0 & 0 & 0 & 1 \\
    0 & 0 & 1 & 0 
\end{bmatrix}
\begin{bmatrix}
    1 & 0 & 0 & 0 \\
    0 & 1 & 0 & 0 \\
    0 & 0 & 0 & 1 \\
    0 & 0 & 1 & 0 
\end{bmatrix}
=
\begin{bmatrix}
    1 & 0 & 0 & 0 \\
    0 & 1 & 0 & 0 \\
    0 & 0 & 1 & 0 \\
    0 & 0 & 0 & 1 
\end{bmatrix}
\]
to get the identity matrix. The same fact can be formally established in
the ZX-calculus by rewriting string diagrams:
\begin{equation}\label{eq:cnot-intro}
\stikz[0.98]{cnots-laina.tikz}
\end{equation}
with respect to the axioms of the ZX-calculus. The monoidal structure of
\textbf{FdHilb} is captured by the string diagrammatic nature of the calculus
and the additional structure of \textbf{FdHilb} is captured by the equational
axioms of the ZX-calculus. Rewriting diagrams in the ZX-calculus, with respect
to its axiomatic rules, then corresponds to equational reasoning in
\textbf{FdHilb} \emph{modulo} any equational step which follows from the
monoidal structure (such as the interchange law from above).

The shorter equational proofs is not the only advantage of using string
diagrams for reasoning about monoidal categories. Composition of morphisms
in a monoidal category is inherently two-dimensional -- morphisms can be
composed using the standard categorical composition $(- \circ -)$ or
the tensor product $(- \otimes -).$ However, the term-based syntax for
writing down morphisms is one-dimensional -- we write our compositions on
a line and we have to use brackets in order to correctly specify the order in
which the composition operations need to be performed. Another advantage of
using string diagrams is that their syntax captures the compositional relations
of monoidal categories in a more intuitively clear way. We can see this even
for simple cases -- compare the term-based syntax of
\eqref{eq:shit-compose} with the string diagrammatic syntax of
\eqref{eq:glorious-compose}. The difference becomes even more
pronounced when considering larger terms. For example, for morphisms
$f: A \to A \otimes C$, $g: B \otimes C \to B \otimes D$, $h:A \otimes C \to
C$, $k: D \to B$, compare:
\[
(id_C \otimes k) \circ 
(h \otimes id_D) \circ 
(f \otimes id_B \otimes id_D) \circ
(id_A \otimes g)
\]
with:
\cstikz{intro-golqmoto-laino.tikz}

Equational reasoning with string diagrams is done via subdiagram substitution.
Given an equational rule between two string diagrams, one side of the rule
is matched onto a subdiagram of some target diagram which we wish to rewrite,
and the matched part is then replaced with the other side of the equational
rule. For example, in rewrite step (4) of \eqref{eq:cnot-intro} above, the
rule which is applied is simply:
\cstikz{tapa-aksioma.tikz}

However, rewriting large diagrams in such a way may involve many non-trivial
rewrite steps \cite{zamdzhiev_msc}. Doing this by hand has several
disadvantages: the process could take a long time, manually creating the
diagrams at each step is tedious and it is possible to commit errors when a
rewrite rule is not applied correctly. To avoid this, it is preferable to use a
software proof assistant like \emph{Quantomatic}~\cite{quanto-cade} which can
assist with the reasoning process. Internally, Quantomatic uses a discrete
representation of string diagrams, called \emph{string graphs}
\cite{kissinger_dphil}, which represents string diagrams as special kinds of
labelled graphs and then string diagram rewriting is represented by the
well-established graph transformation approach called \emph{double-pushout
rewriting} \cite{dpo_book,dpo_original}.

Quantomatic is a diagrammatic proof assistant. It has a graphical user
interface which allows users to define their own theories based on string
diagrams and also to create and edit string diagrams and string diagram rewrite
rules. Quantomatic supports two ways of rewriting string diagrams --
user-guided, where users may apply rewrite rules at each step of the
derivation, or fully automated rewriting, where multiple rewrite steps are
\replaced{performed}{perfromed}
by Quantomatic in accordance to user-defined tactics. Moreover,
users may mix both approaches and the software allows them to explore multiple
rewrite sequences. There are also useful export options -- diagrams, rewrite
rules and entire derivations may be exported as Tikz code ready to be embedded
in a LaTeX document and they may also be easily exported as interactive HTML5
elements within a webpage. All of these features make Quantomatic a useful
tool for equational reasoning about string diagrams as it reduces the amount
of time needed to manipulate them and greatly reduces the possibilities of
errors being introduced during the derivation process.

Aside from rewriting string diagrams one at a time, in many scenarios it is
useful, and sometimes necessary, to rewrite entire \emph{families} of string
diagrams. By rewriting a single string diagram using some equational rule, we
establish an equality between two string diagrams which therefore represents an
equality between two morphisms in the category in which we are reasoning.
However, in this thesis we will show that we can do equational reasoning
for string diagrams at a higher level. A \emph{family} of string diagrams is an
(infinite) set of string diagrams which are related in some way. When
rewriting a family of diagrams, we establish an \emph{equational schema}, that
is we establish (infinitely) many different equalities which relate pairs
of string diagrams and thus morphisms. Reasoning in this way is strictly
more general and then specific equalities between string diagrams follow
as special cases of an equational schema.

We will give examples from the ZX-calculus to illustrate these ideas. To begin
with, the standard presentation of the syntax of the
ZX-calculus~\cite{bigmainzx} includes generators which are described as
families of diagrams, like in~\eqref{eq:zx-syntax}. This means that our green
and red nodes may have any number of inputs or outputs.
The standard axiomatisation
contains concrete equational rules between concrete string diagrams, like the
following one:
\begin{equation}\label{eq:zx-copy}
\stikz{zx-concrete-rule.tikz}
\end{equation}
but it also contains equational schemas between families of ZX-diagrams which
are introduced as axioms, like the following one:
\begin{equation}\label{eq:zx-family-rule}
\stikz{zx-family-rule.tikz}
\end{equation}
The open-ended wires at the bottom are called \emph{inputs} and the open-ended
wires at the top are called \emph{outputs}. The intended meaning of this
equational schema is that for any number of inputs or outputs, the two green
nodes may be merged and their labels added together while preserving the inputs
and outputs, as long as the two nodes are connected by a wire. This axiom of
the ZX-calculus describes infinitely many equalities between pairs of concrete
diagrams. For example, if we require that both nodes have one input and one
output and $\alpha=\beta=0$, then we can get a concrete \emph{instance} of the
equational schema:
\begin{equation}\label{eq:zx-family-instance}
\stikz{zx-family-instance.tikz}
\end{equation}
where we use the convention not to depict numbers $\alpha$ if they are equal
to zero. An instance of an equational schema is simply an equational rule,
which may be used to rewrite string diagrams. For example the instance
\eqref{eq:zx-family-instance} is used to
\replaced{perform}{perfrom}
rewrite step (1) of
the rewrite sequence~\eqref{eq:cnot-intro}. But, we may also use an
equational schema to rewrite an entire family of string diagrams, thereby
establishing a new equational schema.
For example, by inductively applying rewrite rules
\eqref{eq:zx-copy} and \eqref{eq:zx-family-rule}
we can easily prove that the following equational schema is also true:
\begin{equation}\label{eq:zx-family-derive}
\stikz{zx-family-derive.tikz}
\end{equation}
We can then apply this equational schema to the family of string diagrams
given below:
\cstikz{zx-simple-family-laina.tikz}
to establish the new equational schema below:
\cstikz{zx-simple-family-rewrite2.tikz}

Equational reasoning on the level of families of string diagrams and equational
schemas subsumes reasoning on the level of string diagrams and equational
rules. Moreover, reasoning on this level is sometimes necessary. For example,
quantum algorithms and protocols are usually described in terms which allow
for input of arbitrary size. Therefore, if we wish to reason about them
using the ZX-calculus, then we have to be able to talk about families
of ZX-diagrams. The primary focus of this thesis is the study of infinite
families of string diagrams, and in particular, the formal methods which could
be used to do equational reasoning with them. We will not restrict ourselves
to the ZX-calculus or any other specific theory. Instead, we will study this
problem in generality -- we only assume that our string diagrams can be
labelled over finite alphabets of node and wire labels, and we will show
how we can do equational reasoning for certain infinite families of string
diagrams.

When working with large families of string diagrams (in terms of the number of
nodes and edges required to depict them) we face the same problems as with
working with large concrete string diagrams. Therefore, ideally, we would wish
to be able to perform the reasoning process using software support. However,
when describing families of string diagrams so far, we have been rather
informal. We described these families using the $(\cdots)$ notation. While this
notation is intuitive, it is not precise enough for computer implementation.
Therefore, if we wish to work with infinite families of string diagrams using a
proof assistant, then we need to be able to describe these families using a
formal notation.

Quantomatic does support reasoning with certain infinite families of string
diagrams. These families are described by the theory of \emph{!-graphs},
pronounced, \emph{bang graphs}. !-graphs, as the name suggests, are special
kinds of graphs. They are strictly more general compared to string graphs.
Instead of using the $(\cdots)$ notation, families of string diagrams are
denoted by marking subgraphs of a string graph with \emph{!-boxes}, which are
graphically depicted by drawing a blue box around the required subgraph. The
subgraphs which are marked by !-boxes are allowed to be copied an arbitrary
number of times while preserving the connection relations with the rest of the
graph. This is the mechanism used by !-graphs in order to represent infinite
families of string diagrams.

For example, in Quantomatic we can depict a !-graph in the following way:
\cstikz{bang-intro-shit.tikz}
and it is interpreted as the family of diagrams on the left-hand side of
\eqref{eq:zx-family-derive}. More formally, it is the set:
\[
\left\llbracket \stikz{bang-intro-shit.tikz} \right\rrbracket =
\stikz{bang-intro-shit2.tikz}
\]
Then, an equational schema between two families of string diagrams is
represented by a pair of !-graphs with a bijective correspondence between
their !-boxes. For example, the equational schema \eqref{eq:zx-family-derive}
is represented by:
\cstikz{bang-intro-laina.tikz}
Finally, equational reasoning between entire (infinite) families of string
diagrams is represented by double-pushout (DPO) rewriting of !-graphs.
Quantomatic fully supports this kind of rewriting and provides features for
fully automated and user-guided rewriting of families of diagrams.

However, the !-graph formalism is limited in terms of its expressive power.
!-graphs can only represent families of string diagrams which are of bounded
diameter, meaning that there is a fixed upper bound on the shortest distance
between any pair of nodes in its diagrams. In the context of the ZX-calculus,
this is very
\replaced{limiting}{limitting},
because it means that any protocol or algorithm whose
time-complexity is not constant cannot be described via !-graphs. Another
limitation is that !-graphs can only represent families of string diagrams
which are finitely colourable. For example, this means that !-graphs cannot
represent any family of string diagrams which contain a clique of arbitrary
size. In the context of the ZX-calculus this means that we cannot represent
the local complementation rule of \cite{euler_necessity},
\added{given by:}
\begin{equation}\label{eq:local-shit}
\stikz{local-complement-final.tikz}
\end{equation}
\added{where $K_{n-1}$ denotes the totally connected graph on $n-1$ green vertices
connected to each other via Hadamard ($H$) gates.}
This rule is very
important as it is used in the only known
\replaced{decision}{decission}
procedure for equality
of stabilizer operations in the ZX-calculus \cite{zx_complete}. In
fact, one of the initial design goals for the !-graph formalism was precisely
this~--~the ability to represent the local complementation rule of the
ZX-calculus.\footnote{Ross Duncan. Personal communication.}

Our primary motivation in this thesis is to develop an alternative to the
!-graph formalism which avoids some of its limitations in terms of expressive
power, while retaining as many of its useful features as possible. In
particular, we wish to be able to do equational reasoning on infinite families
of string diagrams in a formal way which can be implemented in Quantomatic, or
other software proof-assistants. Of course, as with any other language
generating device, there is a tradeoff between expressive power on one side and
decidability, complexity and structural properties on the other.

The alternative which we propose is based on (slightly extended)
\emph{context-free graph grammars}~\cite{c-ednce}, which avoid both
of the limitations of !-graphs described above.
These grammars are a
generalisation of the standard context-free grammars on strings. However,
context-free graph grammars (CFGGs) generate languages of graphs, not strings,
and they are strictly more powerful than the standard context-free string
grammars. Both CFGGs and !-graphs can generate languages of graphs, but as
we shall show, the classes of languages which they induce are incomparable --
there are !-graph languages which cannot be represented using CFGGs and
vice-versa.

In particular, the alternative grammars to !-graphs which we shall study are
called \emph{B-ESG grammars}, which is a
shorthand for Boundary Encoded String Graph grammars. B-ESG grammars are
a special kind of CFGGs with a simple extension which allows us to encode
some additional graph structure in specially labelled edges.
\added{An example of a B-ESG grammar which represents the local complementation
rule of Equation~\eqref{eq:local-shit} is given by:}
\cstikz[0.7]{local-comp-B-ESG.tikz}
\added{This is a compact representation of a very powerful and complicated
derived rule of the ZX-calculus which !-graphs cannot express. In later
chapters, we will describe in detail how B-ESG grammars work and how they
can represent equational rules of the ZX-calculus.}

We will show that our proposed
alternative is strictly more expressive than an important class of !-graphs,
called \emph{!-graphs with trivial overlap}. The author does not know of any
!-graph languages of interest for practical applications outside of that class,
so the proposed alternative is suitable in terms of expressive power for
current known uses. We also show how B-ESG grammars can be used to represent
equational schemas between infinite context-free families of string diagrams.
Our proposed framework also supports rewriting infinite families of string
diagrams using already established equational schemas. All of our rewrites are
kept sound in the sense that they respect the concrete semantics of the
equational schemas and families of diagrams used in the rewrites. Moreover,
because we are also interested in implementing software support for our
framework, we keep all of our constructions decidable.

The rest of the thesis is structured as follows. In
Chapter~\ref{ch:background}, we introduce all of the relevant background
theory. This includes an introduction to \emph{adhesive} and \emph{partially
adhesive} categories which show how to do DPO rewriting. DPO rewriting is
the graph transformation mechanism which we use in order to model equational
reasoning, both for string diagrams and also for families of string diagrams.
We also describe string diagrams and their discrete representation, string
graphs, in sufficient detail. We then give a more detailed introduction to
families of string diagrams and !-graphs. We conclude the background chapter by
introducing the context-free graph grammars (CFGGs) which we shall be using.
In particular, we shall be working with B-edNCE graph grammars. This chapter
does not contain any original work, except for a few simple propositions.

The original work is presented in the rest of the chapters.
In Chapter~\ref{ch:context-free}, we consider the expressive power of
CFGGs
\added{on languages of string graphs. In particular, we show that the two
dominant classes of context-free graph grammars -- Vertex Replacement (VR) and
Hyperedge Replacement (HR) grammars -- have the same expressive power on
languages of string graphs, whereas on general graphs VR grammars are strictly
more expressive compared to HR grammars.}
We will show that CFGGs are strictly more expressive than a class
of !-graphs  called \emph{!-graphs with no overlap}. This is an important
class of !-graphs, but there are other !-graph languages of interest which
are not included in it. We
\replaced{also}{then}
showcase some limitations of
both !-graph languages and CFGG languages and use that as justification to
consider a slightly more expressive language generating device, which is
introduced in the next chapter.

In Chapter~\ref{ch:besg}, we introduce a simple extension to B-edNCE grammars,
\added{which are special kinds of VR grammars.}
\replaced{This extension}{which}
provides them with a little bit more expressive power compared to
standard B-edNCE grammars. We call these grammars \emph{encoded} B-edNCE
grammars.
\added{The extension formalises the idea that specially labelled
edges can be thought of as fixed graphs. This idea increases the expressive
power of B-edNCE grammars, however, it does not increase the expressive power
of HR grammars. This simple extension allows us to cover languages of interest,
however it is not clear how HR grammars can be extended to cover the same
languages and for this reason we base our investigations on encoded B-edNCE
grammars.}
We then provide sufficient and easily decidable conditions for \added{encoded}
B-edNCE grammars which guarantee that they generate only languages of string
graphs \added{and call the grammars which satisfy these conditions
\emph{B-ESG} grammars.}
Therefore by restricting ourselves to B-ESG grammars, we can represent
families of string diagrams. We also show that the B-ESG grammars are the
largest class of encoded B-edNCE grammars which generate languages of
string graphs. We also prove that B-ESG grammars are strictly more expressive
than !-graphs with trivial overlap. We conclude by showing that B-ESG grammars
have important decidability properties for the operation of a software
proof-assistant. In particular, we show that the membership problem and the
matching enumeration problems are decidable.

In Chapter~\ref{ch:rewriting}, we begin by showing that DPO rewriting is
well-behaved for B-edNCE grammars. We then show how to rewrite B-edNCE grammars
using DPO rewriting, such that the rewrites are sound with respect to the
concrete semantics, that is, any concrete instance of our grammar rewrites
corresponds to a sequence of concrete rewrite rules from the equational
schemas we are representing. These techniques are then extended to B-ESG
grammars by showing that we can rewrite B-ESG grammars using B-ESG rewrite
rules in a sound way. In terms of string diagrams, this shows that we can
do equational reasoning on context-free families of string diagrams using
equational schemas where both sides of the schema are also a context-free
family of diagrams.

Finally, in Chapter~\ref{ch:conclude}, we provide some concluding remarks and
discuss future work.

\chapter{\label{ch:background}Background}
In Section~\ref{sec:adhesive}, we introduce \emph{adhesive} categories, which
are categories suitable for doing double-pushout (DPO) rewriting. Specific
instances of adhesive categories which are presented are $\mathbf{Set}$ and
$\mathbf{MultiGraph}$ -- the categories of sets and multigraphs respectively.

However, in this thesis, we will also be doing DPO rewriting in categories
which are not adhesive. In Section~\ref{sec:partial_adhesive}, we will describe
\emph{partially adhesive} categories, which are categories where DPO rewriting
behaves exactly like DPO rewriting in their ambient adhesive category, provided
that the rewrite rules and matchings satisfy additional conditions.
$\mathbf{Graph}$, the category of graphs where parallel edges are not allowed,
is an example of a partially adhesive category (with ambient adhesive category
$\mathbf{MultiGraph})$ which is discussed.

In Section~\ref{sec:string}, we introduce \emph{string diagrams}, which are the
primary objects which we wish to reason about. We describe their relationship
to (traced symmetric) monoidal categories and then introduce a discrete
representation
of string diagrams, called \emph{string graphs}, which we shall use as it
is more straightforward to build a proof assistant on top of this theory.
We also describe the partially adhesive structure of string graphs.

We proceed by introducing \emph{families} of string diagrams in
Section~\ref{sec:family} and show how we can reason about infinitely many
string diagrams at the same time, instead of just concrete string diagrams. We
also describe \emph{!-graphs} which provide for a formal and finite way to
represent certain infinite families of string graphs (and thus string
diagrams).

Finally, in Section~\ref{sec:graph-grammars}, we provide an introduction to
\emph{context-free graph grammars}. This is the main mechanism which we use in
order to represent families of string graphs, as an alternative to !-graphs, in
later chapters and most of the original results are stated for these grammars.

\section{Adhesive Categories}\label{sec:adhesive}

Adhesive categories were introduced by Lack and Sobocinski in
\cite{adhesive_categories}. Adhesive categories establish a categorical
framework which generalises the standard method of doing double-pushout (DPO)
rewriting on graphs. Because of the
\replaced{high level}{high-level}
of abstraction on which adhesive
categories are defined, DPO rewriting can be performed on any category which is
shown to satisfy their axioms and the rewrites retain important
properties enjoyed by DPO rewriting on graphs, such as the Local Church-Rosser
Theorem and the Concurrency Theorem.

Adhesive categories fit in very nicely in the context of this thesis. We will
be doing DPO rewriting on both graphs and edNCE grammars both of which will be
based on two different graph models. DPO rewriting over graphs is well-known,
but it has not been studied in the context of edNCE grammars.  So, instead of
showing how DPO rewriting works in each of those cases, we will recall how DPO
rewriting can be performed in adhesive categories where it has already been
demonstrated. To make use of these results,
\deleted{we}
we will show that the specific
categories
\deleted{in which}
we are interested in are adhesive (or partially adhesive,
see Section \ref{sec:partial_adhesive}) and then DPO rewriting along with
several useful lemmas follow as a result in each of our model categories.

In this section, we will introduce all of the definitions and propositions
related to adhesive categories and DPO
\replaced{rewriting}{rerwriting}
that we will need in the rest
of the thesis. We begin by providing the formal definition for an adhesive
category.

\begin{definition}[Adhesive Category \cite{adhesive_categories}]
\label{def:adhesive_category}
A category $\mathbf{C}$ is said to be \emph{adhesive} if

\begin{enumerate}
\item $\mathbf{C}$ has pushouts along monomorphisms
\item $\mathbf{C}$ has pullbacks
\item pushouts along monomorphisms are van Kampen squares
\end{enumerate}
\end{definition}

Van Kampen squares are crucial for establishing many of the properties of
adhesive categories. However, in this thesis, we do not make direct use of them
in any definitions or propositions. Thus, it is not necessary to understand
what a van Kampen square is and for this reason we will not provide a formal
definition.

We will also not make any further references to pullbacks either, so the only
important part of the definition (in the context of this thesis) is that
pushouts along monomorphisms always exist. To show that the categories we are
interested in are adhesive, we will
\replaced{consider}{cosider}
a few well-known examples of
adhesive categories and then we will make use of two lemmas which show that
adhesive categories are closed under certain categorical constructions.
These lemmas and one of the examples are provided below.

\begin{example}[\cite{adhesive_categories}]\label{ex:set}
The category $\mathbf{Set}$ is adhesive.
\end{example}

\begin{lemma}[\cite{adhesive_categories}]\label{lem:adhesive_slice}
If $\mathbf{C}$ is adhesive then so are $\mathbf{C}/C$ and $C/\mathbf{C}$
for any object $C$ of $\mathbf{C}$.
\end{lemma}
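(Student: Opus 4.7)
The plan is to reduce each of the three adhesiveness axioms for $\mathbf{C}/C$ (and dually $C/\mathbf{C}$) to the same axiom in $\mathbf{C}$ by using the well-known fact that the forgetful functor $U_C : \mathbf{C}/C \to \mathbf{C}$ creates connected colimits (in particular pushouts) and creates all limits (in particular pullbacks), and that it both preserves and reflects monomorphisms. So I would first record these three facts as preliminary observations: a morphism $f$ in $\mathbf{C}/C$ is mono iff $U_C f$ is mono in $\mathbf{C}$; any cospan in $\mathbf{C}/C$ has a pullback whose underlying object is the pullback in $\mathbf{C}$ equipped with the obvious map to $C$; any span in $\mathbf{C}/C$ has a pushout whose underlying object is the pushout in $\mathbf{C}$ equipped with the map to $C$ induced by the universal property.

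With these in hand, axioms (1) and (2) of Definition~\ref{def:adhesive_category} are immediate: pullbacks exist in $\mathbf{C}$ by hypothesis, so they exist in $\mathbf{C}/C$; and pushouts along monos exist in $\mathbf{C}$ by hypothesis, and since monos are reflected/preserved by $U_C$, the same construction yields pushouts along monos in $\mathbf{C}/C$.

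The only non-trivial axiom is (3), the van Kampen property. Here the key point is that a van Kampen condition is expressed entirely in terms of commuting cubes whose bottom face is the given pushout square and whose back faces are pullbacks — it asks that the top face be a pushout iff the front faces are pullbacks. Given such a cube in $\mathbf{C}/C$, I would apply $U_C$ to get a cube in $\mathbf{C}$; by the creation properties above, being a pushout (resp.\ pullback) in $\mathbf{C}/C$ is the same as being a pushout (resp.\ pullback) in $\mathbf{C}$ for the faces that matter. Since the bottom square in $\mathbf{C}$ is a pushout along a mono, the van Kampen property of $\mathbf{C}$ gives the required equivalence, which then lifts back to $\mathbf{C}/C$. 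This step is where I expect the main obstacle to be: verifying carefully that every square appearing in the van Kampen cube is one whose (co)limit status is genuinely created by $U_C$, so that no spurious conditions get added or dropped when passing between the slice and the ambient category.

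Finally, the coslice case $C/\mathbf{C}$ is handled by an entirely analogous argument, using that the forgetful functor $C/\mathbf{C} \to \mathbf{C}$ likewise creates connected colimits and all connected limits, and reflects/preserves monomorphisms; alternatively, one can invoke a duality remark to transfer the slice argument, though I would prefer to spell it out directly to keep the proof self-contained.
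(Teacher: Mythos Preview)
The paper does not give a proof of this lemma; it is stated with a citation to Lack and Soboci\'nski's original paper on adhesive categories and used as a black box. So there is no proof in the paper to compare your proposal against.

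That said, your sketch is essentially the standard argument and is correct in outline. Two small remarks. First, your description of the coslice forgetful functor is slightly garbled: for $C/\mathbf{C}$ the forgetful functor creates \emph{all} colimits and \emph{connected} limits (dual to the slice case), not ``connected colimits and all connected limits'' as you wrote; this does not affect the argument since pushouts and pullbacks are both connected. Second, your aside about invoking duality to pass from the slice to the coslice case should be dropped: the notion of adhesive category is not self-dual (pushouts along monos and arbitrary pullbacks play asymmetric roles), so duality does not transfer the result directly. Your instinct to spell out the coslice case separately is the right one.
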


\begin{lemma}[\cite{adhesive_categories}]\label{lem:adhesive_functor}
If $\mathbf{C}$ is adhesive then so is any functor category $[\mathbf{X},
\mathbf{C}]$.
\end{lemma}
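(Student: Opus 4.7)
The plan is to exploit the fact that essentially everything in a functor category $[\mathbf{X}, \mathbf{C}]$ is computed pointwise, and then reduce each clause of the adhesiveness definition to the corresponding property in $\mathbf{C}$.

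First, I would recall the standard facts about functor categories: if $\mathbf{C}$ has a class of (co)limits of some shape, then $[\mathbf{X}, \mathbf{C}]$ has them too, and they are computed componentwise. In particular, pullbacks in $[\mathbf{X}, \mathbf{C}]$ exist and are pointwise pullbacks, so clause~2 of Definition~\ref{def:adhesive_category} is immediate. To handle clause~1, I would next observe that a natural transformation $\alpha \colon F \Rightarrow G$ is a monomorphism in $[\mathbf{X}, \mathbf{C}]$ if and only if each component $\alpha_x$ is a monomorphism in $\mathbf{C}$; one direction is standard general nonsense (mono is detected by the representable functors into $\mathbf{Set}$ applied pointwise), and the other follows because pullbacks, and hence the kernel pair characterisation of monos, are pointwise. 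Combined with pointwise existence of pushouts along monomorphisms in $\mathbf{C}$, this gives pushouts along monomorphisms in $[\mathbf{X}, \mathbf{C}]$, verifying clause~1.

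The real work is clause~3, the van Kampen condition. Here I would argue that a commutative cube in $[\mathbf{X}, \mathbf{C}]$ whose bottom face is a pushout along a monomorphism is a van Kampen square if and only if every pointwise cube (obtained by evaluating at each $x \in \mathbf{X}$) is a van Kampen square in $\mathbf{C}$. The van Kampen property is phrased entirely in terms of pushouts, pullbacks, and when certain back faces are pullbacks iff the top face is a pushout; all of these constructions decompose pointwise by the first step, and naturality of the squares involved means the pointwise data assemble back together into a natural transformation of cubes. So pointwise van Kampen-ness transfers to the functor category level, and since $\mathbf{C}$ is adhesive, the pointwise bottom faces (which are pointwise pushouts along monos) are van Kampen, giving clause~3.

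The main obstacle is the bookkeeping for the van Kampen clause: one must carefully check that both the hypothesis (certain faces are pullbacks in $[\mathbf{X}, \mathbf{C}]$) and the conclusion (the top face is a pushout in $[\mathbf{X}, \mathbf{C}]$) translate faithfully through pointwise evaluation, and conversely that a natural cube whose every component is van Kampen really does yield a van Kampen square of functors. Since the authors explicitly disclaim needing the internals of van Kampen squares in the rest of the thesis, I would keep this verification at the level of a remark that each piece of data in the van Kampen definition is a (co)limit, and then cite the general principle that (co)limits in $[\mathbf{X}, \mathbf{C}]$ are pointwise, rather than grind through the cube diagram explicitly.
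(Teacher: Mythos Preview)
The paper does not actually supply a proof of this lemma; it is simply quoted from Lack and Soboci\'nski with a citation and used as a black box. Your pointwise argument is correct and is essentially the standard proof from that reference: (co)limits in $[\mathbf{X},\mathbf{C}]$ are computed componentwise, monomorphisms are pointwise monos (using that $\mathbf{C}$ has pullbacks), and the van Kampen condition, being phrased purely in terms of pushouts and pullbacks, transfers pointwise as well.
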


These two lemmas are powerful and they allow us to easily prove adhesivity
of some of the categories we are interested in. We shall illustrate this now
by proving that the category of labelled multigraphs is adhesive, which
is well-known and pointed out in \cite{adhesive_categories}.

\begin{definition}[Unlabelled Multigraphs]
The category of \emph{unlabelled multigraphs} is $\mathbf{UMultiGraph}$. This
category is defined as the functor category $[\mathbb G,
\mathbf{Set}]$, where $\mathbb G$ is the two object category given by:
\cstikz{unlabelled_category_graph.tikz}
For a multigraph $H \in \mathbf{UMultiGraph},$ we shall denote with $V_H$ its
set of vertices and with $E_H$ its set of edges.
\end{definition}

This definition is standard in the graph transformation literature.
An object of $\mathbf{UMultiGraph}$ consists of a set of vertices $(V)$,
a set of edges $(E)$ and two functions $s$ (source) and $t$ (target) which
assign the source and target vertices to edges. Note, that the graphs
in this category may have self-loops and parallel edges (also frequently called
multiple edges). For this reason, we will refer to them as unlabelled
\emph{multigraphs}. We will later introduce \emph{graphs} which are just
labelled multigraphs which do not have any self-loops and any pair of parallel
edges needs to have different labels.

Now we can easily prove that this category is adhesive.

\begin{lemma}
$\mathbf{UMultiGraph}$ is an adhesive category.
\end{lemma}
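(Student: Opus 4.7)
The statement follows essentially immediately from the machinery already assembled in the excerpt, so my plan is really just to chain together the correct references rather than to do any new categorical work.

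First I would recall that by the definition just given, $\mathbf{UMultiGraph}$ is literally the functor category $[\mathbb{G}, \mathbf{Set}]$, where $\mathbb{G}$ is the small two-object, two-parallel-arrow category whose objects will be interpreted as ``vertices'' and ``edges'' and whose parallel arrows pick out source and target. No additional unpacking is needed here: the data of a functor $F : \mathbb{G} \to \mathbf{Set}$ is exactly a pair of sets $F(V), F(E)$ together with two functions $F(E) \rightrightarrows F(V)$, i.e. an unlabelled multigraph, and natural transformations between such functors are precisely multigraph homomorphisms.

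Next I would invoke Example~\ref{ex:set}, which tells us that $\mathbf{Set}$ is adhesive. Then Lemma~\ref{lem:adhesive_functor}, applied with $\mathbf{X} = \mathbb{G}$ and $\mathbf{C} = \mathbf{Set}$, tells us that $[\mathbb{G}, \mathbf{Set}]$ is adhesive. Combining these two steps with the identification of the first paragraph, $\mathbf{UMultiGraph}$ is adhesive, which is what we wanted.

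There is no real obstacle in this proof: the heavy lifting has already been absorbed into Lemma~\ref{lem:adhesive_functor}, whose proof shows that pushouts along monomorphisms, pullbacks and the van Kampen property are all computed pointwise in a functor category with adhesive target. The only thing one might want to be pedantic about is verifying that the definition given for $\mathbf{UMultiGraph}$ really does match ``unlabelled multigraphs with possibly parallel edges and self-loops'' in the usual sense, but this is a direct inspection of the shape of $\mathbb{G}$. I would therefore keep the written proof to essentially two sentences, citing Example~\ref{ex:set} and Lemma~\ref{lem:adhesive_functor}, and not belabour the construction further.
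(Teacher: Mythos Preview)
Your proposal is correct and matches the paper's proof exactly: the paper's proof is the one-sentence ``Combining Lemma~\ref{lem:adhesive_functor} with Example~\ref{ex:set}, it follows immediately that the category $\mathbf{UMultiGraph}$ is adhesive.'' Your additional remarks about unpacking the identification $\mathbf{UMultiGraph} = [\mathbb{G}, \mathbf{Set}]$ are fine but, as you yourself note, unnecessary since that identification is the definition given in the paper.
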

\begin{proof}
Combining Lemma~\ref{lem:adhesive_functor} with Example~\ref{ex:set}, it
follows immediately that the category $\mathbf{UMultiGraph}$ is adhesive.
\end{proof}

We can extend our multigraphs by introducing labels. A \emph{labelled
multigraph} is a multigraph where there are two labelling functions which
assign labels to the vertices and edges of the multigraph. 

\begin{definition}[Multigraphs]\label{def:multigraph}
The category of \emph{labelled multigraphs} over an alphabet of vertex labels
$\Sigma$ and an alphabet of edge labels $\Gamma,$ is the category
$\mathbf{MultiGraph}_{\Sigma, \Gamma}$ whose objects are pairs $(H, l)$
with $H \in \mathbf{UMultiGraph}$ an unlabelled
multigraph and $l = (l_V, l_E)$ a
pair of labelling functions:
\begin{align*}
l_V     &: V_H   \to \Sigma                             &\text{(the vertex
     labelling function)}\\
l_E     &: E_H   \to \Gamma                             &\text{(the edge
     labelling function)}
\end{align*}
A morphism between two multigraphs
$f : (G, l) \to (H, l')$ is a morphism $f: G \to H$ of
$\mathbf{UMultiGraph}$, which in addition respects the labelling, that is
the following diagrams commute:
\cstikz{respect_labelling.tikz}
\end{definition}

Labelled multigraphs are strictly more general than unlabelled multigraphs --
by choosing singleton sets for the vertex and edge label alphabets $\Sigma$ and
$\Gamma$, we get $\mathbf{UMultiGraph} \cong \mathbf{MultiGraph}_{\Sigma,
\Gamma}.$ We will be working with labelled multigraphs from now on, so for
brevity we will simply refer to them as multigraphs. If the alphabets
$\Sigma$ and $\Gamma$ are clear from the context, then we will simply write
$\mathbf{MultiGraph}$ for the category of labelled multigraphs over $\Sigma$
and $\Gamma$.
If $H \in \mathbf{MultiGraph}$, then we shall denote with $[H]$ the set of
all multigraphs isomorphic to $H$.

\begin{lemma}\label{lem:multigraph-adhesive}
$\mathbf{MultiGraph}_{\Sigma, \Gamma}$ is adhesive.
\end{lemma}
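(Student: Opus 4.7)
The plan is to exhibit $\mathbf{MultiGraph}_{\Sigma,\Gamma}$ as a slice category over the already-adhesive category $\mathbf{UMultiGraph}$, and then invoke Lemma~\ref{lem:adhesive_slice}. The key observation is that a labelling of a multigraph is the same data as a morphism into a suitable ``universal'' label multigraph $L$, and that label-preserving morphisms between labelled multigraphs correspond exactly to morphisms over $L$ in the slice $\mathbf{UMultiGraph}/L$.

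Concretely, I would construct $L \in \mathbf{UMultiGraph}$ by setting $V_L = \Sigma$ and $E_L = \Gamma \times \Sigma \times \Sigma$, with source and target maps given by the second and third projections respectively. Unpacking the definition of a morphism in $\mathbf{UMultiGraph}$, a morphism $f : H \to L$ amounts to a pair $(f_V, f_E)$ with $f_V : V_H \to \Sigma$ and $f_E : E_H \to \Gamma \times \Sigma \times \Sigma$ such that $f_E(e) = (\gamma_e, f_V(s(e)), f_V(t(e)))$ for some $\gamma_e \in \Gamma$. Thus morphisms $H \to L$ are in bijection with pairs of labelling functions $(l_V, l_E)$ on $H$, via $l_V = f_V$ and $l_E(e) = \pi_1 f_E(e)$. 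A morphism between two objects $(H \to L)$ and $(H' \to L)$ of the slice is a morphism $H \to H'$ in $\mathbf{UMultiGraph}$ making the obvious triangle commute, which by chasing the definitions is exactly a morphism of labelled multigraphs in the sense of Definition~\ref{def:multigraph}. This establishes an isomorphism of categories $\mathbf{MultiGraph}_{\Sigma,\Gamma} \cong \mathbf{UMultiGraph}/L$.

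Once the isomorphism is in hand, the conclusion is immediate: $\mathbf{UMultiGraph}$ is adhesive by the preceding lemma, Lemma~\ref{lem:adhesive_slice} says slices of adhesive categories over any object remain adhesive, and adhesivity is manifestly invariant under equivalence (indeed isomorphism) of categories, so $\mathbf{MultiGraph}_{\Sigma,\Gamma}$ is adhesive.

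The main obstacle is really only the bookkeeping in the middle step: one must be careful that the constraint imposed by source/target preservation in $\mathbf{UMultiGraph}$ is precisely strong enough to force $f_V$ to agree with the vertex labelling on endpoints of each edge, but weak enough to leave the $\Gamma$-component of $f_E$ entirely free. The choice $E_L = \Gamma \times \Sigma \times \Sigma$ with projection source/target is what makes this work; other choices (for example identifying $L$ with a disjoint union of ``label loops'') would either fail to be functorial in the right way or not capture edge labels that connect differently labelled vertices. Beyond this, no further categorical machinery is needed.
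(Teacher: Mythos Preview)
Your proof is correct and takes essentially the same approach as the paper: your multigraph $L$ is exactly the paper's typing graph $T$ (vertices $\Sigma$, one edge per triple in $\Gamma \times \Sigma \times \Sigma$), and both arguments establish $\mathbf{MultiGraph}_{\Sigma,\Gamma} \cong \mathbf{UMultiGraph}/L$ and then invoke Lemma~\ref{lem:adhesive_slice}. If anything, you spell out the bijection between labellings and morphisms into $L$ more carefully than the paper does.
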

\begin{proof}
Consider the category of \emph{typed multigraphs} $\mathbf{UMultiGraph}/T,$
where $T \in \mathbf{UMultiGraph}$ is the multigraph given by:
\cstikz{typing_graph.tikz}
The typing graph $T$ in the above definition has vertices given by the
\replaced{vertex}{vertx}
label alphabet $\Sigma$ and for each pair of vertices $(u,v)$ and each edge
label $\gamma \in \Gamma$, there is an edge from $u$ to $v$ with label $\gamma$.
Thus, an object in the slice category $\mathbf{UMultiGraph}/T$ assigns
a vertex label to each vertex and an edge label to each edge with no
further restrictions. It is well-known (and easy to see) that
$\mathbf{MultiGraph}_{\Sigma, \Gamma}$
is isomorphic to the slice category $\mathbf{UMultiGraph}/T$
\cite{typed_vs_labelled}. Therefore $\mathbf{MultiGraph}_{\Sigma, \Gamma}$ is
adhesive by Lemma~\ref{lem:adhesive_slice}.
\end{proof}

In a similar way, by making use of these two lemmas, we will later show that
the category of multi-edNCE grammars is adhesive.

Adhesive categories can be used in order to understand rewriting for certain
kinds of structures. One of the most basic notions related to that is the
notion of a rewrite rule. It can be defined in an arbitrary category, not
necessarily an adhesive one. Intuitively, a rewrite rule consists of a
left-hand side object, a right-hand side object and a third object which
appears as a subobject in both of them.

\begin{definition}[Rewrite rule]
A \emph{rewrite rule} $t$ in an arbitrary category $\mathbf{C}$ is a span of
monomorphisms $L \xleftarrow{l} I \xrightarrow{r} R.$
\end{definition}

We shall refer to $L$ as the \emph{left-hand side} of the rule, $R$ as the
\emph{right-hand side} and $I$ as the \emph{interface} of the rule.
In the literature, rewrite rules (or productions in DPO grammars) are sometimes
defined simply as a span. However, some of the important theorems and lemmas
about adhesive categories only hold for spans over monomorphisms. In addition,
we will only consider rewrite rules consisting of a pair of monomorphisms and
for this reason we choose the above definition. We will use rewrite rules in
appropriate categories in order to encode equations (equational schemas)
between string diagrams (families of string diagrams).

Next, we consider a lemma which relates the monomorphisms of a span
to their counterparts in a pushout square.

\begin{lemma}[\cite{adhesive_categories}]\label{lem:mono_stable}
In any adhesive category, monomorphisms are stable under pushout.
\end{lemma}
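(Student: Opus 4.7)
The plan is to reduce the statement to the van Kampen (VK) axiom of Definition~\ref{def:adhesive_category}, which is the only substantive requirement of adhesivity beyond the existence of pushouts along monos and of pullbacks. Given a pushout square with $m : A \to B$ monic, $f : A \to C$, $n : C \to D$ and $g : B \to D$, the goal is to show that the opposite leg $n : C \to D$ is also monic.

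The first step I would carry out is the standard auxiliary result that in an adhesive category, every pushout along a monomorphism is simultaneously a pullback. This is a direct VK application: one assembles a commutative cube whose bottom face is the given pushout square, whose top face is a trivially chosen pushout (for instance, an identity pushout on $A$), and whose vertical edges are built from $m$ and the pushout data. The back faces are then manifestly pullbacks by inspection (this is where monicness of $m$ is used), and the top face is trivially a pushout, so VK forces the front faces and, via a second invocation, the bottom face itself to be pullbacks.

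Once the pushout square is known to be a pullback, showing that $n$ is monic reduces to a universal-property chase. Suppose $x, y : X \to C$ satisfy $n x = n y$, and set $h := n x = n y$. Pulling $g$ back along $h$ yields an object $Y$ equipped with a projection $p : Y \to X$ and a map $q : Y \to B$ with $g q = h p$. Applying the pullback property of the pushout square twice, once to the compatible pair $(x p, q)$ and once to $(y p, q)$, produces two factorisations $Y \to A$ which must agree because $m$ is monic, forcing $x p = y p$ after postcomposition with $f$. A further VK argument identifies $p$ as an epimorphism (it arises as one leg of a pullback of a pushout square that is again a pushout), whence $x = y$.

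The main obstacle I expect is the combinatorial one of choosing the cubes correctly for each VK invocation, so that the designated back faces are manifestly pullbacks and the designated top face is manifestly a pushout, allowing VK to deliver the desired pullback conclusions on the remaining faces. Once these cubes are in place, the VK axiom does essentially all the work, and the remainder of the argument is a routine chase using only the universal properties of pullbacks and the monicness of $m$.
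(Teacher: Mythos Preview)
The paper does not actually prove this lemma; it is merely stated with a citation to \cite{adhesive_categories}, so there is no in-paper argument to compare against. Your overall strategy of reducing to the van Kampen axiom is the right one and is exactly what Lack--Soboci\'nski do, but your execution has a genuine gap.

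The problem is the claim that $p : Y \to X$ is an epimorphism. You justify this by noting that $p$ is one leg of the pushout obtained by pulling back the original square along $h$, but pushout coprojections are not epimorphisms in general, even in adhesive categories. In $\mathbf{Set}$, take $A = \emptyset$ with $B, C$ nonempty; then $D = B + C$, the map $g$ is the coprojection $B \hookrightarrow B + C$, and pulling $g$ back along any $h : X \to D$ factoring through the $C$-summand yields $Y = \emptyset$, so $p : \emptyset \to X$ is not epi. Hence you cannot cancel $p$ to deduce $x = y$ from $xp = yp$, and the argument stalls.

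The standard proof avoids this detour entirely. Build a single cube: the bottom is the given pushout; the top has corners $A, A, C, C$ with edges $\mathrm{id}_A$, $f$, $f$, $\mathrm{id}_C$ (a trivial pushout); the vertical maps are $\mathrm{id}_A$, $m$, $\mathrm{id}_C$, $n$. The back face over $m$ is a pullback precisely because $m$ is mono, and the back face over $f$ is trivially a pullback. Since the top is a pushout, VK forces both front faces to be pullbacks. One front face is the original square (yielding ``pushouts along monos are pullbacks''); the other is the square with edges $\mathrm{id}_C, \mathrm{id}_C, n, n$, which is a pullback exactly when $n$ is mono. Both conclusions drop out of one VK application, with no epi argument needed.

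A minor additional point: your suggested top face, ``an identity pushout on $A$'', would make one back face the square with edges $\mathrm{id}_A, \mathrm{id}_A, f, f$, which is a pullback only when $f$ is mono---not assumed. The top must be as above, with $C$ in two corners.
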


Therefore, in any adhesive category, given a span of monomorphisms, the pushout
is guaranteed to exist and will consist of four monomorphisms. All of the
pushouts that we will consider in the main body of the thesis will be over
monomorphisms and this lemma will help us to establish that.

Next, we will introduce the notion of pushout complement. Intuitively, it
can be thought of as an operation where we subtract a part $L$ from
another object $H$ while preserving a third part $I$ which is shared by
both objects.

\begin{definition}[Pushout complement \cite{adhesive_categories}]
Let $l: I \to L$ and $m: L \to H$, be two morphisms in an arbitrary category.
The \emph{pushout complement} of the pair $(l,m)$ consists of morphisms
$k : I \to K$ and $s: K \to H$ for which the resulting square:
\cstikz{adhesive-pushout-shit.tikz}
commutes
and is a pushout. 
\end{definition}

Unlike the pushout, a pushout complement is not a universal construction
in category theory. A pushout complement is not necessarily
unique and in certain categories there can be two or more
\replaced{non-isomorphic}{non-isomorhpic}
objects that form a pushout complement. However, adhesive categories
ensure that pushout complements are indeed unique (up to isomorphism)
when they are computed over monomorphisms. We will only consider pushout
complements and pushouts where all morphisms are monomorphisms and
therefore the following lemma is of great importance.

\begin{lemma}[\cite{adhesive_categories}]\label{lem:unique_pushout_complement}
In any adhesive category, pushout complements of monos (if they exist) are
unique up to isomorphism.
\end{lemma}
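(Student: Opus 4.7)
The plan is to exploit the van Kampen property built into Definition~\ref{def:adhesive_category}(3), using its crucial (unstated in the excerpt but standard) consequence that in any adhesive category every pushout square along a monomorphism is simultaneously a pullback square. This self-duality of such squares is exactly what will force uniqueness: both complements will be characterised by the same universal property with respect to the data $(l, m)$.

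Concretely, suppose $(K_1, k_1, s_1)$ and $(K_2, k_2, s_2)$ are both pushout complements of the pair $(l, m)$. First I would apply Lemma~\ref{lem:mono_stable} to the pushout of the mono $l$ along $k_i$, concluding that each $s_i$ is a monomorphism; a symmetric argument (or the assumption that $m$ is mono, which is the setting in which we need the result) gives that each $k_i$ is mono. Second, I would invoke the pullback characterisation above to observe that each of the two pushout squares is also a pullback, so that $I$ is the pullback of $m$ along $s_i$ for both $i=1,2$. Third, I would form the pullback $P$ of $s_1$ and $s_2$ over $H$, with monic projections $p_1: P\to K_1$ and $p_2: P\to K_2$; the morphisms $k_1, k_2$ induce a canonical $I\to P$ compatible with $k_1 = p_1\circ (I\to P)$ and $k_2 = p_2\circ (I\to P)$. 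Using the universal property of the pushout defining $K_1$ together with the pullback characterisation of $K_2$, I would construct a morphism $K_1\to K_2$ in the opposite direction to $p_2^{-1}\circ p_1$ (and symmetrically $K_2\to K_1$), and verify by uniqueness that these are mutually inverse. Equivalently, one can package the argument as a cube: place the two pushout squares on opposite faces, fill the vertical edges with identities on $I, L, H$ and the candidate comparison map on $K_1\to K_2$, and apply the van Kampen axiom directly to force that comparison to be an isomorphism.

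The main obstacle I expect is the third step. Van Kampen arguments are delicate to set up cleanly: one has to choose the correct cube, keep careful track of which faces are pushouts versus pullbacks, and check that enough of the edges are monomorphisms to apply Lemma~\ref{lem:mono_stable} where needed. In practice the cleanest route is likely to avoid invoking van Kampen directly on a cube and instead use the pullback-pushout duality from step two to repeatedly recognise the relevant squares as pushouts, so that the desired isomorphism drops out from the uniqueness of pushouts applied to the common span $K_1 \xleftarrow{k_1} I \xrightarrow{k_2} K_2$.
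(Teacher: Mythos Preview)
The paper does not actually prove this lemma; it is stated with a citation to Lack and Soboci\'nski \cite{adhesive_categories} and no proof is given in the thesis itself. So there is no in-paper argument to compare against.

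That said, your outline is essentially the standard argument from the cited source. The key ingredient you identify---that in an adhesive category a pushout along a mono is also a pullback---is correct and is precisely what Lack and Soboci\'nski use. Your steps are in the right order: first get that the $s_i$ are mono (Lemma~\ref{lem:mono_stable}), then recognise both squares as pullbacks, then pull back $s_1$ against $s_2$ and use the van Kampen cube (or equivalently the pushout/pullback interplay) to build the comparison isomorphism. Your own caveat about step three is well placed: the cube argument is fiddly, and in the original paper the authors handle it by showing that the pullback $P$ of $s_1, s_2$ fits into a van Kampen cube whose bottom face forces $P \cong K_1$ (and symmetrically $P \cong K_2$). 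Your alternative suggestion---avoiding the cube and instead recognising the span $K_1 \leftarrow I \to K_2$ as having both $H$'s as pushout---does not quite work as stated, since the pushout of $k_1, k_2$ need not be $H$; you really do need the pullback $P$ and the van Kampen axiom to close the argument.
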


The significance of this lemma is that it establishes the uniqueness of DPO
rewriting in adhesive categories. As already mentioned, a DPO rewrite consists
of two operations -- first, the pushout complement is computed and then the
pushout of the newly established span. The pushout is always unique and the
above lemma tells us that the entire DPO rewrite is then unique as well.

The next lemma will be useful in showing that all morphisms in our DPO
diagrams are mono.

\begin{lemma}\label{lem:pushout_complement_monos}
In any adhesive category, if $H \xleftarrow{m} L \xleftarrow{l} I$ are both
monomorphisms and their pushout complement exists:
\cstikz{pushout_complement_simple.tikz}
then $s$ and $k$ are also monomorphisms.
\end{lemma}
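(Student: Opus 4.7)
The plan is to deduce both claims directly from Lemma~\ref{lem:mono_stable} (monomorphisms are stable under pushout) together with the elementary categorical fact that if a composite $s \circ k$ is a monomorphism, then $k$ itself is a monomorphism. Throughout, I will use the commutativity of the given pushout square, which gives the equality $m \circ l = s \circ k$.

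To show that $s$ is a monomorphism, I will apply Lemma~\ref{lem:mono_stable} directly to the pushout square. The square realises the pushout of the span $L \xleftarrow{l} I \xrightarrow{k} K$, producing the cospan $L \xrightarrow{m} H \xleftarrow{s} K$. Since $l$ is mono by hypothesis, its pushout along $k$, namely $s$, is a mono as well. This step is essentially one line and requires no further work.

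To show that $k$ is a monomorphism, I will argue via commutativity: both $l$ and $m$ are mono by hypothesis, so their composite $m \circ l \colon I \to H$ is a mono, and commutativity of the pushout square rewrites this as $s \circ k$. Hence $s \circ k$ is mono, and from the standard fact that a morphism whose post-composite is mono is itself mono, I conclude that $k$ is mono. I do not expect any real obstacle in either direction: both arguments reduce to an immediate invocation of Lemma~\ref{lem:mono_stable} and a routine observation about compositions of monos. The only subtlety worth flagging is that the two halves of the proof are not quite symmetric — the stability lemma handles $s$ directly, whereas $k$ is recovered only after using the commuting square — but this asymmetry is harmless and avoids invoking heavier machinery such as the pullback property of pushouts along monos in an adhesive category.
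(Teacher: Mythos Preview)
Your proposal is correct and follows essentially the same approach as the paper: apply Lemma~\ref{lem:mono_stable} to get that $s$ is mono (as the pushout of the mono $l$), then use commutativity $m \circ l = s \circ k$ together with the fact that a composite of monos is mono and that any left factor of a mono is mono to conclude $k$ is mono. The paper's proof is identical in structure and content.
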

\begin{proof}
In any adhesive category, monomorphisms are stable under pushout
(Lemma~\ref{lem:mono_stable}). Therefore, $s$ is a monomorphism.

The square commutes and therefore we know $m \circ l = s \circ k$. Both
$m$ and $l$ are
\replaced{monomorphisms}{monomorhpisms}
and therefore $m \circ l$ and thus
$s \circ k$ are also monos. The latter then implies $k$ is a mono as well.
\end{proof}

Adhesive categories ensure that pushout complements are unique, but we still
don't know under what conditions a pushout complement exists. However, the
answer is not provided by the framework of adhesive categories.  Instead, the
conditions need to be determined in every model category in which we wish to do
rewriting separately. The following definition formalizes this idea.

\begin{definition}[Matching conditions]
In any category $\mathbf{C}$, given a rewrite rule $t := L \xleftarrow{l} I
\xrightarrow{r} R,$ a morphism $m: L \to H$ satisfies the \emph{matching
conditions} with respect to $t$ precisely when there exists a pushout
complement of $(l,m)$. A \emph{matching} is a monomorphism $m: L \to H$
which satisfies the matching conditions.
\end{definition}

The matching conditions are also called the \emph{gluing} conditions in
the literature. 
Note, that in the above definition a matching has to be a monomorphism. In the
literature, this is not always the case, but in this thesis we will only
consider injective matchings, so we build this requirement into the definition
in order to simplify the rest of the presentation.

The matching conditions depend on the category $\mathbf{C}$ in which we wish to
compute the pushout complement. For example, the matching conditions in the
category $\mathbf{Set}$ are trivial -- they are satisfied by any
mono. The matching conditions in the category $\mathbf{MultiGraph}$ are
given below.

\begin{example}[\cite{dpo_book}]\label{ex:graph_matching}
In the category $\mathbf{MultiGraph}$, given a rewrite rule $L \xleftarrow{l} I
\xrightarrow{r} R$ and a monomorphism $m: L \to H$, the pushout complement
of $(l,m)$ exists iff the following condition is satisfied:
\begin{description}
\item[No dangling edges:] No edge $e \in E_H - m(E_L)$ is incident to any vertex
$v \in m(V_L - l(V_I))$
\end{description}
\end{example}

The matching conditions for graphs have been well-known for decades and
clearly they are easy to decide by a computer. We shall see that when we
consider rewriting edNCE grammars, the matching conditions will be
straightforward generalisations of the above matching conditions for graphs.

The final and most central notion that we will introduce in this section
is that of a DPO rewrite.

\begin{definition}[DPO Rewrite \cite{adhesive_categories}]
In any category $\mathbf{C}$, given a rewrite rule $t := L \xleftarrow{l} I
\xrightarrow{r} R,$ and a monomorphism $m: L \to H,$ we say $H$ \emph{rewrites}
to $M$ using $t$ over $m$, and denote it with
$H \leadsto_{t,m} M$
if the following diagram exists: 
\cstikz{dpo_adhesive.tikz}
and both squares are pushouts.
\end{definition}

We will also refer to DPO rewrites simply as \emph{rewrites} because this
will be the only kind of rewriting that we will consider. A rewrite consists
of applying a rewrite rule at a specified mono to a given object. Then,
a pushout complement (left square) is computed (if it exists) which may or may
not be unique. In an adhesive category, we already know that it is unique.
Following that, the second pushout (right square) is computed which may or may
not exist, however in an adhesive category it is guaranteed to exist. This can
be made precise via the following theorem.

\begin{theorem}\label{thm:adhesive_rewrite}
In any adhesive category $\mathbf{C}$, given a rewrite rule $t:= L
\xleftarrow{l} I \xrightarrow{r} R$ and a matching $m:L \to H$,
then $H \leadsto_{t,m} M,$ where $M$
is uniquely determined (up to isomorphism) by the following DPO diagram:
\cstikz{dpo_adhesive.tikz}
Moreover, all morphisms in the above diagram are mono.
\end{theorem}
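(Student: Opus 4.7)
The plan is to assemble the theorem directly from the lemmas and definitions developed earlier in the section, with essentially no new categorical work required. The statement packages together (a) existence of the left pushout square, (b) existence of the right pushout square, (c) uniqueness of both up to isomorphism, and (d) the fact that all four morphisms of the resulting DPO diagram are monomorphisms. I will address these in that order.

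First, since $m$ is a \emph{matching}, by Definition of matching conditions the pushout complement of $(l,m)$ exists, i.e.\ there exist $k : I \to K$ and $s : K \to H$ making the left square commute and a pushout. This gives the left square of the DPO diagram. Uniqueness of this square up to isomorphism is immediate from Lemma~\ref{lem:unique_pushout_complement}, since $\mathbf{C}$ is adhesive and both $l$ and $m$ are monomorphisms (the former by the definition of a rewrite rule, the latter by the definition of a matching). Moreover, Lemma~\ref{lem:pushout_complement_monos} applies verbatim and tells us that $k$ and $s$ are also monomorphisms.

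Next, to construct the right square I would take the span $R \xleftarrow{r} I \xrightarrow{k} K$. Both legs are monomorphisms: $r$ by the definition of a rewrite rule, and $k$ by the previous step. Since $\mathbf{C}$ is adhesive, clause~1 of Definition~\ref{def:adhesive_category} guarantees that the pushout of this span exists, producing an object $M$ together with morphisms from $R$ and $K$ into $M$ making the right square commute and be a pushout. This pushout is unique up to isomorphism by the universal property, so $M$ is uniquely determined. Applying Lemma~\ref{lem:mono_stable} (monos are stable under pushout) to this second square shows that both newly produced morphisms into $M$ are also monomorphisms, completing the claim that every morphism in the DPO diagram is mono.

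There is no real obstacle here: the theorem is essentially a bookkeeping statement that records how the earlier lemmas combine. The only subtlety worth being explicit about is verifying the hypotheses of each lemma at each step (in particular that $k$ is mono before invoking the right pushout, so that Lemma~\ref{lem:mono_stable} applies), but this is exactly what Lemma~\ref{lem:pushout_complement_monos} is designed to supply. Once this is laid out, the proof can be written in a handful of lines.
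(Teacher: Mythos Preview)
Your proposal is correct and follows essentially the same approach as the paper: existence of the left square from the definition of matching, uniqueness from Lemma~\ref{lem:unique_pushout_complement}, monicity of $k$ and $s$ from Lemma~\ref{lem:pushout_complement_monos}, existence of the right square from clause~1 of Definition~\ref{def:adhesive_category}, and monicity of the remaining two morphisms from Lemma~\ref{lem:mono_stable}. The paper's proof is precisely this chain of citations, written in four sentences.
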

\begin{proof}
$m$ satisfies the matching conditions therefore a pushout complement
of $(l,m)$ exists. From Lemma~\ref{lem:unique_pushout_complement} the
pushout complement is unique (up to isomorphism) and from
Lemma~\ref{lem:pushout_complement_monos} it follows that $k$ and $s$ are mono.
Then, by the first defining property of adhesive categories, the right pushout
square exists and therefore $M$ is uniquely determined (up to isomorphism).
Finally, Lemma~\ref{lem:mono_stable} shows that $f$ and $g$ are
also mono.
\end{proof}

Therefore, in any adhesive category, a rewrite rule and a matching
ensure that a rewrite can be performed and it is unique.
Next, we will consider two concrete examples of DPO rewriting in adhesive
categories -- one in the category of sets and one in the category of
multigraphs.

\subsection{Example: DPO rewriting in $\mathbf{Set}$}

As a concrete example, let's see how DPO rewriting works in the category of
sets and total functions $\mathbf{Set}$. In full generality, in $\mathbf{Set},$
the pushout of the two monos $l,k$ in the square below:
\cstikz{adhesive-pushout-shit.tikz}
is given (up to iso) by the set $H := L \sqcup K/\sim,$ where $(- \sqcup -)$ is
the disjoint union, $\sim$ is the finest equivalence relation which
relates $j_1\circ l(i) \sim j_2\circ k(i)$ for every $i \in I$, where $j_1,
j_2$ are the inclusions of the disjoint union. Therefore, $H$ is the quotient
of the disjoint union of $L$ and $K$ under the equivalence relation $\sim$.
Then, $m(x) = [j_1(x)]_{\sim}$ and $s(x) = [j_2(x)]_{\sim}.$

Given a pair of monos $H \xleftarrow m L \xleftarrow l I,$ then the pushout
complement of $(l,m)$ always exists and is given (up to iso) by the set
$K := H - m(L-l(I)),$ the subset inclusion $s: K \hookrightarrow H$ and the
function $k: I \to K$, given by $k := l$.

Let's look at a couple of specific examples.
Consider the following rewrite
rule:
\cstikz{set_rewrite_rule_example.tikz}
where both morphisms are just the subset inclusions. The elements of the
interface set we will call \emph{interface elements} or \emph{boundary
elements}. The elements of the LHS set which are not in the image of the
interface are called \emph{interior} elements.

In $\mathbf{Set}$, the matching conditions are trivial -- they are always
satisfied by any monomorphism. Therefore, this rewrite rule can be applied to
any set with cardinality three or higher. What the rule does is to preserve two
elements of the target set $(a,b)$, delete one other element $(c)$ and then add
two additional elements to it $(f,g)$. An application of the rule to the target
set $\{a,b,c,d,e\}$ is shown below:
\cstikz{set_rewrite_dpo.tikz}
where again the morphisms are the subset inclusions. Computing the pushout
complement can be done by removing the interior elements from the target
set. In this case, this amounts to removing the element $c$. Once that is
done, the right pushout is computed, which in this case is just the union
of the two sets. 

Of course, the morphisms do not have to be subset inclusions because DPO
rewriting is defined up to isomorphism. So, let's consider another case --
we will apply the same rewrite rule to the target set $\{a',b',c',f,g\}.$
The result is illustrated by the following DPO diagram:
\cstikz{set_rewrite_dpo2.tikz}
where the horizontal morphisms are subset inclusions and the vertical morphisms
all map $x \mapsto x'.$

In the rest of the thesis, we will be doing DPO rewriting on graphs and edNCE
grammars. As we have already pointed out, DPO rewriting on edNCE grammars is a
straightforward generalisation from DPO rewriting on graphs. The latter is, in
turn, a generalisation of DPO rewriting on sets. Indeed, sets can be seen as
totally disconnected graphs (graphs with no edges) and vice versa. In this
sense DPO rewriting on graphs fully generalises DPO rewriting on sets.
Nevertheless, DPO rewriting on sets acts in the same way as DPO rewriting
does (component-wise) on the set of vertices and set of edges of a graph
and also in the same way as DPO rewriting on the different components
of edNCE grammars as we shall see. Therefore, understanding DPO
rewriting on sets is crucial for the rest of the work as it is the most
fundamental mechanism for rewrites.
 
\subsection{Example: DPO rewriting in $\mathbf{MultiGraph}$}
\label{subsec:dpo-graph}

Next, let's consider DPO rewriting in the category of multigraphs.
In $\mathbf{MultiGraph}$, the pushout and
the pushout complement
are computed component-wise in the same way as in $\mathbf{Set}.$
The pushout of the two monos $g,k$ in the square below:
\cstikz{adhesive-pushout-shit2.tikz}
is given by the multigraph $H$ with components (vertices and edges) $V_H$ and
$E_H$
given by the pushouts in $\mathbf{Set}:$
\cstikz{sets-pushout-fuck.tikz}
and assigning functions (source and target) $s$ and $t$ given by:
\[
s_H(e) =
\begin{cases}
m_V\circ s_L(e') & \text{ if } e=m_E(e') \text{ for some } e' \in E_L\\
f_V\circ s_K(e') & \text{ if } e=f_E(e') \text{ for some } e' \in E_K
\end{cases}
\]
\[
t_H(e) =
\begin{cases}
m_V\circ t_L(e') & \text{ if } e=m_E(e') \text{ for some } e' \in E_L\\
f_V\circ t_K(e') & \text{ if } e=f_E(e') \text{ for some } e' \in E_K
\end{cases}
\]
and with labelling functions $l_V^H, l_E^H$ given by:
\[
l_V^H(v) =
\begin{cases}
l_V^L(v')& \text{ if } v=m_V(v') \text{ for some } v' \in V_L\\
l_V^K(v')& \text{ if } v=f_V(v') \text{ for some } v' \in V_K
\end{cases}
\]
\[
l_E^H(e) =
\begin{cases}
l_E^L(e')& \text{ if } e=m_E(e') \text{ for some } e' \in E_L\\
l_E^K(e')& \text{ if } e=f_E(e') \text{ for some } e' \in E_K
\end{cases}
\]
Given a pair of monos $H \xleftarrow m L \xleftarrow g I,$ the pushout
complement of $(g,m)$, exists iff $m$ satisfies the no dangling edges
condition (cf. Example~\ref{ex:graph_matching}). If that is the case, the pushout
complement is given (up to iso) by the full subgraph $K$ of $H$ with components
$X_K := X_H - m_X(X_L-g_X(X_I)),$ for $X \in \{V,E\},$ the subgraph inclusion
$f: K \hookrightarrow H$ and the graph homomorphism $k: I \to K$, where $k
= g$ set-theoretically.

Let's look at a couple of specific examples.
Consider
the following rewrite rule:
\cstikz{graph_rewrite_rule.tikz}
The effect of this rewrite rule is to remove one vertex $(v_3)$ and the
only other incident edge to it from a target multigraph and also to reverse
the direction of one other edge, provided a suitable matching can be
found. For example, we may apply this rule to the target multigraph:
\cstikz{graph_rewrite_target.tikz}
If we choose the matching which maps $v_i \mapsto v_i$ from the different
multigraphs, then we get:
\cstikz{graph_rewrite_dpo.tikz}
where again all
\replaced{morphisms}{morhpisms}
in the above diagram map $v_i \mapsto v_i$ and
the mapping of the edges is then uniquely determined.
However, the matching conditions in
$\mathbf{MultiGraph}$ are not trivial, as already pointed out.
In the example above they are satisfied, so
the above rewrite rule
may be applied to the specified target multigraph and then the result is the
multigraph in the bottom right corner of the diagram, which again is only
defined up to isomorphism.

In $\mathbf{Set}$ we cannot provide an example where a mono may fail
to produce a rewrite. However, we can do so in $\mathbf{MultiGraph}$. For
example,
consider the same rewrite rule, but this time let's take the target multigraph
$H$ given by:
\cstikz{graph_rewrite_target2.tikz}
where the matching maps $v_i \mapsto v_i$. Then, consider the following
diagram:
\cstikz{graph_rewrite_fail_dpo.tikz}
The no dangling edges condition is violated, because the target multigraph has
an edge outside of the image of $m$ which is incident to an
interior vertex $(v_3).$ Therefore, there exists no multigraph $K$ such that
the above square is a pushout. To understand why, recognize that
the offending edge must be in the image of $s: K \to H$ and therefore
both vertices $v_3$ and $v_4$ must be in the image of $s$ as well. However,
then vertex $v_3$ in $H$ must be in the images of both $m$ and $s$ while
it has no pre-image in $I$ and this clearly violates the pushout construction
on vertices.

DPO rewriting on multigraphs, and in particular, string graphs, is how we
formalize equational reasoning on string diagrams. String graphs represent
string diagrams, rewrite rules on string graphs represent rewrite rules between
string diagrams and a DPO rewrite on a string graph represents a string
diagram rewrite. By making use of (partially) adhesive categories, these
concepts can be generalised from graphs to edNCE grammars and
in this way we may model equational reasoning on entire families of
string diagrams and not just concrete diagrams.

\section{Partially Adhesive Categories}\label{sec:partial_adhesive}

In this section we will describe Partially Adhesive Categories. They were first
introduced by Kissinger and Duncan in \cite{partial_adhesive} and later in
\cite{kissinger_dphil} where they were slightly modified. Intuitively, a
partially adhesive category $\mathbf{C}$ is a category which embeds fully and
faithfully into an adhesive category $\mathbf{D}$, such that DPO rewriting in
$\mathbf{C}$ behaves in the same way as it does in $\mathbf{D}$, provided
that certain additional conditions are satisfied.

\added{As we have seen in the previous section, multigraphs form an adhesive
category. However, graphs, which are simply multigraphs without parallel edges
or self-loops, do not form an adhesive category, but a partially adhesive one.
Of course, DPO rewriting in the category of graphs behaves in exactly the same
way as it does in the category of multigraphs, provided that we avoid pushouts
which may establish parallel edges. This example is discussed in detail in
Subsection~\ref{subsec:simple_graphs}.}

In an adhesive category,
pushouts always exist over a span of monomorphisms. However, in partially
adhesive categories, this may not necessarily be the case. Instead, pushouts
over monomorphisms may exist provided that the span satisfies further
requirements. Similarly, the matching conditions in a partially adhesive
category
may be stricter compared to the matching conditions in the ambient category.

In the main chapters of this thesis, we will mostly be performing DPO rewriting
on categories which are not adhesive, but partially adhesive. In particular,
the graph model assumed by edNCE grammars does not form an adhesive category,
but a partially adhesive one. Moreover, edNCE grammars, which can be seen
as a generalisation of these graphs, also form a partially adhesive category
and not an adhesive one. Thus, partially adhesive categories provide
us with a nice framework where we can describe how DPO rewriting works and
how it relates to DPO rewriting in adhesive categories. Then, if we wish to use
DPO rewriting in our model categories, all we have to do in addition is to
identify the matching conditions and the conditions under which pushouts along
monomorphisms exist. Moreover, under these conditions, DPO rewriting
behaves in the same way as it does in the ambient adhesive category and we
can therefore make use of the already established results related to
adhesive categories.

We begin by introducing the formal definition for a partially adhesive category
and afterwards we shall introduce the rest of the notions which are needed to
understand rewrites within it. We will be following the presentation in
\cite{kissinger_dphil}.

\begin{definition}[Partially Adhesive Category]
A \emph{partially adhesive category} is a category $\mathbf{C}$ for which there
exists a full and faithful functor $\mathcal{S}: \mathbf C \to \mathbf D$,
where $\mathbf D$ is an adhesive category and $\mathcal{S}$
preserves monomorphisms.
\end{definition}

\begin{remark}
The above definition is slightly more general than the one presented
in \cite{kissinger_dphil} because we do not require the category
$\mathbf{C}$ to be a full subcategory of $\mathbf{D}$. The rest of the
definitions and proofs presented there are fully compatible with this
slight generalisation.
\end{remark}

The rest of the definitions in this section all assume that we are given
a partially adhesive category $\mathbf{C}$ and $\mathcal S: \mathbf{C}
\to \mathbf D$ which is the embedding functor into the adhesive category
$\mathbf D$.

\begin{definition}[$\mathcal{S}$-span and $\mathcal{S}$-pushout
\cite{kissinger_dphil}]
A span $L \xleftarrow{l} I \xrightarrow r R$ in $\mathbf C$ is called an
$\mathcal S$-\emph{span} if it has a pushout and that pushout is preserved by
$\mathcal S$. Such pushouts are called $\mathcal S$-\emph{pushouts}.
\end{definition}

So $\mathcal S$-pushouts are exactly those pushouts in $\mathbf C$ which are
preserved by $\mathcal S$. When doing DPO rewriting in partially adhesive
categories, we will limit ourselves to only these kinds of spans and pushouts.
We shall see that the specific partially adhesive categories we are interested
in
(those of graphs and edNCE grammars) have simple and easily decidable
conditions which characterise their $\mathcal S$-spans (and therefore $\mathcal
S$-pushouts) and the rest of the $\mathcal S$-diagrams which we will introduce
in this section.

\begin{definition}[$\mathcal{S}$-pushout complement \cite{kissinger_dphil}]
An $\mathcal S$-\emph{pushout complement} for a pair of arrows $(l,m)$ in
$\mathbf C$ is a pushout complement, where the following diagram is
an $\mathcal S$-pushout:
\cstikz{s_pushout.tikz}
\end{definition}

Similar in spirit to the previous definition, $\mathcal S$-pushout complements
are those pushout complements which are preserved by $\mathcal S$. Pushout
complements in $\mathbf C$ are not necessarily unique, but if we restrict
ourselves to $\mathcal S$-pushout complements, then they are unique as the
next lemma tells us.

\begin{lemma}\label{lem:partial_adhesive_unique}
If a pair of arrows $(l,m)$ in $\mathbf C$, where $l$ and $m$ are mono, have an
$\mathcal S$-pushout complement, then it is unique up to isomorphism.
Moreover, all morphisms in the pushout square are mono.
\end{lemma}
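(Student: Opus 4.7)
The plan is to transport the entire $\mathcal S$-pushout complement into the ambient adhesive category $\mathbf D$ via the functor $\mathcal S$, apply the adhesive-category results already established (Lemma~\ref{lem:unique_pushout_complement} and Lemma~\ref{lem:pushout_complement_monos}), and then pull the conclusions back to $\mathbf C$ using the fact that $\mathcal S$ is full and faithful and preserves monomorphisms.

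Concretely, I would first note that by definition of an $\mathcal S$-pushout complement, applying $\mathcal S$ to the square yields an honest pushout square in $\mathbf D$. Since $\mathcal S$ preserves monos, $\mathcal S(l)$ and $\mathcal S(m)$ are monos in $\mathbf D$, so Lemma~\ref{lem:pushout_complement_monos} applied in $\mathbf D$ gives that $\mathcal S(k)$ and $\mathcal S(s)$ are also monos. A fully faithful functor reflects monomorphisms — if $k \circ g_1 = k \circ g_2$ in $\mathbf C$, then $\mathcal S(k) \circ \mathcal S(g_1) = \mathcal S(k) \circ \mathcal S(g_2)$ gives $\mathcal S(g_1) = \mathcal S(g_2)$ by monicity, and faithfulness of $\mathcal S$ then yields $g_1 = g_2$ — so $k$ and $s$ are monos in $\mathbf C$ as required. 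For uniqueness, suppose $I \xrightarrow{k_i} K_i \xrightarrow{s_i} H$ ($i=1,2$) are two $\mathcal S$-pushout complements of $(l,m)$. Their images under $\mathcal S$ are then two pushout complements of the mono pair $(\mathcal S(l), \mathcal S(m))$ in $\mathbf D$, so Lemma~\ref{lem:unique_pushout_complement} supplies a (unique) iso $\phi : \mathcal S(K_1) \to \mathcal S(K_2)$ commuting with the legs $\mathcal S(k_i)$ and $\mathcal S(s_i)$. Fullness lifts $\phi$ and its inverse to arrows $\psi, \psi'$ in $\mathbf C$, and faithfulness reflects both the identity equations $\psi' \circ \psi = \mathrm{id}$, $\psi \circ \psi' = \mathrm{id}$ and the leg-compatibility equations from $\mathbf D$ back to $\mathbf C$, giving the desired isomorphism $K_1 \cong K_2$.

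The main thing to be careful about is the final reflection step: the argument rests entirely on the principle that a fully faithful functor reflects monomorphisms, isomorphisms, and equality of parallel arrows. Each is a short categorical observation, but they must be invoked explicitly since, unlike in the adhesive setting, none of them is built into the definition of a partially adhesive category. Aside from this bookkeeping, the proof is essentially routine transport along $\mathcal S$ followed by application of the lemmas already proved for adhesive categories.
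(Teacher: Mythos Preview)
Your proposal is correct and follows essentially the same route as the paper: transport the $\mathcal S$-pushout complement(s) to $\mathbf D$, invoke Lemmas~\ref{lem:unique_pushout_complement} and~\ref{lem:pushout_complement_monos} there, and reflect the conclusions back via fullness and faithfulness of $\mathcal S$. The only cosmetic difference is that the paper treats uniqueness first and the mono claim second, whereas you reverse the order.
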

\begin{proof}
Assume the pair of arrows have two $\mathcal S$-pushout complements given by
arrows $(k,s)$ and $(k',s')$ as in the diagrams below:
\cstikz{s_pushouts_proof.tikz}
Then, by definition, the following two squares are pushouts in $\mathbf D$:
\cstikz{s_pushouts_proof2.tikz}
and moreover both squares are a pushout complement for the pair of
morphisms $(\mathcal S(l),\mathcal S(m)).$ $\mathcal S$ preserves monos and
therefore, by Lemma~\ref{lem:unique_pushout_complement} there exists an
isomorphism $i: \mathcal S(K)\to \mathcal S(K')$ making the following diagram
commute:
\cstikz{s_pushouts_proof3.tikz}
$\mathcal S$ is full and therefore, there exists an isomorphism
$i': K \to K'$ such that $\mathcal S(i') = i$. Then, by making use of the
fact that $\mathcal S$ is also faithful, we get that the following
diagram commutes:
\cstikz{s_pushouts_proof4.tikz}
which completes the proof for the first part of the lemma. The second
part follows from Lemma~\ref{lem:pushout_complement_monos} and the fact
that full and faithful functors reflect monomorphisms.
\end{proof}

This lemma illustrates nicely how we may use important properties of
adhesive categories when we suitably restrict the spans and the matchings
in our partially adhesive category. The additional restrictions which
may need to be imposed on the matching morphisms are formalised
(but not specified) in the next definition.

\begin{definition}[$\mathcal{S}$-matching \cite{kissinger_dphil}]
For a rewrite rule $L \xleftarrow l I \xrightarrow r R$ a
monomorphism $m: L \to H$ is called an $\mathcal S$-matching if
$(l,m)$ has an $\mathcal S$-pushout complement.
\end{definition}

Every $\mathcal S$-matching $m$ in $\mathbf C$ satisfies the matching
conditions and also $\mathcal S(m)$ is a matching in $\mathbf{D}.$
However, $m$ may need to
satisfy more conditions compared to those required by the matching conditions
in
$\mathbf C$, because we are interested in specific kinds of pushout
complements.

The next definition combines all of the notions introduced so far in
order to formalise a suitable notion of a rewrite in a partially adhesive
category.

\begin{definition}[$\mathcal{S}$-rewrite \cite{kissinger_dphil}]
\label{def:s-rewrite}
Let $t:= L \xleftarrow l I \xrightarrow r R$ be a rewrite rule and
$m: L \to H$ be an $\mathcal S$-matching. Let $K$ be the $\mathcal S$-pushout
complement of $(l,m)$. Then, if the right pushout square in the following
diagram exists and is an $\mathcal S$-pushout:
\cstikz{dpo_adhesive.tikz}
we say that $M$ is the $\mathcal S$-\emph{rewrite} of $t$ at $m$. We
will use the same notation as the one for a (DPO) rewrite
and write this as $H \leadsto_{t,m} M$ and it will be clear from context
which kind of rewrite we are referring to.
\end{definition}

Clearly, every $\mathcal S$-rewrite in a partially adhesive category is also
a rewrite, but not vice versa. In this thesis, we are only interested in
$\mathcal S$-rewrites, because they behave in the same way as they do in
their ambient adhesive category and we can therefore make use of the
established results. This is made precise by the next theorem.

\begin{theorem}\label{thm:partial_adhesive_rewrite}
Given an $\mathcal S$-rewrite as in Definition~\ref{def:s-rewrite}, then $M$ is
uniquely determined (up to isomorphism). Moreover, all morphisms in the
DPO diagram are mono and also the DPO diagram is preserved by $\mathcal S$
in the ambient adhesive category $\mathbf{D}$, thus $\mathcal S(H)
\leadsto_{\mathcal S(t), \mathcal S(m)} \mathcal S(M).$
\end{theorem}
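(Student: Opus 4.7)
The plan is to assemble the conclusion by combining the two uniqueness lemmas from the partially adhesive setting with the corresponding results for the ambient adhesive category $\mathbf{D}$. I would work through the three claims (uniqueness of $M$, monicity of all morphisms, preservation by $\mathcal{S}$) in that order, essentially reducing each of them to a fact that has already been established earlier in the section.

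First, for uniqueness of $M$: since $m$ is an $\mathcal{S}$-matching, the pair $(l,m)$ has an $\mathcal{S}$-pushout complement, and Lemma~\ref{lem:partial_adhesive_unique} tells us that this complement $K$ (together with $k$ and $s$) is unique up to isomorphism. The right square is then an $\mathcal{S}$-pushout by hypothesis, and any pushout is unique up to isomorphism as a universal construction, so $M$ is uniquely determined (up to iso).

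Second, for monicity of all morphisms in the DPO diagram: Lemma~\ref{lem:partial_adhesive_unique} already gives that $k$ and $s$ are mono (along with $l$ and $m$, which were mono by assumption). For the right square I would transport the problem into $\mathbf{D}$ using the fact that $\mathcal{S}$ preserves the $\mathcal{S}$-pushout; there, by Lemma~\ref{lem:mono_stable}, monomorphisms are stable under pushout, so $\mathcal{S}(f)$ and $\mathcal{S}(g)$ are mono. Finally, I would invoke the standard fact that a full and faithful functor reflects monomorphisms (already used in the proof of Lemma~\ref{lem:partial_adhesive_unique}) to conclude that $f$ and $g$ are mono in $\mathbf{C}$.

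Third, for preservation by $\mathcal{S}$: both squares in the DPO diagram are $\mathcal{S}$-pushouts by construction (the left one because $(k,s)$ forms the $\mathcal{S}$-pushout complement, the right one by the defining clause of an $\mathcal{S}$-rewrite in Definition~\ref{def:s-rewrite}). Hence $\mathcal{S}$ sends both squares to pushouts in $\mathbf{D}$; since $\mathcal{S}$ also preserves monos, the image diagram is a bona fide DPO diagram in $\mathbf{D}$ with rewrite rule $\mathcal{S}(t)$ and matching $\mathcal{S}(m)$, which yields $\mathcal{S}(H) \leadsto_{\mathcal{S}(t),\mathcal{S}(m)} \mathcal{S}(M)$. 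I do not anticipate any serious obstacle here — everything has been prepared by Lemmas~\ref{lem:mono_stable}, \ref{lem:unique_pushout_complement}, \ref{lem:pushout_complement_monos} and \ref{lem:partial_adhesive_unique} — and the only mildly delicate point is remembering to push the monicity of $f$ and $g$ through $\mathcal{S}$ and back, which is handled by the fullness and faithfulness of the embedding.
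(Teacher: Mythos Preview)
Your proposal is correct and follows essentially the same approach as the paper: both argue that the two squares are $\mathcal{S}$-pushouts by definition (hence preserved by $\mathcal{S}$), obtain uniqueness from Lemma~\ref{lem:partial_adhesive_unique} together with the universal property of pushouts, and deduce monicity by passing to $\mathbf{D}$ and reflecting monos back along the full and faithful embedding. The only cosmetic difference is that the paper appeals to Theorem~\ref{thm:adhesive_rewrite} for the monicity in $\mathbf{D}$ in one step, whereas you unpack it via Lemma~\ref{lem:mono_stable}.
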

\begin{proof}
Both pushout squares exist by definition of $\mathcal S$-rewrite. Because
they are both $\mathcal S$-pushouts, then clearly the DPO diagram is
preserved by $\mathcal S$ and therefore $\mathcal S(H)
\leadsto_{\mathcal S(t), \mathcal S(m)} \mathcal S(M).$ From
Theorem~\ref{thm:adhesive_rewrite} and the fact that full and faithful
functors reflect monomorphisms it follows that all the morphisms are mono.
Uniqueness of the rewrite in $\mathbf C$ follows from Lemma~\ref{lem:partial_adhesive_unique} and the fact that pushouts are also unique up to iso.
\end{proof}

Therefore, by restricting ourselves to $\mathcal S$-rewrites in a partially
adhesive category, many of the useful properties of adhesive categories
may be reflected into the partially adhesive one. For example, because
$\mathcal S$ is fully faithful, it will reflect all colimits and therefore
pushouts in particular. In this thesis, the most important feature which
gets reflected is the uniqueness of the pushout complement. However, the
framework of partially adhesive categories may be useful in determining whether
other important properties of adhesive categories also carry over, such as
the Local Church-Rosser Theorem and the Concurrency Theorem, for suitable
$\mathcal S$-diagrams. We will leave this question open for future work.

\subsection{Example: DPO rewriting in $\mathbf{Graph}$}
\label{subsec:simple_graphs}

Now, let's consider a concrete example of a partially adhesive category.
We already know that $\mathbf{MultiGraph}$ is adhesive, however if we restrict
ourselves to graphs without self-loops or
parallel edges with the same label, then we get a category which is not
adhesive, but partially adhesive.
These will be the kinds of graphs which we will be using throughout most of the
thesis, so we will simply call them \emph{graphs} for brevity and in order to
distinguish them from multigraphs. They may be defined in a more compact way
compared to multigraphs.

\begin{definition}[Graph \cite{c-ednce}]\label{def:graph}
A \emph{graph} over an alphabet of vertex labels $\Sigma$ and an alphabet of
edge labels $\Gamma$ is a tuple $H = (V, E, \lambda)$, where $V$ is a finite
set of nodes, $E \subseteq \{(v, \gamma, w) | v, w \in V, v \not= w, \gamma \in
\Gamma\}$ is the set of edges and $\lambda : V \to \Sigma$ is the vertex
labelling function.
\end{definition}

This is the notion of graph that we will be using when working with
context-free graph grammars. This definition is more compact compared to the
definition of multigraphs, because edges are uniquely identified by their
source
\replaced{vertex}{vertx},
target vertex and edge label, so the entire edge data may be
described as a subset of $V \times \Gamma \times V$. Multigraphs are strictly
more general and this is clearly impossible to do for them. The requirement
that our graphs do not have self-loops or parallel edges with the same label is
crucial for establishing some of the properties for edNCE grammars which we
will be using in later chapters.

Next, we define the category of graphs and graph homomorphisms and show that
it is partially adhesive, where its ambient adhesive category is
$\mathbf{MultiGraph}$.

\begin{definition}[Graph homomorphism]
Given two graphs $H, K$ over vertex and edge label alphabets $\Sigma$ and
$\Gamma$ respectively,
a \emph{graph homomorphism}
from $H$ to $K$ is a function $f: V_H \to V_K$, such that
if $(v, \gamma, w) \in E_H$, then $(f(v), \gamma, f(w)) \in E_K$
and for all
$v \in V_H,$ we have $\lambda_K(f(v)) = \lambda_H(v)$.
\end{definition}

\begin{definition}[Category of Graphs]
Given a vertex label alphabet $\Sigma$ and an edge label alphabet $\Gamma$,
then we will denote with $\mathbf{Graph}_{\Sigma, \Gamma}$ the category of
graphs and graph homomorphisms over $\Sigma$ and $\Gamma$. If the labelling
alphabets are clear from the context, then we will simply refer to it as
$\mathbf{Graph}$.
\end{definition}

\begin{theorem}
$\mathbf{Graph}$ is a partially adhesive category with
embedding functor $\mathcal S: \mathbf{Graph} \to \mathbf{MultiGraph}.$
\end{theorem}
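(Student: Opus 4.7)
The plan is to exhibit an explicit functor $\mathcal{S}: \mathbf{Graph} \to \mathbf{MultiGraph}$ and verify that it is full, faithful, and preserves monomorphisms. On objects, I would send a graph $H = (V, E, \lambda)$ to the multigraph with vertex set $V$, edge set $E$ (viewed literally as its set of triples in $V \times \Gamma \times V$), source and target functions given by the first and third projections, vertex labelling $\lambda$, and edge labelling given by the second projection. On morphisms, a graph homomorphism $f: V_H \to V_K$ is sent to the multigraph morphism whose vertex component is $f$ and whose edge component is $(v,\gamma,w) \mapsto (f(v),\gamma,f(w))$; this is well-defined into $E_{\mathcal{S}(K)}$ precisely by the graph-homomorphism condition, respects source and target by construction, and respects both labelling functions trivially. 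Functoriality is then immediate from the componentwise definition.

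Next, I would dispatch faithfulness: if $\mathcal{S}(f) = \mathcal{S}(g)$ for two graph homomorphisms $f,g: H \to K$, then in particular their vertex components agree, and since a graph homomorphism is determined by its action on vertices, $f = g$. For preservation of monomorphisms, note that a morphism in $\mathbf{Graph}$ is mono iff its vertex map is injective, while a morphism in $\mathbf{MultiGraph}$ is mono iff both its vertex and edge components are injective; injectivity of the vertex map immediately makes the induced edge map $(v,\gamma,w) \mapsto (f(v),\gamma,f(w))$ injective as well, so the image under $\mathcal{S}$ of any mono is again a mono.

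The crux of the argument, and the only part requiring any work, is fullness. Given an arbitrary multigraph morphism $\phi = (\phi_V, \phi_E): \mathcal{S}(H) \to \mathcal{S}(K)$ and an edge $(v,\gamma,w) \in E_{\mathcal{S}(H)}$, the fact that $\phi$ respects source, target, and edge label forces $\phi_E(v,\gamma,w)$ to be an edge in $\mathcal{S}(K)$ with source $\phi_V(v)$, target $\phi_V(w)$, and label $\gamma$. Because $\mathcal{S}(K)$ has edges realised as triples $(x,\gamma,y) \in V_K \times \Gamma \times V_K$ with $x \neq y$, such a triple is uniquely determined, so $\phi_E(v,\gamma,w) = (\phi_V(v), \gamma, \phi_V(w))$. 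Hence $\phi_E$ is fully determined by $\phi_V$, and the existence of this image edge in $E_K$ for every edge $(v,\gamma,w) \in E_H$ tells us $\phi_V$ is a valid graph homomorphism $H \to K$ whose image under $\mathcal{S}$ is $\phi$. The main (and essentially only) obstacle is this fullness step, and what makes it work is exactly the defining restriction on graphs — no self-loops and no distinct parallel edges with the same label — since without it the triple data would not uniquely identify an edge of $\mathcal{S}(K)$ and $\phi_E$ could fail to be determined by $\phi_V$.
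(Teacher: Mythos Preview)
Your proof is correct. The paper takes a different route: rather than verifying the three properties directly for $\mathbf{Graph}$, it defers the result to a later, more general theorem establishing that $\mathbf{edNCE}$ is partially adhesive over $\mathbf{MultiEdNCE}$, and observes that graphs are the special case of edNCE grammars with a single production and no connection instructions. The verification carried out there (define $\mathcal{S}$ explicitly, check full, faithful, mono-preserving) is structurally the same as yours, just at the more general level. Your approach is self-contained and elementary for this statement in isolation; the paper's approach avoids duplication since the edNCE result is needed later anyway. One minor remark: the no-self-loops restriction is not actually doing any work in your fullness argument --- what matters is only that edges of $\mathcal{S}(K)$ are encoded as triples in $V_K \times \Gamma \times V_K$, so source, target, and label already determine an edge uniquely.
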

\begin{proof}
The definition of $\mathcal S$ and the theorem 
follow as a special case of Theorem~\ref{thm:edNCEvsMultiEdNCE}.
\end{proof}

However, while $\mathbf{MultiGraph}$ is
adhesive, $\mathbf{Graph}$ is not. The reason is because
$\mathbf{Graph}$ does not have unique pushouts complements over monos,
which violates Lemma~\ref{lem:unique_pushout_complement}. In particular,
consider the following two squares:
\cstikz[0.85]{simple-graph-pushout.tikz}
where the morphisms are just subgraph inclusions. The left square
is an $\mathcal S$-pushout in $\mathbf{Graph}$ and therefore it is
a pushout in both $\mathbf{Graph}$ and $\mathbf{MultiGraph}$. However,
the right square is not a pushout in $\mathbf{MultiGraph}$, but it is a
pushout in $\mathbf{Graph}$. Therefore, the pair of morphisms
$(m, id)$ has two non-isomorphic pushout complements in $\mathbf{Graph}$
and thus, the category is not adhesive. For completeness, the pushout
of the right span in $\mathbf{MultiGraph}$ is given by:
\cstikz[0.85]{simple-graph-proper-pushout.tikz}
that is, it establishes a pair of parallel edges (with the same label), which
is not allowed
in $\mathbf{Graph}$. Therefore, $\mathcal S$-rewrites in
$\mathbf{Graph}$ are well-defined, however, general rewrites are not
because pushout complements and thus rewrites are not necessarily unique.

Next, we characterise the $\mathcal S$-spans (and therefore the $\mathcal
S$-pushouts).
\begin{lemma}\label{lem:s-span-simple-graph}
A span of monos $K \xleftarrow k I \xrightarrow r R$ in $\mathbf{Graph}$
is an $\mathcal S$-span iff the following condition holds:
\begin{description}
\item[ParEdges:] For any vertices $v,w \in I,$  if there exists an
edge $(k(v), \alpha, k(w)) \in K$ and an edge
$(r(v), \alpha, r(w))\in R,$ then there exists
an edge $(v, \alpha, w) \in I$.
\end{description}
\end{lemma}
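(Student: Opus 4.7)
The plan is to exploit the explicit component-wise construction of pushouts in $\mathbf{MultiGraph}$ (described in Subsection~\ref{subsec:dpo-graph}) and check precisely when the resulting multigraph happens to lie in the image of $\mathcal S$, that is, has no self-loops and no parallel edges with the same label. Since $\mathcal S$ is full and faithful, any pushout in $\mathbf{MultiGraph}$ whose underlying object lies in $\mathbf{Graph}$ automatically lifts to a pushout in $\mathbf{Graph}$ that is preserved by $\mathcal S$, and conversely an $\mathcal S$-pushout in $\mathbf{Graph}$ must give an object-without-parallel-edges when pushed into $\mathbf{MultiGraph}$.

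For the forward direction ($\mathcal S$-span $\Rightarrow$ \textbf{ParEdges}), suppose there exist $v,w \in V_I$ and edges $(k(v),\alpha,k(w)) \in E_K$ and $(r(v),\alpha,r(w)) \in E_R$ but no edge $(v,\alpha,w) \in E_I$. I would compute the pushout $H$ in $\mathbf{MultiGraph}$: on vertices, the generated equivalence identifies $k(v) \sim r(v)$; on edges, the two edges above have no common preimage in $E_I$ and hence remain distinct in $E_H$, producing parallel edges with the same label. Thus $H$ is not in the image of $\mathcal S$, contradicting the hypothesis that the $\mathbf{Graph}$-pushout exists and is preserved by $\mathcal S$.

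For the reverse direction (\textbf{ParEdges} $\Rightarrow$ $\mathcal S$-span), I would take the pushout $H$ of $\mathcal S(k), \mathcal S(r)$ in $\mathbf{MultiGraph}$ and verify it lies in the image of $\mathcal S$. First, since $k$ and $r$ are monos, a short argument shows that each non-trivial vertex-equivalence class has the form $\{k(i), r(i)\}$ for a unique $i \in V_I$; this immediately rules out new self-loops, because an edge $(k(v),\alpha,k(w)) \in E_K$ becoming a self-loop would force $k(v) \sim k(w)$ and hence $v = w$, contradicting $k$ being a mono (and similarly for $R$). Second, using the same analysis of vertex classes, two edges $e_1 \in E_K$ and $e_2 \in E_R$ with the same source, target and label in $H$ must be of the form $(k(v),\alpha,k(w))$ and $(r(v),\alpha,r(w))$ for some $v,w \in V_I$; by \textbf{ParEdges} there is then an edge $(v,\alpha,w) \in E_I$ whose images coincide with $e_1,e_2$, so the two are identified in the quotient. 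Hence $H$ has no parallel edges with the same label and $H \in \mathbf{Graph}$.

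Finally, I would conclude by lifting this pushout to $\mathbf{Graph}$: fullness of $\mathcal S$ lets us transfer the universal property (any competing cocone in $\mathbf{Graph}$ becomes one in $\mathbf{MultiGraph}$ via $\mathcal S$, induces a unique $\mathbf{MultiGraph}$-morphism out of $H$, which is then a $\mathbf{Graph}$-morphism since $H \in \mathbf{Graph}$), and faithfulness guarantees uniqueness, so the span has a pushout in $\mathbf{Graph}$ preserved by $\mathcal S$. The main obstacle I anticipate is the careful bookkeeping of equivalence classes of vertices and edges to justify the two structural claims (no self-loops, and parallel edges exist iff \textbf{ParEdges} fails); everything else follows formally from full faithfulness of $\mathcal S$.
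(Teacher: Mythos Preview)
Your argument is correct. The component-wise pushout analysis, the no-self-loop argument via injectivity of $k$ and $r$, and the use of \textbf{ParEdges} to rule out cross-parallel edges are all sound; the lifting back to $\mathbf{Graph}$ via fullness and faithfulness is the right way to finish. One small point worth making explicit: you handle parallel edges arising from one edge of $E_K$ and one of $E_R$, but you should also say a word about two edges both coming from $E_K$ (or both from $E_R$). This is immediate from your own observation that each vertex-equivalence class meets $V_K$ in at most one point, so distinct edges of $K$ cannot acquire equal source and target in $H$; but it deserves a sentence.

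The paper, however, takes a different route: it does not prove this lemma directly at all, but simply records it as the special case of the later Lemma~\ref{lem:s-spans}, which characterises $\mathcal S$-spans in $\mathbf{edNCE}$ (where there is an additional \textbf{ParCI} condition for connection instructions). Your direct proof is, in effect, exactly the specialisation to graphs of the argument the paper gives for that general lemma---same pushout-in-the-ambient-category computation, same parallel-edge and self-loop checks, same reflection via full faithfulness. So the two approaches are not really different in substance; the paper just chooses to state and prove the result once at the greater generality and then specialise, whereas you work it out in the concrete setting. Your version has the advantage of being self-contained at this point in the exposition; the paper's version avoids repeating the argument.
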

\begin{proof}
This is a special case of Lemma~\ref{lem:s-spans} from the original body
of work.
\end{proof}

The main idea is that the above condition ensures that a span of monos
cannot produce a pushout in $\mathbf{MultiGraph}$ which contains a pair
of parallel edges with the same label. We proceed by characterising $\mathcal
S$-matchings
(and therefore $\mathcal S$-pushout complements).

\begin{lemma}\label{lem:s-matching-simple-graph}
Given a pair of monos $H \xleftarrow m L \xleftarrow l I$ in
$\mathbf{Graph}$, $m$ is an $\mathcal S$-matching iff $\mathcal S(m)$
satisfies the matching conditions (in $\mathbf{MultiGraph}$). Moreover,
if the $\mathcal S$-pushout complement exists, then it is given
(up to isomorphism) by the full subgraph of $H$ with components
$X_H - m(X_L-l(X_I))$, for $X \in \{V,E\}.$
\end{lemma}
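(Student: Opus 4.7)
The plan is to prove the biconditional by transporting the pushout-complement data back and forth along $\mathcal{S}$, and to identify the explicit form of the complement along the way using what we already know about $\mathbf{MultiGraph}$ from Subsection~\ref{subsec:dpo-graph}. Since $\mathcal{S}$ is full and faithful and preserves monos, many of the structural checks will come for free; the real content is to verify that the candidate object always lives inside $\mathbf{Graph}$ rather than properly inside $\mathbf{MultiGraph}$.

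For the forward direction, I would unfold the definition of $\mathcal{S}$-matching. If $m$ is an $\mathcal{S}$-matching, then by definition $(l,m)$ admits a pushout complement in $\mathbf{Graph}$ whose pushout square is preserved by $\mathcal{S}$. Applying $\mathcal{S}$ to that square yields a pushout in $\mathbf{MultiGraph}$, which is by definition a pushout complement for $(\mathcal{S}(l),\mathcal{S}(m))$; hence $\mathcal{S}(m)$ satisfies the matching conditions in the ambient adhesive category, as recalled in Example~\ref{ex:graph_matching}.

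For the reverse direction, suppose $\mathcal{S}(m)$ satisfies the no-dangling-edges condition. Then by the explicit construction from Subsection~\ref{subsec:dpo-graph}, the pushout complement of $(\mathcal{S}(l),\mathcal{S}(m))$ in $\mathbf{MultiGraph}$ is (up to iso) the full submultigraph $K$ of $\mathcal{S}(H)$ with components $X_H - m(X_L - l(X_I))$ for $X \in \{V,E\}$, together with the inclusion $s$ and with $k$ set-theoretically equal to $l$. The key observation is that since $H$ is in $\mathbf{Graph}$ (no loops, no parallel edges of the same label), every full submultigraph of $\mathcal{S}(H)$ is itself in the image of $\mathcal{S}$, so $K$ lifts to an object of $\mathbf{Graph}$ and $s,k$ lift to graph homomorphisms. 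It remains to show that the resulting square is actually a pushout in $\mathbf{Graph}$ and that $\mathcal{S}$ preserves it. The preservation is immediate, since the square is the image, under $\mathcal{S}$, of a pushout in $\mathbf{MultiGraph}$. To verify the universal property in $\mathbf{Graph}$, any cocone in $\mathbf{Graph}$ is sent by $\mathcal{S}$ to a cocone in $\mathbf{MultiGraph}$, giving a unique mediator $u$ in the ambient category; fullness and faithfulness of $\mathcal{S}$ then let us lift $u$ to a unique mediator in $\mathbf{Graph}$.

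The main obstacle is really only the lifting step: we need that the pushout complement computed in $\mathbf{MultiGraph}$ never leaves the image of $\mathcal{S}$. This is precisely where the fact that $K$ is a \emph{full} subgraph of a graph is crucial, and where the definition of graph (forbidding self-loops and label-identical parallel edges) pays off. Everything else is bookkeeping that follows from generalities about full and faithful functors and from Lemma~\ref{lem:pushout_complement_monos} for the monomorphism claim on $s$ and $k$.
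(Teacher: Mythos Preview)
Your proposal is correct and follows essentially the same approach as the paper. The paper defers to Lemma~\ref{lem:s-pushout_complement} (stated for $\mathbf{edNCE}$), whose proof is precisely what you have spelled out in the special case of $\mathbf{Graph}$: the forward direction is immediate from the definition, and for the converse one computes the pushout complement in the ambient adhesive category, observes that as a full sub(multi)graph of $\mathcal{S}(H)$ it inherits the absence of self-loops and label-identical parallel edges, and then uses fullness and faithfulness of $\mathcal{S}$ to reflect the pushout back into $\mathbf{Graph}$.
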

\begin{proof}
Again, this is a special case of Lemma~\ref{lem:s-pushout_complement}
from the original body of work.
\end{proof}

So, this lemma tells us that in order to construct an $\mathcal S$-pushout
complement, we simply have to follow the same procedure as in
$\mathbf{MultiGraph}$.
Finally, the next theorem consolidates all of the requirements for doing
$\mathcal S$-rewrites.

\begin{theorem}\label{thm:dpo-simple-graphs}
In the category \textbf{Graph}, given a rewrite rule $t := L
\xleftarrow{l} I \xrightarrow{r} R$ and an $\mathcal{S}$-matching $m: L
\to H$, then the $\mathcal S$-rewrite induced by $m$ and $t$ exists iff
the following conditions are satisfied:
\begin{description}
\item[Edges:] For any two vertices $v,w \in I$, if there exist edges
$(m \circ l (v), \alpha, m \circ l (w)) \in H$ and
$(r(v), \alpha, r(w)) \in R,$ then there must be an edge
$(l(v), \alpha, l(w)) \in L.$
\end{description}
\end{theorem}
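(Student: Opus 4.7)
The plan is to reduce the theorem to an instance of the $\mathcal{S}$-span characterisation of Lemma~\ref{lem:s-span-simple-graph}, using the explicit description of the $\mathcal{S}$-pushout complement from Lemma~\ref{lem:s-matching-simple-graph}. Since $m$ is an $\mathcal{S}$-matching, the $\mathcal{S}$-pushout complement $K$ of $(l,m)$ already exists; by Definition~\ref{def:s-rewrite}, the $\mathcal{S}$-rewrite $H \leadsto_{t,m} M$ exists precisely when the pushout of the resulting span $K \xleftarrow{k} I \xrightarrow{r} R$ exists in $\mathbf{Graph}$ and is preserved by $\mathcal{S}$, i.e.\ exactly when this span is an $\mathcal{S}$-span. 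By Lemma~\ref{lem:s-span-simple-graph} this reduces to verifying ParEdges for the span $(k, r)$, so the whole proof boils down to showing that, for this particular $k$, ParEdges is equivalent to the Edges condition in the statement. To make this equivalence usable I would first unpack $k$: by Lemma~\ref{lem:s-matching-simple-graph}, $K$ is the full subgraph of $H$ with components $V_H - m(V_L - l(V_I))$ and $E_H - m(E_L - l(E_I))$, the morphism $s : K \hookrightarrow H$ is the inclusion, and $k = m \circ l$ (on underlying vertices) viewed with codomain $K$.

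For Edges $\Rightarrow$ ParEdges, suppose $v,w \in I$ with $(k(v), \alpha, k(w)) \in K$ and $(r(v), \alpha, r(w)) \in R$. Since $K \hookrightarrow H$, we also have $(m \circ l(v), \alpha, m \circ l(w)) \in H$, so Edges gives $(l(v), \alpha, l(w)) \in L$. This edge cannot lie in $E_L - l(E_I)$: otherwise $m$ of it would be an edge of $H$ with endpoints $m(l(v)), m(l(w))$ and label $\alpha$ that was removed when forming $K$, but in $\mathbf{Graph}$ an edge is uniquely determined by its endpoints and label, so this would contradict $(k(v), \alpha, k(w)) \in K$. Hence $(l(v), \alpha, l(w)) \in l(E_I)$, and since $l$ is a monomorphism, $(v, \alpha, w) \in I$, as ParEdges demands.

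For ParEdges $\Rightarrow$ Edges, suppose $v, w \in I$ with $(m \circ l(v), \alpha, m \circ l(w)) \in H$ and $(r(v), \alpha, r(w)) \in R$. I would split on whether this $H$-edge survives in $K$. If it does, then it is precisely $(k(v), \alpha, k(w)) \in K$, so ParEdges gives $(v, \alpha, w) \in I$ and applying $l$ yields $(l(v), \alpha, l(w)) \in L$. If it does not, then by construction of $K$ the edge lies in $m(E_L - l(E_I))$, so there is some $e \in E_L$ with $m(e) = (m \circ l(v), \alpha, m \circ l(w))$; since $m$ is a graph homomorphism (preserving labels) and is injective on vertices, the endpoints and label of $e$ are forced to be $l(v), l(w), \alpha$, giving $(l(v), \alpha, l(w)) \in L$ directly.

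The main delicate point is the second case of this last step: one must argue that any deleted edge of $H$ with both endpoints in the image of $m \circ l$ must itself arise as $m$ of the edge $(l(v), \alpha, l(w))$ in $L$. This relies essentially on the characterising feature of $\mathbf{Graph}$ (as opposed to $\mathbf{MultiGraph}$) that edges are uniquely identified by source, target, and label, so that $m$'s behaviour on edges is pinned down by its behaviour on vertices. Once this subtlety is handled, everything else is a direct translation through the explicit formulas for $K$ and $k$ furnished by Lemma~\ref{lem:s-matching-simple-graph}.
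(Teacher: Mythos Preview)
Your proof is correct and follows essentially the same route as the paper: the paper simply cites this as a special case of Theorem~\ref{thm:dpo-ednce}, whose proof also reduces the question to whether the right span $K \xleftarrow{k} I \xrightarrow{r} R$ satisfies \textbf{ParEdges}, and then shows this is equivalent to \textbf{Edges}. The only minor stylistic difference is that, when relating edges in $K$, $L$ and $I$, you use the explicit formula for the pushout complement from Lemma~\ref{lem:s-matching-simple-graph}, whereas the proof of Theorem~\ref{thm:dpo-ednce} instead invokes \textbf{ParEdges} on the \emph{left} square (which is already an $\mathcal{S}$-pushout) to push edges from $L$ and $K$ back into $I$; the two arguments are equivalent.
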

\begin{proof}
Again, special case of Theorem~\ref{thm:dpo-ednce} from the main
body of work.
\end{proof}

In summary, if we wish to
\replaced{perform}{perfrom}
DPO rewriting on graphs, such
that it acts in the same way as it does on multigraphs, then we simply
have to verify that the $\mathbf{Edges}$ condition is satisfied (in addition
to the matching conditions). Therefore, all the examples presented in
subsection~\ref{subsec:dpo-graph} also hold in $\mathbf{Graph}$ as they do
not violate the $\mathbf{Edges}$ condition.

DPO rewriting on graphs is just a special case of doing DPO rewriting
on edNCE grammars. The reason that edNCE grammars form a partially adhesive
category and not an adhesive one is precisely because their underlying
graph model are graphs and not multigraphs. If we define edNCE grammars using
multigraphs as the underlying graph model, then edNCE grammars form an
adhesive category. From the perspective of DPO rewriting this is preferable,
however this also has a negative effect on the language-theoretic properties of
edNCE grammars and for this reason we will use the standard definition
of edNCE grammars, which forms a partially adhesive category.

\section{String Diagrams and String Graphs}\label{sec:string}

In this section we begin by providing a short introduction to the theory
of string diagrams and how they relate to (traced symmetric) monoidal
categories.
This is described in Subsection~\ref{sub:string-diagrams}. Next, in
Subsection~\ref{sub:string-graphs} we describe string graphs, which are a
discrete representation of string diagrams amenable to automation in software.

\subsection{String Diagrams}\label{sub:string-diagrams}
String Diagrams were introduced by Roger Penrose in
\cite{penrose_tensor} as a way to conveniently represent Abstract Tensor
Systems.
A string diagram consists of a collection of labelled nodes which are drawn in
a plane. Every node has a certain number of inputs and outputs which are drawn
as wires. Nodes can be connected to each other (and even themselves) by
connecting their inputs and outputs. A single node with its inputs and outputs
represents an abstract tensor in Penrose's paper. Also, more
generally, nodes can
\deleted{also}
be used to represent arrows in monoidal categories
\cite{joyal_street}. A string diagram consisting of several nodes that are
connected in some way represents tensor contraction in Penrose's paper and
more generally, composition of morphisms in a monoidal category. It is
possible for a string diagram to contain wires that are not connected to a node
at some end and also wires that form circles. The length and shape of the
wires is irrelevant -- two string diagrams are equal if they are isotropic,
that is, one diagram can be obtained from the other via a continuous
deformation.
An example of two isotropic
string diagrams is provided below:
\begin{equation}\label{eq:string-isotropic}
\stikz{isotropic-laina.tikz}
\end{equation}
The introduction of string diagrams by Penrose was limited to their
applications for abstract tensor systems. The
first more general treatment of string diagrams was provided by
Andre Joyal and Ross Street in the middle of the 1980s. Perhaps the most
seminal paper related to string diagrams is \cite{joyal_street} where the
same authors show that reasoning with string diagrams is sound and complete
for (traced symmetric) monoidal categories. In particular, if an equality
between
two morphisms in some monoidal category follows from the axioms of
a monoidal category, then the string diagram representations of both
morphisms are isotropic. As a result, equational reasoning with string
diagrams provides shorter proofs compared to the standard term-based approach
used in monoidal category theory, because the canonical morphisms get
absorbed in the graphical notation and may be ignored.

For example, in any monoidal category $\mathbf{C}$, the interchange law:
\begin{equation}\label{eq:interchange-shit}
(g \circ f) \otimes (j \circ h) = (g \otimes j) \circ (f \otimes h)
\end{equation}
is provable
from the axioms of a monoidal category:
\begin{align*}
(g \circ f) \otimes (j \circ h) &= \otimes(g \circ f, j \circ h) &(\text{infix
notation})\\
&= \otimes ((g, j) \circ (f,h)) &(\text{composition in } \mathbf C \times
\mathbf C)\\
&= (\otimes (g, j)) \circ (\otimes (f,h)) &(\text{functoriality of } \otimes) \\
&= (g \otimes j) \circ (f \otimes h) &(\text{infix notation})
\end{align*}
However, using string diagrams, the interchange law is trivial:
\cstikz{interchange-govno.tikz}
because both sides of the equation are represented using the same (isotropic)
string diagram. Thus, reasoning for monoidal categories using string diagrams
can be done modulo the interchange law, and in general, modulo any equation
which follows from the axioms of monoidal categories. In this way, we
can produce shorter equational proofs compared to the traditional term-based
rewriting approach.

A single morphism $f: A_1\otimes \cdots \otimes A_n \to
B_1 \otimes \cdots \otimes B_m$ is represented as a string diagram: 
\cstikz{string-diagram-kur.tikz}
where the open-ended wires on the bottom are its \emph{inputs} and the
open-ended wires on the top are its \emph{outputs}.
The tensor product of $f$ and $g$ is represented by horizontal composition
of diagrams:
\cstikz{string-diagram-hor.tikz}
and the composition $g \circ f$ is represented by plugging the outputs
of $f$ into the inputs of $g$ (provided the two morphisms can be composed):
\cstikz{string-diagram-ver.tikz}
As a specific example, let's consider monoids.
In
a monoidal category $(\mathbf{C}, \otimes, I)$, a \emph{monoid} is a triple
$(A, m, e),$
where:
\begin{enumerate}
\item $A \in Obj(\mathbf{C})$ is an object of $C$
\item $m: A \otimes A \to A$ is a morphism of $C$
\item $e: I \to A$ is a morphism of $C$
\end{enumerate}
such that:
\begin{align*}
m \circ (id_A \otimes m) = m \circ (m \otimes id_A) \\
m \circ (e \otimes id_A) = id_A = m \circ (id_A \otimes e)
\end{align*}
Instead of reasoning about monoids in the usual notation using terms, we
may use string diagrams to do so. In particular, setting:
\cstikz{monoid_diagrams_example.tikz}
then the monoid axioms become:
\cstikz{monoid_diagrams_example2.tikz}
Equational reasoning for string diagrams is performed via subdiagram
substitution. As an example, given the following string diagram:
\begin{equation}\label{eq:string-diagram-host}
\stikz{string-govna-pisvane.tikz}
\end{equation}
we can apply one of the unitality axioms by matching one side of the
axiom into the diagram, cutting it out along the input/output wires of the
subdiagram and then replacing it with the other side of the unitality axiom
(the replaced subdiagram is indicated by the dashed box):
\begin{equation}\label{eq:string-replacing-crap}
\stikz{string-replacing-crap.tikz}
\end{equation}
In general, by using string diagrams in this way, we can reason equationally
about (traced symmetric) monoidal categories without having to perform any
rewrite
steps which follow from the monoidal structure of the category.

However, Joyal and Street describe string diagrams in terms of non-discrete
topological notions. One of our primary interests related to string diagrams
is the ability to perform computer-assisted proofs using them.
The formalisation provided by Joyal and Street is difficult to implement
in a theorem prover and for this reason we will consider a discrete
representation of string diagrams which makes use of graphs. This
representation is called String Graphs and it is introduced in the next
subsection.

\subsection{String Graphs}\label{sub:string-graphs}

String Graphs were originally introduced under the name "Open Graphs" in
\cite{open_graphs}, \cite{open_graphs1}. Since then, the theory has been
further developed and the term was changed to "String Graphs" in
\cite{kissinger_dphil}. String graphs were introduced in order to represent
string diagrams in a way that allows for efficient automation by computer
systems.

String graphs are a special kind of labelled directed graphs. Every vertex
of a string graph is either a wire-vertex or a node-vertex. Constructing a
string graph from a string diagram (and vice versa) is straightforward -- for
every node in the diagram we create a node-vertex and we replace the wires with
a sequence of wire-vertices which are connected by edges. Node-vertices should
be thought of as the main vertices of the graph, whereas wire-vertices are
only there to assist in the formalization of string diagrams as actual graphs.
Wires which are not connected at some end to a node, must end in a wire-vertex
in the string graph. The notion that two string diagrams are equal under
topological changes translates to wire-homeomorphism in string graphs~-- two
string graphs are equal if we can get one from the other by increasing and
decreasing the number of wire-vertices on certain wires. As an example, the
following two string graphs are equal and represent the string diagrams from
Equation~\eqref{eq:string-isotropic}:
\begin{center}
$
\begin{aligned}
\begin{tikzpicture}
	\node [style=nodev] (f) at (0, 0) {$f$};
	\node [style=wirev] (fh) at (0.3, 0.7) {};
	\node [style=wirev] (hg) at (1.65, 0.7) {};
	\node [style=nodev] (g) at (2, 0) {$g$};
	\node [style=nodev] (h) at (1, 1) {$h$};
	\node [style=wirev] (kh) at (1, 2) {};
	\node [style=nodev] (k) at (1, 3) {$k$};
	\node [style=wirev] (i1) at (0, -1) {};
	\node [style=wirev] (i2) at (2, -1) {};
	\node [style=wirev] (o) at (1, 4) {};
	\node [style=wirev] (c) at (3, 2) {};

	\draw [<-] (f) to (fh);
	\draw [<-] (fh) to (h);
	\draw [<-] (g)  to (hg);
	\draw [<-] (hg) to (h);
	\draw [<-] (h)  to (kh);
	\draw [<-] (kh) to (k);
	\draw [<-] (i1) to (f);
	\draw [<-] (i2) to (g);
	\draw [->] (o)  to (k);
	\draw [->] (3,1) to[out=0,in=0] (c);
	\draw [-] (c) to[out=180,in=180] (3,1);
\end{tikzpicture}
\end{aligned}
	\mbox{}\quad\quad=\quad\quad
\begin{aligned}
\begin{tikzpicture}
	\node [style=nodev] (f) at (0, 0) {$f$};
	\node [style=wirev] (fh) at (0.2, 0.6) {};
	\node [style=wirev] (fh1) at (0.4, 0.8) {};
	\node [style=wirev] (hg) at (1.65, 0.7) {};
	\node [style=nodev] (g) at (2, 0) {$g$};
	\node [style=nodev] (h) at (1, 1) {$h$};
	\node [style=wirev] (kh1) at (1, 2.33) {};
	\node [style=wirev] (kh2) at (1, 1.66) {};
	\node [style=nodev] (k) at (1, 3) {$k$};
	\node [style=wirev] (i1) at (0, -0.7) {};
	\node [style=wirev] (i11) at (0, -1.3) {};
	\node [style=wirev] (i111) at (0, -1.8) {};
	\node [style=wirev] (i2) at (2, -1.8) {};
	\node [style=wirev] (o) at (1, 4) {};
	\node [style=wirev] (c) at (3, 2) {};
	\node [style=wirev] (cl) at (2.65, 1.5) {};
	\node [style=wirev] (cr) at (3.3, 1.5) {};

	\draw [<-] (fh) to (fh1);
	\draw [<-] (fh1) to (h);
	\draw [<-] (f) to (fh);
	\draw [<-] (g)  to (hg);
	\draw [<-] (hg) to (h);
	\draw [<-] (kh1) to (k);
	\draw [<-] (kh2) to (kh1);
	\draw [->] (kh2) to (h);
	\draw [<-] (i1) to (f);
	\draw [->] (i1) to (i11);
	\draw [->] (i11) to (i111);
	\draw [<-] (i2) to (g);
	\draw [->] (o)  to (k);
	\draw [->, bend right] (cr) to (c);
	\draw [->, bend right] (c) to (cl);
	\draw [->, bend right] (cl) to (cr);
\end{tikzpicture}
\end{aligned}
$
\end{center}

We now proceed to give formal definitions for all notions related to
string graphs as they will be the fundamental objects in which we are
interested in.
\added{Recall that so far we have introduced two different graph models --
multigraphs (cf. Definition~\ref{def:multigraph}) and graphs (cf.
Definition~\ref{def:graph}). In this section we will define string graphs
assuming the multigraph model, because this is how they have been originally
introduced and because we wish to stay close to the background theory. In
addition, the multigraph model allows us to define self-loops on wire-vertices,
whereas the other model does not. This is discussed at the end of the section
where we explain why string graphs may be alternatively defined using graphs
(in the sense of Definition~\ref{def:graph}) without losing any expressivity
for practical purposes.}

We begin by introducing the labelling alphabets which we
shall use.

\begin{definition}[String Graph Alphabets]\label{def:terminal_alphabets}
Throughout this section and the rest of the thesis, we will be working with
edge and vertex labelled graphs. There are finitely many labels which are
split into several different alphabets. This is made precise by the
following definition:
\begin{description}
\item[1.] $\Delta$ is the \emph{alphabet of vertex labels}.
\item[2.] $\mathcal{N} 
\subseteq \Delta$ is the \emph{alphabet of node-vertex
labels} and $\mathcal{W} = \Delta - \mathcal{N}$ is the alphabet of
\emph{wire-vertex labels}.
\item[3.] $\Gamma$ is the \emph{alphabet of all edge labels}.
\end{description}
\end{definition}

Thus, each vertex label is either a node-vertex label or a wire-vertex
label. With this in place, we may now define string graphs.


\begin{definition}[String Graph \cite{icgt}]\label{def:string-graph}
A \textit{string graph} over an alphabet of vertex labels $\Delta$
and an alphabet of edge labels $\Gamma$ is a directed labelled multigraph whose
vertices are labelled by the set $\Delta$ and whose edges are labelled by the
set $\Gamma$, where vertices with labels in $\mathcal N$ are called
\textit{node-vertices} and vertices with labels in $\mathcal W$ are called
\textit{wire-vertices}, and the following conditions hold:
\begin{description}
\item[1.] there are no edges directly connecting two node-vertices
\item[2.] the in-degree of every wire-vertex is at most one and
\item[3.] the out-degree of every wire-vertex is at most one
\end{description}
The category of string graphs over $\Delta$ and $\Gamma$ is denoted by
$\mathbf{SGraph}_{\Delta, \Gamma}$ and it is the full subcategory of
$\mathbf{MultiGraph}_{\Delta, \Gamma}$ whose objects are string graphs. When
$\Delta$ and $\Gamma$ are clear from the context, we will simply write
$\mathbf{SGraph}.$
\end{definition}

The above definition talks about directed string graphs, but we can easily
modify it in order to define undirected string graphs. All of our proofs,
propositions and definitions in this thesis will be about directed graphs,
but they can easily be modified to cover undirected graphs as well. However,
for brevity, we will only stick to the directed case, which is also more
complicated. Still, we shall sometimes provide examples involving undirected
graphs (and graph grammars) because they may be simpler to illustrate certain
ideas.

We depict string graphs in the following way. Node-vertices and edges are
depicted in the same way we depicted vertices and edges for graphs.
Wire-vertices will be depicted using
black nodes which are smaller compared to node-vertices. While
we do allow wire-vertices to be labelled, we can usually ignore their labels
as they usually do not carry semantic meaning. For this reason, we will often
not depict the label of a wire-vertex.

\begin{example}\label{ex:string_graph_general}
 An example of a string graph with three node-vertices is provided below:
\cstikz{string_graph_example.tikz}
\end{example}

The category $\mathbf{SGraph}$ is not adhesive, because it does not have unique
pushout complements. The reason is the same as for the category of
graphs -- parallel edges are not allowed. In fact, the same counter-example
from Subsection~\ref{subsec:simple_graphs} may be used to show that
$\mathbf{SGraph}$ is not adhesive, where we use wire-vertices in particular.
However, $\mathbf{SGraph}$ is partially adhesive as shown by the next
proposition.

\begin{proposition}[\cite{kissinger_dphil}]
The category $\mathbf{SGraph}$ is partially adhesive, where $\mathcal I:
\mathbf{SGraph} \to \mathbf{MultiGraph}$ is the inclusion functor.
\end{proposition}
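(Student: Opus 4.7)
The plan is to verify the three conditions in the definition of a partially adhesive category. First, the ambient category $\mathbf{MultiGraph}$ is adhesive by Lemma~\ref{lem:multigraph-adhesive}. Second, fullness and faithfulness of $\mathcal I$ follow immediately from Definition~\ref{def:string-graph}, which explicitly declares $\mathbf{SGraph}$ to be the \emph{full} subcategory of $\mathbf{MultiGraph}$ cut out by the string-graph conditions; the inclusion of a full subcategory is always full and faithful. Third, and this is the only non-trivial clause, I must verify that $\mathcal I$ preserves monomorphisms.

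My strategy for the third part is to characterise monomorphisms in both categories as exactly the vertex-and-edge injective homomorphisms, after which preservation is immediate. In $\mathbf{MultiGraph}$ this characterisation is standard: via the presentation of $\mathbf{MultiGraph}_{\Sigma,\Gamma}$ as a slice of the functor category $[\mathbb G,\mathbf{Set}]$ used in the proof of Lemma~\ref{lem:multigraph-adhesive}, monomorphisms are computed componentwise in $\mathbf{Set}$ and are therefore exactly the injective-on-components morphisms. For $\mathbf{SGraph}$, the easy direction is that componentwise injective string graph homomorphisms are monic, because they are already monic in the ambient $\mathbf{MultiGraph}$ and $\mathbf{SGraph}$ is a full subcategory, so any test pair lands within the subcategory.

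The key technical step is the converse: if $f : G \to H$ is monic in $\mathbf{SGraph}$, then $f_V$ and $f_E$ must be injective. I would argue this by constructing small test string graphs and exhibiting pairs of distinct morphisms equalised by $f$ whenever injectivity fails. For vertex injectivity, given $u \neq v$ in $V_G$ with $f_V(u) = f_V(v)$, the two vertices share a common label, and the one-vertex string graph with that label together with the two maps picking out $u$ and $v$ respectively gives the required contradiction. For edge injectivity, given $e \neq e'$ in $E_G$ with $f_E(e) = f_E(e')$, I would use the two-vertex one-edge string graph whose vertex and edge labels match those of $e$ (equivalently $e'$). The main obstacle is ensuring that this single-edge test object is itself a valid string graph, given the constraint from Definition~\ref{def:string-graph} that no edge may connect two node-vertices and that wire-vertex degrees are bounded by one; this is handled by observing that $e$ already lives in the string graph $G$, so at least one of its endpoints must be a wire-vertex, and a two-vertex single-edge graph trivially satisfies the wire-degree condition. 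Once componentwise injectivity is thus established in both categories, preservation of monomorphisms drops out and the proposition follows directly from the definition of partial adhesivity.
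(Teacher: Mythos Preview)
Your proof is correct and matches the paper's approach, though the paper is far terser---it merely asserts that monomorphisms in $\mathbf{SGraph}$ are the injective graph homomorphisms and are therefore preserved by the inclusion. One simplification worth noting: once vertex injectivity is established, edge injectivity is automatic, since the string-graph conditions forbid parallel edges (every edge is incident to a wire-vertex, whose in- and out-degree are each at most one), so your two-vertex test object and the attendant worry about self-loops are never actually needed.
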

\begin{proof}
The inclusion functor is obviously full and faithful. Monomorphisms in
$\mathbf{SGraph}$ are just injective graph homomorphisms and therefore they
are preserved by $\mathcal I$.
\end{proof}

String diagrams may have inputs and outputs which can be used to compose
string diagrams between each other. The same notion also exists for string
graphs and is defined next.

\begin{definition}[Inputs, Outputs and Isolated Vertices
\cite{kissinger_dphil}]
A wire-vertex of a string graph $H$ is called an \emph{input} if it has no
incoming edges. A wire-vertex with no outgoing edges is called an \emph{output}.
If a wire-vertex has no incident edges, then it is both an input and an output
and it is called an \emph{isolated} wire-vertex. The set of inputs of $H$ is
denoted as $In_H$, the set of outputs as $Out_H$ and the \emph{boundary} of $H$
as $Bound_H := In_H \cup Out_H.$ We denote with $\emph{In}(H)$, $\emph{Out}(H)$
and $\emph{Bound}(H)$ the string graphs with no edges whose vertices are
respectively the inputs, outputs and boundary wire-vertices of $H$. 
\end{definition}

In the string graph from Example~\ref{ex:string_graph_general}, the rightmost
vertex is an isolated wire-vertex.
The rest of the inputs are the two top vertices and
the rest of the outputs are the two bottom vertices.

With this in place, we can now characterise the $\mathcal I$-spans
and thus $\mathcal I$-pushouts in $\mathbf{SGraph}$.

\begin{lemma}[\cite{kissinger_dphil}]\label{lem:string_pushout}
A span of monos $K \xleftarrow k I \xrightarrow r R$ in $\mathbf{SGraph}$
is an $\mathcal I$-span iff the following conditions hold:
\begin{description}
\item[1.] for all $v \in In_I$ at least one of $k(v)$ and $r(v)$ is an input
\item[2.] for all $v \in Out_I$ at least one of $k(v)$ and $r(v)$ is an output
\end{description}
\end{lemma}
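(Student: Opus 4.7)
The plan is to explicitly construct the pushout of the span in $\mathbf{MultiGraph}$ (which exists since $\mathbf{MultiGraph}$ is adhesive by Lemma~\ref{lem:multigraph-adhesive}) and then determine exactly when the resulting multigraph is in fact a string graph. Because $\mathbf{SGraph}$ is a full subcategory of $\mathbf{MultiGraph}$, a pushout computed in $\mathbf{MultiGraph}$ that happens to land in $\mathbf{SGraph}$ is automatically a pushout in $\mathbf{SGraph}$ and is preserved by $\mathcal I$; conversely, if the span is an $\mathcal I$-span, then its pushout in $\mathbf{SGraph}$ must coincide (up to iso) with its pushout in $\mathbf{MultiGraph}$, so the latter must be a string graph. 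Hence the lemma reduces to characterising when the $\mathbf{MultiGraph}$-pushout $H$ of the span satisfies the three string graph axioms.

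First, I would check the three string graph conditions for $H$. For the "no edges between node-vertices" condition: since $k$ and $r$ are label-preserving, node-vertices map to node-vertices and wire-vertices to wire-vertices, and since the pushout introduces no new edges (edges of $H$ come from $E_K \sqcup E_R$ quotiented), any edge in $H$ already exists in $K$ or $R$ between the same label-types; hence this condition is inherited. For vertices of $H$ that come from $V_K \setminus k(V_I)$ (and dually from $V_R$), the in/out-degrees are unchanged from $K$ (resp. $R$), so they automatically satisfy conditions 2 and 3. Thus the only potential violations arise at identified wire-vertices $[k(v)] = [r(v)]$ for $v \in V_I$.

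The key computation is therefore the in-degree and out-degree of such identified vertices. Using that $k,r$ are monomorphisms and that edges from $E_I$ are identified pairwise in the pushout via $k$ and $r$, one obtains
\[
d_{in}^H([k(v)]) \;=\; d_{in}^K(k(v)) + d_{in}^R(r(v)) - d_{in}^I(v),
\]
and analogously for out-degrees. Since $K, R, I$ are already string graphs, each summand on the right-hand side lies in $\{0,1\}$. If $v \notin In_I$, then $d_{in}^I(v) = 1$ and the sum is at most $1 + 1 - 1 = 1$ automatically. If $v \in In_I$, then $d_{in}^I(v) = 0$ and the inequality $d_{in}^H([k(v)]) \leq 1$ is equivalent to $d_{in}^K(k(v)) + d_{in}^R(r(v)) \leq 1$, i.e., at least one of $k(v), r(v)$ is an input. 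This is exactly condition 1 of the lemma, and condition 2 follows by the dual argument on out-degrees.

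The main obstacle will be the careful bookkeeping in the degree count: one must be precise about which edges of $K$ and $R$ get identified in the pushout (those in the images $k(E_I)$ and $r(E_I)$) and verify that the inclusion-exclusion above is correct. Once this is established, the biconditional follows routinely: the conditions exactly characterise when $H \in \mathbf{SGraph}$, and because $\mathcal I$ is a full and faithful inclusion, a pushout of the span exists in $\mathbf{SGraph}$ and is preserved by $\mathcal I$ if and only if $H$ is already a string graph.
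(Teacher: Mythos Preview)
Your proposal is correct. The paper itself does not give a proof of this lemma, citing it from \cite{kissinger_dphil}; your argument---computing the pushout in $\mathbf{MultiGraph}$, checking that conditions 1--3 of Definition~\ref{def:string-graph} can only fail at identified wire-vertices, and using the inclusion-exclusion formula $d_{in}^H([k(v)]) = d_{in}^K(k(v)) + d_{in}^R(r(v)) - d_{in}^I(v)$ together with fullness and faithfulness of $\mathcal I$ to pass between pushouts in $\mathbf{SGraph}$ and $\mathbf{MultiGraph}$---is exactly the standard way to establish this characterisation and is complete as stated.
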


When working with string graphs, we shall restrict ourselves to special
kinds of rewrite rules which allow us to correctly simulate string diagram
rewriting. They are presented in the next definition.

\begin{definition}[String Graph Rewrite Rule \cite{kissinger_dphil}]
\label{def:string-graph-rewrite-rule}
  A \textit{String Graph Rewrite Rule} is a span of monomorphisms
  $L \stackrel{l}{\longleftarrow} I \stackrel{r}\longrightarrow R$
  in $\mathbf{SGraph},$
  with the following properties:
  \begin{description}
    \item[P1] $L$ and $R$ do not have any isolated wire-vertices
    \item[P2] $In(L) \cong In(R)$ and $Out(L) \cong Out(R)$
    \item[P3] $I \cong In(L) + Out(L) \cong In(R) + Out(R)$
    \item[P4] The following diagram commutes :
  \end{description}
  \begin{center}
    \begin{tikzpicture}
  \node (L)  at (-4, 0) {$L$};
  \node (I)  at (0, 0) {$I$};
  \node (R)  at (4, 0) {$R$};
  \node (IL) at (-2, 2) {$In(L)$};
  \node (IR) at (2, 2) {$In(R)$};
  \node (OL) at (-2, -2) {$Out(L)$};
  \node (OR) at (2, -2) {$Out(R)$};

  \draw [->, dashed] (I) to node[above] {$l$} (L);
  \draw [->, dashed] (I) to node[above] {$r$} (R);
  \draw [->] (IL) to node[above] {$i$} (I);
  \draw [->] (IR) to node[above] {$i'$} (I);
  \draw [->] (OL) to node[above] {$j$} (I);
  \draw [->] (OR) to node[above] {$j'$} (I);
  \draw [<->] (IL) to node[above] {$\sim$} (IR);
  \draw [<->] (OL) to node[above] {$\sim$} (OR);
  \draw [left hook->] (IL) to (L);
  \draw [right hook->] (IR) to (R);
  \draw [left hook->] (OL) to (L);
  \draw [right hook->] (OR) to (R);
\end{tikzpicture}
  \end{center}
  where $i, j, i', j'$ are the coproduct inclusions. 
\end{definition}

In other words, a string graph rewrite rule 
$L \stackrel{l}{\longleftarrow} I \stackrel{r}\longrightarrow R$
is such that $L$ and $R$ have no isolated wire-vertices, $I$ consists
entirely of isolated wire-vertices and the inputs/outputs in $L$ and $R$ are
in bijective correspondence which is respected by the span monos.

\begin{lemma}\label{lem:string_complement}
Given a string graph rewrite rule $L \xleftarrow l I \xrightarrow r R,$ then
a monomorphism $m: L \to H$ satisfies the no dangling edges condition
iff $m$ is an $\mathcal I$-matching. Moreover, the
$\mathcal I$-pushout complement is given (up to isomorphism) by the full
subgraph of
$H$ with components $X_H - m(X_L-l(X_I))$, where $X \in \{V,E\}.$
\end{lemma}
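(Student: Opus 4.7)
The plan is to prove both directions of the biconditional and simultaneously verify the explicit description of the $\mathcal I$-pushout complement.

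For the easy direction ($\Leftarrow$): if $m$ is an $\mathcal I$-matching, then $(l,m)$ has an $\mathcal I$-pushout complement by definition. Applying $\mathcal I$ (which preserves $\mathcal I$-pushouts) yields a pushout complement in $\mathbf{MultiGraph}$, and by Example~\ref{ex:graph_matching} this forces $m$ to satisfy the no dangling edges condition.

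For the harder direction ($\Rightarrow$), assume $m$ satisfies no dangling edges. Define $K$ as the subgraph of $H$ with vertex set $V_K := V_H - m(V_L - l(V_I))$ and edge set $E_K := E_H - m(E_L - l(E_I)) = E_H - m(E_L)$, using that $E_I = \emptyset$ by P3 (since $I \cong In(L) + Out(L)$ has only isolated wire-vertices). First I would verify that $K$ is a string graph: the wire-vertex in- and out-degree bounds and the prohibition on direct edges between node-vertices are inherited when taking subgraphs of a string graph. Since $m$ is injective we have $m(l(V_I)) \subseteq V_K$, so the morphism $k : I \to K$ given set-theoretically by $k = m \circ l$ is well-defined, and together with the subgraph inclusion $s : K \hookrightarrow H$ it forms the pushout complement of $(l,m)$ in $\mathbf{MultiGraph}$ by the standard formula described in Subsection~\ref{subsec:dpo-graph}.

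The main obstacle is to upgrade this to an $\mathcal I$-pushout, which by Lemma~\ref{lem:string_pushout} reduces to verifying the boundary conditions on the span $K \xleftarrow{k} I \xrightarrow{l} L$. Since every $v \in V_I$ is an isolated wire-vertex (by P3), $v \in In_I \cap Out_I$, so I would split into cases according to whether $v$ lies in the $In(L)$ or $Out(L)$ summand of $I$. For $v$ in the $In(L)$ summand, the commuting diagram P4 tells us that $l(v) \in In_L$, which immediately gives the input condition; for the output condition, $l(v)$ is a non-isolated (by P1) wire-vertex of $L$ with no incoming edges, hence has exactly one outgoing edge $e$, and by the string graph out-degree bound on wire-vertices $m(e)$ is the unique outgoing edge of $m(l(v)) = s(k(v))$ in $H$. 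Since $m(e) \in m(E_L)$ it is deleted from $E_K$, so $k(v) \in Out_K$. The $Out(L)$ case is symmetric by reversing arrows. With the $\mathcal I$-span condition established, uniqueness up to isomorphism of the complement follows from Lemma~\ref{lem:partial_adhesive_unique}.
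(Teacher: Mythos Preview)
Your proof is correct, but it takes a longer route than the paper's. Both arguments agree on the easy direction and on constructing $K$ as the full subgraph of $H$ with components $X_H - m(X_L - l(X_I))$, noting that any subgraph of a string graph is again a string graph. The divergence is in how you upgrade the resulting $\mathbf{MultiGraph}$ pushout square to an $\mathcal I$-pushout. You invoke Lemma~\ref{lem:string_pushout} and carry out an explicit boundary analysis (splitting on whether $v$ lies in the $In(L)$ or $Out(L)$ summand of $I$ and tracking the unique incident edge through the complement). The paper instead observes that once $K$ is a string graph, all four corners and all four arrows of the square already live in $\mathbf{SGraph}$; since $\mathcal I$ is full and faithful it reflects the pushout, and since the square was constructed as a pushout in $\mathbf{MultiGraph}$ it is trivially preserved by $\mathcal I$. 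So the paper finishes in one line where you spend a paragraph.

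Your extra work is not wasted, however: the case analysis you give for the left square is essentially the same argument the paper uses in the \emph{next} theorem to verify the $\mathcal I$-span condition on the right square $K \xleftarrow{k} I \xrightarrow{r} R$, where no such shortcut is available because $M$ has not yet been shown to exist in $\mathbf{SGraph}$.
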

\begin{proof}
$m$ satisfies the no dangling edges condition iff the pushout
complement of $H \xleftarrow m L \xleftarrow l I$ in $\mathbf{MultiGraph}$
exists (and is unique up to isomorphism). Let it be given by the following
square:
\cstikz{pushout_tashak.tikz}
The pushout complement in $\mathbf{MultiGraph}$ is the full subgraph
$K\subseteq H$ whose components are given by $X_H - m(X_L-l(X_I))$, where $X
\in \{V,E\},$ as shown in Subsection~\ref{subsec:dpo-graph}. Thus, to complete
the proof, we have to show that $K$ is a string graph. $H$ is a string graph
and $K$ is obtained from $H$ by removing some edges and vertices. This cannot
violate any of the three conditions from Definition~\ref{def:string-graph} and
therefore $K$ must be a string graph.
\end{proof}

\begin{theorem}
In the category $\mathbf{SGraph}$, given a string graph rewrite rule $t := L
\xleftarrow{l} I \xrightarrow{r} R$ and a monomorphism $m: L
\to H$, which satisfies the no dangling edges condition,
then the $\mathcal I$-rewrite induced by $m$ and $t$ exists
and is given by $H \leadsto_{t,m} M,$ where $M$ is uniquely determined
(up to isomorphism) by the following DPO diagram:
\cstikz{dpo_adhesive.tikz}
\end{theorem}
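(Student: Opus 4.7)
The plan is to prove the theorem by unpacking the definition of an $\mathcal{I}$-rewrite, verifying each ingredient in turn, and then appealing to Theorem~\ref{thm:partial_adhesive_rewrite} for uniqueness. First, I would show that $m$ is in fact an $\mathcal{I}$-matching: since $m$ satisfies the no dangling edges condition, Lemma~\ref{lem:string_complement} directly gives us that the $\mathcal{I}$-pushout complement $K$ exists and is (up to isomorphism) the full subgraph of $H$ with components $X_H - m(X_L - l(X_I))$ for $X \in \{V,E\}$. This is the easy step.

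The main technical content is showing that the resulting span $K \xleftarrow{k} I \xrightarrow{r} R$ is an $\mathcal{I}$-span, so that by Lemma~\ref{lem:string_pushout} its pushout exists in $\mathbf{SGraph}$ and is preserved by $\mathcal{I}$. By property P3 of a string graph rewrite rule, $I$ consists entirely of isolated wire-vertices, so every $v \in V_I$ lies in both $In_I$ and $Out_I$. I would split into the two cases afforded by $V_I \cong In(L) + Out(L)$ from P3. Take $v \in V_I$ coming from $In(L)$: then by the commutativity of diagram P4 and the bijection $In(L) \cong In(R)$, $r(v)$ is an input of $R$, discharging the $In_I$ condition. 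For the $Out_I$ condition, I would use P1 (no isolated wire-vertices in $L$) together with the wire-vertex degree constraints from Definition~\ref{def:string-graph}: $l(v)$ has in-degree $0$ in $L$, is not isolated, and out-degree at most $1$, so it has exactly one outgoing edge $e$. Since wire-vertices in $H$ also have out-degree at most $1$, the unique outgoing edge of $m(l(v))$ in $H$ is $m(e)$, which lies in $m(E_L)$ and is therefore removed when forming $K$. Hence $k(v) = m(l(v))$ has no outgoing edges in $K$, i.e.\ $k(v)$ is an output of $K$. The case $v \in Out(L)$ is dual, using in-degree at most $1$ to show $k(v)$ is an input of $K$.

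Having established that the span is an $\mathcal{I}$-span, the right-hand pushout exists and is an $\mathcal{I}$-pushout by definition; this is exactly what Definition~\ref{def:s-rewrite} demands. I would then invoke Theorem~\ref{thm:partial_adhesive_rewrite} to conclude that $M$ is uniquely determined up to isomorphism (with all morphisms being monos and the entire diagram preserved by $\mathcal{I}$). The hard part is step two: the $\mathcal{I}$-span verification is where all four properties P1--P4 of a string graph rewrite rule play a role, and in particular one has to be careful about the interaction between the wire-vertex degree bound, P1, and the fact that edges incident to $l(V_I)$ (but which are not in $l(E_I) = \emptyset$) are precisely the ones removed in the pushout complement. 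Once that case analysis is in place, the rest of the argument is a direct application of the partially adhesive framework built up in Section~\ref{sec:partial_adhesive}.
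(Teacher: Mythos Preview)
Your proposal is correct and follows essentially the same approach as the paper: both proofs invoke Lemma~\ref{lem:string_complement} for the $\mathcal{I}$-pushout complement, then verify the hypotheses of Lemma~\ref{lem:string_pushout} by using P1--P4 to show that the relevant edge incident to $l(v)$ is removed in $K$. The only cosmetic difference is that the paper argues by contradiction (assume neither $k(v)$ nor $r(v)$ is an input and derive a contradiction) whereas you do a direct case split on whether $v$ comes from $In(L)$ or $Out(L)$; your version is in fact slightly more explicit about why the unique edge at $m(l(v))$ in $H$ coincides with $m(e)$ and about invoking Theorem~\ref{thm:partial_adhesive_rewrite} for uniqueness.
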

\begin{proof}
From Lemma~\ref{lem:string_complement}, we know that the pushout complement $K$
exists and is unique. Following Lemma~\ref{lem:string_pushout}, we have to show
that for an arbitrary wire-vertex $v \in I$, then $r(v)$ is an input in $R$ or
$k(v)$ is an input in $K$ (the case for outputs follows by the same arguments).
Assume the opposite, that is, neither $k(v)$, nor $r(v)$ is an input. Because
$r(v)$ is not an input, this means that $r(v)$ must be an output in $R$,
according to the definition of string graph rewrite rule. Again from the same
definition, it follows $l(v)$ must be an output and moreover $l(v)$ has
in-degree exactly one, because isolated wire-vertices are not allowed in $L$.
Now, observe that the incident edge of $l(v)$ in $L$ will be removed when
computing the pushout complement $K$. Thus, $k(v)$ is an input in $K$ and we
get a contradiction.
\end{proof}

This theorem shows that DPO rewriting for string graphs is well-defined and
behaves exactly as it does for multigraphs as long as we restrict ourselves
to string graph rewrite rules. However, this alone is not enough to represent
string diagrammatic rewriting. The last piece which is missing is the
notion of wire-homeomorphism. We will first define wires and then
we shall define what we mean by wire-homeomorphism.

\begin{definition}[Wire \cite{icgt}]\label{def:wire}
A \textit{wire} is a maximal connected subgraph of a string graph consisting
of only wire-vertices and at least one edge. There are three cases: (a) it
forms a simple directed cycle, which is called a \textit{circle}, (b) it is a
chain where one or both endpoints are connected to node-vertices, which is
called an \textit{attached wire}, or (c) it is a chain not connected to any
node-vertices, which is called a \textit{bare wire}.
\end{definition}

\begin{example}
One example for each of the three different kinds of wires:
\cstikz{wires.tikz}
\end{example}

\begin{definition}[Wire-homeomorphic string graphs \cite{icgt}]
\label{def:wire-homeo}
Two string graphs $H$ and $H'$ are called \textit{wire-homeomorphic}, written
$H \sim H'$ if $H'$ can be obtained from $H$ by either merging two adjacent
wire-vertices (top) or by splitting a wire-vertex into two adjacent
wire-vertices (bottom) any number of times, while preserving the number of
wires:
    \[ \stikz{two-wires.tikz} \ \ \mapsto \ \ \stikz{one-wire.tikz}\]
    \[\stikz{one-wire.tikz} \ \ \mapsto \ \ \stikz{two-wires.tikz} \]
\end{definition}

Therefore, two wire-homeomorphic string graphs differ only in the length
of their wires.
It's easy to see that wire-homeomorphism is an equivalence relation and we
shall denote the wire-homeomorphism class of a graph $H$ (modulo graph
isomorphism) by $[H]_{\sim}$. Recall, that for a graph $H$, $[H]$ denotes
the set of all graphs isomorphic to $H$. Thus, $[H] \subseteq [H]_{\sim}.$
Also, note that any wire-homeomorphism class $[H]_{\sim}$ has a minimal
representative -- it is given by the string graph $H' \in [H]_{\sim}$, such
that no two wire-vertices in $H'$ may be merged without decreasing the number
of wires.

\begin{example}
The following string graphs are wire-homeomorphic:
\cstikz{wire-homeo-example.tikz}
and the left one is the minimal representative of its wire-homeomorphism
class.
\end{example}

The main result regarding string graphs in \cite{kissinger_dphil} is that the
category of framed cospans of string graphs modulo wire-homeomorphism is the
same as the free traced symmetric monoidal category. This justifies the fact
that DPO
rewriting on string graphs modulo wire-homeomorphism correctly represents
string diagram rewriting. To understand why DPO rewriting without
wire-homeomorphism is not sufficient to represent string diagram rewriting,
consider the next example.

\begin{example}
We can now show how to represent the string diagram rewrite from
Equation~\eqref{eq:string-replacing-crap}. First, we represent the
unitality axiom which we use for the rewrite as a string graph rewrite rule:
\cstikz{unitality-string-rewrite.tikz}
Notice, that the interface of the rewrite rule is uniquely determined by
the LHS and the RHS of the string diagram equation, as it just consists
of the inputs and outputs of the diagram, represented as wire-vertices.
Next, we represent the string diagram \eqref{eq:string-diagram-host} as
a string graph:
\cstikz{string-diagram-host.tikz}
Then, replacing the subdiagram in \eqref{eq:string-replacing-crap} is done
via a DPO rewrite:
\cstikz{string-diagram-dpo-model.tikz}
Notice, that even though the string graph rewrite rule and the host string
graph are all the minimal representatives of their wire-homeomorphism classes,
the result of the rewrite is not.
\end{example}

So, in order to correctly represent string diagram rewriting using DPO
rewriting on string graphs, we have to consider two string graphs to be
equal when they are wire-homeomorphic. In particular, when matching a string
graph $L$ onto another string graph $H$, then we might have to grow or shrink
some wires in $H$, before we can perform a rewrite.
The next two definitions make this precise.

\begin{definition}[String Graph Matching \cite{icgt}]
Let $L \xleftarrow l I \xrightarrow r R$ be a string graph rewrite rule. Then a
\textit{string graph matching} of $L$ onto a string graph $H$ is a monomorphism
$m : L \to \widetilde H$ where $H \sim \widetilde H$ and where $m$ satisfies
the no dangling edges condition.
\end{definition}

\begin{definition}[String Graph Rewrite \cite{icgt}]
A \textit{string graph rewrite} of a string graph $H$ by a
string graph rewrite rule $L \xleftarrow l I
\xrightarrow r R$ using a string graph matching $m : L \to \widetilde H$ (for
$\widetilde H
\sim H$) consists of the following DPO diagram in
\textbf{SGraph}:
  \cstikz{dpo-squares.tikz}
\end{definition}

\begin{example}
In the example below, the embedding on the left fails, however the embedding on
the right of the same graph succeeds:
\cstikz{growing-crap.tikz}
while both graphs $H$ and $H'$ represent the same string diagram. Therefore,
it is necessary to grow the wire of the host graph $H$ if we wish to do a DPO
rewrite involving the string graph $L$ as the LHS of some string graph rewrite
rule.
\added{Note, that in this case a non-injective matching would also succeed,
however, the theory of string graphs and its correctness has only been
described over injective matchings. For an alternative representation of
string diagrams where this is not necessary, see
Section~\ref{sec:related-work-back}.}
\end{example}

Before we conclude this section, we point out that
we can alternatively define string graphs using Definition~\ref{def:graph}
instead of using multigraphs.
However, there is a small caveat. Observe, that any string graph may not have
parallel edges -- there can be no edges between a pair of node-vertices and the
in-degree (out-degree) of any wire-vertex is at most one. Node-vertices
cannot have self-loops, but a wire-vertex may have a self-loop, like
in the following example:
\cstikz{self-loop.tikz}
However, any string graph which contains a self-loop is wire-homeomorphic to
a string graph which contains no self-loops and is therefore a graph in the
sense of Definition~\ref{def:graph}. This can be done by simply removing any
wire-vertex with a self-loop and introducing a circle wire:
\cstikz{closed-wire-vertices.tikz}
String Graph rewriting is performed modulo wire-homeomorphism,
so this
limitation of the graph model is insignificant.
When working with graph grammars, we will be using this notion of graphs
and string graphs.
\added{This is because edNCE graph grammars do not allow for self-loops and
therefore we need to consider string graphs, not as multigraphs, but as graphs
in the sense of Definition~\ref{def:graph}.}

\section{Families of String Diagrams and !-graphs}\label{sec:family}

In this section we will introduce how to rewrite not just singular
string diagrams, but entire sets, or \emph{families}, of string diagrams. This
is described in Subsection~\ref{sub:string-family}. Next, in
Subsection~\ref{sub:bang-graphs}, we describe how we can formally and
finitely represent certain families of string diagrams using a generalisation
of string graphs, called \emph{!-graphs}.

\subsection{Infinite families of string diagrams}\label{sub:string-family}

In the previous section we explained how string diagrams can be used to
represent morphisms in monoidal categories. In particular, a single string
diagram represents a single morphism and by rewriting a string diagram we
establish an equality between two morphisms from the category. However, this
process can be
\replaced{limiting}{limitting},
as it does not immediately extend to infinite sets
of morphisms, which is necessary for some practical applications. For example,
(quantum) protocols and algorithms are usually described in terms which allow
for input of arbitrary size and therefore a complete description in terms of
string diagrams requires the ability to express \emph{infinite families}
(or sets) of string diagrams.

Let's consider a simple example of an algebraic structure where the ability
to reason about infinite families of morphisms can be beneficial. We saw
how we can represent monoids using string diagrams in the previous section.
We also showed how equational reasoning can be done for concrete morphisms.
Using simple inductive arguments we can prove propositions which establish
infinitely many equalities between pairs of morphisms which are related
in a certain way.

For example by using associativity of the monoid operation, we can prove that
for any $n \in
\mathbb N, n \geq 2$, an $n$-ary left-associative application of the monoid
operation $m$ is equal to an $n$-ary right-associative application of $m$.
Using terms this can be expressed as:
\[m \circ (m \circ (m \circ \cdots \circ (m \otimes id_A) \otimes id_A) \otimes
\cdots \otimes id_A) \otimes id_A\]
\[=\]
\[m \circ (id_A \otimes (m \circ (id_A \otimes (\cdots
(id_A \otimes m)\cdots))))
\]
Using string diagrams, the same proposition can be expressed more
elegantly:
\cstikz[0.88]{monoid_family_trees.tikz}
In both cases, we use the familiar $(\cdots)$ notation in order to indicate
that we are representing an infinite family of morphisms. In this
way, we establish an \emph{equational schema}, which represents infinitely
many equalities between pairs of morphisms -- for every specific choice
of $n$, we know that an $n$-ary left-associative application of $m$ is equal
to an $n$-ary right-associative application of $m$. For example, if $n=2$,
then we simply get the associativity axiom. If $n=3$, we get:
\cstikz{monoid_shit.tikz}
The usefulness of this sort of reasoning should be clear -- once we have
established an equational schema between two families of morphisms
then we can easily establish equalities between \emph{concrete}
morphisms by choosing appropriate instantiations (in this case, choosing
the same number of applications on both sides).

Our primary motivation is to design a mathematical framework which supports
equational reasoning for infinite families of string diagrams in a formal way
which can be automated in software tools. We have already seen that we can use
string graphs as a discrete representation of string diagrams. However, the
intuitive $(\cdots)$ notation is not precise enough for direct
implementation in software. Towards this end, we will
provide a short introduction to the theory of !-graphs which
\replaced{allows}{allow}
us to
formally represent and do equational reasoning about certain infinite classes
of string graphs.

\subsection{!-graphs}\label{sub:bang-graphs}

!-graphs (pronounced \textit{bang graphs}) were introduced in
\cite{pattern_graphs} under the name \textit{pattern graphs} and were later
renamed to their current name in \cite{merry_dphil} which contains the most
complete and detailed description of !-graphs.

A !-graph is a generalised string graph which allows us to represent infinite
families of string graphs in a formal way. In addition to wire and node
vertices, !-graphs also have specially marked
\deleted{full}
subgraphs called
\emph{!-boxes}. These subgraphs may be copied infinitely many times (while
preserving connection relations), thus allowing us to represent an infinite set
of string graphs in a finite way.

The motivation behind !-graphs is to remove some of the informalities in
expressing infinite families of rewrite rules and infinite families of graphs
using the $(\cdots)$ notation in order to ease the
development of software proof-assistants.

In this subsection, we will provide a short introduction to the theory of
!-graphs which will allow the reader to follow the rest of the presentation.
We will first define !-graphs as special kinds of graphs in
Subsection~\ref{sub:bang-graph-graph}, then in
Subsection~\ref{sub:bang-graph-language} we will show how !-graphs induce
infinite languages of string graphs.

\subsubsection{Graph-theoretic notions of !-graphs}\label{sub:bang-graph-graph}

We start by describing the labelling alphabets which we shall use.

\begin{definition}[!-Graph Alphabets]\label{def:bang_alphabets}
We will use the following labelling alphabets when working with !-graphs.
\begin{description}
\item[1.] $\Sigma$ is the \emph{alphabet of vertex labels}.
\item[2.] $\mathcal{N} 
\subseteq \Sigma$ is the \emph{alphabet of node-vertex
labels}
\item[3.] $\mathcal{W} \subseteq \Sigma$ is the alphabet of \emph{wire-vertex
labels}.
\item[4.] $\Sigma = \mathcal N \cup \mathcal W \cup \{!\},$ where
$!$ is a special label which we will use to label !-vertices.
\item[5.] $\Gamma$ is the \emph{alphabet of all edge labels}.
\end{description}
\end{definition}

For the rest of the section, all of our constructions will be over an
arbitrary !-graph alphabet as described in the previous definition. Any
vertex with label "!" will be called a \emph{!-vertex}. For a multigraph
$G$, the full subgraph of $G$ consisting only of !-vertices will be
denoted as $\beta(G)$.

So, !-graphs use the same labelling symbols as string graphs, except for
the newly introduced label "!", which marks !-vertices. We will depict
!-graphs in the same way as string graphs, with the exception that !-vertices
will be coloured in blue and will have a small square shape. In addition,
edges incident to !-vertices will also be coloured in blue.

\begin{example}\label{ex:bang-shit-notation}
A !-graph with two node-vertices, one wire-vertex and a single !-vertex.
\cstikz{bang_graph_tashatsi.tikz}
\end{example}

Next, we introduce the notion of \emph{open subgraph} which specifies
the subgraphs which may form a !-box in a sound way.

\begin{definition}[Open Subgraph \cite{gam}]
A subgraph $O$ of a string graph $H$ is said to be \textit{open} if it is a
full subgraph of $H$ and furthermore $\textit{In}(H - O) \subseteq
\textit{In}(H)$ and $\textit{Out}(H - O) \subseteq \textit{Out}(H)$.
\end{definition}

In other words, a subgraph $O$ of a string graph $H$ is open if it is not
adjacent to any wire-vertex in $H - O.$ Therefore, if we create any number
of copies of such a subgraph while preserving the edges connecting it to
the rest of $H$, then the result will again be a string graph, because we
cannot violate the conditions about the in-degree and out-degree of
wire-vertices.

\begin{example}\label{ex:open-shit}
Consider the following string graph, which is the same as
Example~\ref{ex:bang-shit-notation}, but with the !-vertex removed:
\cstikz{string_graph_open_shit.tikz}
There are five different open subgraphs. Two of them are trivial -- the empty
graph and the entire string graph itself. The remaining open subgraphs
are the full subgraphs with vertex sets given by $\{v_2\}, \{v_1,v_2\}$ or
$\{v_2,v_3\}.$ The subgraphs with vertex sets $\{v_1\}$ or $\{v_3\}$ are
not open, because they are adjacent to the wire-vertex $v_2$.
\end{example}

In !-graphs, the full subgraph consisting of only !-vertices must form
a partial order, or more precisely, it must form a posetal graph.

\begin{definition}[Posetal multigraph \cite{merry_dphil}]
A multigraph is called \emph{posetal}, if it contains at most one edge between
any two vertices and, when considered as a relation, forms a partial order.
\end{definition}

\begin{example}
The following graph is posetal:
\cstikz{posetal.tikz}
\end{example}

The next definition is useful as just a simple notational convenience.

\begin{definition}[Forgetful mapping \cite{merry_dphil}]
For every multigraph $G$, we define $U(G)$ to be the graph $G-\beta(G),$ that
is, the full subgraph of $G$ which contains no !-vertices.
\end{definition}

So, if $G$ is the graph from Example~\ref{ex:bang-shit-notation} and
$H$ is the graph from Example~\ref{ex:open-shit}, then $U(G) = H$. !-graphs
can be seen as generalised string graphs and then the mapping $U$ can
be seen as a forgetful functor from the category of !-graphs to the category
of string graphs. This should become clear after we formally define
!-graphs.

!-graphs represent infinite families of string graphs by allowing certain
subgraphs to be copied infinitely many times. For a given !-graph, the
specific subgraphs which may be copied in such a way are given by its !-boxes,
which are introduced next.

\begin{definition}[!-box \cite{merry_dphil}]
Given a
\replaced{multigraph}{mutligraph}
$G$ and a !-vertex $b$, its \emph{!-box}, denoted $B(b)$ is
the full subgraph of $G$ consisting of $b$ and all of its successors (that is,
all vertices which have an in-edge with source $b$).
For a given !-vertex $b$ we will also refer to the subgraph $B(b)$ as the
\emph{contents} of $b$.
\end{definition}

The next definition introduces !-graphs, which are the main
objects of study in this section.

\begin{definition}[!-graph \cite{merry_dphil}]
A \emph{!-graph} is a multigraph $G \in \mathbf{MultiGraph}_{\Sigma,
\Gamma},$ such that:
\begin{enumerate}
\item If $e \in G$ is such that $t(e)$ is a !-vertex, then $s(e)$ is
also a !-vertex.
\item $U(G)$ is a string graph.
\item $\beta(G)$ is posetal.
\item For every !-vertex $b\in G$, $U(B(b))$ is an open subgraph of $U(G).$
\item For any two !-vertices $b, b'\in G$, if $b' \in B(b)$ then
$B(b') \subseteq B(b).$
\end{enumerate}
\end{definition}

\begin{remark}
The above definition is equivalent to the one presented in \cite{merry_dphil},
where all node-vertex types are allowed to have arbitrary arity. In
particular, the first condition is not explicitly stated there, but it follows
from the slice construction which the author uses.
\end{remark}

Condition 1 from the definition above implies that there are no out-edges
from wire-vertices or node-vertices to !-vertices. Condition 2 then justifies
the fact that !-graphs can be seen as generalised string graphs. Condition
3 imposes a partial order on the !-vertices. Condition 5 then requires
that a !-box $B(b')$ must be entirely contained in any other
!-box $B(b)$ where $b$ is greater than $b'$ with respect to the partial order.
Finally, condition 4 implies that copying a !-box any number of times, while
preserving the connection relations with the rest of the graph, would result in
a valid !-graph.

The definition of a !-graph suggests a more compact graphical presentation
which we will be using from now on. We will depict the !-vertices as before,
but instead of explicitly depicting the outgoing edges of a !-vertex $b$, we
will simply draw a blue box around all of the vertices which are its
\replaced{successors}{sucessor}.
Thus the insides of such a blue box is simply the !-box $B(b)$. The !-vertex
itself will be depicted as one of the corners of the blue box.

\begin{example}\label{ex:better-notation}
Using this notation, Example~\ref{ex:bang-shit-notation} becomes:
\cstikz{bang-shit-better.tikz}
This presentation is easier to work with especially when we have multiple
!-boxes which intersect on some part of the !-graph:
\cstikz{bang-shit-better2.tikz}
\end{example}

\subsubsection{Languages of !-graphs}\label{sub:bang-graph-language}

Now that we have described the graph structure of !-graphs in sufficient
detail, we can show how each !-graph induces a language of \emph{concrete}
string graphs.

\begin{definition}[Concrete Graph \cite{gam}]
A !-graph with no !-vertices is called a \textit{concrete graph} or simply a
string graph.
\end{definition}

Given a !-graph $G$ with !-vertex $b$, we can produce new !-graphs from $G$ by
applying one of two operations (any number of times) to $b$. They are described
in the next definition.

\begin{definition}[!-box Operations]
Given a !-graph $G$ and a !-vertex $b$, the two !-box operations are
defined as follows:
\begin{itemize}
  \item $\textrm{KILL}_b(G) := G - B(b)$
  \item $\textrm{EXPAND}_b(G)$ is defined by the pushout of the following
  subgraph inclusions:
\end{itemize}
   \cstikz{expand-operation.tikz}
\end{definition}

\begin{theorem}
Let $G$ be a !-graph and $b\in G$ be a !-vertex. Then, $\textrm{EXPAND}_b(G)$
and $\textrm{KILL}_b(G)$ are also !-graphs.
\end{theorem}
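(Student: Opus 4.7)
The plan is to verify each of the five defining conditions of a !-graph separately for $\mathrm{KILL}_b(G)$ and $\mathrm{EXPAND}_b(G)$, exploiting the fact that $\mathrm{KILL}_b$ is just a full subgraph and $\mathrm{EXPAND}_b$ is built via a pushout in $\mathbf{MultiGraph}$ along a span of monomorphisms, which by Lemma~\ref{lem:multigraph-adhesive} and Lemma~\ref{lem:mono_stable} always exists and has monomorphic legs.

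For $\mathrm{KILL}_b(G) := G - B(b)$, I would argue as follows. Condition 1 is automatic: removing vertices and their incident edges cannot create an edge into a !-vertex from a non-!-vertex. For Condition 2, note that $U(\mathrm{KILL}_b(G))$ is obtained from $U(G)$ by removing some node-vertices and some wire-vertices together with all incident edges; this clearly cannot violate the degree bounds on wire-vertices or introduce node-to-node edges, so it remains a string graph. Condition 3 follows because $\beta(\mathrm{KILL}_b(G))$ is a full subgraph of the posetal graph $\beta(G)$, and partial orders restrict to partial orders. For Condition 4, let $b'$ be a surviving !-vertex; by the nesting Condition 5 for $G$, either $B(b') \subseteq B(b)$ (so $b'$ was killed, contradiction) or $B(b') \cap B(b) = \emptyset$. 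Hence $U(B(b'))$ is unaffected, and since openness in $U(G)$ forbids adjacency to any wire-vertex outside $B(b')$, removing further vertices outside $B(b')$ cannot destroy openness. Condition 5 again follows by restriction of the nesting relation.

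For $\mathrm{EXPAND}_b(G)$, the definition is the pushout of the inclusions $B(b) \hookleftarrow B(b) - \{b\} \hookrightarrow G$, which glues a fresh copy $B(b)'$ of $B(b)$ onto $G$ along the successors of $b$ (identifying the new and old copies of $B(b) - \{b\}$ componentwise, while duplicating $b$ itself into two !-vertices $b,b'$ whose successor sets coincide). I would first argue that this pushout sits inside $\mathbf{MultiGraph}$ using Lemma~\ref{lem:multigraph-adhesive} and then check the five conditions pointwise on the computed colimit. Condition 1 holds because the new !-vertex $b'$ has the same outgoing edge structure as $b$, and no new edges target !-vertices. Condition 3 follows because $\beta(\mathrm{EXPAND}_b(G))$ is obtained from the poset $\beta(G)$ by adding a fresh element $b'$ comparable to exactly the same elements as $b$ (with the same direction), and duplicating an element of a finite poset in this way still yields a poset.

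The two main obstacles are Conditions 2 and 4 for $\mathrm{EXPAND}_b$. For Condition 2, I need to verify that $U(\mathrm{EXPAND}_b(G))$ is a string graph: it is obtained from $U(G)$ by gluing in a disjoint copy of $U(B(b))$ along the (node-vertex) intersection $U(B(b)-\{b\}) \cap U(G-B(b))$. Openness of $U(B(b))$ in $U(G)$ is precisely what ensures this gluing does not occur at any wire-vertex, so in-degrees and out-degrees of wire-vertices are preserved and no node-to-node edges are introduced; hence the result is a string graph. For Condition 4, I need to check that for every !-vertex $c$ of $\mathrm{EXPAND}_b(G)$ (including $b$, its duplicate $b'$, and every !-vertex nested inside or containing $b$ via Condition 5 of $G$), the set $U(B(c))$ is open in $U(\mathrm{EXPAND}_b(G))$. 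This case-splits according to whether $c \in B(b)$, $c = b'$ (a copy of an old !-vertex lying below or equal to $b$), or $c$ lies outside $B(b)$ and possibly above it; in each case the openness follows by transporting openness along the pushout, using that $U(B(b))$ was open in $U(G)$ and that for nested !-vertices $B(b') \subseteq B(b)$ by Condition 5. Finally, Condition 5 propagates to $\mathrm{EXPAND}_b(G)$ because the nesting relation on the duplicated !-vertices is inherited coherently from the original $G$. The hard part is really the bookkeeping in Condition 4: ensuring openness is preserved simultaneously for the old !-boxes, the new copies, and any !-boxes containing $b$.
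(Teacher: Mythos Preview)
The paper's own proof is not self-contained: it cites Theorem 4.3.4 of Merry's thesis after noting that EXPAND is COPY followed by DROP. Your direct verification is thus a genuinely different route, but it has two concrete gaps.

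For $\mathrm{KILL}_b$, Condition 4, you assert that a surviving !-vertex $b'$ satisfies either $B(b') \subseteq B(b)$ or $B(b') \cap B(b) = \emptyset$. Condition 5 gives no such dichotomy: it says only that $b' \in B(b)$ forces $B(b') \subseteq B(b)$. Non-nested !-boxes may overlap, and the paper devotes considerable attention to exactly this situation (cf.\ the definition of overlap and the class $\mathbf{BGTO}$). The openness claim is still true --- openness of $U(B(b'))$ in $U(G)$ already forbids adjacency to any wire-vertex outside $U(B(b'))$, hence in particular to any wire-vertex of $U(G - B(b))$ lying outside $B(b')$ --- but your argument as written is wrong.

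For $\mathrm{EXPAND}_b$, your interpretation of the pushout is that it ``duplicat[es] $b$ itself into two !-vertices $b,b'$ whose successor sets coincide''. That is not the operation: the paper says EXPAND ``creates a copy of the !-graph in its \emph{contents}'', and the proof identifies it with COPY followed by DROP, which duplicates the contents $B(b)\setminus\{b\}$ (including any nested !-vertices) and then discards the fresh copy of $b$. Under your reading a subsequent KILL on either of $b,b'$ would delete the single shared contents, so the language of a one-!-box !-graph would be a singleton rather than an infinite family. Your checks of Conditions 2--5 for $\mathrm{EXPAND}_b(G)$ are therefore carried out for the wrong graph; in particular, the Condition 2 step about ``gluing not occurring at any wire-vertex'' and the Condition 3 step about ``adding a fresh element $b'$ to the poset'' both describe the wrong construction.
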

\begin{proof}
This follows from Theorem 4.3.4 of \cite{merry_dphil} after recognising that an
EXPAND operation is simply a COPY operation followed by a DROP operation, as
defined in that work.
\end{proof}

From an operational point of view, applying a KILL operation to a !-box has
the effect that the entire !-box is removed from the graph.

\begin{example}
Consider the !-graphs from Example~\ref{ex:better-notation}.
Applying a KILL operation to the first one yields the following result:
\cstikz{bang-kill1.tikz}
The second !-graph in that example has two different !-boxes and there
are different ways we can apply a KILL operation:
\cstikz{bang-kill2.tikz}
\end{example}

Applying an EXPAND operation to a !-box creates a copy of the !-graph
in its contents and connects it to the rest of the graph in the same way
as the original.

\begin{example}
Consider the !-graphs from Example~\ref{ex:better-notation}.
Applying an EXPAND operation to the first one yields the following result:
\cstikz{bang-expand1.tikz}
The second !-graph in that example has two different !-boxes and there
are different ways we can apply an EXPAND operation:
\cstikz[0.95]{bang-expand2.tikz}
\end{example}

When working with !-graphs, we are always interested in the languages which
they induce. Therefore, the following definition is crucial.

\begin{definition}[!-graph Language \cite{gam}]
The \textit{language} of a !-graph $G$ is the set of all concrete string
graphs obtained by applying a sequence of !-box operations on $G$.
\end{definition}

\begin{example}\label{ex:bang-language}
The languages of the two !-graphs from the previous examples are given by:
\[
\left\llbracket \stikz{bang-govno.tikz} \right\rrbracket =
\stikz{bang_graph_semantics.tikz}
\]
and
\[
\left\llbracket \stikz[0.8]{bang-govno2.tikz} \right\rrbracket =
\stikz[0.8]{bang_graph_semantics2.tikz}
\]
Therefore, the first !-graph represents the following family of string
diagrams:
\cstikz{diagram-family-laino2.tikz}
whereas the second represents:
\cstikz{diagram-family-laino.tikz}
\end{example}

In later chapters, we will compare !-graphs with context-free graph grammars
and B-ESG grammars in terms of the expressiveness of their languages. In order
to provide a more detailed picture, we will introduce two subclasses of
!-graphs, which are characterised by the relationships between their !-boxes.

\begin{definition}[Nested !-boxes]
Given a !-graph $G$ and !-vertices $b$ and $b'$, if $b' \in B(b)$ then we say
that the !-box $B(b')$ is \emph{nested} inside the !-box
\replaced{$B(b)$}{$b$}.
\end{definition}

Note, that if $b'$ is nested inside $B(b),$ the definition of !-graph requires
that the contents of $b$ contain the contents of $b'$, that is, $B(b')
\subseteq B(b).$ The second !-graph from Example~\ref{ex:bang-language} has a
pair of nested !-boxes -- the !-box $B(b_2)$ is nested inside $B(b_1).$

\begin{definition}[Overlap and Trivial Overlap \cite{gam}]
Given a pair of non-nested !-boxes $b_1$ and $b_2$, we say that $b_1$ and $b_2$ are
\textit{overlapping} if $B(b_1) \cap B(b_2) \not =
\emptyset$.
  $b_1$ and $b_2$ \textit{overlap trivially} if $B(b_1) \cap B(b_2)$ consists
of only the interior of zero or more closed wires, where one endpoint is a
node-vertex only in $B(b_1)$ and the other is a node-vertex only in $B(b_2)$.
\end{definition}

\begin{example}[\cite{gam}]
We illustrate the different kinds of relationships between !-boxes using
this example:
\cstikz{bang_relationship.tikz}
$b_1$ and $b_2$ overlap trivially, because they only overlap on the interior of
a wire connecting node-vertices in different !-boxes. $b_3$ and $b_4$ overlap
non-trivially, because their intersection contains a node-vertex.
$b_5$ and $b_6$ overlap non-trivially because their intersection consists of
a wire-vertex which is not part of a wire whose endpoints are in distinct
!-boxes. For any other pair of !-boxes, their intersection is empty and
therefore they overlap trivially as well.

Also, there is no pair of nested !-boxes. In particular, observe that the
contents of $b_6$ are just the wire-vertex and the !-vertex $b_6$ itself.
However, $b_6 \not \in B(b_5),$ because $b_6$ is depicted as being outside
of the !-box $B(b_5)$.
\end{example}

\begin{definition}[\textbf{BGTO} \cite{gam}]\label{def:bgto}
The set of all languages which are induced by !-graphs is denoted \textbf{BG}.
A !-graph where any two non-nested !-boxes do not overlap is called a
\textit{!-graph with no overlap}. The set of all languages which are induced
by !-graphs with no overlap is denoted \textbf{BGNO}.  A !-graph where any two
non-nested !-boxes overlap trivially is called a \textit{!-graph with trivial
overlap}. The set of all languages induced by these !-graphs is denoted
\textbf{BGTO}.
\end{definition}

It is easy to see that $\mathbf{BGNO} \subsetneq \mathbf{BGTO} \subsetneq
\mathbf{BG}.$ Before we conclude this section, we point out a subtle, yet
very important, difference in the operational semantics between nested
and non-nested overlapping !-boxes. Consider the following two
!-graphs:
\cstikz{bang-pisna-mi.tikz}
The language of $H_1$ is given by:
\[L(H_1) = \stikz[0.9]{bang-ezik.tikz}\]
The language of $H_2$ is given by:
\[L(H_2) = \stikz[0.9]{bang-ezik2.tikz}\]

That is, if we view these string graph languages as the string diagram
languages they represent, then $L(H_1)$ is the language of all trees of depth
at most two, whereas $L(H_2)$ is the language of all balanced trees of depth
at most two. For every string graph $H \in L(H_1)$, every node-vertex connected
to the root via a wire has an arbitrary number of children. However, for every
string graph $H \in L(H_2)$, every two node-vertices connected to the root via
a wire have the same number of children. This is because for every
expansion of $b_1$ ($b_2$), a copy of the string graph in their intersection is
produced which is added to the contents of $b_2$ ($b_1$):
\cstikz{pisna-mi-jivota-veche.tikz}
As a result, if one of the !-boxes is expanded $n$ times and the other
$m$ times, then there would be $nm$ copies of the string graph in their
intersection.

\section{Graph Grammars}\label{sec:graph-grammars}
Graph Grammars have been developed as a generalization of the well-known
context-free grammars for strings. A graph grammar can be thought of as a
finite collection of productions which specify instructions on how to generate
(infinite) sets of graphs. In this section we will provide an introduction
to the theory of context-free graph grammars.  They can be split up into two
main groups -- Vertex Replacement (VR)~\cite{c-ednce} and
\replaced{Hyperedge}{Hypergraph}
Replacement (HR)~\cite{hr-grammars}.

\added{The VR grammars represent the \emph{connecting} approach to graph
transformation, whereas the HR grammars represent the \emph{gluing} approach to
graph transformation. The connecting approach is usually described in
set-theoretic terms and the gluing approach is usually described in algebraic
(or categorical) terms. In later chapters, we will model equational reasoning
using DPO rewriting which is an example of the gluing approach. As we have
seen, DPO rewrites may be understood categorically. At first glance, this
consideration makes HR grammars a more attractive candidate to represent
families of string graphs. As we show in Chapter~\ref{ch:context-free}, HR
grammars and VR grammars have the same expressive power on string graphs.
However, we argue that this expressive power is insufficient and we need to
introduce an extension in order to be able to represent important languages of
string graphs. VR grammars are strictly more expressive on (unrestricted)
graphs and there is a simple extension which allows them to represent the
languages we are interested in. However, the author does not know how to extend
HR grammars to serve the same purpose and for this reason we will be using VR
grammars for the rest of the thesis.}

\deleted{We will focus on VR grammars because of
their strictly greater expressive power and because there is a natural subclass
of VR grammars that has exactly the same expressive power as HR grammars on
graphs.}

We will follow the presentation in \cite{c-ednce} as it is the
standard and most comprehensive reference on VR grammars. In particular,
we will be working with confluent edNCE grammars (C-edNCE), which
are the largest class of deterministically confluent graph grammars.

\subsection{edNCE grammars}

Throughout this section and the rest of the thesis, we will be working with
edge and vertex labelled graphs. There are finitely many labels which are
split into several different alphabets. This is made precise by the
following definition and we will use the same alphabets throughout this work
unless otherwise noted.

\begin{definition}[Alphabets]\label{def:alphabets}
We will use the following alphabets:
\begin{description}
\item[1.] $\Sigma$ is the \emph{alphabet of all vertex labels}.
\item[2.] $\Delta \subseteq \Sigma$ is the \emph{alphabet of terminal
vertex labels}.
\item[3.] $\mathcal{N} 
\subseteq \Delta$ is the \emph{alphabet of node-vertex
labels} and $\mathcal{W} = \Delta - \mathcal{N}$ is the alphabet of
\emph{wire-vertex labels}.
\item[4.] $\Gamma$ is the \emph{alphabet of all edge labels}.
\item[5.] $\Omega \subseteq \Gamma$ is the \emph{alphabet of all final edge
labels}.
\end{description}
\end{definition}

In most of this thesis, we will have $\Gamma = \Omega$, so all edge labels
will be final. However, we will always need nonterminal vertex labels
for our grammars in order to
\replaced{perform}{perfrom}
derivations. In addition, because we
will be working with string graphs, the terminal vertex labels are
partitioned into two sets -- node-vertices and wire-vertices. In this
section this distinction between node-vertices and wire-vertices can be
ignored, but this difference will be significant in later chapters when we
start working with string graphs.

edNCE graph grammars are not defined over multigraphs, but over
graphs in the sense of Definition~\ref{def:graph}. So, this is the graph
model which we are working with in this section.
Next, we formally
introduce the notion of a graph language.

\begin{definition}[Graph Language \cite{c-ednce}]
  The set of all graphs over $\Sigma$ and $\Gamma$ is denoted by $GR_{\Sigma,
  \Gamma}$. The set of all graphs modulo graph isomorphism is denoted by
  $[GR_{\Sigma, \Gamma}]$. A \textit{graph language} is a subset of
  $[GR_{\Sigma, \Gamma}]$.
\end{definition}

So, a graph language is a potentially infinite set of graphs, where isomorphic
graphs are identified together. The graph grammars that we are interested in
will be generating graph languages. For simplicity, we shall simply refer to
graph grammars as \emph{grammars} and to graph languages as \emph{languages}
and it should be clear from the context when we are talking about graph
languages or string languages generated by string grammars.

The next concept we introduce is an extension to our notion of graph. An
\emph{extended graph} provides the necessary information on how a
specific graph can be used to replace a nonterminal vertex and connect it to
the local neighbourhood of the nonterminal vertex that is to be replaced.
\begin{definition}[Extended Graph \cite{c-ednce}]
\label{def:extended-graph}
  An \textit{Extended Graph} over $\Sigma$ and $\Gamma$ is a pair $(H,
  C),$ where $H \in GR_{\Sigma,\Gamma}$ is a graph and $C \subseteq \Sigma
  \times \Gamma \times \Gamma \times V_H \times \{in, out\}$. $C$ is called
  a \textit{connection relation} and its elements $(\sigma, \beta, \gamma, x,
  d)$ are called \textit{connection instructions}.
  The set of all extended graphs over $\Sigma$ and $\Gamma$ is denoted
  by $EGR_{\Sigma, \Gamma}$.
\end{definition}

\begin{remark}
In the literature, extended graphs are commonly referred to as \emph{graphs
with embedding}. However, in the next chapters we will be introducing the
notion of grammar embedding which is defined in terms of extended graph
embeddings. In order to avoid confusion between graphs with embedding and
embeddings of graphs or embeddings of graphs with embeddings we have simply
renamed the term. Other than this, the definition is the same as the one in
\cite{c-ednce}.
\end{remark}

The name for an extended graph is well-justified.
If an extended graph has no connection instructions (i.e. $C=\emptyset$),
then we can think of it as just an ordinary graph.
Before we explain the significance of the connection instructions, we will
first introduce a graphical notation for depicting extended graphs (and
their connection instructions). The notation is essentially the same as in
\cite{c-ednce} but it only differs in some cosmetic details.
The graph part is drawn as before. Around the graph, we will draw a
rectangular box. A connection instruction $(\sigma, \beta, \gamma, x, in)$ will
be depicted in the following way: we place a $\sigma$-labelled vertex outside of
the
box; the label $\beta$ is put next to an arc which we create from the
$\sigma$-labelled vertex to
the boundary of the box; $\gamma$ is then placed next to another arc which
goes from the endpoint of the previous edge to the vertex $x$ of the graph.
A connection instruction $(\sigma, \beta, \gamma, x, out)$ is drawn in
the same way, except that we reverse the direction of both arcs.

\begin{example}\label{ex:extended_graph}
Let's consider the following alphabets which we will use for the next few
examples. Let $\Delta = \{\sigma_1, \sigma_2, \sigma_3\},$ and $\Sigma =
\Delta \cup \{X,Y\}.$ That is, the nonterminal vertex labels are
$X$ and $Y$ and the rest of the labels are terminal. Let
$\Gamma = \Omega = \{\alpha, \beta, \gamma\}.$ So, we have three kinds
of edge labels, all of which are final. $\sigma_1$-labelled vertices
will be coloured in white, $\sigma_2$-labelled vertices will be coloured in
grey and $\sigma_3$-labelled vertices will be coloured in black.

Then,
the following extended graph $(D, C_D)$, where $V_D = \{v_1, v_2, v_3\}$,
$\lambda_D(v_1) = \sigma_1,$
$\lambda_D(v_2) = X,$
$\lambda_D(v_3) = \sigma_3,$
$E_D = \{(v_1, \beta ,v_3), (v_1, \beta, v_2), (v_3, \gamma, v_2)\}$ and
$C_D = \{
(\sigma_1, \alpha, \beta, v_1, in),
(\sigma_1, \alpha, \alpha, v_2, in),
(\sigma_1, \gamma, \alpha, v_2, out),
\}$
is depicted in the following way:
\cstikz{gg_graph_embedding.tikz}
However, when working with graph grammars, we are only interested in generating
graphs (and graph languages) up to isomorphism, so the names of the vertices
will often be irrelevant and we may omit them:
\cstikz{gg_graph_embedding2.tikz}
When there are two or more connection instructions of the form 
$(\sigma, \beta, \gamma_i, x_i, d),$ where only the $\gamma_i$ and
$x_i$ may vary, then we will sometimes depict them as having the same
source outside of the box with the arcs branching out after they cross
the perimeter of the box:
\cstikz{gg_graph_embedding3.tikz}
\end{example}

We can think of an extended graph as just a normal graph with some
additional information which describes how the graph should be substituted into
another graph while replacing a nonterminal vertex.
If we are given a mother graph $(H, C_H)$ with a nonterminal vertex
$v \in H$ and a daughter graph $(D, C_D)$, then if we replace the vertex
$v$ with $(D,C_D),$ we get a new extended graph obtained in the following way.
Every connection instruction $(\sigma, \beta, \gamma,x,in) \in C_D$ in
the daughter graph
means that for every $\sigma$-labelled vertex $w$ in the
mother graph for which there is a $\beta$-labelled edge going \textbf{in}to the
nonterminal vertex $v$ of the mother graph, then the
substitution process will establish a $\gamma$-labelled edge from $w$ to $x$.
This
should become more clear after referring to example \ref{ex:subst} which is
presented in a graphical form as well. The meaning for $(\sigma, \beta,
\gamma,x,out)$ is analogous.
Next, we provide a formal definition for the substitution operation, which
is the fundamental mechanism underlying derivations in edNCE grammars.

\begin{definition}[Graph Substitution \cite{c-ednce}]
  Let $(H, C_H), (D, C_D) \in EGR_{\Sigma,\Gamma}$ be two extended graphs,
  where $H$ and $D$ are disjoint. Let $v \in V_H$ be a vertex of $H$.
  The \textit{substitution} of $(D, C_D)$ for $v$ in $(H, C_H)$ is denoted by
  $(H, C_H)[v/(D,C_D)]$ and is given by the extended graph whose
  components are:
  \begin{align*}
    V \quad=\quad &(V_H - \{v\} )\cup V_D\\
    E \quad=\quad &\{(x,\gamma,y)\in E_H | x \not = v, y\not = v\} \cup E_D\\
    &\cup \{(w,\gamma,x)\ |\ \exists \beta \in \Gamma : (w,\beta,v) \in E_H
      ,(\lambda_H(w), \beta, \gamma, x,in)\in C_D \}\\
    &\cup \{(x,\gamma,w)\ |\ \exists \beta \in \Gamma : (v,\beta,w) \in E_H
      ,(\lambda_H(w), \beta, \gamma, x,out)\in C_D \}\\
    \lambda(x) \quad=\quad
      &\begin{cases}
	\lambda_H(x) & \text{if } x \in (V_H - \{v\})\\
	\lambda_D(x) & \text{if } x \in V_D
      \end{cases}\\
    C \quad=\quad &\{(\sigma,\beta, \gamma,x,d) \in C_H\ |\ x \not = v\}\\
    &\cup \{(\sigma,\beta, \delta,x,d)\ |\ \exists \gamma \in \Gamma :
    (\sigma,\beta, \gamma,v,d) \in C_H,(\sigma, \gamma, \delta,x,d)\in C_D\}
  \end{align*}
\end{definition}

So, the substitution process may use the connection instructions in order
to create new edges between the mother graph and the daughter graph. These
new edges will be called \emph{bridges}.

\begin{definition}[Bridges \cite{c-ednce}]
The edges of $(H, C_H)[v/(D,C_D)]$ that are established by the substitution
process, i.e., that are not in $E_H$ or $E_D$, are called \emph{bridges}.
\end{definition}

We will shortly introduce edNCE grammars, which
\replaced{perform}{perfrom}
derivations by
substituting nonterminal vertices with extended graphs. Therefore, it
is crucial to understand the substitution process and we will consider several
examples.

\begin{example}\label{ex:subst}
Let's use the same alphabets from Example~\ref{ex:extended_graph} and
the same extended graph $(D,C_D)$. We will use this
graph as a daughter graph, that is, we will be substituting it into a mother
graph, which we will call $(H,C_H).$
  $(H,C_H)$ is shown on the left and $(D,C_D)$ is on the right:
  \cstikz[0.95]{gg_subst_graph.tikz}

  Substituting $(D,C_D)$ for the nonterminal
  vertex $y$ with label $Y$ in the above graph yields the following result:
  \begin{center}
    \begin{tikzpicture}
	\begin{pgfonlayer}{nodelayer}
		\node [style=none] (0) at (-2.5, 1.25) {};
		\node [style=none] (1) at (-2.5, -3) {};
		\node [style=none] (2) at (5.25, -3) {};
		\node [style=none] (3) at (5.25, 1.25) {};
		\node [style=greynode] (4) at (-1.75, -2) {};
		\node [style=nodev] (5) at (-1.75, -1) {};
		\node [style=blacknode] (6) at (-1.75, 0.75) {};
		\node [style=nodev] (7) at (2.5, 0) {};
		\node [style=none] (8) at (2, -1) {$\beta$};
		\node [style=none] (9) at (-0.5, -2.25) {$\alpha$};
		\node [style=box] (10) at (0.5, -2) {$X$};
		\node [style=none] (11) at (4, -1) {$\beta$};
		\node [style=none] (12) at (2.5, -2.25) {$\gamma$};
		\node [style=blacknode] (13) at (4.5, -2) {};
		\node [style=none] (14) at (-1.5, -0.25) {$\beta$};
		\node [style=none] (15) at (-4.75, -1.25) {$(H,C_H)[y/(D,C_D)]:$};
	\end{pgfonlayer}
	\begin{pgfonlayer}{edgelayer}
		\draw [style=simple] (0.center) to (3.center);
		\draw [style=simple] (3.center) to (2.center);
		\draw [style=simple] (2.center) to (1.center);
		\draw [style=simple] (1.center) to (0.center);
		\draw [style=->] (5) to (6);
		\draw [style=->] (7) to (10);
		\draw [style=->] (7) to (13);
		\draw [style=->] (13) to (10);
		\draw [style=->] (10) to (4);
	\end{pgfonlayer}
\end{tikzpicture}
  \end{center}
  Note that, only one connection instruction is used to establish bridges.
  The other two are not used, because the edge connecting to the nonterminal
  vertex in $(H,C_H)$ is not of the appropriate type. In the following
  several examples, the graph on the left is the mother graph and the
  graph on the right is the result of substituting in $(D,C_D)$ as before:
  \cstikz[0.9]{big_subst.tikz}
  In the above example, we see that each connection instruction in
  $C_D$ is used to establish one new bridge, while the nonterminal vertex
  has only two incident edges.
  \cstikz[0.9]{big_subst2.tikz}
  In the above example, a single connection instruction is used to establish
  two bridges.
  \cstikz[0.9]{big_subst3.tikz}
  In this example, we see that connection instructions in the mother graph
  may also be used to establish new connection instructions in the
  resulting graph in the same way that edges are used.
\end{example}

An important property of the substitution operation is that it is associative.

\begin{lemma}[\cite{c-ednce}]
Let $K,H,D$ be three mutually disjoint extended graphs. Let $w$ be
a vertex of $K$ and $v$ a vertex of $H$. Then, $K[w/H][v/D] = K[w/H[v/D]].$
\end{lemma}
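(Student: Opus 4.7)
The plan is to prove the equality by comparing both sides component-wise as extended graphs, verifying that the vertex sets, labelling functions, edge sets, and connection relations all agree. Since $K$, $H$, $D$ are mutually disjoint, both $K[w/H][v/D]$ and $K[w/H[v/D]]$ have vertex set $(V_K - \{w\}) \cup (V_H - \{v\}) \cup V_D$, and the labelling function, being defined case-wise by the origin of each vertex, agrees on both sides by a direct check. These parts are essentially bookkeeping.

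The substantive work is showing the edge sets coincide. I would partition the candidate edges by the provenance of their endpoints into six cases: both endpoints in $V_K - \{w\}$; both in $V_H - \{v\}$; both in $V_D$; one in $V_K - \{w\}$ and one in $V_H - \{v\}$; one in $V_H - \{v\}$ and one in $V_D$; and one in $V_K - \{w\}$ and one in $V_D$. The first three are the original edges of $K$, $H$, $D$, preserved by both compositions. The fourth is a single-step bridge established through $C_H$ using an edge $(k, \beta, w) \in E_K$ (or its reverse); on the LHS it appears in the first substitution and survives the second (since it does not touch $v$), and on the RHS it is generated by the outer substitution using an unchanged instruction of $C_H \subseteq C_{H[v/D]}$. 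The fifth is symmetric, using $C_D$.

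The main obstacle is the sixth case: bridges between $V_K$ and $V_D$. On the LHS these arise in two stages. The inner substitution $K[w/H]$ produces a bridge $(k, \gamma, v)$ (or reversed) whenever $(k, \beta, w) \in E_K$ and $(\lambda_K(k), \beta, \gamma, v, d) \in C_H$; the outer substitution then uses this bridge, together with an instruction $(\lambda_K(k), \gamma, \delta, x, d) \in C_D$, to produce a bridge $(k, \delta, x)$. On the RHS the inner substitution $H[v/D]$ directly composes these two instructions into a single instruction $(\lambda_K(k), \beta, \delta, x, d) \in C_{H[v/D]}$ via the last clause in the definition of $C$, and the outer substitution then uses it with the original edge $(k, \beta, w) \in E_K$ to produce the same bridge $(k, \delta, x)$. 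The composition clause in the definition of substitution is engineered precisely so these two routes coincide; writing out both chains with matching labels $\beta, \gamma, \delta$ makes the equality transparent.

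The connection relation of the result is handled by the same template: instructions from $C_K$ untouched by either substitution appear identically on both sides; instructions composed once with either $C_H$ or $C_D$ match by the single-step bridge arguments above; and instructions composed twice, traversing first $C_H$ and then $C_D$, match by the same associativity of instruction composition that drives case (vi). The argument throughout is a disciplined unfolding of definitions, and the only conceptual point is recognising that forming bridges-of-bridges in $K[w/H][v/D]$ is realised on the other side by the composition of connection instructions inside $H[v/D]$.
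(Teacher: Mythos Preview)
The paper does not prove this lemma; it is imported from \cite{c-ednce} and merely stated, with the remark that associativity together with confluence justifies the use of derivation trees. So there is no proof in the paper to compare against.

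Your proposal is correct and is the standard direct verification. The case analysis is complete, and you have identified the only nontrivial point: the sixth edge case (and its analogue for connection instructions), where a two-step bridge $K \to H \to D$ on the LHS corresponds on the RHS to a single bridge produced by the composed connection instruction in $C_{H[v/D]}$. The clause
\[
\{(\sigma,\beta,\delta,x,d) \mid \exists\gamma:\ (\sigma,\beta,\gamma,v,d)\in C_H,\ (\sigma,\gamma,\delta,x,d)\in C_D\}
\]
in the definition of substitution is exactly what makes this match, and you say so. One small addition worth making explicit when you write it out in full: in case (vi) an edge in $K[w/H]$ from $k\in V_K\setminus\{w\}$ to $v$ can \emph{only} be a bridge (it cannot be an original edge of $K$ or $H$ since $k\notin V_H$ and $v\notin V_K$), which is what forces the chain of instructions you describe; you implicitly use this but it is the step a reader might pause on.
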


Associativity of graph substitution, together with confluence
ensure that the derivations which we will be
interested in may be adequately described using derivation trees in the
usual sense.

Next, we define the concept of an edNCE Graph Grammar. edNCE is an
abbreviation for \textbf{N}eighbourhood \textbf{C}ontrolled \textbf{E}mbedding
for \textbf{d}irected graphs with dynamic \textbf{e}dge relabelling. The
justification for this name is the following: derivations in an edNCE grammar
consist of performing graph substitution by replacing a nonterminal vertex
with an extended graph as specified by some production of the grammar. As
we have already seen, embedding an (extended) graph depends only on the
neighbourhood of the nonterminal vertex which is being replaced, hence
the \textbf{NCE} part of the abbreviation. Clearly, we are working with
directed graphs (hence the \textbf{d} letter in the abbreviation). Finally,
the \textbf{e} in the name stands for dynamic edge relabelling which means
that edges connected to a nonterminal vertex may get their labels changed
after the vertex is replaced with another graph.

\begin{definition}[edNCE Graph Grammar \cite{c-ednce}]
  An \textit{edNCE Graph Grammar} is a tuple $G = (\Sigma, \Delta, \Gamma,
  \Omega, P, S)$, where 
  \begin{itemize}
    \item $\Sigma$ is the alphabet of vertex labels
    \item $\Delta \subseteq \Sigma$ is the alphabet of terminal vertex labels
    \item $\Gamma$ is the alphabet of edge labels
    \item $\Omega \subseteq \Gamma$ is the alphabet of final edge labels
    \item $P$ is a finite set of productions
    \item $S \in \Sigma - \Delta$ is the initial nonterminal label
  \end{itemize}
  Productions are of the form $X \rightarrow (D, C)$, where $X \in \Sigma -
  \Delta$ is a nonterminal label and $(D, C) \in EGR_{\Sigma, \Gamma}$ is an
  extended graph.
For a production $p:= X \to (D,C)$, we shall say that the \emph{left-hand side}
of $p$ is $X$ and denote it with $lhs(p)$. The \emph{right-hand side} of $p$ is
the extended graph $(D,C)$ and we denote it with $rhs(p)$. Vertices
which have a label from $\Delta$ are called \emph{terminal vertices} and
vertices with labels from $\Sigma - \Delta$ are called \emph{nonterminal
vertices}. Edges with labels from $\Omega$ are called \emph{final} edges and
edges with labels from $\Gamma - \Omega$ are called \emph{non-final} edges. An
(extended) graph is called terminal if all of its vertices are
terminal.
\end{definition}

\begin{remark}
We will always work with grammars where all edges are final. This does
not result in a loss of expressive power.
We shall also
refer to $rhs(p)$ as the \emph{body} of the production and we shall refer to
$lhs(p)$ as the \emph{label} of the production.
\end{remark}

Instead of presenting grammars using set-theoretic notation, we will often
present them graphically as it is more compact and intuitive. We will use the
same notation as in \cite{c-ednce}. A grammar is simply a set of productions
with a designated starting nonterminal label. The graphical presentation
simply depicts each production of the grammar. Productions are extended graphs,
which we already know how to depict, together with a nonterminal production
label. We depict a production by just drawing its associated extended graph and
placing its production label at the top-left corner of the bounding frame.

\begin{example}\label{ex:complete_grammar}
Let's consider an example edNCE grammar. 
The grammar below generates the set of all undirected complete graphs $K_n$ and
will be referred to multiple times in this thesis. The edges and terminal
vertices all have the same type and so we don't depict them for simplicity.
\cstikz{gg_kn_grammar.tikz}
The above grammar has three productions, given by $S \to H_1$, 
$X \to H_2$ and $X \to H_3,$ where $H_1$ is the leftmost extended graph,
$H_2$ is the middle extended graph and $H_3$ is the rightmost extended
graph. This example is used to demonstrate the graphical depiction of
edNCE grammars, but we will soon turn our attention to their languages.
\end{example}

Before we introduce the notion of derivation in an edNCE grammar, we first
have to define production copies. In turn, they depend on the notion
of extended graph homomorphism which we define next.

\begin{definition}[Extended Graph homomorphism]
Given two extended graphs $(H,C_H), (K,C_K) \in EGR_{\Sigma,\Gamma}$, an
\emph{extended graph homomorphism} between $(H, C_H)$ and $(K,C_K)$ is a
function $f: V_H \to V_K$, such that $f$ is a graph homomorphism from
$H$ to $K$ and if  $(\sigma, \beta, \gamma, x, d) \in C_H$ then
$(\sigma, \beta, \gamma, f(x), d) \in C_K$. If, in addition, $f$ is a graph
isomorphism and $C_k = \{(\sigma, \beta, \gamma, f(x), d)\ |\ (\sigma,
\beta,\gamma,x,d) \in C_H\},$
then we
say that $f$ is an \emph{extended graph isomorphism}.
\end{definition}

Graph substitution is only defined for disjoint pairs of graphs. However,
edNCE grammars have finitely many productions which specify what graphs need
to be substituted in for nonterminal vertices. We are only
interested in graph languages defined up to isomorphism and for this reason
it is necessary to consider isomorphic copies of productions if we wish to have
arbitrarily long derivation sequences.

\begin{definition}[Production copy \cite{c-ednce}]
Two productions $X_1 \to (D_1,C_1)$ and $X_2 \to (D_2, C_2)$ are isomorphic if
$X_1 = X_2$ and $(D_1,C_1)$ is isomorphic to $(D_2,C_2)$ as an extended graph.
For an edNCE grammar  $G = (\Sigma, \Delta, \Gamma, \Omega, P, S)$,
we denote with $\emph{copy}(P)$ the infinite set of all productions that are
isomorphic to some production in $P$. An element of $\emph{copy}(P)$ is called
a \emph{production copy} of $G$.
\end{definition}

The next notion is crucial for understanding the operation of edNCE grammars.
It formalizes which kinds of graph substitutions are legal for a given
edNCE grammar. Like their context-free string counterparts, derivations for
graph grammars consist of a sequence of expanding nonterminals and replacing
them with the body of an applicable production.

\begin{definition}[Derivation \cite{c-ednce}]
For a graph grammar $G=(\Sigma,\Delta,\Gamma,\Omega,P,S)$ and extended graphs
$H, H' \in EGR_{\Sigma,\Gamma},$ let $v \in V_H$ be a vertex
in $H$ and $p: X\to(D,C) \in \emph{copy}(P)$ be a production copy of the
grammar, such that $H$ and $D$ are disjoint. We say $H \Longrightarrow_{v,p}
H'$ is
a \textit{derivation step} if $\lambda_H(v) = X$ and $H'=H[v/(D,C)]$. If $v$
and $p$ are clear from the context, then we write $H \Longrightarrow H'$. A
sequence of derivation steps $H_0 \Longrightarrow_{v_1,p_1} H_1
\Longrightarrow_{v_2,p_2} \cdots \Longrightarrow_{v_n,p_n} H_n$ is called a
\textit{derivation}. We write $H \Longrightarrow_* H'$ if there
exists a derivation from $H$ to $H'$. A derivation $H \Longrightarrow_*
H'$ is \emph{concrete}
if $H'$ is terminal.
\end{definition}

\begin{example}
  A derivation in the grammar from Example~\ref{ex:complete_grammar} for
  the complete graph on four vertices $K_4$ is given by:
  \begin{center}
    \begin{tikzpicture}
	\begin{pgfonlayer}{nodelayer}
		\node [style=box] (0) at (-11, -0.25) {$S$};
		\node [style=none] (1) at (-10, -0.25) {$\Longrightarrow$};
		\node [style=box] (2) at (-6.5, -0.25) {$X$};
		\node [style=circ] (3) at (-8.75, -0.25) {};
		\node [style=none] (4) at (-5.25, -0.25) {$\Longrightarrow$};
		\node [style=box] (5) at (-2, -0.75) {$X$};
		\node [style=circ] (6) at (-4.25, -0.75) {};
		\node [style=circ] (7) at (-4.25, -3.25) {};
		\node [style=circ] (8) at (-4.25, -2) {};
		\node [style=none] (9) at (-5.25, -2.75) {$\Longrightarrow$};
		\node [style=box] (10) at (-2, -3.25) {$X$};
		\node [style=circ] (11) at (-2, -2) {};
		\node [style=circ] (12) at (-4.25, -5.5) {};
		\node [style=circ] (13) at (-2, -4.25) {};
		\node [style=none] (14) at (-5.25, -5) {$\Longrightarrow$};
		\node [style=circ] (15) at (-2, -5.5) {};
		\node [style=circ] (16) at (-4.25, -4.25) {};
		\node [style=circ] (17) at (-4.25, 0.5) {};
	\end{pgfonlayer}
	\begin{pgfonlayer}{edgelayer}
		\draw [style=simple] (3) to (2);
		\draw [style=redu] (17) to (5);
		\draw [style=redu] (6) to (17);
		\draw [style=simple] (8) to (7);
		\draw [style=redu] (8) to (11);
		\draw [style=redu] (11) to (7);
		\draw [style=simple] (6) to (5);
		\draw [style=simple] (11) to (10);
		\draw [style=redu] (7) to (10);
		\draw [style=redu] (10) to (8);
		\draw [style=simple] (16) to (13);
		\draw [style=simple] (12) to (16);
		\draw [style=simple] (12) to (13);
		\draw [style=redu] (13) to (15);
		\draw [style=redu] (15) to (16);
		\draw [style=redu] (12) to (15);
	\end{pgfonlayer}
\end{tikzpicture}
  \end{center}
  where the bridges are coloured in red.
\end{example}

Now that we have defined derivations of an edNCE grammar, we are ready
to define its language. This is formalized by the next three definitions.
For a given edNCE grammar, a \emph{starting graph} is simply a graph with
no connection instructions, no edges and just a single vertex labelled with
the initial nonterminal label. A \emph{sentential form}, similar to the
context-free string case, is any graph (up to isomorphism) which can be derived
from the starting graph. Finally, the language of a grammar consists of all
terminal sentential forms modulo graph isomorphism.

\begin{definition}[Starting Graph \cite{c-ednce}]
  We say that $sn(S,z) \in EGR_{\Sigma,\Gamma}$ is a \textit{starting graph} if
  it has only one vertex given by $z$, its label is $S$, the graph has no edges 
  and no connection instructions.
\end{definition}

\begin{definition}[Sentential Form \cite{c-ednce}]
A \emph{sentential form} of an edNCE grammar
$G=(\Sigma,\Delta,\Gamma,\Omega,P,S)$ is a graph $H$ such that
$Sn(S,z) \Longrightarrow_* H$ for some $z$.
\end{definition}

\begin{definition}[Graph Grammar Language \cite{c-ednce}]
The graph language induced by a graph grammar
$G=(\Sigma,\Delta,\Gamma,\Omega,P,S)$ is given by:
\[L(G) := \{[H]\ |\ H \in GR_{\Delta,\Omega} \text{ and } sn(S,z)
\Longrightarrow_* H \text{ for some } z\}\]
where $[H]$ denotes the equivalence class of all graphs which are isomorphic to
$H$.
\end{definition}

\begin{example}
  The language of the grammar from Example~\ref{ex:complete_grammar} consists
  of all undirected complete graphs with $n \geq 2$ vertices.
\end{example}

We consider graph languages up to isomorphism. Therefore, in the definition
above, for the starting graph $sn(S,z)$, the choice of vertex name $z$ does not
matter. Only the initial nonterminal label is relevant. Because of this, we
will often refer to derivations in a grammar as $S \Longrightarrow_* H$.

\subsection{Subclasses of edNCE grammars}\label{sub:grammar-subclasses}

The edNCE graph grammars as introduced so far are not necessarily confluent.
The order of application of productions matters as it may produce different
results. This makes reasoning about their derivations (and thus languages)
considerably harder. Therefore, a natural subclass of edNCE grammars are those
grammars which are confluent. Indeed, confluent edNCE grammars have nicer
structural, decidability and complexity properties, their derivations may be
described via derivation trees and there is also a powerful logical
characterization of their languages. These grammars are called \emph{C-edNCE}
grammars, where the $\emph{C}$ stands for \emph{confluent}. C-edNCE grammars
are the largest class of deterministically context-free graph grammars which
have been studied. For this reason, this class is also commonly known as VR
(Vertex Replacement) graph grammars as it is the most general class of
context-free graph grammars which utilise vertex replacement (as opposed to
edge/hyperedge replacement).

The next definition formally introduces C-edNCE grammars by providing static
conditions which guarantee confluence. The conditions are static in the sense
that they can be easily checked (and decided) by examining the productions
of a grammar. We provide the definition for completeness, but it is not crucial
for understanding the rest of the thesis, because we will be working with
specific subclasses of C-edNCE grammars which enjoy even nicer properties.

\begin{definition}[C-edNCE grammar \cite{c-ednce}]
An edNCE grammar $G=(\Sigma,\Delta,\Gamma,\Omega,P,S)$ is \emph{confluent}
or a \emph{C-edNCE} grammar, if for all productions $X_1 \to (D_1, C_1)$ and
$X_2 \to (D_2, C_2)$ in $P$, all vertices $x_1 \in V_{D_1}$ and $x_2 \in
V_{D_2}$, and all edge labels $\alpha, \delta \in \Gamma,$ the following
equivalence holds:
\[\exists \beta \in \Gamma: (X_2, \alpha, \beta, x_1, out) \in C_1
\text{ and } (\lambda_{D_1}(x_1), \beta, \delta, x_2, in) \in C_2\]
\[\iff\]
\[\exists \gamma \in \Gamma: (X_1, \alpha, \gamma, x_2, in) \in C_2
\text{ and } (\lambda_{D_2}(x_2), \gamma, \delta, x_1, out) \in C_1\]
\end{definition}

This static definition guarantees the following dynamic property which we shall
refer to as \emph{dynamic confluence}. In general, dynamic properties will
describe some behaviour of the sentential forms of C-edNCE grammars while
static properties will describe their productions in decidable terms. In this
sense, the next proposition states that dynamic confluence is equivalent to
static confluence for edNCE grammars and we can therefore decide if any
edNCE grammar is confluent in both senses.

\begin{proposition}[Confluence \cite{c-ednce}]
  An edNCE grammar $G=(\Sigma,\Delta,\Gamma,\Omega,P,S)$ is
  confluent iff the following holds for every graph $H \in EGR_{\Sigma,
  \Gamma}$:
  if $H \Longrightarrow_{u_1,p_1} H_1 \Longrightarrow_{u_2,p_2} H_{12}$ and
  $H \Longrightarrow_{u_2,p_2} H_2 \Longrightarrow_{u_1,p_1} H_{21}$ are
  derivations of $G$ with $u_1 \not = u_2$, then $H_{12} = H_{21}$.
\end{proposition}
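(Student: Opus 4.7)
The plan is to prove the biconditional by showing each direction separately, using a careful unfolding of the graph substitution definition.

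For the forward direction (static $\Rightarrow$ dynamic), I fix derivations $H \Longrightarrow_{u_1, p_1} H_1 \Longrightarrow_{u_2, p_2} H_{12}$ and $H \Longrightarrow_{u_2, p_2} H_2 \Longrightarrow_{u_1, p_1} H_{21}$ with $u_1 \neq u_2$, and argue that $H_{12}$ and $H_{21}$ agree on vertices, labels, edges, and connection instructions. Vertex sets and labels match immediately: both are $(V_H - \{u_1, u_2\}) \cup V_{D_1} \cup V_{D_2}$ with the inherited labels, where the disjoint daughter copies may be chosen the same in both orders. Edges coming from $E_H$ directly, or bridges between one daughter and a vertex $w \in V_H - \{u_1, u_2\}$, depend only on the $H$-edges incident to $u_1$ or $u_2$ separately, so they are produced identically in both orders. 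The only potential source of discrepancy is a direct bridge $(x_1, \delta, x_2)$ with $x_1 \in V_{D_1}$ and $x_2 \in V_{D_2}$, which can only be created from an edge $(u_1, \alpha, u_2) \in E_H$ or $(u_2, \alpha, u_1) \in E_H$. Unfolding substitution twice shows that in the first order, such a bridge arising from $(u_1, \alpha, u_2)$ appears iff there exists $\beta$ with $(X_2, \alpha, \beta, x_1, out) \in C_1$ and $(\lambda_{D_1}(x_1), \beta, \delta, x_2, in) \in C_2$, while in the second order it appears iff there exists $\gamma$ with $(X_1, \alpha, \gamma, x_2, in) \in C_2$ and $(\lambda_{D_2}(x_2), \gamma, \delta, x_1, out) \in C_1$. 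These are precisely the two sides of the static confluence equivalence, so they coincide. Edges arising from $(u_2, \alpha, u_1)$ are handled by the same equivalence applied to the ordered pair $(p_2, p_1)$. Connection instructions are treated analogously: an instruction in $H_{12}$ or $H_{21}$ either descends from $C_H$ with $x \notin \{u_1, u_2\}$, or is obtained by composing an instruction of $C_H$ involving $u_i$ with one from $C_i$; since these compositions do not interact across $u_1$ and $u_2$, both orders yield the same result.

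For the converse (dynamic $\Rightarrow$ static), I instantiate the dynamic property on a minimal test graph. Given productions $p_i : X_i \to (D_i, C_i)$ for $i=1,2$, vertices $x_i \in V_{D_i}$, and labels $\alpha, \delta \in \Gamma$, let $H$ be the extended graph with two vertices $u_1, u_2$ labelled $X_1, X_2$, a single edge $(u_1, \alpha, u_2)$, and empty connection relation. Both derivation orders exist, and dynamic confluence gives $H_{12} = H_{21}$; by the edge analysis above, the bridge $(x_1, \delta, x_2)$ is present in $H_{12}$ iff the left side of the static equivalence holds and in $H_{21}$ iff the right side holds, so they are equivalent. Repeating this with the roles of $p_1, p_2$ swapped (via an $(u_2, \alpha, u_1)$-edge test graph) exhausts every instance of the static condition.

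The main obstacle will be the careful bookkeeping in the forward direction: tracking precisely how edges and connection instructions propagate through two consecutive substitutions and verifying that the bridges between $V_{D_1}$ and $V_{D_2}$ are the unique potential point of disagreement. Once this reduction is in place, the two chained substitution clauses reproduce the two sides of the static equivalence almost mechanically. A minor technical nuisance in the converse is picking isomorphic production copies so that $H_{12}$ and $H_{21}$ can be compared as literal equalities; since the vertices $u_i$ and $x_i$ can be named consistently across both orders, this is routine.
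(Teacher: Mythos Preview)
The paper does not actually prove this proposition; it is stated as a cited result from \cite{c-ednce} with no accompanying proof, so there is nothing in the paper to compare against directly. Your argument is the standard one found in the literature: the forward direction reduces to checking that cross-bridges between the two daughter graphs coincide in both substitution orders, which unwinds precisely to the two sides of the static equivalence, and the converse instantiates the dynamic property on a two-vertex test graph to recover each instance of the static condition. The bookkeeping you outline---in particular the observation that edges and connection instructions not mediated by a direct $u_1$--$u_2$ edge are order-independent, and that connection instructions associated to $u_1$ and $u_2$ compose only with $C_1$ and $C_2$ respectively without cross-interaction---is correct, and the proof goes through as you describe.
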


C-edNCE grammars are confluent, however when performing derivations, the
context around a nonterminal vertex may change. Because of this, C-edNCE
grammars in general do not have a normal form called \emph{context-consistency}
which significantly simplifies multiple proofs and constructions in this work.
For this reason, we will work with the largest subclass of C-edNCE grammars
which are known to have this normal form  (defined next) and we will leave the
problem of determining whether our results hold for C-edNCE grammars in general
for future work.

\begin{definition}[B-edNCE grammar \cite{c-ednce}]
\label{def:b-ednce}
An edNCE grammar $G=(\Sigma,\Delta,\Gamma,\Omega,P,S)$ is
\emph{boundary}, or a B-edNCE grammar, if, for every production
$X \to (D,C)$
\begin{description}
\item[1.] $D$ does not contain edges between nonterminal vertices
\item[2.] $C$ does not contain connection instructions of the form
$(\sigma, \beta, \gamma, x, d)$ where $\sigma$ is a nonterminal label
\end{description}
\end{definition}

Therefore, in any B-edNCE grammar, the sentential forms are such that
nonterminal vertices are never adjacent. It is easily seen that
B-edNCE grammars satisfy the defining conditions for C-edNCE grammars, but
not the other way around. Thus, they are confluent, but in terms of generative
power they are strictly less expressive compared to C-edNCE grammars.

An interesting proper subclass of B-edNCE grammars are linear edNCE grammars.
Like their context-free string
\replaced{counterparts}{coutnerparts},
they can have at most one
nonterminal in the body of each production. As a result, their derivation
trees are simply lines and their derivations are usually easy to understand.

\begin{definition}[LIN-edNCE grammar \cite{c-ednce}]
An edNCE grammar $G=(\Sigma,\Delta,\Gamma,\Omega,P,S)$ is \emph{linear},
or a LIN-edNCE grammar, if for every production $X \to (D,C)$, $D$ has at most
one nonterminal vertex.
\end{definition}

The sentential forms of B-edNCE and LIN-edNCE grammars may have arbitrarily
many terminal vertices which are connected to a nonterminal vertex. A natural
and proper subclass of B-edNCE where this is not possible is presented in the
next definition. The main idea for that subclass is that each neighbour of a
nonterminal vertex may be uniquely identified via the combination of its
vertex label and the label and direction of its edge connecting it to the
nonterminal.

\begin{definition}[Bnd-edNCE grammar \cite{c-ednce}]
A B-edNCE grammar $G=(\Sigma,\Delta,\Gamma,\Omega,P,S)$ is
\emph{nonterminal neighbourhood deterministic}, or a Bnd-edNCE grammar, if,
for every production $X \to (D,C)$, every nonterminal vertex $x \in V_D$,
all $\gamma \in \Gamma$ and all $\sigma \in \Delta:$
\begin{description}
\item[1.] $\{y \in V_D\ |\ (y,\gamma,x) \in E_D \text{ and } \lambda_D(y)=
\sigma\} \cup \{\beta \in \Gamma\ |\ (\sigma, \beta, \gamma, x, in) \in C\}$
is a singleton or empty
\item[2.] $\{y \in V_D\ |\ (x,\gamma,y) \in E_D \text{ and } \lambda_D(y)=
\sigma\} \cup \{\beta \in \Gamma\ |\ (\sigma, \beta, \gamma, x, out) \in C\}$
is a singleton or empty
\end{description}
\end{definition}

A proper subclass of Bnd-edNCE grammar are the \emph{apex} grammars, where
connection instructions are only allowed to terminal vertices. As a result,
any vertex connected to a nonterminal will have its final neighbourhood
determined after expanding each nonterminal it is connected to. These
grammars can generate only languages of bounded degree.

\begin{definition}[A-edNCE grammar \cite{c-ednce}]
An edNCE grammar $G=(\Sigma,\Delta,\Gamma,\Omega,P,S)$ is called \emph{apex}
or an A-edNCE grammar, if for every production $X \to (D,C)$ and every
connection instruction $(\sigma, \beta, \gamma, x, d) \in C$, $x$
and $\sigma$ are terminal.
\end{definition}

If an apex grammar is also linear, then we shall call it a LIN-A-edNCE
grammar. These grammars are considered to form the simplest class of
VR grammars in terms of structural and expressive properties.

We will denote the class of languages which can be generated by an X-edNCE
grammar by $\textbf{X-edNCE}$. Figure~\ref{fig:inclusion} describes the
relationship between the different subclasses of \textbf{VR} languages. All
inclusions are proper. \textbf{HR} stands for the class of languages
expressible by Hyperedge Replacement grammars and the result is true up to
encoding, because HR and VR grammars generate different types of (hyper)graphs.
Most constructions in this thesis will be built around B-edNCE grammars.

\begin{figure}[h]
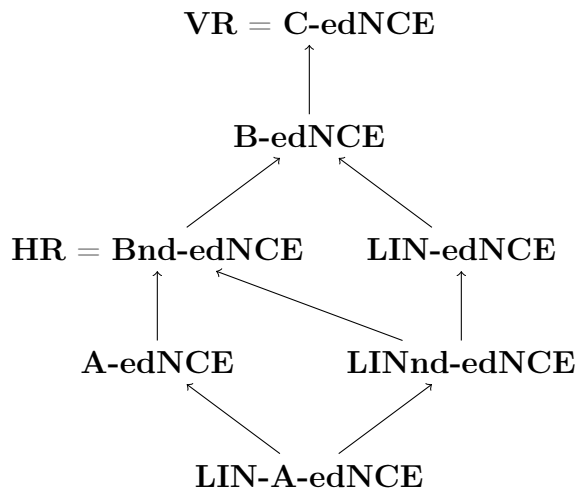

\cstikz{class_relationship.tikz}
\caption{Inclusion diagram for subclasses of VR languages}
\label{fig:inclusion}
\end{figure}

\subsection{Derivation Trees}\label{sub:derivation-tree}

Like in the case for context-free string grammars, the order in which
productions are applied in C-edNCE grammars does not matter. The result of two
derivation sequences where some of the productions have been permuted will be
the same graph. Thus, derivation trees for C-edNCE grammars are useful in
the same way that derivation trees are useful for context-free string
grammars -- they provide a convenient representation which allows us to
ignore the unimportant order of production applications and the trees also
better capture the syntactic structure of the objects (graphs or strings) with
respect to the grammar.

For the next two definitions we shall assume that the vertices in our
edNCE grammars have a linear order. This order may be assigned to the vertices
in an arbitrary way. It is only used in order to relate specific nonterminal
vertices from the productions of our grammars to specific nodes in a
derivation tree. We shall first provide the necessary definitions and then
elaborate more on them and provide examples.

\begin{definition}[p-labelled derivation tree]
Let $G$ be an edNCE grammar. A \emph{p-labelled derivation tree} of G is a
rooted, ordered tree $T$ of which the nodes are labelled by productions of $G$,
such that if node $u$ of $T$ has label $p$, and $rhs(p)$ has nonterminal
vertices $x_1,..., x_k$, in that order, then $u$ has children $v_1,...,v_k$, in
that order, and the left-hand side of the label of $v_i$ in $T$ equals the
label of $x_i$ in $rhs(p).$
\end{definition}

\begin{definition}[Yield of derivation tree]
The yield of a p-labelled derivation tree $T$ is defined recursively by
$yield(T) := rhs(p)[x_1/yield(T_1)]...[x_k/yield(T_k)],$ where $p$ is the label
of the root of $T$, $x_1,...,x_k$ are the nonterminal vertices (in that order)
in $rhs(p)$ and $T_1,...,T_k$ are the direct subtrees of $T$ (in that order),
where in addition, we take isomorphic production copies such that the
substitution is well-defined.
\end{definition}

\begin{remark}
The above two definitions are essentially the same as the ones from
\cite{c-ednce}. However, the authors there define p-labelled derivation
trees in terms of c-labelled derivation trees. We will not be
using c-labelled derivation trees and for this reason we provide the direct
definition above which is equivalent to the one in \cite{c-ednce}. From now
on, we shall simply refer to p-labelled derivation trees as \emph{derivation
trees}.
\end{remark}

From any derivation tree we can obtain a concrete derivation sequence and
vice-versa. Given a derivation tree, any tree traversal where each node
of the tree is visited only after its parent had been visited first describes
a concrete derivation sequence -- the first production to be applied is the
one identified by the root node and each following one is given by applying the
production identified by the next node in the traversal where the nonterminal
being replaced is the one identified by the connecting edge. In the other
direction, given a concrete derivation sequence, a derivation tree can be
constructed in a unique way -- the root node of the tree is labelled with the
name of the first production in the sequence, then the children of any node $n$
in the tree are labelled by the productions of the sequence which expand the
relevant nonterminal vertices in $rhs(p),$ where the label of $n$ is $p$.

\begin{example}
Consider the following A-edNCE (and thus B-edNCE) grammar which generates star
graphs (complete bipartite graphs $K_{1,n}$):
\cstikz{star_unoptimal.tikz}
where the names of the productions are shown on the left. Note, this
grammar is not optimal and the same language can be generated by a LIN-edNCE
grammar where the $Y$ nonterminal vertices and productions are removed.
However, this grammar is useful for illustrating derivation trees.
Let's assume that the nonterminal vertices $X$ have lower order than the
nonterminal vertices $Y$ in all productions of the grammar. Thus, when
expanding nonterminal vertices labelled $X$ they will correspond to the left
child in the derivation tree of a parent node.
For example, consider the following derivation tree:
\cstikz{derivation_tree_example.tikz}
The yield of this tree is the graph $K_{1,3}$:
\cstikz{star_example.tikz}
which may be computed by performing any of the derivation sequences induced by
the derivation tree. For example, a left-right depth-first traversal of this
derivation tree yields the following derivation sequence:
\cstikz[0.9]{star_derive.tikz}
Because this grammar is confluent, then any other derivation sequence
induced by this derivation tree will produce the same result, which is
the yield of the derivation tree.
\end{example}

Note, that in general, the yield of a derivation tree for an edNCE grammar
is not necessarily the same as the result of a derivation sequence induced
by the tree.
This is because
edNCE grammars in general are not confluent. However, we have restricted
ourselves to confluent grammars only and because of that the order of
production applications (and hence tree traversals) does not matter and they
will always produce the same result.

\subsection{Normal Forms}\label{sub:normal_forms}
In this subsection we present several normal forms for B-edNCE grammars,
\replaced{all}{most}
of
which are well known in the literature.
\replaced{Our contribution is to}{In addition we} point out that all of
them can be combined into a single normal form. Working with normal forms
with B-edNCE grammars allows for simpler reasoning and for more
concise definitions which we will need later on in this thesis. For example, in
Chapter~\ref{ch:besg}, we provide necessary conditions on normal form B-edNCE
grammars for certain properties. Stating all those conditions for arbitrary
B-edNCE grammars would be much more challenging and certainly less elegant.
However, the normal forms which we consider here do not restrict the expressive
power of B-edNCE grammars, and moreover, they can all be constructed
effectively by a computer.

\added{Our first normal form is about nonfinal edges. In an edNCE grammar
nonfinal edges are used in order to block subsequent derivations -- once a
nonfinal edge is established between a pair of terminal vertices, it cannot be
modified any further and thus any subsequent sentential forms are not
considered part of the language. This might sometimes help with grammar design,
but it also makes grammars much more difficult to reason about, so we shall
only consider grammars where nonfinal edges are not allowed at all. As the next
lemma (and its corollary) show, this does not decrease our expressive power.}

\begin{definition}[Nonblocking grammar \cite{c-ednce}]
An edNCE grammar $G$ is \textit{nonblocking} if every terminal
sentential form $H$ of $G$ has final edges only (i.e., $H \in L(G)$).
\end{definition}

\begin{lemma}[\cite{nonblocking, c-ednce}]\label{lem:nonblocking}
For every B-edNCE grammar an equivalent nonblocking B-edNCE grammar can be
constructed.
\end{lemma}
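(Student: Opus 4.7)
The plan is to refine the nonterminal labels of $G$ to track the ``profile'' of edges incident to each nonterminal vertex in any sentential form. In a B-edNCE grammar, the boundary conditions ensure that once a nonterminal vertex is created, its set of neighbours (and their edge labels) cannot change until the nonterminal itself is expanded: all its neighbours must be terminal, since no edges between nonterminals are permitted, and connection instructions cannot fire on nonterminal source labels, so subsequent expansions of other nonterminals never attach new edges to it. This allows us to encode the relevant neighbourhood information statically in the nonterminal labels.

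Concretely, I would construct $G'$ as follows. Let $\mathcal{P}$ denote the (finite) set of subsets of $\Delta \times \Gamma \times \{in, out\}$; a \emph{profile} is an element of $\mathcal{P}$. For each nonterminal label $X$ and profile $\pi \in \mathcal{P}$, introduce a fresh nonterminal label $\langle X, \pi \rangle$, with new start symbol $\langle S, \emptyset \rangle$, reflecting the fact that the start nonterminal initially has no neighbours. For each original production $p : X \to (D, C)$ and each profile $\pi$, I first determine which connection instructions $(\sigma, \beta, \gamma, x, d) \in C$ would fire given $\pi$ (namely those with $(\sigma, \beta, d) \in \pi$). I discard the pair $(p, \pi)$ if any such firing instruction creates a nonfinal bridge ($\gamma \notin \Omega$) to a terminal vertex $x \in V_D$, or if $D$ itself already contains an internal nonfinal edge between two terminal vertices. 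Otherwise, I add to $P'$ a production $\langle X, \pi \rangle \to (D', C)$, where $D'$ is obtained from $D$ by relabelling each nonterminal vertex $y$ of label $Y$ by $\langle Y, \pi_y \rangle$, with $\pi_y$ being the profile of $y$ computed from the terminal neighbours of $y$ inside $D$ together with the bridges targeting $y$ that fire given $\pi$.

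Once the construction is in place, three things must be checked. First, $G'$ is B-edNCE: the construction relabels only nonterminal vertices in right-hand sides, so neither edges between nonterminals nor connection instructions with nonterminal source labels are introduced. Second, $G'$ is nonblocking: by construction no production of $G'$ can create a nonfinal edge between two terminal vertices, either as an internal edge or as a bridge, so every terminal sentential form of $G'$ has only final edges and thus lies in $L(G')$. Third, $L(G) = L(G')$: any derivation in $G'$ projects to a derivation in $G$ by forgetting profiles, producing the same terminal graph; and conversely, any derivation in $G$ whose output $H$ has only final edges lifts to a derivation in $G'$, because at every expansion step the refined production for the encountered profile must exist, otherwise the derivation would produce a nonfinal edge between terminals, contradicting $H \in L(G)$.

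The main obstacle will be establishing the inductive invariant that, in every sentential form of $G'$, each nonterminal vertex $v$ carrying refined label $\langle X, \pi \rangle$ actually has profile $\pi$. This follows from the boundary property discussed above, which guarantees profile stability under subsequent derivation steps, combined with an inductive argument on derivation length that uses the definition of $\pi_y$ in the construction. Once this invariant is in place, both the nonblocking property and the inclusion $L(G) \subseteq L(G')$ follow by a routine structural induction on derivation trees.
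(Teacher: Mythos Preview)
Your proposal is correct and is essentially the standard construction from the cited literature. The paper itself does not give a proof of this lemma; it simply cites \cite{nonblocking} and \cite{c-ednce}. The construction you describe---encoding the context of each nonterminal in its label and discarding productions that would establish a nonfinal edge between terminal vertices---is precisely the approach used in \cite{c-ednce}, and your ``profile'' is nothing other than the \emph{context} of Definition~\ref{def:context}. The key observation you make, that in a B-edNCE grammar the neighbourhood of a nonterminal vertex is frozen from its creation until its expansion, is exactly what makes the context-consistency construction (Lemma~\ref{lem:context-neighbour}) go through for the boundary case, and it is the same observation that underlies the cited proofs.
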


This was first proved in \cite{nonblocking} for C-edNCE grammars.
Another proof is presented in \cite{c-ednce} where, in addition, the authors
point out that the result also holds for B-edNCE grammars.

\begin{corollary}\label{cor:nonblocking}
For every B-edNCE grammar an equivalent B-edNCE grammar can be constructed
which doesn't use any non-final edges in any sentential form.
\end{corollary}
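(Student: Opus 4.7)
The plan is to bootstrap from Lemma~\ref{lem:nonblocking} and then eliminate non-final edges via a refinement of the nonterminal alphabet. First I would apply Lemma~\ref{lem:nonblocking} to obtain an equivalent nonblocking B-edNCE grammar $G_1 = (\Sigma, \Delta, \Gamma, \Omega, P, S)$ with $L(G_1)=L(G)$. The task then reduces to replacing $G_1$ by an equivalent B-edNCE grammar $G'$ such that every sentential form (not only the terminal ones) contains only final edges.

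The key structural observation I would exploit is that, in a nonblocking B-edNCE grammar, no sentential form appearing in a derivation leading to a terminal form can contain a non-final edge whose endpoints are both terminal. Indeed, edges between two terminal vertices can never be rewritten by subsequent derivation steps: terminal vertices are not substituted, and in a boundary grammar the connection relation of a production is consulted only for edges incident to the nonterminal being replaced. Hence any such edge would survive into the terminal form, violating the nonblocking property. After discarding productions that are not productive (a routine and decidable cleanup), we may therefore assume that every non-final edge occurring anywhere in any sentential form has at least one nonterminal endpoint.

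With this invariant in hand, I would construct $G'$ by refining the nonterminal alphabet so that non-final edge labels are absorbed into vertex labels. Concretely, for each nonterminal $X \in \Sigma - \Delta$ and each admissible pattern $\pi$ of non-final edges incident to an $X$-vertex (a finite object since $\Gamma$ is finite and, by boundaryness, $X$-vertices have a controlled local structure), introduce a fresh nonterminal label $X_\pi$; then rewrite every production body so that a non-final edge $(u,\beta,x)$ or $(x,\beta,u)$ with $\lambda(x)=X$ is replaced by relabelling $x$ to $X_{\pi'}$ for the appropriate updated pattern $\pi'$, and similarly rewrite the connection instructions so that any instruction that would have produced a non-final edge is instead merged into the chosen refined label of its target. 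The new productions for each $X_\pi$ are obtained by specialising the productions for $X$ so that the non-final edges recorded in $\pi$ are immediately discharged — either converted to final edges or deleted — by the right-hand side's connection relation. Boundaryness is preserved because the refinement affects only labels, not the incidence structure of nonterminal vertices.

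The main obstacle will be verifying in detail that $L(G')=L(G_1)$: the inclusion $L(G')\subseteq L(G_1)$ is immediate by simulating each refined production by the corresponding original one, whereas the reverse inclusion requires showing that any terminal derivation in $G_1$ can be reorganised so that each non-final edge it creates is consumed by the very next derivation step at its nonterminal endpoint, yielding a matching derivation in $G'$. This reorganisation is justified by the confluence of B-edNCE grammars (so that derivations can be permuted along the derivation tree without changing the result) and by the observation from the previous paragraph that every non-final edge must eventually be consumed in this way. Care is needed to ensure that when a single nonterminal is incident to several non-final edges simultaneously, the chosen refined label $X_\pi$ encodes all of them, so the number of new labels is finite but may be exponential in the maximum degree permitted by $G_1$.
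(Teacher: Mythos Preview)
Your construction is far more elaborate than what is needed, and as stated it has a genuine gap. The paper's proof is a one-liner: after applying Lemma~\ref{lem:nonblocking} to obtain a nonblocking grammar $G_1=(\Sigma,\Delta,\Gamma,\Omega,P,S)$, simply redefine the grammar with $\Omega:=\Gamma$. This cannot change the language, because in a nonblocking grammar every terminal sentential form already uses only final edge labels, so enlarging $\Omega$ neither adds nor removes any terminal form from $L(G_1)$. And once $\Omega=\Gamma$ there are no non-final labels at all, so trivially no sentential form contains a non-final edge. No refinement of the nonterminal alphabet is required.

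The gap in your approach is the step where a non-final edge $(u,\beta,x)$ with $x$ nonterminal is ``replaced by relabelling $x$ to $X_{\pi'}$.'' By deleting the edge you lose the identity of the other endpoint $u$. The pattern $\pi$ records only $(\lambda(u),\beta,d)$, not which particular vertex $u$ is; yet the bridge that should eventually be created from $u$ into the expansion of $x$ can only be established by the substitution mechanism consuming an actual edge incident to $x$. With the edge gone, no connection instruction in a production for $X_\pi$ has anything to consume, and there is no way to reconnect $u$ to the right vertex inside the daughter graph. (The problem is already visible when two $\sigma$-labelled vertices $u_1,u_2$ are both joined to $x$ by non-final $\beta$-edges: your $\pi$ cannot distinguish them.) Any repair---e.g.\ replacing each non-final edge by a final edge carrying a fresh placeholder label---collapses to the paper's argument of simply promoting non-final labels to final ones.
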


\begin{proof}
After applying the construction in Theorem \ref{lem:nonblocking}, simply
declare all edge labels to be final. This has no effect on the generated
language as all terminal sentential forms do not contain non-final edges.
\end{proof}

Next, we introduce the \emph{context} of a nonterminal vertex $v$. This
information describes the label data of the neighbourhood of $v$.

\begin{definition}[Context \cite{c-ednce}]\label{def:context}
For a vertex $v$ of an extended graph $H$, we define the \emph{context} of $v$
in $H$
  by:
\begin{align*}
cont_H(v) := \{(\lambda_H(w), \gamma, out)\ |\ (v,\gamma,w) \in E_H\} &\cup\\
  \{(\lambda_H(w), \gamma, in)\ |\ (w,\gamma,v) \in E_H\} &\cup\\
  \{(\sigma, \gamma, out)\ |\ (\sigma,\beta,\gamma,v,out)
                                 \in C_H\} &\cup\\
  \{(\sigma, \gamma, in)\ |\ (\sigma,\beta,\gamma,v,in)
                                 \in C_H\} &
\end{align*}
\end{definition}

\added{The next normal form which we will introduce is about context consistent
grammars. In a context consistent grammar, the context around every nonterminal
vertex with a given label is always the same. This will be helpful for us in
later chapters, because it will allows us to reason locally about the
productions of grammars when showing some of the admissibility results that we
need.}

\begin{definition}[Context consistent grammar
\cite{c-ednce}]\label{def:context-consistent}
We say that an edNCE grammar $G$ is \emph{context consistent} if there is a
mapping $\eta$ from $\Sigma - \Delta$ to the set of all possible contexts,
such that for every nonterminal vertex $v$ of every sentential form $H$ of $G$,
$cont_H(v) = \eta(\lambda_H(v)).$
\end{definition}

Thus, in a context consistent grammar, the context of a nonterminal vertex
is determined by its label.

\added{The next normal form which we will need is about neighbourhood
preserving grammars.
These grammars do not allow edges incident to nonterminals
to be dropped during the embedding process. Any vertex incident to a
nonterminal vertex in the mother graph will therefore be incident to at least
one new vertex created by the daughter graph. This normal form will be
useful in later chapters when we need to relate the inputs and outputs
of a pair (or triple) of B-ESG grammars. In particular, it guarantees that
any wire-vertex in any sentential form of such a grammar cannot
subsequently become an input or output, if it wasn't one in the past.}

\begin{definition}[Neighbourhood preserving grammar
\cite{c-ednce}]\label{def:neighbour}

We say that an edNCE grammar $G$ is \emph{neighbourhood preserving} if, for
every sentential form $H$ of $G$, every nonterminal vertex $v$ of $H$, and
every production $p$ with $lhs(p) = \lambda_H(v)$, the following holds:
if $(w,\beta, v) \in E_H\ (\text{or } (v, \beta, w) \in E_H)$, then
$rhs(p)$ has a connection instruction of the form
$(\lambda_H(w), \beta, \gamma, x, d)$ with $d=in$ ($d=out$, respectively).
\end{definition}

\begin{lemma}[\cite{compare_bgg}]\label{lem:context-neighbour}
For every B-edNCE grammar an equivalent B-edNCE grammar can be constructed
which is both context consistent and neighbourhood preserving.
\end{lemma}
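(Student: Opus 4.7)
The plan is to refine the nonterminal alphabet of the original B-edNCE grammar $G = (\Sigma, \Delta, \Gamma, \Omega, P, S)$ by encoding context information directly into each nonterminal label. The construction proceeds in two stages, each driven by a fixpoint iteration.

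For the first stage, I would compute, for each nonterminal label $X$, the finite set $\mathsf{Cont}(X)$ of contexts $c$ that can actually occur at some $X$-labelled vertex in some sentential form of $G$. Since any context is a subset of the finite set $\Sigma \times \Gamma \times \{in, out\}$, each $\mathsf{Cont}(X)$ is finite and can be computed by initialising $\mathsf{Cont}(S) = \{\emptyset\}$ (all others empty) and closing under the following rule: for every production $p : X \to (D,C)$, every $c \in \mathsf{Cont}(X)$, and every nonterminal vertex $y \in V_D$, add to $\mathsf{Cont}(\lambda_D(y))$ the context induced on $y$ inside $D$, namely the edges of $D$ incident to $y$ together with the connection instructions of $C$ that $c$ triggers (as described in Definition~\ref{def:context}, with the $(\sigma,\beta,d)$-elements of $c$ matched against $(\sigma,\beta,\gamma,y,d)$-instructions in $C$). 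The boundary condition of B-edNCE guarantees the induced context of $y$ depends only on local data of $p$ and $c$.

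For the second stage, I would define an intermediate grammar $G''$ whose nonterminal alphabet is $\{(X,c) \mid c \in \mathsf{Cont}(X)\}$ with initial symbol $(S, \emptyset)$. For each old production $p : X \to (D,C)$ and each $c \in \mathsf{Cont}(X)$, I create a production $(X,c) \to (D',C'_c)$, where $D'$ replaces each nonterminal vertex $y \in V_D$ by a vertex labelled $(\lambda_D(y), c_{p,y,c})$ with the induced context $c_{p,y,c}$ just computed, and $C'_c$ keeps only those connection instructions of $C$ whose left-hand triple $(\sigma,\beta,d)$ belongs to $c$ (the others can never fire). By construction $G''$ is B-edNCE, and taking $\eta(X,c) := c$ makes it context consistent.

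To additionally obtain neighbourhood preservation, I would perform a second fixpoint iteration that prunes a context element $(\sigma,\beta,d)$ from $c \in \mathsf{Cont}(X)$ whenever some production $(X,c) \to (D',C'_c)$ of $G''$ contains no connection instruction of the form $(\sigma,\beta,\gamma,x,d)$; propagating this pruning upstream through the productions that create $(X,c)$-vertices yields a smaller valid context, so we also drop the corresponding edge or instruction in the parent production's right-hand side. Iterating to a fixpoint gives the final grammar $G'$, in which every production covers every element of its left-hand side's context, so $G'$ is neighbourhood preserving while remaining context consistent and B-edNCE. The last step is to verify $L(G) = L(G')$: the inclusion $L(G') \subseteq L(G)$ follows by forgetting the context annotation in every derivation tree of $G'$, while $L(G) \subseteq L(G')$ follows by induction on derivation trees of $G$, annotating each nonterminal with its (inductively well-defined) context and invoking confluence of B-edNCE grammars so that the annotated tree is a valid $G'$-derivation tree with the same yield. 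The main obstacle is the soundness of the pruning step: one must show that whenever the second-stage iteration removes a context element, the corresponding edge would in fact have been dropped in every terminal $G$-derivation that passes through that nonterminal, so no terminal sentential form is lost. This is where the detailed bookkeeping of how bridges are established by connection instructions, together with the boundary condition that nonterminals are never adjacent, is essential.
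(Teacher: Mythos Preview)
The paper does not give its own proof of this lemma; it is stated with a citation to \cite{compare_bgg} and used as a black box in the normal-form theorem. So there is no paper proof to compare against, only the question of whether your reconstruction is correct.

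Your Stage~1 is the standard construction for context consistency and is fine. Stage~2, however, does not work as described. You prune $(\sigma,\beta,d)$ from $c$ whenever \emph{some} production for $(X,c)$ fails to cover it, and then uniformly drop the corresponding edge in every parent. But if another production for $(X,c)$ \emph{does} cover $(\sigma,\beta,d)$, the original grammar can produce terminal graphs in which that edge survives; your pruned grammar cannot. Concretely: take an $S$-production whose body has a terminal vertex $a$ and a nonterminal vertex $x$ of label $X$ joined by an $\alpha$-edge, together with two $X$-productions, one carrying a matching connection instruction $(\lambda(a),\alpha,\gamma,b,in)$ and one with no connection instructions at all. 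Your Stage~2 deletes the edge $(a,\alpha,x)$ from the $S$-production and collapses both $X$-productions to the empty context, so the terminal graph containing the bridge from $a$ is lost. The ``main obstacle'' you flag at the end is therefore not merely an obstacle to be overcome: the statement you say must be shown---that the edge is dropped in \emph{every} terminal derivation passing through that nonterminal---is false in general.

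The missing ingredient is that you must \emph{split} parent productions rather than prune contexts. For each nonterminal vertex $y$ in a production body, nondeterministically choose a subset of its induced context (intuitively, those incident edges that will survive all the way to the terminal graph), relabel $y$ with that subset, retain only the edges and connection instructions to $y$ realising that subset, and create a production with left-hand side $(X,c)$ only when its connection relation covers every element of $c$. This yields a context-consistent, neighbourhood-preserving B-edNCE grammar; language equivalence then follows by annotating each node of a $G$-derivation tree with its surviving context, exactly as in your Stage~1 argument.
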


\added{The next two normal forms have a direct analogy in the standard
context-free grammars on strings. Any derivation step in a graph grammar (or
string grammar) which contains no chain productions or empty productions will
always add at least one terminal vertex (or terminal character) to its
sentential forms. As a result, the number of terminal vertices (characters)
in sentential forms always increases which makes it easy to decide some
properties by using a bounding argument.}

\begin{definition}[Chain production\cite{c-ednce}]\label{def:chain_production}
Given an edNCE grammar, we say that a production $X \to (D,C)$ is a
\emph{chain} production, if $D$ consists of a single nonterminal vertex.
\end{definition}

\begin{definition}[Empty production\cite{c-ednce}]\label{def:empty_production}
Given an edNCE grammar, we say that a production $X \to (D,C)$ is an
\emph{empty} production, if $D$ is the empty graph.
\end{definition}

In the literature, there's a well-known normal form for C-edNCE grammars where
all empty productions can be removed (see \cite{c-ednce}). Obviously, if the
empty graph is part of the language of a grammar $G$,
then we cannot obtain a normal form for $G$ which contains no empty
productions. For this reason, we will consider a normal form where all empty
productions, except for possibly an initial one, are removed.

\added{The next normal form also translates immediately into the standard
string grammars. A reduced grammar is one where every production may be
executed at least once in some derivation. That is, there exists no production
which is not reachable from an initial one. Such unreachable productions have
no effect on the generated language and may simply be ignored. This normal
form will be useful for us when we characterise some necessary conditions for
our grammars that ensure they generate only languages of string graphs.}

\begin{definition}[Reduced grammar\cite{c-ednce}]\label{def:reduced}
An edNCE grammar is called \emph{reduced} if every production of the grammar
participates in at least one concrete derivation.
\end{definition}

In order to determine whether an edNCE grammar is reduced, we simply have to 
check if every production of the grammar is reachable from an initial
production.

\added{The final normal form which we introduce is about grammars which contain
no useless connection instructions.  As the name suggests, useless connection
instructions can simply be removed from any edNCE grammar. Of course, this
requires identifying those connection instructions first. It's not immediately
obvious how this can be done for arbitrary B-edNCE grammars, but we shall see
that this is easy under some of the conditions of the already mentioned normal
forms. B-ESG grammars which contain no useless connection instructions are
easier to reason about when it comes to relating inputs and outputs in B-ESG
rewrite rules.}

\begin{definition}[Useless connection
instruction\cite{c-ednce}]\label{def:useless}
For a given edNCE grammar, we say that a connection instruction of some
production is \emph{useless}, if for all possible concrete derivations,
the connection instruction is never used by the substitution process in order
to establish a bridge.
\end{definition}

The central theorem of this section simply states that all of the definitions
concerning B-edNCE grammars provided here can be combined into one normal form.

\begin{theorem}\label{thm:normal-form}
For every B-edNCE grammar $G$, we can effectively construct a B-edNCE grammar
$G'$ with $L(G) = L(G'),$ such that $G'$:
\begin{itemize}
  \item[1.] contains no non-final edges
  \item[2.] is context consistent
  \item[3.] is neighbourhood preserving
  \item[4.] contains no empty productions
  \item[5.] contains no chain productions
  \item[6.] is reduced
  \item[7.] contains no useless connection instructions
\end{itemize}
\end{theorem}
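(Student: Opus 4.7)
The plan is to obtain the combined normal form by applying the individually known constructions in a carefully chosen order and verifying that each subsequent transformation preserves the properties established by the preceding ones. Since each of conditions (1)--(7) has already been individually addressed in the literature (as cited in the preceding definitions and lemmas), the real content of the theorem is a compatibility argument rather than a new construction.

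First I would apply Lemma~\ref{lem:nonblocking} to obtain an equivalent nonblocking grammar, then (as in Corollary~\ref{cor:nonblocking}) declare all edge labels to be final, which establishes condition~(1). Next I would apply Lemma~\ref{lem:context-neighbour} to obtain an equivalent B-edNCE grammar which is simultaneously context consistent and neighbourhood preserving, securing (2) and (3); I would check that the construction behind this lemma does not reintroduce non-final edges, or else refine it so that it operates on the alphabet $\Omega = \Gamma$ produced in the previous step. After that I would remove empty productions and chain productions using the standard C-edNCE constructions from \cite{c-ednce}, yielding conditions~(4) and~(5). Finally I would prune the grammar by removing all productions that do not participate in any concrete derivation, giving condition~(6), and then discard all useless connection instructions, giving condition~(7).

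The hard part will be verifying non-interference between the stages. In particular, I expect the main obstacle to be ensuring that the removal of empty and chain productions preserves context consistency and neighbourhood preservation. The standard removal constructions work by composing productions: for a chain production $X \to Y$, each occurrence of $Y$ on the right-hand side is substituted with the body and connection instructions taken from $X$'s daughter, and empty productions are handled analogously by deleting nonterminal vertices and propagating their connection behaviour through the grammar. These compositions could in principle create new contexts for a given nonterminal label, violating Definition~\ref{def:context-consistent}, or drop connection instructions, violating Definition~\ref{def:neighbour}. I would handle this by first splitting each nonterminal label into copies indexed by the fixed context $\eta(X)$ guaranteed by context consistency, so that the composed productions inherit well-defined contexts by construction; neighbourhood preservation is then automatic because the composed connection instructions exactly mirror the incident edges of the eliminated nonterminal vertex.

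For the remaining steps, the verification is straightforward: removing unreachable productions (reducedness) clearly preserves all of (1)--(5), and removing useless connection instructions affects neither the sentential forms nor the contexts that actually arise, so it preserves (1)--(6). Decidability of which connection instructions are useless follows from context consistency together with the absence of chain and empty productions, because under these conditions the set of contexts that can arise at any nonterminal is finite and computable by a simple fixed-point iteration over the productions. Stringing these observations together gives the required $G'$, and effectiveness follows because each individual construction is effective.
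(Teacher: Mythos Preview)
Your proposal follows the same overall architecture as the paper: apply Corollary~\ref{cor:nonblocking} for (1), then Lemma~\ref{lem:context-neighbour} for (2)--(3), then remove empty and chain productions for (4)--(5), then reduce for (6), then drop useless connection instructions for (7), checking non-interference along the way. So the high-level plan is exactly right.

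Where you diverge from the paper is in the treatment of the step you correctly flag as the hard part, namely why removing empty and chain productions preserves (2) and (3). Your proposed fix---splitting each nonterminal label into copies indexed by $\eta(X)$---is vacuous at this point: the grammar is already context consistent, so each label already has a single context, and the split does nothing. The paper instead makes a direct observation: if a nonterminal label $X$ has an empty production, then by neighbourhood preservation every $X$-production must have no connection instructions matching incident edges, hence by context consistency $\eta(X)=\emptyset$. Removing an $X$-vertex with empty context from a production body therefore touches no edges or connection instructions, so the contexts of all remaining nonterminals are unchanged and (2)--(3) survive trivially. For chain productions the paper simply notes that composing two productions reproduces exactly the sentential forms obtainable by applying them in sequence, so contexts in sentential forms are unchanged and (2)--(3) again persist.

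Similarly, your argument for identifying useless connection instructions is heavier than needed: the paper points out that once (2)--(3) hold, a connection instruction $(\sigma,\beta,\gamma,x,d)$ in a production with label $X$ is useless precisely when $(\sigma,\beta,d)\notin\eta(X)$, with no appeal to the absence of empty or chain productions.
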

\begin{proof}
Condition 1 follows from Corollary~\ref{cor:nonblocking}. The proof of
Lemma~\ref{lem:context-neighbour} doesn't introduce any new non-final edges,
therefore conditions 1-3 can be combined.

An equivalent grammar without empty productions or chain productions can be
effectively constructed. For a proof, see \cite{c-ednce}. The construction for
removing empty productions is a straightforward generalisation to the one for
context-free string grammars -- in the productions of the grammar substitute
the empty graph for some nonterminal vertices that can generate the empty graph
and finally delete all empty productions, except for possibly an initial one.
This doesn't violate any of the conditions 1-3, because the replaced
nonterminal vertices must have empty context. The construction for removing
chain
productions is also analogous to the one for context-free string grammars -- it
consists of composing several productions of the original grammar into one and
then adding that production to the grammar. This composition preserves
properties 1-4.

Creating an equivalent grammar which is reduced simply involves removing all
productions which are unreachable from an initial production. This doesn't
have an effect on the generated language, because these productions can
obviously never be used in any concrete derivation. Therefore conditions 1-6
can be combined.

Finally, it should be clear that we can remove useless connection instructions
without changing the language of the grammar, because by definition they
do not affect any derivation. Identifying the useless connection instructions
for a grammar which satisfies conditions 2-3 is particularly easy -- for a
production $X \to (D,C)$ the useless connection instructions are of the form
$(\sigma, \beta,\gamma, x, d) \in C$, where $(\sigma, \beta, d) \not \in \eta(X).$
Obviously, doing this doesn't violate any of the previously established
properties 1-6, which completes the proof.
\end{proof}

We will be using this normal form for several proofs, so we give it a name.

\begin{definition}[Combined normal form]\label{def:combined}
We say that a B-edNCE grammar $G$ is in \emph{combined normal form} (CNF), if
the seven conditions of Theorem~\ref{thm:normal-form} are satisfied.
\end{definition}

\subsection{Monadic Second Order Logic for C-edNCE grammars}\label{sub:mso}
Monadic Second Order (MSO) logic provides a useful and convenient language for
describing graphs and graph properties. C-edNCE languages are fully
characterised by MSOL in a grammar-independent way. In particular, the class of
C-edNCE languages is the class of languages obtained by applying a MSO
definable function to the set of all trees over a ranked alphabet. These
results have proven to be very useful and they have led to several important
decidability and closure results.

We will provide a brief introduction to MSO logic and
state some properties which are relevant for C-edNCE grammars. For our
purposes, only a few of the relevant notions are required in order to be able
to understand the proofs in the rest of the chapters. The presentation
of MSO logic presented below follows that of \cite{c-ednce}.

For fixed alphabets $\Sigma$ and $\Gamma$ of vertex labels and edge labels
respectively, we will define the (infinite) language of MSO formulas and
denote it with $\emph{MSOL}(\Sigma, \Gamma)$. MSO formulas have two types of
variables. The first type are vertex variables, which we will denote with lower
Latin letters such as $u, v, w,$ etc. For a given graph $H \in
GR_{\Sigma,\Gamma}$, the vertex variables will range over the vertices $V_H$ of
$H$. The second type of variables are vertex-set variables, which we will
denote with capital Latin letters, such as $U, V, W,$ etc. For a given graph $H
\in GR_{\Sigma,\Gamma}$, the vertex-set variables will range over subsets of
$V_H$. Edge variables or edge-set variables are not allowed. Each MSOL
formula expresses some property of a graph $H \in GR_{\Sigma, \Gamma}$.

There are four different types of atomic formulas in MSOL, where the
meaning is given in parenthesis:
\begin{align*}
&\emph{lab}_{\sigma}(u) &( \text{vertex } u \text{ has label } \sigma, \text{
where } \sigma \in \Sigma) \\
&\emph{edge}_{\gamma}(u,v) &( \text{there exists an edge } (u, \gamma, v),
\text{ where } \gamma \in \Gamma) \\
&u=v &( \text{vertex } u \text{ is the same as vertex } v) \\
&u \in U &( \text{vertex } u \text{ is in the set } U) \\
\end{align*}

The rest of the formulas in $\emph{MSOL}(\Sigma, \Gamma)$ are built
from the four different types of atomic formulas by using
the standard propositional connectives $\land, \lor, \lnot, \implies, \iff$
and the standard quantifiers $\forall, \exists$ in the obvious way with
the usual meaning.
In particular, the logic is called \emph{Monadic Second
Order}, because we allow quantification over not just vertex variables, but
also over vertex-set variables.

A formula $\phi \in \emph{MSOL}(\Sigma, \Gamma)$ is \emph{closed} if it has
no free variables. A graph $H \in GR_{\Sigma, \Gamma}$ satisfies a
closed formula $\phi \in \emph{MSOL}(\Sigma, \Gamma)$, if $\phi$ is true
for $H$ and we denote this by $H \models \phi.$ If a formula
$\phi \in \emph{MSOL}(\Sigma, \Gamma)$ has free variables $u_1,...,u_n,
U_1,...,U_k$, then we will denote that as $\phi(u_1,...,u_n, U_1,...,U_k).$

\begin{example}
We can construct an MSO formula $\emph{edge}(u,v)$ which expresses the fact
that there exists an edge from vertex $u$ to vertex $v$. It is given by:
\[\emph{edge}(u,v) := \bigvee_{\gamma \in \Gamma} \emph{edge}_{\gamma}(u,v)\]
Using the above formula, we may construct a closed MSO formula
$\emph{edges}$:
\[ \emph{edges} := \forall u,v.\ u \not= v \implies \emph{edge}(u,v)\]
Then, $H \models \emph{edges}$ if the graph $H$ has an edge between any pair of
distinct vertices.
\end{example}

Next, we introduce the notion of an MSO definable language, which will
play a crucial role in some of the proofs in later chapters. In particular,
we shall see that for a fixed string graph, its wire-homeomorphism class
is an MSO definable language.

\begin{definition}[MSO definable language \cite{c-ednce}]
A graph language $L \subseteq [GR_{\Sigma, \Gamma}]$ is \emph{MSO definable}
if there is a closed formula $\phi \in \emph{MSOL}(\Sigma, \Gamma)$ such
that $L = \{[H] \in [GR_{\Sigma, \Gamma}]\ |\ H \models \phi \}$.
\end{definition}

So, any closed MSO formula $\phi$ gives rise to a MSO definable language.
These languages are the graph grammar analogue of regular languages
for standard context-free string grammars in the sense that they provide
the analogous closure property. Context-free string languages are closed under
intersection with regular languages and as the next theorem shows, C-edNCE
and B-edNCE languages are closed under intersection with MSO definable
languages.

\begin{theorem}[\cite{c-ednce}]\label{thm:mso_closed}
\textbf{B-edNCE} is closed under intersection with MSO definable
languages.
\end{theorem}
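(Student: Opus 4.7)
My plan is to follow the classical approach used to show the analogous closure for context-free string languages under intersection with regular languages, but lifted to the graph setting via derivation trees and MSO transductions. The crux is that B-edNCE derivation trees form a regular tree language and the yield operation is definable in MSO, so MSO properties of graphs can be pulled back to MSO (and hence regular) properties of trees.

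Concretely, given a B-edNCE grammar $G$ and a closed formula $\phi \in \mathrm{MSOL}(\Delta,\Omega)$ defining $M$, I would proceed in four steps. First, consider the set $T_G$ of derivation trees of $G$ (as in Subsection~\ref{sub:derivation-tree}), which, since productions are drawn from a finite set $P$ with a finite labelling, is a regular tree language over the ranked alphabet $P$ (the rank of $p$ being the number of nonterminal vertices in $rhs(p)$). Second, observe that the yield map $\mathit{yield} : T_G \to L(G)$ is an MSO-definable graph transduction: each terminal vertex of $\mathit{yield}(t)$ is identified with a pair (tree node, terminal vertex of the production at that node), and both the vertex-labelling and the edge relation of $\mathit{yield}(t)$ can be expressed by an MSO formula interpreted in $t$, since edges between daughter subgraphs arise through the finite and locally-computable connection instructions of a boundary grammar. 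Third, apply the standard backwards-translation lemma for MSO along MSO-definable transductions: there is a closed formula $\hat\phi$ over trees such that $t \models \hat\phi$ iff $\mathit{yield}(t) \models \phi$. Hence $\hat{T} := \{t \in T_G : t \models \hat\phi\}$ is definable as the intersection of a regular tree language with an MSO-definable tree language, and is therefore itself a regular tree language.

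Fourth and finally, I would convert $\hat{T}$ back into a B-edNCE grammar $G'$ with $L(G') = L(G) \cap M$. The trick is a product construction: if $\mathcal{A}$ is a finite tree automaton recognising $\hat{T}$, then the nonterminals of $G'$ are pairs $(X, q)$ where $X$ is a nonterminal of $G$ and $q$ is a state of $\mathcal{A}$; a production $(X,q) \to (D', C')$ of $G'$ is obtained from a production $p : X \to (D,C)$ of $G$ by replacing each nonterminal vertex $x_i$ (of label $X_i$) in $D$ with a nonterminal vertex of label $(X_i, q_i)$, for any tuple $(q_1,\ldots,q_k)$ that $\mathcal{A}$ assigns state $q$ at the root labelled $p$ with children in those states. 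The connection relations $C'$ are inherited unchanged from $C$ on the new labels.

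The main obstacle is to ensure that $G'$ remains boundary (not merely confluent) and that context-consistency is preserved enough for the construction to go through cleanly; here the normal form of Theorem~\ref{thm:normal-form} is what I would invoke, replacing $G$ by an equivalent CNF grammar at the outset. Because the two defining conditions of Definition~\ref{def:b-ednce} constrain only the shape of $rhs(p)$ and of its connection instructions, and because the product construction neither creates new edges between nonterminals nor introduces connection instructions whose first component is nonterminal, boundary-ness is preserved. The remaining technical burden is the verification of the MSO-definability of the yield transduction, for which the key observation is that since $G$ is boundary, bridges established during a derivation step at a node $u$ of $t$ depend only on (i) the production at $u$ and (ii) the productions at the parent and grandparent of $u$ in $t$, which is a finite-lookaround quantity easily captured in MSO.
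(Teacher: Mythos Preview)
The paper does not prove this theorem; it is quoted as a result from the standard reference \cite{c-ednce}. Your sketch follows the canonical argument used there: derivation trees of a B-edNCE grammar form a regular tree language, the yield is an MSO transduction, the backwards-translation lemma pulls $\phi$ back to an MSO (hence regular) set of trees, and a product with a tree automaton rebuilds a B-edNCE grammar. So in outline you are aligned with the literature proof.

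There is, however, a genuine gap in your final paragraph. Your claim that bridges established at a tree node $u$ depend only on the productions at $u$, its parent, and its grandparent is false. Dynamic edge relabelling lets a terminal vertex $v$ created at node $n$ be adjacent to a nonterminal $x$ in the same production; the connection instructions of the production replacing $x$ may route that adjacency to a further nonterminal, and so on, so that the terminal vertex $w$ eventually connected to $v$ can sit at a node arbitrarily far below $n$ in the tree. The boundary condition guarantees only that the neighbours of each nonterminal are terminal, not that adjacencies cannot be threaded through an unbounded chain of nonterminals.

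What makes the yield an MSO transduction despite this is a different observation: whether a $\gamma$-edge exists between the terminal at $(n_1,v)$ and the terminal at $(n_2,w)$ is determined by whether, along the tree path from $n_1$ to $n_2$, the sequence of connection instructions composes (in the sense of the substitution operation) to produce final label $\gamma$. Since $\Gamma$ is finite, tracking the current edge label and direction along that path is a finite-state (hence regular, hence MSO-definable) property of the tree. You need this path-based argument, not a bounded-lookaround one, for the transduction step to go through.
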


The above theorem also holds for C-edNCE languages, but it is also true for
B-edNCE languages as well, which is the only languages we will be working with.
This theorem then can be immediately used in order to obtain some important
decidability properties which we shall make use of in the next chapters.

\subsection{Decidability and Complexity results}\label{subsec:decide}

In this subsection, we will list some notable decidability and complexity
results for VR grammars which we will make use of in the following
chapters. In general, B-edNCE grammars are strictly more
powerful than context-free string grammars and therefore their decidability
and complexity properties are in general worse compared to their string
counterparts.

Nevertheless, C-edNCE and B-edNCE grammars enjoy some important decidability
properties which we will make use of. One of the most important ones is
the decidability of the membership problem.

\begin{theorem}[\cite{c-ednce}]
Given an arbitrary graph $H \in GR_{\Sigma, \Gamma}$ and an edNCE grammar
$G=(\Sigma,\Delta,\Gamma,\Omega,P,S)$, it is decidable whether or not
$[H] \in L(G).$
\end{theorem}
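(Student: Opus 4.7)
The plan is to reduce the membership question to a finite enumeration of bounded-length derivations, followed by decidable graph isomorphism checks. The central observation is that, after normalising $G$ to eliminate empty productions (except possibly an initial one whose yield is the empty graph) and chain productions, every derivation step strictly increases the number of vertices in the sentential form by at least one. This normalisation is effective even without confluence: the set of nonterminals that can derive the empty terminal graph is computed by a standard least-fixed-point iteration over $P$, and chain productions are eliminated by syntactic composition of the chained production with each of its possible successors. Neither step relies on the machinery of Theorem~\ref{thm:normal-form}, which was stated specifically for B-edNCE grammars.

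Let $n = |V_H|$. By the monotonicity established above, any derivation $sn(S,z) \Longrightarrow_{*} H'$ with $H'$ isomorphic to $H$ has length at most $n$, since the sentential form must grow from one vertex to $n$ vertices. Up to isomorphism of production copies, at each step only finitely many choices are available: the production applied is taken from the finite set $P$, and the nonterminal to rewrite is chosen from the at most $n$ vertices of the current sentential form. I would therefore enumerate all derivations of length at most $n$ starting from $sn(S,z)$, compute the resulting extended graph for each, discard those that are non-terminal, and test each remaining candidate for isomorphism with $H$. Since graph isomorphism of finite vertex- and edge-labelled graphs is decidable, the overall procedure is effective.

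The main obstacle is justifying the normalisation step outside the confluent setting, so that the bounded search is valid for an arbitrary edNCE grammar rather than merely a B-edNCE one. However, the two elimination constructions I invoke are purely syntactic manipulations of productions and do not depend on the order of derivations or on the context around a nonterminal being fixed; they therefore carry over unchanged to edNCE grammars. Once this is settled, the handful of derivations that yield the empty graph (if any) is handled as a separate base case involving only the optional initial empty production, and the remainder of the argument is a routine finite enumeration combined with the decidability of graph isomorphism.
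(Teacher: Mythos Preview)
The paper does not prove this theorem itself; it is simply quoted from \cite{c-ednce}. So there is nothing to compare your argument against, and I will comment only on its soundness.

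Your overall strategy (normalise, then bound the search) is the standard one, but there is a genuine gap in your normalisation step. Chain-production elimination by syntactic composition does \emph{not} preserve the language of a general edNCE grammar. The failure is precisely in the point you try to dismiss: in a non-confluent grammar one may apply a chain production $X \to (w{:}Y, C)$, then expand a neighbouring nonterminal (changing that neighbour's vertex label), and only afterwards expand $w$. The connection instructions of the $Y$-production then see a different neighbour label than the $X$-vertex did, and can establish bridges that the composed production $X \to (D,C')$ cannot, since the composition fixes the neighbour labels to those visible at the $X$-step. A three-production grammar with two adjacent nonterminals already exhibits a graph in $L(G)$ lost by the composed grammar. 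So the claim that the construction ``does not depend on the order of derivations'' is exactly what fails.

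The repair is that you do not need chain elimination. Eliminating only the empty productions --- which \emph{does} carry over to arbitrary edNCE, since substituting the empty graph for a vertex removes only that vertex and its incident edges and therefore commutes with every other substitution --- already makes the total vertex count non-decreasing along any derivation. Hence every intermediate sentential form has at most $n = |V_H|$ vertices, and over the finite alphabets there are only finitely many such graphs up to isomorphism. One then decides membership by computing the reachable set of sentential forms of size at most $n$ (a finite fixed point), or equivalently by bounding the derivation length by the number of such isomorphism classes, since any repeated sentential form can be excised. As a minor point, your stated bound of $n$ is also off even granting chain elimination: a production whose body is a single terminal vertex keeps the vertex count fixed, so the count is merely non-decreasing, not strictly increasing.
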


However, the problem is NP-complete in general, even for simple subclasses of
edNCE grammars, such as the LIN-A-edNCE grammars. Still, there are known
efficient algorithms for many special cases.

The next decision problem concerns emptiness and finiteness of grammars
which will be used in several proofs.

\begin{proposition}[\cite{c-ednce}]\label{prop:emptiness}
It is decidable, for an arbitrary C-edNCE grammar $G$, whether or not
$L(G)$ is empty. Also, it is decidable whether or not $L(G)$ is finite.
Moreover, if $L(G)$ is finite, it can be constructed effectively.
\end{proposition}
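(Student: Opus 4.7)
The plan is to adapt the classical context-free string grammar algorithms to the graph setting, relying crucially on confluence so that we can work with derivation trees as honest syntactic objects whose yields are uniquely determined.

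For the emptiness question, I would compute the set of \emph{productive} nonterminals (those from which some terminal graph is derivable) by the standard fixed-point procedure: initialize with every nonterminal $X$ possessing a production $X \to (D,C)$ such that $D$ is terminal, then iteratively add any $X$ having a production whose right-hand side contains only terminal vertices and already-marked nonterminals. The procedure terminates in at most $|\Sigma - \Delta|$ rounds. By induction on the height of derivation trees one shows that $X$ is productive in this combinatorial sense if and only if $X \Longrightarrow_* H$ for some terminal $H$; thus $L(G) \neq \emptyset$ iff $S$ is marked. Note that this step ignores the connection relations entirely, which is legitimate because we only ask for \emph{existence} of a terminal derivation.

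For finiteness, first restrict $G$ to \emph{useful} nonterminals (productive \emph{and} reachable from $S$ in the production-call graph), which does not alter the language. Then, treating productions purely as a context-free skeleton, decide whether some useful nonterminal $X$ is \emph{recursive}, meaning there is a derivation tree rooted at an $X$-production whose yield, viewed in the skeleton, still contains an $X$-labelled vertex together with at least one additional vertex; equivalently, compute the transitive reachability relation on nonterminals in the obvious production-call digraph together with a flag for ``non-trivial expansion''. If no useful nonterminal is recursive, every derivation tree has depth bounded by $|\Sigma - \Delta|$ and bounded branching (by the maximum right-hand side size), so only finitely many derivation trees exist up to isomorphism-class of production copies; enumerating them and computing yields via $yield(T) = rhs(p)[x_1/yield(T_1)]\cdots[x_k/yield(T_k)]$ gives $L(G)$ effectively. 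If, on the other hand, a recursive useful $X$ exists, one exhibits derivation trees of arbitrary height by iterating the recursive sub-derivation, and the resulting yields have strictly increasing vertex counts, witnessing $L(G)$ infinite.

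The main subtlety, and the reason we cannot merely quote the string-grammar result verbatim, is the argument that iterated pumping actually produces non-isomorphic graphs rather than the same graph over and over, since embedding can in principle identify or collapse structure in non-obvious ways. This is where confluence does the real work: by the yield recursion for $p$-labelled derivation trees, the yield of a tree of height $n$ contains at least as many terminal vertices as the number of non-chain, non-empty productions used, so if necessary I would first invoke a normal form removing chain and empty productions (as discussed for the B-edNCE case in Theorem~\ref{thm:normal-form}; the analogous normal form for C-edNCE is standard in~\cite{c-ednce}) and then conclude that pumping strictly increases vertex count, ruling out collapse and forcing infinitely many isomorphism classes. Effectiveness is then immediate in both cases, since all the fixed-point computations and the bounded tree enumeration are clearly mechanical.
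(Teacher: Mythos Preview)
The paper does not supply its own proof of this proposition: it is stated in the background chapter as a known result cited from~\cite{c-ednce}, and is used as a black box thereafter (e.g.\ in the corollary immediately following it). So there is no ``paper's proof'' to compare against.

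Your sketch is essentially the standard argument one would find in the cited reference, and it is sound. The reduction of emptiness to computing productive nonterminals, and of finiteness to detecting recursive useful nonterminals after passing to a chain-free, empty-free normal form, is exactly how these questions are handled for C-edNCE grammars. You have also correctly isolated the one genuinely graph-specific issue---that pumping a recursive nonterminal must yield pairwise non-isomorphic graphs rather than collapsing---and your resolution via the normal form (so that each non-trivial production adds at least one terminal vertex) is the right one. One small point you gloss over: for a fully arbitrary C-edNCE grammar with non-final edge labels, a derivation ending in a terminal graph may still be blocked by a non-final edge and hence not contribute to $L(G)$; your productivity test as stated ignores edge labels. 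This is harmless in practice because one first passes to a nonblocking normal form (Lemma~\ref{lem:nonblocking} and Corollary~\ref{cor:nonblocking}, which also hold for C-edNCE), after which your argument goes through verbatim.
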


This proposition leads to an interesting corollary. In particular, it shows
that we may decide whether the graphs in the language of some grammar all
satisfy a fixed MSO formula.

\begin{corollary}[\cite{c-ednce}]
Let $L \in$ \textbf{B-edNCE} be a language generated by a B-edNCE grammar
and let $R$ be an MSO definable language. Then, it is decidable
whether or not $L \cap R = \emptyset$. Also, it is decidable whether or not
$L \subseteq R$.
\end{corollary}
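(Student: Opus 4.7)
The plan is to reduce both decision problems to the decidability of emptiness for B-edNCE grammars, using the closure of B-edNCE under intersection with MSO definable languages. The main tools available are Theorem~\ref{thm:mso_closed} (closure of B-edNCE under MSO intersection) and Proposition~\ref{prop:emptiness} (decidability of emptiness for C-edNCE, hence B-edNCE, grammars). I would require the closure construction in Theorem~\ref{thm:mso_closed} to be effective; this is implicit in the standard treatment in \cite{c-ednce}, so I would simply invoke it.

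For the first claim, given a B-edNCE grammar $G$ with $L(G) = L$ and an MSO formula $\phi$ defining $R$, I would apply Theorem~\ref{thm:mso_closed} to effectively construct a B-edNCE grammar $G'$ with $L(G') = L(G) \cap R = L \cap R$. Then by Proposition~\ref{prop:emptiness}, it is decidable whether $L(G') = \emptyset$, i.e., whether $L \cap R = \emptyset$.

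For the second claim, I would observe that $L \subseteq R$ if and only if $L \cap \overline{R} = \emptyset$, where $\overline{R} = [GR_{\Delta,\Omega}] - R$ is the complement of $R$ in the set of all (isomorphism classes of) graphs over the terminal alphabets. The key point is that $\overline{R}$ is itself MSO definable: if $\phi$ is a closed MSO formula defining $R$, then $\neg \phi$ is a closed MSO formula defining $\overline{R}$. Consequently, the problem reduces to the first claim applied to $L$ and $\overline{R}$, which we have just shown to be decidable.

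The main obstacle, if any, is ensuring that the construction in Theorem~\ref{thm:mso_closed} is effective rather than merely existential, so that we can actually build the B-edNCE grammar for $L \cap R$ and then feed it into the emptiness-decision procedure of Proposition~\ref{prop:emptiness}. Once this is granted (which it is in the standard reference), both parts of the corollary follow immediately; there is no subtle combinatorial or categorical argument required beyond assembling these two results and noting that MSO definable languages are closed under complementation.
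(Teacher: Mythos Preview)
Your proposal is correct and follows essentially the same approach as the paper: combine Theorem~\ref{thm:mso_closed} with Proposition~\ref{prop:emptiness} for the first part, and for the second part observe that $L \subseteq R$ iff $L \cap \lnot R = \emptyset$ where $\lnot R$ is defined by $\lnot\phi$, reducing to the first part. Your remark about effectiveness of the closure construction is a fair point to flag, though the paper simply takes it for granted from the standard reference.
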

\begin{proof}
The first decidability result follows immediately by combining
Proposition~\ref{prop:emptiness} and Theorem~\ref{thm:mso_closed}.

For the second one, let $R := \{[H] \in [GR_{\Sigma, \Gamma}]\ |\ H \models
\phi \}$. Then, we can define $\lnot R :=
\{[H] \in [GR_{\Sigma, \Gamma}]\ |\ H \models \lnot \phi \}$.
By the first decidability result we can decide whether or not
$L \cap \lnot R = \emptyset.$ However, this is equivalent to deciding $L
\subseteq R.$
\end{proof}

The next proposition shows that B-edNCE grammars can simulate
context-free string grammars.

\begin{proposition}
For any context-free string grammar $G$, there exists a
B-edNCE grammar $G'$ which generates the same string language up
to encoding.
\end{proposition}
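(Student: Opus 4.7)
The plan is to encode strings as linear graphs in which the successive characters are node-vertices joined by intermediate wire-vertices carrying a single fixed edge label, and then to simulate each production of the context-free grammar $G$ by a B-edNCE production that rewrites a single nonterminal vertex to the corresponding linear alternation, using two connection instructions to relink the two wire-vertex neighbours of the rewritten nonterminal to the new endpoints of its expansion.

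First I would preprocess $G = (V, T, P, S)$ in the standard way so that no $\epsilon$-productions remain other than possibly $S \to \epsilon$, and so that $S$ occurs in no right-hand side; these transformations preserve $L(G)$. Next I would fix the encoding $\mathrm{enc}$ that sends $a_{1}a_{2}\cdots a_{n} \in T^{*}$ to the graph with node-vertices $v_{1},\ldots,v_{n}$ labelled $a_{1},\ldots,a_{n}$, wire-vertices $w_{1},\ldots,w_{n-1}$ (carrying a distinguished label $\mathrm{wire}$), and edges $v_{i} \to w_{i} \to v_{i+1}$ all with a single fixed final edge label $\text{next}$; the empty string maps to the empty graph and a single character to an isolated labelled node-vertex.

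Then I would define $G' = (\Sigma', \Delta', \Gamma', \Omega', P', S)$ with vertex alphabet $\Sigma' = V \cup T \cup \{\mathrm{wire}\}$ (terminals $\Delta' = T \cup \{\mathrm{wire}\}$, split as node-vertex labels $T$ and wire-vertex label $\{\mathrm{wire}\}$) and edge alphabet $\Gamma' = \Omega' = \{\text{next}\}$. For each production $A \to x_{1}x_{2}\cdots x_{k} \in P$ with $k \ge 1$, the associated $G'$-production $A \to (D, C)$ has $D$ the linear graph $x_{1} \to w'_{1} \to x_{2} \to \cdots \to w'_{k-1} \to x_{k}$ (with freshly chosen wire-vertices $w'_{j}$) and $C = \{(\mathrm{wire}, \text{next}, \text{next}, x_{1}, \mathrm{in}),\, (\mathrm{wire}, \text{next}, \text{next}, x_{k}, \mathrm{out})\}$; if $S \to \epsilon$ is in $P$, it is translated to the $G'$-production with empty right-hand side and no connection instructions.

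Three routine verifications then finish the proof. First, $G'$ is a B-edNCE grammar because in every right-hand side consecutive nonterminal vertices are separated by a wire-vertex, so no edge in any right-hand side connects two nonterminals, and every connection instruction has the terminal label $\mathrm{wire}$ in its first component. Second, by induction on derivation length one shows that for every $G$-sentential form $\alpha$ the graph obtained by inserting a wire-vertex between every pair of consecutive symbols of $\alpha$ is reachable from the starting graph of $G'$, and conversely; the inductive step is straightforward because applying a $G$-production to rewrite an occurrence of $A$ mirrors applying the corresponding $G'$-production, and the two connection instructions correctly re-establish the wire-vertex bonds at the new endpoints. Third, specialising to terminal sentential forms gives $L(G') = \{\mathrm{enc}(w) : w \in L(G)\}$ up to graph isomorphism, which is the claimed encoding equivalence. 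The one subtlety to watch is that boundary nonterminals (those lacking a left or right wire-vertex neighbour, starting from $S$ itself) simply cause the respective connection instruction to fire vacuously, so the induction remains intact; this is the only place where one might worry about a special case, and it is immediate once pointed out.
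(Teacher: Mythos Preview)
Your proposal is correct and takes essentially the same approach as the paper: encode a string (or sentential form) as a linear graph in which consecutive symbols are separated by terminal wire-vertices, and translate each production $A \to x_{1}\cdots x_{k}$ into a B-edNCE production whose body is the corresponding alternating chain, with two connection instructions (on the first and last symbol, targeting the terminal wire label) that re-attach the two wire-vertex neighbours of the rewritten nonterminal. Your explicit preprocessing to eliminate $\epsilon$-productions is a detail the paper's sketch leaves implicit, but otherwise the constructions coincide.
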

\begin{proof}
Any sentential form $x_1x_2...x_n$ of $G$ (which may include nonterminals)
will be encoded as the following graph:
\cstikz{simulation.tikz}
where the vertices between the $x_i$-vertices are dummy terminal vertices,
which may be thought of as wire-vertices and the whole graph can thus
be seen as a string graph as defined in Section~\ref{sec:string}.

Then, to simulate $G$, for each production $X \to x_1x_2...x_n$ of $G$,
we add the following production to $G'$:
\cstikz{simulation2.tikz}
Thanks to the introduction of the dummy wire-vertices, $G'$ is
boundary.
\end{proof}

\begin{corollary}
Given two B-edNCE grammars $G_1$ and $G_2$, it is undecidable whether
or not $L(G_1) = L(G_2).$
\end{corollary}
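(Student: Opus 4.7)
The plan is to reduce from the equivalence problem for context-free string grammars, which is a classical undecidable problem (Bar-Hillel--Perles--Shamir). The previous proposition does most of the heavy lifting: it gives an effective procedure which, from any context-free string grammar $G$, constructs a B-edNCE grammar $\widehat{G}$ whose language is the image of $L(G)$ under the encoding $x_1 x_2 \cdots x_n \mapsto$ the string graph with terminal vertices $x_1, \dots, x_n$ separated by dummy wire-vertices.

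First, I would verify that this encoding $\mathsf{enc}$ is injective up to graph isomorphism: distinct strings over the terminal alphabet produce non-isomorphic string graphs, because the linear sequence of non-dummy labels can be recovered uniquely (up to the two possible traversal directions, which can be resolved by the edge directions in the encoding). Consequently, for any two context-free string grammars $G_1, G_2$ with the same terminal alphabet, we have
\[
L(G_1) = L(G_2) \iff \mathsf{enc}(L(G_1)) = \mathsf{enc}(L(G_2)) \iff L(\widehat{G_1}) = L(\widehat{G_2}).
\]

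Next, I would assemble the reduction: given an instance $(G_1, G_2)$ of the context-free equivalence problem, effectively build $(\widehat{G_1}, \widehat{G_2})$ by the construction of the preceding proposition and ask whether $L(\widehat{G_1}) = L(\widehat{G_2})$. A decision procedure for B-edNCE equivalence would thus yield one for context-free string equivalence, contradicting the known undecidability of the latter. Hence B-edNCE equivalence is undecidable.

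The only slightly delicate point is to make sure the two encoded grammars share a common vertex- and edge-labelling alphabet so that language equality is meaningful (we may always take the union of the alphabets of $G_1$ and $G_2$ without affecting the languages), and to observe that isomorphism of the encoded string graphs coincides precisely with equality of the underlying strings. Both are routine once the encoding is fixed, so no serious obstacle arises beyond setting up the reduction carefully.
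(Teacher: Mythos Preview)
Your proposal is correct and follows essentially the same approach as the paper: reduce from the undecidability of equivalence for context-free string grammars via the encoding supplied by the preceding proposition. You are in fact more careful than the paper, which simply states that decidability of B-edNCE equivalence would imply decidability of context-free string equivalence without spelling out the injectivity of the encoding or the alphabet issue.
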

\begin{proof}
From the last proposition it follows that decidability of language equality
for B-edNCE grammars implies decidability of language equality for
context-free string grammars. However, that problem is undecidable
\cite{formal_languages_book}.
\end{proof}

\subsection{Parikh's Theorem}

In this subsection, we will introduce Parikh's theorem for C-edNCE languages.
Parikh's Theorem was first described by Rohit J. Parikh in \cite{parikh}
for context-free string grammars. It defines a mapping from the language
of a grammar to a set of vectors of nonnegative integers. The main result is
that this mapping results in a semilinear set, if the grammar is context-free.
Parikh's theorem may be used to prove that certain string languages are not
context-free. We will use the version of the theorem for graph languages
in order to show that certain languages cannot be described using C-edNCE
grammars.

In the definition below, we shall denote the set of all nonnegative
integers as $\mathbb N$ and the vector space of all $n$-tuples of nonnegative
integers as $\mathbb N^n$.

\begin{definition}[Semilinear set \cite{semilinear}]
We shall say that a set $L \subseteq \mathbb N^n$ is \emph{linear} if there
exists $c \in \mathbb N^n$ and a finite subset $P \subseteq \mathbb N^n$, such
that
\[L= \{x \in \mathbb N^n\ |\ x=
c + x_1 + \cdots + x_m, \text{ where } m \in \mathbb N, x_i \in P\}\]
We will say a set $L \subseteq \mathbb N^n$ is \emph{semilinear} if it is
a finite union of linear sets.
\end{definition}

Note, that in the definition of linear set, the vectors $x_i$ may appear any
finite number of times. So, for $P=\{x_1,x_2,\ldots x_k\},$ an equivalent way
to define a linear set is:
\[L= \{x \in \mathbb N^n\ |\ x= c + \sum_{i=1}^{k} t_i\cdot x_i, \text{ where }
t_i \in \mathbb N\}\]

\begin{example}
The set:
\[
\{(n ,k)\ |\ n \in \mathbb N, k=2n \text{ or } k=3n+1\}\\
\]
\[=\]
\[
\{(n ,2n)\ |\ n \in \mathbb N\} \cup \{(n ,3n+1)\ |\ n \in \mathbb N\}
\]
is semilinear, because it is the union of the two sets above, which are
both linear. However, the following set:
\[\{(n ,n^2)\ |\ n \in \mathbb N\}\]
is not semilinear.
\end{example}

For brevity, we will not present Parikh's theorem for context-free string
grammars. Instead, we present the version for context-free graph
grammars, which is given in \cite{parikh_graph}. 

\begin{definition}[Parikh mapping \cite{parikh_graph}]
For a graph language $L \subseteq [GR_{\Delta, \Gamma}]$ and a sequence
$\pi = (\Delta_1, \ldots, \Delta_k)$ of subsets of $\Delta$, we denote by
$Par_{\pi}(L)$ the set:
\[
Par_{\pi}(L) :=
\left\{
(n_1,\ldots,n_k)\ |\ H \in L,
n_i = \#\{v \in V_H\ |\ \lambda_H(v) \in \Delta_i\}
\right\}
\]
\end{definition}

In other words, the mapping sends a graph to a vector of nonnegative integers,
where each component is just the number of vertices with certain labels. Note,
that the mapping ignores edges. The main theorem of this subsection is given
next.

\begin{theorem}[\cite{parikh_graph}]\label{thm:parikh}
For every graph language $L$ in \textbf{C-edNCE}, $Par_{\pi}(L)$ is
semilinear.
\end{theorem}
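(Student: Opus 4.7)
The plan is to reduce the statement to the classical Parikh theorem for context-free string grammars. The crucial observation is that when a production $p : X \to (D, C)$ is applied in an edNCE derivation, the labels of pre-existing terminal vertices are never modified: the nonterminal vertex labelled $X$ is removed from the sentential form, and the vertices of $D$ are added with their labels in $D$. Consequently, for any concrete derivation $sn(S, z) \Longrightarrow_* H$, if $n_p$ denotes the number of production copies of $p$ used, then
$$\#\{v \in V_H \mid \lambda_H(v) \in \Delta_i\} \;=\; \sum_{p \in P} n_p \cdot a_p^{(i)},$$
where $a_p^{(i)}$ is the number of terminal vertices of $rhs(p)$ with label in $\Delta_i$. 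This quantity depends only on the production $p$ itself, not on which isomorphic copy was chosen, because production copies preserve labels.

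Next, I would encode derivation trees of $G$ into terminal words of an auxiliary context-free string grammar $G^{\star}$ over the alphabet of production names. For each production $p : X \to (D, C)$ of $G$ whose $rhs$ has nonterminal vertices of labels $Y_1, \ldots, Y_k$ (in the fixed linear order used for derivation trees in Section~\ref{sub:derivation-tree}), add the rule $X \to p\, Y_1 \cdots Y_k$ to $G^{\star}$ with start symbol $S$. Depth-first traversal then induces a bijection between derivation trees of $G$ and derivation trees of $G^{\star}$, and in the resulting word the letter $p$ occurs exactly $n_p$ times. Hence $Par(L(G^{\star})) \subseteq \mathbb{N}^{|P|}$, in the usual string-grammar sense, is exactly the set of vectors $(n_p)_{p \in P}$ realisable by some concrete derivation of $G$.

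Invoking the classical Parikh theorem for context-free string grammars, $Par(L(G^{\star}))$ is semilinear. Now consider the $\mathbb{N}$-linear map $\varphi : \mathbb{N}^{|P|} \to \mathbb{N}^k$ whose matrix has entries $a_p^{(i)}$. By the identity in the first paragraph, $Par_\pi(L(G)) = \varphi\bigl(Par(L(G^{\star}))\bigr)$. Since $\varphi$ sends each linear set $\{c + \sum t_i x_i\}$ to the linear set $\{\varphi(c) + \sum t_i \varphi(x_i)\}$, and linear maps commute with finite unions, the image of a semilinear set under a linear map is semilinear, and the theorem follows.

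The main obstacle is essentially bookkeeping: carefully verifying the vertex-counting identity for arbitrary concrete derivations irrespective of the order in which nonterminals are expanded, and formalising the correspondence between derivation trees of $G$ and derivation trees of $G^{\star}$. The confluence of C-edNCE grammars trivialises the first point, because any two derivations with the same underlying derivation tree yield isomorphic graphs (and hence the same Parikh vector), so the identity only needs to be checked by a straightforward induction on tree size.
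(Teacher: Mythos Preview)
The paper does not prove this theorem; it is quoted from \cite{parikh_graph} as a background result without argument, so there is no paper proof to compare against.

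Your reduction to the classical string Parikh theorem via an auxiliary context-free grammar $G^\star$ over production names is the standard route and is essentially correct. One point needs tightening: you claim $Par_\pi(L(G)) = \varphi\bigl(Par(L(G^\star))\bigr)$, but a derivation tree of $G$ whose yield has only terminal vertices may still contain non-final edges, and then the yield is \emph{not} in $L(G)$ (the definition requires $H \in GR_{\Delta,\Omega}$). So in general you only obtain the inclusion $Par_\pi(L(G)) \subseteq \varphi\bigl(Par(L(G^\star))\bigr)$. The fix is routine: first replace $G$ by an equivalent nonblocking C-edNCE grammar (this is possible---see the discussion around Lemma~\ref{lem:nonblocking}, which is originally stated for C-edNCE), after which every terminal sentential form lies in the language and equality holds. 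With that preliminary normalisation your argument goes through.
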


So, Parikh's theorem establishes necessary conditions for C-edNCE languages
and we will use it in the next chapters to show that certain languages
are not context-free.

\begin{example}
The language of all square grids $L$:
\cstikz{square-grids-crap.tikz}
is not a C-edNCE language, because its Parikh mapping is
$Par_{\Sigma}(L) =  \{n^2\ |\ n \in \mathbb N\},$
which is not semilinear.
\end{example}

\section{Related Work}\label{sec:related-work-back}

\added{A similar approach to using partial adhesive categories (cf.
Section~\ref{sec:partial_adhesive}) is using \emph{quasiadhesive categories}
\cite{quasi}. The two theories are motivated by the same problem~--~adhesivity
is sometimes too strong for categories of interest and a generalisation is
needed in order to perform DPO rewriting. Quasiadhesive categories take the
approach of requiring that pushouts are stable under \emph{regular} monos,
instead of arbitrary monos. This generalisation still retains many of the
useful properties of adhesive categories. However, in this thesis we use partial
adhesive categories, because they are even more general and as we will show we
can fully characterise the conditions under which DPO rewriting is
well-behaved. This requires more work on our part, but it also allows us to use
a larger class of monos over which we can do rewrites.}

\added{In this work, we represent string diagrams using string graphs. An
alternative approach to reasoning on string diagrams is presented in
\cite{aleks-hypergraph}, where the authors use hypergraphs instead. In the
string graph representation, the nodes of a string diagram correspond to
vertices, whereas in the hypergraph approach, the nodes of a string diagram are
represented using hyperedges instead. The main benefits of the hypergraph
approach is that it is applicable to any symmetric monoidal category, whereas
the string graph approach requires traced structure in addition. In the
hypergraph approach the need for considering equality up to wire-homeomorphism
is avoided and the category used for rewriting is adhesive, instead of
partially adhesive. However, it is unclear how the hypergraph approach can be
lifted to higher-order rewriting, that is, rewriting on families of string
diagrams, which is our primary goal in this thesis. As we have seen, many of
the languages we are interested in involve families of string diagrams where
some of the nodes may have unbounded arity. This would correspond to languages
of hypergraphs where some hyperedges are allowed to have unbounded number of
tentacles. Hypergraph grammars~\cite{hr-grammars} do not allow for such
languages and the author does not know of any other grammars which support
this. Therefore, it is likely that a novel kind of hypergraph grammar would
have to be developed in order to lift that approach to rewriting on the grammar
level.}

\chapter{\label{ch:context-free}Context-free Graph Grammars for String Graphs}

Our primary goal in this thesis is to develop a framework for equational
reasoning with infinite families of string diagrams which is amenable to
computer automation. From previous work, we know how to efficiently represent
string diagrams using string graphs \cite{kissinger_dphil} and families of
string diagrams using !-graphs \cite{merry_dphil}.

In this chapter we will study the expressive power of !-graphs and context-free
graph grammars on string graphs. We begin by identifying important limitations
of the expressive power of !-graphs in Section~\ref{sec:bang-expressive-shit}.
We use this as a justification to consider an alternative to !-graphs with
increased expressive power.

Our first choice for such an alternative is context-free graph
grammars (CFGGs)~\cite{c-ednce}. These grammars have been studied extensively
and there is a rich literature of results around them, so this makes them an
obvious candidate to consider as an alternative to !-graphs. In
Section~\ref{sec:context-shit-power}, we study the expressive power of
CFGGs in relation to !-graphs. The main results are that an important subclass
of !-graph languages, namely $\mathbf{BGNO}$ (cf. Definition~\ref{def:bgto}) is
properly included in the class of context-free languages. We also show that
!-graph languages and context-free languages are incomparable in general, in
the sense that neither class includes the other.

Finally, in Section~\ref{sec:context-laina}, we briefly summarise all of
the results by illustrating the overall comparison between the expressive
power of context-free graph grammars and !-graphs.
We also identify the key limitations of
the expressive power of context-free grammars which we will use as a
justification to consider a simple extension to their expressive power in later
chapters.

Many of the results in this chapter have been previously published in
\cite{gam}.

\section{Limitations of !-graph expressiveness}\label{sec:bang-expressive-shit}

A very important language for the ZX-calculus is the family of undirected
complete string graphs $SK_n$. Each string graph in $SK_n$ has $n$
node-vertices each of which is connected to all other
node-vertices via a single wire containing a single wire-vertex. For example,
$SK_3$ is shown below:
\cstikz[0.8]{sk_3.tikz}
This language is important for the ZX-calculus as it is the fundamental
building block for the local complementation rule of the
ZX-calculus, which is very useful when working with the measurement-based
quantum computing paradigm~\cite{euler_necessity}. 
In addition, the local complementation rule is used in
the only
known
\replaced{decision}{decission}
procedure for deciding equality between ZX-diagrams in the
stabilizer fragment of quantum mechanics \cite{zx_complete}.

However, as we
shall see next, $SK_n$ cannot be described by
!-graphs and therefore this family of string graphs cannot be represented in
Quantomatic and cannot be used in any Quantomatic rewrite rules.
We shall use this as motivation to
consider a more powerful graph generating device.

The more general cause of this limitation
is that any !-graph language has bounded chromatic number where we
consider the underlying string diagram. This is made precise by the next
definition.

\begin{definition}[String Graph Chromatic Number]
Given a string graph $H$, define $C(H)$ to be the graph with the same
node-vertices as $H$, but where all wire-vertices on a wire between a pair of
node-vertices $(v,w)$ are removed and the wire is replaced by an edge $(v,
\alpha ,w)$, for some irrelevant edge label $\alpha$. Then the \emph{string
graph chromatic number} of $H$, denoted $\chi'(H)$ is:
\[\chi'(H) := \chi(C(H))\]
where $\chi(G)$ is the chromatic number of a graph $G$. A language of
string graphs $L$ is said to have a \emph{bounded} string graph chromatic
number if there exists $k \in \mathbb N$, such that for every string graph
$H \in L$, we have $\chi'(H) \leq k.$
\end{definition}

\begin{proposition}
Every !-graph language has bounded chromatic number.
\end{proposition}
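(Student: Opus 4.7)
The plan is to produce a constant $k = k(G)$, depending only on the !-graph $G$, such that $\chi'(H) \le k$ for every $H \in L(G)$. A natural candidate is $k := \chi'(U(G))$, since $U(G)$ is a concrete string graph with finitely many node-vertices. The strategy is to exhibit, for each $H \in L(G)$, a projection from $H$ down to $U(G)$ through which any proper colouring of $C(U(G))$ pulls back to a proper colouring of $C(H)$.

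Concretely, I would introduce a map $\pi : V(H) \to V(U(G))$ that sends every node-vertex of $H$ to the node-vertex of $U(G)$ from which it descends under the sequence of !-box operations producing $H$. The definition of $\pi$ is inductive on the length of the derivation: a KILL operation simply restricts $\pi$ to the surviving vertices, while an EXPAND operation sends every vertex of the new copy of $B(b)$ to the corresponding vertex in the original $B(b)$ and is the identity elsewhere. Because the pushout description of EXPAND guarantees that each new copy is attached to the rest of the graph only along the external neighbourhood of $B(b)$, the construction respects the invariant that a copy is never wired directly to its sibling.

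The main lemma to prove is then: for any distinct node-vertices $v, w$ of $H$ that are adjacent in $C(H)$, one has $\pi(v) \ne \pi(w)$ and $\pi(v)$ is adjacent to $\pi(w)$ in $C(U(G))$. The second clause is a direct induction on !-box operations, since every wire in $H$ is either a wire that was already present in $U(G)$ or an exact duplicate of one. The first clause is the real content, and I expect it to be the main obstacle: when !-boxes overlap non-trivially or are nested, repeated EXPANDs produce intricate families of copies, and one must verify that no interleaving of operations ever introduces a wire between two siblings projecting to the same original node. This follows from the fact that each EXPAND pushout attaches the new copy only along external attachment points, and the invariant is preserved when subsequent EXPANDs act on copies of shared (overlap or nested) subgraphs.

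Given the lemma, the proof concludes in one line: take any proper colouring $c$ of $C(U(G))$ with $\chi'(U(G))$ colours and set $c'(v) := c(\pi(v))$. The lemma ensures $c'$ is well-defined and proper on $C(H)$, so $\chi'(H) \le \chi'(U(G))$. Since the right-hand side depends only on $G$, the language $L(G)$ has bounded string graph chromatic number, with $k := \chi'(U(G))$ as the witnessing bound.
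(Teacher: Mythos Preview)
Your proposal is correct and follows essentially the same approach as the paper. The paper's proof is terser: it takes $k$ to be the chromatic number of $C(U(G))$, observes that every node-vertex of a concrete instance is a copy of some node-vertex of $U(G)$, and asserts in one line that ``copies of $v'$ may only be connected to copies of the neighbours of $v'$''; your projection map $\pi$ and your main lemma are a careful formalisation of exactly that assertion, and the colouring is pulled back the same way.
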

\begin{proof}
Consider an arbitrary !-graph $H$. Let $H'$ be the underlying string graph of
$H$ where we disregard all !-boxes. Then, $C(H')$ can be $k$-coloured for some
$k \in \mathbb N$. Now, consider an arbitrary string graph $M \in L(H).$ Any
vertex $v$ of $C(M)$ is a copy of some vertex $v'$ from $H'$ (or is just the
same vertex). Thus, to get a $k$-colouring of $C(M)$, simply assign the same
colour to $v$ as the colour of $v'$ in $H'$. The copies of $v'$ may only be
connected to copies of the neighbours of $v'$ and therefore there can be no
colouring conflicts.
\end{proof}

\begin{corollary}
The language of graphs $SK_n$ is not a !-graph language.
\end{corollary}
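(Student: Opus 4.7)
The plan is to apply the preceding proposition, which states that every !-graph language has bounded string graph chromatic number, via a contrapositive argument. I would exhibit that the family $SK_n$ has unbounded string graph chromatic number, from which the corollary follows immediately.

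First I would compute $C(H)$ for an arbitrary $H \in SK_n$. By the definition of $SK_n$, every wire in $H$ connects a distinct pair among the $n$ node-vertices and contains a single wire-vertex in its interior. Applying the construction of $C(-)$ strips out all these wire-vertices and replaces each wire with a single edge. Hence $C(H)$ is precisely the complete graph $K_n$ on $n$ vertices (up to edge labelling, which is irrelevant for chromatic number).

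Next I would invoke the standard fact from graph theory that $\chi(K_n) = n$, so $\chi'(H) = n$ for every $H \in SK_n$. Therefore, for any prospective bound $k \in \mathbb{N}$, choosing $n > k$ yields a string graph $H \in SK_n$ with $\chi'(H) = n > k$, showing that $SK_n$ does not have bounded string graph chromatic number.

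Finally, I would conclude by contrapositive: if $SK_n$ were a !-graph language, then by the previous proposition it would have bounded string graph chromatic number, contradicting the previous step. The main (and only) content of the argument lies in correctly identifying $C(H)$ as $K_n$; there is no real obstacle since the construction is purely mechanical once the definitions are unfolded.
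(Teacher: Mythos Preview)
Your proposal is correct and follows essentially the same argument as the paper: compute $\chi'(H) = \chi(K_n) = n$ for $H \in SK_n$, conclude the family has unbounded string graph chromatic number, and apply the preceding proposition. The paper's proof is terser (it simply asserts $\chi'(H) = n$ without unpacking $C(H)$), but the substance is identical.
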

\begin{proof}
For every string graph $H \in SK_n$ with $n$ node-vertices, we have $\chi'(H) =
n$. Therefore, $SK_n$ is of unbounded string graph chromatic number and the
language cannot be induced by a !-graph.
\end{proof}

We proceed by presenting a severe limitation of !-graph languages.
We will show that any !-graph language
has a fixed upper bound on the maximum shortest path between a pair of
vertices. This notion is usually referred to as \emph{graph
diameter} in the literature, but we make small modifications to the definition
in order to suit our purposes.

\begin{definition}[Maximum distance \cite{gam}]
  For a graph $H$ and vertices $v,u \in H$, the \textit{distance} between $u$
  and $v$ is the length of the shortest path connecting $u$ and $v$. If there
  is no path between $u$ and $v$ then we say that the distance is -1. The
  distance between a vertex and itself is 0. The \textit{maximum distance} for
  a graph $H$ is the largest distance among all pairs of vertices.
\end{definition}

\begin{definition}[Bounded maximum distance \cite{gam}]
  For a set of graphs $L$, we say that
  $L$ is of \textit{bounded maximum distance} if there exists an
  integer $n \in \mathbb{N}$, such that the maximum distance for every graph in
  $L$ is smaller than $n$. If such an $n$ does not exist, then we say
  that $L$ is of \textit{unbounded maximum distance}.
\end{definition}

\begin{proposition}[\cite{gam}]
  The language induced by any !-graph $H$ is of bounded maximum distance.
\end{proposition}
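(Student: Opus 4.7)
The plan is to show that for any !-graph $H$, the maximum distance of every expansion $M \in L(H)$ is bounded above by the maximum distance of the underlying string graph $U(H)$, which is itself a finite constant depending only on $H$. The fundamental intuition is that the EXPAND operation creates parallel copies of !-box contents while preserving the connection relations with the rest of the graph, so expansion adds parallel paths rather than lengthening existing ones, and KILL only removes vertices and edges.

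First I would define a projection map $\pi : V_M \to V_{U(H)}$ sending each vertex of $M$ to the vertex of $U(H)$ it is a copy of. This map is well-defined and surjective onto the vertices that survive all KILL operations, and it is a graph homomorphism: by the pushout definition of EXPAND, adjacency in $M$ descends to adjacency in $U(H)$. In particular, if $(u,v)$ is an edge in $M$ then $(\pi(u), \pi(v))$ is an edge in $U(H)$.

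Next I would establish the key lifting lemma: given $u,v \in M$ with $\pi(u) = u_0$ and $\pi(v) = v_0$, and any path $u_0 = x_0, x_1, \ldots, x_k = v_0$ in $U(H)$, there exists a path $u = y_0, y_1, \ldots, y_k = v$ in $M$ with $\pi(y_i) = x_i$ for each $i$. The proof proceeds by induction on $k$, working backwards from $v$. At each step, I need to find a preimage $y_{i-1}$ of $x_{i-1}$ that is adjacent to the already-chosen $y_i$. When $x_{i-1}$ lies outside any !-box that contains $x_i$, the adjacency of $x_{i-1}$ to $x_i$ in $U(H)$ forces every copy of $x_i$ (including $y_i$) to be adjacent to the unique copy of $x_{i-1}$ (or to one of its copies, any of which works). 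When $x_{i-1}$ lies in a !-box shared with $x_i$, each EXPAND instantiates a fresh self-contained copy of the !-box contents, so the internal edge $(x_{i-1}, x_i)$ is duplicated within each such copy and we can pick the copy in the same layer as $y_i$.

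Combining these, the distance between $u$ and $v$ in $M$ is at most the distance between $\pi(u)$ and $\pi(v)$ in $U(H)$, hence $\mathrm{maxdist}(M) \leq \mathrm{maxdist}(U(H))$. Taking $n := \mathrm{maxdist}(U(H)) + 1$ yields the required uniform bound. The main obstacle will be making the lifting argument fully rigorous in the presence of nested and non-trivially overlapping !-boxes, since there the notion of ``the same copy layer'' requires tracking which sequence of EXPAND operations produced each vertex of $M$; this bookkeeping is most cleanly done by using the pushout construction in the definition of EXPAND to track copy identities, or alternatively by first reducing to the case where the applied sequence of !-box operations has no KILLs (since KILLs only remove material and thus only shrink maximum finite distances).
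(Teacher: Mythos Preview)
Your lifting lemma is false as stated, and consequently so is the bound $\mathrm{maxdist}(M) \leq \mathrm{maxdist}(U(H))$. The issue is the endpoint constraint: working backwards from $v$ you will produce a path from $v$ to \emph{some} preimage $y_0$ of $u_0 = \pi(u)$, but there is no mechanism forcing $y_0 = u$. Concretely, take a node-vertex $a$ outside any !-box, a !-box containing a wire-vertex $w$ and a node-vertex $b$, with edges $a\text{--}w\text{--}b$. Then $\mathrm{maxdist}(U(H)) = 2$. After one EXPAND the graph contains $a$ together with two copies $w_1,b_1$ and $w_2,b_2$, each copy attached to $a$; now $\mathrm{dist}(b_1,b_2) = 4$ via $b_1\text{--}w_1\text{--}a\text{--}w_2\text{--}b_2$. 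Your lemma would assert a path of length $0$ from $b_1$ to $b_2$ since $\pi(b_1) = \pi(b_2)$, which is absurd. With $k$ levels of nesting this blow-up compounds, so patching the bound to $\mathrm{maxdist}(U(H))$ plus an additive constant also fails without further work.

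The paper's argument is different and avoids this trap: it does not compare $M$ directly to $U(H)$, but instead observes that for a fixed !-box $b$, the first EXPAND of $b$ may increase the maximum distance, yet any further EXPAND of $b$ cannot increase it beyond that (the new copy is attached exactly as the previous ones, so any pair of vertices involving the new copy has a counterpart pair already present after the first EXPAND). Since $H$ has finitely many !-boxes, the total increase is bounded. Your projection idea can be salvaged, but only by lifting to the graph obtained after expanding each !-box \emph{once} (recursively for nested boxes), not to $U(H)$ itself; proving that this reference graph has the required lifting property is essentially the paper's ``one more EXPAND does nothing'' observation in disguise.
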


\begin{proof}
  Consider a !-box $b$ of $H$.
  Applying a KILL operation to $b$ cannot increase the maximum
  distance. Applying an EXPAND operation once could potentially increase it,
  however, applying an EXPAND more than once will not increase it any further.
  Thus, regardless of how many times an EXPAND operation is applied to $b$
  the maximum distance can only increase by a fixed amount. Also, because of
  the symmetric properties of the EXPAND map, any nested !-boxes within $b$
  or any overlapping !-boxes can increase the maximum distance with a fixed
  amount as well regardless of how many EXPAND operations are applied to any of
  them.

  By combining the above observation with the fact that $H$ has finitely many
  !-boxes we can conclude that the set of graphs induced by $H$ is of bounded
  maximum distance.
\end{proof}

In the context of the ZX-calculus, this limitation is severe because it shows
that !-graphs can only be used to represent families of quantum circuits with a
fixed depth. This corresponds to quantum algorithms and protocols with time
complexity $O(1)$.
Both of the limitations identified in this section provide a compelling reason
to
consider alternatives to the !-graph formalism, as we do in the next
section.

\section{Expressivity of context-free grammars on string graphs}
\label{sec:context-shit-power}

In this section we outline the relationship between !-graph languages and
context-free languages.
In Subsection~\ref{sub:overlapping-shit} we show that the classes of languages
induced by
(unrestricted) !-graphs and context-free grammars respectively are incomparable.
We also make the relationship between !-graphs and CFGGs more explicit by
considering some of their subclasses and comparing them.
In subsection \ref{sec:nested}, we show that the
language induced by any !-graph $H$ with no overlapping non-nested !-boxes can
be
described by a LIN-edNCE grammar, which can moreover be constructed effectively
from $H$.

\subsection{CFGGs and general !-graphs}\label{sub:overlapping-shit}

We begin by showing that the two most important classes of context-free
graph grammars have the same expressive power when restricted to string
graphs. For general graphs VR grammars, also known as C-edNCE grammars, are
strictly more expressive.

\begin{proposition}[\cite{gam}]\label{prop:generative}
  The generative power of C-edNCE grammars and Hyperedge Replacement grammars
  on string graphs is the same.
\end{proposition}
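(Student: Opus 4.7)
The plan is to prove both inclusions separately.

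For $\mathbf{HR} \subseteq \mathbf{C\text{-}edNCE}$ on string graphs, I would rely on the standard embedding of hyperedge replacement into vertex replacement: each nonterminal hyperedge of rank $k$ is encoded as a nonterminal vertex whose connection instructions attach it to its $k$ tentacle vertices. For string-graph languages this specialises nicely because the only hyperedges needed have rank at most two (corresponding to wires with two endpoints), so the connection instructions preserve the wire-vertex degree constraints automatically. I would verify that the resulting grammar is in fact a B-edNCE grammar generating the same language whenever the original HR grammar generates only string graphs.

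For the harder direction $\mathbf{C\text{-}edNCE} \subseteq \mathbf{HR}$ on string graphs, the key structural observation is that every wire-vertex in a string graph has total degree at most two, so the wires of the graph form a decomposition into rank-at-most-two pieces. I would first put the C-edNCE grammar $G$ into the combined normal form of Theorem~\ref{thm:normal-form} and then construct an HR grammar $G'$ in which each wire of a sentential form of $G$ is represented as a rank-at-most-two hyperedge attached to its endpoint node-vertices. The wire-vertices internal to each wire would then be generated by the HR productions that expand these hyperedges, mirroring the terminal fragment of the corresponding C-edNCE production.

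The main obstacle is handling node-vertices of unbounded degree, which do arise in C-edNCE languages of string graphs (for instance, the family of star graphs $K_{1,n}$ realised with wires). A single connection instruction of $G$ may create arbitrarily many edges at once, whereas an HR production has a fixed right-hand side. I would resolve this by translating each broadcasting connection instruction into a pair of recursive HR productions: one that introduces a single incident wire and reinstates the broadcasting nonterminal, and one that terminates the recursion. In this way the unbounded degree of a node-vertex is generated iteratively rather than in one step. Correctness of the full translation would then follow by induction on derivation length, establishing a step-by-step correspondence (up to isomorphism) between the terminal sentential forms of $G$ and of $G'$, using the combined normal form to keep the case analysis on productions manageable.
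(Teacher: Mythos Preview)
The paper's proof is a two-line argument of a completely different shape: it observes that $K_{3,3}$ cannot occur as a subgraph of any string graph (wire-vertices have total degree at most two, and node-vertices are never adjacent to one another, so no six vertices can realise $K_{3,3}$), and then invokes the known theorem (cited as \cite{hrl_equal_vr}) that a C-edNCE language is already an HR language whenever it excludes some fixed $K_{n,n}$ as a subgraph. No grammar construction is carried out.

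Your direct-translation plan has a genuine gap in the $\mathbf{C\text{-}edNCE}\subseteq\mathbf{HR}$ direction. The difficulty you single out --- a connection instruction producing unboundedly many bridges incident to one terminal node-vertex --- is only a symptom. The real obstacle is that a C-edNCE nonterminal can sit in an \emph{unbounded context}: across different sentential forms it may be adjacent to arbitrarily many distinct node-vertices simultaneously, and an HR nonterminal hyperedge has fixed rank and so cannot carry unboundedly many attachment points. Your recursive-pair idea generates the large degree of a single, already-fixed node-vertex, but it does not explain how an HR hyperedge of bounded rank is to stand in for a C-edNCE nonterminal whose neighbourhood contains an unbounded set of distinct node-vertices. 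Showing that, for string-graph languages, this unboundedness can always be tamed is precisely the non-trivial content of the theorem the paper cites; it is the exclusion-of-$K_{n,n}$ argument that forces the useful part of any nonterminal's context to be bounded, and your sketch does not supply an independent proof of that fact.
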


\begin{proof}
  The graph $K_{3,3}$ is not a subgraph of any string graph. Then, the
  proposition follows immediately from the main result in \cite{hrl_equal_vr}.
\end{proof}

As we are interested in string graph languages, then it doesn't matter
what grammar we choose.
\added{However}, we will use C-edNCE grammars (and in particular
B-edNCE grammars), because in later chapters we will propose a simple
extension which increases their expressive power.
\replaced{This}{However, that} extension
does not result in increased expressive power for hyperedge replacement
grammars \added{and the author does not see how HR grammars may otherwise be
extended in order to have satisfactory generative power for our purposes.
Because of this, we will use edNCE grammars throughout the thesis.}

Next, we show that there exists a very simple simple context-free language
which cannot be described by any !-graph.

\begin{proposition}[\cite{gam}]
  The language of the LIN-A-edNCE grammar $G$, given by:
  \cstikz{unbounded.tikz}
  is not induced by any !-graph.
\end{proposition}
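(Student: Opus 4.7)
The plan is to use the already-established fact that any !-graph language must have bounded maximum distance (proved earlier in this section) and show that $L(G)$ fails this property. Since $G$ is a LIN-A-edNCE grammar, its derivation trees are essentially linear, so I expect $L(G)$ to consist of a family of chains (or similar path-like string graphs) whose length grows with the number of derivation steps. Writing out the productions of $G$, I would verify that each application of the recursive production adds a fixed positive number of vertices to the current sentential form along a path, while the terminating production closes it off.

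With this in hand, the argument proceeds in three short steps. First, I would show by induction on the length of a derivation $S \Longrightarrow_* H$ that $H \in L(G)$ contains a path whose length grows linearly with the number of derivation steps. Second, I would conclude that for every $n \in \mathbb{N}$, there exists $H_n \in L(G)$ whose maximum distance exceeds $n$, and hence $L(G)$ is of unbounded maximum distance. Third, by the proposition that the language induced by any !-graph is of bounded maximum distance, $L(G)$ cannot coincide with any !-graph language.

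The only real subtlety is identifying precisely what $L(G)$ looks like and confirming that the grammar indeed produces chains of growing diameter rather than, say, a family of bounded-diameter graphs obtained by attaching pendant structures; this requires reading off the connection instructions of $G$ carefully. Apart from that, the argument is a direct appeal to the diameter obstruction. I do not expect any categorical or grammar-theoretic machinery beyond what has already been introduced, and the proof should be very short once the shape of $L(G)$ is fixed.
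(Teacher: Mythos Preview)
Your proposal is correct and follows essentially the same approach as the paper: identify that $L(G)$ consists of string graphs forming lines (chains of node-vertices connected by wires) of arbitrary length, observe that this language is of unbounded maximum distance, and invoke the earlier proposition that every !-graph language has bounded maximum distance. The paper's proof is terser --- it simply states what $L(G)$ is and concludes --- so your inductive verification of the chain structure is more detail than strictly needed, but the argument is the same.
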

\begin{proof}
The language generated by $G$ consists of node-vertices connected
in a line via a single wire. It is given by:
\[ L(G) = \left\{ \stikz[0.8]{lines.tikz} \right\}\]
Obviously, this language is of unbounded maximum distance and therefore it
cannot be generated by any !-graph.
\end{proof}

This proposition shows that $\mathbf{CF} \not\subseteq \mathbf{BG}$, where
$\mathbf{CF}$ is the set of all context-free string graph languages
and $\mathbf{BG}$ is the set of all !-graph languages.

Next, we show that there exists an entire class of !-graphs whose
languages are not context-free.

\begin{proposition}
Let $H$ be a !-graph which satisfies the following conditions:
\begin{enumerate}
\item There exist a pair of !-boxes $b$ and $b'$ which overlap on a vertex
$v$ with label $\alpha$
\item $b$ or $b'$ contain a vertex $w$ with label $\beta \not= \alpha$
\item $w$ is not in the overlap of $b$ and $b'$
\item $v$ and $w$ are not in any other !-boxes
\item There are no other vertices with labels $\alpha$ or $\beta$
\end{enumerate}
Then, the language induced by $H$ is not context-free.
\end{proposition}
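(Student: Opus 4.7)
The plan is to compute the Parikh image of $L(H)$ restricted to the two distinguished labels $\alpha$ and $\beta$, show that this image is not semilinear, and then apply Parikh's theorem for C-edNCE grammars (Theorem~\ref{thm:parikh}) to conclude that $L(H)$ cannot be generated by any C-edNCE grammar, hence is not context-free.

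First I would exploit the overlap semantics spelled out just before the proposition: if $b$ has multiplicity $n$ and $b'$ has multiplicity $m$ after a sequence of EXPAND operations, then the overlap of $B(b)$ and $B(b')$ is copied $nm$ times. Conditions 4 and 5 guarantee that $v$ sits in $b$ and $b'$ only, and (since $v \in B(b) \cap B(b')$) neither $b$ nor $b'$ can be nested in any further !-box---otherwise $v$ would lie in a third !-box, violating condition 4. Hence operations on any other !-boxes of $H$ (and on any !-boxes nested strictly inside $b$ or $b'$, which by condition 4 cannot contain $v$ or $w$) leave the counts of $\alpha$- and $\beta$-labelled vertices untouched. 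By conditions 2 and 3, $w$ lies in exactly one of $b, b'$; assume WLOG $w \in B(b) \setminus B(b')$, so expanding $b$ to multiplicity $n$ produces exactly $n$ copies of $w$ independently of $m$. Using condition 5, I obtain for $\pi = (\{\alpha\}, \{\beta\})$
\[
Par_\pi(L(H)) \;=\; \{(nm, n) : n, m \geq 1\} \;\cup\; \{(0, k) : k \geq 0\},
\]
where the second set records the cases in which $b'$ is killed (removing $v$ and all its copies while leaving $w$'s copies intact) or $b$ is killed (removing both $v$ and $w$).

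The next step is to show this set is not semilinear. Intersecting with the semilinear set $\{(a,b) : a \geq 1\}$ yields $\{(nm, n) : n, m \geq 1\} = \{(a, b) : b \geq 1,\ b \mid a,\ a \geq b\}$, which is essentially the divisibility relation. Since semilinear sets are closed under intersection with semilinear sets, it suffices to show the latter is not semilinear. A direct argument: for each fixed $k \geq 1$, the slice of this set at $b = k$ is the arithmetic progression $\{k, 2k, 3k, \ldots\}$ with minimal period $k$, yet every semilinear subset of $\mathbb{N}^2$ admits a uniform bound on the periods of its slices, a bound depending only on the finitely many generators of its linear components. (Equivalently, divisibility is famously not Presburger-definable.) By Theorem~\ref{thm:parikh}, $L(H)$ cannot be generated by any C-edNCE grammar, so by Proposition~\ref{prop:generative} it is not context-free.

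The principal obstacle is pinning down $Par_\pi(L(H))$ exactly: I need to justify carefully that no !-box of $H$ other than $b$ and $b'$, and no !-box nested inside $b$ or $b'$, can produce extra $\alpha$- or $\beta$-labelled vertices (secured by conditions 4 and 5), and to enumerate every combination of EXPAND and KILL operations applied to $b$ and $b'$ together with the fact that any copy of $v$ created by expanding $b$ also becomes part of $B(b')$ (so is removed when $b'$ is killed). Once this image is correctly identified, the non-semilinearity step and the appeal to Parikh's theorem are routine.
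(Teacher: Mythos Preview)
Your approach is essentially identical to the paper's: both apply Parikh's theorem with $\pi = (\{\alpha\}, \{\beta\})$, compute the Parikh image as $\{(nm,n) : n,m \in \mathbb{N}\}$ (your decomposition into $\{(nm,n) : n,m \geq 1\} \cup \{(0,k) : k \geq 0\}$ is the same set), and conclude by non-semilinearity. You supply more justification than the paper does---in particular for why other !-boxes cannot affect the $\alpha,\beta$ counts and for why the image is not semilinear, where the paper simply asserts ``obviously not semilinear''---but the route is the same.
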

\begin{proof}
We can prove this theorem by making use of Parikh's theorem for C-edNCE
languages. Assume that the language induced by $H$ is context-free, that is,
there exists a C-edNCE grammar $G$, such that $L(G) = L(H) =: L$.

Let $\pi = (\{\alpha\}, \{\beta\})$ and let's consider the Parikh mapping
of $L$. From Theorem~\ref{thm:parikh}, we know that
$Par_{\pi}(L)$ is semilinear. Without loss of generality we shall assume that
$w$ is in $b$, but not in $b'$.

Then:
\[Par_{\pi}(L) = \{\ (bb', b)\ |\ b,b' \in \mathbb N\}\]
Thus, for every expand
operation on the !-box $b'$, we get as many copies of the vertex $v$ as we
apply expand operations on the !-box $b$. However, this language is obviously
not semilinear and we get a contradiction.
\end{proof}

This proposition may be further generalised by relaxing some of the conditions,
but then the proof becomes very involved. In any case, the proposition is
enough to show that there is a large class of !-graph languages which
are not context-free. Therefore, this establishes that $\mathbf{BG} \not
\subseteq \mathbf{CF}.$

However, as the next proposition shows, there are !-graph languages with
non-nested, but overlapping !-boxes which are context-free.

\begin{proposition}
Let $H$ be a !-graph of the following form:
\cstikz{stupid_overlap.tikz}
where $K$ is an arbitrary string graph. $b_1$ and $b_2$ overlap on $K$ and
$b_1$ contains another (isomorphic) copy of $K$, no part of which appears in
$b_2$. Then, $L(H)$ is the language of a Lin-A-edNCE grammar.
\end{proposition}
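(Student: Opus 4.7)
The plan is to construct an explicit Lin-A-edNCE grammar $G$ with $L(G) = L(H)$. The reason a linear apex grammar suffices is that the extra copy $K'$ of $K$ contained only in $b_1$ effectively decouples the two !-box expansions, preventing the multiplicative blow-up that typically rules out context-freeness for overlapping !-boxes; the Parikh image of $L(H)$ will remain additive.

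First, I would carefully unpack the operational semantics of $L(H)$ for this particular $H$. Each member of $L(H)$ is produced by a finite sequence of EXPAND and KILL operations on $b_1$ and $b_2$. Since $b_1$ and $b_2$ are non-nested, $\beta(B(b_i)) = \{b_i\}$, so $\textrm{EXPAND}_{b_1}$ is the pushout that attaches a fresh disjoint copy of $K \sqcup K'$ glued at $b_1$ alone, and $\textrm{EXPAND}_{b_2}$ similarly attaches a fresh copy of $K$ glued at $b_2$. Because the standalone copy $K'$ sits entirely outside $B(b_2)$, the pushout defining $\textrm{EXPAND}_{b_1}$ never needs to identify any of its new vertices with vertices of $B(b_2)$, and dually. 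Using commutativity of disjoint pushouts, I would show by induction on the length of an operation sequence that every graph in $L(H)$ is isomorphic to $U(H)$ with $n \geq 0$ additional copies of $K \sqcup K'$ attached exactly as $b_1$ attaches its contents and $m \geq 0$ additional copies of $K$ attached exactly as $b_2$ attaches its contents, with $n$ and $m$ independent parameters, and that every such pair $(n,m)$ is realised.

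Second, I would build $G$ as follows. Use nonterminal labels $S$ (initial) and $X$. The production $S \to (D_S, C_S)$ has body $U(H)$ together with a single $X$-vertex whose connection instructions encode both the $b_1$- and $b_2$-attachment interfaces. Add two recursive productions $X \to (D_1, C_1)$ and $X \to (D_2, C_2)$ whose bodies are, respectively, $K \sqcup K'$ plus one fresh $X$-vertex and $K$ plus one fresh $X$-vertex, each equipped with connection instructions mirroring the wiring performed by $\textrm{EXPAND}_{b_1}$ and $\textrm{EXPAND}_{b_2}$. Finally, add a terminating production replacing $X$ by the empty graph. Each body contains at most one nonterminal so the grammar is linear, and every connection instruction targets a terminal vertex so it is apex; hence $G$ is Lin-A-edNCE.

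The equality $L(G) = L(H)$ then splits in two directions. Soundness is a routine induction in which each application of $X \to (D_i, C_i)$ corresponds to one $\textrm{EXPAND}_{b_i}$ step and the terminating production corresponds to $\textrm{KILL}$-ing both !-boxes at the end. Completeness reverses this: given $M \in L(H)$ realised by some sequence of !-box operations, normalise the sequence so that all $\textrm{EXPAND}_{b_1}$ steps precede all $\textrm{EXPAND}_{b_2}$ steps, which precede the KILLs, and read off a matching derivation in $G$. I expect the main obstacle to be precisely the commutativity and independence claim of the first step---verifying that $\textrm{EXPAND}_{b_1}$ and $\textrm{EXPAND}_{b_2}$ commute up to isomorphism on $H$---since this is where the specific geometric hypothesis (the disjointness of $K'$ from $B(b_2)$) is actually used; once it is in place, the grammar construction and correctness proof proceed mechanically.
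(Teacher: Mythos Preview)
Your central claim---that $K'$ ``decouples'' the two !-box expansions so that they contribute additively---is incorrect, and with it your description of $L(H)$ and the grammar you build on it. The overlap on $K$ still forces multiplicative interaction: each $\textrm{EXPAND}_{b_1}$ places the fresh copy of the overlap portion $K$ (though not of $K'$) into $B(b_2)$, and symmetrically for $\textrm{EXPAND}_{b_2}$; this is exactly the behaviour the thesis describes for non-nested overlapping !-boxes. Expanding $b_1$ $m$ times and $b_2$ $n$ times and then killing both yields $m+mn=m(n{+}1)$ disjoint copies of $K$, not an additive count. Your grammar, by contrast, produces $\{(2+2a+b)K : a,b\geq 0\}=\{kK : k\geq 2\}$: the $S$-production deposits $U(H)=K\sqcup K'$, but the terminating production $X\to\emptyset$ never removes it, which also reveals a second error in your correspondence---$\textrm{KILL}$ deletes the !-box \emph{contents}, it does not merely erase a marker.

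The paper's argument is different and much shorter. It accepts the multiplicative count $m+mn$ and then observes that fixing $n=0$ already realises every non-negative integer; hence $L(H)$ is exactly the set of all finite disjoint unions of copies of $K$. That language is generated by a two-production Lin-A-edNCE grammar (one recursive production emitting a copy of $K$, one empty production). So the role of the extra copy $K'$ is not to prevent the multiplicative blow-up---it does not---but to guarantee that the set $\{m+mn : m,n\geq 0\}$ of attainable counts is all of $\mathbb{N}$ rather than a non-semilinear set such as $\{mn : m,n\geq 0\}$.
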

\begin{proof}
If we expand the !-boxes $b_1$ and $b_2$, $m$ and $n$ times respectively, then
we would get $m+mn$ copies of $K$. Because we consider all possible sequences
of !-box operations when generating a !-graph language, this means that $L(H)$
simply consists of an arbitrary number of copies of $K$ (in other words, we can
fix $n$ to be zero).  This language can therefore be generated by the
Lin-A-edNCE grammar:
\cstikz{dumb_grammar.tikz}
\end{proof}

This proposition shows that overlapping !-boxes do not necessarily imply
that the language is not context-free.

Let's denote with $SK_{m,n}$ the language of directed complete bipartite
string graphs. The node-vertices in $SK_{m,n}$ are partitioned into two sets
$M$ and $N$ of cardinality $m$ and $n$ respectively. Each node-vertex from
$M$ is connected to all other node-vertices of $N$ via a single wire
consisting of a single wire-vertex.

\[SK_{m,n} := \left\{ \stikz[0.7]{k_m_n.tikz} \right\}\]

The next proposition shows that this language can be expressed by a
!-graph with trivially overlapping !-boxes, but it is not context-free.

\begin{proposition}
The family of complete bipartite string graphs $SK_{m,n}$ is not
in \textbf{C-edNCE}, but it is in \textbf{BGTO}.
\end{proposition}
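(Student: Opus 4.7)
For non-membership in \textbf{C-edNCE}, the plan is to apply Parikh's theorem (Theorem~\ref{thm:parikh}). With $\pi = (\mathcal{N}, \mathcal{W})$, a member of $SK_{m,n}$ has $m+n$ node-vertices and $mn$ wire-vertices (one per edge of the underlying complete bipartite graph), so
\[
Par_{\pi}(SK_{m,n}) = \{(m+n,\, mn) \mid m,n \in \mathbb{N}\} \subseteq \mathbb{N}^2.
\]
It then suffices to show this set $L$ is not semilinear. The key observation is that $L$ contains the quadratic ridge $(2k, k^2)$ for every $k$ (take $m = n = k$), while any linear component $L_i = c_i + \sum_j \mathbb{N}\,g_{i,j}$ contained in $L$ forces its second coordinate to be bounded by a linear function of the first: a generator of the form $(0,y)$ with $y>0$ would make the second coordinate unbounded at a fixed first coordinate, violating $L_i \subseteq L$, so every surviving generator satisfies $g_{i,j,1} > 0$ or $g_{i,j,2} = 0$, and contributes at most linearly to the ratio $p/s$. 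A finite union therefore yields a global linear upper bound $p \leq A + Bs$, which is incompatible with $k^2$ growing faster than $2Bk + A$. This contradicts semilinearity, so $SK_{m,n} \notin$ \textbf{C-edNCE}.

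For membership in \textbf{BGTO}, I would exhibit a small !-graph $H$ with exactly two trivially overlapping !-boxes. Introduce !-vertices $b_1, b_2$, node-vertices $v_1 \in B(b_1) \setminus B(b_2)$ and $v_2 \in B(b_2) \setminus B(b_1)$, a single shared wire-vertex $w \in B(b_1) \cap B(b_2)$, and the edges $v_1 \to w \to v_2$. The intersection $B(b_1) \cap B(b_2) = \{w\}$ is the interior of one wire whose endpoints are node-vertices lying in exactly one of the two !-boxes each, so this is a trivial overlap in the sense of Definition~\ref{def:bgto}. Using the overlapping-!-boxes semantics recalled at the end of Subsection~\ref{sub:bang-graph-language}, expanding $b_1$ $m$ times, expanding $b_2$ $n$ times, and then killing both !-vertices yields $m$ copies of $v_1$, $n$ copies of $v_2$, and $mn$ copies of $w$, wired so that the $(i,j)$-th copy of $w$ connects the $i$-th copy of $v_1$ to the $j$-th copy of $v_2$. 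This is precisely the member of $SK_{m,n}$ with parameters $(m,n)$; ranging over all sequences of !-box operations then gives $L(H) = SK_{m,n}$.

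The main obstacle I anticipate is the semilinearity argument in Part 1: ruling out the bad generators and extracting a uniform linear upper bound across a finite union of linear sets is the most delicate step. By contrast, once one spots the witness $(2k, k^2)$ and the two-!-box construction sharing a single wire-vertex, the remainder of Part 1 and all of Part 2 follow by invoking the already-established Parikh and overlapping-!-box semantics.
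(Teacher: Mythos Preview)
Your proposal is correct and follows essentially the same approach as the paper: the same Parikh mapping $\pi=(\mathcal N,\mathcal W)$ yielding $\{(m+n,mn)\}$, and the same two-!-box witness with a single shared wire-vertex. In fact you go further than the paper, which simply declares the Parikh image ``obviously not semilinear''; your linear-growth argument via the ridge $(2k,k^2)$ is a clean way to justify that claim.
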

\begin{proof}
The following !-graph:
\cstikz{trivial_overlap.tikz}
induces the language $SK_{m,n}$. To see that it is not context-free, consider
its Parikh mapping when setting $\pi = (\{\mathcal N\}, \{\mathcal W\})$.
So, the first component of the Parikh mapping will count the number of
node-vertices and the second one will count the number of wire-vertices.
Then, we have:
\[ Par_{\pi}(SK_{m,n}) = \{(m+n, mn)\ |\ m,n \in \mathbb N\} \]
This set is obviously not semilinear and from Theorem~\ref{thm:parikh} it
follows that the language is not context-free.
\end{proof}

Therefore, this proposition implies
$\mathbf{BGTO} \not\subseteq \mathbf{CF}$
(cf. Definition~\ref{def:bgto}).
Unfortunately, this means that graph languages of interest are not
context-free. The language $SK_{m,n}$ is very important for
traced symmetric monoidal categories which use commutative Frobenius algebras
in their internal language (such as the ZX-calculus),
as it is one side of the generalised bialgebra rule, which is a very powerful
and crucial distributive law. The rule is shown below (in string diagram form),
for completeness:
\cstikz{general-bialgebra.tikz}

\subsection{CFGGs and \textbf{BGNO}}\label{sec:nested}

The main result of this subsection is a theorem stating that the language
induced by any !-graph with no overlapping non-nested !-boxes can be directly
represented by a LIN-edNCE grammar, which can moreover be generated
effectively.

Before we present the main results in this subsection, we
prove a series of lemmas which are used in the proof of the main theorem.
Each lemma describes how to build bigger LIN-edNCE grammars out of smaller ones
which in turn describe the language induced by certain subgraphs of a given
!-graph. All of our constructions are effective in the sense that they can be
performed by a computer.

We introduce some conventions that are used throughout our proofs. Given a
!-graph $H$, its vertices will be $\{v_1,v_2,...,v_n\}$. To these vertices, we
will associate edge labels $\{\alpha_1,\alpha_2,...,\alpha_n\}$ which
will be used in the productions of our grammars. Informally, an edge labelled
with $\alpha_i$ will have as its source or target either the original vertex
$v_i$ or one of its copies. This is used in some productions of our grammars,
so that we can easily refer to all copies of such a vertex at once and connect
them to other vertices.

In order to prove the main result in this section, we will use edNCE grammars
satisfying the conditions detailed in the next definition. This allows us
to split the proof into several small lemmas.

\begin{definition}[!-linear form \cite{gam}]\label{def:!-linear}
Any LIN-edNCE grammar
$G=(\Sigma,\Delta,\Gamma,\Gamma,P,S)$ is said to be in \emph{!-linear form}
if it satisfies the following conditions:
\begin{enumerate}
  \item For every terminal vertex $v_i$ in a production of $G$, there is
  a unique edge label $\alpha_i \in \Gamma.$ The set of these $\alpha_i$
  is $\mathcal X \subset \Gamma$.
  \item There exists a function $\chi: \mathcal X \to \Delta$ which associates
  terminal vertex labels for every edge label $\alpha_i \in \mathcal X$.
  \item There is a single final production, that is, a production with no
  nonterminal vertices. The body of this production is simply the empty graph.
  \item Every production with nonterminal vertex $x$ is such that for
    every terminal vertex $v_i$, there exist edges $(v_i, \alpha_i, x)$ and
    $(x, \alpha_i, v_i)$, where $\alpha_i$ is a unique label associated with
    $v_i$. Graphically we will depict these edges using bidirectional arrows
    for compactness.
  \item Every production with nonterminal vertex $x$ has connection
  instructions $(\sigma_i, \alpha_i, \alpha_i, x, in)$ and
  $(\sigma_i, \alpha_i, \alpha_i, x, out),$ where the $\alpha_i$ range
  over $\mathcal X$ and $\sigma_i = \chi(\alpha_i)$.
  Graphically, we will depict that using bidirectional arrows as a shorthand
  notation for a connection instruction in each direction. Moreover, because
  $\sigma_i$ is determined by $\alpha_i$ we will not depict the label in
  the connection instructions for simplicity.
\end{enumerate}
\end{definition}

\begin{lemma}[\cite{gam}]
  Given a concrete string graph $H$, there exists a LIN-edNCE grammar
  $G$ which generates the language $\{H\}$. Moreover, this grammar
  can be effectively constructed and is in !-linear form.
  \label{lem:base}
\end{lemma}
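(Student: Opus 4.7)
The plan is to construct a LIN-edNCE grammar whose derivation introduces the vertices of $H$ one at a time, using a dedicated family of edge labels to "remember" the previously introduced vertices through bidirectional edges to the current nonterminal. Fix an enumeration $v_1, \ldots, v_n$ of $V_H$ and pick fresh edge labels $\alpha_1, \ldots, \alpha_n$ not occurring in $H$, with $\chi(\alpha_i) := \lambda_H(v_i)$. Introduce nonterminal labels $X_1, \ldots, X_{n+1}$, take $S := X_1$, and let the final edge alphabet include $\Gamma_H$ together with all $\alpha_i$.

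For each $1 \le i \le n$, the production $p_i : X_i \to (D_i, C_i)$ is built so that $D_i$ consists of a single terminal vertex $v_i$ (labelled $\lambda_H(v_i)$) and a nonterminal vertex $X_{i+1}$, with bidirectional $\alpha_i$-edges between them. The connection instructions $C_i$ do two jobs. First, for each previously introduced $v_j$ (with $j < i$), $C_i$ contains the bidirectional instructions $(\lambda_H(v_j), \alpha_j, \alpha_j, X_{i+1}, d)$ for $d \in \{\mathrm{in}, \mathrm{out}\}$; these redirect the $\alpha_j$-bookkeeping edges from $X_i$ onto $X_{i+1}$. Second, for every edge $(v_j, \gamma, v_i) \in E_H$ with $j < i$ we add $(\lambda_H(v_j), \alpha_j, \gamma, v_i, \mathrm{in})$, and symmetrically for edges out of $v_i$; these use the $\alpha_j$-edges as "handles" to create the actual edges of $H$ between $v_i$ and each earlier $v_j$. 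Finally, $p_{n+1} : X_{n+1} \to (\emptyset, \emptyset)$ is the unique production with empty body.

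Correctness is then shown by induction on $i$: after applying $p_1, \ldots, p_i$, the sentential form is the subgraph of $H$ induced on $\{v_1, \ldots, v_i\}$, together with the nonterminal $X_{i+1}$ carrying a bidirectional $\alpha_j$-edge to each $v_j$ with $j \le i$. The base case is immediate. For the induction step, the substitution operation adds $v_{i+1}$ and $X_{i+2}$, the bidirectional instructions transfer the $\alpha_j$-edges from $X_{i+1}$ onto $X_{i+2}$, and the remaining instructions create precisely the edges of $H$ incident to $v_{i+1}$ and some $v_j$ with $j \le i$. After $p_{n+1}$, the nonterminal and all $\alpha_j$-edges disappear, leaving exactly $H$.

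The !-linear form conditions (1)--(5) are then read off directly from the construction, and linearity and boundaryness are immediate since each $D_i$ contains at most one nonterminal and no connection instruction refers to a nonterminal label. The construction is plainly effective given $H$. The only delicate step is verifying that the substitution at stage $i$ does \emph{not} produce spurious edges: this uses the fact that the $\alpha_j$ are fresh and that, at stage $i$, the only $\alpha_j$-edges in the sentential form are those between $X_i$ and the earlier $v_j$, so each connection instruction of the form $(\lambda_H(v_j), \alpha_j, \gamma, v_i, d)$ fires exactly once and produces exactly the intended edge. This bookkeeping is the main thing to check carefully, but once the enumeration and fresh labels are fixed it is routine.
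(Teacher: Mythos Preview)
Your construction is correct and does generate exactly $\{H\}$, but it is considerably more elaborate than what the paper does. The paper's grammar has only \emph{two} productions: an initial production $S$ whose body is the entire graph $H$ together with a single nonterminal $F$, with bidirectional $\alpha_i$-edges from each $v_i$ to $F$; and the final production $F \to \emptyset$. Since $H$ is a fixed finite graph, there is no need to introduce its vertices one at a time --- you can simply drop all of $H$ into a single production body, attach the bookkeeping $\alpha_i$-edges to the nonterminal, and be done. Your vertex-by-vertex construction with $n+1$ productions and edge-recreating connection instructions works, but the paper's approach avoids all of that machinery (and the delicate ``no spurious edges'' verification) entirely.

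One small technical point: as you have described it, production $p_i$ carries the bidirectional instructions $(\lambda_H(v_j),\alpha_j,\alpha_j,X_{i+1},d)$ only for $j<i$, whereas condition~(5) of !-linear form requires these instructions for \emph{all} $\alpha_j\in\mathcal X$, i.e.\ for $j=1,\dots,n$. This is the form the later lemmas (nesting, disjoint composition) rely on when they chain grammars together. The fix is trivial --- just include the unused instructions for $j\ge i$ as well --- but as written, your claim that ``conditions (1)--(5) are then read off directly'' is not quite right.
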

\begin{proof}
  This is done by the following grammar:
  \cstikz{super_simple_grammar.tikz}
  where the set of vertices of $H$ is $V_H = \{v_1, v_2,..., v_n\}$
  and each edge with label $\alpha_i$ has as source or target $v_i$ and the
  non-terminal vertex labelled $F$.
\end{proof}

\begin{lemma}[\cite{gam}]
  Given a !-graph $H$ and a !-linear form grammar $G$ which generates the same
  language as $H$, there exists a
  grammar $G'$ which generates the same language as the
  following !-graph:
  \cstikz{simple.tikz}
  Moreover, $G'$ can be effectively constructed and is in !-linear form.
  \label{lem:nested}
\end{lemma}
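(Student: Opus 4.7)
The plan is to interpret the displayed !-graph as an outer !-box wrapping $H$ (so that $L(\text{outer})$ is the set of all finite disjoint unions of graphs in $L(H)$) and construct $G'$ by splicing a controlled iteration mechanism on top of $G$. Intuitively, each time $G$ finishes producing a graph from $L(H)$, instead of terminating we should be able either to stop, or to restart another full derivation of $G$ whose output lands disjointly next to everything produced so far.

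Concretely, I would form $G'$ as follows. First, take a fresh copy of $G$'s productions with all nonterminal symbols renamed to avoid collisions, and with the edge labels $\alpha_i$ from $G$'s alphabet $\mathcal{X}$ retained (so the !-linear form connection instructions still apply and carry across every terminal vertex already produced). Second, replace the unique final production of $G$, which in !-linear form is of the shape $F \to \emptyset$, by a production $F \to (D_F, C_F)$ whose body $D_F$ contains a single new nonterminal vertex $y$ of label $Y$ with no incident terminal vertices, and whose connection relation $C_F$ contains the !-linear form instructions $(\sigma_i, \alpha_i, \alpha_i, y, in)$ and $(\sigma_i, \alpha_i, \alpha_i, y, out)$ for every $\alpha_i \in \mathcal{X}$; this forces $y$ to be bidirectionally connected to every terminal vertex already present in the sentential form. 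Third, add two productions for $Y$: a genuine terminator $Y \to \emptyset$, and a \emph{restart} production whose body is an isomorphic copy of the body of $G$'s starting production (playing the role that $S$ played, with the nonterminal of this new copy inheriting all the !-linear form connection instructions). The key point is that the restart production introduces fresh terminals with their $\alpha_i$-edges only to the new nonterminal, so those terminals are disjoint from, and unrelated to, everything previously produced, while the bidirectional $\alpha_i$-connection instructions of that new nonterminal automatically connect it to all existing terminal vertices as well as to the new ones.

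With this setup, a straightforward induction on derivations shows two things in parallel: every sentential form of $G'$ has exactly one nonterminal vertex which is bidirectionally connected (with labels $\alpha_i$) to every terminal vertex, and this nonterminal carries all the connection instructions required by !-linear form; and the terminal yields of $G'$ are exactly disjoint unions of finitely many terminal yields of $G$, the empty graph being obtained by the first application of $Y \to \emptyset$. Combined with $L(G) = L(H)$, this gives $L(G') = L(\text{outer !-graph})$, and the construction is plainly effective. The grammar is linear because every body contains at most one nonterminal.

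The main obstacle is bookkeeping for the connection instructions across the transition between two successive ``copies'' of $G$: the new copy's nonterminal must see the already-produced terminals only through bidirectional $\alpha_i$-edges (so no spurious edges are introduced between disjoint copies), while simultaneously the edges generated \emph{inside} the new copy must be governed exclusively by the productions of the fresh copy of $G$. This is precisely what the !-linear form discipline was designed to do: every nonterminal uniformly carries the instructions $(\sigma_i, \alpha_i, \alpha_i, x, d)$ for every $\alpha_i \in \mathcal{X}$, so no additional cross-copy edges can ever be established, and verifying this invariant through the transition via $Y$ is the core technical step of the proof.
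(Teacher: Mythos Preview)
Your approach is essentially the paper's: modify the final production $F$ to hand control to a fresh nonterminal, give that nonterminal a terminate option ($\to\emptyset$) and a restart option, and rely on the uniform $(\sigma_i,\alpha_i,\alpha_i,x,d)$ connection instructions of !-linear form to carry every previously-produced terminal along without creating cross-copy edges. The paper calls the fresh nonterminal $S'$ (and uses an auxiliary $F'$ for the final), makes $S'$ the new initial label, and has the restart production $S'\to[\text{body consisting of a single $S$-nonterminal}]$; the loop $S'\Rightarrow S\Rightarrow\cdots\Rightarrow F\Rightarrow S'$ then simulates one EXPAND per pass, and $S'\Rightarrow F'$ simulates KILL.

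Two small points where your write-up is looser than the paper. First, you never name the initial nonterminal of $G'$; your claim that ``the empty graph is obtained by the first application of $Y\to\emptyset$'' only makes sense if $Y$ is the initial symbol, so say so. Second, your restart is $Y\to(\text{isomorphic copy of the body of $G$'s starting production})$, which presupposes a \emph{single} $S$-production. In !-linear form this is not guaranteed: after a previous application of this very lemma the initial symbol has two productions (restart and terminate). The paper's indirection $S'\to[S]$ sidesteps this, since it works for arbitrarily many $S$-productions. Your version is easily repaired by adding one restart production per $S$-production, or by adopting the paper's extra nonterminal step.
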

\begin{proof}
  Let's assume that $S$ is the initial nonterminal label of $G$, $F$ is the
  final nonterminal label of $G$ and that $S'$ and $F'$ are not production
  labels used by $G$.

  First, we modify the production labelled $F$ to be the following:
  \cstikz{nested_modify.tikz}
  Finally, to the productions of $G$ we add the following
  productions, where $S'$ is the initial nonterminal label:
  \cstikz{nested_grammar.tikz}
  A derivation
  $S'\Rightarrow S \Rightarrow \cdots \Rightarrow F$
  creates a concrete string graph from the language of $H$, so it is
  simulating a single EXPAND operation applied to the top-level !-box,
  together with a concrete instantiation of the !-boxes in $H$. By
  construction, we can iterate this, thus allowing us to generate multiple
  disjoint concrete graphs, all of which are in the language of $H$.
  A derivation $S' \Rightarrow F'$ simulates the final
  KILL operation applied to the top-level !-box.
\end{proof}

\begin{lemma}[\cite{gam}]
  Given disjoint !-graphs $H,K$ and !-linear form grammars $G_1, G_2$
  which generate the same languages as $H$ and $K$ respectively, then there
  exists a grammar $G'$ which generates the same language as the
  following !-graph:
  \cstikz{disjoint.tikz}
  Moreover, $G'$ can be effectively constructed and is in !-linear form.
  \label{lem:disjoint}
\end{lemma}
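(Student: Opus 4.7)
The plan is to construct $G'$ by ``sequencing'' the derivations of $G_1$ and $G_2$: first run $G_1$ to grow the $H$-part, then via a single ``bridge'' production swap the final $G_1$-nonterminal for the initial $G_2$-nonterminal, and finally run $G_2$ to grow the $K$-part. The bidirectional edge and connection-instruction structure required by !-linear form (conditions 4 and 5 of Definition~\ref{def:!-linear}) is precisely what is needed to carry the already-generated $G_1$-terminals through the $G_2$-phase undisturbed.

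First I would rename the nonterminal labels of $G_2$ and the edge labels in $\mathcal X_2$ so that they are disjoint from those of $G_1$; write $S_i, F_i, \mathcal X_i, \chi_i$ for the initial and final nonterminal labels, the terminal-edge-label set, and the labelling function of $G_i$ respectively. Take $S' := S_1$ as the initial nonterminal of $G'$, $\mathcal X := \mathcal X_1 \cup \mathcal X_2$, and $\chi := \chi_1 \cup \chi_2$. The productions of $G'$ are then: every non-final production of $G_1$, augmented with bidirectional connection instructions $(\chi(\beta), \beta, \beta, x, d)$ for every $\beta \in \mathcal X_2$ and $d \in \{in, out\}$, where $x$ is the unique nonterminal in the body; every non-final production of $G_2$, similarly augmented with bidirectional connection instructions for every $\alpha \in \mathcal X_1$; a bridge production $F_1 \to (D_{br}, C_{br})$, where $D_{br}$ consists of a single fresh vertex $s_2$ labelled $S_2$ with no edges, and $C_{br}$ contains bidirectional connection instructions for every label in $\mathcal X$; and the unique final production $F_2 \to \emptyset$.

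To see that $L(G') = \{h \sqcup k \mid h \in L(H),\ k \in L(K)\}$, I would argue that every derivation of $G'$ factors as $S_1 \Longrightarrow^* F_1 \Longrightarrow s_2 \Longrightarrow^* F_2 \Longrightarrow \emptyset$, because the renaming makes the two groups of productions applicable to disjoint sets of nonterminal labels. The $G_1$-prefix generates a graph $h \in L(H)$ attached to $F_1$ by the standard bidirectional structure, and the $\mathcal X_2$-connection instructions added to the $G_1$-productions are never triggered because no $\beta$-labelled edges exist yet. The bridge production uses the $\mathcal X_1$-connection instructions to transfer precisely the $\alpha$-edges from the $G_1$-terminals onto $s_2$; crucially it establishes no new edges between any pair of terminal vertices. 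The $G_2$-suffix then grows a graph $k \in L(K)$ attached to its current nonterminal, and the added $\mathcal X_1$-connection instructions propagate the $G_1$-terminal edges forward without ever producing a $G_1$--$G_2$ edge, since the only way such an edge could appear would be if some $G_2$-production had a $G_1$-terminal in its body, which it does not. Verifying !-linear form is routine: conditions 1--2 are preserved by the disjoint union of the $\chi_i$ and $\mathcal X_i$; condition 3 holds since $F_2 \to \emptyset$ is the only empty-bodied production; and conditions 4--5 are inherited from $G_1$ and $G_2$ and enforced by construction on the bridge.

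The main obstacle will be the bookkeeping in the bridge step: one must check that the bidirectional structure on the previously generated $G_1$-terminals is transferred onto $s_2$ with matching edge labels, and that the $\mathcal X_1$-augmented $G_2$-productions carry this structure through the rest of the derivation without creating any spurious $G_1$--$G_2$ bridges. All constructions from the renaming onward are finite and entirely mechanical, so $G'$ can be computed effectively from $G_1$ and $G_2$.
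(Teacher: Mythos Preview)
Your proposal is correct and follows essentially the same construction as the paper: augment each grammar's non-final productions with pass-through connection instructions for the other grammar's edge labels, replace the $F_1$-production by a bridge production whose body is a single $S_2$-nonterminal with connection instructions over all of $\mathcal X$, and keep $F_2 \to \emptyset$ as the unique final production. Your write-up is in fact more explicit than the paper's about the disjointness renaming, the verification of the !-linear form conditions, and the reason no spurious $G_1$--$G_2$ edges arise.
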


\begin{proof}
  Let the vertices of $H$ be $\{v_1,v_2,\ldots,v_k\}$ and let the vertices of $K$
  be $\{v_{k+1},v_{k+2},\ldots,v_n\}$. Also, let $S_i$ and $F_i$ be the starting
  and final production labels respectively of $G_i$. First, we modify each
  non-final production of $G_1$, by adding connection
  instructions for edge labels $\alpha_{k+1},\ldots, \alpha_n$ in the following
  way: 
  \cstikz{disjoint_modify1.tikz}
  where the new additions are coloured in red. This doesn't change the language
  of $G_1$ and is done so that we can put the grammar in the
  required form.
  Similarly, modify all non-final productions of $G_2$ by
  adding to their connection instructions the missing edge labels $\alpha_1,
  \alpha_2, \ldots, \alpha_k$. 

  Finally, modify $F_1$ to be the production
  depicted below:
  \cstikz{disjoint_modify2.tikz}
  so that we can chain together the two grammars.
  
  The required grammar $G'$ has as its productions the modified
  productions of $G_1$ and $G_2$ with initial nonterminal label
  $S_1$. A derivation $S_1 \Rightarrow \cdots \Rightarrow F_1$ creates a
  concrete graph from the language of $H$ and a derivation $S_2 \Rightarrow
  \cdots \Rightarrow F_2$ creates a graph from the language of $K$. By
  chaining the two grammars, we simply generate two disjoint concrete graphs,
  one from the language of $H$ and one from the language of $K$, as required.
\end{proof}

\begin{lemma}[\cite{gam}]
  Given !-graph $H$, where $H$ contains a !-box $b$ and given a !-linear form
grammar $G$ which generates the same languages as $H$, there exist grammars
$G'$ and $G''$ which generate the same languages as:
  \[ \stikz{connected.tikz} \qquad \textrm{and} \qquad
  \stikz{connected2.tikz} \]
  respectively, where in both cases, the newly depicted edge (coloured in red)
is incident to the contents of $b$ in $H$ and the edge is also incident to a
node-vertex in $H$ which is not in any !-boxes. Moreover, these grammars can be
effectively generated and are in !-linear form.
  \label{lem:connect}
\end{lemma}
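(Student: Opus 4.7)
The plan is to exploit the infrastructure built into !-linear form (Definition~\ref{def:!-linear}): in every non-final production of $G$, the nonterminal vertex $x$ is already connected to every terminal vertex $v_i$ by bidirectional $\alpha_i$-edges and carries the matching $\alpha_i$-connection instructions in both directions. Consequently, in every sentential form of $G$ the vertex $v_j$ (once it has been created) is joined to the current nonterminal by an $\alpha_j$-edge in each direction, and every already-created copy of $v_i$ is joined to the current nonterminal by $\alpha_i$-edges. These pre-existing edges are exactly the triggers we need to fire new bridges via additional connection instructions, without having to restructure the grammar.

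To construct $G'$, fix a fresh edge label $\gamma$ for the red edge. In each production of $G$ whose body contains the terminal vertex $v_i$, add the connection instruction $(\chi(\alpha_j), \alpha_j, \gamma, v_i, \mathrm{in})$ to its connection relation; whenever such a production fires and $v_j$ already exists in the mother graph, the $\alpha_j$-edge from $v_j$ to the nonterminal triggers this CI and establishes the desired bridge $(v_j, \gamma, v_i)$ between $v_j$ and the freshly created copy of $v_i$. Since $v_i$ may be produced many times (once per EXPAND of $b$) while $v_j$ is produced exactly once, their derivation steps may be interleaved; to cover the case in which copies of $v_i$ are created before $v_j$, we symmetrically add the complementary CI $(\chi(\alpha_i), \alpha_i, \gamma, v_j, \mathrm{out})$ to the production(s) whose body contains $v_j$, using the $\alpha_i$-edges from the nonterminal to the already-generated copies of $v_i$ to establish all outstanding $\gamma$-bridges the instant $v_j$ appears. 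The grammar $G''$ is obtained analogously, with the in/out directions of the two added connection instructions reversed.

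Checking !-linear form for the modified grammars is routine: terminal vertices, their $\alpha_i$ labels, the function $\chi$, the empty final production, and the existing $\alpha_i$-edges and $\alpha_i$-CIs are untouched, and the two added CIs use a fresh label $\gamma \notin \mathcal X$, so conditions~1--5 of Definition~\ref{def:!-linear} remain valid verbatim. Semantic correctness $L(G') = L(H')$ (and $L(G'') = L(H'')$) is checked by inspecting derivations: apart from the extra $\gamma$-bridges, the sentential forms of $G'$ are identical to those of $G$, and a short case analysis on whether $v_j$ precedes, follows, or is interleaved among the copies of $v_i$ shows that exactly one of the two added CIs fires per copy of $v_i$, producing exactly one new $\gamma$-edge per copy and hence no duplicates.

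The main obstacle is precisely this temporal-ordering phenomenon: a single connection instruction at the $v_i$-production alone cannot create the bridge when $v_j$ has not yet been generated, and a single CI at the $v_j$-production alone cannot reach copies of $v_i$ that will only be generated later. The dual-CI construction described above resolves the issue cleanly because !-linear form guarantees the triggering $\alpha_i$- and $\alpha_j$-edges in every non-final sentential form, and because $v_j$ lies outside every !-box (hence exists in at most one copy) so there is never any ambiguity about which vertex the bridges should point to.
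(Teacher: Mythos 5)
Your construction is correct and rests on the same mechanism as the paper's proof: both exploit the fact that in a !-linear form grammar the current nonterminal is permanently wired to every copy of every terminal vertex by bidirectional $\alpha_i$-labelled edges (conditions 4--5 of Definition~\ref{def:!-linear}), and both create the new bridges by grafting fresh connection instructions onto that infrastructure rather than restructuring the grammar. The difference is in how many instructions are added and where. The paper adds exactly one: in the unique production whose body contains $v_j$, a connection instruction triggered by $\alpha_i$ and targeting $v_j$ --- i.e.\ your ``second'' instruction. That connects $v_j$ only to the copies of $v_i$ existing at the moment $v_j$ is created, so it yields every required edge only because the grammars fed into Lemma~\ref{lem:connect} by the induction in Theorem~\ref{thm:non-overlap} are chained (via Lemma~\ref{lem:disjoint}) so that the entire contents of the !-box, and hence all copies of $v_i$, are generated before $v_j$ appears. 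Your two-sided construction, with the complementary instruction at the production creating $v_i$ triggered by $\alpha_j$, removes that dependence on derivation order and works for an arbitrary !-linear form grammar generating the language of $H$; you also handled the in/out asymmetry correctly so that both instructions produce edges with the same orientation, and your ``exactly one of the two fires per copy'' argument rules out duplicates (which would in any case be absorbed by the edge-set representation of Definition~\ref{def:graph}). The only nitpick is that the bridge label $\gamma$ should be whatever label the red edge actually carries in the target !-graph rather than a fresh symbol; what matters for preserving !-linear form is only that the trigger labels remain in $\mathcal{X}$, which your construction respects.
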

\begin{proof}
  In both cases, for the newly depicted edge, identify the wire-vertex as $v_i$
  and the node-vertex as $v_j$. To get $G'$, identify the unique production of
  $G$ which contains a vertex incident to an edge with
  label $\alpha_j$. Then add to its connection instructions a new edge in the
  following way:
  \cstikz{connect_modify.tikz}
  where the red-coloured edge is the new addition. To get the grammar
  $G''$, follow the same procedure, but with the directions reversed: 
  \cstikz{connect_modify2.tikz}
In each case,
this modification has the effect that we connect all copies of the wire vertex
(identified by the connection instruction labelled $\alpha_i$)
to the single node vertex (identified by the edge labelled $\alpha_j$), which
is the only change required compared to the concrete graphs of $H$.
\end{proof}

\begin{theorem}[\cite{gam}]
  Given a !-graph $H$ such that it doesn't have any non-nested overlapping
!-boxes,
  there exists a LIN-edNCE grammar $G$ which generates the same language as
  $H$. Moreover, this grammar can be effectively constructed and is in !-linear
  form.
  \label{thm:non-overlap}
\end{theorem}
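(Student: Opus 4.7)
The plan is to proceed by induction on the number of !-boxes of $H$, using the four preceding lemmas as the building blocks of the construction. Intuitively, any !-graph $H$ with no non-nested overlapping !-boxes can be built up by (i) starting from concrete string graphs, (ii) combining them via disjoint union, (iii) connecting them by single edges, and (iv) wrapping subgraphs in !-boxes; these are precisely the operations handled by Lemmas~\ref{lem:base}, \ref{lem:disjoint}, \ref{lem:connect} and \ref{lem:nested} respectively. Each lemma preserves the !-linear form invariant and is effective, so composing them yields an effective construction of a !-linear form grammar for $H$.

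For the base case, $H$ has no !-boxes and is a concrete string graph, so Lemma~\ref{lem:base} provides the grammar directly. For the inductive step, I would use the hypothesis of no non-nested overlap to observe that the !-boxes of $H$ form a forest under the nesting relation. Pick a top-level !-box $b$, let $K$ denote the sub-!-graph induced by $B(b)$ with the !-vertex $b$ itself removed, and let $R := H - B(b)$. Both $K$ and $R$ inherit the no-non-nested-overlap property and have strictly fewer !-boxes than $H$, so by the induction hypothesis I obtain !-linear form grammars $G_K$ and $G_R$ generating their languages. Apply Lemma~\ref{lem:nested} to $G_K$ to wrap $K$ inside a new !-box, yielding a grammar $G_K'$; then apply Lemma~\ref{lem:disjoint} to $G_K'$ and $G_R$ to obtain a grammar for the disjoint union of the wrapped $K$ with $R$.

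At this stage, the only difference from $H$ is the edges between $B(b)$ and $R$. Because $B(b)$ is an open subgraph of $U(H)$, no wire-vertex in $B(b)$ is adjacent to a wire-vertex outside $B(b)$, so any such crossing edge goes between a wire-vertex of $B(b)$ and a node-vertex of $R$. Since $b$ was chosen to be top-level and by the openness and nesting conventions, this node-vertex of $R$ lies in no !-box of $H$, matching the hypothesis of Lemma~\ref{lem:connect}. I would then apply Lemma~\ref{lem:connect} once per crossing edge (in either direction) to reinstate all the edges, arriving at a !-linear form grammar for $H$. Iterating this argument, or equivalently processing each top-level !-box of $H$ in turn, completes the inductive step.

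The main obstacle is ensuring that the structural decomposition always matches the precise hypotheses of the four lemmas. In particular, one must verify that every edge connecting $B(b)$ to $R$ has exactly the shape required by Lemma~\ref{lem:connect} (a wire-vertex inside $B(b)$ meeting a node-vertex outside every !-box of $H$), and that the nested !-boxes inside $B(b)$ do not create additional crossing edges outside $B(b)$. Both facts follow from the openness of !-boxes together with the absence of non-nested overlap, but they must be stated carefully to justify that the decomposition actually falls within the scope of the earlier lemmas rather than leaving edges or !-boxes unaccounted for.
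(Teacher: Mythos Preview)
Your proposal is correct and follows essentially the same approach as the paper. The paper also proceeds by induction on the number of !-boxes, picks a top-level !-box $b$, uses openness together with the no-overlap hypothesis to argue that every edge crossing between $B(b)$ and its complement has a wire-vertex inside $b$ and a node-vertex lying in no !-box, and then invokes Lemmas~\ref{lem:connect}, \ref{lem:disjoint}, \ref{lem:nested} and~\ref{lem:base}; the only cosmetic difference is that the paper phrases the step case as a chain of reductions (strip crossing edges, split off $H-K$, then unwrap $b$) whereas you describe the same construction bottom-up.
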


\begin{proof}
  We present a proof by induction on the number of !-boxes of $H$.
  For the base case, if $H$ has no !-boxes, then lemma \ref{lem:base} completes
  the proof.
  For the step case, pick any top-level !-box $b$ and let's consider the full
  subgraph of $H$ which consists of the contents of $b$. Call this subgraph
  $K$. Any vertex $v$ of $H
  - K$ which is adjacent to $K$ must be a node-vertex, because
  otherwise this would violate the openness condition of !-boxes.
  Let $w \in K$ be a wire-vertex that $v$ is adjacent to and let $e$ be the
  edge connecting $v$ to $w$. 
  \cstikz{thm_no_overlap.tikz}
  If $v$ is in some !-box $b'$, then the openness condition of !-boxes
  implies that $w$ must also be in $b'$. However, we have assumed that
  $H$ does not contain overlapping !-boxes, so this is not possible and thus
  $v$ is not in any !-boxes. Therefore, we can use lemma \ref{lem:connect} to
  reduce the problem to showing that we can effectively construct a grammar for
  $H - e$. Similarly, by applying the same lemma multiple times, we
  can reduce the problem to showing that we can effectively construct a grammar
  for the !-graph consisting of the disjoint !-graphs $K$ and $H - K$.
  Applying lemma \ref{lem:disjoint} and the induction hypothesis then reduces
  the problem to showing the theorem for $K$.  Finally, we can apply lemma
  \ref{lem:nested} to $K$ and then the induction hypothesis to complete the
  proof.
\end{proof}

This theorem is very important as it shows that $\mathbf{BGNO} \subseteq
\mathbf{CF}.$ $\mathbf{BGNO}$ is a class of !-graphs of
crucial importance. All families of string diagrams which are expressible
using !-graphs and have known practical applications fall within this
class, with the exception of $SK_{n,m}.$

\section{Limitations of context-free graph grammars}\label{sec:context-laina}

We begin by showing a limitation of context-free grammars in terms
of their expressive power on string graphs.

\begin{proposition}
The family of graphs $SK_n$ is not context-free.
\end{proposition}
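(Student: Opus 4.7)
The plan is to apply Parikh's theorem (Theorem~\ref{thm:parikh}) in essentially the same manner as the square grids example immediately preceding the proposition. First, I would fix the sequence $\pi = (\mathcal{N}, \mathcal{W})$ consisting of the node-vertex and wire-vertex label alphabets, so that the Parikh mapping counts node-vertices in the first coordinate and wire-vertices in the second. Since every element of $SK_n$ has $n$ node-vertices and, by the definition of the family, exactly one wire-vertex on the unique wire between each of the $\binom{n}{2}$ pairs of node-vertices, one obtains
\[
Par_{\pi}(SK_n) = \left\{ \left(n,\ \tfrac{n(n-1)}{2}\right) \;\middle|\; n \in \mathbb{N},\ n \geq 2 \right\}.
\]

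The second step is to show that this set is not semilinear. Suppose for contradiction that it is a finite union of linear sets $L_1 \cup \cdots \cup L_r$. At least one $L_j$ must contain infinitely many points of $Par_{\pi}(SK_n)$, so write $L_j = \{c + \sum_{i} t_i x_i \mid t_i \in \mathbb{N}\}$ for some base vector $c \in \mathbb{N}^2$ and finitely many period vectors $x_i \in \mathbb{N}^2$. Since the first coordinate of points in $L_j$ is unbounded, some period $x_i$ must have a positive first component; and then a straightforward calculation shows that the ratio (second coordinate)/(first coordinate) of any element of $L_j$ is bounded above by a constant depending only on $c$ and the $x_i$. However, for the points $(n, n(n-1)/2)$ this ratio is $(n-1)/2$, which is unbounded as $n \to \infty$, a contradiction.

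Hence $Par_{\pi}(SK_n)$ is not semilinear, and by Theorem~\ref{thm:parikh} the language $SK_n$ is not in \textbf{C-edNCE}. The main (and essentially only) step is the non-semilinearity argument, which is the standard ratio-bounding argument familiar from the string case and from the square grid example; no graph-grammar-specific obstacle arises, since Parikh's theorem already reduces the problem to a purely arithmetic one.
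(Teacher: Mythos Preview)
Your proposal is correct and follows exactly the same approach as the paper: apply Parikh's theorem with $\pi = (\mathcal{N}, \mathcal{W})$ to obtain $Par_\pi(SK_n) = \{(n, \binom{n}{2})\}$ and argue this set is not semilinear. In fact you give more detail than the paper, which simply asserts non-semilinearity without the ratio-bounding argument.
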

\begin{proof}
We can prove this using Parikh's theorem
by setting $\pi = (\{\mathcal N\}, \{\mathcal W\})$. So, the first
component of the Parikh mapping counts the number of node-vertices
and the second component counts the number of wire-vertices.
Then,
we get:
\[ Par_{\pi}(SK_n) = \left\{\left(n, \binom{n}{2}\right)\ |\ n \in \mathbb
N\right\} \]
which is not semilinear and therefore $SK_n$ is not a context-free
graph language.
\end{proof}

As we have shown before, !-graphs cannot express this language either,
nevertheless, this language is important and we would like to be able
to formally represent it.

The figure below summarises the results of this chapter:
\cstikz{inclusion-figure-bgno.tikz}
where all regions of the diagram are populated, except for the part between
\textbf{LIN-edNCE} and
$\mathbf{CF}$ inside $\mathbf{BG}$ indicated by the dashed red line, which
has not been proven, but it is conjectured.

In Section~\ref{sec:bang-expressive-shit} we identified two key limitations
of !-graphs -- they are of bounded diameter and the language $SK_n$ cannot
be expressed by them. We have shown that context-free graph grammars do
not suffer from the former limitation, but they do suffer from the latter.
In addition, context-free graph grammars cannot express the language
$SK_{m,n}$ which is also crucial. For these reasons, in the next chapter we
will introduce a simple extension to our context-free grammars which overcomes
these limitations, while retaining many of their structural and decidability
properties.

\chapter{\label{ch:besg}B-ESG grammars}
In this chapter we will introduce \emph{B-ESG} grammars, which are extended
context-free graph grammars.
B-ESG grammars are a shorthand for \emph{Boundary Encoded String Graph}
grammars.
They are based on B-edNCE grammars and the \emph{B} in the name has the
same intended meaning. We will also show that B-ESG grammars
do not suffer from any of the limitations in expressive power that
we identified in the previous chapter.

In Section~\ref{sec:encoded-b-ednce}, we begin by describing
\emph{encoded B-edNCE grammars}. An encoded B-edNCE grammar is a slightly
more general B-edNCE grammar. Its derivations consist in two parts -- first,
a graph is produced in the same way as for standard B-edNCE grammars
and then secondly, some of its edges are replaced with fixed graphs according
to some additional rules. There are two main results of this section. The
first one is that encoded B-edNCE grammars are strictly more expressive
compared to context-free graph grammars when we restrict ourselves to string
graphs. The second result is that encoded B-edNCE grammars properly include
$\mathbf{BGTO}$, which is the class of !-graph languages where non-nested
!-boxes may overlap trivially.

Encoded B-edNCE grammars generate languages of (general) graphs. They are
not restricted to languages of string graphs. However, our aim is to model
string diagrammatic reasoning by representing string graphs, so naturally we
are only interested in languages consisting solely of string graphs. In
Section~\ref{sec:b-esg-languages} we will introduce B-ESG grammars,
which are encoded B-edNCE grammars which satisfy conditions that ensure
their derivations result in string graphs.

In Section~\ref{sec:expres-shit} we show that we do not lose any expressive
power, on languages of string graphs, by restricting ourselves to B-ESG
grammars from encoded B-edNCE grammars. In particular, we prove that for
any encoded B-edNCE grammar which generates a language consisting of
string graphs, there exists a B-ESG grammar generating the same language.

Finally, in Section~\ref{sec:decide} we prove important decidability
properties for B-ESG grammars. In particular, we prove that the membership
and match-enumeration problems are decidable, which are crucial for the
operation of a software proof assistant.

Many of the results in this chapter are based on previously published
results in \cite{icgt}. However, since then the results have been
improved by the author by generalising the definition of B-ESG grammar.

\section{Encoded B-edNCE grammars}\label{sec:encoded-b-ednce}
We begin by presenting a refinement of the vertex and edge label alphabets we
will use. The new addition is a subset of edge labels $\mathcal E$ which will
denote \emph{encoding} edge labels. The next definition describes the alphabets
we will need for the rest of the chapter and all of our constructions will be
over these alphabets.

\begin{definition}[B-ESG alphabets]\label{def:encoding_alphabets}
We will use the following alphabets:
\begin{description}
\item[1.] $\Sigma$ is the \emph{alphabet of all vertex labels}.
\item[2.] $\Delta \subseteq \Sigma$ is the \emph{alphabet of terminal
vertex labels}.
\item[3.] $\mathcal{N} 
\subseteq \Delta$ is the \emph{alphabet of node-vertex
labels} and $\mathcal{W} = \Delta - \mathcal{N}$ is the alphabet of
\emph{wire-vertex labels}.
\item[4.] $\Gamma$ is the \emph{alphabet of all edge labels}. We will not
use any non-final edge labels (see Corollary~\ref{cor:nonblocking}).
\item[5.] $\mathcal E \subsetneq \Gamma$ is the alphabet of \emph{encoding}
edge labels.
\end{description}
\end{definition}

In the previous chapter we saw that C-edNCE grammars cannot represent
languages of interest to the ZX-calculus and also entire classes of languages
which are expressible using !-graphs while being relevant to the ZX-calculus
and other string diagrammatic theories which contain Frobenius algebras.
We propose a simple extension to the generative power of C-edNCE grammars
with the motivation of overcoming these limitations. The main idea is to use
B-edNCE grammars which generate \emph{encoded string graphs}. These
encoded string graphs are just like string graphs, but with the addition that
we allow edges to connect node-vertices if they carry encoding labels.
The encoding edges would then be replaced by fixed graphs according to a
simple set of rules. Thus, we would use B-edNCE grammars to generate languages
of encoded string graphs which are then decoded using a very simple set of DPO
rewrite rules to obtain languages of string graphs.

\begin{definition}[Encoded string graph \cite{icgt}]
An \textit{encoded string graph} is a string graph where we additionally allow
edges with labels $\alpha \in \mathcal E$ to connect pairs of node-vertices.
Edges labelled by some $\alpha \in \mathcal E$ will be called \emph{encoding}
edges.
\end{definition}

\begin{definition}[Decoding system \cite{icgt}]
A \textit{decoding system} $T$ is a set of DPO rewrite rules of the form:
\cstikz[0.8]{simple-dpo-def.tikz}
one for every triple $(\alpha, \sigma_1, \sigma_2) \in \mathcal E \times
\mathcal N \times \mathcal N,$
where the LHS consists of a single edge with encoding label $\alpha \in
\mathcal E$ connecting a $\sigma_1$-labelled node-vertex to a
$\sigma_2$-labelled node-vertex, and the RHS is a string graph which contains
the same two node-vertices and at least one additional vertex while containing
no inputs, outputs, or encoding labels.
\end{definition}

Instead of depicting the decoding rules as a span (as we did in the above
definition), we will introduce a more compact notation which we shall use
from now on when depicting decoding systems. A DPO rewrite
rule from a decoding system $T$:
\cstikz[0.8]{simple-dpo-def.tikz}
will be depicted as:
\cstikz{simple-dpo-def-short.tikz}
That is, we simply take its LHS and RHS. By doing this, we don't lose any
information, because the interface of the DPO rule is simply the two
node-vertices which are common in both sides.

\begin{proposition}
Any decoding system $T$ is confluent and terminating.
\end{proposition}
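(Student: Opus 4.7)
The plan is to prove termination first and then reduce confluence to local confluence via Newman's lemma, finally establishing local confluence by a case analysis on whether two competing rewrites match the same encoding edge.

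For termination, I would use the strictly decreasing measure given by the number of encoding edges in a graph. By definition, every rewrite rule in $T$ has a left-hand side consisting of exactly one encoding edge, so a single rewrite step removes one encoding edge from the host graph. Crucially, the right-hand side is required to contain no encoding labels at all, so no new encoding edges are introduced. Hence each rewrite strictly decreases the finite, nonnegative count of encoding edges, yielding a well-founded measure and termination.

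For local confluence, consider two rewrites $H \leadsto_{t_1, m_1} M_1$ and $H \leadsto_{t_2, m_2} M_2$, and let $e_i$ denote the unique encoding edge in the image of $m_i$. If $e_1 = e_2$, then the rule is forced by the triple $(\alpha_i, \sigma_1, \sigma_2)$ read off from the label of $e_i$ and the labels of its endpoints, so $t_1 = t_2$; moreover the directedness of $e_i$ fixes which endpoint is source and which is target, so the mono $m_i$ is uniquely determined and $m_1 = m_2$. Uniqueness of DPO rewrites (Theorem~\ref{thm:adhesive_rewrite} applied to the ambient adhesive category, reflected back via the partial-adhesivity framework of Section~\ref{sec:partial_adhesive}) then gives $M_1 \cong M_2$. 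If instead $e_1 \neq e_2$, then the images $m_1(L_1)$ and $m_2(L_2)$ can only overlap on node-vertex endpoints, and these endpoints lie in both interfaces and are therefore preserved; hence the match $m_2$ still exists in $M_1$ and conversely, and applying the remaining rewrite in each branch yields a common $M$ — the two right-hand sides share only the preserved node-vertices, and since neither introduces an encoding label the two steps genuinely commute.

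The main obstacle I anticipate is making this parallel-independence argument rigorous in the partially adhesive setting of encoded string graphs, because the Local Church-Rosser Theorem for partially adhesive categories is flagged as open in Section~\ref{sec:partial_adhesive}. However, the rules here are so restricted — deleting a single encoding edge and preserving its endpoints — that one can avoid invoking the general theorem and instead verify independence by a direct pushout calculation in the ambient category $\mathbf{MultiGraph}$, then transfer the result back using that the relevant pushouts are $\mathcal{S}$-pushouts (no parallel encoding edges can be created since the right-hand sides contain none). Combining termination with local confluence via Newman's lemma then yields confluence.
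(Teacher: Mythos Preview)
Your proposal is correct and takes essentially the same approach as the paper: the termination argument is identical (counting encoding edges), and confluence rests on the same observation that the two endpoint node-vertices lie in the interface and are therefore preserved, making distinct rule applications independent. The paper's proof is considerably terser---it simply asserts that all rule applications are independent and may be applied in parallel---whereas you route through Newman's lemma with an explicit case split on whether the two matches hit the same edge, but the underlying idea is the same.
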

\begin{proof}
The left-hand side of each decoding rule contains an encoding edge, whereas
the right-hand side does not. Thus, any application of a rule from $T$
decreases the number of encoding edges by one and therefore $T$ is
a terminating rewrite system. For confluence, observe that each DPO
rewrite rule in $T$ contains an invariant part -- the two node-vertices
are the same in both sides. Thus, all rule applications of $T$ are independent
of each other and may all be applied at the same time in parallel.
\end{proof}

Given an (encoded string) graph, \emph{decoding} is the process of applying all
of the rules of $T$ to the graph. As the above proposition shows, this is a
very simple process which may even be done in a single step.
If $H$ is an encoded string graph, we shall say that $H'$ has been
\emph{decoded} from $H$, and denote this with $H \Longrightarrow_*^T H',$ if
the graph $H'$ is the result of applying all rules from $T$ to $H$, such that
$H'$ contains no encoding edges.
Next, we present a
lemma which establishes a relationship between encoded string graphs, the
process of decoding and string graphs.

\begin{lemma}\label{lem:decoding}
Given two graphs $H, H'$ with $H \Longrightarrow_*^T H'$, where $T$ is
a decoding system, then $H$ is an encoded string graph iff $H'$ is a string
graph.
\end{lemma}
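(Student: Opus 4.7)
The plan is to argue both directions by exploiting the strongly local nature of the decoding rewrites, combined with the fact that decoding is terminating and confluent. For both directions, I would first reduce to the single-step case via induction on the number of encoding edges in $H$ (which decreases by one with every application of a rule from $T$, since each rule removes exactly one encoding edge).

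For the forward direction, I would assume $H$ is an encoded string graph and show that applying a single rule $L \xleftarrow{l} I \xrightarrow{r} R$ from $T$ at some encoding edge yields another encoded string graph with one fewer encoding edge. Here $L$ consists of the two node-vertices joined by the encoding edge, $I$ consists of just the two node-vertices, and $R$ is a string graph with those same two node-vertices, no encoding edges, no inputs, and no outputs. The DPO construction preserves everything in $H$ outside the image of $L$, deletes the encoding edge, and glues in $R$ along the two node-vertices. I would then check the three string-graph conditions on the result: (i) no edge directly connects two node-vertices because the encoding edge is gone and $R$ itself has no encoding labels, hence no node-to-node edges; (ii) the in-degree and out-degree of every wire-vertex in $R$ remain at most one after gluing, because $R$ has no inputs or outputs (so each wire-vertex in $R$ already has both its incoming and outgoing slot used internally to $R$) and the only shared vertices with the rest of $H$ are node-vertices, which add no new edges at those wire-vertices. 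Wire-vertices outside $R$ are untouched. Iterating this and using confluence finishes the forward direction.

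For the reverse direction, I would prove the contrapositive: if $H$ is not an encoded string graph, then $H'$ is not a string graph. Since $H$ is decoded to $H'$, only the encoding edges of $H$ are changed; every non-encoding edge of $H$ and every wire-vertex of $H$ appears unchanged in $H'$ (modulo the vertices added from replacement graphs, which only contribute to node-vertex neighbourhoods and internally-saturated wire-vertices). So any violation in $H$ of the conditions for being an encoded string graph --- namely a non-encoding edge between two node-vertices, or a wire-vertex with in- or out-degree exceeding one --- survives decoding and shows up as a violation in $H'$ of the string graph conditions.

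The main obstacle is the bookkeeping in the single-step case: one must verify carefully that the DPO pushout does not accidentally identify vertices, introduce node-to-node edges, or raise a wire-vertex's in-/out-degree above one. All three issues are ruled out by the stipulation that $R$ has no inputs, no outputs, and no encoding labels, together with the fact that the interface $I$ consists only of the two node-vertices. Once this single-step lemma is in hand, the rest is just induction plus confluence of $T$.
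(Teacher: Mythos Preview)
Your proposal is correct and matches the paper's proof in spirit, though the paper handles the forward direction more directly: rather than inducting on single DPO steps, it simply observes that decoding can be decomposed as ``delete all encoding edges'' (which trivially leaves a string graph, since you only removed node-to-node edges) followed by ``glue in the replacement string graphs along the invariant node-vertices'' (which preserves string-graph-ness because the gluing sites are node-vertices and the glued-in pieces have no inputs or outputs). Your single-step induction is a perfectly good alternative and makes the DPO mechanics more explicit; the paper's decomposition is terser but relies on the reader seeing that the two bulk operations each preserve the relevant structure. For the reverse direction you both argue the contrapositive in the same way: any violation in $H$ (a wire-vertex of excess degree, or a non-encoding node-to-node edge) is untouched by decoding and hence persists in $H'$.

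One small wording issue: in your check (i) you write ``no edge directly connects two node-vertices because the encoding edge is gone,'' but after a \emph{single} step other encoding edges may remain. What you actually need there is that no \emph{non-encoding} node-to-node edge is introduced, so the result is still an \emph{encoded} string graph; the final conclusion that $H'$ is a string graph then follows once all encoding edges are exhausted. This is clearly what you intend, but tighten the phrasing.
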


\begin{proof}
$\left( \Longrightarrow \right)$ Decoding an encoded string graph $H$ consists
of replacing all of the encoding edges with some string graphs, according to
the rules of $T$. This can be broken into two steps -- in the first step the
encoding edges from $H$ are removed which results in a string graph by
definition. The second step is to then glue some string graphs over the
invariant node-vertices which again results in a string graph by definition.

$\left( \Longleftarrow \right)$ $H$ cannot contain a wire-vertex $v$ with
in-degree (out-degree) more than one, because otherwise so does $H'$ as
decoding will not add, nor remove any neighbours of $v$. If $H$ contains
an edge between a pair of node-vertices with label $\alpha \not \in \mathcal
E$, then so does $H'$ as $T$ will preserve this edge. Therefore, $H$ must
be an encoded string graph.
\end{proof}

\begin{definition}[Encoded B-edNCE grammar]\label{def:encoded-b-ednce}
An \emph{Encoded B-edNCE grammar} is a pair $B=(G,T)$, where $G$ is a B-edNCE
grammar and $T$ is a decoding system.
\end{definition}

\begin{definition}[Concrete derivation and language \cite{icgt}]
A \textit{concrete derivation} for an encoded B-edNCE grammar $B = (G,T)$ with
$S$ the initial nonterminal for $G$, consists of a concrete derivation
$sin(S,z)
\Longrightarrow_*^G H_1$ in $G$,
followed by a decoding $H_1 \Longrightarrow_*^T H_2$. We will denote such a
concrete derivation as $sin(S,z) \Longrightarrow_*^G H_1 \Longrightarrow_*^T
H_2$ or
simply with $sin(S,z) \Longrightarrow_*^B H_2$ if the graph $H_1$ is not
relevant for
the context.
The \textit{language} of $B$ is given by $L(B):=\{[H]\ |\ sin(S,z)
\Longrightarrow_*^{B}
H\}$.
\end{definition}

Note, that the above definition defines derivations and the language of an
encoded B-edNCE grammar up to isomorphism, which is compatible with the
definition of B-edNCE language. This extension to B-edNCE grammars can
be intuitively seen as simply generating graphs in \textbf{B-edNCE}, where we
think of the encoding edges as representing fixed graphs.

\begin{example}\label{ex:encoded-b-ednce-examples}
We will depict encoded B-edNCE grammars in the same way as we depict
B-edNCE grammars, with the addition of depicting all of the DPO rewrite
rules of the decoding system above the grammar. For example, an
encoded B-edNCE grammar which generates the language $SK_n$ of complete
string graphs is provided below:
\cstikz{complete-string-graphs.tikz}
Observe, that the B-edNCE grammar is essentially the same as the one from
Example~\ref{ex:complete_grammar}. It generates complete string graphs, where
all edges are labelled by encoding symbols. These edges are then simply replaced
by a wire consisting of a single wire-vertex, while preserving the
endpoints. A derivation of $SK_3$ is given by:
\cstikz{sk3-derive.tikz}
The horizontal derivation sequence is simply a concrete derivation in the
B-edNCE grammar. Once that is done, the vertical derivation step simply decodes
all of the encoding edges, as specified by the decoding system.
\end{example}

\begin{example}\label{ex:encoded-kmn}
An encoded B-edNCE grammar which generates the language $SK_{n,m}$ of complete
bipartite string graphs is provided below:
\cstikz{complete-bipartite-string-graphs.tikz}
Similarly to the previous example, the B-edNCE grammar generates complete
bipartite graphs, where all edges are labelled by encoding symbols. A
derivation of $SK_{2,2}$ is given by:
\cstikz{sk22-derive.tikz}
\end{example}

Recall that both languages $SK_{m,n}$ and $SK_{n}$ were considered in
the previous chapter and we highlighted their importance. However, we showed
that neither of them can be described by context-free graph grammars. Now we
see that the simple extension which we introduced (the addition of the decoding
system) allows us to represent them. Using this fact, the next proposition
follows immediately.

\begin{proposition}\label{prop:encoded-power}
Encoded B-edNCE grammars are strictly more expressive than B-edNCE grammars.
\end{proposition}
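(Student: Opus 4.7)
The plan is to prove both inclusion and strictness. For inclusion, I would observe that every B-edNCE grammar $G$ can be regarded as the encoded B-edNCE grammar $(G, \emptyset)$, where the decoding system is empty. Since the decoding phase of any concrete derivation in $(G,\emptyset)$ does nothing, $L((G,\emptyset)) = L(G)$, so every B-edNCE language is already an encoded B-edNCE language. Alternatively, one can take a decoding system $T$ whose rules never fire on any sentential form produced by $G$ (e.g.\ by using fresh encoding labels that do not appear in the productions), giving the same conclusion.

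For strictness, I would exhibit a concrete witness language that is generated by some encoded B-edNCE grammar but by no B-edNCE grammar. The obvious candidates are the families $SK_n$ and $SK_{m,n}$ that motivated the whole extension: Example~\ref{ex:encoded-b-ednce-examples} and Example~\ref{ex:encoded-kmn} already display encoded B-edNCE grammars generating them, so these languages lie in the encoded B-edNCE class. On the other hand, both languages were shown in Chapter~\ref{ch:context-free} (via Parikh's Theorem~\ref{thm:parikh}) to lie outside \textbf{C-edNCE}; since \textbf{B-edNCE} $\subseteq$ \textbf{C-edNCE}, neither language can be the language of a B-edNCE grammar. Either of these choices therefore completes the strict separation.

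There is essentially no real obstacle here: the inclusion direction is immediate from the definition of encoded B-edNCE grammar, and the strictness direction is a direct appeal to previously established results. The only small subtlety is cosmetic~-- one should point out that the trivial encoding (empty decoding system, or decoding system whose rules are vacuous on the grammar in question) genuinely yields a valid encoded B-edNCE grammar under Definition~\ref{def:encoded-b-ednce}, which does not require the decoding system to be non-empty nor to actually interact with the generated sentential forms.
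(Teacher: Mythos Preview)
Your proposal is correct and essentially identical to the paper's own proof: inclusion via the encoded grammar $(G,\emptyset)$, and strictness by invoking Example~\ref{ex:encoded-b-ednce-examples} or Example~\ref{ex:encoded-kmn} together with the earlier fact that $SK_n$ and $SK_{m,n}$ lie outside $\mathbf{C\text{-}edNCE}\supseteq\mathbf{B\text{-}edNCE}$.
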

\begin{proof}
Obviously, any B-edNCE grammar $G$ may be simulated by an encoded B-edNCE
grammar $B=(G,\emptyset),$ that is, the alphabet of encoding labels is empty
and therefore so is the decoding system. Then, the proposition follows after
considering the language from Example~\ref{ex:encoded-b-ednce-examples}
or Example~\ref{ex:encoded-kmn}.
\end{proof}

Therefore, encoded B-edNCE grammars are indeed more powerful compared to
B-edNCE grammars in general. Also, recall that \textbf{B-edNCE} $\subsetneq$
\textbf{C-edNCE} = \textbf{CF}. Since we are only interested in string graph
languages, a natural question to ask is if encoded B-edNCE grammars are more
powerful compared to context-free graph grammars when we restrict ourselves to
string graphs. The next theorem shows that this is indeed the case.

\begin{theorem}
The generative power of encoded B-edNCE grammars is strictly greater
than the generate power of context-free graph grammars on string graphs.
\end{theorem}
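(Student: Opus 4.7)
The plan is to decompose the statement into two sub-claims and dispatch them separately: (i) every context-free string graph language can be generated by some encoded B-edNCE grammar, and (ii) there exists an encoded B-edNCE grammar whose language is a string graph language that no context-free graph grammar can generate. Together these give the desired strict inclusion.

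For (ii), I would simply invoke the worked constructions of this section. Example~\ref{ex:encoded-b-ednce-examples} exhibits an encoded B-edNCE grammar whose language is $SK_n$, the family of complete string graphs, and the output is manifestly a language of string graphs (every decoding step turns an encoding edge between two node-vertices into a single wire). But in Section~\ref{sec:context-laina} it was shown, by computing the Parikh mapping $\pi = (\mathcal N, \mathcal W)$ and observing that $\{(n, \binom{n}{2}) \mid n \in \mathbb N\}$ is not semilinear, that $SK_n$ is not a C-edNCE language, and hence, by Proposition~\ref{prop:generative}, not a context-free graph language on string graphs at all. (Example~\ref{ex:encoded-kmn} with $SK_{m,n}$ would work equally well.) This establishes the strict half.

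For (i), the starting observation is trivial: any B-edNCE grammar $G$ whose language consists of string graphs is, by Definition~\ref{def:encoded-b-ednce}, the encoded B-edNCE grammar $(G, \emptyset)$ once we take the encoding alphabet $\mathcal E$ to be empty, so that the decoding system $T$ is empty as well. Since then no encoding edges are ever introduced, the concrete derivations of $(G, \emptyset)$ coincide with those of $G$ and the two languages agree. Thus every B-edNCE string graph language is already an encoded B-edNCE string graph language. To lift this from B-edNCE to full context-free (C-edNCE), I would use Proposition~\ref{prop:generative}, which equates context-free languages on string graphs with HR languages, and then rely on the standard fact that HR languages on graphs without a $K_{3,3}$ subgraph can be generated by boundary grammars, i.e., they fall within B-edNCE. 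Composing these two reductions places every C-edNCE string graph language inside B-edNCE, and hence inside encoded B-edNCE.

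The main obstacle is the last step of paragraph three: translating an arbitrary C-edNCE (or HR) grammar over string graphs into a genuinely boundary grammar without enlarging the language or leaving the class of string graphs. On general graphs this is false, but on string graphs the node-vertices are never adjacent and each wire-vertex has in- and out-degree at most one, which should be enough structure to enforce a boundary normal form by splitting nonterminals and routing adjacencies through fresh wire-vertices in the interface. This is where the cleanest presentation will require care; all the other steps either quote prior results verbatim or follow by inspection of the constructions already on the page.
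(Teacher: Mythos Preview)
Your decomposition into (i) inclusion and (ii) strictness is exactly the paper's strategy, and your treatment of (ii) via $SK_n$ and Parikh matches the paper's use of Proposition~\ref{prop:encoded-power}. The only difference is in how (i) is dispatched. You flag the passage from context-free (equivalently HR, by Proposition~\ref{prop:generative}) to B-edNCE as the ``main obstacle'' and propose to construct a boundary normal form by hand, exploiting the degree constraints of string graphs. The paper avoids this entirely: it cites the standard fact (from \cite{c-ednce}, p.~57) that HR grammars have exactly the generative power of Bnd-edNCE grammars, which are already a subclass of B-edNCE (cf.\ Figure~\ref{fig:inclusion}). So the inclusion $\text{HR} \subseteq \text{B-edNCE}$ holds on \emph{all} graphs, not just string graphs, and no bespoke normal-form argument is needed. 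Your remark about $K_{3,3}$ is slightly misplaced: the absence of $K_{3,3}$ is what makes C-edNCE collapse to HR on string graphs (Proposition~\ref{prop:generative}), not what makes HR fit inside B-edNCE.

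In short, your proof is correct but you are working harder than necessary on the one step you identified as delicate; the paper closes it with a citation rather than a construction.
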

\begin{proof}
Recall from Proposition~\ref{prop:generative}, that any context-free string
graph language may be generated by a Hyperedge Replacement grammar.
However, from \cite{c-ednce} (pp.57) we know that Hyperedge Replacement
grammars have the same generative power as Bnd-edNCE grammars, which are a
proper subclass of B-edNCE grammars (as shown in
Subsection~\ref{sub:grammar-subclasses}). Then the theorem follows after
considering Proposition~\ref{prop:encoded-power}.
\end{proof}

We can provide a more detailed picture of the expressive power of encoded
B-edNCE grammars by comparing them to !-graphs. Recall that context-free string
graph languages contain the class \textbf{BGNO}, but they do not contain
\textbf{BGTO}. However, we can show that encoded B-edNCE grammars properly
include the class \textbf{BGTO}, which in turn contains \textbf{BGNO}. The rest
of the section is devoted to proving this result.

We will show this result by building on the proof from
Subsection~\ref{sec:nested}. First, we generalise the definition of
!-linear form to encoded B-edNCE grammars.

\begin{definition}[!-encoded grammar]
We will say that an encoded B-edNCE grammar $B=(G,T)$ is \emph{!-encoded}
if $G$ is !-linear (cf. Definition~\ref{def:!-linear}).
\end{definition}

The next lemma is similar in spirit to the lemmas of
Subsection~\ref{sec:nested}. It shows how to construct a larger !-encoded
grammar from a smaller one, both of which simulate the languages of
!-graphs which differ by the addition of a wire in a trivially overlapping pair
of !-boxes.

\begin{lemma}
Given !-graph $H$ which contains non-nested !-boxes $b_1$ and $b_2$ and given a
!-encoded grammar $B=(G,T)$ which generates the same language as
$H$, there exists a !-encoded grammar $B' = (G', T')$ which generates the same
language, as the following !-graph:
  \cstikz{overlap.tikz}
where the new additions are coloured in red and the newly depicted
wire-vertices are in both $b_1$ and $b_2$ and no other !-box. Moreover, this
grammar can be effectively constructed.
  \label{lem:overlap}
\end{lemma}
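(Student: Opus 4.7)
The plan is to enrich $B$ by introducing a fresh encoding edge label whose decoding produces a two-wire-vertex wire, and to modify $G$ so that every concrete derivation creates precisely one such encoding edge between each pair of copies of the two node-vertex endpoints of the new wire. Let $v$ (labelled $\sigma_v$) denote the endpoint lying in $b_1$ only and $w$ (labelled $\sigma_w$) the endpoint lying in $b_2$ only, and let $\alpha_v, \alpha_w \in \mathcal X$ be their unique edge labels given by the !-linear form of $G$ (condition~1 of Definition~\ref{def:!-linear}). Introduce a fresh encoding label $\alpha_{enc} \in \mathcal E$ and extend $T$ to $T'$ by adding the rule that replaces a $\sigma_v \xrightarrow{\alpha_{enc}} \sigma_w$ edge with a wire of two wire-vertices (with the edge orientation matching the new wire in the extended !-graph); the other triples in $\mathcal E \times \mathcal N \times \mathcal N$ involving $\alpha_{enc}$ are completed by arbitrary decoding rules, since they will never fire.

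I would then construct $G'$ from $G$ by the following local edits. For each production $p$ of $G$ whose right-hand side contains a $\sigma_v$-labelled terminal vertex $v_p$, add the connection instruction $(\sigma_w, \alpha_w, \alpha_{enc}, v_p, out)$; for each production whose right-hand side contains a $\sigma_w$-vertex $w_p$, add $(\sigma_v, \alpha_v, \alpha_{enc}, w_p, in)$; and whenever a single production already contains both a $v_p$ and a $w_p$, also include a direct $\alpha_{enc}$-edge $(v_p, \alpha_{enc}, w_p)$ in the right-hand side. Setting $B' := (G', T')$, the grammar $G'$ still satisfies all five conditions of Definition~\ref{def:!-linear}, since we have only \emph{added} connection instructions and at most one edge per production, leaving the mandatory bidirectional $\alpha_i$-structure and the accompanying propagation instructions intact.

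Correctness is then shown by induction on the length of derivations in $G'$. The !-linear form ensures the invariant that in every sentential form, the unique nonterminal is joined to every already-produced $\sigma_w$-vertex by an $\alpha_w$-edge in both directions (and dually for $\sigma_v$). Consequently, when a production introducing a new $v_p$ is applied, the added $(\sigma_w, \alpha_w, \alpha_{enc}, v_p, out)$ instruction fires once per existing $w$-copy, producing exactly the required new $\alpha_{enc}$-edges from $v_p$; the symmetric argument handles new $w$-copies, and the direct-edge clause covers pairs created by the same production. Hence the terminal graph produced by any $G'$-derivation is the corresponding $G$-derivation output decorated with exactly one $\alpha_{enc}$-edge between each pair $(v_i, w_j)$. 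Applying $T'$ replaces every such edge with a two-wire-vertex wire, which matches precisely the effect of instantiating the overlap wire of the extended !-graph once for every pair of copies of $v$ and $w$. Combined with the hypothesis $L(B) = L(H)$, this yields $L(B') = L(H')$ where $H'$ is the extended !-graph.

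The main obstacle will be checking that \emph{exactly one} $\alpha_{enc}$-edge is produced for each ordered pair $(v_i, w_j)$, not zero and not duplicates. Zero edges are ruled out because for any such pair one of $v_i, w_j$ is created strictly after the other (or both are created by the same production), and in each of these three cases exactly one of the three clauses above fires; duplicate edges are ruled out because $G'$ generates graphs in the sense of Definition~\ref{def:graph}, which forbids parallel edges with the same label, and more importantly because each firing of a clause involves a distinct ordered pair of \emph{source production step} and \emph{target existing vertex}, so no two clauses address the same $(v_i, w_j)$. The remaining bookkeeping — direction consistency between the two connection instructions and the decoding rule, and the irrelevance of decoding rules for other triples involving $\alpha_{enc}$ — is straightforward once the orientation convention ``every $\alpha_{enc}$-edge points from $\sigma_v$ to $\sigma_w$'' is fixed.
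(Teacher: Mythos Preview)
Your approach is correct and follows the same essential idea as the paper: introduce a fresh encoding edge label, arrange for exactly one such edge to appear between each pair of copies of the two endpoint node-vertices, and then decode that edge into the required wire.

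The substantive difference is symmetry. The paper modifies only \emph{one} production---the unique one containing a vertex incident to an $\alpha_j$-edge (i.e.\ the production that creates the target endpoint $v_j$)---and adds a single connection instruction there. This relies on an implicit structural fact about the grammars produced in the proof of Theorem~\ref{thm:overlap}: because the two (now non-overlapping) !-boxes are handled by chaining their subgrammars via Lemma~\ref{lem:disjoint}, all copies of one endpoint are generated before any copies of the other, so a connection instruction on the later side suffices. You instead add connection instructions on \emph{both} sides (plus a direct edge when the two endpoints live in the same production), so your construction works regardless of the order in which copies are created. Your version is therefore more self-contained as a stand-alone lemma, at the cost of slightly more bookkeeping; the paper's version is shorter but leans on the ambient recursion.

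Two small points to tighten. First, you locate the relevant productions by the \emph{vertex} labels $\sigma_v,\sigma_w$, but several distinct vertices of $H$ may share a label; you should instead identify the production (and the vertex within it) via the unique edge labels $\alpha_v,\alpha_w$ guaranteed by condition~1 of Definition~\ref{def:!-linear}, exactly as the paper does. With your phrasing, a second $\sigma_v$-labelled vertex would receive a spurious connection instruction and hence spurious encoding edges. Second, the decoding rule should reproduce exactly the wire added to $H$ (whatever number of wire-vertices the figure specifies), not a fixed ``two-wire-vertex'' wire; your correctness argument is unchanged once you make this substitution.
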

\begin{proof}
  For the newly depicted edges whose source or target are the two
node-vertices, identify the source node-vertex as $v_i$ and
  the target node-vertex as $v_j$.  To get the desired grammar $G'$,
  identify the unique production $X$ of $G$, such that $X$
  contains a vertex incident to an edge with label $\alpha_j$. Then,
  add to its connection instructions a new edge with unique label $\beta_k$:
  \cstikz{overlap_final.tikz}
  where the new addition is coloured in red. Note, that with this
  construction, we are not creating the newly depicted wire-vertices, nor any
of their
  copies. We are connecting all copies of the node-vertex $v_i$ to all copies
  of the node-vertex $v_j$ directly with edges labelled with $\beta_p$.

  Therefore, the grammar $G'$ as described so far will generate encoded string
  graphs which connect all copies of the vertex $v_i$ to all copies of the
  vertex $v_j$ via edges with label $\beta_k$ (the edges with label
  $\beta_i$, with $i <k$ have been established by the same
  construction in previous steps). What remains is to decode the edges
  labelled $\beta_k$. Thus, to get the same language as that of $H$, we need
  to add to $T$ the decoding rule:
  \cstikz{overlap_decode.tikz}
  where the right-hand side contains the same closed-wire (coloured in red
  above) which has been added to $H$.

\end{proof}

Next, we present the main theorem of this section. It combines the
previous lemma with the results of Subsection~\ref{sec:nested} to derive
the result.

\begin{theorem}
  Given a !-graph $H$ such that the only overlap between !-boxes in $H$ is
trivial, then there exists a !-encoded grammar $B=(G,T)$ which generates the
same language as $H.$
  Moreover, this grammar can be effectively constructed.
  \label{thm:overlap}
\end{theorem}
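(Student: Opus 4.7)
The plan is to reduce to Theorem~\ref{thm:non-overlap} by stripping away the trivial overlaps, then rebuilding them one wire at a time using Lemma~\ref{lem:overlap}. First, I would identify the set $W$ of all wires in $H$ that lie in the trivial overlap of some pair of non-nested !-boxes. By the definition of trivial overlap, each such wire is a closed wire whose two endpoint node-vertices belong exclusively to distinct !-boxes $b_1, b_2$, while its interior wire-vertices and edges lie in $B(b_1) \cap B(b_2)$.

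Next, I would define $H_0$ to be the !-graph obtained from $H$ by deleting, for every $w \in W$, the interior wire-vertices and edges of $w$, while retaining the endpoint node-vertices (which stay inside their respective !-boxes). Since the only source of non-nested overlap in $H$ was these wires, $H_0$ contains no non-nested overlapping !-boxes. Hence Theorem~\ref{thm:non-overlap} applies and yields an effectively constructible !-linear form LIN-edNCE grammar $G_0$ with $L(G_0) = L(H_0)$; viewing this as the !-encoded grammar $B_0 = (G_0, \emptyset)$ gives the base of the construction.

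I would then enumerate $W = \{w_1, \dots, w_n\}$ and inductively build a sequence $B_0, B_1, \dots, B_n$ of !-encoded grammars such that $B_k$ generates the language of the !-graph $H_k$ obtained from $H_0$ by adding the wires $w_1, \dots, w_k$ back. The step from $B_k$ to $B_{k+1}$ is a direct application of Lemma~\ref{lem:overlap} to the pair of !-boxes whose trivial overlap contains $w_{k+1}$. The lemma is effective and preserves !-encoded form, so after $n$ steps $B := B_n$ is the desired !-encoded grammar with $L(B) = L(H)$.

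The main obstacle is checking that Lemma~\ref{lem:overlap} really applies at each step. One must verify that (i) the relevant !-boxes remain non-nested throughout, which is automatic since re-inserting wire-vertices in the overlap does not change the !-vertex poset; (ii) the endpoint node-vertices continue to lie exclusively in their prescribed !-boxes, which follows from the trivial-overlap hypothesis on $H$ being preserved by restriction and re-insertion; and (iii) multiple parallel wires in the same trivial overlap can be handled simultaneously by the iteration, each being re-added with a fresh unique edge label $\beta_k$ as in the statement of Lemma~\ref{lem:overlap} so the constructions do not interfere.
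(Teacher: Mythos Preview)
Your proposal is correct and follows essentially the same approach as the paper. The paper also reduces the problem via Lemma~\ref{lem:overlap} to the non-overlapping situation handled by Theorem~\ref{thm:non-overlap}; the only organisational difference is that the paper re-runs the induction of Theorem~\ref{thm:non-overlap} and inserts Lemma~\ref{lem:overlap} as an additional case within the step, whereas you strip all overlap wires globally first, invoke Theorem~\ref{thm:non-overlap} as a black box on $H_0$, and then reinsert the wires one by one.
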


\begin{proof}
The proof is the same as for Theorem~\ref{thm:non-overlap}, with the addition
of an extra case. We have to consider the case when two !-boxes $b_1$
and $b_2$ overlap on the interior of several closed wires:
  \cstikz{thm_overlap.tikz}
In this case, the wire-vertices in the interior of the wire are in both !-boxes
while the endpoints of the wire are node-vertices which are not in the overlap.
Then, several applications of Lemma~\ref{lem:overlap} (one for each wire in the
overlap) can be used to reduce the problem to showing that $H - W$ can be
handled, where $W$ is the set of all wire-vertices in the overlap of $b_1$
and $b_2$. Because $b_1$ and $b_2$ overlap trivially, then in $H-W$ $b_1$
and $b_2$ do not overlap at all and the proof may be finished using the same
arguments as in Theorem~\ref{thm:non-overlap}.

Note, that the proof of Theorem~\ref{thm:non-overlap} (and the accompanying
lemmas) carries through here in the same way as it only manipulates the grammar
$G$. All of the encoding edges are established by Lemma~\ref{lem:overlap} and
only relate to the case of trivial overlap.
\end{proof}

\begin{remark}
This theorem and Lemma~\ref{lem:overlap} were proved in \cite{gam} using
a notion of equality called "wire-encoding". This notion is now superseded
by the encoding/decoding features of encoded B-edNCE grammars and we have
restated both propositions in these terms.
\end{remark}

Combining this theorem with the other results so far, we get the following
relationship between context-free graph grammars, !-graphs and
encoded B-edNCE grammars, where we consider string graph languages only:
\cstikz{inclusion-figure-encoded.tikz}
All regions of the diagram are populated, except for possibly the region
between the dashed line and $\textbf{BG}$. While we have not proven this,
we conjecture that this region is also populated.

\section{B-ESG grammars and their languages}\label{sec:b-esg-languages}

Our goal in this thesis is to model string diagrammatic reasoning. We also wish
to implement machine support for the reasoning process, so we have to identify
decidable conditions on our grammars which imply that they generate string
graphs only. In the previous section we saw that encoded B-edNCE grammars have
satisfactory expressive power.  In this section, we will identify sufficient
conditions on the productions of an encoded B-edNCE grammar which ensure that
it can only generate string graphs. The grammars which satisfy these conditions
are called \emph{B-ESG} grammars.

We begin by introducing a few auxiliary definitions that will help us to
introduce our first major notion, called \emph{wire-consistency}. We will
first define wire-consistency in terms of the structure of a grammar and
show that it is a decidable property. After that, we will provide a
dynamic characterisation -- we will show how the sentential forms of
wire-consistent grammars behave.

\begin{definition}
Given a graph $H$ with a vertex $x \in H$, we say that $x$ is incident to a
$(\sigma, \beta, d)$ edge, if there exists a vertex $y \in H$ with label
$\sigma$
and an edge $(y, \beta, x)$ if $d=in$ or an edge $(x, \beta, y)$ if $d=out$.
\end{definition}

\begin{definition}[Context-cardinality]
Given a production $p$ of a B-edNCE grammar $G$, where $p$ contains a
vertex $x$, we say that its $(\sigma,\beta,d)$
\textit{context-cardinality} is $n$ if the number of $(\sigma,\beta,d)$-edges
incident to $x$ plus the number of connection instructions of the form
$(\sigma,\alpha, \beta, x, d)$ is equal to $n$.
\end{definition}

\begin{definition}[Context-passing]
Given a B-edNCE grammar $G$, terminal vertex label $\sigma \in \Delta$,
edge labels $\alpha, \beta \in \Gamma$ and direction $d \in \{in,\ out\}$, we
define a
binary relation $\mathcal P_{\sigma}^d(\alpha,\beta)$ between nonterminal
vertices $x,y$ in productions $p_1, p_2$ of $G$ respectively. We will refer to
$\mathcal P$ as the \emph{single-step context-passing} relation. We say
$x\ \mathcal P_{\sigma}^d(\alpha, \beta)\ y$, if:
\begin{itemize}
  \item $x$ has context-cardinality $(\sigma,\alpha,d)$ at least one
  \item $p_2$ has the same production label as that of $x$
  \item there exists a connection instruction
        $(\sigma, \alpha, \beta, y,d)$ in $p_2$
\end{itemize}
We define the \emph{multi-step context-passing} relation to be
$\mathcal Q_{\sigma}^d(\alpha,\alpha')$ again between a pair of nonterminal
vertices. We will say
$x\ \mathcal Q_{\sigma}^d(\alpha,\alpha')\ x'$ if there exist
nonterminal vertices $x_1,\ldots,x_n$ and edge labels
$\alpha_1,\ldots,\alpha_n$, such that:
\begin{align*}
x \ \mathcal P_{\sigma}^d(\alpha, \alpha_1)\ x_1\ &\land \\
x_1 \ \mathcal P_{\sigma}^d(\alpha_1, \alpha_2)\ x_2\ &\land \\
\cdots & \\
x_{n-1} \ \mathcal P_{\sigma}^d(\alpha_{n-1}, \alpha_n)\ x_n\ &\land \\
x_n \ \mathcal P_{\sigma}^d(\alpha_n, \alpha')\ x'\ & \\
\end{align*}
\end{definition}

\begin{remark}
To be more precise, both $\mathcal P$ and $\mathcal Q$ are
actually a \emph{family} of binary relations, which are parametrised over
$\sigma \in \Delta, \alpha \in \Gamma, \alpha' \in \Gamma, d \in \{in, out\}$.
Instead of defining them as relations over a 6-tuple of sets, we opt for this
notation as it is more convenient for showing the next proposition.
\end{remark}

\begin{example}\label{ex:shit-relation}
Consider the following grammar:
\cstikz{context-pass-grammar.tikz}
where $x, y, z$ are the nonterminal vertices with labels $X,Y,Z$ respectively.
Then,
$x\ \mathcal P_{\sigma}^{in}(\alpha, \beta)\ y$ and
$y\ \mathcal P_{\sigma}^{in}(\beta, \gamma)\ z.$ This also means 
$x\ \mathcal Q_{\sigma}^{in}(\alpha,\gamma)\ z.$
\end{example}

\begin{proposition}
For a given B-edNCE grammar $G$, the multi-step context-passing relation
$\mathcal Q$ is computable.
\end{proposition}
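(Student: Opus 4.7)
The plan is to reduce the computation of $\mathcal{Q}$ to reachability in a finite directed graph built from the grammar's data. Since $G$ has only finitely many productions, the set of nonterminal vertices appearing in right-hand sides is finite, and the alphabets $\Sigma$ and $\Gamma$ are finite by definition, so all the ingredients entering $\mathcal{P}$ range over finite sets.

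First I would argue that the single-step relation $\mathcal{P}_{\sigma}^d(\alpha,\beta)$ is itself computable. For a given pair $(x,y)$ of nonterminal vertices, checking whether $x$ has $(\sigma,\alpha,d)$ context-cardinality at least one amounts to inspecting the finitely many edges and connection instructions in the production containing $x$; checking that the production containing $y$ has the correct label and contains the connection instruction $(\sigma,\alpha,\beta,y,d)$ is again a finite look-up. So $\mathcal{P}_{\sigma}^d(\alpha,\beta)$ can be tabulated exhaustively.

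Next, fix $\sigma \in \Delta$ and $d \in \{in,out\}$ and build an auxiliary finite directed graph $\mathcal{G}_{\sigma}^d$ whose vertices are the pairs $(x,\alpha)$ with $x$ a nonterminal vertex occurring in some production of $G$ and $\alpha \in \Gamma$, and whose edges are $(x,\alpha) \to (y,\beta)$ exactly when $x\ \mathcal{P}_{\sigma}^d(\alpha,\beta)\ y$. Unfolding the definition of $\mathcal{Q}$, a chain witnessing $x\ \mathcal{Q}_{\sigma}^d(\alpha,\alpha')\ x'$ corresponds precisely to a directed path from $(x,\alpha)$ to $(x',\alpha')$ in $\mathcal{G}_{\sigma}^d$ of length at least one (with the intermediate vertices $x_1,\ldots,x_n$ and labels $\alpha_1,\ldots,\alpha_n$ being read off the path). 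So $x\ \mathcal{Q}_{\sigma}^d(\alpha,\alpha')\ x'$ holds iff $(x',\alpha')$ is reachable from $(x,\alpha)$ in $\mathcal{G}_{\sigma}^d$, and reachability in a finite directed graph is decidable by standard search algorithms. Iterating over the finitely many choices of $\sigma$ and $d$ then tabulates the whole family $\mathcal{Q}$.

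There is no serious obstacle here; the entire argument is a bookkeeping reduction to finite-graph reachability, and the only care required is in checking that the unfolded chain of $\mathcal{P}$-steps in the definition of $\mathcal{Q}$ really does coincide with a path in $\mathcal{G}_{\sigma}^d$, which is immediate once one notes that consecutive $\mathcal{P}$-steps share the intermediate edge label. I would also briefly remark on complexity — the number of nodes of $\mathcal{G}_{\sigma}^d$ is linear in $|G|\cdot|\Gamma|$, so $\mathcal{Q}$ can in fact be computed in polynomial time — since this will be relevant to the later decidability results in Section~\ref{sec:decide}.
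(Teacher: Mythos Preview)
Your proof is correct and is essentially the same approach as the paper's: both compute $\mathcal{Q}$ as the transitive closure of $\mathcal{P}$ over a finite state space. The paper carries this out via an explicit Kleene-style fixed-point iteration ($Q_0 = \mathcal{P}$, then $Q_{n+1}$ extends $Q_n$ by one $\mathcal{P}$-step, terminating when stable), while you package the same computation as reachability in an explicitly constructed finite directed graph on pairs $(x,\alpha)$; these are two standard presentations of the same algorithm, and your additional complexity remark is a welcome bonus.
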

\begin{proof}
Fix a direction $d$ and vertex label $\sigma$. Define a new relation $Q_0$,
such that for every $\alpha, \beta \in \Gamma$, we have:
\[x\ Q_0(\alpha, \beta) \ y \text{ iff } x\ \mathcal P_{\sigma}^d(\alpha,
\beta)\ y\]
In other words, $Q_0(\alpha, \beta) = \mathcal P_{\sigma}^d(\alpha,\beta)$.
$\mathcal P$ is obviously computable and therefore so is $Q_0$.
Next, having already computed $Q_{n}$, we define:
\[
x\ Q_{n+1}(\alpha, \beta)\ y \text{ iff }
\left( x\ Q_{n}(\alpha, \beta)\ y \right) \lor \left(
\exists z, \gamma.\ x\ Q_n(\alpha,
\gamma)\ z\ \land z\ \mathcal P_{\sigma}^d(\gamma, \beta)\ y
\right)
\]
The computation stops when we find a $k$ such that $Q_{k+1} = Q_k$. Note,
that this procedure is guaranteed to terminate, as we have finitely many
nonterminal vertices, exactly two possible directions and finitely many
labels and each step of the procedure increases the size of the relation,
except for the last one.

We claim that upon termination, $Q_k(\alpha, \beta) = \mathcal Q_{\sigma}^d
(\alpha, \beta).$ In one direction, this is easy to see: if
$x\ Q_k(\alpha, \beta)\ y$, then obviously also $x\ \mathcal
Q_{\sigma}^d(\alpha,\beta)\ y$ holds by construction of $Q_i$.

In the other direction, assume  $x\ \mathcal Q_{\sigma}^d(\alpha,\beta)\ y$.
Thus, there exist
nonterminal vertices $x_1,\ldots,x_n$ and edge labels
$\alpha_1,\ldots,\alpha_n$, such that:
\[
x \ \mathcal P_{\sigma}^d(\alpha, \alpha_1)\ x_1\ \land 
x_1 \ \mathcal P_{\sigma}^d(\alpha_1, \alpha_2)\ x_2\ \land 
\cdots  \land
x_{n-1} \ \mathcal P_{\sigma}^d(\alpha_{n-1}, \alpha_n)\ x_n\ \land
x_n \ \mathcal P_{\sigma}^d(\alpha_n, \beta)\ y
\]
If $n\leq k$, then obviously also $x\ Q_k(\alpha, \beta)\ y$ holds by
construction. If $n>k$, observe that $x\ Q_k(\alpha, \alpha_{k+1})\ x_{k+1}$ by
the same arguments. Because $Q_k = Q_{k+1},$ we get 
$x\ Q_k(\alpha, \alpha_{k+2})\ x_{k+2}$. By iterating this argument, we
eventually get $x\ Q_k(\alpha, \beta)\ y$, as required.
\end{proof}

Now we can introduce wire-consistency. Note that the previous
proposition implies that wire-consistency for a B-edNCE grammar is a decidable
property.

\begin{definition}[Wire-consistent grammar]
\label{def:wire-consistent-crap}
We say a B-edNCE grammar $G$ is \textit{wire-consistent}, if for every
production $p$ of $G$, every $\sigma \in \Delta, \alpha \in \Gamma, \beta \in
\Gamma, d \in \{in, out\}$ and every nonterminal vertex $x$ of $p$ with context
cardinality $(\sigma, \alpha, d)$ at least two, the following holds: every
production with label in the set $\{\lambda(y)\ |\ x\ \mathcal
Q^d_{\sigma}(\alpha,\beta)\ y\}$ cannot have a connection instruction
$(\sigma, \beta, \gamma, z, d)$ where $z$ is a wire-vertex and $\gamma
\in \Gamma$.
\end{definition}

This definition provides a static description of wire-consistency. Its
purpose is to show that this notion is decidable, however this
definition is rather cumbersome to work with. What we will find more useful
is the dynamic description of wire-consistency which is presented in
Lemma~\ref{lem:wire-cons}. Simply put, a grammar is wire-consistent if
each connection instruction attached to a wire-vertex can establish at
most one bridge in any derivation. This is clearly a necessary property,
because string graphs cannot have wire-vertices with in-degree (out-degree)
more than one.

\begin{example}
All of the grammars presented in this thesis so far are wire-consistent,
except for the grammar $G$ from Example~\ref{ex:shit-relation}.
$G$ clearly violates the (static) definition of wire-consistency, so
let's see how its only possible concrete derivation behaves:
\cstikz{wire-cons-derive.tikz}
So, we see that the wire-vertex established at the end has in-degree two,
despite the fact that in production $p_4$ the wire-vertex has a single
connection instruction associated to it.
\end{example}

The next lemma is crucial, because it shows that the scenario from the
previous example is impossible -- wire-consistent grammars ensure that
connection instructions associated to a wire-vertex cannot create more
than one bridge to the same wire-vertex, in any derivation.

\begin{lemma}\label{lem:wire-cons}
Given a wire-consistent grammar $G$ and a sentential form $H = (D_1,C_1)$, then
applying any production $p = X \to (D_2,C_2)$ to $H$ has the following effect:
for any $\sigma \in \Delta,\beta \in \Gamma, d\in \{in, out\}$ and any
wire-vertex $w \in D_2$ the substitution process will establish at most one
$(\sigma, \beta, d)$-bridge between $w$ and $D_1$ using any connection
instruction $(\sigma, \alpha, \beta, w, d) \in C_2$, where $\alpha \in \Gamma$.
\end{lemma}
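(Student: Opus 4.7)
The plan is to prove a stronger dynamic form of wire-consistency by induction on the length of the derivation producing $H$, and then deduce the lemma as a corollary. Specifically, I would prove: for every sentential form $H$, every nonterminal $v \in H$, and every triple $(\sigma, \alpha, d)$ for which the $(\sigma, \alpha, d)$-context-cardinality of $v$ in $H$ is at least two, there exist a production $p_0 \in G$, a nonterminal vertex $x_0 \in p_0$, and an edge label $\alpha_0$ such that $x_0$ has static $(\sigma, \alpha_0, d)$-context-cardinality at least two in $p_0$, together with a (possibly empty) $\mathcal{P}^d_\sigma$-chain connecting $(x_0, \alpha_0)$ to $(y, \alpha)$ for some nonterminal $y$ with $\lambda(y) = \lambda(v)$. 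The base case, where $H = sn(S, z)$, is vacuous: the unique nonterminal has no incident edges and no connection instructions targeting it, so the hypothesis cannot be met.

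For the inductive step, assume $H$ is obtained from $H_0$ by applying $p_0 = Y \to (D_0, C_0)$ to some nonterminal $u_0 \in H_0$, and consider any $v \in H$ with $(\sigma, \alpha, d)$-context-cardinality at least two. If $v \in V_{H_0} \setminus \{u_0\}$, the B-edNCE property that two nonterminal vertices are never adjacent ensures that no new edges touch $v$ and no connection instructions targeting $v$ are altered by the substitution, so the inductive hypothesis applies directly. If $v \in V_{D_0}$, identify $v$ with the nonterminal vertex $x_0$ in $p_0$ and decompose the dynamic context-cardinality of $v$ in $H$ as the sum of three contributions: (i) $(\sigma, \alpha, d)$-edges in $D_0$ incident to $x_0$; (ii) bridges produced from connection instructions $(\sigma, \alpha', \alpha, x_0, d) \in C_0$ matched against $(\sigma, \alpha', d)$-edges at $u_0$ in $H_0$; and (iii) composed connection instructions targeting $v$ produced from pairs of CIs in $C_{H_0}$ and $C_0$. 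If (i) together with the number of CIs of the form $(\sigma, *, \alpha, x_0, d) \in C_0$ already sums to at least two, then $x_0$ has the required static $(\sigma, \alpha, d)$-context-cardinality in $p_0$ and the chain is empty. Otherwise, the excess dynamic count must arise from a single connection instruction $(\sigma, \alpha', \alpha, x_0, d) \in C_0$ combined with $u_0$ itself having $(\sigma, \alpha', d)$-context-cardinality at least two in $H_0$; invoking the inductive hypothesis on $u_0$ provides a suitable $p'_0, x'_0, \alpha'_0$, and I extend the $\mathcal{Q}$-chain by one step of $\mathcal{P}^d_\sigma(\alpha', \alpha)$ witnessed by the production $p_0$ and the CI just mentioned (note that the endpoint of the chain from the inductive hypothesis lies in a production whose left-hand side is $\lambda(u_0) = Y = lhs(p_0)$, which is exactly what $\mathcal{P}$ requires).

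The original lemma now follows by contradiction. Suppose the connection instruction $(\sigma, \alpha, \beta, w, d) \in C_2$ creates two or more bridges; then $v$ has at least two $(\sigma, \alpha, d)$-incident edges in $H$, so its $(\sigma, \alpha, d)$-context-cardinality is at least two. The strengthened claim yields a production $p_0$, a nonterminal $x_0$ with static $(\sigma, \alpha_0, d)$-context-cardinality at least two, and a $\mathcal{Q}^d_\sigma$-chain from $(x_0, \alpha_0)$ to some $(y, \alpha)$ with $\lambda(y) = \lambda(v)$. Wire-consistency applied to $p_0, x_0, \alpha_0, \alpha$ then forbids any production with label $\lambda(y) = \lambda(v)$---in particular $p = X \to (D_2, C_2)$---from containing a wire-vertex CI $(\sigma, \alpha, *, w, d)$, contradicting our hypothesis. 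Hence the connection instruction creates at most one bridge.

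The main obstacle I expect is the careful counting in Case 2 of the inductive step: composed connection instructions are set-theoretic and distinct pairs $((\sigma, \beta, \gamma, u_0, d), (\sigma, \gamma, \alpha, x_0, d))$ may collapse into the same composed CI, so the inequality bounding $v$'s dynamic context by $(\text{edges in } D_0) + \sum_{\alpha'}[(\sigma, \alpha', \alpha, x_0, d) \in C_0]\cdot n(u_0, H_0, \sigma, \alpha', d)$ must be handled delicately to correctly identify which CI in $C_0$ is responsible for the excess and thus to ensure the inductive hypothesis can be applied to $u_0$ with the appropriate edge label. A secondary subtlety is the treatment of the degenerate empty chain in the strengthened claim, which needs the wire-consistency condition to apply when $\lambda(x_0) = \lambda(v)$ directly rather than via a nontrivial $\mathcal{Q}$-chain.
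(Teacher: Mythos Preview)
Your approach is essentially the paper's argument reorganized as a forward induction: the paper traces backward along the predecessor chain in the derivation tree, showing at each step that either the two offending neighbours $v,v'$ of $x$ were both created by the current production (giving static context-cardinality $\geq 2$ there) or they were already present one step earlier together with a compatible connection instruction, and then iterates; you package the same descent into an explicit invariant carried forward. On your first flagged obstacle, note that every sentential form of $G$ has empty connection relation (starting from $sn(S,z)$ with $C=\emptyset$, substitution never produces composed CIs), so contribution~(iii) is identically zero and the delicate counting you anticipate does not arise---the case split collapses to exactly the dichotomy the paper uses. Your second flagged subtlety, the empty-chain case, is genuine and is handled just as tersely in the paper's own proof (``contradicts with $G$ being wire-consistent'' with no chain exhibited); the intended reading is that the forbidden label set includes $\lambda(x_0)$ itself.
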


\begin{proof}
Consider the graph $H'$ which is the result of applying $p = X \to (D_2,C_2)$
to $H$, that is $H \derive_{x,p} H'$. Assume for contradiction that the
wire-vertex $w \in H' - H$ is connected via $(\sigma, \beta, d)$-edges to two
vertices $v, v' \in H - \{x\} \subseteq H'$ and these bridges are established
by a single connection instruction $(\sigma, \alpha,\beta, w, d) \in C_2.$
Then, both $v$ and $v'$ are adjacent to the nonterminal vertex $x$ which is
being replaced in $H$. Moreover, there must exist edges $(v, \alpha, x)$ and
$(v', \alpha, x)$ if $d=in$ or edges $(x, \alpha, v)$ and $(x, \alpha, v')$
otherwise.

$H'$ is a sentential form, then it must be the yield of a derivation tree $T$.
Thus, $H' = \emph{yield}(T).$ Because $G$ is a B-edNCE grammar, observe that
only the productions in $T$ which are a predecessor of $p$ in $T$ may affect
the neighbourhood of $w$. The other productions replace nonterminal vertices
which cannot be connected to $w$. So, without loss of generality, we can
assume:
\[S \derive_{x_0,p_0} H_1 \Longrightarrow_{x_1, p_1} H_2 \Longrightarrow_{x_2,
p_2} \cdots \Longrightarrow_{x_n, p_n} H_n = H \derive_{x,p} H'\]
where all the listed productions are (not necessarily direct) predecessors of
$p$ in $T$ (see Figure~\ref{fig:predecessors}.
\begin{figure}[h]
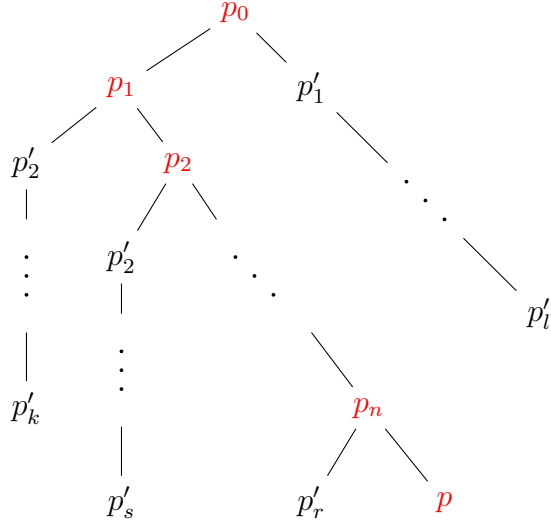

\cstikz{predecessors.tikz}
\caption{Derivation tree for $H'$ and predecessors of $p$ (coloured in red).}
\label{fig:predecessors}
\end{figure}

If both $v, v' \not\in H_{n-1}$ then they are both in the body of production
$p_n$. Hence, the nonterminal node $x$ in $p_n$ must have context-cardinality
$(\sigma, \alpha, d)$ equal to 2 which then contradicts with $G$ being
wire-consistent.

If $v' \in H_{n-1}$ but $v \not\in H_{n-1}$, then $v$ is a vertex in the
body of $p_n$ connected to the nonterminal $x$ (via edge labelled $\alpha$) and
$p_n$ must have a connection instruction of the form $(\sigma, \alpha',\alpha,
x, d)$. Again, the nonterminal $x$ in $p_n$ has context-cardinality $(\sigma,
\alpha, d)$ equal to 2 which contradicts with $G$ being wire-consistent.

Thus, both $v, v' \in H_{n-1}$ and the production $p_n$ must have a connection
instruction $(\sigma, \alpha',\alpha, x ,d)$, for some $\alpha' \in \Gamma$.

By iterating this argument, we conclude that the nonterminal vertices
$x_1,\ldots, x_n$ in productions $p_0, \ldots, p_{n-1}$ satisfy the
context-passing relation $x_i\ \mathcal Q_{\sigma}^d(\alpha,\beta)\ x$ and that
the two vertices $v$ and $v'$ must
be created by the initial production $p_0$ and must be connected to the
nonterminal
$x_1$ via a $(\sigma, \alpha'',d)$-edge. However,
$H_1$ is the body of production $p_0$. Thus, the
nonterminal $x_1$ in the body of $p_0$ has $(\sigma, \alpha'',d)$ context
cardinality equal to at least 2. At the same time we know 
$x_1\ \mathcal Q_{\sigma}^d(\alpha,\beta)\ x$
for productions $p_0$ and $p_n$ and then we get a contradiction with the fact
that $G$ is wire-consistent.
\end{proof}

In other words, during the derivation process, a wire-consistent grammar
can establish at most one bridge between a newly created wire-vertex and the
previously generated graph for a single connection instruction associated
to that wire-vertex. We only need to introduce one additional notion before we
can classify the grammars we are interested in.

\begin{definition}[Production degree]
Given a B-edNCE grammar $G$ and a production $X \to (D,C)$ with vertex $v \in
D$, the \emph{connection instruction in-degree} of $v$ is the number of
connection instructions $(\sigma, \alpha, \beta, v, in)$, for some $\sigma \in
\Sigma, \alpha, \beta \in \Gamma$. Similarly, the \emph{connection instruction
out-degree} of $v$ is the number of connection instructions $(\sigma, \alpha,
\beta, v, out)$, for some $\sigma \in \Sigma, \alpha, \beta \in \Gamma$.  The
\emph{production in-degree} (\emph{production out-degree}) of $v$ is the sum of
the connection instruction in-degree (connection instruction out-degree) and
the in-degree (out-degree) of $v$.
\end{definition}

\begin{example}
The above definition is a simple generalisation of the standard notion
of in-degree (out-degree). For example, in the following grammar:
\cstikz{grammar-degree.tikz}
the $X$-labelled vertex has production in-degree two, the $Y$-labelled
vertex has production in-degree two, the $Z$-labelled vertex has production
in-degree one, the wire-vertex has production in-degree one and the two
$\sigma$-labelled vertices have production out-degrees one.
\end{example}

We can now provide the central definition of this chapter.

\begin{definition}[B-ESG grammar]\label{def:besg} \rm
A \textit{B-ESG} grammar is an encoded B-edNCE grammar $B = (G, T),$
where $G=(\Sigma,\Delta,\Gamma,\Gamma,P,S)$ is wire-consistent
and such that for every production $X \to (D,C) \in P$, the following
conditions are satisfied:
  \begin{enumerate}
    \item[N1:] Any edge connecting two node-vertices must carry an encoding
label.
    \item[N2:] Any connection instruction of the form
$(N, \alpha, \beta, v, d)$ where $N$ is a node-vertex label and $v$ is a
node-vertex, must have $\beta \in \mathcal E$.
    \item[W1:] Every wire-vertex in $D$ has production in-degree at most one
and production out-degree at most one.
    \item[W2:] For $W$ a wire-vertex label and each $\gamma$ and $d$, there is
at most one connection instruction of the form $(W, \gamma,\delta, v, d)$,
where $\delta \in \Gamma$, $v \in V_D$.
  \end{enumerate}
\end{definition}

The conditions N1 and N2 guarantee that node-vertices never become directly
connected by an edge, unless that edge has an encoding label. Conditions W1 and
W2 together with wire-consistency ensure that wires never split, i.e.
wire-vertices always have in-degree (out-degree) at most one.

Before we present the main result of this section, we introduce a helpful
definition which describes the sentential forms of B-ESG grammars.

\begin{definition}[ESG-form \cite{icgt}]
Given a B-edNCE grammar $G$, we call a sentential form an \textit{ESG-form}
(Encoded String Graph form) if it is an encoded string graph, which possibly
has some additional nonterminals that are either connected to node-vertices or
are connected to wire-vertices in such a way that all wire-vertices have
in-degree (out-degree) at most one.
\end{definition}

\begin{example}
Consider the following two graphs:
\cstikz{esg-form-example.tikz}
The graph on the left is an ESG-form, but the graph on the right is not, as
it contains a wire-vertex with in-degree two.
\end{example}

We can now prove the main theorem of this section which shows that
the language of any B-ESG
grammar consists of string graphs only.

\begin{theorem}\label{thm:besg_language}
  Every graph in the language of a B-ESG grammar is a string graph.
\end{theorem}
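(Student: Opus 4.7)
The plan is to proceed by induction on the length of a concrete derivation in the underlying B-edNCE grammar $G$ to show that every sentential form is an ESG-form. Once this is established, a terminal sentential form (containing no nonterminals) is automatically an encoded string graph, and then Lemma~\ref{lem:decoding} immediately gives that the fully decoded graph produced by $B = (G,T)$ is a string graph.

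For the base case, the starting graph $sin(S,z)$ is a single nonterminal vertex with no edges, which is trivially an ESG-form. For the inductive step, I would assume $H$ is an ESG-form and show that if $H \derive_{v,p} H'$ for a production $p = X \to (D,C)$ of $G$, then $H'$ is again an ESG-form. The argument splits into two invariants: (a) no two terminal node-vertices are connected by a non-encoding edge in $H'$; and (b) every wire-vertex in $H'$ has in-degree and out-degree at most one. For (a), edges of $H'$ come from three sources: edges of $H$ not incident to $v$ (preserved by induction), edges of $D$ (handled by condition N1, which forbids non-encoding edges between node-vertices in a production body), and bridges created by the substitution process. A bridge between a terminal node-vertex $w \in H$ and a node-vertex $u \in D$ is produced by a connection instruction of the form $(\lambda_H(w), \beta, \gamma, u, d) \in C$ with $\lambda_H(w) \in \mathcal N$ and $u$ a node-vertex, and by N2 we have $\gamma \in \mathcal E$, so the new edge carries an encoding label.

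For (b), I would split again by the location of the wire-vertex. New wire-vertices come from $D$: by condition W1 their production in-degree and out-degree are each at most one, meaning the sum of their edge-degree inside $D$ and the number of connection instructions attached to them is at most one in each direction. Then Lemma~\ref{lem:wire-cons} (which is the dynamic characterisation of wire-consistency) guarantees that each such connection instruction establishes at most one bridge during the substitution, so each new wire-vertex still has in-degree and out-degree at most one in $H'$. Old wire-vertices $w \in H - \{v\}$ retain their existing incident edges; any new incident edge of $w$ in $H'$ must be a bridge, created by a connection instruction of the form $(\lambda_H(w), \beta, \delta, u, d) \in C$ where $\lambda_H(w)$ is a wire-vertex label. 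By condition W2, for each fixed $(\beta, d)$ there is at most one such connection instruction, so $w$ gains at most one new in-edge and at most one new out-edge; combined with the inductive hypothesis on $H$ and the fact that the edge $w$ previously had to $v$ is removed when $v$ is substituted, the in-degree and out-degree of $w$ in $H'$ remain bounded by one.

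Taking a concrete derivation to completion produces a terminal ESG-form, which by the definition of ESG-form is an encoded string graph. Applying the decoding system $T$ and invoking Lemma~\ref{lem:decoding} then yields a string graph, establishing the claim. The main obstacle is the careful bookkeeping in step (b): one must simultaneously account for edges already present in the mother graph, edges internal to the daughter graph, and bridges produced by connection instructions, and show that none of the four conditions N1, N2, W1, W2 together with wire-consistency can be circumvented by any particular configuration of the substitution. In particular, the role of Lemma~\ref{lem:wire-cons} is essential: without it, a single connection instruction attached to a daughter wire-vertex could in principle fire against several parallel edges entering $v$ in the mother graph, so the grammar-level notion of wire-consistency is precisely what converts the static counting conditions W1 and W2 into the desired dynamic degree bound.
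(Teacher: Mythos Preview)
Your proposal is correct and follows essentially the same approach as the paper: induction on the derivation length to show every sentential form is an ESG-form, splitting the inductive step into the node-vertex invariant (handled by N1 for internal edges and N2 for bridges) and the wire-vertex degree invariant (handled by W1 together with Lemma~\ref{lem:wire-cons} for daughter wire-vertices, and W2 for mother wire-vertices), then concluding via Lemma~\ref{lem:decoding}. Your case analysis for old wire-vertices is in fact slightly more explicit than the paper's, which simply asserts that W2 ``prevents increasing the in-degree or out-degree of previously established wire-vertices'' without spelling out that the removed edge to the nonterminal is what makes room for the single possible new bridge.
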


\begin{proof}
Let $B=(G,T)$ be a B-ESG grammar and let's consider an arbitrary concrete
derivation $S
= H_1 \Longrightarrow_{x_1, p_1}^G H_2 \Longrightarrow_{x_2, p_2}^G \cdots
\Longrightarrow_{x_n, p_n}^G H_n \Longrightarrow_*^T F$. Using
Lemma~\ref{lem:decoding} we can reduce the problem to showing that $H_n$ is an
encoded string graph. We will prove, using
induction, that every graph $H_i$ is an ESG-form.
From this, the theorem follows immediately, because $H_n$ is an ESG-form with
no nonterminal vertices which means it is simply an encoded string graph.

In the base case, $H_1$ contains only the initial nonterminal $S$ and it is
obviously an ESG-form. Assuming that $H_k$ is an ESG-form, let's consider
$H_{k+1}$ which is the result of applying production $p_k$ at vertex $x_k$ in
$H_k$.

Condition N2 guarantees that any newly created node-vertices in $H_{k+1}$ will
be adjacent to the previously established node-vertices in $H_k$ only via
encoding bridges. In addition to this, condition N1 ensures that the newly
created node-vertices in $H_{k+1}$ can be adjacent to other newly created
node-vertices only if the connecting edges have encoding labels. Combining this
with the fact that $H_k$ is an ESG-form means that all node-vertices in
$H_{k+1}$
can only be adjacent to other node-vertices via encoding edges, as required.

Next, we will show how the dynamic behaviour on newly established wire-vertices
is influenced by condition W1 and the wire-consistency of the grammar $G$. For
a newly created wire-vertex $w$ in $H_{k+1}$, we consider two cases and examine
its in-degree. If $w$ is incident to an in-edge with another newly created
vertex in $H_{k+1}$, then condition W1 ensures that this is the only in-edge of
$w$ and that it is not connected to any previously established vertices in
$H_k.$ If $w$ has at least one in-edge with some previously created vertices in
$H_k,$ then condition W1 guarantees that all of the in-edges of $w$ have been
established by consuming a single connection instruction.
Then, using
Lemma~\ref{lem:wire-cons},
we conclude that the in-degree of $w$ is exactly one. Using the same arguments
for out-edges we conclude that any newly created wire-vertex in $H_{k+1}$ can
have in-degree at most one and out-degree at most one. Note, that it is
possible for newly created wire-vertices to have in-degree (out-degree) zero
when their production in-degree (out-degree) is zero.

Condition W2 prevents increasing the in-degree or out-degree of previously
established wire-vertices in $H_k$. Combining this with the previous results
and the induction hypothesis, we get that all wire-vertices in $H_{k+1}$ have
in-degree or out-degree at most one. All of
this together means that $H_{k+1}$ is an ESG-form.
\end{proof}

\section{Expressivity of B-ESG grammars}\label{sec:expres-shit}

In the previous section we introduced B-ESG grammars which are simply
encoded B-edNCE grammars that satisfy some additional conditions. These
conditions ensure that B-ESG grammars generate only string graphs. It is not
immediately obvious whether these conditions are restrictive in the sense that
B-ESG grammars would not be able to simulate some encoded B-edNCE grammars.
In this section we will show that this is not the case -- B-ESG grammars have
the same expressive power as encoded B-edNCE grammars on string graphs.

\begin{theorem}\label{thm:necessary}
Given an encoded B-edNCE grammar $B=(G,T),$ such that $L(B)$ is a language
consisting of string graphs, then there exists a B-ESG grammar $B'=(G',T)$,
such that $L(B') = L(B).$ Moreover, $G'$
can be constructed effectively from $G$ and $L(G') = L(G)$.
\end{theorem}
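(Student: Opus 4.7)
The plan is to put $G$ into combined normal form via Theorem~\ref{thm:normal-form}, obtaining a B-edNCE grammar $G_1$ with $L(G_1) = L(G)$, and then show that $G_1$ already satisfies every B-ESG condition. Setting $G' := G_1$ and $B' := (G',T)$, the equality $L(B') = L(B)$ is immediate because the decoding system is unchanged and $L(G') = L(G)$. The real content is therefore verifying N1, N2, W1, W2, and wire-consistency for $G_1$.

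By Lemma~\ref{lem:decoding}, the hypothesis that $L(B)$ consists of string graphs implies that $L(G_1)$ consists of encoded string graphs. The CNF properties guarantee that $G_1$ is reduced, context consistent, neighbourhood preserving, and contains no useless connection instructions. I would verify each B-ESG condition by contraposition: assuming a violation in some production $p$, I construct a concrete derivation whose terminal graph fails to be an encoded string graph, contradicting the hypothesis.

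For N1, any non-encoding edge between two terminal node-vertices in the body of $p$ is preserved by all subsequent substitutions and therefore appears unchanged in the final graph, immediately violating the definition of encoded string graph. For N2, reducedness together with absence of useless connection instructions ensures that every connection instruction of the form $(N,\alpha,\beta,v,d)$ with $N \in \mathcal{N}$ and $v$ a node-vertex is genuinely used in some derivation, yielding an edge labelled $\beta$ between two node-vertices in a sentential form; this edge persists to the terminal graph, so we must have $\beta \in \mathcal{E}$. For W1 and W2, context consistency pins down the parent nonterminal's neighbourhood across all sentential forms, so any non-useless connection instruction in $p$ incident to a wire-vertex fires in every concrete derivation through $p$; combined with the edges already present in $p$, any violation of W1 forces some wire-vertex to acquire in-degree or out-degree at least two, and any violation of W2 forces an external wire-vertex neighbour of the nonterminal to do the same.

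The main obstacle is wire-consistency, which I would handle by the contrapositive of Lemma~\ref{lem:wire-cons}. Given a violating chain $x\,\mathcal{Q}^d_\sigma(\alpha,\beta)\,y$ that terminates in a connection instruction $(\sigma,\beta,\gamma,z,d)$ with $z$ a wire-vertex, use reducedness to find a concrete derivation reaching a sentential form in which the chain is instantiated, and then thread the two $(\sigma,\alpha,d)$-connections incident to $x$ through each intermediate production of the chain (this is possible because every production in the chain is used in some derivation and, by context consistency, accepts exactly the relabelled context elements produced by the previous step). The terminating production is applied twice along these two threads, producing two distinct $(\sigma,\beta,d)$-bridges incident to copies of $z$ and hence a wire-vertex of in-degree (or out-degree) at least two. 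The delicate part of this step is ensuring that both connections really do persist as distinct through the whole chain rather than collapsing into one, which is exactly what the context-cardinality hypothesis at $x$ is designed to secure, and this can be proved by induction along the length of the chain mirroring the proof of Lemma~\ref{lem:wire-cons}.
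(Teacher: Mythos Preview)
Your approach is essentially the paper's: pass to combined normal form, observe via Lemma~\ref{lem:decoding} that every terminal sentential form of $G'$ is an encoded string graph, and then refute each potential violation of N1, N2, W1, W2, and wire-consistency using reducedness, context consistency, neighbourhood preservation, and absence of useless connection instructions. The paper organises this slightly differently by first proving the intermediate claim that \emph{every} sentential form of $G'$ is an ESG-form (using neighbourhood preservation to push degree violations forward), which streamlines the individual cases, but your direct-to-terminal argument is equivalent.

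One point to correct in your wire-consistency sketch: the terminating production is applied \emph{once}, not twice. The chain $x\,\mathcal{Q}^d_\sigma(\alpha,\beta)\,y$ is a single sequence of nonterminals, and the two $(\sigma,\alpha,d)$-edges incident to $x$ are carried forward by the \emph{same} connection instruction at each step, so that the single nonterminal $y$ ends up with two $(\sigma,\beta,d)$-edges in the sentential form. Applying the offending production at $y$ once then causes its single connection instruction $(\sigma,\beta,\gamma,z,d)$ to establish two bridges to the \emph{same} wire-vertex $z$, giving $z$ in-degree (or out-degree) at least two. Your phrase ``applied twice along these two threads, producing two distinct bridges incident to copies of $z$'' would, taken literally, yield two separate copies of $z$ each with a single bridge, which is not a violation; your concluding sentence shows you have the right target, so this is a wording slip rather than a conceptual gap, but it should be fixed.
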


\begin{proof}
The grammar $G'$ can be constructed effectively from $G$ by using
Theorem~\ref{thm:normal-form}. That is, we consider an equivalent grammar in
CNF form. We
have to show that $G'$ satisfies Definition~\ref{def:besg}.

Since $L(G') = L(G)$, then using the fact that
$G'$ never establishes blocking edges, we conclude that all terminal sentential
forms of $G'$ are encoded string graphs (again, by using Lemma~\ref{lem:decoding}).

Next, we can show that all sentential forms of $G'$ are ESG-forms. We can show
this by contradiction -- if there exists an ESG-form $H$ which contains a
wire-vertex with in-degree (out-degree) more than one, then for any concrete
derivation $H \derive_{*} H'$, $H'$ will also contain a wire-vertex with
in-degree (out-degree) more than one, because $G'$ is neighbourhood-preserving
(the degree of vertices cannot decrease during a derivation). However, this
contradicts with the fact that all terminal sentential forms of $G'$ are
encoded string graphs. The other case to consider is when $H$ contains a
nonencoding edge between a pair of node-vertices. But then, for any concrete
derivation $H \derive_{*} H'$, $H'$ will also contain an edge with nonencoding
label connecting a pair of node-vertices, so we establish a contradiction
again. Thus, all sentential forms of $G'$ are ESG-forms.

Let's consider an arbitrary production $p=X \to (D,C)$ of $G'$. $G'$ is reduced,
therefore there exists a derivation $S \Longrightarrow_* H_1
\Longrightarrow_{x,p} H_2$.

If $p$ violates condition N1, then $p$ establishes two new node-vertices which
are connected via a non-encoding edge. Therefore, $H_2$ is not an ESG-form and
we get a contradiction.

Let's assume $p$ violates condition N2. Therefore, $p$ contains a connection
instruction $(N, \alpha, \beta, v, d),$ where $N \in \mathcal N$, $v$ is a
node-vertex in $rhs(p)$ and $\beta \not \in \mathcal E.$
$G'$ is context consistent
and contains no useless connection instructions, therefore
$(N, \alpha, d) \in cont_{H_1}(x).$ This means that applying production
$p$ to $H_1$ will establish a nonencoding edge between two node-vertices
and we get a contradiction with the fact that $H_2$ must be an ESG-form.

Assume $p$ violates condition W1. Without loss of generality, let's assume the
offending wire-vertex $w$ has connection in-degree $n$ and in-degree $m$ in $D$
with $m+n>1$. As in the previous case, by using context consistency and the
fact that $G'$ doesn't have useless connection instructions, we see that
applying $p$ to $H_1$ will create at least $n$ in-edges to the newly created
wire-vertex $w$ via the embedding process. However, $w$ also has an additional
$m$ in-edges in $D$ and therefore its in-degree in $H_2$ is at least $m+n>1$
which is a contradiction with the fact that $H_2$ is an ESG-form.

Assume $p$ violates condition W2. Therefore, we may assume, without loss of
generality, that there are two connection instructions $(W, \gamma, \delta_1,
x_1, in),\ (W, \gamma, \delta_2, x_2, in),$ where the $\delta_i$ and $x_i$ are
not necessarily distinct. Again, by making use of context consistency and
the fact that $G'$ has no useless connection instructions, we get that the
nonterminal $x$ in $H_1$ must have a $\gamma$-labelled in-edge adjacent to a
$W$-labelled wire-vertex $y$. Then, applying production $p$ to $H_1$ will make
the out-degree of the wire-vertex $y$ to be equal to two in $H_2$. This is
a contradiction with the fact that $H_2$ is an ESG-form.

Finally, let's assume that $G'$ is not wire-consistent. Without loss of
generality, there exists a production $p$, such that it
contains some nonterminal vertex $x$ which has context cardinality
$(\sigma,
\alpha, in)$ equal to two. Because $G'$ is
reduced, there must exist a derivation $S \Longrightarrow_* H_1
\Longrightarrow_{p} H_2$. By making use of the same arguments as in the
previous cases (no useless connection instructions, context consistency) we see
that the nonterminal $x$ in $H_2$ must have two $\alpha-$labelled in-edges
incident to two $\sigma$-labelled vertices. Let's call these two
$\sigma$-labelled
vertices $v$ and $u$. Since we have assumed that $G'$ is not wire-consistent,
there exists a production $p'$ which contains a nonterminal vertex $y$ which is
in the context-passing relation with $x$, that is $x\ \mathcal Q^{in}_{\sigma}
(\alpha, \beta)\ y,$ for some $\beta \in \Gamma$. Then, there must exist a
derivation:
\[S \Longrightarrow_* H_1 \Longrightarrow_{p} H_2
\Longrightarrow_* H_3 \Longrightarrow_{p'} H_4.\]
where the production $p'$ creates a nonterminal vertex $y$, which, thanks to
the context-passing assumption, will
have two $\beta$-labelled in-edges incident to the previously established
vertices $v$ and $u$ (which have the same label).  Finally, by assumption,
there must exist a production $p''$ with production label the same as that of
$y$ which
moreover must have a connection instruction $(\sigma, \beta,\gamma, z, in)$
for some $\gamma \in \Gamma$ and where $z$ is
a wire-vertex. Applying production $p''$ to $H_4$ would therefore result in a
graph where the wire-vertex $z$ has in-degree more than one (it would be
adjacent to both $u$ and $v$) which is a contradiction with the fact that all
sentential forms of $G'$ are ESG-forms.
\end{proof}

Therefore, B-ESG grammars have the same generating power as encoded
B-edNCE grammars (on string graphs). However, unlike (general) encoded B-edNCE
grammars, B-ESG grammars can only generate string graphs. Therefore, for our
purposes, this makes B-ESG grammars the obvious choice over encoded B-edNCE
grammars.  Combining Theorem~\ref{thm:necessary} with the results from the
previous section, we get the following relationship between B-ESG grammars,
!-graphs and context-free graph grammars:
\cstikz{inclusion-figure-bgto.tikz}

\section{Decidability properties of B-ESG grammars}\label{sec:decide}

When working with families of string diagrams, it is necessary to be able
to determine whether a given concrete diagram is an instance of a given
family of diagrams. In terms of string graphs and B-ESG grammars, this
means that we should be able to decide whether a given string (up to
wire-homeomorphism) is in the language of a B-ESG grammar. This is the
membership problem for B-ESG grammars, which we will show is decidable
in this section.

Another problem which is crucial is the ability to enumerate all the matches of
all instances of an equational schema into a concrete diagram, so that we
can rewrite it. In terms of string graphs and B-ESG grammars, this means
that given a B-ESG grammar $B$ and a string graph $H$, we should be able to
enumerate all derivations $S \Longrightarrow_*^B F$, and all matches
$F \to H$ (up to wire-homeomorphism). This is the match enumeration problem
which we will also show is decidable. Note, that decidability of this problem
implies decidability of the membership problem.

To show that these problems are decidable, we will use a logical description
(cf. Subsection~\ref{sub:mso}) of the wire-homeomorphism class of a given
encoded string graph. In particular, we will first show that for given an
encoded string graph $H$, we can effectively construct a closed MSO formula
$\phi_H$, such that $L(\phi_H) = [H]_{\sim},$ that is, the MSO language which
the formula induces is the wire-homeomorphism class of $H$.

\subsection{Wire-homeomorphism as MSO definable language}

We begin by proving a few lemmas which are helpful for establishing the
mentioned logical characterisation. Recall that we assume our labelling
alphabets $\Sigma$ and $\Gamma$ are finite. This is necessary for the
decidability of the problems we are interested in.

\begin{lemma}\label{lem:aux_msol}
There exists MSO formulas for the following properties:
\begin{itemize}
\item edge(u,v) : There exists an edge between u and v.
\item encoding-edge(u,v) : There exists an edge between u and v
which carries an encoding label.
\item non-encoding-edge(u,v) : There exists an edge between u and v
which carries a non-encoding label.
\item wire-vertex(u) : u is a wire-vertex. 
\item node-vertex(u) : u is a node-vertex. 
\item wire-edge(u,v) : u and v are wire-vertices directly connected by an edge
\item one-edge(u,v) : there exists exactly one edge with source u and target v
\end{itemize}
\end{lemma}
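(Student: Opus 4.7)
The plan is to construct each of the required MSO formulas explicitly. Since the labelling alphabets $\Sigma$ and $\Gamma$ are finite (cf. Definitions~\ref{def:alphabets} and \ref{def:encoding_alphabets}), we can freely use finite disjunctions and conjunctions indexed by labels. The only atomic formulas of $\emph{MSOL}(\Sigma,\Gamma)$ are $\emph{lab}_\sigma(u)$, $\emph{edge}_\gamma(u,v)$, $u=v$, and $u \in U$, so every formula below must be built from these using the standard connectives and quantifiers.

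First I would define the easy cases by direct disjunction over the relevant sub-alphabet. Set
\[
\emph{edge}(u,v) \;:=\; \bigvee_{\gamma \in \Gamma} \emph{edge}_\gamma(u,v),
\]
which is the same formula already introduced in the example in Subsection~\ref{sub:mso}. Replacing $\Gamma$ by $\mathcal E$ or $\Gamma - \mathcal E$ respectively gives $\emph{encoding-edge}(u,v)$ and $\emph{non-encoding-edge}(u,v)$. Similarly, recalling that $\mathcal N \subseteq \Delta$ is the alphabet of node-vertex labels and $\mathcal W = \Delta - \mathcal N$ is the alphabet of wire-vertex labels, define
\[
\emph{node-vertex}(u) \;:=\; \bigvee_{\sigma \in \mathcal N} \emph{lab}_\sigma(u),
\qquad
\emph{wire-vertex}(u) \;:=\; \bigvee_{\sigma \in \mathcal W} \emph{lab}_\sigma(u).
\]
These are all well-defined because the indexing sets are finite.

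Next, $\emph{wire-edge}(u,v)$ is a straightforward combination of the previous formulas:
\[
\emph{wire-edge}(u,v) \;:=\; \emph{wire-vertex}(u) \land \emph{wire-vertex}(v) \land \big( \emph{edge}(u,v) \lor \emph{edge}(v,u) \big),
\]
where the disjunction on edges accounts for either orientation of the directed edge between two wire-vertices. For $\emph{one-edge}(u,v)$, we exploit the fact that in the graph model of Definition~\ref{def:graph} a directed edge is uniquely identified by its source, target and label, so ``exactly one edge'' amounts to the existence of precisely one label witnessing an edge from $u$ to $v$:
\[
\emph{one-edge}(u,v) \;:=\; \bigvee_{\gamma \in \Gamma} \Big( \emph{edge}_\gamma(u,v) \land \bigwedge_{\delta \in \Gamma, \delta \neq \gamma} \lnot \emph{edge}_\delta(u,v) \Big).
\]

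There is essentially no obstacle here: each property is expressible by a quantifier-free propositional combination of the atomic predicates, and the finiteness of $\Sigma$ and $\Gamma$ is precisely what makes the required disjunctions and conjunctions legal MSO formulas. The only point requiring a moment's thought is $\emph{one-edge}$, where one must remember that the graph model forbids parallel edges sharing a label, so distinctness of labels suffices to witness distinctness of edges; this is exactly what the construction above encodes.
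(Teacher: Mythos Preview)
Your proof is correct and essentially identical to the paper's: both give the same explicit quantifier-free formulas built from finite disjunctions and conjunctions over the labelling alphabets. The only cosmetic difference is that the paper defines $\emph{wire-edge}(u,v)$ with the directed clause $\emph{edge}(u,v)$ rather than your symmetric $\emph{edge}(u,v)\lor\emph{edge}(v,u)$; either reading fits the informal statement, and both are valid MSO formulas.
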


\begin{proof}
The required formulas are given by:
\begin{align*}
\emph{edge}(u,v) &:= \bigvee_{\gamma \in \Gamma} \emph{edge}_{\gamma}(u,v) \\
\emph{encoding-edge}(u,v) &:= \bigvee_{\gamma \in
  \mathcal{E}} \emph{edge}_{\gamma}(u,v) \\
\emph{non-encoding-edge}(u,v) &:= \bigvee_{\gamma \in
  \Gamma - \mathcal{E}} \emph{edge}_{\gamma}(u,v) \\
\emph{wire-vertex}(u) &:= \bigvee_{\sigma \in
  \mathcal{W}} \emph{lab}_{\sigma}(u) \\
\emph{node-vertex}(u) &:= \bigvee_{\sigma \in
  \mathcal{N}} \emph{lab}_{\sigma}(u) \\
\emph{wire-edge}(u,v) &:= \emph{wire-vertex}(u) \land \emph{wire-vertex}(v)
\land edge(u,v) \\
\emph{one-edge}(u,v) &:= \bigvee_{\gamma} \left(
  \emph{edge}_{\gamma}(u,v) \land \bigwedge_{\eta \not=
  \gamma} \lnot \emph{edge}_{\eta}(u,v) \right)
\end{align*}
\end{proof}

\begin{lemma}\label{lem:string_graph_msol}
There exists a closed MSO formula $esg$ which
is satisfied by a graph $H$ iff $H$ is an encoded string graph.
\end{lemma}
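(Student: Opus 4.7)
The plan is to build $esg$ as a conjunction of four MSO sentences, one for each defining clause of an encoded string graph, using the auxiliary abbreviations from Lemma~\ref{lem:aux_msol} as black boxes. Unrolling the definition of encoded string graph, a graph $H$ belongs to this class iff (i) every vertex of $H$ carries a terminal label, (ii) every edge between two node-vertices carries an encoding label, (iii) every wire-vertex has in-degree at most one, and (iv) every wire-vertex has out-degree at most one. Each clause translates directly into MSO, so the main question is simply how to phrase (iii) and (iv) without being able to quantify over edges.

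For (i) and (ii) I would set
\[
\phi_1 := \forall v.\ \emph{node-vertex}(v) \lor \emph{wire-vertex}(v),
\]
\[
\phi_2 := \forall u,v.\ \emph{node-vertex}(u) \land \emph{node-vertex}(v) \land \emph{edge}(u,v) \implies \emph{encoding-edge}(u,v).
\]
Both are immediate from the abbreviations given in Lemma~\ref{lem:aux_msol} (and the fact that $\Delta = \mathcal N \cup \mathcal W$).

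The only delicate point is encoding \textbf{at-most-one-in-edge}. Recall from Definition~\ref{def:graph} that an edge is a triple $(u,\gamma,v)$, so an in-degree of at most one for $v$ means at most one source vertex $u$ sends an edge to $v$, and moreover at most one label is used on that edge. I split this into two MSO conditions that I can actually write down:
\[
\emph{unique-src-in}(v) := \forall u_1,u_2.\ (\emph{edge}(u_1,v) \land \emph{edge}(u_2,v)) \implies u_1 = u_2,
\]
\[
\emph{unique-lab-in}(v) := \forall u.\ \emph{edge}(u,v) \implies \emph{one-edge}(u,v),
\]
and set $\emph{in}_{\leq 1}(v) := \emph{unique-src-in}(v) \land \emph{unique-lab-in}(v)$; the predicate $\emph{one-edge}$ from Lemma~\ref{lem:aux_msol} takes care of the fact that without it there could still be two edges from the same $u$ differing only in label. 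A symmetric definition yields $\emph{out}_{\leq 1}(v)$, and then
\[
\phi_3 := \forall v.\ \emph{wire-vertex}(v) \implies (\emph{in}_{\leq 1}(v) \land \emph{out}_{\leq 1}(v))
\]
captures clauses (iii) and (iv) simultaneously.

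Finally I would take $esg := \phi_1 \land \phi_2 \land \phi_3$, which is closed, and verify in both directions that $H \models esg$ iff $H$ is an encoded string graph; each direction is immediate once one checks that $\emph{in}_{\leq 1}(v)$ faithfully expresses ``at most one incoming edge'' using the characterisation of edges in Definition~\ref{def:graph}. I do not expect a serious obstacle: the only subtlety, and the one worth being careful about, is the double-counting phenomenon of same-source, different-label parallel edges, which is precisely why I insert the $\emph{one-edge}$ conjunct.
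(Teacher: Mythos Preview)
Your proposal is correct and takes essentially the same approach as the paper: both build $esg$ as a conjunction of MSO sentences encoding the defining conditions, using the predicates of Lemma~\ref{lem:aux_msol} and in particular \emph{one-edge} to rule out same-source parallel edges with different labels. The only difference is your additional conjunct $\phi_1$ forcing all vertices to be terminal, which the paper's proof omits (tacitly assuming the ambient graph is already over $\Delta$); your version is arguably more careful on this point.
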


\begin{proof}
A graph $H$ is an encoded string graph iff the following three conditions
are satisfied.
\begin{itemize}
\item [1.] There exists no non-encoding edge between two node-vertices
\item [2.] The in-degree of any wire-vertex is at most one
\item [3.] The out-degree of any wire-vertex is at most one
\end{itemize}
Each of those three conditions can easily be encoded using MSO logic. The
formula is given by:
\begin{align*}
  esg &:= \left[ \neg \exists u,v.\ \emph{node-vertex}(u) \land
  \emph{node-vertex}(v) \land \emph{non-encoding-edge}(u,v) \right] \\
  &\land \left[ \forall u,v,w.\ \emph{wire-vertex}(w) \land \emph{edge}(u,w)
  \land \emph{edge}(v,w) \implies u=v \land \emph{one-edge}(u,w) \right] \\
  &\land \left[ \forall u,v,w.\ \emph{wire-vertex}(w) \land \emph{edge}(w,u)
  \land \emph{edge}(w,v) \implies u=v \land \emph{one-edge}(w,u) \right] \\
\end{align*}
where each line represents one of the three conditions.
\end{proof}

\begin{lemma}\label{lem:wire-path}
The transitive closure of wire-edge$(u,v)$ can be expressed in MSO logic and we
denote it with wire-edge$^+(u,v).$
\end{lemma}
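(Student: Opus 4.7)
The plan is to express $\textit{wire-edge}^+(u,v)$ by quantifying universally over all vertex-sets that contain $v$ as a successor and are closed under the $\textit{wire-edge}$ relation, and asserting that any such set must contain $v$. This is the standard MSO encoding of a transitive closure on a graph, exploiting precisely the feature that separates MSO from first-order logic, namely quantification over subsets of vertices.

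Concretely, I would define
\[
\textit{wire-edge}^+(u,v) := \forall U.\, \Big[ \big(\forall w.\ \textit{wire-edge}(u,w) \implies w \in U\big) \land \big(\forall x,y.\ (x \in U \land \textit{wire-edge}(x,y)) \implies y \in U\big) \Big] \implies v \in U.
\]
The first conjunct in the bracket forces every direct $\textit{wire-edge}$-successor of $u$ to lie in $U$, and the second conjunct forces $U$ to be closed under further $\textit{wire-edge}$-steps. Thus $U$ must contain every vertex reachable from $u$ by a nonempty $\textit{wire-edge}$-path, which is exactly the transitive (but not reflexive) closure.

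The key verification splits into two directions. For soundness, if there is a finite chain $u = w_0, w_1, \ldots, w_n = v$ with $n \geq 1$ and $\textit{wire-edge}(w_i, w_{i+1})$ for each $i$, then a straightforward induction on $n$ using the closure condition shows $v \in U$ for any such $U$. For completeness, if no such chain exists, we exhibit a witnessing set $U$: take $U$ to be the set of all vertices reachable from $u$ by at least one $\textit{wire-edge}$-step. This $U$ satisfies the bracketed hypothesis by construction and excludes $v$ by assumption, so the implication fails.

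There is no real obstacle here since the formula uses only atomic predicates already shown to be expressible in MSO by Lemma~\ref{lem:aux_msol} (in particular $\textit{wire-edge}(u,v)$), together with standard propositional connectives, first-order and second-order quantifiers, all of which are part of $\mathrm{MSOL}(\Sigma,\Gamma)$. Hence the displayed formula is a well-formed MSO formula with free variables $u,v$ that defines exactly $\textit{wire-edge}^+$, completing the proof.
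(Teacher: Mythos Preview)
Your proposal is correct and essentially identical to the paper's proof: the paper writes down the same MSO formula (with variable names $X,y,z$ in place of your $U,w,x,y$) and cites Lemma~5.2.7 of \cite{msol} for the standard transitive-closure encoding. Your additional soundness/completeness verification is more detailed than what the paper provides, but the approach is the same.
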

\begin{proof}
Follows immediately from Lemma 5.2.7, pp. 327 of \cite{msol}. In particular,
$\emph{wire-edge}^+(u,v)$ is given by:
\begin{align*}
&\emph{wire-edge}^+(u,v) := 
\forall X. \{ \forall y.\left(\emph{wire-edge}(u, y) \implies y \in
X \right) \land \\
& \forall y,z.\left( y \in X \land \emph{wire-edge}(y,z) \implies
z \in X \right) \} \implies v \in X \\
\end{align*}
\end{proof}

\begin{corollary}
The reflexive and transitive closure of wire-edge$(u,v)$, denoted
wire-edge$^*(u,v)$ can be expressed in MSO logic.
\end{corollary}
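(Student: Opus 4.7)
The plan is to leverage the preceding Lemma~\ref{lem:wire-path} directly: since we already have an MSO formula $\emph{wire-edge}^+(u,v)$ expressing the transitive closure, we only need to additionally account for the reflexive case, which amounts to allowing $u=v$. The atomic formula $u=v$ is one of the four basic kinds of atomic formulas in the MSO language (cf. Subsection~\ref{sub:mso}), so it is already available to us.

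Concretely, I would define
\[
\emph{wire-edge}^*(u,v) := (u = v) \lor \emph{wire-edge}^+(u,v),
\]
and argue that this formula is satisfied by a pair of vertices $(u,v)$ in a graph $H$ precisely when either $u$ and $v$ are the same vertex, or there is a nonempty chain of wire-edges from $u$ to $v$. This is exactly the reflexive and transitive closure of the wire-edge relation.

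There is no real obstacle here, since both disjuncts are already known to be expressible: the left disjunct is an atomic MSO formula by definition, and the right disjunct is MSO-expressible by Lemma~\ref{lem:wire-path}. The disjunction of two MSO formulas is again an MSO formula via the standard propositional connective $\lor$, so the constructed formula lies in $\emph{MSOL}(\Sigma,\Gamma)$, completing the argument.
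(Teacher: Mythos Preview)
Your approach is essentially the same as the paper's: form the disjunction of $\emph{wire-edge}^+(u,v)$ with a reflexive clause. The one difference is that the paper's reflexive clause is $u=v \land \emph{wire-vertex}(u)$ rather than just $u=v$, so that $\emph{wire-edge}^*(u,u)$ holds only for wire-vertices; this matters for how the formula is used later in Theorem~\ref{thm:wire-homeo-mso}, where the final constraint $\forall w,w'.\ \emph{wire-edge}^*(w,w') \implies \cdots$ is meant to range only over wire-vertices, and your version would inadvertently impose the constraint on node-vertices as well.
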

\begin{proof}
It is given by :
\begin{align*}
\emph{wire-edge}^*(u,v) := \emph{wire-edge}^+(u,v) \lor 
\left( u=v \land \emph{wire-vertex}(u) \right)
\end{align*}
\end{proof}

The next definition generalises the notion of wire-homeomorphism from
string graphs to encoded string graphs. The two notions are essentially the
same -- two encoded string graphs are wire-homeomorphic if one can be
obtained from the other by increasing or decreasing the length of some
wires.

\begin{definition}[Wire-homeomorphic encoded string graphs]
\label{def:encoded-wire-homeo}
Two encoded string graphs $H$ and $H'$ are called \textit{wire-homeomorphic},
written
$H \sim H'$ if $H'$ can be obtained from $H$ by either merging two adjacent
wire-vertices (top) or by splitting a wire-vertex into two adjacent
wire-vertices (bottom) any number of times:
    \[ \stikz{two-wires.tikz} \ \ \mapsto \ \ \stikz{one-wire.tikz}\]
    \[\stikz{one-wire.tikz} \ \ \mapsto \ \ \stikz{two-wires.tikz} \]
in the same way as the definition for wire-homeomorphic string graphs
(Definition~\ref{def:wire-homeo}).
\end{definition}

The main theorem of this subsection is proved next. 

\begin{theorem}\label{thm:wire-homeo-mso}
Given an encoded string graph $H$, we can effectively construct a closed
MSO formula $\phi_H$, such that $L(\phi_H) = [H]_{\sim}.$
\end{theorem}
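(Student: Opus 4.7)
The plan is to observe that wire-homeomorphism preserves only a finite, easily computable set of invariants of $H$, and to encode each of them as an MSO conjunct. Specifically, the relevant invariants are: the labelled multiset of node-vertices of $H$; for each ordered pair of node-vertices and each encoding edge label $\alpha \in \mathcal{E}$, the number of encoding edges with that source, target and label; for each wire of $H$, its type (circle, bare, attached at one endpoint, or attached at both endpoints) together with the identity and direction of its node-vertex endpoints (if any); and the number of isolated wire-vertices. Since split/merge moves change only the length of a wire, two encoded string graphs agree on all these data if and only if they are wire-homeomorphic, so matching these invariants exactly defines $[H]_{\sim}$.

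To build $\phi_H$, I extend the auxiliary predicates of Lemma~\ref{lem:aux_msol} with a \emph{same-wire} relation obtained as the reflexive--transitive closure (Lemma~\ref{lem:wire-path} and its corollary) of the symmetric relation $\emph{wire-edge}(u,v) \lor \emph{wire-edge}(v,u)$. Using this, I can express, for a vertex-set variable $X$, the property ``$X$ is a maximal chain/cycle of wire-vertices of a given type with prescribed node-vertex connections,'' by asserting that $X$ is non-empty, consists entirely of wire-vertices, is closed under \emph{same-wire}, and has exactly the prescribed pattern of in-edges and out-edges to the designated node-vertices. Then $\phi_H$ is the big conjunction asserting: (i) the graph is an encoded string graph via the formula $esg$ of Lemma~\ref{lem:string_graph_msol}; (ii) there exist distinct vertices $x_1, \ldots, x_n$ with the exact node-vertex labels of $H$, and no other node-vertices; (iii) for each triple $(i,j,\alpha)$, the number of encoding edges $(x_i,\alpha,x_j)$ is exactly the count in $H$, which is first-order expressible since every count is finite; (iv) there exist set variables $W_1, \ldots, W_m$, one per wire of $H$, each a wire of the matching type and endpoints; (v) the $W_k$ are pairwise disjoint and cover every non-isolated wire-vertex; and (vi) the number of isolated wire-vertices is exactly the count in $H$, again expressible by standard counting.

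Correctness has two directions. If a graph $G$ is wire-homeomorphic to $H$, the witnesses for $\phi_H$ are read off directly: node-vertices correspond across the homeomorphism, encoding edges are preserved, and each $W_k$ is the set of wire-vertices of the wire of $G$ corresponding to the $k$-th wire of $H$. Conversely, any satisfying assignment to $\phi_H$ yields a bijection on node-vertices, encoding edges, wires (with matching types and endpoints) and isolated wire-vertices, so $G$ and $H$ differ only in wire length, and an explicit sequence of split/merge operations converts one into the other. The main obstacle is purely technical: writing out the MSO sub-formula ``$W_k$ is a wire of the given type with the given endpoints'' case by case (circle, bare, attached only at the in-end, attached only at the out-end, attached at both ends with specified direction) without under- or over-specifying, so that the intended bijection between wires of $G$ and wires of $H$ is forced. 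This case analysis is routine given the \emph{same-wire} predicate and the node-vertex witnesses $x_1,\ldots,x_n$, but it is where care must be taken to ensure that $\phi_H$ captures $[H]_{\sim}$ exactly.
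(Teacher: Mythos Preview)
Your proposal is correct and follows essentially the same strategy as the paper: conjoin $esg$ with clauses pinning down the node-vertices and their encoding edges exactly, then describe each wire of $H$ by type and endpoints using the transitive closure of \emph{wire-edge}, and finally forbid any extra wire-vertices. The only cosmetic difference is that you quantify over one \emph{set} variable $W_k$ per wire, whereas the paper first passes to the minimal representative of $[H]_\sim$ and then quantifies over one \emph{vertex} variable per wire-vertex of that representative, using $\emph{wire-edge}^*$ to stretch these witnesses into the actual wires of the satisfying graph; both choices encode the same finite list of invariants.
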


\begin{proof}
$H$ can easily be transformed into the minimal representative of its
wire-homeomorphism class, so we assume without loss of generality that
$H$ is minimal in that sense.

Let $H$ have $n$ node-vertices, given by $u_1,\ldots, u_n$ and $m$ wire-vertices
given by $w_1,\ldots, w_m$. First, we express the fact that $\phi_H$ should not
be
satisfied by graphs with more than $n$ node-vertices. This is given by the
following formula :
\begin{align*}
\emph{nodes} := \lnot \exists u_1, ...,u_{n+1}. \left( \bigwedge_i
  \emph{node-vertex}(u_i) \right) \land \left( \bigwedge_{i \not = j} u_i
  \not =   u_j \right)
\end{align*}
The general form of $\phi_H$ is given by the following :
\begin{align*}
\phi_H := \emph{esg} &\land \emph{nodes} \land \exists u_1,..., u_n,
  w_1,..., w_m. 
  \left( \bigwedge_i \emph{node-vertex}(u_i) \right) \land
  \left( \bigwedge_i \emph{wire-vertex}(w_i) \right) \land \\
  &\land \left( \bigwedge_{i \not = j} u_i \not =   u_j \right)
  \land \left( \bigwedge_{i \not = j} w_i \not =   w_j \right) \land
  \left( \bigwedge_k \psi_k \right)
\end{align*}
where the $\psi_k$ are some MSO formulas (which can depend on the free $u_i$
and $w_j$ variables) that are described below.

So far, we can see that $\phi_H$ will be satisfied by encoded string graphs
which have exactly $n$ node-vertices and at least $m$ wire-vertices.

Next, we add constraints on the edges incident to node-vertices on both ends.
Let the parallel edges between vertices $u_i,$ $u_j \in V_H,$ be indexed by a
set $I_{i,j}.$ Thus, the set of edges with source $u_i$ and target $u_j$ in $H$
is given by $\{(u_i,\alpha, u_j)\ |\ \alpha \in I_{i,j}\}.$ Because $H$ is an
encoded string graph, then we must have $I_{i,j} \subseteq \mathcal{E},$ that
is all these edges must have encoding labels. Therefore, for each $u_i$ and
$u_j,$ with $i \not = j,$ we must have the following formula in
$\phi_H$ (as one of the $\psi_k$) :  
\begin{align*}
\left( \bigwedge_{\alpha \in I_{i,j}} \emph{edge}_{\alpha}(u_i,u_j) \right)
  \land
\left( \bigwedge_{\alpha \not \in I_{i,j}} \lnot
  \emph{edge}_{\alpha}(u_i,u_j) \right)
\end{align*}
which describes precisely what edges are allowed and required between a pair of
node-vertices. Note, that our notion of graph does not allow for loops on
vertices and because of that there's no need to consider the case when $u_i$
and $u_j$ are the same vertex.

Next, we describe the wires and isolated wire-vertices which any graph
satisfying $\phi_H$ is required to have. For each wire-vertex $w_i \in H,$ we
consider all possible cases of its neighbourhood in $H$ and for each case we
add an MSO formula as one of the $\psi_k$ denoted above.

If $w_i$ is an isolated wire-vertex, then we add the following formula:
\begin{align*}
\left( \lnot \exists v.\ \emph{edge}(v,w_i) \right)
\land \left( \lnot \exists v.\ \emph{edge}(w_i,v) \right)
\end{align*}
If $w_i$ is an input and has an out-edge to a wire-vertex $w_j$, then
$w_j$ must be an output and we add the following formula:
\begin{align*}
  \emph{wire-edge}^*(w_i, w_j)
\land \left( \lnot \exists v.\ \emph{edge}(v,w_i) \right)
\land \left( \lnot \exists v.\ \emph{edge}(w_j,v) \right)
\end{align*}
If $w_i$ is an input and has an out-edge to the node-vertex $u_j$, then
we add the following formula:
\begin{align*}
\left(
   \exists w'. \emph{wire-edge}^*(w_i, w') \land \emph{edge}(w',u_j)
\right)
\land \left( \lnot \exists v.\ \emph{edge}(v,w_i) \right)
\end{align*}
If $w_i$ is an output and has an in-edge from the node-vertex $u_j$, then
we add the following formula:
\begin{align*}
\left(
   \exists w'. \emph{wire-edge}^*(w', w_i) \land \emph{edge}(u_j,w')
\right)
\land \left( \lnot \exists v.\ \emph{edge}(w_i,v) \right)
\end{align*}
The next case we have to consider is when
$w_i$ is neither an input, nor an output and it is
adjacent to two node-vertices. Let the in-edge of $w_i$ be from the
vertex $u_j$ and its out-edge to the vertex $u_p.$ Then, we add the following
formula:
\begin{align*}
\left(
  \exists w'. \emph{wire-edge}^*(w_i, w') \land \emph{edge}(u_j,w_i)
  \land \emph{edge}(w', u_p)
\right)
\end{align*}
The final case we have to consider for wire-vertex $w_i$ is when
$w_i$ is neither an input, nor an output and it is
adjacent to a single wire-vertex $w_j$, forming a circle of length two (cf.
Definition~\ref{def:wire}). Then, we add:
\begin{align*}
\emph{wire-edge}^*(w_i, w_j) \land
\emph{wire-edge}^*(w_j, w_i)
\end{align*}

Finally, we restrict the kinds of wires and isolated wire-vertices allowed in
$\phi_H$ to be exactly the ones given by $H$. This is the final formula
which we add as the last of the $\psi_k$ denoted above:
\begin{align*}
\forall w,w'.\ \emph{wire-edge}^*(w,w') \implies
\left( 
  \bigvee_i \emph{wire-edge}^*(w,w_i) \lor \emph{wire-edge}^*(w_i,w)
\right)
\end{align*}
which simply says that any wire-vertex must either be one of the already
described isolated wire-vertices or part of a wire which is already described
in $\phi_H.$
\end{proof}

From the background chapter we know that MSO definable languages have important
decidability properties in relation to context-free languages. We will use this
fact for the proofs in the following subsections.

\subsection{Membership problem}

In this subsection we will show that the membership problem for B-ESG grammars
is decidable. We will do this by showing that for a given string graph, there
are finitely many sentential forms of a B-ESG grammar which need to be
considered. In turn, we can prove this by defining appropriate functions
which measure the size of a graph and its wire-homeomorphic class and
proving some simple properties about them.

\begin{definition}[Graph size]
The \emph{size} of a graph $H$, denoted $\emph{size}(H)$ is the number of
edges of $H$ plus the number of vertices of $H$:
\[\emph{size}(H) = \#V_H + \#E_H \]
\end{definition}

\begin{definition}[Wire-homeomorphic size] The \emph{wire-homeomorphic
size} of an encoded string graph $H$, denoted $\emph{wsize}(H)$ is a tuple
$(n, w, i) \in \mathbb{N}^3,$ where :
\begin{itemize}
\item $n$ is the number of node-vertices of $H$
\item $w$ is the number of wires of $H$
\item $i$ is the number of isolated wire-vertices of $H$
\end{itemize}
Moreover, we define a partial-order when working with \emph{wsize} by
$(n_1, w_1, i_1) \leq (n_2, w_2, i_2)$ iff $n_1 \leq n_2$,
$w_1 \leq w_2$ and $i_1 \leq i_2.$
\end{definition}

\begin{lemma}\label{lem:wsize-equal}
If $H \sim H'$ are two wire-homeomorphic encoded string graphs, then
wsize$(H) = $wsize$(H')$.
\end{lemma}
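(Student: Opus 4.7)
The plan is to proceed by induction on the number of merge/split steps relating $H$ and $H'$, and, since merge and split operations are mutual inverses, to reduce to showing that a single merge preserves each of the three components of $\text{wsize}$. The base case ($H = H'$, zero steps) is immediate.

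For the inductive step, I would suppose that $H'$ is obtained from $H$ by merging two adjacent wire-vertices $u_1, u_2$ into a single wire-vertex $u$, and check each component in turn. First, the merge acts only on wire-vertices, so the node-vertex count $n$ is trivially unchanged. Second, Definition~\ref{def:encoded-wire-homeo} inherits the clause ``while preserving the number of wires'' from Definition~\ref{def:wire-homeo}, so by hypothesis the wire count $w$ is unchanged.

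For the isolated wire-vertex count $i$: before the merge, $u_1$ and $u_2$ are connected by an edge, so neither of them is isolated in $H$. The merge does not add or remove any incident edges on any wire-vertex other than $u_1, u_2$ (it only reroutes the in-edge of $u_1$ and the out-edge of $u_2$ onto $u$), so no other wire-vertex changes its isolated/non-isolated status. Finally, the merged vertex $u$ itself cannot be isolated in $H'$: if it were, the wire of $H$ containing $u_1$ and $u_2$ would be destroyed rather than just shortened, contradicting the preservation of the wire count already established. Hence $i$ is preserved, and the induction goes through.

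The only delicate point, and hence the main ``obstacle'', is ruling out the degenerate scenario in which a merge would collapse a bare wire of length two into a single wire-vertex with no incident edges; but this scenario is excluded at the level of the definition by the wire-preservation clause, so no genuinely hard argument is required.
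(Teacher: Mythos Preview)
Your proof is correct and aligns with the paper's approach; the paper's own proof is a one-liner (``wire-homeomorphic encoded string graphs only differ in the number of wire-vertices on each wire, therefore the three counts must be the same''), and your induction on merge/split steps simply makes explicit what the paper leaves implicit. In particular, your observation that the ``while preserving the number of wires'' clause from Definition~\ref{def:wire-homeo} rules out the degenerate collapse is exactly the content that makes the paper's terse claim go through.
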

\begin{proof}
Wire-homeomorphic encoded string graphs only differ in the number of
wire-vertices on each wire. Therefore, the three mentioned counts must be the
same.
\end{proof}

Note, that \emph{wsize}$(H)$ does not count the number of encoding edges
of $H$, even though wire-homeomorphic encoded string graphs also have
equal numbers of encoding edges. This is a deliberate choice as it allows
for more elegant proofs in the next few lemmas and theorems.

\begin{lemma}\label{lem:wsize-increase}
Given a decoding system $T$,
if $H_1 \Longrightarrow_*^T H_2$, then wsize$(H_1) \leq $ wsize$(H_2).$
\end{lemma}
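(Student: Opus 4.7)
The plan is to proceed by induction on the number of decoding steps in $H_1 \Longrightarrow_*^T H_2$. The base case of zero steps is immediate since $H_1 = H_2$ and thus $\text{wsize}(H_1) = \text{wsize}(H_2)$. The whole content of the proof therefore lies in the inductive step: showing that a single application of a decoding rule $t \in T$ can only leave each component of \emph{wsize} the same or increase it. By transitivity of $\leq$ on $\mathbb{N}^3$, iterating this establishes the general case.

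So I would fix a single rewrite step where an encoding edge $(u, \alpha, v)$ between node-vertices $u, v$ is replaced by the right-hand side $R$ of some decoding rule; by the definition of a decoding system, $R$ is a string graph containing $u$ and $v$, at least one further vertex, no inputs, no outputs, and no encoding labels. The DPO construction means that the interface $\{u, v\}$ is preserved while the encoding edge is deleted and all remaining material of $R$ is freshly adjoined. From here I would track each of the three components of $\text{wsize} = (n, w, i)$ separately.

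For the node-vertex count $n$: the two interface node-vertices survive, and any node-vertices $R$ contains besides $u, v$ are new in $H_2$, so $n$ can only increase. For the isolated wire-vertex count $i$: since $R$ has no inputs and no outputs, every wire-vertex in $R$ has in-degree and out-degree exactly one in $R$ itself, so none of the new wire-vertices are isolated in $H_2$; meanwhile the deleted edge was between two node-vertices, so it cannot have been the unique incident edge of any wire-vertex in $H_1$, and no pre-existing isolated wire-vertex gains an edge. Hence $i$ is preserved. For the wire count $w$: the deleted edge lies between two node-vertices and is therefore not part of any wire of $H_1$, so all wires of $H_1$ survive intact; the wires contributed by $R$ cannot merge with wires of $H_1$ because any wire of $R$ attached to the boundary is attached at a node-vertex ($u$ or $v$), and wires are delimited by node-vertices. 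Thus $w$ can only increase.

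The one technical point I would be careful about is the claim that wires of $R$ and wires of $H_1$ do not get glued together into a single wire in $H_2$; this is the only place where one could plausibly imagine a decrease in $w$. The argument is that because $u$ and $v$ are node-vertices, any wire-vertex $x$ of $R$ adjacent to $u$ or $v$ remains adjacent to a node-vertex in $H_2$, so the maximal connected wire-vertex subgraph containing $x$ is exactly the same inside $R$ as inside $H_2$. No real obstacle arises, but this is the step where I would write out the identification carefully, using Definition~\ref{def:wire} to confirm that the notion of ``wire'' is stable under gluing at node-vertices.
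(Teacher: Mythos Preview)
Your argument is correct and follows the same approach as the paper: both observe that decoding replaces an encoding edge between node-vertices with a string graph having no inputs or outputs, so node-vertices, wires, and isolated wire-vertices cannot decrease. The paper's proof is a one-line remark to this effect, whereas you have carefully unpacked each component of \emph{wsize} and verified the wire-gluing point explicitly; the content is the same.
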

\begin{proof}
$H_2$ is obtained from $H_1$ by replacing encoding edges (which don't
count as wires) by other graphs. Therefore, the numbers of
node-vertices, wires and isolated wire-vertices cannot be decreased.
\end{proof}

\begin{lemma}\label{lem:homeo-decode}
Given a decoding system $T$ and encoded string graphs such that
$H_1 \sim H_2$ and $H_1 \Longrightarrow_*^T H_1'$ and $H_2
\Longrightarrow_*^T H_2'$, then $H_1' \sim H_2'.$
\end{lemma}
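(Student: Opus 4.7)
The plan is to exploit the fact that wire-homeomorphism only modifies wire-vertices (and the edges incident to them), while the decoding system only acts on encoding edges, which by definition connect pairs of node-vertices. Since these two operations are \emph{orthogonal} in the sense that they touch disjoint parts of the graph structure, they should commute.

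First, I would observe that any wire-homeomorphism $H_1 \sim H_2$ induces a label-preserving bijection between the node-vertices of $H_1$ and $H_2$. Since encoding edges are required to sit between pairs of node-vertices, this bijection lifts to a bijection between the sets of encoding edges of $H_1$ and $H_2$, preserving both endpoints (up to the node-vertex correspondence) and encoding labels. This is the crucial structural correspondence.

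Next, I would use confluence and termination of the decoding system (already established in the section) together with the fact that each decoding rule has an invariant interface consisting of exactly the two endpoint node-vertices. Because of this invariance, each decoding step acts ``locally'' at a single encoding edge, replacing it with a fixed right-hand side that is glued to its two node-vertex endpoints without introducing any new inputs or outputs. Therefore we may choose parallel decoding strategies for the two derivations $H_1 \Longrightarrow_*^T H_1'$ and $H_2 \Longrightarrow_*^T H_2'$ which apply corresponding rules (under the bijection from step one) at corresponding encoding edges, producing isomorphic gluings in both cases.

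Finally, I would argue that the resulting graphs $H_1'$ and $H_2'$ can each be decomposed into (i) the common ``decoded part'' (the union of all RHS replacement graphs, glued along the matching node-vertices), which is identical up to isomorphism in both cases, and (ii) the original wire structure of $H_1$ and $H_2$ respectively, which is entirely untouched by decoding. Since $H_1 \sim H_2$ differ only in the wire-vertex parts and these wire-vertex parts are preserved verbatim by the decoding procedure, the same sequence of merges and splits witnessing $H_1 \sim H_2$ also witnesses $H_1' \sim H_2'$. The only mild subtlety, and hence the main obstacle, is to make precise that the ``wire structure'' of $H_1$ survives intact through decoding, which follows because decoding rules act only on encoding edges and their node-vertex endpoints, never touching wire-vertices or non-encoding edges; combined with Lemma~\ref{lem:decoding} ensuring $H_1'$ and $H_2'$ are genuine string graphs, we conclude $H_1' \sim H_2'$.
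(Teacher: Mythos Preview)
Your proposal is correct and follows essentially the same approach as the paper: both arguments exploit the orthogonality between wire-homeomorphism (which only touches wire-vertices) and decoding (which only replaces encoding edges between node-vertices), concluding that the graphs still differ only in wire lengths after decoding. The paper's proof is a two-sentence sketch of exactly this idea, whereas you have spelled out the bijection on encoding edges and the decomposition into decoded part plus wire structure in considerably more detail.
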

\begin{proof}
$H_1$ and $H_2$ only differ in the length of their wires. Therefore, after
decoding both graphs, they will again only differ in the length of the same
wires.
\end{proof}

In our goal to model reasoning with families of string diagrams, we need
to be able to decide when a string diagram is a specific instance of a
string diagram family. Recall, that any two string graphs which are
wire-homeomorphic represent the same string diagram. Because of this, the
membership problem needs to be stated up to wire-homeomorphism.

\begin{prob}[Membership]
Given a string graph $H$ and a B-ESG grammar $B$, does
there exist a string graph $\widetilde H \sim H$, such that $\widetilde H \in L(B)$?
In such a case, construct a derivation sequence $S \Longrightarrow_* \widetilde
H$. In addition, decide if there are finitely many such $\widetilde H$ and
if there are, then construct a concrete derivation for each of them.
\end{prob}

\begin{theorem}\label{thm:member} \rm
  The membership problem for B-ESG grammars is decidable.
\end{theorem}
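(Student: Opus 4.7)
The plan is to reduce the question to deciding emptiness and finiteness of the intersection of the B-edNCE language $L(G)$ with a suitable MSO definable language, and to do this over a finite collection of candidate cases. First I would observe that any witness $\widetilde H \in L(B)$ with $\widetilde H \sim H$ comes from a derivation $S \Longrightarrow_*^G H_1 \Longrightarrow_*^T \widetilde H$, where $H_1 \in L(G)$ is an encoded string graph. By Lemma~\ref{lem:wsize-equal} and Lemma~\ref{lem:wsize-increase}, $\text{wsize}(H_1) \leq \text{wsize}(\widetilde H) = \text{wsize}(H)$, so the number of node-vertices, wires, and isolated wire-vertices in $H_1$ are all bounded by the corresponding counts in the input $H$.

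Next, I would effectively enumerate the finite set $\mathcal{C}$ of minimal representatives (up to isomorphism) of encoded string graphs with $\text{wsize} \leq \text{wsize}(H)$; finiteness is immediate once one notes that in the minimal representative each wire carries only a constant number of wire-vertices and the number of encoding edges between any pair of node-vertices is bounded by $|\mathcal{E}|$. Every candidate $H_1$ is wire-homeomorphic to a unique element of $\mathcal{C}$. For each $H' \in \mathcal{C}$ I would (a) compute the decoding $\widetilde{H'}$ (the rewrite system $T$ is terminating and confluent) and test $\widetilde{H'} \sim H$ by reducing both to their minimal representatives and checking isomorphism; by Lemma~\ref{lem:homeo-decode} the outcome of this test depends only on $[H']_\sim$. (b) If the test succeeds, use Theorem~\ref{thm:wire-homeo-mso} to construct an MSO formula $\phi_{H'}$ with $L(\phi_{H'}) = [H']_\sim$, and by Theorem~\ref{thm:mso_closed} effectively construct a B-edNCE grammar $G'$ generating $L(G) \cap [H']_\sim$.

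Applying Proposition~\ref{prop:emptiness} to $G'$ decides both whether $L(G) \cap [H']_\sim$ is nonempty (giving membership) and whether it is finite; in the finite case the proposition also gives the intersection explicitly. For each enumerated $H_1$ a concrete derivation $S \Longrightarrow_*^G H_1$ can then be recovered by running the decidable membership algorithm for B-edNCE grammars on the specific graph $H_1$, and a full derivation $S \Longrightarrow_*^B \widetilde H$ is obtained by appending the deterministic decoding $H_1 \Longrightarrow_*^T \widetilde H$. Taking the union of contributions over all $H' \in \mathcal{C}$ yields the answer to the membership and finiteness questions and produces the required concrete derivations. The main obstacle is organisational rather than mathematical: one must be careful that the bound derived from $\text{wsize}$ really does capture every possible intermediate $H_1$ up to wire-homeomorphism and that the MSO intersection step preserves the ability to recover actual derivations, but both points follow from the established results cited above.
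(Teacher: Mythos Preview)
Your proposal is correct and follows essentially the same route as the paper: bound $\text{wsize}$ of the intermediate encoded string graph using Lemmas~\ref{lem:wsize-equal}--\ref{lem:wsize-increase}, enumerate the finitely many wire-homeomorphism classes up to that bound, and for each candidate use the MSO characterisation (Theorem~\ref{thm:wire-homeo-mso}) together with Theorem~\ref{thm:mso_closed} and Proposition~\ref{prop:emptiness} to decide emptiness and finiteness of $L(G)\cap[H']_\sim$. The only cosmetic difference is that the paper first treats exact membership separately and appeals to recursive enumerability of derivations rather than the B-edNCE membership algorithm to extract witnesses, but the substance is the same.
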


\begin{proof}
Let $B=(G,T)$ and let $H$ be an arbitrary string graph. For this proof, we will
be using the decidability results from Subsection~\ref{subsec:decide}.

  First, we show that exact membership (i.e. not up to wire-homeomorphism) is
  decidable. That is, we have to decide if $H \in L(B)$.
From Theorem~\ref{thm:besg_language}, we know that any concrete B-ESG
  derivation produces an encoded string graph which is then decoded to a string
  graph. Since the decoding sequence cannot decrease the size of a graph, we
  can limit the problem to considering all graphs of size smaller than $H$.
  However, there are finitely many graphs whose size is smaller than $H$. For
  each such graph $H'$, we can then decide if $H' \in L(G)$ (this is the
  membership problem for B-edNCE grammars). Finally, we check if $H'
  \Longrightarrow_*^T H$ which is also clearly decidable. If no such graph $H'$
  exists, then the answer is no and otherwise the answer is yes.

We now generalise to the wire-homeomorphic case. Using
Theorem~\ref{thm:wire-homeo-mso}, we know that the wire-homeomorphism class of
any encoded string graph $K$ is MSO definable and moreover, we can effectively
construct that MSO formula. Therefore, we can decide if $L(G) \cap \left[ K
\right]_{\sim} = \emptyset$, and moreover, we can also decide if $L(G) \cap
\left[ K \right]_{\sim}$ is finite. This
\replaced{immediately}{immidiately}
implies
that we can decide,
given an encoded string graph $K,$ whether there exists $K'$, such that $K \sim
K' \in L(G)$ and if there are finitely many such $K'$.

Let us consider the encoded string graphs $K$ which could possibly be decoded
into a wire-homeomorphic string graph of $H$. From
Lemma~\ref{lem:wsize-increase}, any such graph $K$ must satisfy
\emph{wsize}$(K) \leq $ \emph{wsize}$(H)$ and therefore we should limit our
search to these kinds of encoded string graphs.

If we could consider all encoded string graphs $K$ with \emph{wsize}$(K) \leq$
\emph{wsize}$(H)$, then we could check if $K \Longrightarrow_*^T K' \sim H$ and
if $K \in L(G)$ and therefore decide the problem. However, that is impossible,
because there are infinitely many such encoded string graphs.  But, observe
that there are finitely many wire-homeomorphism classes $\left[ K
\right]_{\sim}$
where each representative has \emph{wsize} at most that of $H$ (cf.
Lemma~\ref{lem:wsize-equal}). Thus, by using Lemma~\ref{lem:homeo-decode}, we
see it doesn't matter which representative of the wire-homeomorphism class we
choose. Therefore, we can decide the problem by enumerating all minimal
representatives of the wire-homeomorphism classes of encoded string graphs with
\emph{wsize} at most that of $H$. Then, for each minimal representative $K$, we
decide if there exists $K',$ such that $K \sim K' \in L(G)$ and if $K
\Longrightarrow_*^T K'' \sim H.$ If no such minimal representative exists, then
the answer is negative.  If we find at least one such representative, then we
can also construct a derivation $S \Longrightarrow_* K'' \sim H$ because
derivation sequences are recursively enumerable. 

In addition, we can also decide if there are finitely many such $K''$. We have
already pointed out that we can decide if there are finitely many $K'$ such
that $K \sim K' \in L(G).$ Combining this with Lemma~\ref{lem:homeo-decode}
means we only have to check if a single representative decodes to a
wire-homeomorphic graph of $H$ and therefore we can correctly decide the
problem. In the case there are
finitely many $K'$ (and thus $K'')$, then again we can construct a concrete
derivation for each of them because derivation sequences are recursively
enumerable.
\end{proof}

As we have pointed out previously, the match enumeration problem supersedes
the membership problem. However, we will use the decidability of the membership
problem in order to show decidability of the match enumeration problem.

\subsection{Match enumeration problem}

Consider the following problem -- we are given an equational schema
$F_1 = F_2$ between two families of string diagrams, and we are given a
concrete string diagram $D$. We wish to apply an instance of the equational
schema to $D$, so that we can rewrite it. In order to do so, we have to
identify an appropriate instance $I$ of $F_1$ and then a monomorphism $m: I 
\to D$ which is a match. However, it is possible that there are multiple
instances $I$, so ideally we want to enumerate
all of them, so that we can later decide which one is the most appropriate to
use for rewriting.

This is the problem which we will show how to solve in this subsection.
However, we still haven't explained how to represent equational schemas using
B-ESG grammars. This is done in the next chapter. In this subsection we will
show how to enumerate all possible instances $I$ of a family $F_1$ such that
there exists a mono $m: I \to D$. This
\replaced{immediately}{immidiately}
implies we can solve the
match-enumeration problem, because we simply have to check if the mono $m$
satisfies the matching conditions, which as we have shown in
Section~\ref{sec:string} are decidable.

The match enumeration problem is not decidable for arbitrary B-ESG grammars.
However, it is decidable for grammars which satisfy 
some simple conditions.


\begin{definition}[Match-exhaustive B-ESG grammar]\label{def:non-bare}
We say that a B-ESG grammar $B = (G,T)$ is \textit{match-exhaustive}, if (1)
there is a fixed bound on the number of bare wires and isolated wire-vertices
in any graph in $L(B)$ (2) there are no empty productions and (3) there are no
chain productions.
\end{definition}

Conditions (2) and (3) from the above definition can obviously be decided.
In Subsection~\ref{sub:normal_forms} it was furthermore shown that any grammar
can be transformed into an equivalent grammar satisfying conditions (2) and
(3).
For property (1), we provide sufficient static conditions which can be
decided.

\begin{definition}[Iterable production]
For a B-edNCE grammar $G$, we say that a production $p$ is \emph{iterable}, if
for any $n \in \mathbb N$, there exists a concrete derivation $S \derive_* H$,
where $p$ is applied at least $n$ times.
\end{definition}

Whether a production is iterable can be decided in the same way as for
context-free string grammars. We simply create a graph whose vertices are
the productions of the grammar and whose edges indicate possible transitions
between productions. Then, a production is iterable if it is in the same
connected component as one of the initial productions and it is part of a
cycle.

The next proposition presents sufficient decidable conditions for
property (1) from Definition~\ref{def:non-bare}.

\begin{proposition}
Let $B=(G,T)$ be a B-ESG grammar, where $G$ is such that any of its productions
do not have connection instructions attached to wire-vertices and wire-vertices
are not adjacent to nonterminal vertices. If $G$ contains no production which
is iterable and which contains a bare wire or an isolated wire-vertex, then
there is a fixed bound on the number of bare wires and isolated wire-vertices
in any graph in $L(B)$.
\end{proposition}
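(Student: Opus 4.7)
The plan is to show that under the hypotheses, the local neighbourhood of every wire-vertex is frozen at the moment it is introduced by a production. Fix an arbitrary sentential form $H$ and a wire-vertex $w$ that appears for the first time when production $p : X \to (D,C)$ is applied, so that $w \in V_D$. When $p$ is substituted, $w$ can acquire new neighbours in only two ways: either through edges already present in $D$, or through bridges. Bridges incident to $w$ arise exactly from connection instructions of the form $(\sigma, \alpha, \beta, w, d) \in C$, but by the first hypothesis $C$ contains no such instructions when $w$ is a wire-vertex. Conversely, after $w$ has been created, subsequent derivation steps at other nonterminals cannot modify its incident edges, because the only mechanism by which a later substitution could touch $w$'s neighbourhood is if $w$ were adjacent in some sentential form to the replaced nonterminal; the second hypothesis rules this out in every production body and hence inductively in every sentential form of $G$.

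Consequently, for every terminal graph $H \in L(G)$ and every wire-vertex $w$ of $H$, the set of edges of $H$ incident to $w$ coincides exactly with the set of edges of $D$ incident to $w$, where $p : X \to (D,C)$ is the unique production whose application introduces $w$. I would then extract two corollaries. First, $w$ is an isolated wire-vertex in $H$ if and only if $w$ is an isolated wire-vertex in the body $D$. Second, if $w_1, w_2$ are distinct wire-vertices of $H$ connected by an edge in $H$, then both must come from the body of the same production $p$, for otherwise at least one of them would need to acquire a neighbour outside its introducing daughter graph, which we have just excluded. By transitivity this means any maximal wire in $H$ consisting solely of wire-vertices lies entirely within the body of a single introducing production; in particular, a bare wire of $H$ is realised as a bare wire in the body $D$ of some production $p$, and decoding edges in $T$ never create or destroy bare wires or isolated wire-vertices since each decoding rule preserves the two endpoint node-vertices.

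The proof is then completed by a bounding argument. Let $P_{\mathrm{bad}} \subseteq P$ be the (finite) set of productions whose body contains at least one bare wire or isolated wire-vertex, and let $M$ be the maximum, over $p \in P_{\mathrm{bad}}$, of the number of bare wires plus isolated wire-vertices in $\mathrm{rhs}(p)$. By the third hypothesis every $p \in P_{\mathrm{bad}}$ is non-iterable, so by the definition of iterability there exists $n_p \in \mathbb{N}$ with the property that $p$ is applied at most $n_p$ times in any concrete derivation of $G$. Set $N := \sum_{p \in P_{\mathrm{bad}}} n_p$; this is finite since $P_{\mathrm{bad}}$ is finite. Applying the first two steps, every bare wire and every isolated wire-vertex of any $H \in L(B)$ is contributed by a single application of some $p \in P_{\mathrm{bad}}$, and hence their total number in $H$ is bounded by $NM$, giving the required uniform bound.

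The only delicate point in carrying this out is the inductive step that wire-vertices in every sentential form (not just in production bodies) remain non-adjacent to nonterminals; this is where I would be most careful, arguing that a bridge from a nonterminal $y$ to a wire-vertex $w$ in a sentential form would require either $y$ and $w$ to be co-introduced by the same production (impossible by the second hypothesis) or a connection instruction whose target is $w$ in some later production (impossible by the first hypothesis).
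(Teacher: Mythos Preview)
Your proof is correct and follows essentially the same approach as the paper's: you argue that wire-vertex neighbourhoods are frozen at creation time (because there are no connection instructions to wire-vertices and no adjacency to nonterminals), deduce that every bare wire and isolated wire-vertex originates in a single production application, and then bound the contribution of the non-iterable productions. Your treatment is considerably more detailed than the paper's terse version---in particular your explicit bound $NM$ and your careful inductive remark about adjacency to nonterminals in sentential forms---but the underlying argument is the same. One small point: your justification for why decoding creates no new bare wires or isolated wire-vertices (``preserves the two endpoint node-vertices'') is slightly off-target; the cleaner reason is that the right-hand side of every decoding rule is required to contain no inputs or outputs, which directly excludes both.
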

\begin{proof}
Every wire-vertex in the productions of $G$ is not adjacent to nonterminals
and has no associated connection instructions. Therefore, the neighbourhood
of each wire-vertex in any sentential form of $G$ is the same as the
neighbourhood of the production which created it. Thus, each isolated
wire-vertex and each bare wire are created by a single application of
a production of $G$. The non-iterable productions can clearly produce a finite
number of isolated wire-vertices and bare wires, whereas the iterable
productions cannot produce any bare wires, nor isolated wire-vertices.
Decoding cannot introduce new inputs or outputs and thus the proposition
follows.
\end{proof}

Next, we define the problem which we need to decide in terms of B-ESG
grammars and string graphs.

\begin{prob}[Mono-enumeration]
  Given a string graph $H$ and a B-ESG grammar $B$, enumerate
  all of the B-ESG concrete derivations $S \Longrightarrow_*^{B} K$, such that
  there exists a mono $m : K \to \widetilde H$ for some $\widetilde H \sim H$.
\end{prob}

\begin{lemma}\label{lem:matching-homeo}
Given string graphs $K$ and $H$, there exists a mono $m : K \to
H'$ for some $H' \sim H$ iff for any $K' \sim K$, there
exists a mono $m' : K' \to H'',$ for some
$H'' \sim H.$
\end{lemma}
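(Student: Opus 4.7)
The plan is to establish the nontrivial forward direction (the converse being the special case $K' = K$). Given a mono $m: K \to H'$ with $H' \sim H$, and an arbitrary $K' \sim K$, I will inductively transport $m$ across the elementary wire-homeomorphism operations that transform $K$ into $K'$, producing at each step a new mono into a host graph that remains wire-homeomorphic to $H$.

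The induction is on the length of a sequence of elementary operations (splits and merges of adjacent wire-vertices, cf.\ Definition~\ref{def:wire-homeo}) turning $K$ into $K'$. Maintaining an invariant $\widetilde m : \widetilde K \to \widetilde H$ with $\widetilde H \sim H$, the step cases are as follows. If the next operation splits a wire-vertex $w$ of $\widetilde K$ into two adjacent wire-vertices $w_1, w_2$, I perform the analogous split at $\widetilde m(w)$ in $\widetilde H$, introducing a fresh wire-vertex $v$, and extend $\widetilde m$ by sending $w_1 \mapsto \widetilde m(w)$ and $w_2 \mapsto v$. If the next operation merges two adjacent wire-vertices $w_1, w_2$ of $\widetilde K$ into a single wire-vertex $w$, then $\widetilde m(w_1)$ and $\widetilde m(w_2)$ are distinct (since $\widetilde m$ is injective) and adjacent (since $\widetilde m$ preserves edges), so I merge them in $\widetilde H$ and define $\widetilde m$ on $w$ to be this merged vertex. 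In both cases the modification of $\widetilde H$ is itself an elementary wire-homeomorphism operation, so wire-homeomorphism with $H$ is preserved by transitivity.

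The main thing to verify will be that each lifted operation yields a well-defined injective string graph homomorphism: for splits, freshness of the introduced vertex preserves injectivity and labels are inherited from $w$; for merges, one has to observe that the merged image has the right label and that the in/out-degree constraints on wire-vertices of Definition~\ref{def:string-graph} are respected, both of which follow routinely from the fact that the operation acts on two adjacent wire-vertices lying on a single wire and that $\widetilde m$ is already a string-graph mono. I do not anticipate any serious obstacle beyond carrying out these bookkeeping checks.
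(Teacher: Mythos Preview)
Your proposal is correct and follows essentially the same approach as the paper: both note the converse is trivial, proceed by induction on elementary wire-homeomorphism steps, and for each split/merge (equivalently, added/removed wire-vertex) on $K$ perform the corresponding operation on the image wire in $H'$ to obtain $H''\sim H$ and extend the mono accordingly. The paper phrases the construction slightly more globally (in terms of the whole wire $W$ and its image $m(W)$) while you describe it locally at the affected vertices, but the resulting argument is the same.
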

\begin{proof}
($\Longleftarrow$) This direction is trivial.

($\Longrightarrow$)
Let $K' \sim K$, such that $K'$ only differs from $K$ by a single added
(removed) wire-vertex $w$ associated to the interior of a wire $W \subseteq K$.
Consider a string graph $H''$ which differs from $H'$ by a single added
(removed) wire-vertex associated to $m(W)$ of $H'$. Then, we can define a mono
$m' : K' \to H''$ which acts the same way as $m$ on all vertices not in $W$.
For the wire $W$, its endpoints $v_1$ and $v_2$ can be mapped in the same way
as under $m$. Then, the interior of $W$ is simply mapped onto the wire segment
of $H''$ between $m(v_1)$ to $m(v_2)$, which is guaranteed to have
the same number of elements as $W$, because we have added (removed) a
wire-vertex to (from) $H'$ as required.

In general for any $K' \sim K$ we may construct the required mono $m'$ by
simply applying the above argument multiple times.
\end{proof}

\begin{theorem}\label{thm:match} \rm
  The mono-enumeration problem for a B-ESG grammar $B$ is decidable if
  $B$ is a match-exhaustive grammar.
\end{theorem}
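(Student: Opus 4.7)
The plan is to reduce the mono-enumeration problem to a finite admissibility check plus a recursive enumeration of concrete derivations of $B$. By Lemma~\ref{lem:matching-homeo}, the existence of a mono $K \to \widetilde H$ for some $\widetilde H \sim H$ depends only on $[K]_\sim$, and any such mono forces $\emph{wsize}(K) \le \emph{wsize}(\widetilde H) = \emph{wsize}(H)$ via Lemma~\ref{lem:wsize-equal}. Hence only the finitely many minimal wire-homeomorphism representatives $K_0$ with $\emph{wsize}(K_0) \le \emph{wsize}(H)$ need to be considered; these are effectively enumerable exactly as in the proof of Theorem~\ref{thm:member}. For each such $K_0$, Theorem~\ref{thm:member} decides whether $[K_0]_\sim \cap L(B) \ne \emptyset$, and a finite search over injective vertex assignments into the minimal representative of $[H]_\sim$ decides whether $K_0$ embeds into some $\widetilde H \sim H$. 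Combining these two tests yields a finite, effectively computable set $\mathcal{A}$ of admissible minimal representatives.

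With $\mathcal{A}$ in hand, I would then recursively enumerate all concrete derivations of $B$ in increasing order of length. Match-exhaustiveness rules out empty and chain productions, so each derivation step strictly increases the number of terminal vertices in the sentential form, and the derivations of any fixed length form a finite, effectively enumerable set. For each enumerated derivation $S \Longrightarrow_{\ast}^{B} K$, compute the minimal representative $K_0$ of $[K]_\sim$ and emit the derivation exactly when $K_0 \in \mathcal{A}$; by Lemma~\ref{lem:matching-homeo} this correctly selects those derivations for which a mono $K \to \widetilde H$ exists.

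The principal obstacle is that this enumeration may be infinite, because an attached wire in $K$ can be elongated arbitrarily while the existence of a mono into a sufficiently long $\widetilde H \sim H$ is preserved; decidability here therefore has the recursively enumerable flavour, with each individual derivation being decidably valid or invalid against the finite table $\mathcal{A}$. The bounded-bare-wires-and-isolated-wire-vertices clause of match-exhaustiveness is precisely what ensures the admissibility test against $\mathcal{A}$ is sound: it prevents $B$ from generating additional boundary structure (bare wires or isolated wire-vertices) that would escape the $\emph{wsize}(H)$ bound in ways not already ruled out by the mono requirement, so that the candidates of bounded $\emph{wsize}$ faithfully cover all classes actually realised by $L(B)$.
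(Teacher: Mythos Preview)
Your central claim that ``any such mono forces $\emph{wsize}(K) \le \emph{wsize}(\widetilde H) = \emph{wsize}(H)$'' is false, and this is where your argument breaks. A monomorphism of string graphs does not bound the number of wires or isolated wire-vertices of the source by those of the target. For isolated wire-vertices: if $K$ consists of $k$ isolated wire-vertices and $H$ is a single bare wire, then for $\widetilde H \sim H$ with at least $k$ wire-vertices there is an obvious mono $K \to \widetilde H$, yet $\emph{wsize}(K)=(0,0,k)$ while $\emph{wsize}(H)=(0,1,0)$. For wires: if $K$ consists of $k$ disjoint bare wires, each can be mapped onto a distinct interior segment of a single sufficiently long wire in $\widetilde H$, so again the wire component of $\emph{wsize}(K)$ is unbounded relative to $\emph{wsize}(H)$. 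Even restricting to non-bare wires your bound is off: two attached wires of $K$ (one ending at a node-vertex, one beginning at a node-vertex) can both sit inside a single wire of $\widetilde H$, so the correct bound for non-bare wires is $2w$, not $w$.

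The paper's argument handles this correctly, and this is exactly where condition~(1) of match-exhaustiveness does real work rather than the vague ``faithfully cover'' role you assign it. The paper bounds the non-bare wires of $K$ by $2w$ using the observation just mentioned, and then invokes the fixed bounds $m$ and $i$ on bare wires and isolated wire-vertices coming from the grammar itself --- independent of $H$ --- to obtain $\emph{wsize}(K) \le (n, 2w+m, i)$. Only then are there finitely many wire-homeomorphism classes to test. Your final paragraph gestures at this but does not state or prove the actual bound, and since the rest of your plan rests on having finitely many admissible classes, this gap is fatal as written.
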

\begin{proof}
Let $w$ be the number of wires in a string graph $H$ and let $n$ be the number
of its node-vertices. Then, for any $\widetilde H \sim H$, we know that
$\widetilde H$ must have the same number of wires and node-vertices as $H$.
However, the number of wire-vertices in $\widetilde H$ may be arbitrarily
large.

Condition (1) of Definition~\ref{def:non-bare} implies that there exists $m \in
\mathbb{N}$, such that, for any $K \in L(B)$, $K$ has at most $m$ bare wires.
Any mono between string graphs will map at most two non-bare wires onto a
single
wire. So, if there exists a mono $m : K \to \widetilde H$, then $K$ can have at
most
$2w$ non-bare wires. Therefore, $K$ can have at most $2w+m$ wires. Moreover,
Condition (1) of Definition~\ref{def:non-bare} also imposes a bound on the
number of isolated wire-vertices in $K$, which we shall assume to be $i \in
\mathbb{N}$. Clearly, the number of node-vertices in $K$ is also bounded by the
number of node-vertices in $H$. Thus, for any $K \in L(B)$ which could possibly
have a mono onto some $\widetilde H \sim H$, we know $\emph{wsize}(K) \leq
(n,2w+m,i).$ Therefore, there are finitely many wire-homeomorphism classes
$[K]_{\sim},$ which we need to consider and which we can enumerate. Using
Lemma~\ref{lem:matching-homeo}, we see that it doesn't matter which
representative of the wire-homeomorphism class we choose to check for the
existence of possible monomorphisms $m': \widetilde K \to \widetilde H$.  So,
for each such wire-homeomorphism class $[K]_{\sim}$, by using
Theorem~\ref{thm:member} we can decide if there exists $\widetilde K \in
[K]_{\sim}$, such that
$\widetilde K \in L(B)$ and if so, we can then check if there exists a mono
$m' : \widetilde K \to \widetilde H,$ for some $\widetilde H \sim H$. If both
checks succeed, we can again
use Theorem~\ref{thm:member} to check if there are finitely many such
$\widetilde K$.
If there aren't, then we clearly cannot enumerate them all. If there are
finitely many such $\widetilde K$ for each $[K]_{\sim}$, then we have to
explain why
we can enumerate the concrete derivations $S \Longrightarrow_*^B \widetilde K.$

Conditions (2) and (3) from Definition~\ref{def:non-bare} imply that
the sentential forms of $B$ can only increase in size and therefore for any
$\widetilde K$ satisfying the above conditions there are finitely many concrete
derivations $S \Longrightarrow_*^{B} \widetilde K$ which we can enumerate.
\end{proof}

So, if we are using a match-exhaustive B-ESG grammar $G$, then we can decide
the mono-enumeration problem, which immediately implies that we can decide the
match-enumeration problem as well.

\section{Related work}
\added{The idea of an encoded B-edNCE grammar is similar to the type of graph
grammar presented in \cite{pair_grammars}. There, the author uses a
node-replacement graph grammar for which he proposes two different extensions.
One of the extensions replaces specially labelled terminal vertices with graphs
in a way specified by a separate rule system. His extension is different from
ours in that the rule system is more complicated and replacement is done on
vertices, instead of edges. Also, the notion of grammar which he uses is less
powerful than ours.}

\added{We introduced B-ESG grammars and encoded B-edNCE grammars, because we
wish to have more expressive power than standard B-edNCE grammars.
\emph{Adaptive star grammars} are introduced in \cite{adaptive-star} with the
motivation that both hyperedge replacement and vertex replacement graph
grammars have limited expressive power. Adaptive star grammars are able to
capture a much larger class of graph languages compared to B-edNCE grammars,
while also retaining important decidability properties such as membership.
However, little is known about their structural properties or normal forms
which makes them difficult to reason about in the context of this thesis.}

\chapter{Rewriting B-ESG grammars}\label{ch:rewriting}

In the previous chapters we introduced B-ESG grammars and showed that
they correctly represent families of string diagrams, we argued they have
sufficient expressive power and we established they have the necessary
decidability properties for rewriting concrete string diagrams. In this
chapter we will build upon these results by showing that B-ESG grammars can
also be used to rewrite B-ESG grammars themselves in a way which correctly
represents equational reasoning on infinite families of string diagrams.

In Section~\ref{sec:partial-besg}, we begin by showing how to do DPO rewriting
on edNCE grammars. Recall that we use DPO rewriting on string graphs in order
to model equational reasoning between string diagrams. Thus, DPO rewriting on
edNCE grammars is the first step towards modelling equational reasoning between
families of string diagrams, in our proposed framework.

Then, in Section~\ref{sec:rewriting}, we show how to restrict DPO rewriting on
B-edNCE grammars such that it is admissible in the sense that it agrees with
the concrete semantics of our grammars (and thus with the concrete semantics of
the families of diagrams we are representing). This is the most crucial
section in this chapter as it entirely relates the two main aspects of
graph transformation which we are using -- DPO rewriting and derivations
in B-edNCE grammars.

In Section~\ref{sec:b-esg-rules} we show how to represent equational schemas
between families of string diagrams using B-ESG grammars. We will show
how to instantiate B-ESG grammars in a meaningful way and we will prove the
resulting instantiations are valid string graph rewrite rules.

Finally, in Section~\ref{sec:final}, we combine all of the results from the
previous sections in order to show how to admissibly rewrite B-ESG grammars
using DPO rewriting. The results from this section show how to correctly
represent equational reasoning between context-free families of string
diagrams, where even the rewrite rules are equational schemas between
context-free families of diagrams.

\section{Partial adhesivity of edNCE grammars}\label{sec:partial-besg}

In Section~\ref{sec:adhesive} and Section~\ref{sec:partial_adhesive} we showed
that the category $\mathbf{Graph}$ is partially adhesive with ambient adhesive
category $\mathbf{MultiGraph}.$ The reason why $\mathbf{Graph}$ is not adhesive
is that its graphs cannot have parallel edges with the same label, while
the multigraphs in $\mathbf{MutliGraph}$ are allowed to have such
parallel edges. In this section, we will show how to DPO rewriting on edNCE
grammars. Observe that edNCE grammars are strictly more general compared to
graphs, because every edNCE grammar with a single production and no
connection instructions is simply a graph. For this reason it is obvious
that edNCE grammars do not form an adhesive category. However, if we define
a category of edNCE grammars where parallel edges with the same labels are
allowed, then we can show that this category is indeed adhesive. This
category will be called $\mathbf{MultiEdNCE}$ and we will show it is adhesive
in Subsection~\ref{sub:multi-ednce}. Then in Subsection~\ref{sub:ednce-partial}
we will show that the category of edNCE grammars is partially adhesive, where
its ambient adhesive category is $\mathbf{MultiEdNCE}$.

In this section the distinction between wire-vertices and node-vertices or
encoding edges and non-encoding edges is irrelevant. All of our constructions
will be over the following labelling alphabets, which are compatible with B-ESG
alphabets.

\begin{definition}[Labelling alphabets]\label{def:labelling_alphabets}
Throughout this section, our constructions will be over a triple of
labelling alphabets $\mathcal A = (\Sigma, \Delta, \Gamma)$, where:
\begin{description}
\item[1.] $\Sigma$ is the \emph{alphabet of all vertex labels}.
\item[2.] $\Delta \subseteq \Sigma$ is the \emph{alphabet of terminal
vertex labels}.
\item[3.] $\Gamma$ is the \emph{alphabet of all edge labels}. We will not
use any non-final edge labels (see Corollary~\ref{cor:nonblocking}).
\end{description}
\end{definition}

\subsection{Multi-edNCE grammars}\label{sub:multi-ednce}

Recall that an extended graph (cf. Definition~\ref{def:extended-graph})
is simply a graph where, in addition, we can attach connection instructions
to its vertices. Parallel edges with the same labels or parallel
connection instructions with the same label are not allowed.
We may generalise the definition of extended graphs to allow such parallel
edges and parallel connection instructions by analogy to the generalisation
of graphs to multigraphs.

\begin{definition}[Unlabelled Extended Multigraphs]
The category of \emph{unlabelled extended multigraphs} is
$\mathbf{UExtMultiGraph}$. This
category is defined as the functor category $[\mathbb{GR},
\mathbf{Set}]$, where $\mathbb {GR}$ is the category given by:
\cstikz{unlabelled_category_extended_graph.tikz}
\added{The only non-trivial morphisms in $\mathbb{GR}$ are the ones shown in
the diagram and we do not require that $s$ and $t$ commute.  $s$ and $t$ should
be seen as the source and target functions which assign a source or target
vertex to edges, $c^0$ should be seen as assigning a vertex to an in-connection
instruction and $c^1$ should be seen as assigning a vertex to an out-connection
instruction.  In most scenarios we would, of course, have $s\not = t.$}
For an extended multigraph $H \in \mathbf{UExtMultiGraph},$ we shall denote with
$V_H$ its set of vertices, with $E_H$ the set of its edges, with $C^0_H$ the
set of its in-connection instructions and with $C^1_H$ the set of its
out-connection instructions.
\end{definition}

\begin{definition}[Extended Multigraphs]\label{def:extended-multi}
The category of \emph{extended multigraphs} over a triple of labelling alphabets
$\mathcal A = (\Sigma, \Delta, \Gamma)$ is the category
$\mathbf{MultiExtGraph}_{\mathcal A}$ whose objects are tuples $(H, l)$
with $H \in \mathbf{UMultiExtGraph}$ an unlabelled
extended multigraph and $l = (l_V, l_E, l_{C^0}, l_{C^1})$ a
4-tuple of labelling functions:
\begin{align*}
l_V     &: V_G   \to \Sigma                             &\text{(the vertex
     labelling function)}\\
l_E     &: E_G   \to \Gamma                             &\text{(the edge
     labelling function)}\\
l_{C^0} &: C^0_G \to \Sigma \times \Gamma \times \Gamma &\text{(the
    in-connection instruction labelling function)} \\
l_{C^1} &: C^1_G \to \Sigma \times \Gamma \times \Gamma &\text{(the
    out-connection instruction labelling function)} \\
\end{align*}
A morphism between two extended multigraphs
$f : (G, l) \to (H, l')$ is a morphism $f: G \to H$ of
$\mathbf{UMultiExtGraph}$, which in addition respects the labelling, that is
the following diagrams commute:
\cstikz{respect_labelling2.tikz}
\end{definition}

Building on top of this, we may generalise edNCE grammars to
\emph{multi-edNCE} grammars. What is left is to partition all the vertices,
edges and connection instructions into productions.

\begin{definition}[Unlabelled multi-edNCE grammar]
\label{def:unlabelled_multiednce}
The category of \emph{unlabelled multi-edNCE grammars} is
$\mathbf{UMultiEdNCE}$. This
category is defined as the functor category $[\mathbb{EDNCE},
\mathbf{Set}]$, where $\mathbb{EDNCE}$ is the category given by:
\cstikz{unlabelled_category_grammar.tikz}
\added{The only non-trivial morphisms in $\mathbb{EDNCE}$ are the ones shown in
the diagram. Again, we do not require that $s$ and $t$ commute, however
we do require that the rest of the diagram commutes, that is:}
\begin{center}
\added{$p^V \circ t = p^E = p^V \circ s$}
\end{center}
For a multi-edNCE grammar $G \in \mathbf{UMultiEdNCE},$ we shall denote with
$V_G$ its set of vertices, with $E_G$ the set of its edges, with $C^0_G$ the
set of its in-connection instructions, with $C^1_G$ the set of its
out-connection instructions and with $P_G$ the set of its productions.
\end{definition}

In the definition above, we shall refer to the morphisms $s$ and $t$
as the \emph{source} and \emph{target} functions. They assign each edge
$e \in E_G$ a
source and target vertex respectively. An edge $e$ is said to be
\emph{incident} to a vertex $v$, if $s(e) = v$ or $t(e) = v.$

The morphism $c^0$ assigns to each in-connection instruction $c \in C^0_G$
a vertex $v \in V_G$. Similarly, the morphism $c^1$ assigns to each
out-connection instruction $c \in C^1_G$ a vertex $v \in V_G$. We will
say that a connection instruction $c \in C^i_G$ is \emph{associated} to a
vertex $v$ if $c^i(c) = v.$

The newly introduced morphisms in $\mathbb{EDNCE}$, $p^E$ and $p^V$ assign
edges and vertices respectively to a production $p \in P_G$ of our grammar. We
will say that a vertex $v$ is \emph{in} a production $p \in P_G$ if $p^V(v)=p$
and we will say that an edge $e$ is \emph{in} a production $p \in P_G$ if
$p^E(e) = p$. The commutativity requirements imposed by the diagram ensure
that if an edge $e$ is assigned to a production $p$, then both its source
vertex $s(e)$ and its target vertex $t(e)$ are also in $p$.

A \emph{component} $X_G$ of $G$ is either its set of vertices $V_G$, its
set of edges $E_G$, its set of production $P_G$, or one of its sets of
connection instructions $C^i_G.$ For a grammar $G$, we will call the morphisms
from $\mathbb{EDNCE}$ its \emph{assigning functions}.

Given a multi-edNCE grammar $G$, a \emph{full subgrammar} of $G$ is a
multi-edNCE
grammar $H$, such that $X_H \subseteq X_G$, for each component of $G$ and each
assigning function of $H$ is the restriction of the corresponding assigning
function of $G$ to the components of $H$. We will denote this with $H \subseteq
G.$

We will say that $x$ is an \emph{element} of a multi-edNCE grammar $G$ and
denote it with $x \in G,$ if $x \in X_G$ for some $X \in \{V, E, P, C^i\}.$
In other words, $x$ is some vertex, edge, production or connection
instruction of $G$. If $f: G \to H$ is a morphism in $\mathbf{MultiEdNCE},$
then we will say that $f(G)$ is the \emph{image} of $f$ and it will refer
to the full subgrammar $H' \subseteq H$ whose components are $f_X(X_G).$

The definition which we have presented for multi-edNCE grammars is clearly
a proper generalization of both multigraphs and extended multigraphs. In
particular, if the set $P$ is a singleton, then we can see a multi-edNCE
grammar as an extended multigraph. If, in addition, both $C^0$ and $C^1$ are
the empty set, then we get a multigraph.

\begin{definition}[MultiEdNCE grammars]\label{def:multi-ednce}
The category of \emph{multi-edNCE grammars} over a triple of labelling alphabets
$\mathcal A = (\Sigma, \Delta, \Gamma)$ is the category
$\mathbf{MultiEdNCE}_{\mathcal A}$ whose objects are tuples $(G, l)$
with $G \in \mathbf{UMultiEdNCE}$ an unlabelled
multi-edNCE grammar and $l = (l_V, l_E, l_{C^0}, l_{C^1}, l_P)$ a
5-tuple of labelling functions:
\begin{align*}
l_V     &: V_G   \to \Sigma                             &\text{(the vertex
     labelling function)}\\
l_E     &: E_G   \to \Gamma                             &\text{(the edge
     labelling function)}\\
l_{C^0} &: C^0_G \to \Sigma \times \Gamma \times \Gamma &\text{(the
    in-connection instruction labelling function)} \\
l_{C^1} &: C^1_G \to \Sigma \times \Gamma \times \Gamma &\text{(the
    out-connection instruction labelling function)} \\
l_P     &: P_G   \to \Sigma - \Delta                    &\text{(the production
    labelling function)}\\
\end{align*}
A morphism between two multi-edNCE grammars
$f : (G, l) \to (H, l')$ is a morphism $f: G \to H$ of
$\mathbf{UMultiEdNCE}$, which in addition respects the labelling, that is
the following diagrams commute:
\cstikz{respect_labelling3.tikz}
\end{definition}

If the triple of labelling alphabets $\mathcal A=(\Sigma, \Delta, \Gamma)$ is
clear from the context, then we will simply refer to this category as
\textbf{MultiEdNCE}.

The above definition of labelled multi-edNCE grammars is a generalisation
of the standard edNCE grammars. What's common is that both edNCE and
multi-edNCE grammars can be seen as a set of productions, where each
production consists of an extended (multi)graph which has an associated
nonterminal label. However, the difference is that extended multigraphs
are strictly more general than extended graphs, because the former allow
for parallel edges with the same label and parallel connection instructions
with the same label, whereas the latter does not. It is because of this
reason that multi-edNCE grammars form an adhesive category, as we show
in the next theorem, but edNCE grammars form only a partially adhesive
category (shown in the next subsection).

\begin{theorem}[\cite{unpublished-crap}]
$\mathbf{MultiEdNCE}$ is adhesive.
\end{theorem}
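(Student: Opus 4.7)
The plan is to mirror the proof strategy of Lemma~\ref{lem:multigraph-adhesive}. First, I would observe that by Definition~\ref{def:unlabelled_multiednce}, $\mathbf{UMultiEdNCE}$ is the functor category $[\mathbb{EDNCE}, \mathbf{Set}]$, so combining Example~\ref{ex:set} with Lemma~\ref{lem:adhesive_functor} yields that $\mathbf{UMultiEdNCE}$ is adhesive. The remaining task is then to exhibit a typing grammar $T \in \mathbf{UMultiEdNCE}$ together with an isomorphism $\mathbf{MultiEdNCE}_{\mathcal A} \cong \mathbf{UMultiEdNCE}/T$, at which point Lemma~\ref{lem:adhesive_slice} completes the proof.

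Constructing $T$ is the main obstacle. The naive choice $V_T := \Sigma$, $P_T := \Sigma - \Delta$ already fails, because the equalities $p^V \circ s = p^E = p^V \circ t$ imposed by $\mathbb{EDNCE}$ would force any labelling morphism $f : G \to T$ to commute with a fixed map $p^V_T : \Sigma \to \Sigma - \Delta$, whereas in an arbitrary labelled grammar two vertices with the same label may well inhabit productions bearing distinct nonterminal labels. The fix is to let the data of $T$ simultaneously track the label of each element \emph{and} the production-label of the production it belongs to. Concretely, I would set $P_T := \Sigma - \Delta$, $V_T := (\Sigma - \Delta) \times \Sigma$, $E_T := (\Sigma - \Delta) \times \Sigma \times \Gamma \times \Sigma$, and $C^0_T = C^1_T := (\Sigma - \Delta) \times \Sigma \times (\Sigma \times \Gamma \times \Gamma)$, with $p^V_T, p^E_T, c^0_T, c^1_T$ all given by the evident projections and with $s_T, t_T$ returning the (production-label, source-label) and (production-label, target-label) pairs respectively. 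By construction, these projections satisfy $p^V_T \circ s_T = p^E_T = p^V_T \circ t_T$, so $T$ is indeed an object of $\mathbf{UMultiEdNCE}$.

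With this choice of $T$, a morphism $f : G \to T$ in $\mathbf{UMultiEdNCE}$ is determined by its components $(f_V, f_E, f_P, f_{C^0}, f_{C^1})$, and the naturality squares enforced by $\mathbb{EDNCE}$ are exactly the conditions that (i) $f_P$ assigns a nonterminal label to each production, (ii) $f_V$ and $f_E$ assign vertex- and edge-labels compatible with the labels of the productions in which those vertices and edges reside, and (iii) the triples chosen by $f_{C^i}$ are compatible with the label of the associated vertex. This data is in bijective correspondence with the labelling functions $(l_V, l_E, l_P, l_{C^0}, l_{C^1})$ of Definition~\ref{def:multi-ednce}, and a routine check confirms that morphisms of labelled multi-edNCE grammars correspond bijectively and functorially to morphisms in the slice category, establishing $\mathbf{MultiEdNCE}_{\mathcal A} \cong \mathbf{UMultiEdNCE}/T$. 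Adhesivity then follows from Lemma~\ref{lem:adhesive_slice}. The routine naturality-to-labelling translation in step three is the most tedious part, but it is entirely analogous to the corresponding verification in the proof of Lemma~\ref{lem:multigraph-adhesive}, with additional bookkeeping for productions and connection instructions.
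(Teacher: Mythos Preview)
Your proposal is correct and follows the same high-level strategy as the paper: exhibit $\mathbf{MultiEdNCE}_{\mathcal A}$ as a slice of the adhesive presheaf category $[\mathbb{EDNCE},\mathbf{Set}]$ and then invoke Lemmas~\ref{lem:adhesive_functor} and~\ref{lem:adhesive_slice}. The difference lies in how the object to slice over is obtained. You construct $T$ explicitly by hand, carefully padding each component with the production-label factor $\Sigma-\Delta$ (and, for edges and connection instructions, with the vertex-label factor as well) so that the structure maps of $\mathbb{EDNCE}$ can be taken to be projections. The paper instead observes that the labelling data $(l_V,l_E,l_{C^0},l_{C^1},l_P)$ is a morphism $U(G)\to\mathcal L$ in $[\mathbb{EDNCE}',\mathbf{Set}]$, where $\mathbb{EDNCE}'$ is the discrete category on the same five objects and $U$ is precomposition with the inclusion $\mathbb{EDNCE}'\hookrightarrow\mathbb{EDNCE}$; since $\mathbf{Set}$ has finite limits, $U$ has a right adjoint $R$ given by right Kan extension, and the adjunction isomorphism $\hom(U(G),\mathcal L)\cong\hom(G,R(\mathcal L))$ delivers the slice description $\mathbf{MultiEdNCE}_{\mathcal A}\cong[\mathbb{EDNCE},\mathbf{Set}]/R(\mathcal L)$ without any component-by-component verification. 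Your $T$ is precisely $R(\mathcal L)$ computed explicitly. Your route is more elementary and self-contained; the paper's route avoids the bookkeeping and makes transparent why the construction must work, at the cost of invoking Kan extensions.
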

\begin{proof}
We will use a proof strategy which is similar to the one used in
Lemma~\ref{lem:multigraph-adhesive}, where we showed that the category of
labelled multigraphs is adhesive by showing that it is isomorphic to a
slice category which can easily be shown to be adhesive by the lemmas
in Section~\ref{sec:adhesive}.

Consider an arbitrary object $(G,l) \in$ \textbf{MultiEdNCE}.
$l = (l_V, l_E, l_{C^0}, l_{C^1}, l_P)$ is a 5-tuple of labelling functions,
where
each $l_X$ is a morphism in \textbf{Set}. Thus, setting:
\begin{align*}
\mathcal S &:= (V_G, E_G, C^0_G, C^1_G, P_G)\\
\mathcal L &:= (\Sigma, \Gamma, \Sigma\times \Gamma \times \Gamma,
\Sigma\times \Gamma \times \Gamma, \Sigma - \Delta)
\end{align*}
we can see that $l: \mathcal S \to \mathcal L$ is a morphism in
$\mathbf{Set}^5$. Next, consider the discrete category of five objects, which
we shall denote with $\mathbb{EDNCE}'$. So, $\mathbb{EDNCE}'$ is the same as
$\mathbb{EDNCE}$
from Definition~\ref{def:unlabelled_multiednce}, where all of its
non-trivial morphisms are removed. Let's denote (for brevity)
$\mathbf{C} :=[\mathbb{EDNCE}', \mathbf{Set}]$ and 
$\mathbf{D} :=[\mathbb{EDNCE}, \mathbf{Set}].$

It's easy to see that $\mathbf{C} \cong \mathbf{Set}^5.$ Therefore, we can see
the labelling
$l$ as a morphism in $\mathbf{C}$ and we can see
$\mathcal{S}, \mathcal{L}$ as objects in $\mathbf{C}.$

There's an obvious embedding $E: \mathbb{EDNCE}' \to \mathbb{EDNCE}.$
$E$ can be
used to define a forgetful functor $U: \mathbf{D} \to
\mathbf{C},$ by setting:
\[U(-) := - \circ E\]
Therefore, $U(G) = \mathcal S$ and thus
the labelling $l$ can be seen as a morphism $l: U(G) \to \mathcal L$
in $\mathbf{C}.$ The category $\mathbf{Set}$ has all
finite limits and therefore $U$ has a right-adjoint $U \dashv R$, given by
the right Kan extension $R(-) := Ran_E(-).$ From the adjunction, we get
a natural isomorphism $\Phi$ which provides a family of bijections:
\[
\Phi_{G, \mathcal L}:
hom_{\mathbf{C}} (U(G), \mathcal L)
\to
hom_{\mathbf{D}} (G, R(\mathcal L))
\]
Since the labelling set $\mathcal L$ is fixed by the alphabets, we
get a bijection between objects $(G, l) \in$
\textbf{MultiEdNCE} and objects $(G, \Phi_{G, \mathcal L}(l)) \in
\mathbf{D}/R(\mathcal L).$

Next, for any morphism $f: (G_1, l_1) \to (G_2, l_2) \in$ \textbf{MultiEdNCE}
we know $f$ is a natural transformation from $G_1$ to $G_2$ which is in
addition
subject to the labelling restrictions of Definition~\ref{def:multi-ednce}.  Any
morphism $f: (G_1, \Phi_{G_1, \mathcal L}(l_1)) \to (G_2, \Phi_{G_2, \mathcal
L}(l_2))$ in $\mathbf D/R(\mathcal{L})$ is a natural transformation from
$G_1$ to $
G_2$, subject to the slice restriction, which is equivalent to the labelling
restrictions of Definition~\ref{def:multi-ednce}. Therefore,
$\mathbf{MultiEdNCE}\cong \mathbf{D}/R(\mathcal L) = [\mathbb{EDNCE},
\mathbf{Set}]/R(\mathcal L).$

Finally, it is easy to see that $[\mathbb{EDNCE}, \mathbf{Set}]/R(\mathcal L)$
is adhesive. From Example\ref{ex:set} we know that $\mathbf{Set}$ is adhesive.
Then, Lemma~\ref{lem:adhesive_functor} implies that $[\mathbb{EDNCE},
\mathbf{Set}]$ is adhesive. From Lemma~\ref{lem:adhesive_slice}, we know
that adhesive categories are closed under the slice construction and
therefore $[\mathbb{EDNCE}, \mathbf{Set}]/R(\mathcal L)$ is adhesive,
which completes the proof.
\end{proof}

As we have shown in the background chapter, this means we can do DPO
rewriting in $\mathbf{MultiEdNCE}.$ This is the main result of this
subsection, but before we conclude it, we will present two additional
lemmas which will be helpful for proofs in later sections.
The first lemma will be useful when working with pushout squares in
$\mathbf{MultiEdNCE}.$
The second lemma characterises the matching conditions in
$\mathbf{MultiEdNCE}.$

\begin{lemma}\label{lem:joint_surjection}
Given a commutative square in $\mathbf{MultiEdNCE}$:
\cstikz{commutative_square.tikz}
where all morphisms are monomorphisms, then the square is a pushout iff
$m$ and $s$ are jointly surjective.
\end{lemma}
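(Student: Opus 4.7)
The plan is to exploit the characterization $\mathbf{MultiEdNCE} \cong [\mathbb{EDNCE}, \mathbf{Set}]/R(\mathcal{L})$ established in the preceding theorem. Both pushouts and monomorphisms in this category are computed componentwise in $\mathbf{Set}$ (pushouts in functor categories into $\mathbf{Set}$ are pointwise, and pushouts in a slice category of an adhesive category are inherited from the ambient category). Joint surjectivity of $m$ and $s$ also reduces to joint surjectivity on each component $X \in \{V, E, P, C^0, C^1\}$. Hence the lemma reduces to the corresponding statement in $\mathbf{Set}$: a commutative square of monomorphisms $I \xrightarrow{l} L, I \xrightarrow{k} K, L \xrightarrow{m} H, K \xrightarrow{s} H$ is a pushout iff $m$ and $s$ are jointly surjective.

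For the forward direction, I would invoke the explicit description of pushouts in $\mathbf{Set}$ given in Subsection~2.1.1: the pushout is $(L \sqcup K)/\sim$, where $\sim$ is generated by $j_1(l(i)) \sim j_2(k(i))$, with induced maps $[j_1]$ and $[j_2]$. By construction every equivalence class has a representative in $L$ or in $K$, so these induced maps are jointly surjective onto the pushout object. Since the pushout is unique up to isomorphism and any isomorphism is surjective, $m$ and $s$ inherit joint surjectivity.

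For the backward direction, assume $m$ and $s$ are jointly surjective and let $(f: L \to X, g: K \to X)$ be an arbitrary cocone with $f \circ l = g \circ k$. I would define the mediating morphism $h: H \to X$ by $h(y) := f(y')$ if $y = m(y')$, and $h(y) := g(y'')$ if $y = s(y'')$, which exhausts $H$ by joint surjectivity. Uniqueness of $h$ is immediate (any $h'$ with $h' \circ m = f$ and $h' \circ s = g$ agrees with $h$ on $m(L) \cup s(K) = H$). The hard part is \emph{well-definedness}: whenever $y = m(y') = s(y'')$, I must show $f(y') = g(y'')$. Using injectivity of $m$ and $s$, each such $y$ has unique preimages $y', y''$, and I must argue these arise from a common $z \in I$ with $l(z) = y'$ and $k(z) = y''$, after which cocone compatibility $f \circ l = g \circ k$ finishes the job.

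The main obstacle is this overlap-through-$I$ step, i.e.\ verifying $m(L) \cap s(K) = m(l(I))$. This is the pullback condition, and is where the four-morphism structure of the given commutative square matters: since all morphisms are mono and the square commutes, $m(l(I)) \subseteq m(L) \cap s(K)$ holds automatically, and I would establish the reverse inclusion by exploiting the componentwise structure together with the fact that we are dealing with a \emph{given} commutative square of monomorphisms in which no element of $I$ is "missing". I expect this to require a careful diagram chase, and once it is settled the rest of the proof is routine bookkeeping across the five components of $\mathbb{EDNCE}$.
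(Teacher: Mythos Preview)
You have correctly located the crux of the backward direction: well-definedness of the mediating map hinges on the pullback condition $m(L) \cap s(K) = m(l(I))$ (componentwise). But your claim that this reverse inclusion can be obtained by a ``careful diagram chase'' from the given hypotheses is wrong --- it simply does not follow. A counterexample already in $\mathbf{Set}$: take $I = \emptyset$, $L = K = H = \{\ast\}$ with the unique maps. This is a commutative square of monomorphisms with $m$ and $s$ jointly surjective, yet it is not a pushout (the pushout of $\{\ast\} \leftarrow \emptyset \to \{\ast\}$ is a two-element set). Here $m(L) \cap s(K) = \{\ast\}$ while $m(l(I)) = \emptyset$. This lifts componentwise to $\mathbf{MultiEdNCE}$: let all four grammars have a single production with the same label, give $G_I$ no vertices, and give each of $G_L, G_K, G_H$ a single vertex of the same label mapped in the obvious way.

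So the step you flagged as ``the main obstacle'' is in fact insurmountable under the stated hypotheses: the lemma as written is false in the $(\Leftarrow)$ direction. The paper's own proof has exactly the same gap --- it writes down the case split for $u$ and argues uniqueness from joint surjectivity, but never checks that the second and third cases agree on $m(G_L) \cap s(G_K) \setminus m(l(G_I))$. The downstream uses of the lemma in the paper are not actually affected, since in each application the square in question arises from a genuine pushout or pushout complement and the pullback condition holds by construction; but the lemma itself needs that pullback hypothesis added, and no diagram chase will manufacture it from commutativity and injectivity alone.
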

\begin{proof}
In the first direction, the proof is very simple -- if an element $x \in G_H$
is not in the image of both $m$ and $s$, then we immediately get a
contradiction with the universality of the pushout by considering the
grammar $G_H -x$.

In the other direction, consider an arbitrary grammar $G_H'$ and morphisms
$m': G_L \to G_H'$ and $s': G_K \to G_H'$ with $m' \circ l = s' \circ k$, like
in the diagram below:
\cstikz{pushout_universality.tikz}
Now, consider a morphism $u: G_H \to G_H'$, such that $m' = u \circ m$
and $s' = u \circ s$. This requires
\[
 u(x) =
  \begin{cases} 
   m'\circ l(x) = s' \circ k(x) & \text{if } x \in m\circ l(G_I) = s \circ
     k(G_I) \\
   m'(x)  & \text{if } x \not\in m\circ l(G_I),\ x\in m(G_L) \\
   s'(x)  & \text{if } x \not\in s\circ k(G_I),\ x\in s(G_K) \\
  \end{cases}
\]
Because $m$ and $s$ are jointly surjective, this implies the above definition
of $u$ is unique and therefore the original square is a pushout.
\end{proof}

\begin{lemma}\label{lem:matching-ednce}
Given a pair of monomorphisms $G_H \xleftarrow{m} G_L \xleftarrow{l} G_I$
in $\mathbf{MultiEdNCE}$, their pushout complement exists iff the following
conditions are satisfied:
\begin{description}
\item[No dangling edges:] no edge $e \in E_{G_H},\ e\not\in m_E(E_{G_L})$ is
incident to a vertex $v \in m_V(V_{G_L} - l_V(V_{G_I}))$.
\item[No dangling connection instructions:] no connection instruction $c \in
C^i_{G_H}, c \not\in m_{C^i}(C^i_{G_L})$ is attached to a vertex $v \in
m_V(V_{G_L} - l_V(V_{G_I})).$
\item[No dangling vertices:] no vertex $v\in V_{G_H}, v \not\in m_V(V_{G_L})$
is in a production $p \in m_P(P_{G_L} - l_P(P_{G_I})).$
\end{description}
\end{lemma}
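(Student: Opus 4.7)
The plan is to proceed by analogy with the matching conditions for multigraphs (Example~\ref{ex:graph_matching}), exploiting the structural observation from the previous theorem that $\mathbf{MultiEdNCE}_{\mathcal{A}} \cong [\mathbb{EDNCE}, \mathbf{Set}]/R(\mathcal{L})$. Since pushouts (and hence pushout complements) in functor categories into $\mathbf{Set}$, as well as in their slices, are computed componentwise in $\mathbf{Set}$, and since matchings in $\mathbf{Set}$ always admit a pushout complement, the only obstructions to forming a pushout complement in $\mathbf{MultiEdNCE}$ can come from the componentwise construction failing to respect the assigning functions $s, t, c^0, c^1, p^V, p^E$ of $\mathbb{EDNCE}$. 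The three conditions of the lemma are exactly the conditions that ensure the componentwise construction gives a well-defined multi-edNCE grammar.

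For the backward direction ($\Leftarrow$), I would explicitly construct the candidate pushout complement $G_K$ as the full subgrammar of $G_H$ with components $X_K := X_H - m_X(X_L - l_X(X_I))$ for each $X \in \{V, E, C^0, C^1, P\}$, with all assigning functions inherited from $G_H$ by restriction. The key verification is that the restrictions are well-defined as maps of the appropriate codomain: the no-dangling-edges condition guarantees that $s_H$ and $t_H$ restrict to $E_K \to V_K$; the no-dangling-connection-instructions condition guarantees that $c^0_H$ and $c^1_H$ restrict to $C^i_K \to V_K$; and the no-dangling-vertices condition guarantees that $p^V_H$ restricts to $V_K \to P_K$ (while $p^E_H$ restricts to $E_K \to P_K$ by combining no-dangling-edges with no-dangling-vertices applied via the commutativity $p^V \circ s = p^E$). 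The embedding $s: G_K \hookrightarrow G_H$ together with $k := l$ (set-theoretically) then make the square commute, and joint surjectivity of $m$ and $s$ is immediate from the construction, so Lemma~\ref{lem:joint_surjection} yields that the square is a pushout.

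For the forward direction ($\Rightarrow$), I would argue by contraposition: if any of the three conditions fails, then no pushout complement can exist. Suppose, for instance, that some edge $e \in E_{G_H} - m_E(E_{G_L})$ is incident to a vertex $v \in m_V(V_{G_L} - l_V(V_{G_I}))$. In any pushout complement $(G_K, k, s)$, joint surjectivity forces $e$ into $s_E(E_{G_K})$, which forces the endpoint $v$ into $s_V(V_{G_K})$; but $v$ is the image under $m_V$ of an interior vertex, so $v$ must also be in $m_V(V_{G_L})$ but not in $m_V \circ l_V(V_{G_I})$, contradicting the pushout property at the vertex component (where the preimages of $v$ under $m_V$ and $s_V$ would have to be identified through $l_V$ and $k_V$). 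The analogous contradictions for dangling connection instructions and dangling vertices proceed in the same way, using the commutativity requirements between the assigning functions to propagate the contradiction to the appropriate component.

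The main obstacle I anticipate is the production component, since productions sit at the ``top'' of the assigning hierarchy and are governed by both $p^V$ and $p^E$ (related by the commutativity $p^V \circ s = p^E = p^V \circ t$). The no-dangling-vertices condition handles exactly this: removing a production from $P_K$ requires that all vertices and edges in that production also be removed, which is the direct analogue at the production level of what no-dangling-edges does at the vertex level. Once this symmetry is recognised, the arguments for all three conditions become structurally identical, and the componentwise perspective from the functor-category characterisation keeps the bookkeeping manageable.
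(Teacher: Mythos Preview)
Your proposal is correct and follows essentially the same approach as the paper: the paper also proves the backward direction by constructing $G_K$ componentwise as $X_{G_K} := X_{G_H} - m_X(X_{G_L} - l_X(X_{G_I}))$, checks that the no-dangling conditions make this a well-defined subgrammar, and concludes via the joint-surjectivity characterisation of Lemma~\ref{lem:joint_surjection}; the forward direction likewise derives a contradiction from each violated condition using joint surjectivity of $m$ and $s$ in the pushout square. Your more explicit verification that the assigning functions restrict correctly (including the observation that $p^E$ is handled via $p^V \circ s = p^E$) simply fleshes out what the paper compresses into a single sentence.
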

\begin{proof}
Consider the following square:
\cstikz{pushout_complement.tikz}
$(\Longrightarrow)$
Let's assume that the above square is a pushout.  From
Lemma~\ref{lem:pushout_complement_monos} we know that $s$ and $k$ are
monomorphisms. Then, from Lemma~\ref{lem:joint_surjection}  we know that $s$
and $m$ must be jointly surjective.

Assume the no dangling edges condition is violated, that is, 
there exists $e \in E_{G_H},\ e\not\in m_E(E_{G_L})$ and $e$ is incident to a
vertex $v \in m_V(V_{G_L} - l_V(V_{G_I}))$. Joint surjectivity of $m_E$ and
$s_E$ implies
$e \in s_E(E_{G_K})$ and therefore $v \in s_V(V_{G_K})$. However, since $G_H$
is a pushout
and $v$ is in the image of both $m_V$ and $s_V$, it follows $v \in m_V\circ l_V
(V_{G_I})$
which is a contradiction.

The case for connection instructions is similar. Assume the no dangling
connection instructions condition is violated, that is, there exists a
connection instruction $c \in C^i_{G_H}, c \not\in m_{C^i}(C^i_{G_L})$ and $c$
is associated to a
vertex $v \in m_V(V_{G_L} - l_V(V_{G_I}))$. Joint surjectivity of $m_{C^i}$ and
$s_{C^i}$ implies $c
\in s_{C^i}(C^i_{G_K})$ and therefore $v \in s_V(V_{G_K})$. However, since
$G_H$ is a pushout and
$v$ is in the image of both $m_V$ and $s_V$, it follows $v \in m_V\circ l_V
(V_{G_I})$
which is a contradiction. 

For the last case, assume the no dangling vertices condition is violated. That
is, there exists a vertex $v\in V_{G_H}, v \not\in m_V(V_{G_L})$ and $v$ is in
a
production $p \in m_P(P_{G_L} - l_P(P_{G_I})).$ Joint surjectivity of $m_V$ and
$s_V$ implies
$v \in s_V(V_{G_K})$ and therefore $p \in s_P(P_{G_K})$. However, since $G_H$
is a pushout
and $p$ is in the image of both $m_P$ and $s_P$, it follows $p \in m_P\circ l_P
(P_{G_I})$
which is a contradiction.

$(\Longleftarrow)$ Let $G_K$ be the full subgrammar of $G_H$ whose components
are given by $X_{G_K} := X_{G_H} - m_X(X_{G_L} - l_X(X_{G_I})),$ for $X \in
\{V, E, P, C^0, C^1\}.$ The no dangling conditions ensure that $G_K$ is a
well-defined multi-edNCE grammar.
Set the monomorphism $s$ to be simply
the full subgrammar inclusion of $G_K$ into $G_H$. Set $k$ to be
set-theoretically
equal to $m \circ l$. Then, clearly the square commutes and $k$ is also a
monomorphism. Then, by Lemma~\ref{lem:joint_surjection}, we can complete the
proof by showing that $m$ and $s$ are jointly surjective.

Since $s$ is a subgrammar inclusion, then for every component $X$, the image of
$s_X$ is $X_{G_H} - m_X(X_{G_L} - l_X(X_{G_I}))$ by definition. Combining
this with the other monomorphism $m$, we get that $s_X$ and $m_X$ jointly
cover:
\[X_{G_H} - m_X(X_{G_L} - l_X(X_{G_I})) \cup m_X(X_{G_L}) = X_{G_H}\]
Therefore, $s$ and $m$ are jointly surjective.
\end{proof}

The \textbf{no dangling edges} condition is the same as for the case
of multigraphs. The other two conditions are clearly very similar in spirit
to the no dangling edges condition. This is a consequence of the fact
that pushouts are computed component-wise in \textbf{Set} over some constraints
imposed by the structure of our grammars.

\subsection{edNCE grammars}\label{sub:ednce-partial}

In this subsection we will define the category of edNCE graph grammars and show
that it is partially adhesive, where the ambient adhesive category is
$\mathbf{MultiEdNCE}.$ The edNCE graph grammars have been defined in
Section~\ref{sec:graph-grammars} and we begin by first describing a
homomorphism between two edNCE graph grammars. We will not be using any
nonfinal edge labels (see Corollary~\ref{cor:nonblocking}), so our grammars can
be labelled using the same triple of labelling alphabets $\mathcal A = (\Sigma,
\Delta, \Gamma)$ as for $\mathbf{MultiEdNCE}$.

\begin{definition}[Grammar homomorphism]
Given two edNCE grammars
$G_1 = (\Sigma, \Delta, \Gamma,  \Gamma, P_1, S_1)$ and
$G_2 = (\Sigma, \Delta, \Gamma,  \Gamma, P_2, S_2)$, a \emph{grammar
homomorphism} from $G_1$ to $G_2$ is a function $m : P_1 \to P_2$, together
with a collection of extended graph homomorphisms $m_{p_i} : rhs(p_i) \to
rhs(m(p_i))$ one for each production $p_i \in P_1$, such that $lhs(p_i) =
lhs(m(p_i)).$
\end{definition}

\begin{definition}[Category of edNCE grammars]
The category of all edNCE grammars over the triple of labelling alphabets
$\mathcal A = (\Sigma, \Delta, \Gamma)$
is denoted by $\mathbf{edNCE}_{\mathcal A}$ or just by
$\mathbf{edNCE}$
if the alphabets are clear from the context. Its objects are edNCE grammars
$G = (\Sigma, \Delta, \Gamma, \Gamma, P, S)$
and the morphisms of the category are edNCE grammar
homomorphisms.
\end{definition}

The following theorem describes the relationship between edNCE grammars and
their generalized versions -- multi-edNCE grammars.

\begin{theorem}\label{thm:edNCEvsMultiEdNCE}
$\mathbf{edNCE}$ is a partially adhesive category whose ambient
adhesive category is $\mathbf{MultiEdNCE}.$
\end{theorem}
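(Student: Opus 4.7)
The strategy is to exhibit an explicit full and faithful functor $\mathcal{S}: \mathbf{edNCE} \to \mathbf{MultiEdNCE}$ that preserves monomorphisms. Since we have already established that $\mathbf{MultiEdNCE}$ is adhesive, this will suffice to conclude partial adhesivity by Definition of partially adhesive category.

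\textbf{Construction of $\mathcal{S}$.} For an edNCE grammar $G = (\Sigma, \Delta, \Gamma, \Gamma, P, S)$, I would define $\mathcal{S}(G)$ as follows: the set $P_{\mathcal{S}(G)}$ is $P$; for each production $p = X \to (D,C) \in P$, the vertices and edges in $p$ are simply $V_D$ and $E_D$, with edge-endpoint functions $s,t$ reading off the source and target of each tuple $(v,\gamma,w) \in E_D$, and the labellings induced in the obvious way. For each production, $C^0_{\mathcal{S}(G)}$ and $C^1_{\mathcal{S}(G)}$ are the sets of connection instructions in $C$ with direction $in$ and $out$ respectively, with $c^i$ assigning each such instruction to its associated vertex and labels taken from its $(\sigma,\beta,\gamma)$-components. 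On morphisms, an edNCE grammar homomorphism $(m, \{m_{p_i}\})$ maps to the evident natural transformation whose vertex- and production-components are $m$ and the vertex parts of $m_{p_i}$; the edge- and connection-instruction components are then \emph{uniquely determined} by sending each edge $(v,\gamma,w)$ to $(m(v),\gamma,m(w))$ and each connection instruction $(\sigma,\beta,\gamma,v,d)$ to $(\sigma,\beta,\gamma,m(v),d)$. These targets exist in $\mathcal{S}(G_2)$ because $m_{p_i}$ is an extended graph homomorphism.

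\textbf{Functoriality, faithfulness, and preservation of monos.} Functoriality is immediate because composition in both categories is computed component-wise. Faithfulness is equally direct: any two edNCE morphisms that agree after $\mathcal{S}$ must agree on productions and on vertices (which are literally preserved by $\mathcal{S}$), and an edNCE morphism is determined by exactly this data. For preservation of monomorphisms, I would argue that a mono in $\mathbf{edNCE}$ is precisely a grammar homomorphism that is injective on productions and injective on vertices of each production (this can be shown by a standard argument using ``test'' grammars consisting of single vertices/productions). Under $\mathcal{S}$, injectivity on vertices forces injectivity on edges and connection instructions, because in an edNCE grammar an edge (resp.\ connection instruction) is uniquely identified by its endpoints and label (resp.\ associated vertex and label). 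So $\mathcal{S}(m)$ is injective on all five components and hence a monomorphism in $\mathbf{MultiEdNCE}$.

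\textbf{Fullness.} This is where I expect the real work to lie. Given a $\mathbf{MultiEdNCE}$-morphism $\varphi: \mathcal{S}(G_1) \to \mathcal{S}(G_2)$, I need to produce an edNCE-morphism $f$ with $\mathcal{S}(f) = \varphi$. The production- and vertex-components of $\varphi$ supply the production map and the vertex parts of the candidate extended graph homomorphisms $m_{p_i}$. I must check these really define extended graph homomorphisms, i.e.\ that $(v,\gamma,w) \in E_{D_1}$ implies $(\varphi_V(v),\gamma,\varphi_V(w)) \in E_{D_2}$, and analogously for $C$. This follows from the commuting squares defining $\varphi$ as a natural transformation respecting labellings: the image under $\varphi_E$ of an edge $(v,\gamma,w)$ is some edge of $\mathcal{S}(G_2)$ with endpoints $\varphi_V(v),\varphi_V(w)$ and label $\gamma$, and since edNCE grammars forbid parallel edges with the same label, this edge must be exactly the tuple $(\varphi_V(v),\gamma,\varphi_V(w))$; moreover this forces $\varphi_E$ itself to equal $\mathcal{S}(f)_E$. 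The same argument works for connection instructions. Hence $\varphi = \mathcal{S}(f)$ as required, completing the proof.
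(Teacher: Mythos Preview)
Your proposal is correct and follows essentially the same approach as the paper: define the explicit functor $\mathcal{S}$ by reading off productions, vertices, edges, and connection instructions, then verify fullness, faithfulness, and preservation of monomorphisms. The paper is considerably terser---it simply asserts that ``it's easy to check that the functor $\mathcal{S}$ as defined is full and faithful''---whereas you actually spell out the fullness argument, correctly identifying that the absence of parallel edges and parallel connection instructions in edNCE grammars is what forces the edge and connection-instruction components of any $\mathbf{MultiEdNCE}$-morphism $\varphi$ between images of $\mathcal{S}$ to be determined by its vertex component.
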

\begin{proof}
We define a functor $\mathcal S: \mathbf{edNCE} \to \mathbf{MultiEdNCE}$ in the
following way. Given an edNCE grammar $G = (\Sigma, \Delta, \Gamma, \Gamma, P,
S),$ we define $\mathcal S(G) := H,$ where $H$ is the multi-edNCE grammar
whose components and assigning functions are given by:
\begin{align*}
P_H &= P & V_H &= \bigcup_{p \in P} V_{rhs(p)}
\quad\quad E_H = \bigcup_{p \in P} E_{rhs(p)} 
\quad\quad C^i_H = \bigcup_{p \in P} C^{i}_{rhs(p)} \\
s &:: (v, \alpha, w) \mapsto v & t &:: (v, \alpha, w) \mapsto w
\quad\quad p^E :: e \mapsto p, \text{ if } e \in E_{rhs(p)}\\
c^i &:: (\sigma, \alpha, \beta, v, d) \mapsto v & p^V &:: v \mapsto p,
\text{ if } v \in V_{rhs(p)}\\
l_P &:: p \mapsto lhs(p) & l_V &:: v \mapsto \lambda_{rhs(p)}(v),
  \text{ if } v \in V_{rhs(p)} \\
l_E &:: (v, \alpha, w) \mapsto \alpha &
l_{C^i} &:: (\sigma, \alpha, \beta, v, d) \mapsto (\sigma, \alpha, \beta)
\end{align*}

Given a morphism $(f, \{f_p\}_{p \in P}) $
between edNCE grammars
$G = (\Sigma, \Delta, \Gamma, \Gamma, P, S)$ and
$G' = (\Sigma, \Delta, \Gamma, \Gamma, P', S'),$
we define its mapping under $\mathcal S$ to be the natural transformation
$\phi$ induced by setting:
\begin{align*}
\phi_P &= f \\
\phi_V &:: v \mapsto f_p(v), \text{ if } v \in V_{rhs(p)}
\end{align*}
The rest of the components of $\phi$ are uniquely determined
from the commutativity conditions the natural transformation has to satisfy.
Nevertheless, we provide them for completeness:
\begin{align*}
\phi_E &:: (v, \alpha, w) \mapsto (f_p(v), \alpha, f_p(w)), &\text{ if } v,w \in
V_{rhs(p)}\\
\phi_{C^i} &:: (\sigma, \alpha, \beta, v, d) \mapsto (\sigma, \alpha, \beta,
f_p(v), d), &\text{ if } v \in V_{rhs(p)}
\end{align*}
It's easy to check that the functor $\mathcal S$ as defined is full and
faithful. Finally, we have to show that $\mathcal S$ preserves monomorphisms.
A morphism $(f, \{f_p\}_{p \in P})$ in $\mathbf{edNCE}$ is a mono iff
$f$ and each $f_p$ are injective functions. Then, its mapping under $\mathcal
S$ is a natural transformation $\phi$, such that all of its components $\phi_X$
are also injective functions, which are precisely the monos in
$\mathbf{MultiEdNCE}.$
\end{proof}

Given a grammar homomorphism $f$ between two edNCE grammars $G_1$ and $G_2$,
if $v$ is some vertex in (the RHS of) a production $p \in P_1$, then through
abuse of notation we will use $f(v)$ to refer to the vertex $f_{p}(v).$
This shouldn't lead to confusion, as each vertex $v$ of $G_1$ is in a unique
production and we can clearly differentiate between vertices of a grammar
and the productions of a grammar. In that sense, we may think of the
functions $f_{p}$ as restrictions of $f$ to the extended graph
$rhs(p).$

Following the example from Section~\ref{sec:partial_adhesive}, we will
characterise the partial adhesive conditions of $\mathbf{edNCE}$ which allow us
to do DPO rewriting in the same way as in $\mathbf{MultiEdNCE}$. We begin by
introducing a lemma which will help us with some of the remaining proofs.
The lemma shows that the functor $\mathcal S : \mathbf{edNCE} \to
\mathbf{MultiEdNCE}$ is essentially surjective.

\begin{lemma}\label{lem:ess-surj}
For any multi-edNCE grammar $G \in \mathbf{MultiEdNCE}$, such that
$G$ does not have any parallel edges, parallel connection instructions
or self-loops,
there exists an
edNCE grammar $H \in \mathbf{edNCE},$ such that $\mathcal S(H)$ is isomorphic
to $G$.
\end{lemma}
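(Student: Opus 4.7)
The plan is to construct $H$ directly by reading off the productions from the components of $G$ and then to exhibit an explicit isomorphism $\mathcal{S}(H) \cong G$. The hypotheses that $G$ has no parallel edges, no parallel connection instructions, and no self-loops are exactly what is needed so that the data of $G$ fits into the more restrictive format of $\mathbf{edNCE}$, which is defined over graphs in the sense of Definition~\ref{def:graph}.

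First I would define $H$ as follows. Take the labelling alphabets to be the same triple $\mathcal{A} = (\Sigma, \Delta, \Gamma)$, and let the production set of $H$ have one production $\tilde{p}$ for each $p \in P_G$, with $lhs(\tilde{p}) := l_P(p)$. The right-hand side $(D_p, C_p)$ of $\tilde{p}$ is built as follows: the vertex set is $V_{D_p} := (p^V)^{-1}(p)$ with labelling inherited from $l_V$; the edge set is
\[
E_{D_p} := \{(s(e), l_E(e), t(e)) \mid e \in (p^E)^{-1}(p)\};
\]
and the connection relation $C_p$ collects the tuples $(\sigma, \alpha, \beta, c^i(c), d_i)$ for each $c \in (c^i)^{-1}((p^V)^{-1}(p))$ with $l_{C^i}(c) = (\sigma, \alpha, \beta)$, where $d_0 = in$ and $d_1 = out$. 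The initial nonterminal $S$ can be chosen to be any element of $\Sigma - \Delta$ (the statement only requires isomorphism of the underlying multi-edNCE grammar under $\mathcal{S}$, which does not track the start symbol).

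The main thing to check is that $(D_p, C_p)$ is actually a well-formed extended graph in the sense of Definition~\ref{def:extended-graph}, and that $H$ is a well-formed edNCE grammar. The commutativity conditions from Definition~\ref{def:unlabelled_multiednce} ($p^V \circ s = p^E = p^V \circ t$) guarantee that each edge $e \in (p^E)^{-1}(p)$ has both endpoints in $V_{D_p}$, so $E_{D_p}$ is well-defined. The absence of self-loops in $G$ ensures $s(e) \neq t(e)$, and the absence of parallel edges ensures the map $e \mapsto (s(e), l_E(e), t(e))$ is injective, so $E_{D_p} \subseteq V_{D_p} \times \Gamma \times V_{D_p}$ really describes a graph as per Definition~\ref{def:graph}. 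Analogously, the absence of parallel connection instructions ensures $C_p \subseteq \Sigma \times \Gamma \times \Gamma \times V_{D_p} \times \{in, out\}$ has no redundant tuples and the correspondence is bijective.

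Next I would construct the isomorphism $\phi : \mathcal{S}(H) \to G$ in $\mathbf{MultiEdNCE}$. Unfolding the definition of $\mathcal{S}$ in the proof of Theorem~\ref{thm:edNCEvsMultiEdNCE}, the components of $\mathcal{S}(H)$ are obtained by taking disjoint unions across productions, but because of how $H$ was defined from $G$ there are canonical bijections on each component: $\phi_P$ sends $\tilde{p} \mapsto p$; $\phi_V$ is the identity on the underlying set; $\phi_E$ sends the edge $(s(e), l_E(e), t(e)) \in E_{D_p}$ back to the original $e \in E_G$ (well-defined by injectivity established above); and $\phi_{C^i}$ likewise sends each reconstructed connection instruction back to its originator $c \in C^i_G$. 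A routine diagram chase shows that all the naturality squares of Definition~\ref{def:unlabelled_multiednce} and all the labelling squares of Definition~\ref{def:multi-ednce} commute, so $\phi$ is a morphism in $\mathbf{MultiEdNCE}$, and since all of its components are bijections it is an isomorphism.

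I do not expect any serious obstacle: the entire content of the lemma is bookkeeping, and the three exclusion hypotheses are used in a one-to-one fashion (no self-loops for $s \neq t$; no parallel edges for injectivity on $E$; no parallel connection instructions for injectivity on $C^i$). The mildly delicate point is simply to be careful that the components of $\mathcal{S}(H)$, which are formally the disjoint union $\bigcup_{p} V_{rhs(\tilde{p})}$ etc., are identified with the components of $G$ via the partition induced by $p^V$ and $p^E$; once that identification is made the isomorphism is immediate.
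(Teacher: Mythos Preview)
Your proposal is correct and follows essentially the same approach as the paper: define one production $Y_p \to (D_p, C_p)$ of $H$ for each $p \in P_G$ by reading off vertices, edges, and connection instructions from the fibres over $p$, use the three hypotheses exactly as you describe to ensure well-formedness and injectivity, and then observe that $\mathcal{S}(H)$ agrees with $G$ up to the obvious bijection on productions. The paper is slightly terser about the isomorphism (it just notes $V_K = V_G$, $E_K = E_G$, $C^i_K = C^i_G$ and that only $P_K$ differs from $P_G$ by the renaming $p \mapsto (Y_p \to (D_p, C_p))$), but the content is the same.
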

\begin{proof}
Let the components and assigning functions of $G$ be described as
in Definition~\ref{def:multi-ednce}.
For every production $p \in P_G$, we define a production
$Y_p \to (D_p,C_p)$ of $H$ by setting:
\begin{align*}
Y_p &:= l_P(p)\\
V_{D_p} &:= \{v \in V_G\ |\ p^V(v) = p\}\\
\lambda_{D_p} &:: v \mapsto l_V(v) \\
E_{D_p} &:= \{(s(e), l_E(e), t(e)) |\ e \in E_G, p^E(e) = p \}\\
C_p &:= \{
(\sigma, \alpha, \beta, c^0(c), in) |\ c \in C^0_G,\ p^V\circ c^0(c) =
p,\ l_{C^0}(c) = (\sigma, \alpha, \beta)
\}\cup \\
&\cup
\{
(\sigma, \alpha, \beta, c^1(c), out) |\ c \in C^1_G,\ p^V\circ c^1(c) =
p,\ l_{C^1}(c) = (\sigma, \alpha, \beta)
\}
\end{align*}
Then, the set of productions of $H$ is given by $\{Y_p \to (D_p, C_p)\ |\ p
\in P_G\}.$ Note, that because $G$ does not contain any self-loops, this
implies that $H$ is a well-defined grammar (in particular, the RHS
of every production is a well-defined graph). Also, note that the definition of
edNCE grammar requires the set of productions to have the above form, that is,
every element must be of the form $Y \to (D,C),$ where $Y$ is a nonterminal
label and $(D,C)$ is an extended graph.

Now, consider the grammar $K := \mathcal S(H)$. It is easy to see that $V_K =
V_G.$ Because $G$ does not contain parallel connection instructions or parallel
edges, it follows that $E_K = E_G, C^i_K = C^i_G$ and that all assigning
functions of $K$, except for $p^V$ and $p^E$ are exactly the same as those of
$G$. However, observe that $P_G$ and $P_K$ are not necessarily the same,
but isomorphic. In particular, every element of $P_K$ is of the form
$X \to (D,C)$, whereas the elements of $P_G$ might be arbitrary. However,
there is an obvious isomorphism $i: P_G \to P_K$ which is simply:
\[i :: p \mapsto (Y_p \to (D_p, C_p))\]
and therefore $K$ and $G$ are isomorphic as required.
\end{proof}

\begin{remark}
Combining the above lemma with Theorem~\ref{thm:edNCEvsMultiEdNCE} implies
that the embedding functor $\mathcal S$ establishes a categorical equivalence
between the category $\mathbf{edNCE}$ and the full subcategory of
$\mathbf{MultiEdNCE}$ whose objects do not contain self-loops, parallel
edges or parallel connection instructions. However, this functor does not
establish an isomorphism between these two categories, because of the reason
mentioned in the proof above -- the elements of the set of productions of
edNCE grammars must be of a specific form, whereas those of multi-edNCE
grammars do not. Of course, this detail is irrelevant for practical
purposes, but we mention it in order to stay formal.
\end{remark}

Next, we characterise the $\mathcal{S}$-spans in \textbf{edNCE}.

\begin{lemma}\label{lem:s-spans}
A span of monomorphisms $G_K \xleftarrow{f} G_I \xrightarrow{g} G_R$ in
$\mathbf{edNCE}$ is an $\mathcal{S}$-span iff the following conditions hold:
\begin{description}
\item[ParEdges:] For any vertices $v,w \in G_I$, if there exist edges $(f(v),
\alpha, f(w)) \in G_K$ and $(g(v), \alpha, g(w)) \in G_R$ then there exists an
edge $(v, \alpha, w) \in G_I$.
\item[ParCI:] For any vertex $v \in G_I$, if there exist connection
instructions $(\sigma, \alpha,\beta, f(v), d) \in G_K$ and
$(\sigma, \alpha,\beta, g(v), d) \in G_R,$ then there exists a connection
instruction $(\sigma, \alpha,\beta, v, d) \in G_I.$
\end{description}
\end{lemma}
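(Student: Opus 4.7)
The plan is to reduce the question to an analysis of the component-wise pushout in $\mathbf{MultiEdNCE}$ and then use the characterisation from Lemma~\ref{lem:ess-surj} of which multi-edNCE grammars arise from edNCE grammars. Since $\mathbf{MultiEdNCE}$ is adhesive (we already know this from the previous subsection) and $\mathcal{S}$ preserves monomorphisms, the pushout $G_H$ of $\mathcal{S}(G_K) \xleftarrow{\mathcal{S}(f)} \mathcal{S}(G_I) \xrightarrow{\mathcal{S}(g)} \mathcal{S}(G_R)$ always exists in $\mathbf{MultiEdNCE}$, and its components are the pushouts in $\mathbf{Set}$ of the corresponding component spans. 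A span is an $\mathcal{S}$-span precisely when (a) this pushout $G_H$ lies (up to iso) in the essential image of $\mathcal{S}$, and (b) the resulting edNCE grammar is the pushout in $\mathbf{edNCE}$. By Lemma~\ref{lem:ess-surj}, condition (a) amounts to $G_H$ being free of self-loops, parallel edges with equal label, and parallel connection instructions with equal label. Condition (b) will follow almost for free, because $\mathcal{S}$ is full and faithful: any $\mathbf{edNCE}$-cocone, after applying $\mathcal{S}$, mediates uniquely through $G_H$, and that mediator reflects back to $\mathbf{edNCE}$.

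First I would verify that no self-loops are created in $G_H$. Vertices of $G_K$ can only be identified in $G_H$ with vertices of $G_R$, and such an identification always traces back to a vertex of $G_I$; since both $f$ and $g$ are monos, vertices of $G_K$ are never identified with other vertices of $G_K$, and likewise for $G_R$. Hence an edge in $G_H$ coming from a non-self-loop edge in $G_K$ or $G_R$ stays a non-self-loop, and since no new edges are introduced by the pushout, $G_H$ is self-loop-free.

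Next I would analyse parallel edges in $G_H$. Any two edges with the same label between the same pair of endpoints in $G_H$ must have one representative from $G_K$ and one from $G_R$ (otherwise they were already parallel in $G_K$ or $G_R$, contradicting that these are edNCE grammars). Writing these as $e_K = (f(v),\alpha,f(w))$ and $e_R = (g(v),\alpha,g(w))$ for some $v,w \in G_I$, they collapse in the set-pushout of edges iff there is a common preimage in $E_{G_I}$, i.e.\ iff $(v,\alpha,w) \in E_{G_I}$. Thus ParEdges is exactly the condition that no parallel edges are produced. An entirely parallel argument, using the set-pushout of in- and out-connection-instruction components, shows that ParCI is exactly the condition that no parallel connection instructions are produced.

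Combining these with Lemma~\ref{lem:ess-surj} yields: ParEdges and ParCI hold iff $G_H$ is isomorphic to $\mathcal{S}(H)$ for some edNCE grammar $H$. In that case, the cocone $(\mathcal{S}(G_K) \to G_H, \mathcal{S}(G_R) \to G_H)$ transports, through fullness and faithfulness of $\mathcal{S}$, to a cocone $(G_K \to H, G_R \to H)$ in $\mathbf{edNCE}$; universality in $\mathbf{MultiEdNCE}$ together with fullness and faithfulness of $\mathcal{S}$ then gives universality in $\mathbf{edNCE}$, so $H$ is a pushout and is preserved by $\mathcal{S}$. Conversely, if the span is an $\mathcal{S}$-span then the $\mathbf{MultiEdNCE}$-pushout lies in the image of $\mathcal{S}$, so neither parallel edges nor parallel connection instructions may appear, forcing ParEdges and ParCI. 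The main subtlety to watch for is the chain-free description of the pushout equivalence relation: because both $f$ and $g$ are mono, the equivalence on vertices (and, more carefully, on edges and connection instructions) consists only of the direct identifications $f(x)\sim g(x)$ with no longer chains, and this is what makes the above component-wise bookkeeping go through cleanly.
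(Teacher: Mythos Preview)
Your proposal is correct and follows essentially the same route as the paper: compute the pushout in $\mathbf{MultiEdNCE}$, use Lemma~\ref{lem:ess-surj} to reduce the question to the absence of self-loops, parallel edges, and parallel connection instructions, and then argue each of these component-wise (self-loops being ruled out by injectivity of $f$ and $g$, and \textbf{ParEdges}/\textbf{ParCI} being exactly the conditions preventing parallels). Your explicit treatment of universality via fullness and faithfulness is a slight unpacking of the paper's one-line ``$\mathcal{S}$ reflects pushouts'', but the argument is the same.
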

\begin{proof}
In the first direction, let's assume that the above conditions hold. Then,
consider the pushout of the span $\mathcal{S}(G_K) \xleftarrow{\mathcal S(f)}
\mathcal{S}(G_I) \xrightarrow{\mathcal S(g)} \mathcal{S}(G_R)$ in
\textbf{MultiEdNCE}:
\cstikz{s-span.tikz}
Because the embedding functor $\mathcal{S}: \mathbf{edNCE} \to
\mathbf{MultiEdNCE}$ reflects pushouts, it is sufficient to show that
$G_M, h$ and $k$ are in the image of $\mathcal S$. From the fullness of
$\mathcal S$ and Lemma~\ref{lem:ess-surj},
it therefore follows that we can complete the proof (in this
direction) by showing that
$G_M$ does not have parallel edges, parallel connection
instructions or self-loops.

Let's consider parallel edges first. $G_K$ and $G_R$ do not have parallel
edges. The pushout in \textbf{MultiEdNCE} is given by component-wise disjoint
union modulo the common components in $G_I$. Therefore if $G_M$ has a pair of
parallel edges, then their source and target vertices must be identified by the
morphisms $f$ and $g$. In other words, a pair of parallel edges in $G_M$ may
only be established if $G_R$ contains an edge $(g(v), \alpha, g(w))$ and $G_K$
contains an edge $(f(v), \alpha, f(w)),$ where $v,w \in G_I.$ But then, the
\textbf{ParEdges} condition requires the edge $(v, \alpha, w) \in G_I$ and thus
the pushout would establish exactly one edge with label $\alpha$ from $h\circ
\mathcal S(f)(v)$ to $h\circ \mathcal S(f)(w)$ in $G_M$.

Next, let's consider parallel connection instructions. The proof is fully
analogous to the case for edges -- $G_K$ and $G_R$ do not have parallel
connection instructions.  Therefore if $G_M$ has a pair of parallel connection
instructions, then their associated vertex must be identified by the morphisms
$f$ and $g$. In other words, a pair of parallel connection instructions in
$G_M$ may only be established if $G_R$ contains a connection instruction
$(\sigma, \alpha, \beta, g(v), d)$ and $G_K$ contains a connection instruction
$(\sigma, \alpha, \beta, f(v), d),$ where $v \in G_I.$ But then, the
\textbf{ParCI} condition requires the connection instruction $(\sigma, \alpha,
\beta, v, d) \in G_I$ and thus the pushout would establish exactly one
connection instruction with label $(\sigma, \alpha, \beta)$ and direction $d$
whose associated vertex is $h\circ \mathcal S(f)(v)$ in $G_M$.

To complete the proof in this direction, we need to show $G_M$ does not contain
self-loops. Neither $G_R$, nor $G_K$ contain self-loops. The pushout is
computed by taking their disjoint union and then identifying certain vertices,
edges, connection instructions and productions between $G_R$ and $G_K$ as being
the same. Taking their disjoint union clearly cannot result in self-loops. A
self-loop can only be established if two vertices $v,w$ from $G_R$ ($G_K$)
connected
by an edge $(v, \alpha, w)$ are identified as the same vertex in $G_K$ ($G_R$),
via the morphism $f$ ($g$).
However this is
impossible because both $f$ and $g$ are monomorphisms.

In the other direction, let's assume that we are given an $\mathcal{S}$-span.
If the \textbf{ParEdges} condition is violated, then the $\mathcal{S}$-pushout
in \textbf{MultiEdNCE} results in a grammar $\mathcal{S}(G_M)$ which contains a
pair of parallel edges, thus $G_M$ is not an object in \textbf{edNCE} which is
a contradiction. Similarly, if the \textbf{ParCI} condition is violated,
then the $\mathcal{S}$-pushout in \textbf{MultiEdNCE} results in a grammar with a
pair of parallel connection instructions and we get a contradiction.
\end{proof}

Next, we characterize the $\mathcal{S}$-matchings in \textbf{edNCE}. It turns
out that an $\mathcal{S}$-pushout complement in \textbf{edNCE} exists when
the same matching conditions are satisfied as in \textbf{MultiEdNCE} as we
can see in the following lemma.

\begin{lemma}\label{lem:s-pushout_complement}
Given a pair of monomorphisms $G_H \xleftarrow{m} G_L \xleftarrow{l} G_I$ in
$\mathbf{edNCE}$, $m$ is an $\mathcal{S}$-matching iff $\mathcal{S}(m)$ is a
matching (in $\mathbf{MultiEdNCE}$).
\end{lemma}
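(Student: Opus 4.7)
The plan is to prove the two directions of the biconditional separately, using Lemma~\ref{lem:matching-ednce} to characterise matching conditions in $\mathbf{MultiEdNCE}$ and Lemma~\ref{lem:ess-surj} to lift constructions back to $\mathbf{edNCE}$.

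For the forward direction ($\Longrightarrow$), I would unfold the definition. Assuming $m$ is an $\mathcal{S}$-matching, there exists an $\mathcal{S}$-pushout complement, so in particular a pushout complement of $(\mathcal{S}(l), \mathcal{S}(m))$ exists in $\mathbf{MultiEdNCE}$. By Lemma~\ref{lem:matching-ednce}, this means $\mathcal{S}(m)$ satisfies the no-dangling-edges, no-dangling-connection-instructions and no-dangling-vertices conditions, and hence $\mathcal{S}(m)$ is a matching.

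The interesting direction is ($\Longleftarrow$). Assuming $\mathcal{S}(m)$ is a matching, by Lemma~\ref{lem:matching-ednce} the pushout complement in $\mathbf{MultiEdNCE}$ exists and is explicitly given (up to iso) by the full subgrammar $M$ of $\mathcal{S}(G_H)$ whose components are $X_{\mathcal{S}(G_H)} - \mathcal{S}(m)_X(X_{\mathcal{S}(G_L)} - \mathcal{S}(l)_X(X_{\mathcal{S}(G_I)}))$, with the induced inclusion $s': M \hookrightarrow \mathcal{S}(G_H)$ and the morphism $k': \mathcal{S}(G_I) \to M$ which set-theoretically agrees with $\mathcal{S}(m \circ l)$. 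The first step is to observe that $\mathcal{S}(G_H)$ is in the image of $\mathcal{S}$ and therefore contains no self-loops, no parallel edges and no parallel connection instructions, and these properties are inherited by the full subgrammar $M$. By Lemma~\ref{lem:ess-surj}, there then exists an edNCE grammar $G_K \in \mathbf{edNCE}$ with an iso $\varphi: \mathcal{S}(G_K) \cong M$.

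Next I would lift the morphisms: since $\mathcal{S}$ is full, there exist morphisms $k: G_I \to G_K$ and $s: G_K \to G_H$ in $\mathbf{edNCE}$ with $\mathcal{S}(k) = \varphi^{-1} \circ k'$ and $\mathcal{S}(s) = s' \circ \varphi$. By faithfulness, the square $m \circ l = s \circ k$ holds in $\mathbf{edNCE}$ because its image under $\mathcal{S}$ commutes. It remains to verify this square is a pushout in $\mathbf{edNCE}$. For this, given any competing cocone in $\mathbf{edNCE}$, applying $\mathcal{S}$ yields a cocone in $\mathbf{MultiEdNCE}$, which factors uniquely through $M$ by the universal property there; fullness of $\mathcal{S}$ lifts this mediating morphism to $\mathbf{edNCE}$, and faithfulness yields its uniqueness and the required commutations. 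Since by construction $\mathcal{S}$ sends this pushout to the $\mathbf{MultiEdNCE}$-pushout $M$, the square is an $\mathcal{S}$-pushout and $m$ is an $\mathcal{S}$-matching.

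The main obstacle will be the backward direction, and specifically the careful bookkeeping that the explicit subgrammar form of the $\mathbf{MultiEdNCE}$ pushout complement from Lemma~\ref{lem:matching-ednce} genuinely satisfies the hypotheses of Lemma~\ref{lem:ess-surj}; once that is in place, fullness and faithfulness of $\mathcal{S}$ do the rest. A subtlety worth flagging is that $\mathcal{S}$ is not strictly surjective on objects (cf. the remark after Lemma~\ref{lem:ess-surj}), so the lifted $G_K$ is only determined up to isomorphism in $\mathbf{edNCE}$ — which is exactly what is needed for the $\mathcal{S}$-pushout complement to be unique up to isomorphism, matching Lemma~\ref{lem:partial_adhesive_unique}.
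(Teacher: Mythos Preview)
Your proposal is correct and follows essentially the same approach as the paper: in the backward direction you construct the pushout complement in $\mathbf{MultiEdNCE}$, observe it inherits the no-parallel-edges/no-parallel-connection-instructions/no-self-loops property from $\mathcal{S}(G_H)$, invoke Lemma~\ref{lem:ess-surj} to place it in the essential image of $\mathcal{S}$, and then use fullness and faithfulness to reflect the pushout. The paper is terser (it phrases the key observation as ``since $\mathcal{S}(G_H)$ does not contain parallel edges \dots, then neither does $G_K$'' and appeals directly to reflection of pushouts), whereas you spell out the explicit subgrammar form from Lemma~\ref{lem:matching-ednce} and verify the universal property by hand, but the substance is the same.
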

\begin{proof}
One direction is obvious -- if $m$ is an $\mathcal{S}$-matching, then the
$\mathcal{S}$-pushout complement exists and therefore by definition
$\mathcal S(m)$ is a matching in \textbf{MultiEdNCE}.

In the other direction, let's assume $\mathcal{S}(m)$ is a matching. Then, the
pushout complement of $\mathcal S(G_H) \xleftarrow{\mathcal S(m)} \mathcal
S(G_L) \xleftarrow{\mathcal S(l)} \mathcal S(G_I)$ exists and let it be
given by the diagram below:
\cstikz{s-pushout_complement.tikz}
$\mathcal S(G_H)$ is the result of the pushout of grammars $\mathcal S(G_L)$
and $G_K$ in \textbf{MultiEdNCE}. Therefore, since $\mathcal S(G_H)$ does not
contain parallel edges, parallel connection instructions or self-loops, then
neither does $G_K$. Using Lemma~\ref{lem:ess-surj}, we may assume, without
loss of generality, that
$G_K$ is in the image of $\mathcal S$ and therefore
from the fullness of $\mathcal S$, it follows that $s$ and $k$ are also in its
image. Therefore, the reflected pushout square is an $\mathcal S$-pushout.
\end{proof}

DPO rewriting in \textbf{edNCE} is well-defined when the standard DPO diagram
exists and when both pushout squares are preserved by the embedding functor
$\mathcal{S}:~\mathbf{edNCE} \to \mathbf{MultiEdNCE}.$

The following theorem characterizes the conditions under which DPO rewriting
is well-defined in \textbf{edNCE}.

\begin{theorem}\label{thm:dpo-ednce}
In the category $\mathbf{edNCE}$, given a span of monomorphisms $Sp := G_L
\xleftarrow{l} G_I \xrightarrow{r} G_R$ and an $\mathcal{S}$-matching $m: G_L
\to G_H$, then the DPO rewrite induced by $m$ and $Sp$ is well-defined iff
the following conditions are satisfied:
\begin{description}
\item[Edges:] For any two vertices $v,w \in G_I$, if there exist edges
$(m \circ l (v), \alpha, m \circ l (w)) \in G_H$ and
$(r(v), \alpha, r(w)) \in G_R,$ then there must be an edge
$(l(v), \alpha, l(w)) \in G_L.$
\item[CI:] For any vertex $v \in G_I$, if there exist connection instructions
$(V, \alpha, \beta, m\circ l(v), d) \in G_H$ and $(V, \alpha, \beta, r(v), d)
\in G_R,$ then there must be a connection instruction $(V, \alpha,\beta, l(v),
d) \in G_L.$
\end{description}
\end{theorem}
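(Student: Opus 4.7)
The plan is to reduce this theorem to the characterizations of $\mathcal{S}$-spans (Lemma~\ref{lem:s-spans}) and $\mathcal{S}$-pushout complements (Lemma~\ref{lem:s-pushout_complement}). By Definition~\ref{def:s-rewrite}, the $\mathcal{S}$-rewrite is well-defined iff (i) $m$ is an $\mathcal{S}$-matching, which is the hypothesis, and (ii) the right square of the DPO diagram is an $\mathcal{S}$-pushout. Condition (ii) holds precisely when the span $G_K \xleftarrow{k} G_I \xrightarrow{r} G_R$ is an $\mathcal{S}$-span, where $G_K$ is the (essentially unique) $\mathcal{S}$-pushout complement of $(l,m)$. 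By Lemma~\ref{lem:s-spans}, this is equivalent to the ParEdges and ParCI conditions holding for this span.

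Unpacking $G_K$ via Lemma~\ref{lem:s-pushout_complement}, $G_K$ is the full subgrammar of $G_H$ with components $X_{G_H} - m(X_{G_L} - l(X_{G_I}))$, and $k$ agrees set-theoretically with $m \circ l$. Hence $(k(v), \alpha, k(w)) \in G_K$ iff $(m \circ l(v), \alpha, m \circ l(w)) \in G_H$ and this edge is not in $m(E_{G_L} - l(E_{G_I}))$; the analogous statement holds for connection instructions. The central task is then to verify the equivalences ParEdges $\Leftrightarrow$ Edges and ParCI $\Leftrightarrow$ CI under this translation.

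For the $\Rightarrow$ direction in the edge case, assume $(m \circ l(v), \alpha, m \circ l(w)) \in G_H$ and $(r(v), \alpha, r(w)) \in G_R$, and split on whether the edge survives in $G_K$: if so, ParEdges gives $(v, \alpha, w) \in G_I$, whence $(l(v), \alpha, l(w)) \in G_L$ by the homomorphism property of $l$; if not, the edge lies in $m(E_{G_L})$, and since $m$ is injective on vertices and $G_L$ admits at most one edge per source, target, and label triple, the unique preimage is $(l(v), \alpha, l(w)) \in G_L$. For $\Leftarrow$, the hypothesis $(k(v), \alpha, k(w)) \in G_K$ lifts to $(m \circ l(v), \alpha, m \circ l(w)) \in G_H$, so Edges yields $(l(v), \alpha, l(w)) \in G_L$; since this edge survives in $G_K$ it cannot lie in $m(E_{G_L} - l(E_{G_I}))$, and so it must come from $l(E_{G_I})$, whence injectivity of $l$ gives $(v, \alpha, w) \in G_I$. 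The ParCI $\Leftrightarrow$ CI argument has the same structure, replacing edges with connection instructions and using that each $\mathbf{edNCE}$ grammar admits at most one connection instruction for a given tuple $(V, \alpha, \beta)$, direction $d$, and attached vertex. The main obstacle is the case split in the forward direction, which hinges on whether the candidate edge or connection instruction has been removed during the pushout complement construction; once this split is made, the uniqueness of edges and connection instructions in $\mathbf{edNCE}$ closes both subcases.
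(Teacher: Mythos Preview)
Your proposal is correct and follows essentially the same strategy as the paper: reduce the well-definedness of the $\mathcal{S}$-rewrite to the ParEdges/ParCI conditions of Lemma~\ref{lem:s-spans} for the right-hand span $G_K \xleftarrow{k} G_I \xrightarrow{r} G_R$, and then show these are equivalent to the Edges/CI conditions. The only cosmetic difference is that you work throughout with the explicit description of $G_K$ as the full subgrammar with components $X_{G_H} - m(X_{G_L} - l(X_{G_I}))$ and the identity $k = m \circ l$, whereas the paper phrases the same computation more abstractly by invoking Lemma~\ref{lem:s-spans} on the \emph{left} square as well (i.e.\ using that the left span is an $\mathcal{S}$-span to pass from $(l(v),\alpha,l(w))\in G_L$ and $(k(v),\alpha,k(w))\in G_K$ to $(v,\alpha,w)\in G_I$). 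Your case split on whether the edge/CI survives in $G_K$ is exactly the joint-surjectivity argument the paper uses, just unpacked concretely.
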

\begin{proof}
Let the following DPO diagram describe the rewrite, it it exists:
\cstikz{sdpo.tikz}
First, let's assume that the DPO rewrite is well-defined. 

Now, let's assume for contradiction that condition \textbf{CI} is violated,
that is, there's a vertex $v \in G_I$, connection instructions $(V, \alpha,
\beta, m\circ l(v), d) \in G_H$ and $(V, \alpha, \beta, r(v), d) \in G_R,$ but
no connection instruction $(V, \alpha,\beta, l(v), d) \in G_L.$ The left square
is a $\mathcal{S}$-pushout square and therefore there exists a connection
instruction $(V, \alpha,\beta, k(v), d) \in G_K.$ 
Because $(V, \alpha, \beta, l(v), d) \not \in G_L$ this implies
$(V, \alpha, \beta, v, d) \not \in G_I.$
But then, applying
Lemma~\ref{lem:s-spans} to the right pushout square yields a contradiction
with the \textbf{ParCI} condition.

For the other condition, the proof is fully analogous.  Let's assume for
contradiction that condition \textbf{Edges} is violated that is, there are
vertices $v,w \in G_I$ and edges $(m \circ l (v), \alpha, m \circ l (w)) \in
G_H$ and $(r(v), \alpha, r(w)) \in G_R,$ but there exists no edge $(l(v),
\alpha, l(w)) \in G_L.$ The left square is a $\mathcal{S}$-pushout square and
therefore there exists an edge $(k(v), \alpha, k(w)) \in G_K.$ 
Because $(l(v), \alpha, l(w)) \not \in G_L$ this implies
$(v, \alpha, w) \not \in G_I.$
But then,
applying Lemma~\ref{lem:s-spans} to the right pushout square yields a
contradiction with the \textbf{Edges} condition. This completes the proof in
one direction.

In the other direction, we know that $m$ is an $\mathcal{S}$-matching and
therefore the left pushout square is an $\mathcal{S}$-pushout. So, we
need to show that the right square exists and is an $\mathcal{S}$-pushout.
We shall show this by using Lemma~\ref{lem:s-spans}.

Let's assume condition \textbf{CI} is satisfied. We shall show that condition
\textbf{ParCI} is satisfied for the span in the right square. Let's assume for
contradiction that is not the case. Thus, there exists vertex $v \in G_I$ and
connection instructions $(V, \alpha,\beta, k(v), d) \in G_K$ and $(V,
\alpha,\beta, r(v), d) \in G_R,$ but there exists no connection instruction
$(V, \alpha,\beta, v, d) \in G_I.$ Since the left square is an
$\mathcal{S}$-pushout, this implies there must be a connection instruction
$(V, \alpha,\beta, m\circ l(v), d) \in G_H.$ Then, condition \textbf{CI}
implies there exists a connection instruction $(V, \alpha,\beta, l(v), d)
\in G_L$ and now we get a contradiction with Lemma~\ref{lem:s-spans}
when applied to the left pushout square.

Next, let's assume condition \textbf{Edges} is satisfied. 
The proof is again fully analogous to the case for connection instructions.
We shall show that
condition \textbf{ParEdges} is satisfied for the span in the right square.
Let's assume for contradiction that is not the case. Thus, there exist vertices
$v,w \in G_I$ and edges $(k(v), \alpha, k(w)) \in G_K$
and $(r(v), \alpha, r(w)) \in G_R,$ but there exists no edge
$(v, \alpha, w) \in G_I.$ Since the left square is an
$\mathcal{S}$-pushout, this implies there must be an edge $(m \circ l(v),
\alpha,m \circ l(w)) \in G_H.$ Then, condition \textbf{Edges} implies
there exists an edge $(l(v), \alpha,l(w)) \in G_L$ and
now we get a contradiction with Lemma~\ref{lem:s-spans} when applied to the
left pushout square.
\end{proof}

We will refer to the two conditions from this theorem, \textbf{Edges} and
\textbf{CI}, as the \emph{partial adhesive conditions} which we will make
use of in later sections of this chapter. Note, that the partial adhesive
conditions alone do not guarantee the existence of a DPO rewrite. For this
to be the case, we also need to combine them with the no dangling conditions,
which we have also described.

In summary, we have shown that edNCE graph grammars form a partial adhesive
category and we have fully characterised the conditions under which
DPO rewriting can be done and is well-behaved.

\section{B-edNCE rewriting}\label{sec:rewriting}

For the rest of the chapter, we will be working with edNCE grammars from
the partially adhesive category \ednce, which was described in the previous
section. From the results established there, we know under what conditions we
may perform DPO rewriting on edNCE grammars. However, these results do not
tell us anything about how the \emph{languages} of the original grammars
relate to the languages of rewritten grammars. Since we are interested
in modelling reasoning on string graphs, it will be necessary to introduce
further restrictions on the rewrite rules and matchings of our grammars if
we want to be able to meaningfully talk about their languages.
The additional restrictions on our matchings and our rewrite rules will ensure
that DPO rewriting in B-edNCE grammars behaves well with respect to the
derivation process and the languages which are generated.

In Subsection~\ref{sub:subst} we will show how to rewrite extended graphs, such
that the rewrites commute with graph substitution. Graph substitution is the
mechanism used to carry out derivations in grammars. DPO rewrites of extended
graphs is how we simulate equational reasoning on a per-production basis for
B-edNCE grammars. By identifying under what conditions these two operations
commute we can significantly simplify the proofs in the following subsection.

In Subsection~\ref{sub:admis} we will show how to build upon the results
from Subsection~\ref{sub:subst} in order to define admissible rewrites
of entire B-edNCE grammars.

\subsection{Extended graph rewrites and graph substitution}\label{sub:subst}

All of the constructions in this subsection will be in the category
$\mathbf{edNCE}$. We will be considering only edNCE grammars which have a
single production (with identical and irrelevant production labels). As we have
pointed out previously, we can see such grammars as extended graphs. For
brevity and in order to avoid notational overhead, we will simply refer
to these objects in $\mathbf{edNCE}$ as extended graphs.

We begin by showing
that the substitution operation behaves well with respect
to monomorphisms of extended graphs.
\begin{lemma}\label{lem:subst_embed}
Given extended graphs $G$ and $D$ where $x \in V_G$ is a (nonterminal) vertex,
and given monomorphisms $m_1 : G \to G'$ and $m_2 : D \to D',$ then there
exists a monomorphism $m : G[x/D] \to G'[m_1(x)/D'],$ which we shall
denote by $SM(m_1,m_2,x)$ and refer to it as the substituted
monomorphism of $m_1$ and $m_2$ over vertex $x$.
\end{lemma}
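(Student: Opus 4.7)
The plan is to define the substituted monomorphism $m$ componentwise on the partitioned vertex set, verify that it preserves edges, connection instructions, and labels, and then conclude injectivity from the disjointness of the two pieces. Concretely, I would set
\[
m(v) := \begin{cases} m_1(v) & \text{if } v \in V_G - \{x\} \\ m_2(v) & \text{if } v \in V_D \end{cases}
\]
which is well-defined because, by the definition of graph substitution, both $V_{G[x/D]}$ and $V_{G'[m_1(x)/D']}$ decompose as disjoint unions $(V_G - \{x\}) \sqcup V_D$ and $(V_{G'} - \{m_1(x)\}) \sqcup V_{D'}$ respectively. Injectivity of $m$ is then immediate: $m_1$ and $m_2$ are each injective, and their images land in disjoint pieces of the codomain since $m_1$ never hits $m_1(x)$ on $V_G - \{x\}$ (as $m_1$ itself is injective) and the image of $m_2$ lies entirely inside $V_{D'}$, which is disjoint from $V_{G'} - \{m_1(x)\}$. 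Label preservation is inherited directly from the fact that $m_1$ and $m_2$ are extended graph homomorphisms, combined with the case-by-case definition of $\lambda$ on the substituted graph.

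Next, I would verify that $m$ preserves edges and connection instructions by following the five-part description of $G[x/D]$ given in the definition of graph substitution. The three straightforward cases are: (i) edges and connection instructions inherited unchanged from $G$ (those not touching $x$), which are preserved because $m_1$ preserves them; (ii) edges and connection instructions inherited from $D$, which are preserved because $m_2$ preserves them; and (iii) labels, which have already been handled.

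The main (and only) obstacle will be the two bridge-style cases, where edges and connection instructions in $G[x/D]$ are synthesised from ingredients in $G$ and $D$ via the embedding mechanism. Consider, for instance, a bridge edge $(w, \gamma, y) \in E_{G[x/D]}$, produced by some edge $(w,\beta,x) \in E_G$ together with a connection instruction $(\lambda_G(w), \beta, \gamma, y, in) \in C_D$. Applying $m_1$ to the edge yields $(m_1(w), \beta, m_1(x)) \in E_{G'}$, and applying $m_2$ to the connection instruction, using label-compatibility $\lambda_G(w) = \lambda_{G'}(m_1(w))$, yields $(\lambda_{G'}(m_1(w)), \beta, \gamma, m_2(y), in) \in C_{D'}$. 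By the very same bridge clause applied on the codomain side, these two ingredients manufacture the edge $(m_1(w), \gamma, m_2(y)) = (m(w), \gamma, m(y)) \in E_{G'[m_1(x)/D']}$, as required. The symmetric outgoing case, and the analogous case for bridge connection instructions (where a connection instruction of $G$ attached to $x$ composes with a connection instruction of $D$), are handled in exactly the same way. Once all five clauses have been checked, $m$ is the desired monomorphism $SM(m_1, m_2, x)$.
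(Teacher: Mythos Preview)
Your proposal is correct and follows essentially the same approach as the paper: define $m$ componentwise by $m_1$ on the $G$-part and $m_2$ on the $D$-part, then check injectivity and the homomorphism conditions. The paper's own proof is in fact much terser than yours---it gives the same case-split definition, notes that injectivity follows from that of $m_1$ and $m_2$, and then simply states that the extended graph homomorphism property is ``easy to verify''---so your detailed walk-through of the bridge clauses is more than what the paper itself provides.
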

\begin{proof}
We define $m : G[x/D] \to G'[m_1(x)/D']$ in the following way:
\[ m(v) = \begin{cases} m_1(v) & \text{ if } v \in V_G\\
                        m_2(v) & \text{ if } v \in V_D
          \end{cases}
\]
$m$ is an injection, because both $m_1$ and $m_2$ are injections. It is
also easy to verify that $m$ is an extended graph homomorphism.
\end{proof}

Next, we introduce the concept of an extended graph rewrite rule. This is just
a special case of a rewrite rule in $\multiednce$ when the grammars are
extended graphs, but we provide the definition for completeness.

\begin{definition}[Extended graph rewrite rule]
An \emph{extended graph rewrite rule} is a pair of
monomorphisms $L \xleftarrow{l} I \xrightarrow{r} R,$ where all objects
are extended graphs.
\end{definition}

\begin{example}\label{ex:rewrite}
The following is an extended graph rewrite rule:
\cstikz{rewrite-example.tikz}
The only difference between an extended graph rewrite rule and a graph rewrite
rule is that we also have to map the connection instructions appropriately.
In this case, the monos are obvious as they are uniquely determined by
the labels of the vertices.
\end{example}

Next, we introduce the notion of rewrite rule substitution which essentially
works component-wise and its purpose is to provide notational convenience.

\begin{definition}[Rewrite rule substitution]
Given extended graph rewrite rules $B_1 := L_1 \xleftarrow{l_1} I_1
\xrightarrow{r_1} R_1$ and $B_2 := L_2 \xleftarrow{l_2} I_2 \xrightarrow{r_2}
R_2$, with vertex $v \in I_1$ then the \emph{substitution} of $B_2$ for
$v$ in $B_1$, denoted $B_1[v/B_2]$ is given by the extended graph rewrite rule
$B_3 := L_3 \xleftarrow{l_3} I_3 \xrightarrow{r_3} R_3$, where
$L_3 := L_1[l_1(v)/L_2],$
$I_3 := I_1[v/I_2],$
$R_3 := R_1[r_1(v)/R_2],$
$l_3 := SM(l_1,l_2,v),$
$r_3 := SM(r_1,r_2,v).$
\end{definition}
Crucially, the monomorphisms $l_3$ and $r_3$ are built in a natural way by
using the substituted monomorphism construction.

Our next definition formalizes the notion of \emph{saturated matching} on
extended graphs which we will be using throughout this chapter. A saturated
matching of extended graphs is a restricted form of matching (in the sense of
the previous section). In addition to allowing us to perform DPO rewriting on
extended graphs, these matchings will also later be used in order to rewrite
edNCE grammars in an admissible way. In particular, the saturated matching
conditions ensure that DPO rewrites commute with the substitution operation on
extended graphs (as shown in Theorem~\ref{thm:subst_rewrite}), which is the
basic
operation used to perform grammar derivations. This is not true for arbitrary
DPO rewrites which satisfy only the matching conditions in \ednce, but not the
additional requirements imposed by saturated matchings.

\begin{definition}[Extended graph saturated matching]
Given an extended graph rewrite rule $L \xleftarrow{l} I
\xrightarrow{r} R,$ and an extended graph $H,$ we say that an
\emph{extended graph saturated matching} is an extended graph monomorphism $m :
L \to
H,$ such that:
\begin{itemize}
\item no edge $e \in E_H, e \not \in m(E_L)$ is incident to any vertex in $m(V_L -
l(V_I))$
\item no edge $e \in E_H, e \not \in m(E_L)$ is incident to a nonterminal vertex
in $m(l(V_I))$
\item $m$ is a bijection on the connection instructions of $L$ and $H$, that
is, $C_H = \{(\sigma, \beta, \gamma, m(x), d)\ |\ (\sigma, \beta, \gamma, x,
d) \in C_L\}.$ 
\end{itemize}
\end{definition}
In particular, the first condition is the standard no-dangling edges condition
for (extended) graphs. The third condition concerns connection instructions and
is obviously stricter than the no-dangling connection instructions condition
which is needed in order to ensure that the pushout complement of extended
graphs exists. The additional strictness of the third condition, together with
the second condition (which can be seen as imposing further restrictions on the
no-dangling edges condition) guarantee that rewriting productions of grammars
behave nicely with respect to the derivations of the grammar (which is based on
extended graph substitution).

We can show that DPO rewrites on extended graphs preserve saturated matchings.

\begin{lemma}\label{lem:saturated_preserve}
Given an extended graph rewrite rule $B := L \xleftarrow l I \xrightarrow r R,$
an extended graph $H$ and a saturated matching $m : L \to H,$ then if the
following diagram is an $\mathcal S$-rewrite:
\cstikz{dpo_saturated.tikz}
then $f$ is a saturated matching with respect to the extended graph rewrite
rule $B' := R \xleftarrow r I \xrightarrow l L.$
\end{lemma}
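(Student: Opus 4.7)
The plan is to exploit that both squares in the DPO diagram are $\mathcal{S}$-pushouts, so they are preserved into $\mathbf{MultiEdNCE}$ and can be analyzed using the component-wise pushout description of that ambient adhesive category. By Theorem~\ref{thm:partial_adhesive_rewrite} every morphism involved is mono, and by Lemma~\ref{lem:joint_surjection} the pair $(f,g)$ is jointly surjective on each of the components $V$, $E$, $C^0$, $C^1$. I will verify each of the three saturated-matching clauses for $f$ with respect to the reversed rule $R \xleftarrow{r} I \xrightarrow{l} L$ separately, in each case assuming the clause fails and deriving a contradiction either from the saturatedness of $m$ or from the span conditions of Lemma~\ref{lem:s-spans}.

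For the first (no-dangling-edges) clause, suppose $e \in E_M \setminus f(E_R)$ is incident to a vertex $v \in f(V_R - r(V_I))$. Joint surjectivity forces $e = g(e')$ for some $e' \in E_K$; since $e$ is incident to $v$, some endpoint of $e'$ in $K$ must map under $g$ to $v$. Then $v$ is in the image of both $f$ and $g$, so by the joint injectivity characterisation of the $\mathcal{S}$-pushout $v$ must come from $V_I$, i.e.\ $v \in f(r(V_I))$, contradicting the hypothesis. For the second clause, let $e = g(e') \in E_M \setminus f(E_R)$ be incident to a nonterminal $v = f(r(w))$, with $w \in V_I$. Extended-graph homomorphisms preserve labels, so $l(w) \in V_L$ is also nonterminal, and $s(k(w)) = m(l(w))$ in $H$ is a nonterminal of $m(l(V_I))$ incident to $s(e') \in E_H$. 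Saturatedness of $m$ forces $s(e') \in m(E_L)$, hence $s(e') = m(e'')$ for some $e'' \in E_L$, and the joint injectivity of the left $\mathcal{S}$-pushout then yields $e''' \in E_I$ with $l(e''') = e''$ and $k(e''') = e'$. Commutativity of the right square gives $e = g(k(e''')) = f(r(e''')) \in f(E_R)$, contradicting the choice of $e$.

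For the bijection-on-connection-instructions clause, I would show $g(C_K) \subseteq f(C_R)$; combined with $C_M = f(C_R) \cup g(C_K)$ (joint surjectivity) and injectivity of $f$, this gives exactly $C_M = \{(\sigma,\beta,\gamma,f(x),d) : (\sigma,\beta,\gamma,x,d) \in C_R\}$. Take $c \in C_K$. Then $s(c) \in C_H = m(C_L)$ by saturatedness, so $s(c) = m(c')$ for some $c' \in C_L$ with identical label triple. Joint injectivity of the left $\mathcal{S}$-pushout on the vertex associated to $c$ and $c'$ forces that vertex to come from $V_I$, say via $w \in V_I$; so $c = (\sigma,\alpha,\beta,k(w),d)$ and $c' = (\sigma,\alpha,\beta,l(w),d)$. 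Applying the \textbf{ParCI} half of Lemma~\ref{lem:s-spans} to the left $\mathcal{S}$-span produces $c'' = (\sigma,\alpha,\beta,w,d) \in C_I$, and commutativity of the right square gives $g(c) = g(k(c'')) = f(r(c''))$, placing $g(c)$ in $f(C_R)$.

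The most delicate step will be the connection-instructions clause, because it is the only one that has to combine the full strength of saturatedness (rather than just the no-dangling condition on $H$) with the $\mathcal{S}$-span characterisation of the left pushout square; the two edge-based clauses are essentially pushout-diagram chases that only use joint surjectivity and the no-dangling-edge part of saturatedness. I do not anticipate any obstruction from the existence or uniqueness of the objects involved, since those are already guaranteed by the hypothesis that the displayed diagram is an $\mathcal{S}$-rewrite.
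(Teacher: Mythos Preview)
Your proposal is correct and follows essentially the same approach as the paper: a clause-by-clause contradiction argument using joint surjectivity/injectivity of the two $\mathcal{S}$-pushout squares together with the saturated-matching conditions on $m$. The only minor difference is in the third clause: the paper applies the component-wise pushout in $\mathbf{MultiEdNCE}$ directly to connection instructions (so from $m(c'')=s(c')$ one immediately obtains $c_0\in C_I$ with $l(c_0)=c''$, $k(c_0)=c'$), whereas you detour through joint injectivity on vertices and then invoke \textbf{ParCI} from Lemma~\ref{lem:s-spans}; both routes are valid and yield the same conclusion.
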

\begin{proof}
By contradiction. Assume there exists an edge $e \in E_M, e \not \in f(E_R)$
such that $e$ is incident to a vertex $v \in f(V_R - r(V_I)).$ Because the
right square is a pushout, then from the joint surjectivity of $f$ and $g$,
we get that $e \in g(E_K).$ Since $g$ is an (extended) graph homomorphism,
it follows immediately that $v \in g(V_K).$ Thus, the vertex $v$ is in
the images of both $g$ and $f$ and because the right square is a pushout,
it follows $v$ is also in the image of $f \circ r = g \circ k$, which is
a contradiction.

Assume there exists an edge $e \in E_M, e \not \in f(E_R)$, such that $e$ is
incident to a nonterminal vertex $v$ in $f(r(V_I)).$ Let $u \in V_I$ be the
nonterminal vertex of $V_I$, such that $f \circ r(u) = v$. As in the previous
case, we see that $e \in g(E_K).$ Let $e' \in E_K$ be the edge such that $g(e')
= e$ and then $e'$ must be incident to the vertex $k(u)$, since
$e = g(e')$ is incident to $v = f \circ r (u) = g \circ k(u)$.
Now, consider the edge $e'' = s(e') \in E_H$. If $e''$ is in the image of
$m$, then because the left square is a pushout, we get that $e''$ must also be
in the image of $m \circ l = s \circ k$ and we get a contradiction with the
fact that $e \not \in f (E_R)$. Thus, $e'' \in E_H - m(E_L)$. Because $s$
is an extended graph homomorphism, it then follows that $e''$ is incident
to $s \circ k(u) = m \circ l (u)$ and we get a contradiction with the fact
that $m$ is a saturated matching (the second condition of the definition
is violated).

Finally, we have to show that $f$ is a bijection on the connection instructions
of $R$ and $M$. We know that $f$ is a mono, so therefore we just have to show
that $f$ is a surjection. Assume that there exists a connection instruction
$c \in C_M$ which is not in the image of $f$. The right square is a pushout,
therefore $f$ and $g$ are jointly surjective. Thus, $c = g(c')$ for some
$c' \in C_K$. Therefore, $s(c') \in C_H$ is a connection instruction of
$H$. Because $m$ is a saturated matching, the third condition of the definition
implies that there exists $c'' \in C_L$, such that $m(c'') = s(c').$ But then,
because the left square is a pushout, this implies there exists $c_0 \in C_I$
such that $c'' = l(c_0)$ and $c' = k(c_0).$ Therefore, $c = g \circ k(c_0) =
f \circ r (c_0)$ and we get a contradiction, because $c$ is in the image of
$f$.
\end{proof}

We say that an extended graph is \emph{boundary} if there are no edges
between nonterminal vertices. The following lemma shows that saturated
matchings carry over the substitution operation in a natural way.

\begin{lemma}\label{lem:subst_matching}
Given boundary extended graphs $H$ and $D,$ 
extended graph rewrite rules
$B_1 := L_1 \xleftarrow{l_1} I_1 \xrightarrow{r_1} R_1$ and
$B_2 := L_2 \xleftarrow{l_2} I_2 \xrightarrow{r_2} R_2$ 
and given extended graph saturated matchings $m_1 : L_1 \to H$ and $m_2: L_2
\to D$, with nonterminal vertex
$v \in V_I,$
then $m_3 := SM(m_1, m_2, l(v))$ is an extended graph saturated
matching from the rewrite rule $B_3 := B_1[v/B_2]$ into the extended graph
$H[m \circ l(v)/D].$
\end{lemma}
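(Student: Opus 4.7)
The plan is to verify the three defining conditions of a saturated matching for the substituted monomorphism $m_3 := SM(m_1, m_2, l_1(v))$, which by Lemma~\ref{lem:subst_embed} acts as $m_1$ on $V_{L_1} \setminus \{l_1(v)\}$ and as $m_2$ on $V_{L_2}$. A preliminary observation is that $l_1(v)$ is nonterminal (since $v$ is nonterminal and $l_1$ is a label-preserving extended graph homomorphism), and hence so is $m_1 \circ l_1(v)$, which unlocks the second and third clauses of the saturated matching hypothesis on $m_1$ in their strongest form.

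First I would decompose the edges and connection instructions of $H' := H[m_1 l_1(v) / D]$ and of $L_3 := L_1[l_1(v)/L_2]$ as prescribed by the graph substitution definition: edges split into those inherited from the mother (not incident to the substituted vertex), those inherited from the daughter, and bridges; connection instructions split into those of the mother not attached to the substituted vertex, and composed instructions arising from a mother-CI at the substituted vertex together with a daughter-CI. Under this decomposition $m_3$ behaves component-wise, and I would confirm explicitly that the image $m_3(V_{L_3} - l_3(V_{I_3}))$ equals $m_1(V_{L_1} - l_1(V_{I_1})) \cup m_2(V_{L_2} - l_2(V_{I_2}))$, and similarly for $m_3(l_3(V_{I_3}))$ after removing $\{m_1 l_1(v)\}$.

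The key auxiliary claim to establish is that every bridge of $H'$ lies in $m_3(E_{L_3})$ and every composed CI of $H'$ lies in $m_3(C_{L_3})$. A bridge in $H'$ arises from an edge $(w, \beta, m_1 l_1(v)) \in E_H$ together with a CI $(\lambda_H(w), \beta, \gamma, x, in) \in C_D$ (or the symmetric out-variant). Because $m_1$ is saturated and $m_1 l_1(v)$ is nonterminal, the edge must be the $m_1$-image of some $(w', \beta, l_1(v)) \in E_{L_1}$; because $m_2$ is a bijection on connection instructions, the given CI of $D$ is the $m_2$-image of a CI of $L_2$; together these produce a bridge in $L_3$ whose $m_3$-image is the original bridge. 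The parallel argument handles composed connection instructions, using the CI-bijection property of $m_1$ at the vertex $m_1 l_1(v)$ and that of $m_2$ on all of $L_2$.

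With this correspondence in hand, the three saturated matching conditions for $m_3$ follow by case analysis on the origin of the offending edge or CI, falling back to the corresponding condition for $m_1$ or $m_2$ and using that $V_H$ and $V_D$ are disjoint. The bijectivity of $m_3$ on connection instructions also splits along the decomposition and follows from the bijectivity of $m_1$ (on non-substituted CIs of $H$) and the composition of both bijections (on composed CIs). The main obstacle I anticipate is the bookkeeping for the bridge and composed-CI correspondence — matching the labels, directions, and endpoint labels on both sides so that the composition rule of the substitution definition agrees under $m_3$; this is precisely where the boundary hypothesis on $H$ and $D$ pays off, because it guarantees that the edges at $m_1 l_1(v)$ which seed the bridges all have terminal sources or targets and are therefore fully accounted for by the saturated matching properties of $m_1$.
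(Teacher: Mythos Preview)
Your proposal is correct and follows essentially the same route as the paper: both arguments assume all monos are inclusions, decompose the edges of $H[v/D]$ into mother edges, daughter edges, and bridges, and then verify the three saturated-matching conditions by case analysis on this decomposition, falling back to the corresponding condition for $m_1$ or $m_2$. Your organisation is slightly cleaner in that you isolate the claim ``every bridge of $H'$ lies in $m_3(E_{L_3})$'' as a separate lemma before checking the three conditions, whereas the paper re-proves the bridge case inline for each condition.

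One remark worth making: your attribution of where the boundary hypothesis is used is off. Your auxiliary claim about bridges does \emph{not} need boundary. The second saturated-matching condition for $m_1$ already forces every edge of $H$ incident to the nonterminal $m_1 l_1(v) \in m_1(l_1(V_{I_1}))$ into $m_1(E_{L_1})$, irrespective of whether the other endpoint is terminal or nonterminal; combined with the CI-bijection for $m_2$, this gives the bridge correspondence directly. The paper instead invokes boundary only as a shortcut in one subcase of the second matching condition: when the offending bridge is incident to a nonterminal $u \in V_{I_1}$, boundary rules out the seeding edge $(u,\beta,v)$ between two nonterminals in $H$ outright. If you carry out your plan as written, you will find that this subcase is already absorbed by your stronger auxiliary claim, and boundary is never invoked for this lemma.
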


\begin{proof}
Without loss of generality, we can take isomorphic
copies of all these graphs, such that all of the monos are simply
subgraph inclusions. So, for simplicity, we shall assume that is the case.
Also, we shall refer to the saturated matching conditions simply as matching
conditions.

Let $B_3 = B_1[v/B_2] =  L_3 \xleftarrow{l_3} I_3 \xrightarrow{r_3} R_3.$
Using Lemma~\ref{lem:subst_embed}, we see that $m_3$ is given by the extended
graph inclusion $L_3 \subseteq H[v/D].$
To show that $m_3$
satisfies the matching conditions, consider an arbitrary edge $e \in
E_{H[v/D]}, e
\not \in E_{L_3}$ and $e$ is incident to a vertex $u \in V_{L_3} - V_{I_3}.$
Also, let's assume that the label of $e$ is $\gamma$.
First, observe
that $u \in V_{L_3} - V_{I_3}$  iff $u \in V_{L_1} - V_{I_1}$ or $u \in V_{L_2}
-V_{I_2}.$ Moreover,
$e \not \in E_{L_3}$ implies $e \not \in E_{L_1}$ and $e \not \in E_{L_2}.$ If $e \in E_H,$
then the fact that $e$ is incident to $u \in V_{L_3} - V_{I_3}$ implies $u \in
V_{L_1}
- V_{I_1}$ which contradicts the matching conditions for $m_1$. If $e \in E_D,$
then
the fact that $e$ is incident to $u \in V_{L_3} - V_{I_3}$ implies that $u \in
V_{L_2} - V_{I_2}$ which contradicts the matching conditions for $m_2.$

Therefore, $e$ must be a bridge, that is, an edge established by the
substitution operation and is
neither in $E_H,$ nor $E_D$. Because $e$ is incident to $u \in V_{L_3} -
V_{I_3},$ there
are two further cases to consider. If $u \in V_{L_1} - V_{I_1},$ then there is
an
edge $e'$ between $u$ and $v$ in $E_H$. From the matching condition for $m_1$ it
then follows that $e' \in E_{L_1}.$ Moreover, there must also be a connection
instruction $c \in C_D,$ which establishes the bridge $e$ between $u$ and $u' \in
V_D$. Then, from the matching condition on $m_2$ it follows $c \in C_{L_2}$ and
therefore $e \in E_{L_3} = E_{L_1[v/L_2]},$ which is a contradiction.

If $u \in V_{L_2} - V_{I_2}$, then let $u' \in V_H$ be the other incident
vertex of
$e$. Because $e$ is a bridge, this implies that there is a connection
instruction $c = (\lambda(u'), \beta, \gamma, u ,d)
\in C_D$ associated to $u.$ Then, the matching condition for $m_2$
implies $c \in C_{L_2}$.  Since
$v$ is a nonterminal vertex in $V_{I_1}$ and there must be an edge $e' \in
E_H$ between $u'$
and $v$ with label $\beta$ and direction $d$, then it follows from the matching condition of
$m_1$ that $u'
\in V_{L_1}$ and $e' \in E_{L_1}$. Therefore, $e \in E_{L_3} = E_{L_1[v/L_2]}$ which is a
contradiction.

Therefore, all edges in $E_{H[v/D]}$ satisfy the first matching condition for
$m_3$.

Next, we show that the second matching condition is satisfied for $m_3$.
Assume there exists $e \in E_{H[v/D]}, e \not \in E_{L_3}$ and $e$ is incident to a
nonterminal vertex $u \in V_{I_3}.$ If $e \in E_H$, then $u \in V_{I_1}$ which violates
the matching condition for $m_1$. If $e \in E_D$, then $u \in V_{I_2}$ which violates
the matching condition for $m_2$. Thus, $e$ must be established by the
substitution operation and is neither in $E_H$, nor $E_D$. If $u \in V_{I_1}$, then
there exists an edge $e' \in E_H$ between $u$ and $v.$ However, this violates the
boundary condition for nonterminal vertices. Finally, if $u \in V_{I_2}$, then
let  $u' \in V_H$ be the other incident vertex of $e$. We must also have
a connection instruction $c=(\lambda(u'), \beta, \gamma,
u, d) \in C_D$ associated to $u.$
Then, the matching condition for $m_2$ implies $c \in C_{L_2}$.  Since $v$ is a nonterminal vertex in $V_{I_1}$
and
there must be an edge $e' \in E_H$ between $u'$ and $v$ with label $\beta$ and
direction $d$, then it follows from
the
matching condition of $m_1$ that $u' \in V_{L_1}$ and $e' \in E_{L_1}$. Therefore, $e
\in E_{L_3} = E_{L_1[v/L_2]}$ which is a contradiction.

Therefore, all edges in $E_{H[v/D]}$ satisfy the second matching condition for
$m_3$.

Finally, let's consider the third matching condition for $m_3.$
The matching condition for $m_1$ and $m_2$ imply that
$L_1$ and $H$ have the same connection instructions and also that
$L_2$ and $D$ have the same connection instructions. Moreover, each of the
two pairs also have the same edges connecting to the nonterminal vertex
$v$. Therefore, after performing the substitutions, we get $L_3 = L_1[v/L_2]$
has the same connection instructions as $H[v/D].$

So, $m_3$ as defined satisfies all
of the saturated matching conditions.
\end{proof}

The previous lemma shows that the substitution operation behaves well with
respect to saturated matchings. The next lemma shows that the substitution
operation preserves $\mathcal S$-pushout squares, when some of the morphisms
are
saturated matchings.

\begin{lemma}\label{lem:subst_pushout}
Given $\mathcal S$-pushout squares
  \[
    \stikz{pushout1.tikz}
    \quad\quad
    \stikz{pushout2.tikz}
  \]
where all objects are extended graphs, all morphisms are monos
and $m_1, m_2$ are saturated matchings, with nonterminal
vertex $v \in I_1,$ then the following diagram is also a $\mathcal S$-pushout
square:
\cstikz{pushout3.tikz}
with monomorphisms 
$i_3 := SM(i_1,i_2,v),$  $l_3 := SM(l_1,l_2,v),$ $m_3 := SM(m_1,m_2,l_1(v)),$
$k_3 := SM(k_1,k_2,i_1(v)).$
\end{lemma}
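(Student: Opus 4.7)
The plan is to prove this by first checking commutativity of the substituted square, then verifying the $\mathcal{S}$-pushout property via joint surjectivity (Lemma~\ref{lem:joint_surjection}) applied in the ambient category $\mathbf{MultiEdNCE}$, and finally confirming the $\mathcal{S}$-span conditions of Lemma~\ref{lem:s-spans}. Commutativity follows routinely: by Lemma~\ref{lem:subst_embed} each of $l_3, i_3, m_3, k_3$ acts set-theoretically as the union of its two constituent monos on the disjoint pieces, and commutativity of the two original squares then forces the composed maps to agree on elements coming from $I_1$ and from $I_2$ separately.

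The main work is establishing joint surjectivity of $m_3$ and $k_3$ onto $M_1[m_1 l_1(v) / M_2]$. I would split any element $x$ into three cases. If $x$ comes from $M_1 - \{m_1 l_1(v)\}$, the pushout property of the first square gives $x \in m_1(L_1) \cup k_1(H_1)$, and either pre-image transports directly through $SM$ into $m_3(L_1[l_1(v)/L_2]) \cup k_3(H_1[i_1(v)/H_2])$. If $x$ comes from $V_{M_2} \cup E_{M_2} \cup C_{M_2} \cup P_{M_2}$, the pushout property of the second square does the same job via $m_2, k_2$. The delicate case is when $x$ is a bridge established by the outer substitution at $m_1 l_1(v)$; here $x$ arises from a connection instruction $c \in C_{M_2}$ together with an edge in $M_1$ incident to $m_1 l_1(v)$. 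The saturated matching condition for $m_2$ (bijection on connection instructions) forces $c = m_2(c')$ for some $c' \in C_{L_2}$, and the saturated matching condition for $m_1$ together with the fact that the first square is a pushout locates the incident edge in $m_1(L_1)$ (otherwise dangling-edge or nonterminal-boundary conditions of the saturated matching are violated). Consequently the bridge was already created when performing $L_1[l_1(v)/L_2]$, so $x \in m_3(L_1[l_1(v)/L_2])$, as required.

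Once joint surjectivity is in place in $\mathbf{MultiEdNCE}$, I would verify the $\mathcal{S}$-pushout requirement by checking ParEdges and ParCI (Lemma~\ref{lem:s-spans}) for the span $L_1[l_1(v)/L_2] \xleftarrow{l_3} I_1[v/I_2] \xrightarrow{i_3} H_1[i_1(v)/H_2]$. Parallel edges or connection instructions in the substituted span can arise only from (a) a pair already in the two original spans, handled directly by their ParEdges/ParCI conditions, or (b) pairs of bridges created on the two sides. Case (b) reduces to the corresponding condition in $I_2$ via the saturated matching bijection on connection instructions of $L_2$ (and of $H_2$, inherited from the pushout property of the second square), together with the corresponding witness edge or connection instruction at $v$ in $I_1$ supplied by the saturated matching for $m_1$ and the pushout property of the first square.

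The main obstacle is the bookkeeping for bridges: each bridge established by the outer substitution must be traced back through both the interior substitution and the two pushouts, and this requires the full strength of saturated matchings (in particular the bijection on connection instructions and the strengthened no-dangling-edges conditions), plus the boundary hypothesis to rule out edges between nonterminals of $I_1$. Once these case analyses are in place, the $\mathcal{S}$-pushout square is forced, since $\mathcal{S}$ reflects pushouts and monomorphisms via the partial adhesivity of $\mathbf{edNCE}$ established in Theorem~\ref{thm:edNCEvsMultiEdNCE}.
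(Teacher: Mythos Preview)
Your strategy matches the paper's: commutativity from the $SM$ construction, joint surjectivity via Lemma~\ref{lem:joint_surjection}, then the \textbf{ParEdges}/\textbf{ParCI} conditions of Lemma~\ref{lem:s-spans}. The one place to tighten is connection instructions. Your three-case split does not cover the \emph{composed} connection instructions that arise from substitution (a CI of $M_1$ attached to $m_1 l_1(v)$ paired with a CI of $M_2$); these are neither ``from $C_{M_2}$'' nor bridges in the edge sense, so none of your cases literally applies. The paper handles this by first invoking Lemma~\ref{lem:subst_matching} to conclude that $m_3$ is itself a saturated matching, hence a bijection on connection instructions, which gives joint surjectivity on CIs immediately. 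It then exploits the consequence $C_{K_j} = C_{I_j}$ (from the saturated-matching bijection on CIs together with the pushout-complement formula) and the fact that $v$ has identical edge-neighbourhoods in $K_1$ and $I_1$ to show that $K_3$ and $I_3$ have the same connection relation, so \textbf{ParCI} is trivial rather than requiring a separate bridge analysis. Your approach would also go through once you add the composed-CI case, but routing through Lemma~\ref{lem:subst_matching} is cleaner.
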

\begin{proof}
From Lemma~\ref{lem:subst_embed}, we know that $x_3$ is a monomorphism for $x
\in \{l,i,m,k\}.$
Commutativity follows easily by construction of the $x_3$ monos and the
commutativity of the original two pushout squares.
To finish the proof, we need to
show that $k_3$ and $m_3$ are jointly surjective
(Lemma~\ref{lem:joint_surjection}) and that the third square (the
substituted one) satisfies the \textbf{ParCI} and \textbf{ParEdges} conditions
from Lemma~\ref{lem:s-spans}.
For simplicity and without
loss of generality, we shall assume $X_1$ and $X_2$ are disjoint and that
$x_1$ and $x_2$ are subgraph inclusions, where $X \in \{I,L,K,H\}, x \in
\{i,l,m,k\}.$

We will first show that $k_3$ and $m_3$ are jointly surjective.
Now, let's consider an arbitrary vertex $u \in H_3 := H_1[v/H_2]$. By
definition of
substitution, $u \in H_1$ or $u \in H_2$. If $u \in H_1$, then the joint
surjectivity of $k_3$ and $m_3$ follows from the joint surjectivity of
$k_1$ and $m_1$. The other case follows in the same way and therefore $k_3$
and $m_3$ are jointly surjective on vertices.

Taking an arbitrary edge $e \in H_3,$ there are three cases to
consider. If $e \in H_1$ or $e \in H_2$, then the case follows using exactly
the same argument as those for vertices. The remaining case is when
$e$ is a bridge established by the substitution operation. Let the two
endpoints of $e$ be vertices $w_1 \in V_{H_1}$ and $w_2 \in V_{H_2}.$
Then, there must exist an edge $e' \in H_1$ which is adjacent to the
nonterminal vertex $v$, and there must exist a connection instruction
$c \in H_2$ which is compatible with $e'$, that is, $c = (\lambda(w_1),
\alpha, \beta, w_2, d),$ where $\alpha$ is the label of $e'$, $\beta$ is
the label of $e$ and $d$ is the direction $w_1$ is connected to $v$.
$m_2$ is
a saturated matching and then the third matching condition implies $c \in
L_2$. $m_1$ is also a saturated matching and then the second matching
condition implies $e' \in L_1.$ Thus, by definition of substitution, it follows
$e \in L_1[v/L_2]=:L_3.$ Therefore, $k_3$ and $m_3$ are jointly surjective on
edges as well.

From Lemma~\ref{lem:subst_matching}, it follows that $m_3$ is a saturated
matching with respect to the depicted square and therefore, according to the
third matching condition, $m_3$ is a bijection on connection instructions which
immediately implies joint surjectivity of $m_3$ and $k_3$ on connection
instructions.

Next, let's consider the \textbf{ParCI} condition for the third square.
From the left pushout square, we know that $C_{L_1} = C_{H_1}$, because
$m_1$ is a saturated matching. Since the square is a $\mathcal S$-pushout
square, it follows that $C_{K_1} = C_{H_1} - (C_{L_1} - C_{I_1}) = C_{I_1}.$
Similarly, from the second pushout square we get
$C_{K_2} = C_{I_2}.$ Next, observe that any edge
$e \in K_1$ which is incident to $v$, must also be in $I_1$, that is
$e \in I_1$, because otherwise $H_1$ would contain an edge incident to $v$
which is not in the image of $m_1$ and therefore the second condition of
saturated matchings is violated. Therefore, the vertex $v$ has the same set of
adjacent edges (and vertices) in both $K_1$ and $I_1$. Combining this with the
fact that $C_{K_1} = C_{I_1} $ and $C_{K_2} = C_{I_2}$ means that $I_1[v/I_2]
=: I_3$ and $K_1[v/K_2] =: K_3$ have the same connection instructions. This
then immediately implies that the \textbf{ParCI} condition is satisfied for the
third square.

Finally, let's consider the \textbf{ParEdges} condition. Take an arbitrary edge
$e = (w_1, \gamma, w_2) \in K_3,$ where $w_1, w_2 \in I_3$. If $w_1, w_2 \in
I_1$, then $e \in K_1$ and if $e \in L_1$, the \textbf{ParEdges} condition for
the first pushout square implies $e \in I_1$ and therefore $e \in I_3.$ So, in
this case, the \textbf{ParEdges} condition is satisfied. Similarly, if $w_1,
w_2 \in I_2$, the \textbf{ParEdges} condition is again satisfied using the same
argument.

If $w_1 \in I_1$, but $w_2 \in I_2$, then $e$ is a bridge established by the
substitution. Thus, there must be a connection instruction $c = (\lambda(w_1),
\beta, \gamma, w_2, in) \in C_{K_2} = C_{I_2}$ and there must be an edge
$(w_1, \beta, v) \in K_1.$ As we have already pointed out, $v$ has the same
set of adjacent edges and vertices in both $K_1$ and $I_1$. Combining this
with the fact that $c \in C_{I_2}$ means that the edge $e = (w_1, \gamma, w_2)
\in I_3$ and therefore the \textbf{ParEdges} condition is satisfied.

The last case is when $w_1 \in I_2$ and $w_2 \in I_1$. This follows using a
similar argument to the previous case, where we also have to change the
direction of the connection instruction from \emph{in} to \emph{out}.
\end{proof}

Building on the previous two lemmas, we can now show that rewriting of extended
graphs commutes with the substitution operation. We will prove this in the
next theorem, which is the main result of this subsection.
This is crucial for
establishing the admissibility properties of grammar rewriting that we require,
because derivations in edNCE grammars work by repeatedly applying the
substitution operation to nonterminal vertices.
\begin{theorem}\label{thm:subst_rewrite}
Given boundary extended graphs $H, H', D, D',$ such that $H
\leadsto_{B_1}^{m_1} H'$ and $D \leadsto_{B_2}^{m_2} D',$ where $B_1 := L_1
\xleftarrow{l_1} I_1 \xrightarrow{r_1} R_1$, $B_2 := L_2
\xleftarrow{l_2} I_2 \xrightarrow{r_2} R_2,$ $v \in V_{I_1}$
and where $m_1, m_2$ are saturated matchings,
then $H[m_1 \circ l_1(v)/D] \leadsto_{B_3}^{m_3} H'[f_1\circ r_1(v)/D'],$ where
$m_3 := SE(m_1, m_2, v)$ and $B_3 := B_1[v/B_2].$ In terms of diagrams,
given the following two $\mathcal S$-rewrites:
\cstikz{dpo1.tikz}
and
\cstikz{dpo2.tikz}
Then the following diagram is also an $\mathcal S$-rewrite:
\cstikz{dpo3.tikz}
where all morphisms are mono and are given by:
$i_3 := SM(k_1,k_2,v),$  $l_3 := SM(l_1,l_2,v),$ $m_3 := SM(m_1,m_2,l_1(v)),$
$s_3 := SM(s_1,s_2,k_1(v)),$
$f_3 := SM(f_1,f_2,r_1(v)),$
$g_3 := SM(g_1,g_2,k_1(v)),$
$r_3 := SM(r_1,r_2,v).$ Moreover, $m_3$ and $f_3$ are saturated matchings.
\end{theorem}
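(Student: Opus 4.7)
The plan is to decompose the claim into two independent sub-claims, one per pushout square, and to apply the previously established substitution compatibility lemmas to each. The structure of the DPO diagrams gives us a left pushout square (the pushout complement) and a right pushout square (the replacement); substitution should respect both independently provided we can exhibit saturated matchings on either side.

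First I would handle the left pushout square. Lemma~\ref{lem:subst_pushout} applies almost directly: the squares involving $m_1, k_1, l_1, s_1$ and $m_2, k_2, l_2, s_2$ are $\mathcal{S}$-pushouts by hypothesis, $m_1$ and $m_2$ are saturated matchings, and $v \in V_{I_1}$ is a nonterminal vertex so the substitution is well-defined. The conclusion of the lemma provides exactly the substituted $\mathcal{S}$-pushout square with edges $i_3 := SM(k_1,k_2,v)$, $l_3 := SM(l_1,l_2,v)$, $m_3 := SM(m_1,m_2,l_1(v))$, $s_3 := SM(s_1,s_2,k_1(v))$, matching the data in the conclusion.

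For the right square the argument is symmetric but requires one preliminary step: I need saturated matchings on the right-hand sides. Lemma~\ref{lem:saturated_preserve} applied to each of the two original $\mathcal{S}$-rewrites tells us that $f_1$ and $f_2$ are saturated matchings for the reversed rewrite rules $R_i \xleftarrow{r_i} I_i \xrightarrow{l_i} L_i$. Then, reading the right pushout squares as $\mathcal{S}$-pushouts of the spans $H' \xleftarrow{f_1} R_1 \xleftarrow{r_1} I_1 \xrightarrow{k_1} K_1$ and its $B_2$-counterpart, a second application of Lemma~\ref{lem:subst_pushout} (at the same vertex $v$, whose image $r_1(v)$ is again a nonterminal in $I_1$, hence in $R_1$, since $B_1$ preserves nonterminal labels through $r_1$) yields the substituted right $\mathcal{S}$-pushout with $r_3 := SM(r_1,r_2,v)$, $g_3 := SM(g_1,g_2,k_1(v))$, $f_3 := SM(f_1,f_2,r_1(v))$, and $i_3$ as before. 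Joining the two substituted squares along $i_3$ gives the required DPO diagram.

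It remains to verify that $m_3$ and $f_3$ are saturated matchings; for this I would invoke Lemma~\ref{lem:subst_matching} twice (once for each side, using again the reversed rules on the right), and mono-ness of every morphism in the joined diagram follows by Lemma~\ref{lem:subst_embed} applied to each $SM$ construction. The main obstacle is purely bookkeeping: I have to check that the vertex chosen for substitution is coherent across the two squares (it is $v$ in $I_1$, but $l_1(v) \in L_1$, $r_1(v) \in R_1$ and $k_1(v) \in K_1$ when viewed from the perspectives of $L_3$, $R_3$ and $K_3$ respectively), and that the substituted monomorphisms composed across the two squares agree on the shared object $I_3$. The commutativity $l_3 \circ i_3^{-1}$-side and $r_3 \circ i_3^{-1}$-side reduces, via the definition of $SM$, to the commutativities of the original two DPO diagrams, so no genuinely new calculation is required beyond tracking the substitutions carefully.
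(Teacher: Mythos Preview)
Your proposal is correct and follows essentially the same route as the paper: invoke Lemma~\ref{lem:saturated_preserve} to obtain that $f_1,f_2$ are saturated, use Lemma~\ref{lem:subst_matching} to conclude that $m_3,f_3$ are saturated, and apply Lemma~\ref{lem:subst_pushout} once per square. The paper's proof is terser and reorders these steps slightly, but the lemma structure and the underlying argument are identical.
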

\begin{proof}
Using Lemma~\ref{lem:saturated_preserve}, we see that $f_1$ and $f_2$ are
saturated matchings. Then, using Lemma~\ref{lem:subst_matching}, we get
that $m_3$ and $f_3$ are saturated matchings. Then, the theorem follows
by two applications of Lemma~\ref{lem:subst_pushout}.
\end{proof}

\subsection{Admissible B-edNCE grammar rewriting}\label{sub:admis}

In the previous subsection we showed how to rewrite extended graphs in a way
which respects graph substitutions. Every production of a B-edNCE grammar is
simply an extended graph which also has an associated nonterminal label. We
will use the results from the previous subsection for rewriting B-edNCE
grammars locally, that is, on a per-production basis. In this subsection,
we will describe the rest of the conditions which we need in order to
rewrite B-edNCE grammars in a way which respects their concrete derivations.

In general, a rewrite rule in $\mathbf{edNCE}$ is simply a span of
monomorphisms $G_L \xleftarrow l G_I \xrightarrow r G_R.$ However, in the
absence of additional structure between these grammars, it is very difficult to
relate their languages in any way. So, we begin by introducing extra structure
which would then allow us to relate their languages by considering
\emph{parallel} derivations in all three grammars.

\begin{definition}[B-edNCE Pattern]
A B-edNCE \emph{pattern} is a triple of B-edNCE grammars $B:= G_L
\xleftarrow l G_I \xrightarrow r G_R$,
where $l$ and $r$ are grammar monomorphisms which are
bijections between the productions of all three grammars. Moreover,
$l$ and $r$ are also label-preserving bijections between the nonterminals in
corresponding productions of the grammars. In addition, all three grammars
have the same initial nonterminal label. If $p$ is a production in $G_L$,
then $B_p$ will refer to the extended graph rewrite rule $rhs(p)
\xleftarrow{l'} rhs(p') \xrightarrow{r'} rhs(p''),$
where $p'$ and $p''$ are the corresponding productions of $p$ in $G_I$ and
$G_R$ respectively,
and $l'$ and $r'$ are the restrictions of $l$ and $r$ to $rhs(p')$.
\end{definition}

\begin{example}\label{ex:b-ednce-rewrite-pattern}
Consider the B-edNCE rewrite pattern
$B:= G_L \xleftarrow l G_I \xrightarrow r G_R$, where the grammars are
given by:
\cstikz[0.9]{b-pattern-example.tikz}
where $l$ and $r$ map each production of $G_I$ vertically up or down
respectively. Note that in this case the mapping on the vertices is uniquely
determined by their labels, because all productions have one node-vertex and
one nonterminal vertex. $G_L$ generates the language of complete graphs
$K_n$ and $G_I$ and $G_L$ (which are identical) generate the language of
all star graphs $K_{1,n}$.
\end{example}

Note that the defining conditions of a B-edNCE pattern imply that given
a derivation sequence for any of the grammars, we can apply the corresponding
productions to corresponding nonterminals in any of the other two grammars.
This is made more precise by the following definition.

Using Lemma~\ref{lem:subst_embed}, we can define how to embed entire
concrete derivations from a grammar $G_1$ which embeds into another grammar
$G_2.$ This works very well for B-edNCE patterns as we can always construct
a pair of parallel derivation sequences, because their productions and
nonterminals within them are in a 1-1 correspondence.

\begin{lemma}\label{lem:induced}
Given a grammar monomorphism $m: G_1 \to G_2$
and a derivation sequence $s_1$ given by:
\begin{align*}
\emph{sn}(S_1,v_1) =: H_0
&\Longrightarrow_{v_1,p_1}^{G_1}& H_1 &\Longrightarrow_{v_2,p_2}^{G_1}& H_2
&\Longrightarrow_{v_3,p_3}^{G_1}& \cdots
&\Longrightarrow_{v_n,p_n}^{G_1}& H_n
\end{align*}
then there exists a derivation sequence $s_2$ in $G_2$, given by:
\begin{align*}
\emph{sn}(S_1,v_1) =: H_0'
&\Longrightarrow_{v_1,m(p_1)}^{G_2}& H_1'
&\Longrightarrow_{m(v_2),m(p_2)}^{G_2}& H_2'
&\Longrightarrow_{m(v_3),m(p_3)}^{G_2}& \cdots
&\Longrightarrow_{m(v_n),m(p_n)}^{G_2}& H_n'
\end{align*}
Moreover, we can
inductively define extended graph monomorphisms $m_i: H_i \to H_i'$ by
setting:
\begin{align*}
m_0 &:= v_1 \mapsto v_1\\
m_i &:= SM(m_{i-1}, m_{p_i},v_i)
\end{align*}
where $m_{p_i}$ is the restriction of $m$ to the extended graph $rhs(p_i)$.
We will refer to $s_2$ as the induced embedding
of $s_1$ into $G_2$ over $m$.
\end{lemma}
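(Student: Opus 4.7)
The plan is to proceed by induction on the length $n$ of the derivation sequence $s_1$, simultaneously constructing the derivation $s_2$ in $G_2$ and the family of extended graph monomorphisms $m_i : H_i \to H_i'$.

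For the base case ($n=0$), both derivations start at $H_0 = H_0' = \text{sn}(S_1, v_1)$, which is a single vertex labelled by the initial nonterminal. Here $m_0$ is just the identity (equivalently, $v_1 \mapsto v_1$), and this is trivially an extended graph monomorphism, because the starting graph has no edges and no connection instructions to preserve. Note that this step implicitly uses the fact that a derivation in $G_2$ with $m(p_1)$ applied to $v_1$ is well-defined, which holds because $m$ is a grammar homomorphism and so $\mathit{lhs}(p_1) = \mathit{lhs}(m(p_1)) = S_1$.

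For the inductive step, suppose we have built $s_2$ up to $H_{i-1}'$ together with an extended graph monomorphism $m_{i-1} : H_{i-1} \to H_{i-1}'$. First I would verify that the next derivation step in $G_2$ is legal: the vertex $v_i$ lives in $H_{i-1}$ with label $\mathit{lhs}(p_i)$, so since $m_{i-1}$ is an extended graph homomorphism, its image $m_{i-1}(v_i)$ in $H_{i-1}'$ carries the same nonterminal label; combined with the fact that $m$ is a grammar homomorphism (so $\mathit{lhs}(m(p_i)) = \mathit{lhs}(p_i)$), this guarantees that $m(p_i)$ may be applied to $m_{i-1}(v_i)$, producing $H_i' := H_{i-1}'[m_{i-1}(v_i)/\mathit{rhs}(m(p_i))]$. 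Here I am using the standard notational abuse where $m(v_i)$ in the sequence $s_2$ abbreviates $m_{i-1}(v_i)$. Next, I would define $m_i := SM(m_{i-1}, m_{p_i}, v_i)$, where $m_{p_i} : \mathit{rhs}(p_i) \to \mathit{rhs}(m(p_i))$ is the restriction of $m$ to the production's right-hand side, which is an extended graph monomorphism by the definition of grammar homomorphism. Lemma~\ref{lem:subst_embed} then immediately yields that $m_i$ is a well-defined extended graph monomorphism from $H_i = H_{i-1}[v_i/\mathit{rhs}(p_i)]$ to $H_i' = H_{i-1}'[m_{i-1}(v_i)/\mathit{rhs}(m(p_i))]$, completing the inductive step.

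The main obstacle, and the only subtle point, is checking that the substitutions $H_i = H_{i-1}[v_i/\mathit{rhs}(p_i)]$ and $H_i' = H_{i-1}'[m_{i-1}(v_i)/\mathit{rhs}(m(p_i))]$ really do match up under $SM$ in the sense required by Lemma~\ref{lem:subst_embed}, since the lemma is stated for a vertex $x$ in the domain of the first monomorphism with the daughter substituted at $m_1(x)$ in the codomain. This is exactly our setup with $x = v_i$, first monomorphism $m_{i-1}$, and second monomorphism $m_{p_i}$, so no extra work is needed beyond unpacking definitions. All the heavy lifting regarding bridges and connection-instruction relabelling is already handled inside Lemma~\ref{lem:subst_embed} and inside the associativity of extended graph substitution. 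Hence the induction goes through and the induced embedding $s_2$ together with the family $\{m_i\}$ has all the claimed properties.
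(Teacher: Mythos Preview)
Your proof is correct and follows essentially the same approach as the paper: both arguments proceed by induction on the length of the derivation, appeal to the definition of grammar homomorphism to justify that $s_2$ is well-defined, and then invoke Lemma~\ref{lem:subst_embed} at each step to conclude that $m_i = SM(m_{i-1}, m_{p_i}, v_i)$ is a monomorphism. Your write-up is somewhat more explicit (in particular, your observation that $m(v_i)$ in the statement really abbreviates $m_{i-1}(v_i)$ is exactly the notational simplification the paper acknowledges immediately after the lemma), but the substance is the same.
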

\begin{proof}
For each production $p_i$ of $G_1$, we know there exists a corresponding
production $m(p_i)$ of $G_2$ by definition of grammar morphism. The
same definition then also implies that production $m(p_{i-1})$ will create
nonterminal vertex $m(v_i)$ in $G_2$, where nonterminal vertex $v_i$ is
created by production $p_{i-1}$ of $G_1$. Thus, the derivation sequence $s_2$
is well-defined.

Clearly, $m_0$ is a mono and observe that each $m_{p_i}$ is
also a mono on extended graphs (because it is simply the restriction of
$m$ which is a mono). Then, by inductively applying
Lemma~\ref{lem:subst_embed}, it follows that each $m_i$ is a mono.
\end{proof}

Note that in the above definition we implicitly assume that $m$ acts on
disjoint and isomorphic production copies of the productions for simplicity of
the definition. Thus, when there exists a mono between grammars $G_1$ and
$G_2$, then for each concrete derivation $s_1$ in $G_1$, we can consider a
corresponding derivation $s_2$ (not necessarily concrete) in $G_2$ where each
extended graph in $s_1$ can be injectively mapped into a corresponding extended
graph in $s_2$. In particular, this means that the sentential forms of $G_1$
can be injectively mapped into corresponding sentential forms of $G_2$.

\begin{definition}[B-edNCE Pattern Instantiation]
Given a B-edNCE pattern $G_L \xleftarrow{l} G_I \xrightarrow{r} G_R$, a
(concrete) pattern instantiation is a
triple of (concrete) derivation sequences of the following form:
\begin{align*}
sn(S, v_1) &\Longrightarrow_{v_1,l(p_1)}^{G_L} H_1'
\Longrightarrow_{l(v_2),l(p_2)}^{G_L} H_2'
\Longrightarrow_{l(v_3),l(p_3)}^{G_L} \cdots
\Longrightarrow_{l(v_n),l(p_n)}^{G_L} H_n'\\
sn(S, v_1) &\Longrightarrow_{v_1,p_1}^{G_I} H_1
\Longrightarrow_{v_2,p_2}^{G_I} H_2 \Longrightarrow_{v_3,p_3}^{G_I} \cdots
\Longrightarrow_{v_n,p_n}^{G_I} H_n\\
sn(S, v_1) &\Longrightarrow_{v_1,r(p_1)}^{G_R} H_1''
\Longrightarrow_{r(v_2),r(p_2)}^{G_R} H_2''
\Longrightarrow_{r(v_3),r(p_3)}^{G_R} \cdots
\Longrightarrow_{r(v_n),r(p_n)}^{G_R} H_n''
\end{align*}
where the derivation sequence in the grammar $G_L$ ($G_R$) is the
induced embedding from the derivation sequence in $G_I$ over the grammar
embedding $l$ ($r$). The language
of $B,$ denoted $L(B),$ is the set of all graph rewrite rules
$H_n' \xleftarrow{l_n} H_n \xrightarrow{r_n} H_n''$
obtained by
performing concrete parallel derivations, where $l_n$ and $r_n$ are the
induced embeddings (monos) given from Lemma~\ref{lem:induced}.
\end{definition}
Therefore, by performing parallel derivations, a B-edNCE pattern can be used to
specify infinitely many rewrite rules between concrete graphs. In this
sense, a B-edNCE pattern may be seen as representing an equational schema
between families of graphs.

\begin{example}\label{ex:final-parallel-shit}
Consider the B-edNCE pattern from Example~\ref{ex:b-ednce-rewrite-pattern}.
Every parallel instantiation of length $n$ gives us a graph rewrite
rule which may be used for rewriting. For example, a derivation of length
3 results in the following graph rewrite rule:
\cstikz[0.9]{b-pattern-derive2.tikz}
where the concrete graphs form a span which is a graph rewrite
rule. The particular morphisms for the span are induced from the
B-edNCE rewrite rule, as explained in the previous definition. This
particular rule rewrites $K_3$ to $K_{1,3}$ by deleting one edge.
\end{example}

In Section~\ref{sec:partial-besg}, we showed how to compute pushout
complements and we identified the matching conditions under which they exist.
However, arbitrary grammar matchings may result in grammars whose induced
languages are difficult to reason about using the B-edNCE patterns that
we have. So, we shall introduce \emph{saturated grammar matchings} which
are special kinds of grammar matchings. We will later prove that by
restricting ourselves to saturated grammar matchings, we will be able to
rewrite entire B-edNCE grammars using a B-edNCE pattern in a way which
allows us to relate concrete derivations using the graph rewrite rules
induced by the pattern.

\begin{definition}[Saturated grammar matching]
Given a B-edNCE pattern $G_L \xleftarrow{l} G_I \xrightarrow{r} G_R$ with
initial nonterminal label $S$,
and given a
B-edNCE grammar $G_H$, we say
that $m : G_L \to G_H$ is a \emph{saturated grammar matching}, if the following
conditions are satisfied:
\begin{description}
\item[Embedding:] $m : G_L \to G_H$ is a monomorphism
\item[Production saturation:] For every production $p \in G_I,$ let $l(p) :=
p_L, r(p) = p_R$ and let $m\circ l(p) = p_H.$
Then, the restriction of $m$ to $m' : rhs(p_L) \to rhs(p_H)$ is
an extended graph saturated matching with respect to the restrictions of $l$
and $r$ to $rhs(p_L) \xleftarrow{l'} rhs(p) \xrightarrow{r'} rhs(p_R)$
\item[Production branching:] For every production $p \in G_H,$ if there exists
a production $p' \in G_L,$
such that $m(p')=p,$ then for any $q \in G_H$ with $lhs(p)=lhs(q),$ there
exists $q' \in G_L,$ such that $m(q') = q$
\item[Initiality:] The initial nonterminal label of $G_H$ is either $S$ or it
is not used by any vertex in $G_L.$
\item[Nonterminal covering:] For each nonterminal vertex $x \in G_H$, if
$\lambda(x)$ is a label used in $G_L$ and $x \not \in m(G_L),$
then $\lambda(x) = S.$
\end{description}
\end{definition}

Again, as in the case of extended graph saturated matching, these conditions
are stronger
than what is needed in order to perform DPO rewriting on edNCE grammars.
However,
they guarantee that derivations in the grammar are admissible in the
sense shown in Theorem~\ref{thm:main}.

\begin{example}\label{ex:i-am-sick-of-it}
Consider the B-edNCE rewrite pattern from
Example~\ref{ex:b-ednce-rewrite-pattern}.
This rewrite pattern may be matched onto
another B-edNCE grammar $G_H$, given by: 
\cstikz[0.9]{b-example2.tikz}
where again the matching is uniquely determined by the labels of the
productions and vertices. This is an example of a saturated matching.
If we decide to perform the rewrite at this matching, then
the result will be
the B-edNCE grammar $G_M$, given by:
\cstikz[0.9]{b-example3.tikz}
The rewrites are performed locally, on a per-production basis. The language
$G_H$ consists of complete graphs, where in addition, each vertex
of the complete graph has a (grey) line graph of arbitrary length glued onto it.
Similarly, the language of $G_M$ consists of a star graph, where there
are (grey) line graphs glued onto each of the star vertices.
\end{example}

We proceed by showing how a saturated matching between two grammars relates
their derivation trees (cf. Subsection~\ref{sub:derivation-tree}). For this, we
first need to introduce an additional definition.

\begin{definition}[Matched nodes, edges in derivation tree]
Given a saturated grammar matching $m: G_L \to G_H$ and a derivation tree $T$
for $G_H$, we say that a node $n$ of $T$ is \emph{matched}, if there exists
a production $p'
\in G_L$ such that $m(p') = \lambda_T(n)$. We say that an edge $e$ of $T$ is
\emph{matched}, if its identified nonterminal vertex
is in the image of $m$.
\end{definition}

The following lemma describes the effect of the embedding, production
branching,
initiality and nonterminal covering conditions and relates some derivations in
the grammar $G_H$
to those of $G_L$ (and thus also, $G_I$ and $G_R$).

\begin{lemma}\label{lem:tree}
Given a B-edNCE pattern $G_L \xleftarrow{l} G_I \xrightarrow{r} G_R$ with
initial nonterminal label $S$, a saturated grammar matching
$m: G_L \to G_H$
and given a concrete derivation tree $T$ for
$G_H$ then for any maximal subtree $Y$ of $T$, such that $Y$ consists only of
matched nodes and matched edges, there exists an isomorphic concrete derivation
tree $Y'$ for $G_L$.
\end{lemma}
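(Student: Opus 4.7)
The plan is to define $Y'$ by relabelling each node of $Y$ with the unique preimage of its production under $m$ (unique because $m$ is a monomorphism on productions), and then to verify that the resulting labelled tree is a valid concrete derivation tree for $G_L$. The verification splits naturally into four tasks: (i) at every node of $Y$, the children in $Y$ correspond bijectively to the nonterminals in the RHS of the relabelled production; (ii) the label of each such child is a production of $G_L$ whose LHS matches the corresponding nonterminal; (iii) the root of $Y'$ is labelled by a production with LHS equal to $S$; and (iv) the leaves of $Y'$ are labelled by terminal productions of $G_L$.

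For (i) and (ii), I would fix a node $n \in Y$ with label $p = m(p')$. The production saturation condition, together with the fact that the pattern morphisms $l$ and $r$ are label-preserving bijections on nonterminals between corresponding productions, ensures that $m$ restricts to a label-preserving injection from the nonterminals of $rhs(p')$ into the nonterminals of $rhs(p)$. The task is then to show that for every nonterminal $v' \in rhs(p')$, the unique child of $n$ in $T$ expanding $m(v')$ lies in $Y$: the edge is matched because $m(v') \in m(G_L)$, and the child's production $q$ satisfies $lhs(q) = \lambda(m(v')) = \lambda(v')$, which is the LHS of some production of $G_L$ (namely any production of $G_L$ that expands a vertex with this nonterminal label, which must exist for $G_L$ to be a well-formed grammar); production branching then promotes $q$ itself into the image of $m$. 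Conversely, any matched edge out of $n$ in $Y$ expands a nonterminal of $rhs(p)$ lying in the image of $m$, which uniquely corresponds to a nonterminal of $rhs(p')$. Label compatibility follows immediately from $m$ preserving labels.

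For (iii), I would consider two cases depending on whether the root of $Y$ coincides with the root of $T$. If it does, then the label of the root has LHS equal to the initial nonterminal $S_H$ of $G_H$; since the root is matched, $S_H$ must be a label used in $G_L$, and the initiality condition then forces $S_H = S$, so the preimage production has LHS $S$. Otherwise the root of $Y$ has a parent in $T$ that is either unmatched or connected via an unmatched edge, so the nonterminal $v$ of $G_H$ expanded at this edge is not in the image of $m$; but $\lambda(v) = lhs(p) = lhs(p')$ is a label used in $G_L$, so the nonterminal covering condition forces $\lambda(v) = S$ and again the preimage has LHS $S$. For (iv), a leaf of $Y$ must also be a leaf of $T$: if it were not, its children in $T$ would, by the argument used for (i), include a matched child reached by a matched edge for every nonterminal of $rhs(p')$, contradicting that no such children are in $Y$ unless $rhs(p')$ has no nonterminals at all; this last condition precisely says $p'$ is a terminal production.

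The subtlest point, and the one I would treat most carefully, is the application of production branching in step (ii). Production branching only applies to a production of $G_H$ whose LHS label is already realised by some production in the image of $m$, so one has to exhibit an auxiliary $p_1' \in G_L$ with $lhs(p_1') = \lambda(v')$ before branching can promote the concrete child's production $q$ into the image of $m$. This relies on $G_L$ actually admitting a production that expands each nonterminal label appearing in the RHS of some other production, which is a mild well-formedness assumption (for instance, it holds whenever $G_L$ is reduced). A related delicate point is ensuring in step (iv) that maximality of $Y$ really does rule out internal-to-$T$ leaves of $Y$ whose preimage production still has unexpanded nonterminals; handling both the matched-node-with-unmatched-edge and the unmatched-parent cases requires invoking production branching and nonterminal covering together in their full strength.
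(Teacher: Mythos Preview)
Your proposal is correct and takes essentially the same route as the paper: relabel each node of $Y$ with the unique $m$-preimage of its production, handle the root via the initiality and nonterminal covering conditions, and establish concreteness by combining maximality of $Y$ with the fact that matched edges out of a node correspond exactly to the nonterminals of $rhs(p')$. Your explicit attention to why a matched edge forces a matched child---invoking production branching together with a mild well-formedness assumption on $G_L$---is in fact more careful than the paper's proof, which glosses over this step.
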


\begin{proof}
Let the root of $Y$ be $n$
and let's assume it has label $p$. If $n$ is also the root of $T$, then $lhs(p)
= S$ by the initiality condition. Otherwise, consider the predecessor of
$n$ which we shall call $k.$ If $k$ is not a matched node, then it follows from
the nonterminal covering condition that $lhs(p) = S.$ If $k$ is a matched node,
then the edge between $k$ and $n$ is not matched as otherwise we would violate
the maximality assumption for $Y$. Then, it follows $lhs(p) = S$ by the
nonterminal covering condition. Therefore, in all cases, $lhs(p) = S$.

From this, by using the production branching condition, we get that there
exists $p' \in G_L$ with $lhs(p') = S$ and $m(p') = p$. The root of $Y'$
is then a node with label $p'$.

Using an inductive argument, we can explain how to construct the rest of $Y'$.
Let $k$ be a node in $Y$ for which we have already constructed its isomorphic
counterpart $k'$ in $Y'$. Let $h$ be any child of $k$ which is also in $Y$ and
let its label be $p$. Then, $h$ is a matched node and therefore we can
construct a node $h'$ in $Y'$ with label $p' \in G_L,$ such that $m(p')=p$, as
required. The parent edge $e$ of $h$ is a matched edge, therefore it identifies
a nonterminal vertex $x$ in the image of $m$. Thus, we can construct an edge
$e'$ from $k'$ to $h'$ in $Y'$ which identifies the pre-image of $x.$

What remains to be shown is that the tree $Y'$ is concrete. We already know
that the root of $Y'$ is an initial production of $G_L$, so the proposition
will follow if we prove that every node in $Y'$ has as many children as it
has nonterminal vertices in its associated production in $G_L$. This is
equivalent to proving that every node in $Y$ has as many children as
nonterminal vertices in the image of $m$ in its associated production in $G_H$
(because nonterminal vertices which aren't in the image of $m$ cannot have
matched edges associated to them). The tree $T$ is concrete and therefore
every node has the full number of children possible in $T$. $Y$
is a maximal subtree where all nodes and edges are matched and by definition
an edge is matched iff its associated nonterminal vertex in the parent
production is in the image of $m$. This completes the proof.
\end{proof}

\begin{example}\label{ex:derivation-shit-final}
Continuing on from Example~\ref{ex:i-am-sick-of-it}, consider the rewrite
described there (which is over a saturated matching). Then, for the derivation
tree for the grammar $G_H$ given by:
\cstikz{derive-tree-laino.tikz}
the maximal subtree consisting of matched nodes and edges is coloured in red.
That is because only these productions are matched from the B-edNCE
pattern. Also, it's easy to see that the corresponding derivation tree
consisting of only the red nodes and edges is concrete for the
B-edNCE pattern.
\end{example}

We have seen that B-edNCE patterns are useful for relating the derivations
between a triple of grammars. However, when we rewrite a target grammar, we
wish to relate its language to the language of the result of the rewrite.
In order to do this, we will introduce a notion which is similar to that
of B-edNCE pattern, but instead it is defined over a pair of grammars.

\begin{definition}[B-edNCE Correspondence]
A B-edNCE \emph{correspondence} is a couple of B-edNCE grammars $B:= (G_L, G_R),$
such that
there is a label-preserving bijection between the productions of both grammars.
Moreover,
there's also a label-preserving bijection between the nonterminals in
corresponding productions of the grammars. In addition, both grammars
have the same initial nonterminal label.
\end{definition}

So, a B-edNCE correspondence is very similar to a B-edNCE pattern, but it
doesn't have a third (boundary) grammar. We shall use a B-edNCE correspondence
to keep track of how DPO rewrite rules modify a B-edNCE grammar and relate the
languages of the original and modified grammar. Similarly to B-edNCE patterns,
we are able to perform parallel instantiations in a similar way.

\begin{definition}[B-edNCE correspondence instantiation]
\label{def:b-ednce-inst}
Given a B-edNCE correspondence $(G_1, G_2)$, an \emph{instantiation} is given
by a pair of concrete derivations:
\begin{align*}
sn(S,v_1)
&\Longrightarrow_{v_1,p_1}^{G_1}& &H_1&
&\Longrightarrow_{v_2,p_2}^{G_1}& &H_2&
&\Longrightarrow_{v_3,p_3}^{G_1}& \cdots
&\Longrightarrow_{v_n,p_n}^{G_1}& &H_n& \\
sn(S,v_1)
&\Longrightarrow_{v_1,F(p_1)}^{G_2}&    &H_1'&
&\Longrightarrow_{F(v_2),F(p_2)}^{G_2}& &H_2'&
&\Longrightarrow_{F(v_3),F(p_3)}^{G_2}& \cdots
&\Longrightarrow_{F(v_n),F(p_n)}^{G_2}& &H_n'&
\end{align*}
where $F$ is the bijection between the productions and nonterminal vertices
of the correspondence.
\end{definition}

Clearly a B-edNCE pattern and a B-edNCE correspondence are very closely
related. In fact, every B-edNCE pattern induces a B-edNCE correspondence
by choosing any two of its grammars (and defining the necessary bijection
appropriately). Our next lemma shows the interplay between B-edNCE patterns,
saturated matchings and B-edNCE correspondences.

\begin{lemma}\label{lem:pattern}
If $G_H$ is a B-edNCE grammar and $G_M$ is an $\mathcal S$-rewrite of $G_H$
over a B-edNCE pattern $B := G_L \xleftarrow{l} G_I \xrightarrow{r} G_R$ at
saturated matching
$m : G_L \to G_H,$ where:
\cstikz{b-ednce-dpo.tikz}
then $(G_H, G_M)$ is a B-edNCE correspondence.
Moreover, the bijection $F : P_{G_H} \to P_{G_M}$ between the productions
of $G_H$ and $G_M$ is given by:
\[
F(p) =
\begin{cases}
f\circ r(p') & \text{ if } p = m\circ l(p') \text{, where } p' \in G_I\\
g(p') & \text{ if }  p = s(p') \text{, where } p' \in G_K, p' \not\in k(G_I)\\
\end{cases}
\]
and the bijection $G$ between the nonterminal vertices of $G_H$ and $G_M$
is given by:
\[
G(x) =
\begin{cases}
f\circ r(x') & \text{ if } x = m\circ l(x') \text{, where } x' \in G_I \text{
is a nonterminal vertex}\\
g(x') & \text{ if }  x = s(x') \text{, where } x' \in G_K, x' \not\in k(G_I)
\text{ and } $x'$ \text{ is a nonterminal vertex}\\
\end{cases}
\]
\end{lemma}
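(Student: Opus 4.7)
The plan is to verify the three defining conditions of a B-edNCE correspondence in turn: that $F$ is a label-preserving bijection on productions, that the declared $G$ restricts to a label-preserving bijection on the nonterminal vertices of every pair of corresponding productions, and that $G_H$ and $G_M$ share the same initial nonterminal label. The main tools will be the $\mathcal S$-pushout characterisations from Section~\ref{sec:partial-besg} (joint surjectivity, Lemma~\ref{lem:joint_surjection}, Lemma~\ref{lem:matching-ednce}, Lemma~\ref{lem:pushout_complement_monos}), together with the three bijectivity/labelling clauses in the definition of B-edNCE pattern and the five conditions of saturated grammar matching.

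For $F$, I would first decompose $P_{G_H}$ as the disjoint union $m(P_{G_L}) \sqcup s(P_{G_K} \setminus k(P_{G_I}))$, where disjointness follows from joint surjectivity of the left $\mathcal S$-pushout together with the pushout commutativity $m\circ l = s\circ k$ and injectivity of all four arrows (the latter via Lemma~\ref{lem:pushout_complement_monos}). Because $l$ is a bijection on productions (B-edNCE pattern clause), $m(P_{G_L}) = m\circ l(P_{G_I})$, so the first branch of $F$ is unambiguous: given $p = m\circ l(p')$, the preimage $p' \in P_{G_I}$ is unique by injectivity of $m\circ l$, and $f\circ r(p')$ is defined since $r$ is also a bijection on productions. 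The analogous decomposition $P_{G_M} = f\circ r(P_{G_I}) \sqcup g(P_{G_K}\setminus k(P_{G_I}))$ holds for $G_M$ via the right pushout, and the two branches of $F$ together form a bijection because $s$ and $g$ agree on $k(P_{G_I})$ up to the translation $m\circ l \leftrightarrow f\circ r$. Label preservation is immediate: every morphism in the DPO diagram is a grammar homomorphism, so $l_{P_{G_H}}(m\circ l(p')) = l_{P_{G_I}}(p') = l_{P_{G_M}}(f\circ r(p'))$, and similarly for the $g$-branch.

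For $G$, I would repeat the decomposition at the vertex level inside each production. Given a production $p = m\circ l(p') \in G_H$ with corresponding $F(p) = f\circ r(p')$, the vertices of $p$ split as $m(V_{\mathrm{rhs}(l(p'))}) \sqcup (s(V_{\mathrm{rhs}(k(p'))}) \setminus m\circ l(V_{\mathrm{rhs}(p')}))$. Crucially, the B-edNCE pattern requires $l$ (and $r$) to be a bijection on \emph{nonterminals} of corresponding productions, so no nonterminal in $m(V_{\mathrm{rhs}(l(p'))})$ can lie in $m(V_{\mathrm{rhs}(l(p'))} \setminus l(V_{\mathrm{rhs}(p')}))$; all such nonterminals are already in $m\circ l(V_{\mathrm{rhs}(p')})$, and dually for $G_M$. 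The two branches of $G$ therefore pair up nonterminals bijectively, and label-preservation again follows since all involved arrows are label-preserving. For productions $p = s(p_K) \in G_H$ with $p_K \notin k(P_{G_I})$, the no-dangling-vertices condition of Lemma~\ref{lem:matching-ednce} forces every vertex of $p$ to lie in $s(V_{\mathrm{rhs}(p_K)})$, and $g$ provides the bijection to $V_{\mathrm{rhs}(g(p_K))}$ vertex-for-vertex. Finally, the initiality condition of a saturated grammar matching gives two cases: either the initial label of $G_H$ equals the pattern's label $S$, in which case $G_M$ inherits $S$ via $f$ from the initial production of $G_R$; or the initial label of $G_H$ is a label absent from $G_L$, hence absent from $G_R$, in which case the corresponding initial production lies in the invariant part $s(P_{G_K}\setminus k(P_{G_I}))$ and is carried across unchanged via $g$.

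The main obstacle will be the nonterminal-vertex bookkeeping inside the "boundary" productions $p = m\circ l(p')$, where vertices of $p$ in $G_H$ can arise either from $G_L$ through $m$ or from $G_K$ through $s$. Showing that the two sources give exactly the same contribution to nonterminals after the rewrite (i.e.\ that the extra $s$-vertices in $V_p$ coincide, under $g \circ s^{-1}$, with the extra $g$-vertices in $V_{F(p)}$, and that no nonterminal is created or destroyed in the $L \leftrightarrow R$ exchange) is where the saturated-matching hypothesis really does the work, via the extended-graph saturated-matching consequences already established in Lemma~\ref{lem:subst_pushout} for componentwise pushouts of productions.
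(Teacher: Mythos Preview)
Your decomposition of $P_{G_H}$ and of the nonterminal vertices via joint surjectivity of the two $\mathcal S$-pushouts is exactly the mechanism behind the paper's proof, so the core of your argument is correct and matches the paper's approach. Two points deserve correction, however.

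First, you omit one thing the paper does check: that $G_M$ is a B-edNCE grammar at all. A B-edNCE correspondence is by definition a pair of \emph{B-edNCE} grammars, so you must verify the boundary conditions of Definition~\ref{def:b-ednce} for $G_M$. The paper closes this gap by observing that each local extended-graph rewrite $B_p$ cannot create an edge between two nonterminals, nor a connection instruction of the form $(\sigma,\beta,\gamma,x,d)$ with $\sigma$ nonterminal: the $R$-side of $B_p$ is a production of the B-edNCE grammar $G_R$, and the invariant part of the rewritten production comes from $G_H$, both of which already satisfy the boundary conditions. Without this, the lemma statement is not established.

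Second, your treatment of the initial nonterminal label is over-engineered and not quite right as written. The initial label is part of the data $(\Sigma,\Delta,\Gamma,\Gamma,P,S)$ of an edNCE grammar but is \emph{not} touched by grammar homomorphisms (see the definition of grammar homomorphism in Subsection~\ref{sub:ednce-partial}); hence an $\mathcal S$-rewrite, which is determined purely by the DPO diagram of morphisms, does not alter it. The paper simply says ``An $\mathcal S$-rewrite in \textbf{edNCE} does not modify the initial nonterminal label''. Your case split via the \textbf{Initiality} condition instead argues that $G_M$ possesses a \emph{production} carrying the right label, which is a different (and unnecessary) claim; the initial label of $G_M$ is declared, not inferred from its productions. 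Finally, your invocation of Lemma~\ref{lem:subst_pushout} for the ``main obstacle'' is not needed here: the bijection on nonterminals follows directly from the pattern clause that $l$ and $r$ are bijections on nonterminals in corresponding productions, so the per-production rewrite $B_p$ neither creates nor deletes nonterminals.
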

\begin{proof}
An $\mathcal S$-rewrite in \textbf{edNCE} does not modify the initial
nonterminal label of the grammar.
Therefore, both grammars $G_H$ and $G_M$ have the same initial nonterminal.

The fact
that there is a bijection between the productions of $G_H$ and $G_M$
follows after recognizing that only the bodies of the productions in $G_H$ are
rewritten. In particular, notice that each production $p \in G_L\ (G_R)$ is
in the image of $l\ (r).$ Therefore, no productions are added, nor removed
when performing the rewrite. This means that the definition of $F$ above
is complete (in the sense that it is totally defined) and checking that it
is a label-preserving bijection follows immediately.

Showing that $G$ is a bijection on nonterminal vertices follows by a
completely analogous argument.
The two grammars only differ in the bodies of the productions in the image
of $m$
which are modified by a rule $B_p$ from the pattern $B$. But, by definition of
B-edNCE pattern, any such rewrite rule neither deletes, nor adds nonterminal
vertices. This means that the definition of $G$ is complete and checking that
it is a label-preserving bijection follows immediately. If $x$ is a
nonterminal vertex in production $p$ of $G_H$, then $F(x)$ is a nonterminal
vertex in production $G(p)$ of $G_M$. This follows because both bijections
are defined in terms of the same morphisms, which preserve the structure
of the grammars.

We also have to show in addition that $G_M$ is a B-edNCE grammar. This follows
immediately after recognizing that a rule $B_p$ from a B-edNCE pattern cannot
add edges between nonterminal vertices, nor can it add connection instructions
of the form $(X, \alpha, \beta, x, d)$, where $X$ is a nonterminal label, as
this violates the definition of B-edNCE grammars.
\end{proof}

As the above lemma shows, when we are doing grammar rewrites over saturated
matchings, then we get a B-edNCE correspondence. In particular, the bijections
$F$ and $G$ relate nonterminal vertices and productions from $G_H$ to those of
$G_M$. If $x\ (p)$ is a nonterminal vertex (production) of $G_H$
and $x'\ (p')$ is a nonterminal vertex (production) of $G_M$, such that
$G(x) = x'\ (F(p) = p'),$ then we shall say that $x$ and $x'$ ($p$ and $p'$)
are \emph{corresponding} nonterminal vertices (productions). In particular,
if in the above lemma all morphisms are subgraph inclusions, then the
two bijections are simply the identity function.

The following lemma won't be used for the proof of the main theorem in
this section, but it will be useful for multiple proofs in the next section.
It is similar in spirit to Lemma~\ref{lem:saturated_preserve} in that it
shows that saturated matchings are preserved by rewrites.

\begin{lemma}\label{lem:saturated_grammar_matching}
Given a B-edNCE pattern $B := G_L \xleftarrow{l} G_I \xrightarrow{r} G_R$, a
B-edNCE grammar $G_H$ and a saturated matching $m: G_L \to G_H$, then if $G_H
\leadsto_{B,m} G_M$ is an $\mathcal S$-rewrite, with:
\cstikz{b-ednce-dpo.tikz}
then $f$ is a saturated grammar matching with respect to the B-edNCE pattern
$G_R \xleftarrow r G_I \xrightarrow l G_L.$
\end{lemma}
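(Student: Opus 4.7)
The plan is to verify each of the five defining conditions of a saturated grammar matching for $f$ with respect to the reversed pattern $G_R \xleftarrow{r} G_I \xrightarrow{l} G_L$. The embedding condition is immediate from Theorem~\ref{thm:partial_adhesive_rewrite}, which guarantees that every morphism in an $\mathcal S$-rewrite DPO diagram is mono. For production saturation, I would decompose the grammar-level DPO rewrite into per-production DPO rewrites on extended graphs: for every production $p \in G_I$ with images $p_L := l(p)$ and $p_R := r(p)$, restricting each of the five objects of the DPO diagram to the body associated with $p$ (using the production bijections induced by $l$, $r$, $k$, $s$, $f$, $g$) yields an extended-graph $\mathcal S$-rewrite. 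The left matching in this restricted diagram is an extended-graph saturated matching by the production saturation assumption on $m$, so Lemma~\ref{lem:saturated_preserve} directly provides that the restriction of $f$ to $rhs(p_R) \to rhs(f \circ r(p_R))$ is an extended-graph saturated matching against the reversed extended-graph rewrite rule, which is precisely the production saturation condition for $f$.

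For the remaining three conditions, I would rely on the explicit label-preserving bijections $F: P_{G_H} \to P_{G_M}$ on productions and $G$ on nonterminal vertices supplied by Lemma~\ref{lem:pattern}. Production branching for $f$ reduces to production branching of $m$: if $p \in G_M$ lies in the image of $f$, then by the formula for $F$, the production $F^{-1}(p) \in G_H$ lies in the image of $m$, and since $F$ preserves production labels, applying production branching of $m$ and then transporting back through $F$ gives production branching for $f$. Initiality follows because an $\mathcal S$-rewrite preserves the initial nonterminal label, so $G_H$ and $G_M$ share an initial label; if this label is not the common pattern label $S$, then by initiality of $m$ it does not appear in $G_L$, and since $l$ and $r$ induce label-preserving bijections on nonterminals between $G_L$, $G_I$ and $G_R$, it also does not appear in $G_R$.

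The most delicate case is nonterminal covering. Given a nonterminal vertex $x \in G_M$ with $\lambda(x)$ used in $G_R$ and $x \not\in f(G_R)$, the explicit description in Lemma~\ref{lem:pattern} forces $x = g(x')$ for some nonterminal $x' \in G_K$ with $x' \not\in k(G_I)$, and then $G^{-1}(x) = s(x')$. A standard pushout argument on the left square shows $s(x') \not\in m(G_L)$: indeed, if $s(x') = m(y)$ for some $y \in G_L$, then because $s$ and $m$ are jointly surjective onto a pushout along monos (Lemma~\ref{lem:joint_surjection}), $s(x')$ would have a preimage along $m \circ l = s \circ k$, forcing $x' \in k(G_I)$ by injectivity of $s$, a contradiction. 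Since the pattern's label-preserving bijections ensure that $\lambda(x) = \lambda(G^{-1}(x))$ is used in $G_L$, nonterminal covering of $m$ then yields $\lambda(x) = S$, as required.

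I expect the main obstacle to be justifying the decomposition of the grammar-level $\mathcal S$-rewrite into extended-graph $\mathcal S$-rewrites on corresponding productions so that Lemma~\ref{lem:saturated_preserve} is applicable; this requires checking that pushouts and matching conditions in $\mathbf{edNCE}$ behave component-wise on productions, which follows from the fact that the production component behaves just like any other component of the grammar under the pushout construction described in Lemma~\ref{lem:matching-ednce}. Once this decomposition is established, every remaining verification reduces to manipulating the bijections $F$ and $G$ from Lemma~\ref{lem:pattern}.
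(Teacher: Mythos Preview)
Your proposal is correct and follows essentially the same approach as the paper: the paper also derives \textbf{Embedding} from the partially adhesive structure, obtains \textbf{Production saturation} by applying Lemma~\ref{lem:saturated_preserve} production-wise, and reduces the remaining three conditions to the bijections supplied by Lemma~\ref{lem:pattern}. Your treatment of \textbf{Nonterminal covering} is more explicit than the paper's (which simply asserts that $x \notin f(G_R)$ iff the corresponding $x' \notin m(G_L)$), but the underlying argument is the same.
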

\begin{proof}
We have assumed that $G_M$ is an $\mathcal S$-rewrite of $G_H$ and therefore
from the partially adhesive structure of \textbf{edNCE} we know that all
of the morphisms in the DPO diagram, including $f$ are monomorphisms.
Therefore, the \textbf{Embedding} condition is satisfied by $f$. The
\textbf{Production saturation} condition follows immediately by applying
Lemma~\ref{lem:saturated_preserve} to each rewritten production. 
The remaining conditions follow easily by making use of the results from
Lemma~\ref{lem:pattern}. In particular, the
\textbf{Initiality} condition for $f$ is easily seen to be satisfied
by combining the fact that $(G_H, G_M)$ forms a B-edNCE correspondence
and the fact that the same condition is satisfied
for the saturated matching $m$. The \textbf{Production branching} condition is
satisfied, because the grammar rewrite only modifies the RHS of the
productions, thus each modified production is in the image of $f$ iff its
unmodified corresponding production is in the image of $m$ and therefore the
condition
follows from the \textbf{Production branching} condition which is satisfied for
$m$. Finally, the \textbf{Nonterminal covering} condition is satisfied for
$f$, because $G_L$ and $G_R$ use the same nonterminal labels, each
nonterminal $x \not\in f(G_R)$ iff $x' \not\in m(G_L),$ where $x'$ is
the corresponding nonterminal vertex of $x$ in $G_R$, and because
the \textbf{Nonterminal covering} condition is satisfied for $m$.
\end{proof}

A B-edNCE correspondence between two grammars $G_1$ and $G_2$ allows us to pair
up concrete graphs from $L(G_1)$ and $L(G_2)$. Our final definition from this
section is the notion of an \emph{admissible} B-edNCE correspondence.  In
short, an admissible correspondence with respect to a set of rewrite rules
$\mathcal R$ allows us to rewrite any graph from $L(G_1)$ into its
corresponding graph from $L(G_2)$ using some rewrite rules from $\mathcal R$.
Essentially,
the next definition
formalizes the notion of a sound transformation between two grammars.

\begin{definition}[Admissibility]
\label{def:admissible-cor}
A B-edNCE correspondence $B = (G_L, G_R)$ is \emph{admissible} with respect to
a
set of rewrite rules $\mathcal R,$ if for every instantiation of $B$:
\begin{align*}
sn(S, u_1) &\Longrightarrow_{*}^{G_L} H\\
sn(S, u_1) &\Longrightarrow_{*}^{G_R} H'
\end{align*}
there exists a sequence of rewrite rules $s_1,\ldots,s_n \in \mathcal R,$ such
that $H \leadsto_{s_1} \cdots \leadsto_{s_n} H'.$
\end{definition}

The main result in this section is the next theorem. It states that when
we restrict ourselves to saturated grammar matchings, then the result of the
rewrite is an admissible correspondence with respect to the language of
rewrite rules induced by the rewrite pattern.

\begin{theorem}\label{thm:main}
Given a B-edNCE pattern $B := G_L \xleftarrow{} G_I \xrightarrow{} G_R$, a
B-edNCE grammar $G_H$ and a saturated matching $m: G_L \to G_H$, then if $G_H
\leadsto_{B,m} G_H'$ is an $\mathcal S$-rewrite, then it follows $(G_H, G_H')$
is a B-edNCE correspondence
admissible with respect to $L(B).$
\end{theorem}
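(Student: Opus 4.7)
The plan is to decompose the proof into two parts. First, the fact that $(G_H, G_H')$ is a B-edNCE correspondence follows immediately from Lemma~\ref{lem:pattern}, which gives explicit bijections $F$ on productions and $G$ on nonterminal vertices. So the real work is in establishing admissibility with respect to $L(B)$.

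For admissibility, I would start by fixing an arbitrary instantiation of the correspondence $(G_H, G_H')$: a pair of parallel concrete derivations producing graphs $H \in L(G_H)$ and $H' \in L(G_H')$, where each derivation step uses corresponding productions (via $F$) applied to corresponding nonterminal vertices (via $G$). Since our grammars are confluent, I would then pass to the derivation tree $T$ for the $G_H$ derivation; by construction, the $G_H'$ derivation has the same tree shape with corresponding production labels. Each node of $T$ is therefore labelled by a production $p$ of $G_H$, and this production is either matched (i.e., $p = m \circ l(p')$ for some $p' \in G_I$) or not. Being matched is equivalent to lying in the image of $m$ on productions.

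Next, I would invoke Lemma~\ref{lem:tree}: each maximal subtree $Y$ of $T$ consisting solely of matched nodes and matched edges gives rise to an isomorphic concrete derivation tree $Y'$ in $G_L$. Via the bijections supplied by the B-edNCE pattern, $Y'$ extends to parallel concrete derivation trees in $G_I$ and $G_R$, which jointly produce a concrete rewrite rule $L_Y \xleftarrow{} I_Y \xrightarrow{} R_Y \in L(B)$. The idea is that $L_Y$ exactly describes the local subgraph of $H$ generated by the matched subtree, $R_Y$ describes the corresponding subgraph of $H'$, and $I_Y$ captures the common interface. Formally, one iterates Theorem~\ref{thm:subst_rewrite} along the derivation: starting from $sn(S,v_1)$ and inducting on the length of the derivation, the substitution-rewrite commutativity (and the fact that saturated matchings are preserved by rewriting, Lemma~\ref{lem:saturated_preserve}) ensures that the sentential forms in $G_H$ and $G_H'$ at each stage can be related by applying the concrete rule coming from the matched subtrees processed so far.

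To obtain the final rewrite sequence transforming $H$ into $H'$, I would enumerate the maximal matched subtrees $Y_1, \ldots, Y_k$ of $T$ in some order compatible with the tree structure (for instance, top-down). For each $Y_i$, the concrete rule from $L(B)$ produced above applies at an appropriate saturated matching inside the current sentential form; this matching is the one induced by the embedding of the extended graph rewrite rule coming from Lemma~\ref{lem:subst_matching} and tracked inductively through the derivation. Applying the $k$ rules in sequence transforms $H$ into $H'$, establishing admissibility. The hardest technical step will be verifying that at each stage the matching one needs is indeed a saturated matching for the concrete rewrite rule (so that the DPO rewrite is well-defined and yields exactly the corresponding sentential form in $G_H'$); this is where Theorem~\ref{thm:subst_rewrite}, Lemma~\ref{lem:subst_matching}, and the production saturation condition of the original saturated grammar matching $m$ all come together, because they guarantee that the extended-graph saturated matchings propagate correctly through substitution.
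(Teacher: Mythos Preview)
Your proposal is correct and follows essentially the same architecture as the paper's proof: Lemma~\ref{lem:pattern} for the correspondence, Lemma~\ref{lem:tree} for maximal matched subtrees, and iterated Theorem~\ref{thm:subst_rewrite} to obtain a rule in $L(B)$ for each such subtree. One point worth sharpening: at the final stage the rules $R_i$ are ordinary terminal-graph DPO rules, so what you actually need to verify is the standard no-dangling-edges condition when extending the matching from $\mathit{yield}(Y_i)$ to $\mathit{yield}(T)$, not a saturated matching; the paper handles this by observing that the root of each $Y_i$ is an $S$-production with empty connection instructions (so substitution creates no incoming bridges) and that the only unexpanded nonterminals inside $Y_i$ lie outside the image of $m$ and are adjacent only to boundary or non-matched vertices.
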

\begin{proof}
From Lemma~\ref{lem:pattern}, we know that $(G_H, G_H')$ is indeed
a B-edNCE correspondence. Thus, for every concrete derivation tree $T$ of $G_H$, an
isomorphic concrete derivation tree $T'$ exists for $G_H'$. However, the yield
of $T$ and the yield of $T'$ may differ. In particular, any subtree consisting
entirely of non-matched nodes in $T$ has the same yield as its corresponding
subtree in $T'$. But a subtree consisting of matched nodes in $T$ may have a
different yield compared to its corresponding tree in $T'$.

In the base case, if we consider a single matched node $u$ in $T$ and its
corresponding node $u'$ in $T'$, then if the label of $u$ is a production $p$
of $G_H$, then the label of $u'$ is a production $p'$ obtained by performing
the rewrite $rhs(p) \leadsto_{B_{p_l},m_p} rhs(p'),$ where $p_l \in G_L$ is
the production of $G_L$ such that $m(p_l) = p$ and $m_p$ is the restriction
of $m$ to $rhs(p)$.
Thus, by repeatedly applying
Theorem~\ref{thm:subst_rewrite}, we know there exist $m'$ and $R$, such that
$yield(Y) \leadsto_{R,m'} yield(Y'),$ for any subtree $Y$ of $T$ consisting
of matched nodes and edges and $Y'$ is its corresponding tree in $T'$. If we
choose $Y$ to be maximal with respect to matched nodes and edges,
then by Lemma~\ref{lem:tree} $R$ would be in $L(B)$. This means $R$ is
a standard DPO rewrite rule on terminal graphs.
Then, if $Y_i$ are the maximal subtrees of $T$
consisting of matched nodes and edges and $Y_i'$ are their corresponding
counterparts in $T'$, we can rewrite $yield(T) \leadsto_s yield(T')$ using
a sequence of rewrite rules $s$ consisting of rules $R_i$ obtained as already
explained (where the matchings extend trivially). The rules
$R_i$ can be applied in any order, or in fact all at the same time in parallel,
as they are matched on disjoint parts of the graph $yield(T)$.

What remains to be shown is that the no-dangling
edges condition for DPO rewriting on graphs is satisfied when applying each
rewrite rule
$R_i$ to the larger graph $yield(T)$ and we also need to explain how to extend
the matchings to the larger graph $yield(T)$ from what we have already
described.

Note, that for any maximal matched subtree $Y$ of $T$, $yield(Y)$ is in general
a sentential form of $G_H$. As already explained, we can rewrite such a
sentential form to the corresponding sentential form $yield(Y')$ of $G_H'$
using a graph rewrite rule $R$ at a matching $m'$, that is,
$yield(Y) \leadsto_{R,m'} yield(Y').$ However, in $yield(T)$ some of the
nonterminal vertices of $yield(Y)$ may be expanded, and, in addition,
$yield(Y)$ may itself be substituted for another nonterminal vertex of
another sentential form. In both cases, we can extend the match $m'$ to
$yield(T)$ trivially -- $m'$ maps a terminal graph to a terminal graph of
$yield(Y)$, which is unmodified after all substitutions in $yield(T)$,
therefore the extended
matching is the same as $m'$, where only the
codomain is changed. Because that matching is injective, we need to show the
standard no-dangling edges condition for graphs is satisfied.

Observe, that the root of $Y$ is labelled by the initial
nonterminal label of $G_L.$ Thus, by assumption, this production has no
connection instructions in $G_L$ and therefore neither does the corresponding
production in $G_H$ (follows from the production saturation condition).
Therefore, substituting $yield(Y)$ for any nonterminal vertex in any
sentential form of $G_H$ cannot establish new edges to any of its vertices.
Thus,
dangling edges may only be established by an
expansion of a nonterminal in $yield(Y).$ However, because $Y$ is maximal
with respect to matched nodes and edges, this means that the only non-expanded
nonterminal vertices in $Y$ are outside of the image of the matching $m$. Then,
due to the no-dangling condition which is satisfied locally for each production,
these nonterminal vertices may only be adjacent to either boundary vertices of
$yield(Y)$ (with respect to the rewrite rule $R$) or vertices outside of the
image of $m$. In either case, expanding such a nonterminal vertex cannot create
a dangling edge and therefore the extended matching
satisfies the no-dangling edges condition.
\end{proof}

The results in this section are not directly concerned with string graphs
or string diagrams. They have been stated in general for B-edNCE grammars
which generate languages of (arbitrary) graphs. It might be possible that
these results are applicable in other domains, but we leave this question open
for future work. 

\begin{example}
Continuing the running example from this section, consider the derivation
tree $Y$ from Example~\ref{ex:derivation-shit-final}. Its yield for the grammar
$G_H$ is the graph $H,$ given by:
\cstikz{shit-graph-tree1.tikz}
The same derivation tree may be used for grammar $G_M$ (because the two
grammars form a B-edNCE correspondence) and its yield is the graph $M,$ given
by:
\cstikz{shit-graph-tree2.tikz}
Then, from the main theorem in this section, we know that we can rewrite
$H \leadsto_{m,s} M$, where both the matching $m$ and the rewrite rule
$s$ are induced from the B-edNCE pattern used for the rewrite in combination
with the derivation tree $Y$. In particular, in order to obtain the
rewrite rule $s$, we simply take all the maximal matched subtrees (in
this case it is only one) whose yields are then concrete graph rewrite rules
which we need to apply. In this case, $s$ may be constructed by the
parallel derivation from Example~\ref{ex:final-parallel-shit}, as this is
the yield of the maximal matched subtree of $Y$ when evaluated at the
pattern.
\end{example}

In summary, we have shown how to rewrite B-edNCE grammars using B-edNCE rewrite
patterns in such a way that we can systematically relate their languages using
DPO rewrites induced by the rewrite patterns. In the following sections, we
will build upon these results by showing how to extend them to B-ESG grammars,
so that we may apply these techniques to the languages of string graphs (and
thus string diagrams) we are interested in. Then we will see that the notion
of admissibility we have introduced in this section translates into sound
rewrites of the string diagram families we are representing.

\section{B-ESG rewrite rules}\label{sec:b-esg-rules}

In Chapter~\ref{ch:besg} we introduced B-ESG grammars which generate languages
of
string graphs and which have some important decidability properties.
We can use a single B-ESG grammar to represent a single family of string graphs.
However, we still have not described how to represent equational schemas
between families
of string graphs. 
This is the primary contribution of this section.

We begin by introducing two auxiliary definitions.

\begin{definition}[Production input/output/isolated vertex]
Given a B-ESG grammar $B$, we say that a wire-vertex $w$ is a \emph{production
input} if its production in-degree is zero. $w$ is a \emph{production
output} if its production out-degree is zero. $w$ is a \emph{production
isolated wire-vertex} if its production in-degree and production out-degree are
both zero.
\end{definition}

\begin{definition}[B-ESG normal form]
A B-ESG grammar $B = (G,T)$ is in \emph{B-ESG normal form} if
$G$:
\begin{itemize}
\item is neighbourhood preserving
\item is context-consistent
\item contains no useless connection instructions
\item is reduced
\end{itemize}
Moreover, we will say that $B$ is in \emph{proper B-ESG normal
form}, if $B$ is in B-ESG normal form and also $G$ contains no production
isolated wire-vertices.
\end{definition}

Next, we introduce the notion of \textit{B-ESG rewrite pattern} which extends
the notion of B-edNCE correspondence to B-ESG grammars. As such, we can use
B-ESG rewrite patterns to relate the languages of two B-ESG grammars.

\begin{definition}[B-ESG rewrite pattern]\label{def:besg_rewrite_pattern}
A \textit{B-ESG rewrite pattern} is a pair of B-ESG grammars $B_1 = (G_1, T)$
and $B_2 = (G_2,T)$, where $(G_1, G_2)$ form a B-edNCE correspondence
and such that both grammars $G_1$ and $G_2$ are in proper B-ESG normal form.
Moreover, corresponding pairs of productions $p$ and $p'$ in $G_1$ and $G_2$
satisfy the following condition:
\begin{description}
\item[IO:] There is a label-preserving bijection between the
production inputs (outputs) in $p$ and the production inputs (outputs)
in $p'.$
\end{description}
\end{definition}

In the above definition, grammars $G_1$ and $G_2$ form a B-edNCE correspondence
and therefore we may perform parallel derivation
sequences on both grammars $G_1$ and $G_2$ in the same way as in the previous
section, which would generate a pair of encoded string graphs. If we follow
this by a decoding, then we would get a pair of string graphs. This is
made precise by the next definition.

\begin{definition}[B-ESG pattern instantiation]\label{def:inst}
Given a B-ESG rewrite pattern $(B_1, B_2)$, with $B_1 = (G_1, T)$ and $B_2 =
(G_2, T)$ a B-ESG pattern \emph{instantiation} is given by an
instantiation for $(G_1, G_2),$ followed by a decoding:
\begin{align*}
sn(S,v_1)
&\Longrightarrow_{v_1,p_1}^{G_1}& &H_1&
&\Longrightarrow_{v_2,p_2}^{G_1}& &H_2&
\cdots
&\Longrightarrow_{v_n,p_n}^{G_1}& &H_n& 
&\Longrightarrow_{*}^{T}& &F& \\
sn(S,v_1)
&\Longrightarrow_{v_1,K(p_1)}^{G_2}&    &H_1'&
&\Longrightarrow_{K(v_2),K(p_2)}^{G_2}& &H_2'&
\cdots
&\Longrightarrow_{K(v_n),K(p_n)}^{G_2}& &H_n'&
&\Longrightarrow_{*}^{T}& &F'&
\end{align*}
where $K$ is the bijection between the productions and nonterminal vertices
of the correspondence $(G_1, G_2).$
\end{definition}

In other words, we use an identical derivation sequence in the two B-edNCE grammars
to get two encoded string graphs, which are then uniquely decoded using the
rewrite rules of $T$.

\begin{example}\label{ex:rewrite-pattern-crap}
As a concrete example, our framework allows one to represent the B-ESG
rewrite pattern $B:= (B_L, B_R)$, where the
grammars are given by:
\cstikz[0.9]{pattern-example.tikz}
and where the required bijections map the productions vertically as they appear.
The mapping on the wire-vertices and nonterminals is then obvious, because all
productions have one wire-vertex and one nonterminal vertex. This rewrite
pattern relates a complete string graph (where all node-vertices have one
output) to a star string graph (where all node-vertices have one output).

This example is particularly interesting, because it contains the essential
graph data of the 'Y-$\Delta$ rule' from electrical circuits \cite{Curtis1998}
as well as the \emph{local complementation} rule for which we mentioned in the
introduction that it is very important for \textit{measurement-based quantum
computation} \cite{NestMBQC,euler_necessity}.

Every parallel instantiation of length $n$ thus gives a pair of string
graphs where their inputs and outputs are in 1-1 correspondence. For
example, a parallel derivation of length 3 (not counting decoding), gives us:
\cstikz[0.9]{pattern-derive.tikz}
Observe, that after each derivation step, the inputs/outputs and the
nonterminal vertices are in bijection.
\end{example}

\begin{definition}[Category of B-ESG grammars]
The category of B-ESG grammars over a decoding system $T$, denoted
\textbf{B-ESG}$_T$, or simply \textbf{B-ESG} if $T$ is clear from the context,
has objects B-ESG grammars $B = (G,T)$. A morphism $h$ between two B-ESG
grammars $B_1 = (G_1, T)$ and $B_2 = (G_2, T)$ is an \textbf{edNCE}
morphism $h : G_1 \to G_2$.
\end{definition}

In the above definition, we are using the same symbol to refer to both
a \textbf{B-ESG} morphism and also its underlying \textbf{edNCE} morphism.
We do this for simplicity, because the two are exactly the same. It will
always be clear from context to which notion we are referring to. It is
also obvious that, for any choice of $T$, \textbf{B-ESG}$_T$ is isomorphic to
the full subcategory of \textbf{edNCE} whose objects satisfy the
B-ESG conditions. Also, for brevity, if we denote a B-ESG grammar
as $B_X$, then its underlying edNCE grammar will be denoted $G_X$, so that
$B_X = (G_X, T)$. From now on, we also assume that the decoding system
$T$ is fixed for all B-ESG grammars.

We proceed by defining \emph{B-ESG rewrite rules}. A B-ESG rewrite rule is
simply a B-edNCE rewrite pattern which has been extended to B-ESG grammars.
B-ESG rewrite rules, can be used to relate the languages of three B-ESG
grammars and they are used in order to generate a language consisting of
string graph rewrite rules.

\begin{definition}[B-ESG rewrite rule]
A \emph{B-ESG rewrite rule} is a span of monos $B_L \xleftarrow l B_I
\xrightarrow r B_R,$ in \textbf{B-ESG}, where
$B_L = (G_L,T),
B_I = (G_I, T), B_R = (G_R, T),$ such that $G_L \xleftarrow l G_I \xrightarrow r
G_R$ is a B-edNCE pattern such that for every triple of corresponding
productions $p_L, p_I, p_R$ in $G_L, G_I, G_R$ respectively, we have:
\begin{description}
\item[Boundary:] $p_I$ contains only nonterminal vertices and isolated
wire-vertices and it contains no edges, connection instructions or
node-vertices.
\item[IO1:] $l$ and $r$ are surjections on the production inputs (outputs)
between $p_I$ and $p_L$, $p_I$ and $p_R$ respectively.
\item[IO2:] For every wire-vertex $w \in p_I$, $l(w)$ and $r(w)$ are both a
production input (output) in $p_L$ and $p_R$ respectively.
\end{description}
Moreover, $G_I$ is in B-ESG normal form and grammars $G_L$, $G_R$ are both
in proper B-ESG normal form.
\end{definition}

B-ESG rewrite rules and B-ESG rewrite patterns are very tightly related.
The next two lemmas show that they are interchargable -- we can get a
B-ESG rewrite rule from a B-ESG rewrite pattern and vice-versa. So, a
natural question to ask is why do we introduce two different notions which
are recoverable from one another. This is because a B-ESG rewrite rule
explicitly generates a language consisting of string graph rewrite rules (as we
shall prove) while a B-ESG rewrite pattern is useful for describing the
relationship between grammar rewrites (in this sense it is similar to a B-edNCE
correspondence).

However, note that the analogous notions -- B-edNCE patterns and B-edNCE
correspondences -- are not interchargable. The reason why this is the case is
because string graph rewrite rules have a natural boundary -- it simply
consists of all of the inputs and outputs which must be the same in both the
left-hand side and the right-hand side of each rule (cf.
Section~\ref{sec:string}). However, no such analogous notion exists for
arbitrary graphs.

\begin{lemma}
If $B_L \xleftarrow l B_I \xrightarrow r B_R$ is a B-ESG rewrite rule, then
$(B_L, B_R)$ is a
B-ESG rewrite pattern.
\end{lemma}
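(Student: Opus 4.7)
The plan is to verify each of the three defining conditions of a B-ESG rewrite pattern in turn. Two of them amount to bookkeeping via the B-edNCE pattern structure already present in the hypothesis; the third (the IO condition) is the substantive step, where I would use the interface grammar $B_I$ as a mediator between $B_L$ and $B_R$.

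First, I would establish that $(G_L, G_R)$ form a B-edNCE correspondence. Since $G_L \xleftarrow{l} G_I \xrightarrow{r} G_R$ is a B-edNCE pattern by hypothesis, both $l$ and $r$ are label-preserving bijections on productions, label-preserving bijections on nonterminal vertices within corresponding productions, and all three grammars share the same initial nonterminal label. Composing the production bijection given by $l$ with the inverse of the one given by $r$ yields a label-preserving bijection $K \colon P_{G_L} \to P_{G_R}$, and an analogous composition on nonterminals within corresponding productions yields the nonterminal bijection. The shared initial label of $G_L$ and $G_R$ then follows immediately. The requirement that $G_L$ and $G_R$ are in proper B-ESG normal form is already part of the hypothesis of being a B-ESG rewrite rule, so nothing additional is needed there.

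The main work is verifying the IO condition: for each pair of corresponding productions $p_L \in G_L$ and $p_R \in G_R$, with shared interface production $p_I \in G_I$, I need a label-preserving bijection between the production inputs (resp.\ outputs) of $p_L$ and of $p_R$. The Boundary condition tells us that every wire-vertex of $p_I$ is isolated, and the restrictions $l_p \colon \mathrm{rhs}(p_I) \to \mathrm{rhs}(p_L)$ and $r_p \colon \mathrm{rhs}(p_I) \to \mathrm{rhs}(p_R)$ are injective, label-preserving extended graph homomorphisms. By IO2, each wire-vertex $w \in p_I$ is mapped consistently by $l_p$ and $r_p$ either into the production inputs of $p_L$ and $p_R$ simultaneously, or into the production outputs of both. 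Because $G_L$ and $G_R$ are in proper B-ESG normal form, no wire-vertex of $p_L$ or $p_R$ is both a production input and a production output, so this partition of the wire-vertices of $p_I$ is unambiguous. By IO1, the restrictions of $l_p$ and $r_p$ to these input-classified (respectively output-classified) wire-vertices surject onto the production inputs (respectively outputs) of $p_L$ and $p_R$; combined with injectivity, these are bijections. The composition $r_p \circ l_p^{-1}$ restricted to each class then gives the required label-preserving bijections.

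The main obstacle is interpreting IO2 so that the partition of the boundary wire-vertices of $p_I$ into ``input-type'' and ``output-type'' classes is well-defined and consistent across $l$ and $r$; once proper B-ESG normal form of $G_L$ and $G_R$ is used to exclude the degenerate case of a wire-vertex being simultaneously a production input and output, the remainder is routine diagram-chasing through the span.
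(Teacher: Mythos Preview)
Your proposal is correct and follows essentially the same approach as the paper: the B-edNCE correspondence and normal-form conditions are inherited directly from the hypothesis, and the \textbf{IO} bijection is built by routing through $G_I$ via $r_p \circ l_p^{-1}$, using \textbf{IO2} for well-definedness and \textbf{IO1} (together with injectivity of $l,r$) for bijectivity. Your explicit invocation of proper B-ESG normal form to ensure the input/output partition of boundary wire-vertices is unambiguous is a detail the paper leaves implicit, but the argument is otherwise identical.
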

\begin{proof}
The only property which isn't immediately obvious is the \textbf{IO} property
from the definition of B-ESG rewrite pattern. However, this follows
by combining properties \textbf{IO1} and \textbf{IO2} from the definition
of B-ESG rewrite rule. In particular, the bijection can be defined
by identifying $l(w)$ with $r(w)$ for each wire-vertex $w \in G_I$. Condition
\textbf{IO2} ensures $l(w)$ and $r(w)$ are both production inputs (outputs)
and condition \textbf{IO1} ensures that every production input (output) in
$G_L$ and $G_R$ is covered by $l$ and $r$ in this way.
\end{proof}

\begin{lemma}
If $(B_L, B_R)$ is a B-ESG rewrite pattern, with $B_L = (G_L, T)$ and $B_R =
(G_R, T),$ then there exists a unique (up to isomorphism) B-ESG grammar $B_I
=(G_I, T)$ and monos $l$ and $r$, such that
$B_L \xleftarrow l  B_I \xrightarrow r B_R$ is a B-ESG rewrite rule.
\end{lemma}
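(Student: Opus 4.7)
The plan is to construct $B_I = (G_I, T)$ explicitly from the data of the B-ESG rewrite pattern, verify that it satisfies all the required conditions, and then argue uniqueness up to isomorphism by showing that the B-ESG rewrite rule axioms force the construction.

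First I would construct $G_I$. Since $(G_L, G_R)$ forms a B-edNCE correspondence, the bijection between productions of $G_L$ and $G_R$ tells me what the productions of $G_I$ should be: one production for every corresponding pair, with the same label. For each such pair $(p_L, p_R)$, I build the production body $p_I$ as follows. By the nonterminal bijection of the correspondence, take one nonterminal vertex for every matched pair of nonterminals in $p_L$ and $p_R$. By the \textbf{IO} condition of Definition~\ref{def:besg_rewrite_pattern}, there is a label-preserving bijection between the production inputs (outputs) of $p_L$ and $p_R$; for each such matched pair, introduce one isolated wire-vertex in $p_I$ with the matching label. I add no edges, no node-vertices, and no connection instructions. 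The monos $l$ and $r$ are then forced: on productions and nonterminals they are the bijections of the correspondence, and on isolated wire-vertices they send each boundary vertex of $p_I$ to its image in $p_L$ (resp.\ $p_R$) under the \textbf{IO} bijection.

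Next I would verify the B-ESG rewrite rule conditions. The span $G_L \xleftarrow{l} G_I \xrightarrow{r} G_R$ is a B-edNCE pattern by construction, since $l$ and $r$ are label-preserving bijections on productions and on nonterminals of corresponding productions, and all three grammars share the same initial nonterminal label (inherited from the correspondence). The \textbf{Boundary} condition holds by construction. For \textbf{IO1}, note that by the \textbf{IO} property of the rewrite pattern every production input (output) of $p_L$ is paired with some production input (output) of $p_R$, and the construction places exactly one wire-vertex of $p_I$ per such pair mapping down to both sides, so the restrictions of $l$ and $r$ to the boundary wire-vertices are surjections onto the production inputs (outputs). \textbf{IO2} is immediate from the construction. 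To check that $G_I$ is in B-ESG normal form, I observe that the normal form conditions (neighbourhood preserving, context consistent, no useless connection instructions, reduced) are essentially trivial for a grammar whose productions contain only isolated wire-vertices and nonterminal vertices with no connection instructions: there are no edges or connection instructions to drop or track, the context of every nonterminal is empty in every sentential form (so context consistency holds vacuously with $\eta \equiv \emptyset$), and reducedness follows from the correspondence with $G_L$ (which is reduced), since the bijection on productions preserves reachability from the initial nonterminal. Finally, the wire-consistency and node-vertex/encoding conditions of Definition~\ref{def:besg} are vacuous in the absence of edges and connection instructions, so $B_I$ is a valid B-ESG grammar.

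For uniqueness, suppose $B_I'$ with monos $l', r'$ is another B-ESG grammar completing the span. The B-edNCE pattern condition forces a label-preserving bijection between the productions of $G_I'$ and those of $G_L$ (and hence of $G_R$), so up to renaming the productions of $G_I'$ coincide with those of $G_I$. Within each production, the \textbf{Boundary} condition forces $p_I'$ to consist solely of nonterminals and isolated wire-vertices; the nonterminals are pinned down (up to iso) by the B-edNCE pattern bijection, while \textbf{IO1} and \textbf{IO2} together force the isolated wire-vertices of $p_I'$ to be in bijection with the \textbf{IO}-matched pairs of production inputs (outputs) of $p_L$ and $p_R$, with labels determined by that matching. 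This is exactly the data used to build $G_I$, so $G_I' \cong G_I$ via an isomorphism intertwining $(l,r)$ with $(l', r')$. The main obstacle I expect is bookkeeping the \textbf{IO} bijection carefully so that \textbf{IO1}, \textbf{IO2}, and the uniqueness claim line up cleanly; in particular one must argue that two distinct boundary wire-vertices in $p_I'$ cannot map to the same production input of $p_L$ (ruled out by $l'$ being mono) and that every production input is hit (forced by \textbf{IO1}), which together pin down the isolated wire-vertices of $p_I'$ up to isomorphism.
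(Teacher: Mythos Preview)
Your proposal is correct and follows essentially the same approach as the paper: construct $G_I$ by taking, for each corresponding production pair, the nonterminals determined by the B-edNCE pattern bijection together with one isolated wire-vertex per \textbf{IO}-matched input/output pair, then argue uniqueness from \textbf{Boundary}, \textbf{IO1}, and \textbf{IO2}. The paper's version is terser (it phrases existence as ``take the productions of $G_L$ and delete everything except nonterminals and production inputs/outputs,'' making $l$ a literal inclusion), and it leaves the verification that $G_I$ is a B-ESG grammar in B-ESG normal form implicit, whereas you spell those checks out.
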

\begin{proof}
The triple $(G_L, G_I, G_R)$ has to be a B-edNCE pattern. This means
that if $G_L$ and $G_R$ have $n$ productions, then $G_I$ must have $n$
productions as well. Moreover $G_L \xleftarrow l G_I \xrightarrow r G_R$
is a span and $l$ and $r$ establish the bijective correspondence between
the productions of the three grammars. So, if the productions of $G_L$
are $\{X_i \to (D_i, C_i)\}_{i \in \mathcal I},$ then we can assume
without loss of generality that the productions of $G_I$ are given by
$\{X_i \to (D_i', C_i')\}_{i \in \mathcal I}.$ Let's consider a triple
of corresponding productions $p_L = X_i \to (D_i, C_i),$
$p_I = X_i \to (D_i', C_i')$ and $p_R =X_i \to (D_i'', C_i'')$
in $G_L, G_I$ and $G_R$ respectively. From the B-edNCE pattern requirement,
it follows that the nonterminal vertices in $D_i'$ are determined by
those of $D_i$ (or $D_i''$). Then, the \textbf{Boundary} condition implies
$C_i' = \emptyset$ and that $D_i'$ may contain only some number of isolated
wire-vertices. However, the number of isolated wire-vertices and their
labels are then completely determined by those of $D_i$ (or $D_i''$) thanks
to the \textbf{IO1} and \textbf{IO2} conditions. Thus,
the grammar $G_I$ is unique, up to isomorphism, if it exists.

The existence of $G_I$ can be demonstrated by simply taking the same
productions like those of $G_L$ (or those of $G_R$) and removing from each
production all:
wire-vertices which are not production inputs or outputs;
node-vertices;
edges; and
connection instructions.
Then, the monomorphism $l : G_I \to G_L$ is simply the grammar
inclusion and the mono $r : G_I \to G_R$ can be defined by the composition
of $l$ with the bijection from the definition of B-ESG rewrite pattern.
\end{proof}

\begin{example}\label{ex:b-esg-rewrite-rule-crap}
Let's consider the rewrite pattern from Example~\ref{ex:rewrite-pattern-crap}.
To get a B-ESG rewrite rule $B:= B_L \xleftarrow l B_I \xrightarrow r B_R$
from it, we simply copy all of the productions
of either $G_L$ or $G_R$ and then we remove everything in the bodies of
the productions, except for the nonterminal vertices and the inputs/outputs:
\cstikz[0.9]{example.tikz}
where $l$ and $r$ map each production of $G_I$ vertically up or down
respectively. Note that in this case the mapping on the vertices is uniquely
determined by their labels, because all productions have one wire-vertex
and one nonterminal vertex. This rewrite rule will rewrite a complete string
graph (where all node-vertices have one output) to a star string graph (where
all node-vertices have one output).

Every parallel instantiation of length $n$ thus gives us a string graph rewrite
rule which may be used for rewriting. For example, a derivation of length
3 (not counting decoding) gives us:
\cstikz[0.9]{pattern-derive2.tikz}
where the concrete string graphs form a span which is a string graph rewrite
rule. The particular morphisms for the span are induced from the
B-ESG rewrite rule, as explained in Definition~\ref{def:b-esg-inst},
which we shall soon provide.
\end{example}

Before we introduce our notion of instantiation for B-ESG rewrite rule, it
is useful to present the next lemma. It shows that if one encoded string graph
embeds into another, then so do their decoded counterparts.

\begin{lemma}\label{lem:decode-embed}
Given encoded string graphs $H, K$ and a monomorphism $m : H \to K$, then
there exists a monomorphism $m' : H' \to K'$, where $H \Longrightarrow_*^T H'$
and $K \Longrightarrow_*^T K'$. We will refer to this monomorphism as
the decoded embedding of $H'$ into $K'$.
\end{lemma}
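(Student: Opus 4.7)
The plan is to construct the monomorphism $m'$ directly by extending $m$ in a component-wise fashion, decoded-edge by decoded-edge. Since the decoding system $T$ was shown to be confluent and terminating, $H'$ and $K'$ are uniquely determined (up to isomorphism) by $H$ and $K$, so I am free to compute them by applying the rules from $T$ in any order I like. I will use this freedom to couple the rewrites on the $H$-side with the corresponding rewrites on the $K$-side via $m$.

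First I would make the following observation: for every encoding edge $e = (u,\alpha,v)$ in $H$, the image $m(e) = (m(u), \alpha, m(v))$ is an encoding edge in $K$ with the same label $\alpha \in \mathcal{E}$, and because $m$ is a label-preserving graph homomorphism, the endpoint labels $\lambda_H(u), \lambda_H(v)$ coincide with $\lambda_K(m(u)), \lambda_K(m(v))$. Hence the \emph{same} decoding rule from $T$ applies to $e$ and to $m(e)$, and the right-hand sides of these two rule applications are isomorphic string graphs sharing the two node-vertex endpoints as their invariant interface. Fix, once and for all, such an isomorphism $\phi_e$ between the two decoded right-hand sides that is the identity on the two endpoints.

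Next I would carry out the decoding sequences in lock-step: pick any enumeration $e_1, \ldots, e_n$ of the encoding edges of $H$, and decode $e_i$ in $H$ at step $i$ while simultaneously decoding $m(e_i)$ in $K$. This produces $H \Longrightarrow_*^T H'$ and a subsequence of $K \Longrightarrow_*^T K'$; the remaining encoding edges of $K$ that are not in the image of $m$ are then decoded afterwards, contributing extra fresh vertices and edges to $K'$ that are not touched by $m'$. I then define $m'$ by: on vertices and edges of $H$ that were not removed by the decoding (i.e. on everything except the encoding edges themselves), set $m' := m$; on each freshly introduced vertex or edge $x$ added to $H'$ when decoding $e_i$, set $m'(x) := \phi_{e_i}(x)$, interpreted as living in the subgraph of $K'$ produced by decoding $m(e_i)$.

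What remains is to check that $m'$ is a well-defined label-preserving graph homomorphism and is injective. Well-definedness at the two endpoints of each decoded $e_i$ is exactly the reason for requiring $\phi_{e_i}$ to be the identity on endpoints, so there is no conflict with $m$. Being a homomorphism follows locally: inside each decoded block the map is the isomorphism $\phi_{e_i}$, and outside the blocks it is $m$. Injectivity is the step that needs the most care, but it is straightforward: the restriction of $m'$ to ``old'' vertices of $H'$ is $m$ itself, which is injective; each isomorphism $\phi_{e_i}$ is injective by construction; and the new-vertex regions attached to distinct encoding edges $e_i \neq e_j$ of $H$ are mapped into the new-vertex regions attached to $m(e_i) \neq m(e_j)$, which are disjoint in $K'$ since $m$ is injective and the decoding rules only share the two endpoint node-vertices with the rest of the graph. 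I do not anticipate any real obstacle here; the main care is simply bookkeeping that the endpoints of each decoded edge are matched consistently between the $\phi_{e_i}$ and $m$, which the ``identity on endpoints'' convention ensures.
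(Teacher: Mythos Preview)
Your proposal is correct and follows essentially the same approach as the paper: both arguments decode each encoding edge $e$ of $H$ in parallel with its image $m(e)$ in $K$, and extend $m$ on the freshly introduced subgraphs via the canonical isomorphism between the two (identical) right-hand sides of the relevant decoding rule. Your write-up is in fact more careful than the paper's, which simply phrases this as an induction on the number of encoding edges of $H$ and says the mono ``can be trivially extended''; you additionally spell out how the leftover encoding edges of $K$ outside $m(H)$ are handled and why injectivity survives.
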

\begin{proof}
We can define $m'$ by induction on the number of encoding edges of $H$.

If $H$ contains no encoding edges, then $m'=m$. Otherwise, to construct a mono
for a single decoding step  by replacing some edge $e \in H$, consider the
following. Since $m: H \to K$ is a monomorphism, then there exists an edge
$m(e) \in K$ which is also an encoding edge. Then, $m'$ acts exactly as $m$ on
all vertices and edges, except on the newly introduced ones after performing
the decoding $H \Longrightarrow^T_e H'$ and $K \Longrightarrow^T_{m(e)} K',$
where it is undefined. However, by definition, the decoding process will
introduce exactly the same graph (up to isomorphism) in both $H'$ and $K'$, so
$m'$ can be trivially extended.
\end{proof}

Next, we define how to instantiate B-ESG rewrite rules. This is a simple
generalisation of B-edNCE pattern instantiations.

\begin{definition}[B-ESG Rewrite Rule Instantiation]\label{def:b-esg-inst}
Given a B-ESG rewrite rule $B:= B_L \xleftarrow{l} B_I \xrightarrow{r} B_R$, a
(concrete) instantiation is a B-edNCE pattern instantiation for
$G_L \xleftarrow{l} G_I \xrightarrow{r} G_R$ followed by a decoding:
\begin{align*}
sn(S, v_1) &\Longrightarrow_{v_1,l(p_1)}^{G_L} H_1'
\Longrightarrow_{l(v_2),l(p_2)}^{G_L} H_2'
\Longrightarrow_{l(v_3),l(p_3)}^{G_L} \cdots
\Longrightarrow_{l(v_n),l(p_n)}^{G_L} H_n'
\Longrightarrow_*^T F'\\
sn(S, v_1) &\Longrightarrow_{v_1,p_1}^{G_I} H_1
\Longrightarrow_{v_2,p_2}^{G_I} H_2 \Longrightarrow_{v_3,p_3}^{G_I} \cdots
\Longrightarrow_{v_n,p_n}^{G_I} H_n
\Longrightarrow_*^T F\\
sn(S, v_1) &\Longrightarrow_{v_1,r(p_1)}^{G_R} H_1''
\Longrightarrow_{r(v_2),r(p_2)}^{G_R} H_2''
\Longrightarrow_{r(v_3),r(p_3)}^{G_R} \cdots
\Longrightarrow_{r(v_n),r(p_n)}^{G_R} H_n''
\Longrightarrow_*^T F''
\end{align*}
The language of $B,$ denoted $L(B),$ is the set of all rewrite rules $F'
\xleftarrow{l_F} F \xrightarrow{r_F} F''$ obtained by performing concrete
parallel derivations,
where $H_n' \xleftarrow{l_n} H_n \xrightarrow{r_n} H_n''$ is the
induced embedding given by Lemma~\ref{lem:induced} and
$F' \xleftarrow{l_F} F \xrightarrow{r_F} F''$ is the decoded embedding
given by Lemma~\ref{lem:decode-embed} when applied to
$H_n' \xleftarrow{l_n} H_n \xrightarrow{r_n} H_n''$.
\end{definition}

The main result in this section is to show that every B-ESG rewrite rule
induces a language consisting of string graph rewrite rules. From this it
easily follows that every pair of corresponding string graphs in a B-ESG
rewrite pattern have corresponding inputs and outputs. Therefore, B-ESG rewrite
patterns correctly represent equational schemas between families of string
diagrams. We have shown B-ESG rewrite patterns and B-ESG rewrite rules are
interchargable, but in the last section we will mostly be working with B-ESG
rewrite rules. The reason is simple -- B-ESG rewrite rules can be readily used
for rewriting, whereas B-ESG rewrite patterns cannot (they have to be converted
to a B-ESG rewrite rule first).

We will prove the main result by induction. In particular, we will show that
every instantiation (not necessarily concrete) is an \emph{ESG-form rewrite
rule}, which is a straightforward generalisation of string graph rewrite rule
(cf. Definition~\ref{def:string-graph-rewrite-rule}).

\begin{definition}[ESG-form Rewrite Rule]
An \textit{ESG-form rewrite rule} is a span of monomorphisms
$L \stackrel{l}{\longleftarrow} I \stackrel{r}\longrightarrow R$, 
where $L, I, R$ are ESG-forms with the following properties:
  \begin{description}
    \item[P1] $L$ and $R$ do not have any isolated wire vertices
    \item[P2] $In(L) \cong In(R)$ and $Out(L) \cong Out(R)$
    \item[P3] $T(I) \cong In(L) + Out(L) \cong In(R) + Out(R)$, where $T(I)$
is the full subgraph of $I$ consisting of terminal vertices only
    \item[P4] The following diagram commutes :
  \end{description}
  \begin{center}
    \begin{tikzpicture}
  \node (L)  at (-4, 0) {$L$};
  \node (I)  at (0, 0) {$T(I)$};
  \node (R)  at (4, 0) {$R$};
  \node (IL) at (-2, 2) {$In(L)$};
  \node (IR) at (2, 2) {$In(R)$};
  \node (OL) at (-2, -2) {$Out(L)$};
  \node (OR) at (2, -2) {$Out(R)$};

  \draw [->, dashed] (I) to node[above] {$l$} (L);
  \draw [->, dashed] (I) to node[above] {$r$} (R);
  \draw [->] (IL) to node[above] {$i$} (I);
  \draw [->] (IR) to node[above] {$i'$} (I);
  \draw [->] (OL) to node[above] {$j$} (I);
  \draw [->] (OR) to node[above] {$j'$} (I);
  \draw [<->] (IL) to node[above] {$\sim$} (IR);
  \draw [<->] (OL) to node[above] {$\sim$} (OR);
  \draw [left hook->] (IL) to (L);
  \draw [right hook->] (IR) to (R);
  \draw [left hook->] (OL) to (L);
  \draw [right hook->] (OR) to (R);
\end{tikzpicture}
  \end{center}
  where $i, j, i', j'$ are the coproduct inclusions. If $L,I,R$ contain
no nonterminal vertices, then we say that this is an ESG rewrite rule.
\end{definition}

So, in the above definition, we see that if our graphs do not contain any
nonterminal vertices, then we get a rewrite rule between encoded string graphs.
Furthermore, it is easy to see that
a string graph rewrite rule is simply an ESG rewrite rule, where
all three graphs contain no encoding edges.

\begin{lemma}\label{lem:esg-form-inst}
Given a B-ESG rewrite rule, every parallel instantiation:
\begin{align*}
sn(S, v_1) &\Longrightarrow_{v_1,l(p_1)}^{G_L} H_1'
\Longrightarrow_{l(v_2),l(p_2)}^{G_L} H_2'
\Longrightarrow_{l(v_3),l(p_3)}^{G_L} \cdots
\Longrightarrow_{l(v_n),l(p_n)}^{G_L} H_n'\\
sn(S, v_1) &\Longrightarrow_{v_1,p_1}^{G_I} H_1
\Longrightarrow_{v_2,p_2}^{G_I} H_2 \Longrightarrow_{v_3,p_3}^{G_I} \cdots
\Longrightarrow_{v_n,p_n}^{G_I} H_n\\
sn(S, v_1) &\Longrightarrow_{v_1,r(p_1)}^{G_R} H_1''
\Longrightarrow_{r(v_2),r(p_2)}^{G_R} H_2''
\Longrightarrow_{r(v_3),r(p_3)}^{G_R} \cdots
\Longrightarrow_{r(v_n),r(p_n)}^{G_R} H_n''
\end{align*}
is an ESG-form rewrite rule.
\end{lemma}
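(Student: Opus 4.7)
The plan is to prove this lemma by induction on the length $n$ of the parallel derivation, exploiting the fact that the B-ESG normal form conditions together with the Boundary/IO1/IO2 constraints control the behaviour of substitution locally on each of the three grammars in lockstep. For the base case $n=0$, all three sentential forms equal $sn(S,v_1)$---a single isolated nonterminal with empty terminal part---so properties P1--P4 hold vacuously and each graph is trivially an ESG-form.

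For the inductive step, I would first verify that each of $H_{n+1}'$, $H_{n+1}$, $H_{n+1}''$ is again an ESG-form. For $H_{n+1}'$ and $H_{n+1}''$, the argument from the proof of Theorem~\ref{thm:besg_language} generalises from concrete to arbitrary derivations: conditions N1, N2, W1, W2, and wire-consistency of $G_L$ and $G_R$ guarantee that substitution by a B-ESG production preserves the ESG-form invariants. For $H_{n+1}$, the Boundary condition makes this immediate---$p_{n+1}$ contributes only nonterminals and isolated wire-vertices with no edges or connection instructions, so substitution neither establishes new edges nor modifies the degree of any existing vertex. I would then verify P1--P4 in turn. For P1, the \emph{proper} B-ESG normal form of $G_L$ and $G_R$ guarantees no production isolated wire-vertex appears in any production body; combined with context-consistency, neighbourhood preservation, and the absence of useless connection instructions, every new wire-vertex introduced by $l(p_{n+1})$ acquires at least one incident edge (either from the body or via a bridge), and every existing input/output wire-vertex of $H_n'$ adjacent to the replaced nonterminal has its removed incident edge replaced by a unique bridge in the same direction (uniqueness via W2). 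For P2 and P3, the bijections supplied by IO1 and IO2 extend the inductive bijections: the production inputs/outputs of $l(p_{n+1})$ remain inputs/outputs in $H_{n+1}'$, they correspond bijectively to the new isolated wire-vertices added by $p_{n+1}$ in $G_I$ (which comprise the new terminal vertices of $H_{n+1}$), and symmetrically with $r(p_{n+1})$ in $H_{n+1}''$. Property P4 then follows because the monos $l_{n+1}, r_{n+1}$ produced by Lemma~\ref{lem:induced} are compatible with these coproduct inclusions by construction.

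The main obstacle I expect is the bookkeeping in the P1--P3 step: carefully tracking whether each wire-vertex of $H_{n+1}'$ is an input, an output, or neither, and showing that existing inputs/outputs of $H_n'$ are neither destroyed nor duplicated by substitution. This boils down to combining W2 with neighbourhood preservation and the absence of useless connection instructions to pin down the exact degree changes around the replaced nonterminal, and then showing that the IO1/IO2 bijections glue together across derivation steps so that no ``phantom'' input or output arises from a previously interior wire-vertex of $H_n'$.
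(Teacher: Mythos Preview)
Your proposal is correct and follows essentially the same approach as the paper: induction on derivation length, trivial base case, and an inductive step that combines the proper B-ESG normal form properties (neighbourhood preservation, context-consistency, no useless connection instructions, no production isolated wire-vertices) with conditions IO1/IO2 to establish P1--P3, with P4 following by construction of the induced monos. The paper's write-up is a bit terser in one place---it observes directly that neighbourhood preservation prevents wire-vertex degree from decreasing while W2 prevents it from increasing, so existing inputs/outputs are automatically preserved---but your more explicit bookkeeping around bridges and degree changes is equivalent and addresses exactly what you flagged as the main obstacle.
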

\begin{proof}
We can prove this by induction on the length of the derivation. If
the length of the derivation is zero, then we have the span
$sn(S,v_1) \xleftarrow{id} sn(S,v_1) \xrightarrow{id} sn(S,v_1)$ which
is obviously an ESG-form rewrite rule as there are no inputs, nor outputs.

Assume we have shown that $H_{n-1}' \xleftarrow{l_{n-1}} H_{n-1}
\xrightarrow{r_{n-1}} H_{n-1}''$ is an ESG-form rewrite rule.
Now, consider the span
$H_n' \xleftarrow{l_n} H_n \xrightarrow{r_n} H_n''$.
By construction (from Lemma~\ref{lem:induced}), both $l_n$ and $r_n$ are
mono and we know that $H_n', H_n, H_n''$ are all ESG-forms from
Theorem~\ref{thm:besg_language}.

We know that all of the grammars are neighbourhood-preserving, this means that
the in-degree (out-degree) of vertices in any sentential form cannot decrease
as we expand nonterminal vertices. Moreover, condition W2 from
Definition~\ref{def:besg} implies that the in-degree (out-degree) of any
wire-vertex cannot increase after applying a production. Therefore, the
inputs, outputs and isolated wire-vertices are preserved when applying
productions to our sentential forms.
Therefore, to prove the proposition, we simply have to show that the newly
introduced wire-vertices in $H_n', H_n, H_n''$ satisfy the required conditions.

We also know that all three grammars are
context-consistent and they contain no useless connection instructions.
Therefore, if some wire-vertex has production in-degree (out-degree) $n$,
then its in-degree (out-degree) in any sentential form of the grammar will
also be $n$, after applying the production which contains it. This fact
is crucial for the proof of this lemma.

Property \textbf{P1} follows for the span
$H_n' \xleftarrow{l_n} H_n \xrightarrow{r_n} H_n''$ from the
induction hypothesis and because
neither $l(p_n)$, nor $r(p_n)$ contain production isolated wire-vertices.

Next, we will show that properties \textbf{P2} and \textbf{P3} hold for inputs
(the case for outputs follows by symmetry). This follows by combining the
induction hypothesis with
conditions \textbf{IO1} and \textbf{IO2} which are satisfied by the
productions $p_n, r(p_n), l(p_n)$.

Finally, property \textbf{P4} holds, because the bijections between
the inputs/outputs are defined in terms of the embedding morphisms
$l_n, r_n$, so they hold by construction.
\end{proof}

\begin{corollary}\label{cor:esg-inst}
Given a B-ESG rewrite rule, every concrete parallel instantiation:
\begin{align*}
sn(S, v_1) &\Longrightarrow_{v_1,l(p_1)}^{G_L} H_1'
\Longrightarrow_{l(v_2),l(p_2)}^{G_L} H_2'
\Longrightarrow_{l(v_3),l(p_3)}^{G_L} \cdots
\Longrightarrow_{l(v_n),l(p_n)}^{G_L} H_n'\\
sn(S, v_1) &\Longrightarrow_{v_1,p_1}^{G_I} H_1
\Longrightarrow_{v_2,p_2}^{G_I} H_2 \Longrightarrow_{v_3,p_3}^{G_I} \cdots
\Longrightarrow_{v_n,p_n}^{G_I} H_n\\
sn(S, v_1) &\Longrightarrow_{v_1,r(p_1)}^{G_R} H_1''
\Longrightarrow_{r(v_2),r(p_2)}^{G_R} H_2''
\Longrightarrow_{r(v_3),r(p_3)}^{G_R} \cdots
\Longrightarrow_{r(v_n),r(p_n)}^{G_R} H_n''
\end{align*}
is an ESG rewrite rule.
\end{corollary}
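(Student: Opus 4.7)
The plan is to derive this corollary as an immediate consequence of Lemma~\ref{lem:esg-form-inst} combined with the definition of a concrete derivation. By Lemma~\ref{lem:esg-form-inst}, the span $H_n' \xleftarrow{l_n} H_n \xrightarrow{r_n} H_n''$ formed from any parallel instantiation of the B-ESG rewrite rule is automatically an ESG-form rewrite rule, so every defining property (\textbf{P1}--\textbf{P4}) is already in hand.

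To upgrade this to an ESG rewrite rule, I only need to verify the extra hypothesis that $L$, $I$, and $R$ contain no nonterminal vertices. But this is exactly what is meant by the derivations being \emph{concrete}: a concrete derivation terminates in a sentential form that is terminal, i.e., contains no nonterminal labels. Since the three derivations in $G_L$, $G_I$, and $G_R$ are all concrete by assumption, $H_n'$, $H_n$, and $H_n''$ are each terminal.

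Putting these two observations together, the span is an ESG-form rewrite rule whose three objects are terminal, which by definition is precisely an ESG rewrite rule. There is no real obstacle here, since the inductive work was already carried out in Lemma~\ref{lem:esg-form-inst}; the corollary is essentially a bookkeeping step recording that concreteness removes the nonterminal-bearing case from the more general ESG-form statement. I would present it as a one-paragraph proof that simply cites Lemma~\ref{lem:esg-form-inst}, invokes the definition of concrete derivation, and then applies the final clause of the definition of ESG-form rewrite rule to conclude.
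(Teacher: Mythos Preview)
Your proposal is correct and matches the paper's own proof almost verbatim: the paper simply says the result ``follows immediately from the previous lemma after recognising that in a concrete derivation there are no nonterminal vertices left.'' You have identified exactly the two ingredients the paper uses---Lemma~\ref{lem:esg-form-inst} plus the terminality of concrete derivations---and combined them in the same way.
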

\begin{proof}
This follows immediately from the previous lemma after recognising that
in a concrete derivation there are no nonterminal vertices left.
\end{proof}

The main result in this section then follows as a simple consequence of
the above results.

\begin{theorem}\label{thm:b-esg-inst}
The language of every B-ESG rewrite rule consists solely of string graph
rewrite rules.
\end{theorem}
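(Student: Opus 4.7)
The plan is to chain together the results immediately preceding the theorem. Starting from an arbitrary element of $L(B)$ for a B-ESG rewrite rule $B := B_L \xleftarrow l B_I \xrightarrow r B_R$, I would first invoke Corollary~\ref{cor:esg-inst} to conclude that the pre-decoding span $H_n' \xleftarrow{l_n} H_n \xrightarrow{r_n} H_n''$ is an ESG rewrite rule (i.e.\ the three graphs are encoded string graphs, no nonterminal vertices remain, and properties P1--P4 hold). Then I would apply the decoding $\Longrightarrow^T_*$ to each vertex of the span, obtaining a span $F' \xleftarrow{l_F} F \xrightarrow{r_F} F''$ where, by Lemma~\ref{lem:decoding}, each of $F', F, F''$ is now a genuine string graph (no encoding edges), and by Lemma~\ref{lem:decode-embed} the decoded embeddings $l_F, r_F$ are monomorphisms exactly as required in Definition~\ref{def:b-esg-inst}.

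The remaining work is to verify that the decoded span is not merely a span of monos between string graphs, but satisfies the four defining conditions of a string graph rewrite rule (cf.\ Definition~\ref{def:string-graph-rewrite-rule}). For this I would rely on the structural properties of a decoding system: each decoding rule has a right-hand side that contains no inputs, no outputs, and no encoding labels, and moreover strictly increases the number of vertices in a controlled way (the two node-vertex endpoints are preserved). From this three observations follow directly. First, decoding cannot create new isolated wire-vertices, nor turn an existing wire-vertex with positive in- or out-degree into an input or output, because every wire-vertex introduced by a decoding rule lies on a wire whose endpoints are the surviving node-vertices; this yields P1 for $F'$ and $F''$. Second, decoding does not add or remove any input or output anywhere, so $\mathrm{In}(F') \cong \mathrm{In}(H_n')$ and similarly for outputs; combined with the P2 already established for the pre-decoding span, this gives P2 for the decoded span, and an analogous argument together with the fact that $I$ contains only isolated wire-vertices (by the \textbf{Boundary} condition) yields P3. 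Third, the explicit construction of $l_F, r_F$ in Lemma~\ref{lem:decode-embed} leaves the boundary wire-vertices fixed, so the commuting diagram of P4 lifts unchanged from the pre-decoding span.

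The main obstacle will be a careful bookkeeping argument for P1 in $F'$ and $F''$: one must rule out the pathological possibility that a wire-vertex which was attached in $H_n'$ becomes isolated after decoding, which would require decoding to delete its unique incident edge. This does not happen because decoding rules only \emph{replace} encoding edges between pairs of node-vertices by non-encoding string graphs while preserving the endpoints, and wire-vertices are not touched by any decoding rule. A short lemma (or remark) making this explicit --- essentially that decoding acts only on the encoding-edge strictly-between-node-vertices subgraph and leaves all wire-vertices and their incident non-encoding edges untouched --- suffices to close the argument. With P1--P4 verified and the absence of encoding edges secured by Lemma~\ref{lem:decoding}, the decoded span is a string graph rewrite rule by definition, completing the proof.
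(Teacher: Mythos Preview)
Your proposal is correct and follows essentially the same approach as the paper: invoke Corollary~\ref{cor:esg-inst} for the pre-decoding span, use Lemma~\ref{lem:decode-embed} for the decoded monos, and then argue that decoding cannot create or destroy inputs/outputs so that properties P1--P4 carry over. The paper's proof is considerably more terse (it appeals to Theorem~\ref{thm:besg_language} rather than Lemma~\ref{lem:decoding} for the string-graph conclusion, and dispatches P1--P4 in a single sentence), whereas you spell out the bookkeeping more carefully; but the underlying argument is the same.
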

\begin{proof}
Consider a concrete instantiation:
\begin{align*}
sn(S, v_1) &\Longrightarrow_{v_1,l(p_1)}^{G_L} H_1'
\Longrightarrow_{l(v_2),l(p_2)}^{G_L} H_2'
\Longrightarrow_{l(v_3),l(p_3)}^{G_L} \cdots
\Longrightarrow_{l(v_n),l(p_n)}^{G_L} H_n'
\Longrightarrow_*^T F'\\
sn(S, v_1) &\Longrightarrow_{v_1,p_1}^{G_I} H_1
\Longrightarrow_{v_2,p_2}^{G_I} H_2 \Longrightarrow_{v_3,p_3}^{G_I} \cdots
\Longrightarrow_{v_n,p_n}^{G_I} H_n
\Longrightarrow_*^T F\\
sn(S, v_1) &\Longrightarrow_{v_1,r(p_1)}^{G_R} H_1''
\Longrightarrow_{r(v_2),r(p_2)}^{G_R} H_2''
\Longrightarrow_{r(v_3),r(p_3)}^{G_R} \cdots
\Longrightarrow_{r(v_n),r(p_n)}^{G_R} H_n''
\Longrightarrow_*^T F''
\end{align*}
where $H_n' \xleftarrow{l_n} H_n \xrightarrow{r_n} H_n''$ is the
induced embedding from the parallel derivation in the context-free grammars
and where $F' \xleftarrow{l_F} F \xrightarrow{r_F} F''$ is the decoded
embedding.

From Theorem~\ref{thm:besg_language}, we know that $F, F'$ and $F''$ are string
graphs. From Corollary~\ref{cor:esg-inst}, we know that $H_n' \xleftarrow{l_n}
H_n \xrightarrow{r_n} H_n''$ is an ESG rewrite rule. From
Lemma~\ref{lem:decode-embed}, we see that the decoded embedding simply extends
the monomorphisms, so $l_F$ ($r_F$) acts in the same way when restricted to
$l_n$ ($r_n$). The decoding process by definition does not establish any new
inputs, nor outputs and therefore, $F' \xleftarrow{l_F} F \xrightarrow{r_F}
F''$ is a string graph rewrite rule.
\end{proof}

\section{B-ESG rewriting}\label{sec:final}

We have shown how to represent families of string diagrams using B-ESG
grammars. We have also shown how to represent equational schemas using B-ESG
rewrite patterns. These rewrite patterns induce B-ESG rewrite rules which may
be used for rewriting other B-ESG grammars.  In the final section of this
thesis, we will combine the results of Section~\ref{sec:b-esg-rules} and
Section~\ref{sec:rewriting} to show how we can rewrite B-ESG grammars in an
admissible way. This would then show that our framework correctly represents
reasoning with infinite families of string diagrams, even when we are rewriting
them using equational schemas of infinite families of string diagrams.

We begin by introducing the most central definition in this chapter which
combines most of our previous constructions. We will be referring to it
multiple times before the end of the chapter. It formally defines what we
mean by rewriting a B-ESG grammar. In particular, we use a B-ESG rewrite
rule and a saturated grammar matching which also satisfies the partially
adhesive conditions to get an $\mathcal S$-rewrite. 

\begin{definition}[B-ESG rewrite]\label{def:b-esg-rewrite}
Given a B-ESG rewrite rule $B= B_L \xleftarrow l B_I \xrightarrow r B_R$
with initial nonterminal label $S$
and a B-ESG grammar $B_H,$ such that $B_H$ is in proper B-ESG normal form,
then we will say that the
\emph{B-ESG rewrite} of $B_H$ using $B$ over a saturated matching $m: G_L \to
G_H$ which satisfies the partially adhesive conditions, is the encoded B-edNCE
grammar $B_M
= (G_M, T),$ denoted by $B_H \leadsto_{B,m} B_M$, where $G_M$ is given by
the $\mathcal S$-rewrite $G_H \leadsto_{B,m} G_M$:
\cstikz{b-esg-dpo.tikz}
\end{definition}

Our first major theorem proves that such a rewrite indeed results in a B-ESG
grammar. This is shown in Subsection~\ref{sub:preserve-shit}. Next, in
Subsection~\ref{sub:pattern-shit}, we will show that B-ESG rewrites form a
B-ESG rewrite pattern. Finally, in Subsection~\ref{sub:fucking-final}, we show
that the B-ESG rewrite pattern which results from the rewrite is admissible.

Before we begin with the proofs, let's consider an example.

\begin{example}\label{ex:big-example}
As a concrete example, consider the B-ESG
rewrite rule $B:= B_L \xleftarrow l B_I \xrightarrow r B_R$ from
Example~\ref{ex:b-esg-rewrite-rule-crap}. Recall, that its grammars are given
by:
\cstikz[0.9]{example.tikz}

This rewrite rule on B-ESG
grammars can be applied to another B-ESG grammar $B_H = (G_H, T)$, where $G_H:$ 
\cstikz[0.9]{example2.tikz}
and the result will be
the B-ESG grammar $B_M = (G_M, T)$, where $G_M$ is shown below:
\cstikz[0.9]{example3.tikz}
The rewrites are performed locally, on a per-production basis. The language
$B_H$ consists of complete string graphs, where in addition, each node-vertex
of the complete graph has a (grey) line graph of arbitrary length glued onto it.
Similarly, the language of $B_M$ consists of a star string graph, where there
are (grey) line graphs glued onto each of the star vertices. The admissibility
of the rewrite ensures that any instantiation
of $B_M$ may be obtained from the parallel instantiation of $B_H$ by applying an
appropriate DPO rewrite rule obtained from an instantiation of the B-ESG
rewrite rule $B$. In particular, the
corresponding instantiation of the rewrite rule consists of only the matched
productions from the instantiation of $B_H$ (or equivalently $B_M)$. For
example, if $H$ is the instantiation of $B_H$ with 3 white node-vertices
and 3 grey node-vertices, the parallel instantiation $M$ of $B_M$ must
also have 3 white node-vertices and 3 grey node-vertices:
\cstikz[0.9]{example4.tikz}
Then from the admissibility result, we know that $H \leadsto_{s,m} M,$
where both the matching $m$ and the rewrite rule $s$ can be effectively
determined from the instantiation. In particular, $s$ is given by the
instantiation of $B$ with length 3 (the number of white node-vertices):
\cstikz[0.9]{example5.tikz}
and the number of grey node-vertices in $H$ (or in $M$) is irrelevant, as they
are generated by productions outside of the image of the matching.

In terms of string diagrams, this B-ESG rewrite is representing the following
rewrite of entire families of diagrams:
\cstikz{family-rewrite-shiiit.tikz}
where the equational schema $(B)$ was used, given by:
\cstikz{family-rewrite-shiiit2.tikz}
where in both of the above examples, $K_n$ is the complete string diagram on
$n$ white node-vertices, $S_n$ is the star string diagram on $n$ node-vertices.
\end{example}
\subsection{Rewrites preserve B-ESG structure}\label{sub:preserve-shit}

A B-ESG grammar has a lot of structure, so we need to prove that the
encoded B-edNCE grammar which results from the rewrite satisfies multiple
conditions. Some of the lemmas in this section, like the ones which establish
normal forms, are not strictly required for this subsection, but they make
the proofs considerably easier.

\begin{lemma}\label{lem:correspond}
Given a B-ESG rewrite $B_H \leadsto_{B,m} B_M$, as in
Definition~\ref{def:b-esg-rewrite}, then $(G_H, G_M)$ is a B-edNCE
correspondence.
\end{lemma}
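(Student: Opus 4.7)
The plan is to reduce the statement directly to Lemma~\ref{lem:pattern}, which is essentially the same statement at the B-edNCE level of structure. Unpacking Definition~\ref{def:b-esg-rewrite}, the B-ESG rewrite $B_H \leadsto_{B,m} B_M$ is defined so that $G_M$ arises as an $\mathcal{S}$-rewrite $G_H \leadsto_{B,m} G_M$ in $\mathbf{edNCE}$, where $m$ is a saturated grammar matching satisfying the partially adhesive conditions. Moreover, by the definition of B-ESG rewrite rule, the underlying span $G_L \xleftarrow{l} G_I \xrightarrow{r} G_R$ of $B$ is a B-edNCE pattern.

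First I would verify, by reading off the definitions, that the three hypotheses of Lemma~\ref{lem:pattern} are met in this setting: (i) $G_H$ is a B-edNCE grammar because $B_H = (G_H, T)$ is assumed to be a B-ESG grammar; (ii) the span underlying $B$ is a B-edNCE pattern, which is built into the definition of B-ESG rewrite rule; and (iii) $m: G_L \to G_H$ is a saturated grammar matching, which is explicitly required by Definition~\ref{def:b-esg-rewrite}.

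Having checked the hypotheses, I would then invoke Lemma~\ref{lem:pattern} directly. Its conclusion gives exactly what we need: both grammars have the same initial nonterminal label (untouched by an $\mathcal{S}$-rewrite), a label-preserving bijection $F$ between their productions, and a label-preserving bijection $G$ between corresponding nonterminal vertices, with the explicit formulas for $F$ and $G$ given in terms of the morphisms of the DPO diagram.

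There is no real obstacle here; the lemma is essentially a corollary of Lemma~\ref{lem:pattern} together with the definition of B-ESG rewrite. The only point worth noting is that $G_M$ must itself be a B-edNCE grammar (i.e.\ the boundary conditions must be preserved by the rewrite), but this is already established in the proof of Lemma~\ref{lem:pattern} using the fact that each production-level rewrite $B_p$ coming from a B-edNCE pattern can neither create edges between nonterminal vertices nor introduce connection instructions whose source label is nonterminal. So no additional argument beyond citing Lemma~\ref{lem:pattern} is required.
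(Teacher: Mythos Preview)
Your proposal is correct and takes essentially the same approach as the paper. The only cosmetic difference is that the paper cites Theorem~\ref{thm:main} (whose proof immediately invokes Lemma~\ref{lem:pattern} to obtain the correspondence), whereas you go directly to Lemma~\ref{lem:pattern}; your route is slightly more direct since the admissibility part of Theorem~\ref{thm:main} is not needed here.
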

\begin{proof}
From our assumptions, it follows that the premises of Theorem~\ref{thm:main}
are satisfied (note, in particular, we have assumed the matching $m$ satisfies
the partially adhesive conditions). Therefore, $G_M$ is uniquely defined (up to
isomorphism), it is a B-edNCE grammar and $(G_H, G_M)$ is a B-edNCE
correspondence.
\end{proof}

\begin{corollary}\label{cor:reduced}
Given a B-ESG rewrite $B_H \leadsto_{B,m} B_M$, as in
Definition~\ref{def:b-esg-rewrite}, then $G_M$ is reduced.
\end{corollary}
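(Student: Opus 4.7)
The plan is to exploit the B-edNCE correspondence $(G_H, G_M)$ established in Lemma~\ref{lem:correspond} together with the fact that $B_H$ is in proper B-ESG normal form (hence $G_H$ is itself reduced by hypothesis on $B_H$). The correspondence gives a label-preserving bijection $F$ on productions, and its structure (as recorded in Lemma~\ref{lem:pattern}) is exactly what lets us transport a concrete derivation of $G_H$ witnessing that a production is used into a parallel concrete derivation of $G_M$ witnessing that the corresponding production is used.

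Concretely, I would pick an arbitrary production $p' \in G_M$ and set $p := F^{-1}(p') \in G_H$. Because $B_H$ is in proper B-ESG normal form, $G_H$ is reduced, so there is a concrete derivation $\emph{sn}(S,v_1) \Longrightarrow_*^{G_H} H$ in which $p$ is applied at some step. Using Lemma~\ref{lem:induced} applied to the correspondence (viewed as an \textbf{edNCE} monomorphism in one direction, or more directly by replacing each production $p_i$ in the derivation sequence by $F(p_i)$ and each nonterminal vertex $v_i$ by its image under the vertex-bijection $G$ from Lemma~\ref{lem:pattern}), one obtains a parallel derivation $\emph{sn}(S,v_1) \Longrightarrow_*^{G_M} H'$ that applies $F(p) = p'$ at the corresponding step. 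Since $F$ and $G$ are label-preserving bijections and carry nonterminals of a production to nonterminals of its image, each substitution step in the $G_M$-sequence is well-defined, and the derivation terminates at a terminal graph precisely when the $G_H$-sequence does (again because $F$ preserves which productions are terminal, as terminal productions have no nonterminal vertices and $F$ respects the nonterminal structure). Hence the constructed $G_M$-derivation is concrete.

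This shows that every $p' \in G_M$ is used in at least one concrete derivation of $G_M$, which is exactly the definition of $G_M$ being reduced (Definition~\ref{def:reduced}). The step requiring a little care is verifying that transporting the $G_H$-derivation along the correspondence yields an honest derivation of $G_M$ and not merely a formal sequence of production labels: the commutativity and labelling conditions in the definition of B-edNCE correspondence, combined with the explicit description of $F$ and $G$ in Lemma~\ref{lem:pattern}, are precisely what guarantee this, so the argument is essentially bookkeeping once those lemmas are in hand. No new obstruction arises from the decoding system $T$, since reducedness is a property of the underlying B-edNCE grammar $G_M$ alone.
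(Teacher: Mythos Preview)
Your proposal is correct and follows essentially the same approach as the paper: invoke Lemma~\ref{lem:correspond} to obtain the B-edNCE correspondence $(G_H,G_M)$, then use the bijections on productions and nonterminals to transport concrete derivations from the reduced grammar $G_H$ to $G_M$. The paper's proof is simply a one-line version of yours (``$G_M$ is reduced iff $G_H$ is reduced''), leaving the derivation-transport argument implicit, whereas you spell out the details; note that the appeal to Lemma~\ref{lem:induced} is not quite right since a correspondence is not a grammar monomorphism, but your parenthetical ``more directly'' alternative via $F$ and $G$ is exactly what is needed and is already packaged as Definition~\ref{def:b-ednce-inst}.
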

\begin{proof}
From the previous lemma, we know $(G_H, G_M)$ forms a B-edNCE correspondence.
Therefore, $G_M$ is reduced iff $G_H$ is reduced, which is true by
assumption.
\end{proof}

The next lemma shows the context (cf. Definition~\ref{def:context}) of each
nonterminal vertex in $G_M$ is identical to either the context of its pre-image
or the context of its corresponding nonterminal vertex in $G_H$. This lemma is
crucial for many of the remaining proofs as it easily shows how some of the
relevant structure from the other B-ESG grammars carries over to the rewrite.

\begin{lemma}\label{lem:nonterminal-context}
Given a B-ESG rewrite $B_H \leadsto_{B,m} B_M$, as in
Definition~\ref{def:b-esg-rewrite}, then for each nonterminal vertex $x \in
G_M$, the following is true:
\begin{enumerate}
\item If $x$ is in the image of $f$, that is there exists $x' \in G_R$, such
that $f(x') = x$, then the context of $x$ in $G_M$ is the same as the context
of $x'$ in $G_R$.
\item If $x$ is not in the image of $f$, then the context of $x$ in $G_M$
is the same as the context of $x'$ in $G_H,$ where $x'$ is the corresponding
nonterminal vertex of $x$ in $G_H$.
\end{enumerate}
\end{lemma}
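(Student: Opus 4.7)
The plan is to split the proof along the two cases and in each one directly compute the context by tracking how the DPO squares distribute edges and connection instructions, leveraging the \textbf{Boundary} condition of B-ESG rewrite rules together with the three conditions on extended graph saturated matchings.

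For case 1, suppose $x = f(x')$ with $x'$ a nonterminal in some production $p_R$ of $G_R$. Any edge or connection instruction of $G_M$ incident to $x$ must originate from $f(G_R)$ or from $g(G_K)$, by joint surjectivity of the right pushout (Lemma~\ref{lem:joint_surjection}). Anything from $g(G_K)$ that touches $x$ forces $x \in f(G_R) \cap g(G_K) = f(r(G_I)) = g(k(G_I))$, so $x' = r(v_I)$ for some nonterminal $v_I$ lying in a production $p_I$ with $l(p_I) = p_L$ and $m(p_L) = p_H$. The \textbf{Boundary} condition says $p_I$ has no edges and no connection instructions, so the pushout complement formula from Lemma~\ref{lem:matching-ednce} gives $E_{p_K} = E_{p_H} \setminus m(E_{p_L})$ and $C_{p_K} = C_{p_H} \setminus m(C_{p_L})$. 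The third defining property of an extended graph saturated matching forces $C_{p_H} = m(C_{p_L})$, so $C_{p_K} = \emptyset$; the first two properties forbid any edge of $E_{p_H} \setminus m(E_{p_L})$ from being incident to the nonterminal $m(l(v_I)) = s(k(v_I))$. Consequently $k(v_I)$ has no incident edges or CIs in $G_K$, so $g(G_K)$ contributes nothing to the context of $x$, and the whole context of $x$ in $G_M$ comes from $f(G_R)$. Since $f$ is a label-preserving monomorphism in $\mathbf{edNCE}$, this yields a label-preserving bijection between the edges and connection instructions incident to $x'$ in $G_R$ and those incident to $x$ in $G_M$, matching their contexts. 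The remaining subcase $x' \notin r(G_I)$ is easier, since then $x \notin g(G_K)$ and $f(G_R)$ already accounts for everything.

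For case 2, Lemma~\ref{lem:pattern} gives $x = g(x'')$ for some $x'' \in G_K \setminus k(G_I)$, whose corresponding vertex in $G_H$ under the B-edNCE correspondence from Lemma~\ref{lem:correspond} is $s(x'')$. A symmetric argument applies: any edge or CI in $G_M$ incident to $g(x'')$ that came from $f(G_R)$ would force $g(x'') \in g(k(G_I))$, contradicting $x'' \notin k(G_I)$. Hence all of its context in $G_M$ comes from $g(G_K)$ and is in bijection with the context of $x''$ in $G_K$; applied instead to the left pushout, the same reasoning shows the context of $s(x'')$ in $G_H$ is in bijection with the context of $x''$ in $G_K$. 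Since $g$ and $s$ are label-preserving monos, the contexts of $s(x'')$ in $G_H$ and $g(x'')$ in $G_M$ agree.

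The main technical obstacle I expect is being careful about which components of the DPO diagram live in the pushout intersection, for each of vertices, edges, and connection instructions simultaneously. The argument works because the \textbf{Boundary} condition makes the interface $G_I$ so sparse around nonterminals that the saturated matching conditions pin the pushout complement $G_K$ down to contain no edges or CIs near the image of $k$; once that is established the two cases fall out cleanly from pushout joint surjectivity.
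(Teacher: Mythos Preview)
Your argument is correct. The key difference from the paper's proof is how case~1 is handled. The paper invokes Lemma~\ref{lem:saturated_grammar_matching} to conclude that $f:G_R\to G_M$ is itself a saturated grammar matching (with respect to the reversed pattern), and then reads off the second and third conditions in the definition of extended graph saturated matching: the CI bijection gives equality of CI-context, and the edge conditions forbid any edge of $G_M$ outside $f(E_{G_R})$ from touching a nonterminal in $f(V_{G_R})$, so the edge-neighbourhood of $x$ is exactly $f$ of the edge-neighbourhood of $x'$. You instead work directly with the right pushout: joint surjectivity forces every edge or CI at $x$ to come from $f(G_R)$ or $g(G_K)$, and you eliminate the $g(G_K)$ contribution by computing the pushout complement explicitly from the \textbf{Boundary} condition on $G_I$ together with the three saturated-matching conditions on $m$. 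Both routes arrive at the same conclusion; the paper's is shorter because the work has already been packaged into Lemma~\ref{lem:saturated_grammar_matching}, while yours is more self-contained and makes the role of the \textbf{Boundary} condition on $G_I$ visible. For case~2 the two proofs are essentially the same, with yours spelling out the symmetric use of both pushout squares a bit more carefully.
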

\begin{proof}
As we have already shown, the rewrite does not add, nor delete productions
from $G_H$. Each production is modified on a local level using an
extended graph saturated rewrite.

For the first property, from Lemma~\ref{lem:saturated_grammar_matching}, we
know that $f : G_R \to G_M$ is a saturated grammar matching with respect to the
rewrite rule $G_R \xleftarrow r G_I \xrightarrow l G_L.$ Then, we know from the
\textbf{Production saturation} condition, and in particular the fact that $f$
also acts as a saturated extended graph matching, that $x$ and $x'$ will have
the same connection instructions. The same condition also implies that the edge
neighbourhoods of $x$ and $x'$ will be the same. Therefore $x$ and $x'$ have
the same context.

For the second property, observe that $x$ can only be adjacent to boundary
vertices or to vertices outside of the image of $f$ (that is the standard
no-dangling edges condition). Therefore, its edge-neighbourhood is exactly the
same as the edge-neighbourhood of $x'$. Next, consider an arbitrary connection
instruction associated to $x$. Such a connection instruction is not in the
image of $f$ and therefore it must have been preserved by the rewrite, so $x'$
has the same connection instruction. Therefore, $x$ and $x'$ have the same
context.
\end{proof}

\begin{corollary}\label{cor:non-empty-context}
Given a B-ESG rewrite $B_H \leadsto_{B,m} B_M$, as in
Definition~\ref{def:b-esg-rewrite}, then for each nonterminal vertex $x \in
G_M$, with $\lambda(x) = S$, it follows $x$ has empty context.
\end{corollary}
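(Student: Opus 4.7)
The plan is to apply Lemma~\ref{lem:nonterminal-context} to reduce the statement to an assertion about $G_R$ and $G_H$, both of which are in proper B-ESG normal form (the former by the definition of B-ESG rewrite rule, the latter by the hypothesis of Definition~\ref{def:b-esg-rewrite}). Given $x \in G_M$ with $\lambda(x) = S$, the lemma furnishes a corresponding nonterminal vertex $x'$ sitting either in $G_R$ (case $x \in f(G_R)$) or in $G_H$ (otherwise), whose context in its own grammar agrees with $cont(x)$ in $G_M$. Since grammar morphisms and the label-preserving bijections underlying the B-edNCE correspondence $(G_H, G_M)$ from Lemma~\ref{lem:correspond} both preserve vertex labels, we have $\lambda(x') = S$.

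It therefore suffices to prove the following auxiliary claim: in any grammar $G$ in proper B-ESG normal form with initial nonterminal label $S$, every $S$-labeled nonterminal vertex that appears in the body of some production has empty production-local context. The key observation is that the starting graph $sn(S,v)$ is itself a sentential form, and its unique vertex $v$ has empty context. Context-consistency yields a map $\eta$ with $cont_H(u) = \eta(\lambda_H(u))$ for every nonterminal vertex $u$ in every sentential form $H$; evaluating at $v$ pins down $\eta(S) = \emptyset$. Consequently, any copy of an $S$-labeled nonterminal occurring in any sentential form of $G$ has empty context.

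Now let $x'$ be an $S$-labeled nonterminal in the body of a production $p$ of $G$. Because $G$ is reduced, $p$ participates in some concrete derivation, producing a sentential form $H'$ in which a copy $x''$ of $x'$ (still labelled $S$) appears; by the preceding paragraph, $cont_{H'}(x'') = \emptyset$. Unpacking the substitution operation, any edge of $rhs(p)$ incident to $x'$ would be transferred verbatim to $H'$, and any connection instruction of $rhs(p)$ attached to $x'$ would, because $G$ has no useless connection instructions, eventually produce a bridge incident to a descendant of $x''$ in some derivation. Either situation yields a nonempty context at an $S$-labeled vertex in some sentential form, contradicting $\eta(S) = \emptyset$. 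Hence $x'$ carries no incident edges and no associated connection instructions in $rhs(p)$, i.e., $cont(x') = \emptyset$, and Lemma~\ref{lem:nonterminal-context} transports this to $cont(x) = \emptyset$. The main subtlety to watch for will be the argument that a nonuseless connection instruction attached to $x'$ genuinely forces a nonempty context at some copy of $x'$; this should be handled by combining the no-useless-CIs condition with the boundary property (so that nonterminal contexts in B-edNCE grammars are fixed once created, a fact already exploited in the proof of Lemma~\ref{lem:wire-cons}).
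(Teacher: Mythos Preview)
Your reduction via Lemma~\ref{lem:nonterminal-context} is the right first move, and your auxiliary claim is correct and cleanly argued for $G_R$, whose initial nonterminal label is indeed $S$. The gap is that you apply the same auxiliary claim to $G_H$, but $G_H$ need not have initial nonterminal label $S$. The \textbf{Initiality} condition on the saturated matching $m$ only says that the initial label of $G_H$ is \emph{either} $S$ \emph{or} a label not used in $G_L$; in the latter case $sn(S,v)$ is not a sentential form of $G_H$, so your argument that $\eta_H(S)=\emptyset$ collapses. Nonetheless $G_H$ may well contain $S$-labelled nonterminal vertices (the \textbf{Nonterminal covering} condition explicitly allows this), so this case cannot be dismissed.

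The paper closes this gap differently: it does not rely on $S$ being initial in $G_H$, but instead uses the \textbf{Production saturation} condition on $m$. Since $G_L$ has an initial production $p$ with $lhs(p)=S$ and empty connection instructions, production saturation forces $m(p)$ in $G_H$ to also have label $S$ and empty connection instructions. Now one argues directly from the normal-form properties of $G_H$: if some $S$-labelled nonterminal in $G_H$ had nonempty context, then (via reducedness and context-consistency) applying the empty-CI production $m(p)$ at such a vertex would drop an incident edge, violating neighbourhood preservation. This yields $\eta_H(S)=\emptyset$ without ever assuming $S$ is the initial label of $G_H$. Your proof can be repaired by inserting exactly this step for the $G_H$ case.
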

\begin{proof}
$S$ is the initial nonterminal label in $G_L$ and $G_R$, therefore there exists
at least one production $p$ with label $S$ in both of these grammars. By
assumption, $p$ must have empty connection instructions. Therefore, from the
\textbf{Production saturation} condition which is satisfied for the saturated
matching $m$, it follows $m(p)$ will also have empty connection
instructions in $G_H$. Therefore, both $G_H$ and $G_R$ contain at
least one production with label $S$ and empty connection instructions. This
means that any nonterminal vertex $x$ with label $S$ in $G_H$ or $G_R$ must
have empty context. Otherwise, we get a contradiction with the fact that $G_H$
or $G_R$ is neighbourhood-preserving, context-consistent and contain no useless
connection instructions. Then, from the previous lemma it follows that each
nonterminal vertex with label $S$ in $G_M$ must also have empty context.
\end{proof}

The next lemma is very powerful, as it follows as a corollary that several
normal forms for our rewritten grammar are satisfied. The proof is done
by induction over the length of the derivations which produce the sentential
forms. In particular, the strengthened statements (compared to what is needed
for the normal forms) make the proof considerably simpler compared to
what is required by proving each normal form separately.

\begin{lemma}
Given a B-ESG rewrite $B_H \leadsto_{B,m} B_M$, as in
Definition~\ref{def:b-esg-rewrite}, then the following are true: 
\begin{enumerate}
\item applying any production $p$ of $G_M$
to any sentential form of $G_M$ will use all of the connection
instructions of $p$ to create bridges.
\item expanding any nonterminal vertex $x$
in any sentential form of $G_M$ will use all of its incident
edges to create bridges.
\end{enumerate}
\end{lemma}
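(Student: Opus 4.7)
The plan is to prove both statements simultaneously by induction on the length of the derivation producing a sentential form of $G_M$. For the base case, the sentential form $sn(S, v_1)$ contains only the initial nonterminal $x_0$ with label $S$, which by Corollary~\ref{cor:non-empty-context} has empty context. Property 2 then holds vacuously, since $x_0$ has no incident edges. For property 1, any production $p$ of $G_M$ with $lhs(p) = S$ must have empty connection instructions: by Lemma~\ref{lem:correspond} and the description of productions of $G_M$ from Lemma~\ref{lem:pattern}, $p$ is either inherited unchanged from $G_H$ or has its RHS derived from the corresponding production of $G_R$. Either source is in B-ESG normal form, so combining context-consistency with neighbourhood-preservation and no useless connection instructions forces any $S$-production to have empty connection instructions (otherwise the initial sentential form of that source grammar would generate a nonterminal with non-empty context via the first derivation step, contradicting that $S$-nonterminals have empty context).

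For the inductive step, suppose both properties hold for all sentential forms of $G_M$ derivable in at most $n$ steps, and consider a sentential form $H_{n+1}$ obtained by applying a production $p$ to a nonterminal $x$ in $H_n$. Nonterminals inherited from $H_n$ retain their contexts unchanged by the derivation step (since $G_M$ is a B-edNCE grammar and $p$'s application only affects $x$'s neighbourhood locally), so they continue to satisfy both properties by the induction hypothesis. For newly created nonterminals $y$ in $H_{n+1}$, the context of $y$ consists of contributions from inside $rhs(p)$ together with bridges established via $p$'s connection instructions at $y$. The key tool is Lemma~\ref{lem:nonterminal-context}: the context of $y$ as a vertex of $G_M$ is inherited from either (a) its pre-image under $f$ in $G_R$, or (b) its corresponding nonterminal in $G_H$. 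Combined with the induction hypothesis (which guarantees that applying $p$ uses all its connection instructions and all of $x$'s incident edges, thereby realising the full expected context at $y$), the context of $y$ in the sentential form $H_{n+1}$ matches $y$'s context in $G_M$. Splitting on the provenance of $y$'s context and the provenance of any production $q$ of $G_M$ with $lhs(q) = \lambda(y)$, the matched cases (both from $G_H$, or both from $G_R$) follow immediately from the B-ESG normal form properties of the respective source grammar.

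The hard part will be the two mixed cases, where $y$'s context is inherited from one source while $q$'s connection instructions are inherited from the other. To resolve them, I plan to show that the context functions of $G_H$ and $G_R$ agree on every nonterminal label they share. In one direction, the \textbf{Production saturation} condition on the saturated grammar matching $m: G_L \to G_H$ provides a bijection on connection instructions and a matching on incident edges between corresponding nonterminals in $G_L$ and $G_H$, which combined with context-consistency forces $\eta_L$ and $\eta_H$ to agree on labels that appear in $G_L$. A parallel argument applied to the monomorphisms $l: G_I \to G_L$ and $r: G_I \to G_R$ of the B-ESG rewrite rule, together with the \textbf{Boundary} condition on $G_I$ and the fact that $G_L, G_R$ are both context-consistent with the same nonterminal labelling of corresponding productions (enforced by the B-edNCE pattern structure), forces $\eta_L = \eta_R$ on shared labels. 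Composing these two equalities yields $\eta_H = \eta_R$ on shared labels, which closes the mixed cases and completes the induction.
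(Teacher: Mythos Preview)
Your overall induction structure matches the paper's, but there is a genuine gap in the handling of the mixed cases. You propose to close them by proving $\eta_L = \eta_R$ on shared nonterminal labels, and justify this via the morphisms $l,r$ and the \textbf{Boundary} condition on $G_I$. This does not work: the \textbf{Boundary} condition says precisely that every production of $G_I$ has no edges and no connection instructions, so every nonterminal in $G_I$ has \emph{empty} context. Mapping an empty-context vertex into $G_L$ (via $l$) and into $G_R$ (via $r$) tells you nothing about the contexts it lands in. Nothing in the definition of a B-edNCE pattern or a B-ESG rewrite rule forces corresponding nonterminals in $G_L$ and $G_R$ to have the same context; the bijection only preserves labels. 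So $\eta_L = \eta_R$ need not hold, and your plan for the mixed cases collapses.

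The paper resolves this differently: it shows the mixed cases simply do not arise. The key observation is that whether the context of a nonterminal, or the connection instructions of a production, are inherited from $G_R$ or from $G_H$ is determined \emph{solely by the label} $X$. If $X$ is used in $G_R$ (equivalently $G_L$), then by \textbf{Nonterminal covering} for the saturated matching $f$ (Lemma~\ref{lem:saturated_grammar_matching}) every nonterminal of label $X \neq S$ in $G_M$ lies in the image of $f$, and by \textbf{Production branching} for $f$ every production of label $X$ in $G_M$ lies in the image of $f$; hence both inherit from $G_R$. If $X$ is not used in $G_R$, neither any $X$-nonterminal nor any $X$-production of $G_M$ is touched by the rewrite, and both inherit from $G_H$. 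With this label-determined dichotomy in hand, the induction goes through using only the normal-form properties of whichever single source grammar is relevant, and no comparison between $\eta_H$ and $\eta_R$ is needed.
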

\begin{proof}
Let the context functions for $G_R$ and $G_H$ be $\eta_R$ and $\eta_H$
respectively.

Consider an arbitrary nonterminal label $X$. If $X=S$, then from
Corollary~\ref{cor:non-empty-context}, it follows that each nonterminal
vertex $x$ with label $S$ in any sentential form has empty context.
So let us assume that $X \not = S$.

If $X$ is a label which is not used in the grammar $G_R$, then no production
with label $X$ is in the image of $f$ and no nonterminal vertex with label
$X$ is in the image of $f$. Thus, the connection instructions of such
productions in $G_M$ are the same as their counterparts in $G_H$ and the
context of nonterminal vertices with label $X$ in $G_M$ is the same as their
counterparts in $G_H$ (the latter follows from
Lemma~\ref{lem:nonterminal-context}).

If $X$ is a label which is used in the grammar $G_R$, then let's consider an
arbitrary nonterminal vertex $x$ with label $X$ in $G_M$. We know $f$ is a
saturated matching and therefore from the \textbf{Nonterminal covering}
condition it follows $x \in f(G_R)$. Therefore there exists a nonterminal
vertex $x' \in G_R,$ such that $f(x') = x$.  From
Lemma~\ref{lem:nonterminal-context}, we know the context of $x$ in $G_M$ is the
same as the context of $x'$ in $G_R$. From the \textbf{Production branching}
and the \textbf{Production saturation} conditions it follows that every
production with label $X$ in $G_M$ has the same connection instructions as its
pre-image in $G_R$. Therefore, all nonterminal vertices with label $X$
in $G_M$ have the same context as their pre-images in $G_R$ and all
productions with label $X$ in $G_M$ also have the same connection instructions
as their pre-images in $G_R$.

Therefore, in all cases, a nonterminal vertex $x$ with label $X$ has the same
context in $G_M$ as either its counterpart in $G_H$ or its pre-image in $G_R$,
depending solely on whether or not $X$ is used in $G_R$ or not. Because
both grammars $G_R$ and $G_H$ are context-consistent and contain no useless
connection instructions, this means the context of $x$ in $G_M$ is $\eta_R(X)$
or $\eta_H(X)$, if $X$ is used in $G_R$ or not, respectively.

Similarly,
every production $p$ with label $X$ has the same connection instructions in
$G_M$ as either its counterpart in $G_H$ or its pre-image in $G_R$, depending
solely on whether or not $X$ is used in $G_R$ or not. We will use these
two facts to prove that the conclusion holds.

We will prove by induction on the length of the derivation that
each connection instruction is used to establish a bridge every
time a nonterminal vertex is replaced and also that all of its incident
edges are also used to create bridges.

In the base case, the sentential form is just $sn(U,z)$, where $U$ is the
initial nonterminal label for $G_M$, which by definition is the same
as the one for $G_H$. Let's consider an arbitrary expansion
$sn(U,z) \Longrightarrow_{z,p} H$, where $p \in G_M$ with label $U$.
If $U=S$, then as already shown, $p$ must have connection instructions as
its pre-image in $G_R$, which has empty connection instructions by assumption.
If $U \not = S$, then by the \textbf{Initiality} condition $U$ is not used
in $G_R$ and therefore $p$ has the same connection instructions as its
corresponding production in $G_H$, which is initial and therefore has no
connection instructions. In both cases, the first proposition follows trivially
and the second one also follows trivially, because $sn(U,z)$ has no edges.

For the step case, assume that we have proved our propositions for all
sentential forms $sn(S,x_1) \Longrightarrow_{x_1, p_1} H_1
\Longrightarrow_{x_2, p_2} \cdots \Longrightarrow_{x_n, p_n} H_n$, where the
length of the derivation sequence is at most $n$.  Let $x_{n+1} \in H_n$ be an
arbitrary nonterminal vertex with label $X$ and let's assume it was created by
production $p_k$. Then, from the induction hypothesis, it follows the
context of $x_{n+1}$ in $H_n$ is the same as the context of $x_{n+1}$
in $rhs(p_k)$, which is the body of a production in $G_M$.
If $X$ is used in $G_R$, then this context is $\eta_R(X)$ and otherwise it is
$\eta_H(X)$, as already pointed out.

Consider the sentential form $H_n \Longrightarrow_{x_{n+1},
p_{n+1}} H_{n+1},$ where $p_{n+1}$ is an arbitrary production with label $X$.

If $X$ is a label which is not used in $G_R$, then the context of $x_{n+1}$ in
$H_n$ is $\eta_H(X)$. We have also shown that $p_{n+1}$ must have the same
connection instructions as its corresponding production in $G_H$. Therefore,
all of the connection instructions in $p$ must be used to create bridges,
because otherwise we get a contradiction with the fact that $G_H$ is
context-consistent and contains no useless connection instructions. Moreover,
every edge incident to $x_{n+1}$ must also be used in order to create bridges,
because otherwise we get a contradiction with the fact that $G_H$ is
neighbourhood-preserving.

If $X$ is a label which is used in $G_R$, then the context of $x_{n+1}$
in $H_n$ is $\eta_R(X)$. We have also shown that $p_{n+1}$ must have the
same connection instructions as its pre-image in $G_R$. Therefore, all
of the connection instructions in $p$ must be used to create bridges, because
otherwise we get a contradiction with the fact that $G_R$ is context-consistent
and contains no useless connection instructions. Moreover,
every edge incident to $x_{n+1}$ must also be used in order to create
bridges, because otherwise we get a contradiction with the fact that $G_R$ is
neighbourhood-preserving.
\end{proof}

\begin{corollary}\label{cor:normal-forms}
Given a B-ESG rewrite $B_H \leadsto_{B,m} B_M$, as in
Definition~\ref{def:b-esg-rewrite}, then $G_M$ is context-consistent,
neighbourhood-preserving and it contains no useless connection instructions.
\end{corollary}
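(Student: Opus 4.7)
The plan is to read off the three properties directly from the preceding lemma together with Corollary~\ref{cor:reduced} and the normal-form hypotheses on $G_R$ and $G_H$. The key observation is already embedded in the proof of the preceding lemma: for each nonterminal label $X$, every nonterminal vertex with label $X$ in every sentential form of $G_M$ has the same fixed context, equal to $\eta_R(X)$ when $X$ is a label used in $G_R$, and equal to $\eta_H(X)$ otherwise (with $\eta(S) = \emptyset$ taken care of by Corollary~\ref{cor:non-empty-context}). This dichotomy immediately yields a well-defined map $\eta$ from the nonterminal labels of $G_M$ to contexts witnessing context-consistency.

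For neighbourhood preservation, I will take an arbitrary sentential form $H$ of $G_M$, a nonterminal vertex $v \in H$ with label $X$, any incident edge $(w,\beta,v) \in E_H$, and any production $p \in G_M$ with $lhs(p) = X$. By the context-consistency just established, $(\lambda_H(w),\beta,in) \in \eta(X)$. In the proof of the preceding lemma it was shown that the connection instructions of $p$ coincide with those of the corresponding production in $G_R$ (if $X$ is used in $G_R$) or in $G_H$ (otherwise). Since both $G_R$ and $G_H$ are neighbourhood-preserving and their own context functions agree with our $\eta$ on the relevant labels, $rhs(p)$ must contain a connection instruction of the form $(\lambda_H(w),\beta,\gamma,x,in)$; the case of out-edges is symmetric.

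The absence of useless connection instructions will follow by combining the first conclusion of the preceding lemma with Corollary~\ref{cor:reduced}. The lemma states that whenever a production $p$ of $G_M$ is applied during any derivation, all of its connection instructions are consumed to create bridges. Since $G_M$ is reduced, every production participates in at least one concrete derivation, so each of its connection instructions is used in at least one bridge in that derivation, and is therefore not useless.

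I do not expect any serious obstacle; the preceding lemma was engineered to prove precisely these three properties, and the corollary is essentially a matter of unwinding definitions. The only care needed is in the case split defining $\eta$: I must verify that when a label is used in both $G_R$ and $G_H$, the choice $\eta(X) = \eta_R(X)$ does not clash with the behaviour forced by $G_H$ on nonterminal vertices outside the image of $f$. This is handled by Lemma~\ref{lem:nonterminal-context}, which guarantees that such vertices also inherit the $G_R$-context via their pre-image under $f$, so the two descriptions coincide.
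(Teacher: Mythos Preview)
Your proposal is correct and follows essentially the same route as the paper. Two minor remarks: for neighbourhood preservation, the paper simply observes that the second conclusion of the preceding lemma (every incident edge of a nonterminal is used to create a bridge upon expansion) is equivalent to the definition, so your detour through context-consistency and the connection instructions of $G_R$ versus $G_H$ is more elaborate than necessary; and your explicit appeal to reducedness (Corollary~\ref{cor:reduced}) for the ``no useless connection instructions'' claim is a point the paper glosses over but which is indeed needed to ensure every production actually participates in some derivation.
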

\begin{proof}
The fact that $G_M$ does not contain useless connection instructions follows
immediately from the first proposition of the previous lemma as its statement
is clearly stronger. The second proposition of the previous lemma immediately
implies that $G_M$ is neighbourhood-preserving, as it is clearly equivalent
to the definition.

For context-consistency, the previous lemma implies that the context
of any nonterminal vertex $x$ with label $X$ in a production $p$ of $G_M$
is the same as the context of $x$ in any sentential form of $G_M$ when
$x$ is created by an application of $p$. However, as we have pointed out
in the proof of the previous lemma, the context of $x$ within the grammar
$G_M$ is determined solely by its label $X$ -- if $X$ is used in $G_R$, then
its context is $\eta_R(X)$ and otherwise it is $\eta_H(X)$. Therefore,
the context function is given by:
\[ \eta_M(X) =
\begin{cases}
\eta_R(X) & \text{ if label } X \text{ used in } G_R\\
\eta_H(X) & \text{ else}
\end{cases}
\]
\end{proof}

Wire-consistency (cf. Definition~\ref{def:wire-consistent-crap}) is one of the
necessary properties of B-ESG grammars. The next lemma proves that the
rewritten grammar is indeed wire-consistent.

\begin{lemma}\label{lem:wire-consistent}
Given a B-ESG rewrite $B_H \leadsto_{B,m} B_M$, as in
Definition~\ref{def:b-esg-rewrite}, then $B_M$ is wire-consistent.
\end{lemma}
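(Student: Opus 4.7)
The plan is to argue by contradiction, reducing a hypothetical wire-inconsistency in $G_M$ to wire-inconsistency in either $G_R$ or $G_H$. The key leverage is a clean label-based partition of $G_M$'s productions into an ``R-part'' (productions whose labels appear in $G_R$) and an ``H-part'' (all others), together with the fact already established in the proof of Corollary~\ref{cor:normal-forms} that the context function satisfies
\[ \eta_M(X) = \begin{cases} \eta_R(X) & \text{if } X \text{ is used in } G_R \\ \eta_H(X) & \text{otherwise} \end{cases} \]
and that productions with label $X$ in $G_M$ have the same connection instructions as either their pre-images in $G_R$ or their corresponding productions in $G_H$, in accordance with the same case split.

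First I would verify that this partition is stable under the rewrite. If label $X$ is used in $G_L$ (equivalently $G_R$), then every production of $G_H$ with label $X$ lies in the image of $m$: this follows from the \textbf{Production branching} condition on the saturated matching $m$, together with the fact that $G_L$ contains at least one $X$-labelled production. Consequently, after the rewrite, every production of $G_M$ with label $X$ lies in the image of $f$, and hence its body, its connection instructions, and its edges incident to nonterminals agree with those of its pre-image in $G_R$ (via the extended graph saturated matching property from Lemma~\ref{lem:saturated_preserve}). Conversely, productions whose label is not used in $G_R$ are left untouched by the rewrite and inherit their structure directly from $G_H$.

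Next I would check that the multi-step context-passing relation $\mathcal Q^d_\sigma(\alpha,\beta)$ in $G_M$ never crosses this partition. Each single step $x \mathcal P^d_\sigma(\alpha,\beta) y$ takes us from a nonterminal $x$ of label $X_1$ in some production $p_1$ to a nonterminal $y$ in a production $p_2$ whose label is also $X_1$. So $p_2$ belongs to the same part as $p_1$, and therefore the label $X_2 = \lambda(y)$ of $y$ is the label of a nonterminal that appears in the body of a production from that same part; in particular, $X_2$ is used in $G_R$ iff $X_1$ is. Iterating, the whole chain of nonterminals and productions witnessing $x \mathcal Q^d_\sigma(\alpha,\beta) y$ lies in a single part.

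The final step is to derive the contradiction. Suppose some production $p \in G_M$ contains a nonterminal $x$ of context cardinality $(\sigma,\alpha,d) \geq 2$, with $x \mathcal Q^d_\sigma(\alpha,\beta) y$ and some production $p''$ of label $\lambda(y)$ carrying a connection instruction $(\sigma,\beta,\gamma,z,d)$ to a wire-vertex $z$. By the previous paragraph, $p$, the chain, and $p''$ all sit inside the same part. If that part is the R-part, transporting the whole configuration back across $f$ produces a witness to wire-inconsistency of $G_R$, contradicting that $G_R$ is in B-ESG normal form; if it is the H-part, the configuration is simply inherited unchanged from $G_H$, contradicting wire-consistency of $G_H$. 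The only mildly delicate point~--~and the step where I expect the most careful bookkeeping~--~is checking that the context cardinality of $x$ in $G_M$ really matches that of its counterpart in $G_R$ (resp.\ $G_H$), since cardinality is sensitive to both edges and connection instructions; for this one invokes the \textbf{Production saturation} clause together with Lemma~\ref{lem:saturated_preserve}, which guarantee that $f$ restricted to each rewritten production is a saturated extended graph matching and hence preserves both edges at nonterminals and all connection instructions.
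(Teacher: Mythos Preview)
Your approach is essentially the paper's: argue by contradiction, split $G_M$ into an ``$R$-part'' and an ``$H$-part'', show the context-passing chain cannot cross the split, and transport the wire-inconsistency witness back to $G_R$ or $G_H$. The paper phrases the split in terms of image-of-$f$ rather than labels, but by \textbf{Nonterminal covering} on the saturated matching $f$ (Lemma~\ref{lem:saturated_grammar_matching}) these coincide for every nonterminal label \emph{except} the pattern's initial label $S$.

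That exception is a genuine gap in your argument. Your claim that ``$X_2$ is used in $G_R$ iff $X_1$ is'' fails in one direction: if $X_1$ is not used in $G_R$ then $p_2$ is indeed unmodified from $G_H$, but a nonterminal $y$ inside it can still carry label $S$, which \emph{is} used in $G_R$, without $y$ lying in the image of $f$ --- \textbf{Nonterminal covering} permits exactly this. So the chain can, a priori, jump from the $H$-part into the $R$-part via an $S$-labelled vertex, and your transport-back-to-$G_H$ argument breaks at that point. The paper closes this hole by first invoking Corollary~\ref{cor:non-empty-context}: every $S$-labelled nonterminal in $G_M$ has empty context and (via \textbf{Production saturation}) every $S$-labelled production of $G_M$ has empty connection instructions. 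Hence no $S$-labelled vertex can appear as either source or target of a $\mathcal P$-step, and no $S$-labelled production can serve as the terminal $p''$ carrying a connection instruction to a wire-vertex. Once you insert this observation, your argument goes through exactly as you describe.
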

\begin{proof}
Let's assume that the initial nonterminal
label of $G_L \xleftarrow l G_I \xrightarrow r G_R$ is $S$.

Assume $G_M$ is not wire-consistent. This means, there exists a production $p
\in G_M$ which contains a nonterminal vertex $x_0 \in rhs(p)$, such that $x_0$
has
context cardinality $(\sigma, \alpha_0, d)$ at least two, for some $\sigma \in
\Sigma, \alpha_0 \in \Gamma, d \in \{in, out\}$. Moreover, there must be
a nonterminal vertex $x_n$ with label $X \in \Sigma$, such that
$x_0\ \mathcal Q_{\sigma}^d(\alpha_0, \alpha_n)\ x_n$ (this is the multi-step
context-passing relation). It must also be the case that there exists
a production $p' = X \to (D,C)$ such that $C$ contains a connection instruction
$(\sigma, \alpha_n, \gamma, z, d),$ where $z \in D$ is a wire-vertex.

First, observe that from Lemma~\ref{lem:saturated_grammar_matching},
we know that $f : G_R \to G_M$ is a saturated grammar matching with respect
to the rewrite rule $G_R \xleftarrow r G_I \xrightarrow l G_L.$

Next, we prove that no nonterminal vertex with label $S$ in $G_M$ can satisfy
the single-step context-passing relation $\mathcal P$ (and thus $\mathcal Q$).
Assume the contrary, that there exist nonterminal vertices $u, u',$ such that
$u\ \mathcal P^d_{\sigma} (\alpha', \alpha'')\ u'$
and one of them has label $S$. If $u$ has label $S$, then, the vertex $u$ must
have non-empty context which contradicts with
Corollary~\ref{cor:non-empty-context}. Otherwise, $u'$ has label $S$, which
means
there exists a production $q$ which has a connection instruction to a
nonterminal vertex with label $S$, which again contradicts
Corollary~\ref{cor:non-empty-context}.

We have assumed that
$x_0\ \mathcal Q_{\sigma}^d(\alpha_0, \alpha_n)\ x_n$, therefore,
there exist
nonterminal vertices $x_1,\ldots,x_{n-1}$ and edge labels
$\alpha_1,\ldots,\alpha_{n-1}$, such that:
\begin{align*}
x_0 \ \mathcal P_{\sigma}^d(\alpha_0, \alpha_1)\ x_1\ &\land \\
x_1 \ \mathcal P_{\sigma}^d(\alpha_1, \alpha_2)\ x_2\ &\land \\
\cdots & \\
x_{n-1} \ \mathcal P_{\sigma}^d(\alpha_{n-1}, \alpha_n)\ x_n\ 
\end{align*}

If $x_0$ is outside of the image of $f$, then this means that all the $x_i$ are
also outside of the image of $f$. To see this, assume the opposite. This means
there exist nonterminal vertices $x_k, x_{k+1}$ in productions $q, q'$ such
that $x_k$ is not in the image of $f$ and $x_{k+1}$ (and thus $q'$) are in the
image of $f$, and moreover 
$x_k\ \mathcal P^d_{\sigma} (\alpha_k, \alpha_{k+1})\ x_{k+1}$.
By definition of the
single-step context-passing relation, this means $\lambda(x_k) = lhs(q')$.
However, $x_k$ is not in the image of $f$, but $q'$ is, and therefore from the
\textbf{Nonterminal covering} condition satisfied for the saturated matching
$f$, it means that $\lambda(x_k) = S ,$ which is a contradiction.

Therefore, if $x_0$ is outside of the image of $f$, then this means that each
$x_i$ has label which is not used in $G_R$, because all of these vertices are
outside of the image of $f$ and their label cannot be $S$ (otherwise we would
violate the \textbf{Nonterminal covering} condition). The productions which
carry such labels have not been modified by the rewrite and moreover, from
Lemma~\ref{lem:nonterminal-context}, we know that the context of each $x_i$
in $G_R$ is the same as its counterpart in $G_H$. Therefore, we get a
contradiction with the wire-consistency of $G_H.$

Thus, it must be the case that $x_0$ is in the image of $f$. Now, we can
prove that all the $x_i$ vertices are also in the image of $f$. Assume
the opposite, therefore there exist $x_k, x_{k+1}$ in productions
$q$ and $q'$, such that
$x_k$ is in the image of $f$, but $x_{k+1}$ isn't. Because
$x_k\ \mathcal P^d_{\sigma} (\alpha_k, \alpha_{k+1})\ x_{k+1}$, it follows
$\lambda(x_k) = lhs(q')$ and also $q'$ has a connection instruction to
$x_{k+1}$.
From the \textbf{Production branching} condition, it follows production $q'$ is
also in the image of $f$. But then, from the \textbf{Production saturation}
condition, it follows that production $q'$ must have the same connection
instructions as those of its pre-image. This results in a contradiction,
because we know $q'$ must also have a connection instruction to $x_{k+1}$ which
is outside of the image of $f$.

Because all $x_i$ are in the image of $f$, then using
Lemma~\ref{lem:nonterminal-context}, we also know that they have the
same context in $G_M$ as their pre-images in $G_R$ and therefore we get
a contradiction with the wire-consistency of $G_R$.
\end{proof}

Building on top of the results from this subsection we may now prove our first
main theorem which states that a B-ESG rewrite results in a B-ESG grammar.

\begin{theorem}\label{thm:rewrite-besg}
Given a B-ESG rewrite $B_H \leadsto_{B,m} B_M$, as in
Definition~\ref{def:b-esg-rewrite}, then $B_M$ is a B-ESG grammar.
\end{theorem}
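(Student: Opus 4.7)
The plan is to verify the four per-production conditions N1, N2, W1, W2 of Definition~\ref{def:besg}, since wire-consistency is already furnished by Lemma~\ref{lem:wire-consistent} and Lemma~\ref{lem:correspond} already gives that $G_M$ is a B-edNCE grammar in correspondence with $G_H$. I would first partition the productions of $G_M$ into those outside the image of $f: G_R \to G_M$ and those inside. A production outside the image of $f$ has its right-hand side carried over unchanged through the pushout complement from the corresponding production of $G_H$, so the four conditions are inherited directly from $B_H$ being a B-ESG grammar.

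For a production $p_M = f(p_R)$ coming from a matched triple $(p_L, p_I, p_R)$, the right-hand side $rhs(p_M)$ decomposes as a preserved part (a full subextended-graph of $rhs(p_H)$ obtained by deleting $m(rhs(p_L) - l(rhs(p_I)))$) and a new part ($rhs(p_R) - r(rhs(p_I))$), glued along the interface. The \textbf{Boundary} condition on B-ESG rewrite rules forces the interface to consist solely of isolated wire-vertices and nonterminal vertices, so every node-vertex, every edge between node-vertices, and every connection instruction attached to a node-vertex in $rhs(p_M)$ lives entirely in one of the two parts; consequently N1 and N2 for $p_M$ reduce to N1 and N2 for the corresponding productions of $G_H$ and $G_R$. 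The same locality argument handles W1 and W2 for wire-vertices outside the interface.

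The nontrivial case, and the main obstacle, is W1 and W2 at an interface wire-vertex of the form $f(r(w))$ for $w \in rhs(p_I)$. I would compute its production in-degree, its production out-degree, and, for W2, the count of connection instructions of each specified type at such a vertex as the sum of a preserved contribution (at $m(l(w))$ in $rhs(p_H)$, with what lies in $m(rhs(p_L))$ removed) and a new contribution at $r(w)$ in $rhs(p_R)$. Three ingredients close the argument: (i) the extended graph saturated matching is a bijection on connection instructions between $rhs(p_L)$ and $rhs(p_H)$, so every connection instruction at $m(l(w))$ is stripped by the pushout complement (since $rhs(p_I)$ carries no connection instructions); (ii) by \textbf{IO2}, both $l(w)$ and $r(w)$ are production inputs, or both production outputs; and (iii) proper B-ESG normal form forbids production isolated wire-vertices in $G_L$ and $G_R$, so a production input has production out-degree exactly one and a production output has production in-degree exactly one. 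Combining (i)--(iii) with W1 and W2 for $G_H$ and $G_R$, the preserved contribution in the ``active'' direction is forced to collapse to zero, leaving the new contribution of exactly one, while in the opposite direction the new contribution is zero and the preserved contribution is bounded by one.

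The hardest part of writing this up cleanly will be exactly this direction-by-direction bookkeeping at interface wire-vertices: one has to argue in both the in- and out-directions that the ``old minus matched plus new'' sum of edges and connection instructions at such a vertex respects the B-ESG bounds. The role of proper B-ESG normal form -- ensuring every production input or output in $G_L$ and $G_R$ has the complementary production degree equal to one -- is precisely what forces the preserved contribution to collapse in the same direction where the new contribution is nonzero, making the bound tight.
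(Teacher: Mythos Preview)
Your proposal is correct and follows essentially the same approach as the paper: both invoke Lemma~\ref{lem:correspond} and Lemma~\ref{lem:wire-consistent}, split into the unmatched case (inherited from $G_H$) and the matched case (inherited from $G_R$), and handle the delicate W1 bound at interface wire-vertices via the IO conditions together with the absence of production-isolated wire-vertices in $G_L$ and $G_R$. The one place where the paper is slightly slicker is W2: rather than tracking contributions at interface vertices, it observes that the production-saturation bijection on connection instructions forces \emph{all} connection instructions of a matched production $p_M$ to coincide with those of its pre-image $p_R$, so W2 (and N2) transfer directly from $G_R$ without any interface bookkeeping.
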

\begin{proof}
From Lemma~\ref{lem:correspond}, we know that $G_M$ is a B-edNCE grammar.
From Lemma~\ref{lem:wire-consistent}, we know that $G_M$ is wire-consistent.
Next, we have to show that each of the (local) conditions from
Definition~\ref{def:besg} are satisfied.

For condition N1, consider an arbitrary edge $e$ in $G_M$, such that it
connects two node-vertices. If $e$ is outside of the image of $f$, then
$e$ is unmodified by the rewrite and therefore it must carry an encoding
label, because $G_H$ also satisfies condition N1. Otherwise, $e$ is in the
image of $f$ and again we get that $e$ must have an encoding label, because
$G_R$ satisfies condition N1.

Conditions N2 and W2 follow using the same argument. If the production $p \in
G_M$ in question is outside of the image of $f$, then it is unmodified and must
have the same connection instructions as its counterpart in $G_H$ which
satisfies both conditions N2 and W2. Otherwise, $p$ is in the image of $f$ and
therefore by the \textbf{Production saturation} condition we know that $p$ has
the same connection instructions as its pre-image in $G_R$, which satisfies
both conditions N2 and W2.

Finally, let's consider condition W1. Assume for contradiction that there
exists a wire-vertex $w$ in production $p$ of $G_M$, such that $w$ has
production in-degree more than one. Without loss of generality, let's assume
that the production in-degree of $w$ is two. If $w$ is outside of the image
of $f$, then $w$ and its incident edges or connection instructions are
unmodified by the rewrite which means that $G_H$ also contains a wire-vertex
with in-degree two, which is a contradiction. Therefore, $w$ is in the image
of $f$. If $w$ is not a boundary vertex, then from the no-dangling
edges/connection instructions conditions, it follows that the pre-image
of $w$ under $f$ in $G_R$ has production in-degree two, which is a
contradiction. Therefore, $w$ must be a boundary vertex, so let $w = f \circ
r(w')$, where $w' \in G_I$.

The pre-image of $w$ under $f$ is then
$r(w')$. If $r(w')$ has production in-degree zero in $G_R$, this means
that $m \circ l(w')$ must have production in-degree two in $G_H$, which
is a contradiction. Therefore, $r(w')$ has production in-degree equal to
one. However, $B$ is a B-ESG rewrite rule and therefore the inputs and
outputs in corresponding productions of $G_L$ and $G_R$ are in bijection,
specified by the morphisms $l$ and $r$. Thus, we know that $l(w')$
must have production in-degree one as well. Therefore, $m \circ l(w')$
has production in-degree one. But then, in the pushout complement $G_K$, we get
that $k(w')$ has production in-degree zero (because $G_I$ has no edges, nor
connection instructions). After computing the pushout of $k$ and $r$, we see
that $f \circ r(w') = w$ must have production in-degree one, which is a
contradiction.

If we assume that there exists a wire-vertex with production out-degree more
than one, we get a contradiction by symmetry.
\end{proof}

\subsection{Rewrites form a B-ESG pattern}\label{sub:pattern-shit}

We continue by showing that a B-ESG rewrite forms a B-ESG rewrite pattern
together with the original grammar.

\begin{lemma}\label{lem:isolated-shit}
Given a B-ESG rewrite $B_H \leadsto_{B,m} B_M$, as in
Definition~\ref{def:b-esg-rewrite}, then $B_M$ contains no production
isolated wire-vertices.
\end{lemma}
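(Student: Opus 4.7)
The plan is to classify every wire-vertex of $G_M$ according to its origin in the pushout square
\[
\stikz{b-esg-dpo.tikz}
\]
and to show in each case that the vertex has positive production degree. Since the pushout in \ednce{} is computed componentwise modulo the identifications imposed by $G_I$, each wire-vertex of $G_M$ falls into exactly one of three classes: (a) it lies in $g(G_K - k(G_I))$, i.e., it comes from a non-boundary wire-vertex of $G_K$; (b) it lies in $f(G_R - r(G_I))$, i.e., it comes from a non-boundary wire-vertex of $G_R$; or (c) it lies in $g \circ k(G_I) = f \circ r(G_I)$, i.e., it is a boundary vertex obtained by identifying an image under $k$ with an image under $r$.

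For case (a), I would observe that such a wire-vertex corresponds to some $w' \in V_{G_H} - m(V_{G_L})$. By the \textbf{no dangling edges} and \textbf{no dangling connection instructions} conditions satisfied by $m$ (cf.\ Lemma~\ref{lem:matching-ednce}), every edge and every connection instruction incident to $w'$ in $G_H$ lies outside $m(E_{G_L})$ and $m(C^i_{G_L})$, hence is preserved in $G_K$ and therefore in $G_M$. Thus the production degree of the vertex in $G_M$ equals its production degree in $G_H$; since $B_H$ is in proper B-ESG normal form by the hypothesis of Definition~\ref{def:b-esg-rewrite}, it is not a production isolated wire-vertex. For case (b), a non-boundary wire-vertex of $G_R$ cannot acquire any new edges or connection instructions from $G_K$ through the pushout (these only attach via the shared image of $G_I$), so its production degree in $G_M$ equals its production degree in $G_R$, and the fact that $B_R$ is in proper B-ESG normal form finishes the case.

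The remaining and slightly more delicate case is (c). Here the wire-vertex arises from some $w_0 \in V_{G_I}$. By the \textbf{Boundary} condition of a B-ESG rewrite rule, $w_0$ is an isolated wire-vertex in $G_I$ (so it contributes nothing to the degree by itself), but by condition \textbf{IO2}, $r(w_0)$ is a production input or production output of the corresponding production in $G_R$. Since $B_R$ is in proper B-ESG normal form, $r(w_0)$ is not a production isolated wire-vertex, so its production degree in $G_R$ is strictly positive. Because the pushout only adds to the neighbourhood of the glued boundary vertex, the production degree of the resulting vertex in $G_M$ is at least that of $r(w_0)$ in $G_R$, hence strictly positive.

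I expect the main subtlety to be bookkeeping in case (c): making sure that the pushout identifications do not inadvertently cancel edges or connection instructions, and that the \emph{production} in-degree/out-degree (a notion defined per-production in the grammar) is the correct quantity to track rather than the degree in a derived graph. This is handled by noting that the rewrite acts locally on each production of $G_H$ via an extended graph rewrite, so production membership is preserved by the morphisms $f$ and $g$, and the per-production degree of a vertex is the sum of its contributions in $G_K$ and $G_R$ whenever the vertex sits on the boundary.
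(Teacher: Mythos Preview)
Your proposal is correct and follows essentially the same strategy as the paper: classify wire-vertices of $G_M$ by their origin in the pushout and reduce to the proper B-ESG normal form of $B_H$ and $B_R$. The paper's argument is simply more compressed: it splits into only two cases according to whether $w$ lies in the image of $f$ or not. In the second case, since $f$ is a monomorphism, any edge or connection instruction attached to the preimage $w' \in G_R$ would survive to $w$ in $G_M$, so $w'$ must itself be production isolated, contradicting that $B_R$ is in proper B-ESG normal form. Your separate treatment of boundary vertices in case~(c) and the appeal to condition \textbf{IO2} are therefore unnecessary; the single fact that $G_R$ has no production isolated wire-vertices already closes that case.
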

\begin{proof}
Assume for contradiction that $G_M$ contains a wire-vertex $w$ which has
production in-degree zero and production out-degree zero. If $w$ is not in the
image of $f$, then it is unmodified by the rewrite and from the no-dangling
edges/connection instructions condition, it follows that $G_H$ contains a
production isolated wire-vertex, which is a contradiction. Therefore, $w$ must
be in the image of $f$, that is, there exists a wire-vertex $w' \in G_R,$ such
that $f(w') = w$. But then, it follows that $w'$ must be a production isolated
wire-vertex, which is a contradiction.
\end{proof}

\begin{theorem}\label{thm:b-esg-pattern}
Given a B-ESG rewrite $B_H \leadsto_{B,m} B_M$, as in
Definition~\ref{def:b-esg-rewrite}, then $(B_H,B_M)$ is a B-ESG rewrite
pattern.
\end{theorem}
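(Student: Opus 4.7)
The plan is to assemble the already-established pieces into the definition of B-ESG rewrite pattern (Definition~\ref{def:besg_rewrite_pattern}) and then verify the only remaining unaddressed condition, namely \textbf{IO}. The four things to check are: (i) both $B_H$ and $B_M$ are B-ESG grammars, (ii) $(G_H, G_M)$ is a B-edNCE correspondence, (iii) both $G_H$ and $G_M$ are in proper B-ESG normal form, and (iv) each pair of corresponding productions admits a label-preserving bijection between their production inputs and between their production outputs.

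For (i), $B_H$ is B-ESG by the hypothesis of Definition~\ref{def:b-esg-rewrite}, and $B_M$ is B-ESG by Theorem~\ref{thm:rewrite-besg}. For (ii), I would cite Lemma~\ref{lem:correspond}. For (iii), $B_H$ is in proper B-ESG normal form by assumption, while for $B_M$ I would combine Corollary~\ref{cor:normal-forms} (context-consistent, neighbourhood-preserving, no useless connection instructions), Corollary~\ref{cor:reduced} (reduced), and Lemma~\ref{lem:isolated-shit} (no production isolated wire-vertices).

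The real work is (iv). Let $p_H$ and $p_M$ be corresponding productions under the bijection exhibited in Lemma~\ref{lem:pattern}. Either $p_H$ is outside $m(P_{G_L})$, in which case $p_M$ and $p_H$ are identical as extended graphs, and the identity supplies the required bijections; or $p_H = m(p_L')$ and $p_M = f(p_R')$ for some triple of corresponding productions $p_L', p_I', p_R'$ in $G_L, G_I, G_R$, and the bodies of $p_H$ and $p_M$ are related by the local extended graph rewrite $rhs(p_L') \leftarrow rhs(p_I') \rightarrow rhs(p_R')$ applied at the saturated matching $m'$ induced from $m$. In the second case I would partition the wire-vertices of $p_H$ (respectively $p_M$) into two sets: the \emph{interior} ones, lying outside the image of $m'$ (respectively $f'$), and the \emph{boundary} ones, lying inside it. Interior wire-vertices together with their incident edges and connection instructions are preserved verbatim by the DPO rewrite, so they are in natural identity bijection between $p_H$ and $p_M$, with identical production in- and out-degree, hence identical status as production inputs/outputs. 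For boundary wire-vertices, I would chain the bijections: the production inputs of $p_I'$ biject with those of $p_L'$ via $l$ and with those of $p_R'$ via $r$ (combining IO1 and the injectivity of $l,r$), so the boundary production inputs of $p_H$, which sit in bijection with those of $p_L'$ via $m'$, biject with those of $p_M$, which sit in bijection with those of $p_R'$ via $f'$; IO2 ensures this bijection respects input/output status, and label-preservation is automatic since $l,r,m',f'$ are all label-preserving.

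The main obstacle is showing that "being a production input/output" is really determined locally inside the matched region, so that the IO1/IO2 properties of the rewrite rule transport cleanly to IO on the pattern. This is where the extended graph saturated matching conditions do the heavy lifting: the no-dangling-edges clause (first and second conditions of saturated matching) combined with the third condition (bijection on connection instructions) ensure that for any wire-vertex $w \in m'(rhs(p_L'))$, every edge and connection instruction contributing to its production in-degree or out-degree in $rhs(p_H)$ has a unique preimage in $rhs(p_L')$, and dually for $p_M$ and $p_R'$. Combined with the \textbf{Boundary} condition on $p_I'$ (which ensures that $G_I$ contributes no edges or connection instructions to pass through), this makes the local production degree invariant under the rewrite, so IO transfers from the rewrite rule to the pattern as required.
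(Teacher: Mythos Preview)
Your assembly of (i)--(iii) matches the paper exactly, and for (iv) you and the paper build essentially the same bijection. However, the justification you give in the final paragraph has a genuine gap.

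You assert that the saturated matching conditions ensure that for any wire-vertex in $m'(rhs(p_L'))$, every incident edge and connection instruction in $rhs(p_H)$ has a preimage in $rhs(p_L')$. This holds for connection instructions (third clause) and for wire-vertices in $m'(V_{p_L'} \setminus l(V_{p_I'}))$ (first clause), but it fails precisely for the wire-vertices you care about, namely those in $m'\circ l(V_{p_I'})$. The first no-dangling clause only blocks external edges to $m'(V_{p_L'} \setminus l(V_{p_I'}))$, and the second clause only blocks external edges to \emph{nonterminals} in $m'(l(V_{p_I'}))$; wire-vertices there are unprotected. So a production input $v = l(u)$ in $p_L'$ can map to a wire-vertex $m'(v)$ that acquires an in-edge from the context $rhs(p_H) \setminus m'(rhs(p_L'))$ and is therefore \emph{not} a production input in $p_H$. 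Your chain ``production inputs of $p_L'$ $\leftrightarrow$ boundary production inputs of $p_H$ via $m'$'' then breaks: $m'$ need not carry production inputs of $p_L'$ onto production inputs of $p_H$.

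The paper sidesteps this by arguing in the correct direction. Start from a production input $w$ of $G_H$. If $w = m\circ l(w')$ then, since $m$ and $s$ are monos, both $l(w')$ in $G_L$ and $k(w')$ in $G_K$ are production inputs; IO2 then forces $r(w')$ to be a production input in $G_R$; and because $G_M$ is the pushout of $k$ and $r$, the vertex $f\circ r(w')$ inherits production in-degree zero from both legs. The point is that context edges to interface wire-vertices are not absent---they are carried symmetrically through $G_K$ to both $p_H$ and $p_M$, so input/output status is preserved by the correspondence even though it is \emph{not} determined locally inside the matched region as your argument requires.
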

\begin{proof}
From Theorem~\ref{thm:rewrite-besg}, we know that $B_M$ is a B-ESG grammar.
From
Lemma~\ref{lem:correspond}, we know that $(G_H, G_M)$ forms a B-edNCE
correspondence. Corollary~\ref{cor:reduced} combined with
Corollary~\ref{cor:normal-forms} and Lemma~\ref{lem:isolated-shit} show that
$B_M$ is in proper B-ESG normal form. What remains to be shown is that the
\textbf{IO} condition from Definition~\ref{def:besg_rewrite_pattern} is
satisfied.

We define a function $F$ which maps production inputs from $G_H$ into
production inputs of $G_M$:
\[ F(w) =
\begin{cases}
f\circ r(w') & \text{ if } w= m\circ l(w') \text{, where } w' \in G_I
\text{ is a production input}\\
g(w') & \text{ if } w=s(w') \text{ where } w' \in G_K, w' \not\in k(G_I)
\text{ is a production input}
\end{cases}
\]
Note, that the construction is essentially the same as the one from
Lemma~\ref{lem:pattern}.
We claim that $F$ is a bijection between the production inputs of $G_H$
and the production inputs of $G_M$. First, the definition of $F$ is complete
in the sense that it is totally defined, because $m$ and $s$ are jointly
surjective and also because every production input in $G_L$ is in the
image of $l$ by definition.

Next, we need to show $F$ is well-defined, that is $F(w)$ is a production
input. In the first case, $w=m\circ l(w')$ with $w' \in G_I$. Then, $l(w')$ is
also a production input and therefore $r(w')$ is a production input in $G_R$ by
definition of B-ESG rewrite rule. Moreover, $k(w')$ must also be a production
input, because otherwise $s \circ k(w') = m \circ l(w')$ wouldn't be one.
Therefore, $F(w)= f\circ r(w')$ must also be a production input, because $G_M$
is the pushout of $k$ and $r$.
In the other case, $w=s(w')$, with $w' \in G_K$. Then, $w'$ is a production
input. Since $w'$ is not in the image of $k$, it follows $g(w')$ is also a
production input, because $G_M$ is the pushout of $k$ and $r$.

$F$ can easily be seen to be injective, because all of our morphisms in the DPO
diagram are mono. Finally, we have to show that $F$ is a surjection on the
production inputs of $G_M$. Let $w$ be an arbitrary production input of $G_M$.
$f$ and $g$ are jointly surjective on $G_M$ and therefore we have to consider
two cases.

In the first case, $w$ is in the image of $f$.  However, this also
means that $w$ is in the image of $f\circ r$, because all production inputs of
$G_R$ are in the image of $r$. Therefore, $w=f \circ r(w')$ for some $w' \in
G_I$. We know $r(w')$ is a production input and therefore $l(w')$ must be a
production input by definition of B-ESG rewrite rule. Moreover, $k(w')$ must be
a production input, because otherwise $g \circ k(w') = f \circ r(w') = w$
wouldn't be a production input. This means $m\circ l(w')$ must be a production
input, because $G_H$ is the pushout of $l$ and $k$.  But then, it follows by
definition $F(m\circ l(w')) = w.$

In the other case, $w$ is in the image of $g$ and let
$w=g(w')$ for $w' \in G_K$. Clearly, $w'$ is a production input, because it
maps injectively into $w$ which is also a production input. If $w'$ is in the
image of $k$, then it follows $w$ is also in the image of $f\circ r= g \circ
k$. So, we can assume $w' \not\in k(G_I)$. Then, $s(w')$ must be a
production input as well, because $G_H$ is the pushout of $l$ and $k$. It
follows by definition $F(s(w')) = w$, so $F$ is surjective.

For the case of production outputs, the argument follows by symmetry.
\end{proof}

\subsection{Admissibility of the rewrites}\label{sub:fucking-final}

Finally, we will show that the B-ESG rewrite pattern which is formed by a
B-ESG rewrite is admissible. First, we formally define what we mean by that.

\begin{definition}[Admissible pattern]
A B-ESG rewrite pattern $B = (B_1, B_2)$ is \emph{admissible} with respect to a
set of string graph rewrite rules $\mathcal R,$ if for every instantiation of
$B$:
\begin{align*}
sn(S, u_1) &\Longrightarrow_{*}^{B_L} H\\
sn(S, u_1) &\Longrightarrow_{*}^{B_R} H'
\end{align*}
there exists a sequence of rewrite rules $s_1,\ldots,s_n \in \mathcal R,$ such
that $H \leadsto_{s_1} \cdots \leadsto_{s_n} H'.$
\end{definition}

So, an admissible B-ESG rewrite pattern is very similar to an admissible
B-edNCE correspondence (cf. Definition~\ref{def:admissible-cor}). Before
we prove the main result of this chapter, we prove a simple fact which shows
that decoding preserves pushout squares. Then, it follows as a corollary that
decoding also preserves rewrites.

\begin{lemma}\label{lem:final-decode}
Given an $\mathcal S$-pushout:
\cstikz{final_pushout_bitches.tikz}
where all objects are encoded string graphs, then the following square
is also an $\mathcal S$-pushout:
\cstikz{final_pushout_for_real.tikz}
where $X \Longrightarrow_*^T X'$ are string graphs for $X \in \{I, K, R, M\}$
and $x'$ is the decoded embedding given by Lemma~\ref{lem:decode-embed}, for $x
\in \{r, k, f, g\}$.
\end{lemma}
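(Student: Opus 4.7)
The plan is to use Lemma~\ref{lem:joint_surjection}: a commutative square of monomorphisms is an $\mathcal{S}$-pushout (in the relevant partial adhesive setting) precisely when the two morphisms into the apex are jointly surjective (in the ambient adhesive category) and the required partially adhesive conditions are satisfied. So I would reduce the problem to checking (i) that $r', k', f', g'$ are monos, (ii) that the decoded square commutes, (iii) that $f'$ and $g'$ are jointly surjective on every component of $M'$, and (iv) that the decoded span $R' \xleftarrow{r'} I' \xrightarrow{k'} K'$ (rotated analogously to Lemma~\ref{lem:s-span-simple-graph}) still satisfies the $\mathcal{S}$-span conditions.

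First, (i) is immediate from Lemma~\ref{lem:decode-embed}, since the decoded embedding is obtained inductively as an extension of a given monomorphism by an isomorphic copy of the replacement graph. For (ii), I would argue by induction on the number of encoding edges in $M$. In the base case there are no encoding edges anywhere and all four decoded objects coincide with the originals, so commutativity is inherited. For the step case, pick an encoding edge $e \in M$; joint surjectivity of $f$ and $g$ places $e$ in the image of $f$, of $g$, or (if the latter coincide on a preimage) of $f \circ r = g \circ k$. Applying one step of $T$ synchronously to $e$ and to all of its preimages produces isomorphic copies of the same right-hand side graph (by determinism of $T$ on a fixed triple $(\alpha, \sigma_1, \sigma_2)$), and the decoded embeddings extend compatibly, so commutativity is preserved after one decoding step; iterate.

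The core step is (iii). Take any element $x \in M'$. Either $x$ is inherited from $M$ (i.e., a vertex, edge, or production not touched by $T$), in which case the joint surjectivity of the original $f$ and $g$ supplies a preimage that $f'$ or $g'$ sends to $x$; or $x$ was freshly introduced while decoding some encoding edge $e \in M$. In the latter case, $e$ is in the image of $f$ or $g$, say $e = f(e'')$ with $e'' \in R$ an encoding edge (the case of $g$ is symmetric, and the case where $e$ is in the common image follows from either side). Decoding $e''$ in $R$ replaces it with an isomorphic copy of the same right-hand side graph that replaces $e$ in $M$, and by the construction of $f'$ in Lemma~\ref{lem:decode-embed}, $f'$ sends this fresh replacement graph isomorphically onto the replacement of $e$ in $M'$; hence $x \in f'(R')$. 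Finally (iv) follows because each decoding rule's right-hand side contains no inputs, outputs, or encoding labels and fixes the two node-vertex endpoints, so the conditions \textbf{ParEdges} and \textbf{ParCI} from Lemma~\ref{lem:s-spans} are preserved: no new parallel edges or connection instructions can arise in the decoded span that were not already in $I'$.

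The main obstacle I anticipate is bookkeeping in step (iii) when the encoding edge $e$ lies in the common image $f(r(I)) = g(k(I))$: one has to verify that the two decoded embeddings $f'$ and $g'$ agree on the replacement graph produced in $I'$, so that the element $x$ has a coherent preimage. This is handled by using Lemma~\ref{lem:decode-embed} twice --- once along $r$ and once along $k$ --- and noting that both resulting extensions factor through the same decoding step applied to the preimage of $e$ in $I$, which is exactly what secures commutativity in (ii) and joint surjectivity in (iii) simultaneously.
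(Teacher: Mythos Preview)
Your proposal is correct and follows the same route as the paper's proof: establish that the decoded morphisms are monos (via Lemma~\ref{lem:decode-embed}), that the square commutes by construction, and that $f'$ and $g'$ are jointly surjective by exactly the case analysis you give (an element of $M'$ is either inherited from $M$, where joint surjectivity of $f,g$ applies, or freshly introduced by decoding some encoding edge $e \in M$, whose preimage in $R$ or $K$ decodes to the required preimage). Your step (iv) does not appear in the paper and is in fact unnecessary: once the decoded square is a commutative square of monos that is a pushout in the ambient adhesive category and all four corners already lie in the subcategory, the full-and-faithful embedding reflects it to an $\mathcal{S}$-pushout automatically.
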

\begin{proof}
From Lemma~\ref{lem:decode-embed}, we know that all of the morphisms in the
bottom square are monos. Commutativity follows trivially by construction of the
morphisms $x'$ and the commutativity of the top square. To complete the proof,
we need to show that $f'$ and $g'$ are jointly surjective.

Consider an arbitrary edge $e' \in M'$. If $e' \in M$, then from the joint
surjectivity of $f$ and $g$ it follows that $e'$ must be in the image of $f'$
or $g'$. If $e' \in M' - M$, then $e'$ has been produced from a decoding step
of $T$ by replacing some edge $e \in M$. But, $f$ and $g$ are jointly
surjective on $M$ and therefore there exists an edge $e'' $ which is the
pre-image of $e$ in $K$ or $R$. Therefore, after decoding edge $e''$ in
$K$ or $R$, the pre-image of $e'$ under $g'$ or $f'$ is produced.

The case for vertices is completely analogous to the case for edges.
\end{proof}

\begin{corollary}\label{cor:final}
Given an $\mathcal S$-rewrite:
\cstikz{dpo_tashak.tikz}
where all objects are encoded string graphs, then the following is also
an $\mathcal S$-rewrite:
\cstikz{dpo_pisna_mi.tikz}
where $X \Longrightarrow_*^T X'$ are string graphs for $X \in \{L, I, R, K, H,
M\}$ and $x'$ is the decoded embedding given by Lemma~\ref{lem:decode-embed},
for $x \in \{l, r, k, m, s, f, g\}$.
\end{corollary}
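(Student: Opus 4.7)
The plan is to apply Lemma~\ref{lem:final-decode} separately to each of the two pushout squares comprising the $\mathcal{S}$-rewrite, and then verify that the decoded data assembles into a coherent DPO diagram of string graphs.

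First, by Theorem~\ref{thm:besg_language}-style reasoning, or more directly by Lemma~\ref{lem:decoding}, each of $L, I, R, K, H, M$ is an encoded string graph and so decodes (uniquely up to isomorphism) into a string graph $L', I', R', K', H', M'$. Using Lemma~\ref{lem:decode-embed}, each monomorphism $x$ in the original DPO diagram induces a decoded monomorphism $x'$ between the decoded objects. This gives me all the morphisms appearing in the target diagram.

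Next, I would apply Lemma~\ref{lem:final-decode} to the left pushout square $L \leftarrow I \rightarrow K, H$, obtaining that the decoded square $L' \leftarrow I' \rightarrow K', H'$ is an $\mathcal{S}$-pushout in $\mathbf{SGraph}$. Applying it again to the right pushout square $R \leftarrow I \rightarrow K, M$ gives that $R' \leftarrow I' \rightarrow K', M'$ is also an $\mathcal{S}$-pushout. Stacking the two decoded pushout squares produces the required DPO diagram for an $\mathcal{S}$-rewrite of $H'$ by the rule $L' \leftarrow I' \rightarrow R'$ at the matching $m'$.

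The one subtle point I need to check is coherence: the object $I'$ (and its morphism into $K'$, and the object $K'$ and its morphism into $H'$, etc.) must be the same in both halves of the decoded diagram. But the decoded graph $X'$ is determined up to isomorphism by $X$, and the decoded embedding produced by Lemma~\ref{lem:decode-embed} is constructed by induction on decoding steps in a way that agrees on the common object $I$ regardless of which square we view it from; similarly for $K$. Concretely, once we fix a decoding $I \Longrightarrow_*^T I'$, the induced monos $l': I' \to L'$, $r': I' \to R'$, $k': I' \to K'$ all coexist, because decoding replaces each encoding edge by the same fixed subgraph according to $T$, so the extensions of the various monos are compatible. The main (and only) obstacle is this compatibility bookkeeping; once it is handled, the corollary follows immediately from two applications of Lemma~\ref{lem:final-decode} together with Lemma~\ref{lem:decoding} to ensure all decoded objects are indeed string graphs.
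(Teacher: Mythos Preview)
Your approach is correct and matches the paper's own proof, which simply states that the result follows immediately from two applications of Lemma~\ref{lem:final-decode}. The coherence bookkeeping you discuss is extra detail the paper leaves implicit, but your argument is essentially the same.
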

\begin{proof}
Follows immediately by two applications of the previous lemma.
\end{proof}

The next definition is introduced for notational convenience. Given a rewrite
rule between encoded string graphs, it defines the \emph{decoded rewrite rule}
by simply decoding each graph and using the decoded embeddings induced by the
original span.

\begin{definition}
Given a rewrite rule $s = L \xleftarrow l I \xrightarrow r R,$ where
all objects are encoded string graphs, then we shall denote with
$s \Longrightarrow_*^T s'$, the rewrite rule
$s' := L' \xleftarrow{l'} I' \xrightarrow{r'} R',$ where
$L \Longrightarrow_*^T L'$,
$I \Longrightarrow_*^T I'$,
$R \Longrightarrow_*^T R'$ and where $l'$ and $r'$ are the decoded embeddings
given by Lemma~\ref{lem:decode-embed}.
\end{definition}

The main and final result of this chapter is presented next. It states that
a B-ESG rewrite forms an admissible B-ESG rewrite pattern with respect to the
B-ESG rewrite rule which was used. The proof follows easily by combining
the results which we have established so far.

\begin{theorem}\label{thm:final}
Given a B-ESG rewrite $B_H \leadsto_{B,m} B_M$, as in
Definition~\ref{def:b-esg-rewrite}, then $(B_H, B_M)$ is an admissible
B-ESG rewrite pattern with respect to $L(B).$
\end{theorem}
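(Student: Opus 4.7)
The plan is to combine three earlier results: Theorem~\ref{thm:main}, which handles admissibility at the B-edNCE level (on encoded string graphs); Corollary~\ref{cor:final}, which shows that DPO rewrites on encoded string graphs lift to DPO rewrites on the decoded string graphs; and the fact (essentially from Definition~\ref{def:b-esg-inst} together with Theorem~\ref{thm:b-esg-inst}) that decoding a B-edNCE pattern instantiation of the underlying B-edNCE rewrite pattern of $B$ yields an element of $L(B)$ as a string graph rewrite rule.

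First I would fix an arbitrary B-ESG pattern instantiation of $(B_H,B_M)$. By Definition~\ref{def:inst}, this consists of parallel concrete B-edNCE derivations $sn(S,v_1)\Longrightarrow^{G_H}_* H_n$ and $sn(S,v_1)\Longrightarrow^{G_M}_* H_n'$ together with decodings $H_n\Longrightarrow^T_* F_H$ and $H_n'\Longrightarrow^T_* F_M$. Since the B-ESG rewrite is built from an $\mathcal S$-rewrite $G_H\leadsto_{B,m} G_M$ where $m$ is a saturated matching satisfying the partial adhesive conditions, Theorem~\ref{thm:main} applies and tells us that $(G_H,G_M)$ is admissible as a B-edNCE correspondence with respect to the language of the underlying B-edNCE pattern of $B$. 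Thus there exist encoded-string-graph rewrite rules $r_1,\ldots,r_n$, each in that B-edNCE pattern language, and encoded string graphs $H_n=K_0,K_1,\ldots,K_n=H_n'$ with $K_{i-1}\leadsto_{r_i} K_i$ for every $i$.

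Next I would decode the whole chain. Let $K_i\Longrightarrow^T_* F_i$ be the unique decoding, so $F_0=F_H$ and $F_n=F_M$ (uniqueness of decoding together with confluence/termination of $T$ makes this canonical). Applying Corollary~\ref{cor:final} to each step $K_{i-1}\leadsto_{r_i} K_i$ yields an $\mathcal S$-rewrite $F_{i-1}\leadsto_{r_i'} F_i$ in $\mathbf{SGraph}$, where $r_i'$ is the decoded rewrite rule obtained by decoding the three graphs in $r_i$ and taking the decoded embeddings from Lemma~\ref{lem:decode-embed}. Because each $r_i$ arises from a concrete parallel derivation in the underlying B-edNCE pattern of $B$, its decoded counterpart $r_i'$ is exactly a concrete B-ESG instantiation in the sense of Definition~\ref{def:b-esg-inst}, and hence $r_i'\in L(B)$. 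By Theorem~\ref{thm:b-esg-inst} each $r_i'$ is in fact a string graph rewrite rule, so the composite $F_H=F_0\leadsto_{r_1'}\cdots\leadsto_{r_n'} F_n=F_M$ witnesses admissibility of $(B_H,B_M)$ with respect to $L(B)$.

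The main thing to be careful with is the matching used at each decoded step: the matching of $r_i'$ into $F_{i-1}$ must be extracted coherently from the matching of $r_i$ into $K_{i-1}$, and we also need the matching of $r_i$ to come from a saturated matching in the grammar pattern so that the hypotheses of Corollary~\ref{cor:final} are met. Both follow from the proof of Theorem~\ref{thm:main}, where the concrete matchings on sentential forms are built by iterated substitution using Lemma~\ref{lem:saturated_grammar_matching} and Theorem~\ref{thm:subst_rewrite}; I would simply invoke those constructions verbatim. Everything else is routine bookkeeping, so there is essentially no new technical obstacle: the theorem is really a consolidation lemma that lifts the admissibility result from encoded B-edNCE grammars (Theorem~\ref{thm:main}) through the decoding system $T$ to the string graph level.
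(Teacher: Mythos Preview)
Your approach is essentially the same as the paper's: invoke Theorem~\ref{thm:main} to get a chain of encoded-string-graph rewrites between $H_n$ and $H_n'$ using rules from $L(G)$, then push the whole chain through decoding via Corollary~\ref{cor:final}, and finally use Theorem~\ref{thm:b-esg-inst} to conclude the decoded rules lie in $L(B)$ as string graph rewrite rules. The only omission is that you should first cite Theorem~\ref{thm:b-esg-pattern} to establish that $(B_H,B_M)$ is a B-ESG rewrite pattern at all (the statement asserts an \emph{admissible B-ESG rewrite pattern}, not merely an admissible correspondence); also, your final caution about saturated matchings is unnecessary, since Corollary~\ref{cor:final} only needs an $\mathcal S$-rewrite between encoded string graphs, which Theorem~\ref{thm:main} already delivers.
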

\begin{proof}
From Theorem~\ref{thm:b-esg-pattern}, we know that $(B_H, B_M)$ is a B-ESG
rewrite pattern. Consider an arbitrary pattern instantiation:
\begin{align*}
sn(S,v_1)
&\Longrightarrow_{v_1,p_1}^{G_H}& &H_1&
&\Longrightarrow_{v_2,p_2}^{G_H}& &H_2&
\cdots
&\Longrightarrow_{v_n,p_n}^{G_H}& &H_n& 
&\Longrightarrow_{*}^{T}& &F& \\
sn(S,v_1)
&\Longrightarrow_{v_1,K(p_1)}^{G_M}&    &H_1'&
&\Longrightarrow_{K(v_2),K(p_2)}^{G_M}& &H_2'&
\cdots
&\Longrightarrow_{K(v_n),K(p_n)}^{G_M}& &H_n'&
&\Longrightarrow_{*}^{T}& &F'&
\end{align*}
where $K$ is the bijection between the productions and nonterminal vertices
of the correspondence $(G_H, G_M).$

Let $G = G_L \xleftarrow l G_I \xrightarrow r G_R.$ Then, from
Theorem~\ref{thm:main}, we know that there exists a sequence of rewrite rules
$s_1,\ldots, s_n \in L(G)$, such that $H_n \leadsto_{s_1} \cdots
\leadsto_{s_n} H_n'$. Using Corollary~\ref{cor:esg-inst}, we get that
every $s_i$ is an ESG rewrite rule. From Theorem~\ref{thm:besg_language}
and Lemma~\ref{lem:decoding}, we know that $H_n$ and $H_n'$ are encoded
string graphs. Then, using Corollary~\ref{cor:final}, we can conclude
that $F \leadsto_{s_1'}\cdots \leadsto_{s_n'} F'$, where $s_i
\Longrightarrow_*^T s_i'$. Finally, we know from Theorem~\ref{thm:b-esg-inst}
that each $s_i'$ is a string graph rewrite rule and that $s_i' \in L(B),$ which
completes the proof.
\end{proof}

This result shows that our framework correctly represents reasoning with
context-free families of string diagrams in the sense that it respects their
concrete semantics.

\subsection{Discussion}

\added{In summary, a B-ESG grammar represents a family of string diagrams.
A B-ESG rewrite pattern (which may be seen as a special kind of span of B-ESG
grammars) represents an equational schema between two families of string
diagrams. A B-ESG rewrite represents rewriting a family of string diagrams
using an equational schema in a sound way.
Another way of looking at this is to consider a B-ESG rewrite as
an equational substitution which modifies a family of subdiagrams which appear
in some larger family (cf. Example~\ref{ex:big-example}). The substitution is
sound in the sense that the result is an equational schema whose every concrete
instance can be derived using concrete instances of the rewrite rule (which is
also an equational schema).}

\added{The ZX-calculus has been used as motivation for many of the
constructions in this thesis, so we shall illustrate how these ideas translate
into it. In the ZX-calculus, a B-ESG grammar can be used to represent a family
of ZX-diagrams, which in turn may represent a quantum algorithm. The equational
axioms and the derived equational rules of the ZX-calculus can be represented
as B-ESG rewrite patterns. Then, a B-ESG rewrite can be used to represent a
sound application of one of its equational rules to some family of ZX-diagrams.
Therefore B-ESG rewrites correspond to sound transformations of families of
ZX-diagrams, that is, the families are equal in the sense that they define the
same linear maps. If, in addition, the initial family of ZX-diagrams models a
quantum algorithm and we choose a sequence of rewrites in such a way that the
resulting family may be translated back into a quantum circuit, then this would
correspond to an equivalent circuit transformation. The sequence of B-ESG
rewrite rules which were applied perform an equivalent circuit transformation
in any context which may embed the sequence of matches that were used.}

\section{Related work}

\added{B-ESG rewrite patterns are similar to the \textit{pair grammars}
approach presented in \cite{pair_grammars}. In that paper the author defines
a pair of graph grammars whose productions are in bijection which moreover
preserves the nonterminals within them. As a result, parallel derivations are
defined in a similar way to our B-ESG rewrite patterns. However, the author
uses a different notion of grammar which is less expressive than ours.}

\added{The pair grammars approach has inspired the development of \emph{triple
graph grammars} \cite{triple_grammars}. In this approach, the author uses a
triple of grammars $(L, C, R),$ which also share a bijective correspondence
between their productions. In this sense, they are similar to our B-ESG rewrite
rules. However, the middle grammar $C$ is used to relate graph elements from
$L$ to graph elements of $R$ in a more powerful way compared to our approach.
We simply use the middle grammar in order to identify the interface and
interior elements for performing DPO rewrites. However, the grammar model used
in \cite{triple_grammars} is based on monotonic single-pushout (SPO)
productions with no notion of nonterminal elements. These grammars are not
expressive enough for our purposes.}

\added{We have shown how to rewrite B-ESG grammars using B-ESG rewrite rules in
a way which allows us to relate the modifications using concrete rules from the
rule grammars. A similar approach is taken in \cite{high-level-high-level},
where the author describes how to transform High Level Replacement Systems
(HLRS) using other HLRS. HLRS are a generalisation of the idea of a DPO graph
grammar and thus this approach can be seen as doing grammar-on-grammar
rewriting. However, the underlying transformation mechanism is based on
(general) DPO rewriting and it looks unlikely that grammars which utilise such
derivations can induce the languages we are interested in.}

\chapter{\label{ch:conclude}Conclusion and future work}
In this thesis we studied the problem of equational reasoning with infinite
families of string diagrams. We started by considering context-free graph
grammars (CFGGs). We showed that both Vertex Replacement and Hyperedge
Replacement grammars, the two dominant classes of graph grammars, have equal
expressive power on string graphs. We identified a large class of !-graph
languages which are context-free, but we also identified important limitations
in the expressive power of !-graph languages which reduce their usefulness in
practical applications. We also compared the expressive power of CFGGs
with that of !-graphs: 
\cstikz{inclusion-figure-bgno2.tikz}
and showed that there are
important languages which they couldn't represent, which we used as a
justification to consider a simple extension of CFGGs that allows us to
overcome this.

Next, we introduced encoded B-edNCE grammars, which are our slightly more
expressive graph grammars, compared to the standard B-edNCE grammars, by
formalising the simple idea of recognising specially labelled edges as fixed
graphs. We also identified sufficient and necessary (up to normal form)
conditions for encoded B-edNCE grammars to generate languages of string
graphs. The grammars which satisfy these conditions are B-ESG grammars which
are the primary objects of study in this thesis as we are only interested
in languages of string graphs and not arbitrary graphs in general. We then
compared their expressive power with that of !-graphs:
\cstikz{inclusion-figure-bgto.tikz}
and
showed that B-ESG grammars are strictly more expressive than !-graphs with
trivial overlap, which are currently the only class of !-graph languages that
has been used in practice. We also showed that B-ESG grammars enjoy important
decidability properties, such as the membership and match enumeration problems
for string graphs, which are necessary properties for computer implementation
of our framework.

After showing that B-ESG grammars can represent context-free families of string
diagrams, we then demonstrated how to represent equational schemas of
context-free families of string diagrams by B-ESG rewrite rules. Finally, we
described how to rewrite B-ESG grammars using B-ESG rewrite rules, such that
the rewrite is sound with respect to the concrete semantics of our grammars.
This corresponds to rewriting a context-free family of string diagrams using an
equational schema between a pair of context-free families of string diagrams in
a sound way with respect to their instantiations. Because string graphs
represent morphisms in traced symmetric monoidal categories, this also means
that we can do equational reasoning on context-free families of morphisms
for these categories. Moreover, all of our constructions have been kept
decidable and may be implemented in software.

\section{Future work: Implementation in software}

All of the theory in this thesis has been designed with the goal of using
machine support for equational reasoning. So an obvious next step is to
actually implement the theory in software and in particular, implement it in
Quantomatic. A computer implementation would obviously benefit from discovering
efficient algorithms for the computational problems that we have described,
such as the membership and match enumeration problems. While we haven't
proved in what complexity class these problems are for the case of B-ESG
grammars, they are NP-complete for B-edNCE grammars which implies that
the complexity for the B-ESG case would be the same or worse. Therefore,
we would most likely need to consider subclasses of B-ESG grammars with
better complexity properties if we wish to implement more efficient
algorithms for these problems.
\added{For example, graph languages consisting of connected graphs
which are also of bounded degree can be parsed in polynomial time.
An overview of efficient parsing algorithms for C-edNCE grammars
is presented in Section 1.5 of \cite{c-ednce} and in Section 2.7 of
\cite{hr-grammars}.}

\added{The software implementation would also benefit from any research
into the visual representation of graph languages. For example, !-graphs
have a convenient graphical representation where we draw blue boxes around
the parts of the !-graph which can be copied, while ignoring the
graph-theoretic details of !-vertices and their adjacent edges. This makes
!-graphs simple to understand and use by domain experts who do not have
any knowledge about the graph-theoretic definition of !-graphs.
Designing similar visual representations for B-ESG families
would also be helpful, because then the practitioners would be able to
manipulate these families without requiring prior knowledge on the
operation of context-free graph grammars.}

\section{Future work: First-order logic for B-ESG grammars}

Another line of future work is the development of a first-order logic for B-ESG
grammars. We have shown how to do equational proofs with B-ESG grammars so our
framework can be seen as an equational logic. By introducing a first-order
logic on top of it we will clearly increase its the usefulness. Such a logic
has already been developed for !-graphs \cite{bang-logic1,bang-logic2} . In
the !-graph case, the logic supports quantification over the !-boxes of a
!-graph, which allows us, among other things, to introduce powerful induction
principles that can derive equational schemas from concrete rewrite rules.
Figure~\ref{fig:david-laina} provides an example.\footnote{Figure credit:
\cite{bang-logic1}, pp. 76}

\begin{figure}[H]
  \centering
      \includegraphics[width=\textwidth]{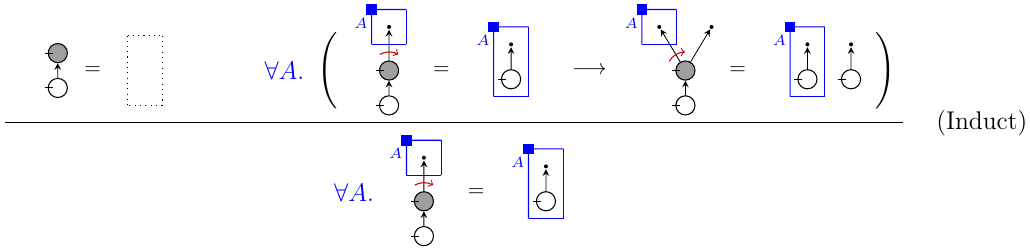}
  \caption{Induction example using !-graphs}
\label{fig:david-laina}
\end{figure}

We believe it should be possible to extend the same ideas to B-ESG grammars
which would then allow us to formally derive more powerful proofs which
go beyond equational rewrites. The key notion of this logic is the universal
quantification. In the !-graph case quantification is done over the !-boxes of
a !-graph and in the B-ESG case the analogous notion of quantification is
likely to be over a suitable subgrammar.

\section{Future work: Concurrency and rule composition}

\added{We have shown how to transform B-ESG grammars using DPO rewriting. A
natural next step is to see under what conditions we may parallelise this
process. The first step towards this goal would be to determine the partial
adhesive conditions under which the Concurrency Theorem
\cite{adhesive_categories} holds (it holds for any adhesive
category), in a similar way to which we identified the partial adhesive
conditions for DPO rewriting. Once that is done, we will have to combine the
results with the ones we have established in this thesis which guarantee the
admissibility of the rewrites. The benefit of introducing parallel rewrites is
obvious -- the rewriting process will be faster on computers with
multiple cores.}

\added{Another line of future work which is similar in spirit is to describe
how B-ESG rewrite rules may be composed. Instead of applying a sequence of
rewrite rules one at a time, this would allow us to compose the rewrite rules
into a single application. This could be beneficial as it would make common
rewrite sequence more compact. Designing support for this is not trivial -- we
would either have to choose standard graph isomorphisms or redesign our theory
to work with graphs with interfaces, where we keep the interfaces concrete.}

\section{Future work: Applications}

B-ESG grammars are an alternative to !-graphs and they can be used to solve similar
problems. So, the usual applications of verifying quantum protocols and
algorithms is still viable. However, a natural question to consider is what
kind of additional problems does the increased expressive power of B-ESG
grammars allow us to solve. Although the framework of B-ESG grammars is
applicable to general string diagrammatic theories, many of the motivating
ideas have been taken from the ZX-calculus in particular.

We have already seen that B-ESG grammars allow us to formally represent
the local complementation rule of the ZX-calculus. This is an important
rule which establishes a decision procedure for large classes of ZX-diagrams.
Once, the B-ESG formalism has been implemented in software, then a natural
next step is to also build support for the decision procedure itself.

Another useful application might be in quantum program optimisation. A
quantum programming language, like Quipper~\cite{quipper} describes (infinite)
families of quantum circuits. Our B-ESG grammars may also generate
infinite families of quantum circuits (represented as ZX-diagrams), but
B-ESG grammars are strictly less expressive compared to a Turing complete
programming language. However, we know how to do equational reasoning
with B-ESG grammars, so if we can choose a suitable orientation of our
rewrite rules and appropriate tactics for controlling derivation sequences,
then we can optimise our B-ESG grammars with respect to some criterion.
Therefore, if we can identify fragments of a quantum programming language,
like Quipper, which produce context-free families of quantum circuits, and
we can describe an adequate bidirectional translation from the programming
code into B-ESG grammars, then it could be possible to use our framework for
program optimisation.

It might also be possible to use B-ESG rewriting to decide some properties
using abstract interpretation of quantum programs. This would, of course,
require identifying a suitable abstract domain.


\addcontentsline{toc}{chapter}{Bibliography}
\printbibliography
\end{document}